%% file: ys-202607-012-quasi-spin-bcs.tex
\documentclass[11pt,a4paper]{article}

\usepackage{fontspec}
\usepackage[titletoc,title,header]{appendix}

\usepackage{amsmath}
\usepackage{amssymb}
\usepackage{amsthm}
\usepackage{mathtools}
\usepackage{xparse}    

\usepackage[margin=1in]{geometry}

\usepackage{graphicx}
\usepackage{xcolor}

\usepackage{hyperref}
\hypersetup{
  colorlinks=true,
  linkcolor=blue,
  citecolor=blue,
  urlcolor=blue,
  pdfauthor={Yoshitsugu Sekine},
  pdftitle={An Operator-Algebraic Exposition of the Thirring--Wehrl Theory of the Quasi-Spin BCS Model}
}

\usepackage[]{natbib}
\theoremstyle{plain}
\newtheorem{thm}{Theorem}[section]
\AtEndEnvironment{thm}{\qed}
\newtheorem{lem}[thm]{Lemma}
\AtEndEnvironment{lem}{\qed}
\newtheorem{prop}[thm]{Proposition}
\AtEndEnvironment{prop}{\qed}
\newtheorem{cor}[thm]{Corollary}
\AtEndEnvironment{cor}{\qed}

\AtEndEnvironment{conj}{\qed}

\AtEndEnvironment{fact}{\qed}

\theoremstyle{definition}
\newtheorem{defn}[thm]{Definition}
\AtEndEnvironment{defn}{\qed}
\newtheorem{ex}[thm]{Example}
\AtEndEnvironment{ex}{\qed}

\theoremstyle{remark}
\newtheorem{rem}[thm]{Remark}
\AtEndEnvironment{rem}{\qed}

\input{mycommands.tex}

\title{An Operator-Algebraic Exposition of the Thirring--Wehrl Theory of the Quasi-Spin BCS Model\\\vspace{0.5em}{\large Infinite Tensor Products, the Bogoliubov--Haag Method, and the Direct Integral Decomposition of Equilibrium States}}

\author{%
Yoshitsugu Sekine\\{\small\texttt{4429sekine@gmail.com}}%
}

\date{\today}

\begin{document}

\maketitle

\begin{abstract}
This expository review reformulates the BCS analyses of Haag, Emch--Guenin, and Thirring--Wehrl, together with subsequent operator-algebraic mean-field results of Bóna, Raggio--Werner, and Bru--de Siqueira Pedra, in the language of quasi-local \(\oacstar\)-algebras and state decompositions \cite{RudolfHaag005,EmchGuenin001,ThirringWehrl001,ThirringWalter001,PavelBona001,PavelBona002,RaggioWerner002,BruPedra001}. The quasi-spin model is realized on the UHF algebra of type \(2^{\infty}\), with product sectors described by von Neumann's incomplete infinite tensor products \cite{VonNeumannJohn001}. This framework distinguishes strong limits of intensive observables, domain-restricted limits of extensive observables, and sectorwise Bogoliubov--Haag limits, including their operator criterion, gap equation, and limiting dynamics. In the degenerate model, the ground-state and thermal gauge averages admit central direct-integral decompositions over the gauge circle, and the thermal decomposition converges to the ground-state decomposition at zero temperature. Adjoining the gauge phase as a central classical variable combines the phase-dependent Bogoliubov dynamics into one automorphism group on \(\fun{\conti}{\mansphere^{1}, \oa{A}}\).
\end{abstract}

\setcounter{tocdepth}{3}
\tableofcontents

\section{Introduction}\label{introduction}

The quasi-spin BCS model \eqref{eq:bcs-hamiltonian} makes the thermodynamic limit inseparable from the choice of representation and observable algebra. The Bogoliubov approximation replaces the pairing interaction in \eqref{eq:bcs-hamiltonian} by the self-consistent bilinear Hamiltonian \eqref{eq:effective-hamiltonian} \cite{BogoliubovNikolai002,BogoliubovTolmachevShirkov001}. The infinite-volume analysis of \cite{RudolfHaag005} interprets this replacement sectorwise. The gauge-invariant formulation of \cite{EmchGuenin001} organizes the mutually inequivalent symmetry-breaking vacua over the gauge orbit and represents the gauge-invariant nonpure vacuum through a direct integral. The product-sector analysis of \cite{ThirringWehrl001} places the quasi-spin Hamiltonian of \cite{BaumannEderSexlThirring001} in von Neumann's infinite tensor product spaces \cite{VonNeumannJohn001}, and \cite{ThirringWalter001} treats the degenerate model at finite temperature. The present note is an expository reconstruction of these results in one operator-algebraic notation. No claim of originality is made for the ground-state, equilibrium-state, or phase-decomposition mechanisms.

Subsequent operator-algebraic work places these conclusions in the general theory of quantum mean-field systems. Integral decompositions of limiting Gibbs states are treated for the full quasi-spin BCS model in \cite[Section IV.2]{RaggioWerner002}, while \cite[Section 6.2]{BruPedra001} treats equilibrium and ground states of the strong-coupling BCS--Hubbard model.

The main text works on the UHF algebra \(\oa{A}\) of type \(2^{\infty}\) from Definition \ref{def:quasilocal} and distinguishes three kinds of thermodynamic limit. The mean spin has the strong limit of Corollary \ref{cor:mean-spin}. The relative number operator and the centered interaction converge only in the restricted senses of Propositions \ref{prop:number} and \ref{prop:lemma3}. The shifted Hamiltonians have the sectorwise form limit \eqref{eq:sectorwise-bogoliubov-limit} after the subtraction \eqref{eq:e-omega}. Theorem \ref{thm:bogoliubov} represents this form by a self-adjoint operator exactly when the flip amplitudes \eqref{eq:flip-amplitude} are square summable. Alignment gives its diagonal Bogoliubov--Haag realization and the gap equation \eqref{eq:gap-equation}. Theorem \ref{thm:gns-identification} identifies the product representations with von Neumann's incomplete tensor products, and Propositions \ref{prop:disjointness} and \ref{prop:gauge-implementable} describe the sector separation and gauge implementability.

The algebra also fixes the scope of the dynamics. The finite-volume dynamics \eqref{eq:finite-volume-dynamics} converges as in Theorem \ref{thm:dynamics} only after a product representation and its mean profile have been fixed. Proposition \ref{prop:ground-dynamics-non-descent} excludes a phase-independent point-norm limit on \(\oa{A}\) that reproduces every fiber dynamics. For the gauge orbit considered here, the phase-extended algebra \eqref{eq:phase-extended-algebra} carries the single automorphism group \eqref{eq:phase-extended-ground-dynamics} and the representation \eqref{eq:phase-extended-ground-representation}. This specializes the enlarged-algebra construction of \cite{PavelBona001}, whose equilibrium and ground states for the strong-coupling quasi-spin BCS model are analyzed in \cite{PavelBona002}.

The ground and thermal decompositions are likewise indexed by fiber-dependent dynamics. The ground-state orbit of Definition \ref{def:main-ground-orbit} gives the gauge average \eqref{eq:ground-state-average} and its central decomposition \eqref{eq:ground-central-decomposition} into pure product ground states for the fiber dynamics \eqref{eq:ground-fiber-bogoliubov-dynamics}. The positive-temperature family of Definition \ref{def:main-thermal-phase} gives the limiting Gibbs state \eqref{eq:limit-state} and the central decomposition of Theorem \ref{thm:central-decomposition}. Its fibers are mutually disjoint factor states and satisfy the fiber KMS condition of Proposition \ref{prop:fiber-kms}. The thermal Green-function limit is Theorem \ref{thm:green}, and Corollary \ref{cor:zero-temperature-central-decomposition} connects the thermal decomposition to the product ground-state sectors.

The original proofs refer several essential ingredients to the literature. The appendices supply the required infinite tensor product results from \cite{VonNeumannJohn001}, finite-spin representation formulas from \cite{EugeneWigner002}, and \(\Gamma\)-function estimates based on \cite{WatsonWhittaker001}. The quasi-local UHF construction is cited to \cite[Section 2.6]{BratteliRobinson003}.

The organization follows the mathematical dependencies. Sections \ref{sec:algebra}--\ref{sec:extensive} define the model and separate the intensive and extensive limits. Sections \ref{sec:bogoliubov} and \ref{sec:time} construct the sectorwise Hamiltonians and dynamics. Sections \ref{sec:thermal}--\ref{sec:green} derive the limiting Gibbs state and Green functions. Section \ref{sec:decomposition} identifies the central decomposition of the limiting Gibbs state and its zero-temperature limit. Appendices \ref{sec:functional-analytic-appendix}, \ref{sec:oa-appendix}, \ref{sec:bloch}, \ref{sec:analytic}, \ref{sec:itp}, and \ref{sec:spin} contain the functional-analytic lemmas, the quasi-local algebra, the one-site Bloch calculus, the analytic estimates, the infinite tensor product theory, and the finite-spin representation theory.

Throughout, \(\monnat\) denotes the natural numbers including \(0\), and \(\semigrposint\) denotes the positive integers. An operator-algebraic state is denoted by \(\oastate[\psi]\), and its GNS cyclic vector by \(\oagnsvector[\Psi]\). More generally, the state \(\oastate[\psi_{\alpha}]\) has GNS cyclic vector \(\oagnsvector[\Psi_{\alpha}]\). The symbol \(\Psi\) without \(\oagnsvector\) denotes a general Hilbert-space vector. Every Hilbert-space inner product \(\bkt{\Phi}{\Psi}\) is conjugate-linear in \(\Phi\) and linear in \(\Psi\). Every sesquilinear form in this paper follows the same convention.

\section{Main Results}\label{sec:main-results}

The common notation fixes the data needed to state the principal results. The subsequent sections give the detailed constructions and proofs.

\subsection{Common quasi-spin setting}\label{common-quasi-spin-setting}

The observable algebra is the UHF algebra \[\begin{aligned}
\oa{A}
=
\gtclos{\bigcup_{\Omega
\in
\semigrposint} \oa{A}_{\Omega}},
\quad
\oa{A}_{\Omega}
=
\bigotimes_{p = 1}^{\Omega}
\spmat{2}{\fldcmp}.
\end{aligned}\] The closure is taken in the unique \(\oacstar\)-norm. The matrices \(\sigma^{x}, \sigma^{y}, \sigma^{z}\) are the Pauli matrices, \(\sigma^{\pm}
=
\frac{1}{2} \rbk{\sigma^{x} \pm \imunit \sigma^{y}}\), and a subscript \(p\) denotes the copy in the tensor factor \(p\).

\begin{defn}[quasi-spin BCS data for the main results]\label{def:main-bcs-setting}
Let $\seq{\varepsilon_{p}}{p
\in
\semigrposint}$ be a bounded real sequence and let
$\sminvtemperature_{c}
>
0$.
For $\Omega
\in
\semigrposint$ define
$$\begin{aligned}
S^{\pm}_{\Omega}
=
\sum_{p
=
1}^{\Omega} \sigma^{\pm}_{p},
\quad
\physvec{M}_{\Omega}
=
\frac{1}{\Omega} \sum_{p = 1}^{\Omega} \physvec{\sigma}_{p},
\quad
\physham_{\Omega}
=
- \sum_{p = 1}^{\Omega} \varepsilon_{p} \sigma^{z}_{p}
- \frac{2}{\sminvtemperature_{c} \Omega} S^{+}_{\Omega} S^{-}_{\Omega}.
\end{aligned}$$
The finite-volume Heisenberg dynamics is
$$\fun{\tau^{\Omega}_{t}}{A}
=
\fnexp{\imunit t \physham_{\Omega}} A \fnexp{- \imunit t \physham_{\Omega}},$$
for $A
\in
\oa{A}_{\Omega}$,
and the Gibbs state at inverse temperature $\sminvtemperature
>
0$ is
$$\fun{\oastate[\psi_{\sminvtemperature,\Omega}]}{A}
=
\frac{\sqfun{\trace}{\fnexp{- \sminvtemperature \physham_{\Omega}} A}}
{\sqfun{\trace}{\fnexp{- \sminvtemperature \physham_{\Omega}}}}.$$
Its extension to $\oa{A}$ is defined by
\begin{equation}\label{expedition0025007}
\begin{aligned}
\oastate[\widehat{\psi}_{\sminvtemperature,\Omega}]
=
\oastate[\psi_{\sminvtemperature,\Omega}] \otimes
\oastate[\psi^{\mathrm{tr}}_{> \Omega}],
\end{aligned}
\end{equation}
where $\oastate[\psi^{\mathrm{tr}}_{> \Omega}]$ is the product of the normalized trace states on the tensor factors with indices greater than $\Omega$.
The gauge automorphisms are determined by
$$\begin{aligned}
\fun{\mathfrak{g}_{\vartheta}}{\sigma^{\pm}_{p}}
=
\fnexp{\pm \imunit \vartheta} \sigma^{\pm}_{p},
\quad
\fun{\mathfrak{g}_{\vartheta}}{\sigma^{z}_{p}}
=
\sigma^{z}_{p}.
\end{aligned}$$
\end{defn}

The model is degenerate when \(\varepsilon_{p}
=
\varepsilon\) for every \(p\). Definitions \ref{def:gauge} and \ref{def:degenerate-thermal-setting} give the detailed versions of the last two conventions.

\subsection{Product sectors and self-consistent fields}\label{product-sectors-and-self-consistent-fields}

The sectorwise results require the asymptotic correlation between the one-particle energies and the spin directions. The following definition collects the required data.

\begin{defn}[product-sector data for the main results]\label{def:main-sector-data}
A spin configuration is a sequence
$\physvec{\omega}
=
\seq{\physvec{u}_{p}}{p
\in
\semigrposint}$
in $\mansphere^{2}$.
It has a mean profile if the empirical measures
$\frac{1}{\Omega} \sum_{p
=
1}^{\Omega} \delta_{\physvec{u}_{p}}$
converge weakly on $\mansphere^{2}$ to a probability measure
$\lambda_{\physvec{\omega}}$.
It has a joint mean profile if the empirical measures
$\frac{1}{\Omega} \sum_{p
=
1}^{\Omega} \delta_{\rbk{\varepsilon_{p}, \physvec{u}_{p}}}$
converge weakly on
$\closedinterval{-E}{E} \times \mansphere^{2}$,
where
$E
=
\sup_{p
\in
\semigrposint} \abs{\varepsilon_{p}}$.
Its mean polarization is
$$\begin{aligned}
\eta_{\physvec{\omega}} \bar{\physvec{u}}_{\physvec{\omega}}
=
\lim_{\Omega \to \infty}
\frac{1}{\Omega} \sum_{p
=
1}^{\Omega} \physvec{u}_{p},
\quad
0
\leq
\eta_{\physvec{\omega}}
\leq
1
\end{aligned},$$
where $\bar{\physvec{u}}_{\physvec{\omega}}$ is a unit vector when
$\eta_{\physvec{\omega}}
>
0$.

Choose unit vectors $\xi^{\pm}_{\physvec{\omega},p}$ satisfying
$$\rbk{\physvec{\sigma}_{p} \cdot \physvec{u}_{p}} \xi^{\pm}_{\physvec{\omega},p}
=
\pm \xi^{\pm}_{\physvec{\omega},p}.$$
The product state with reference sequence
$\seq{\xi^{+}_{\physvec{\omega},p}}{p
\in
\semigrposint}$
is denoted by $\oastate[\psi_{\physvec{\omega}}]$,
and its product representation is
$\oarepn_{\physvec{\omega}}$
on $\sphilb{H}_{\physvec{\omega}}$.
The flip domain $\sphilb{D}_{\physvec{\omega}}$ is the linear span of the product vectors
$\xi^{F}_{\physvec{\omega}}$
obtained by replacing $\xi^{+}_{\physvec{\omega},p}$ with
$\xi^{-}_{\physvec{\omega},p}$ at the sites of a finite set
$F
\subset
\semigrposint$.

The relative number cutoffs and their diagonal action on the flip domain are
$$\begin{aligned}
N_{\Omega,\physvec{\omega}}
=
\sum_{p = 1}^{\Omega}
\frac{1}{2} \rbk{1 - \physvec{\sigma}_{p} \cdot \physvec{u}_{p}},
\quad
N^{0}_{\physvec{\omega}} \xi^{F}_{\physvec{\omega}}
=
\abscard{F} \xi^{F}_{\physvec{\omega}}.
\end{aligned}$$
For the centered interaction define
$$\begin{aligned}
m^{\pm}_{\physvec{\omega},p}
=
\frac{1}{2} \rbk{u^{x}_{p} \pm \imunit u^{y}_{p}},
\quad
d^{\pm}_{\physvec{\omega},p}
=
\sigma^{\pm}_{p} - m^{\pm}_{\physvec{\omega},p},
\quad
D^{\pm}_{\physvec{\omega},\Omega}
=
\sum_{p = 1}^{\Omega} d^{\pm}_{\physvec{\omega},p},
\quad
R_{\physvec{\omega},\Omega}
=
\frac{1}{\Omega}
D^{+}_{\physvec{\omega},\Omega} D^{-}_{\physvec{\omega},\Omega}
\end{aligned}$$
and
$$\begin{aligned}
c_{\physvec{\omega}}
&=
\int_{\mansphere^{2}}
\frac{\rbk{1 + u^{z}}^{2}}{4}
\opdmsr{\fun{\lambda_{\physvec{\omega}}}{\physvec{u}}}.
\end{aligned}$$

Define the effective field and one-site Hamiltonian by
$$\begin{aligned}
\physvec{b}_{\physvec{\omega},p}
=
\rbk{\frac{\eta_{\physvec{\omega}}}{\sminvtemperature_{c}} \bar{u}^{x}_{\physvec{\omega}},
\frac{\eta_{\physvec{\omega}}}{\sminvtemperature_{c}} \bar{u}^{y}_{\physvec{\omega}},
\varepsilon_{p}},
\quad
\physham[h]_{\physvec{\omega},p}
=
- \physvec{b}_{\physvec{\omega},p} \cdot \physvec{\sigma}_{p}.
\end{aligned}$$
The flip amplitude $t_{\physvec{\omega},p}$ is the coefficient determined by
$$\physham[h]_{\physvec{\omega},p} \xi^{+}_{\physvec{\omega},p}
=
- \rbk{\physvec{b}_{\physvec{\omega},p} \cdot \physvec{u}_{p}} \xi^{+}_{\physvec{\omega},p}
+ t_{\physvec{\omega},p} \xi^{-}_{\physvec{\omega},p}.$$
The configuration is aligned if there are signs
$\eta_{\physvec{\omega},p}
\in
\setone{+1, -1}$
such that
$$\begin{aligned}
\physvec{b}_{\physvec{\omega},p}
\neq
0,
\quad
\physvec{u}_{p}
=
\eta_{\physvec{\omega},p}
\frac{\physvec{b}_{\physvec{\omega},p}}
{\abs{\physvec{b}_{\physvec{\omega},p}}}.
\end{aligned}$$
Finally,
set
$E_{\physvec{\omega},\Omega}
=
\fun{\oastate[\psi_{\physvec{\omega}}]}{\physham_{\Omega}}$.
\end{defn}

The detailed definitions of the mean profile, the product representation, the joint mean profile, and alignment are Definitions \ref{def:mean-profile}, \ref{def:spin-product-representation}, \ref{def:joint-mean-profile}, and \ref{def:aligned-configuration}.

\subsection{Scale-dependent limits in product sectors}\label{scale-dependent-limits-in-product-sectors}

The scale-dependent theorem distinguishes the strong limits of intensive observables from the domain and matrix-element limits of extensive observables. Note that the limits are taken on the representated space.

\begin{thm}[intensive and extensive observables]\label{thm:main-observable-scales}
Let $\physvec{\omega}$ have a mean profile and use Definition \ref{def:main-sector-data}.
The following conclusions hold.
\begin{enumerate}
\item For each $\gamma
\in
\setone{x, y, z}$,
the mean spin converges strongly:
$$\slim_{\Omega \to \infty}
\fun{\oarepn_{\physvec{\omega}}}{M^{\gamma}_{\Omega}}
=
\eta_{\physvec{\omega}} \bar{u}^{\gamma}_{\physvec{\omega}}.$$

\item The operator $N^{0}_{\physvec{\omega}}$ is essentially self-adjoint.
If $N_{\physvec{\omega}}$ denotes its closure,
then for every $\Psi
\in
\dom N_{\physvec{\omega}}$,
$$\lim_{\Omega \to \infty}
\norm{
\fun{\oarepn_{\physvec{\omega}}}{N_{\Omega,\physvec{\omega}}} \Psi
- N_{\physvec{\omega}} \Psi}
=
0,$$
and for every $t
\in
\fldreal$,
$$\slim_{\Omega \to \infty}
\fnexp{
\imunit t
\fun{\oarepn_{\physvec{\omega}}}{N_{\Omega,\physvec{\omega}}}}
=
\fnexp{\imunit t N_{\physvec{\omega}}}.$$

\item For every $\Phi, \Psi
\in
\sphilb{D}_{\physvec{\omega}}$,
the centered interaction has the matrix-element limit
$$\lim_{\Omega \to \infty}
\bkt{\Phi}{
\fun{\oarepn_{\physvec{\omega}}}{R_{\physvec{\omega},\Omega}} \Psi}
=
c_{\physvec{\omega}} \bkt{\Phi}{\Psi}.$$
For a constant configuration with $\abs{u^{z}} < 1$ this convergence is not strong,
and for every $\Omega
\geq
8$, it holds that
$$\norm{\fun{\oarepn_{\physvec{\omega}}}{R_{\physvec{\omega},\Omega}}}
\geq
\frac{\Omega}{324}.$$

\item If
$\sum_{p \in \semigrposint}
\rbk{1 - \rbk{u^{z}_{p}}^{2}}
<
\infty$,
then every gauge automorphism is unitarily implementable in
$\oarepn_{\physvec{\omega}}$.

If instead
$\sum_{p \in \semigrposint}
\rbk{1 - \rbk{u^{z}_{p}}^{2}}
=
\infty$,
then for every $\vartheta
\notin
2 \pi \ringratint$ the representations associated with
$\oastate[\psi_{\physvec{\omega}}]$
and
$\oastate[\psi_{\physvec{\omega}}] \circ \mathfrak{g}_{\vartheta}$
are disjoint.
In particular $\mathfrak{g}_{\vartheta}$ is not unitarily implementable in this sector.
\end{enumerate}
\end{thm}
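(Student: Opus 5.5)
The plan is to handle the four items separately, working throughout on the flip domain \(\sphilb{D}_{\physvec{\omega}}\), which is dense in \(\sphilb{H}_{\physvec{\omega}}\) (the incomplete tensor product), and to pass to the whole space by uniform norm bounds.

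\textbf{Item 1.} Write \(\sigma^{\gamma}_{p} = u^{\gamma}_{p} + (\sigma^{\gamma}_{p} - u^{\gamma}_{p})\). By the one-site Bloch calculus, the centered term maps \(\xi^{+}_{\physvec{\omega},p}\) to a multiple of \(\xi^{-}_{\physvec{\omega},p}\), with coefficient bounded by \(1\). Fix a basis vector \(\xi^{F}_{\physvec{\omega}}\) and split
\[
\oarepn_{\physvec{\omega}}(M^{\gamma}_{\Omega})\xi^{F}_{\physvec{\omega}} = \frac{1}{\Omega}\sum_{p\le\Omega} u^{\gamma}_{p}\,\xi^{F}_{\physvec{\omega}} + \frac{1}{\Omega}\sum_{p\le\Omega}(\sigma^{\gamma}_{p}-u^{\gamma}_{p})\xi^{F}_{\physvec{\omega}} .
\]
\begin{itemize}
\item The first term tends to \(\eta_{\physvec{\omega}}\bar u^{\gamma}_{\physvec{\omega}}\xi^{F}_{\physvec{\omega}}\); this is the first moment of the mean profile.
\item In the second term, sites \(p\notin F\) give mutually orthogonal flipped vectors, so their sum has norm \(O(\sqrt{\Omega})\). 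The sites in \(F\) contribute \(O(|F|)\). Hence the second term is \(O(\Omega^{-1/2})\).
\end{itemize}
Since \(\|M^{\gamma}_{\Omega}\|\le 1\), an \(\varepsilon/3\) argument extends the convergence from \(\sphilb{D}_{\physvec{\omega}}\) to all of \(\sphilb{H}_{\physvec{\omega}}\).

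\textbf{Item 2.} The operator \(N^{0}_{\physvec{\omega}}\) is diagonal in the orthonormal basis \(\{\xi^{F}_{\physvec{\omega}}\}\) of the sector (incomplete tensor product, Appendix \ref{sec:itp}). A diagonal operator defined on the span of an orthonormal basis is essentially self-adjoint, and its closure \(N_{\physvec{\omega}}\) is the multiplication operator by \(|F|\).

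On the same basis, \(\oarepn_{\physvec{\omega}}(N_{\Omega,\physvec{\omega}})\) acts as multiplication by \(|F\cap\{1,\dots,\Omega\}|\). This number increases monotonically to \(|F|\). For \(\Psi=\sum c_{F}\xi^{F}_{\physvec{\omega}}\in\dom N_{\physvec{\omega}}\),
\[
\bigl\|\oarepn_{\physvec{\omega}}(N_{\Omega,\physvec{\omega}})\Psi - N_{\physvec{\omega}}\Psi\bigr\|^{2} = \sum_{F} |c_{F}|^{2}\,\bigl|F\setminus\{1,\dots,\Omega\}\bigr|^{2}.
\]
Each summand is dominated by \(|c_{F}|^{2}|F|^{2}\), so dominated convergence gives the limit zero. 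The same dominated-convergence argument applied to the phases \(e^{it|F\cap[1,\Omega]|}\) gives strong convergence of the unitary groups. Alternatively one can cite the Trotter–Kato/strong resolvent lemma of Appendix \ref{sec:functional-analytic-appendix}.

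\textbf{Item 3.} Expand
\[
R_{\physvec{\omega},\Omega}=\frac{1}{\Omega}\sum_{p,q\le\Omega} d^{+}_{p}d^{-}_{q}.
\]
For \(p\neq q\) with both sites outside \(F\cup G\), the vector \(d^{+}_{p}d^{-}_{q}\xi^{F}_{\physvec{\omega}}\) has two extra flips relative to \(\xi^{F}_{\physvec{\omega}}\), so it is orthogonal to \(\xi^{G}_{\physvec{\omega}}\). Only \(O(\Omega)\) pairs touching \(F\cup G\) survive, and their contribution is \(O(1)\) after division by \(\Omega\); it vanishes unless it can be paired diagonally.

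The diagonal terms \(p=q\) contribute
\[
\frac{1}{\Omega}\sum_{p}\langle \xi^{+}_{p}, d^{+}_{p}d^{-}_{p}\xi^{+}_{p}\rangle\,\langle\xi^{G},\xi^{F}\rangle .
\]
The single-site Bloch computation gives
\[
\langle\xi^{+},d^{+}d^{-}\xi^{+}\rangle=\langle\sigma^{+}\sigma^{-}\rangle-|m^{+}|^{2}=\frac{1+u^{z}}{2}-\frac{1-(u^{z})^{2}}{4}=\frac{(1+u^{z})^{2}}{4},
\]
and the weak convergence of the mean profile turns the average into \(c_{\physvec{\omega}}\). The few off-diagonal terms inside \(F\cup G\) must be checked to be \(O(1/\Omega)\).

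For the lower norm bound in the constant configuration, choose a test vector spread over many flips, for instance the normalized sum
\[
\Psi_{\Omega}=\Bigl(\tbinom{\Omega}{k}\Bigr)^{-1/2}\sum_{|F|=k,\,F\subset[1,\Omega]}\xi^{F}_{\physvec{\omega}},\qquad k\approx\Omega/2 .
\]
Then \(D^{-}_{\Omega}\) acts coherently, much like a collective lowering operator of a spin-\(\Omega/2\) representation (Appendix \ref{sec:spin}). Its expectation is of order \(\Omega^{2}\cdot\sin^{2}(\theta)\) times a combinatorial factor, which gives a bound \(\ge\Omega/324\) for \(\Omega\ge8\) after crude estimates. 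Failure of strong convergence follows because \(R_{\Omega}\xi^{\emptyset}\) has a component of norm of order \(1\) on the doubly flipped vectors. That component equals \(\frac{1}{\Omega}\sum_{p\neq q}\) of orthogonal pieces, with norm about \(|m^{+}d|^{2}\cdot\Omega/\Omega\), which does not vanish when \(|u^{z}|<1\).

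\textbf{Item 4.} The gauge-transformed state is again a product state, with vectors \(\xi^{\pm}_{p}\) rotated about the \(z\)-axis by \(\vartheta\). By von Neumann's criterion (Appendix \ref{sec:itp}), the two product representations are equivalent iff \(\sum_{p}|1-|\langle\xi^{+}_{p},U_{\vartheta}\xi^{+}_{p}\rangle||<\infty\), and disjoint otherwise, since they are factor states. Compute
\[
|\langle\xi^{+}_{p},U_{\vartheta}\xi^{+}_{p}\rangle|^{2}=\frac{1+\cos^{2}\theta_{p}+\sin^{2}\theta_{p}\cos\vartheta}{2},
\]
so that
\[
1-|\cdot|^{2}=\frac{1}{2}\bigl(1-(u^{z}_{p})^{2}\bigr)(1-\cos\vartheta).
\]
The series therefore converges iff \(\sum_{p}\bigl(1-(u^{z}_{p})^{2}\bigr)<\infty\), provided \(\vartheta\notin2\pi\mathbb Z\). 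In the summable case the equivalence yields the implementing unitary; the representation and its gauge transform coincide, giving implementability. In the divergent case disjointness excludes it.

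\textbf{Main obstacle.} The explicit lower bound \(\Omega/324\) in Item 3 is where I expect the most difficulty, because it requires choosing the test vector carefully and controlling the combinatorial constants.
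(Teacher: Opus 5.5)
Items 1, 2 and 4 are sound and follow the paper. Item 1 is a flip-basis version of the variance estimate in Theorem~\ref{thm:lln}. Item 2 is Proposition~\ref{prop:number}; there is no Trotter--Kato lemma in the appendix, but your dominated-convergence argument needs none. Your overlap in item 4 is \eqref{eq:gauge-overlap}, after which Proposition~\ref{prop:disjointness} finishes.

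\textbf{The gap: the norm bound in item 3.} Proposition~\ref{prop:lemma3}(5) proves $\|\pi_{\boldsymbol\omega}(R_{\boldsymbol\omega,\Omega})\|\ge\Omega/324$ for an arbitrary configuration. Your sketch treats only constant configurations, and even there the quantity you estimate cannot give a constant independent of $u^z$.
\begin{itemize}
\item With a common frame and $\xi^-=f^-\xi^+$, one has $d^-_p=-2m^-P_{-\boldsymbol u,p}+\overline{\upsilon^+}f^+_p+\upsilon^-f^-_p$.
\item On the symmetric $k$-flip vector $\Psi_k$, the $f^\pm$ terms change the flip number. Hence $\langle\Psi_k,D^-_\Omega\Psi_k\rangle=-2m^-k$ and $|\langle\Psi_k,D^-_\Omega\Psi_k\rangle|^2=(1-(u^z)^2)k^2$. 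This is your ``$\Omega^2\sin^2\theta$''.
\item Inserted into $\|R_\Omega\|\ge\Omega^{-1}|\langle\Psi,D^-_\Omega\Psi\rangle|^2$, it gives about $\Omega(1-(u^z)^2)/4$. That drops below $\Omega/324$ as $|u^z|\to1$.
\end{itemize}

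The missing idea is the paper's product test vector.
\begin{itemize}
\item Since $m^+_p=\overline{m^-_p}$, you have $R_\Omega=\Omega^{-1}(D^-_\Omega)^*D^-_\Omega$. So $\|\pi_{\boldsymbol\omega}(R_\Omega)\|=\Omega^{-1}\|\pi_{\boldsymbol\omega}(D^-_\Omega)\|^2\ge\Omega^{-1}|\langle\chi,\pi_{\boldsymbol\omega}(D^-_\Omega)\chi\rangle|^2$ for every unit $\chi$.
\item Take $\chi$ a product vector that differs from the reference only at sites $p\le\Omega$. Then $\langle\chi,\pi_{\boldsymbol\omega}(D^-_\Omega)\chi\rangle=\sum_{p\le\Omega}(\langle\chi_p,\sigma^-\chi_p\rangle-m^-_p)$.
\item The values $\langle\chi_p,\sigma^-\chi_p\rangle$ fill the disc of radius $\tfrac12$. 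Write $m^-_p=-|m^-_p|e^{i\beta_p}$ and choose $\langle\chi_p,\sigma^-\chi_p\rangle=\tfrac12e^{i\beta_p}$. The summand becomes $e^{i\beta_p}(\tfrac12+|m^-_p|)$, of modulus at least $\tfrac12$ whatever $\boldsymbol u_p$ is.
\item Pigeonholing the phases into eight arcs of length $\pi/4$ gives at least $\Omega/8$ sites within $\pi/8$ of a common midpoint. Keep all other sites at the reference, where they contribute zero.
\item This yields $|\langle\chi,\pi_{\boldsymbol\omega}(D^-_\Omega)\chi\rangle|\ge\tfrac12\cos(\pi/8)\,\Omega/8\ge\Omega/18$, hence $\Omega/324$.
\end{itemize}
If you keep your Dicke vector (constant configurations only), use the full second moment instead of the first:
$\|D^-_\Omega\Psi_k\|^2=(1-(u^z)^2)k^2+|\upsilon^+|^2k(\Omega-k+1)+|\upsilon^-|^2(k+1)(\Omega-k)\ge\tfrac12k(\Omega-k)$,
because $|\upsilon^+|^2+|\upsilon^-|^2=\tfrac{1+(u^z)^2}{2}$. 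With $k=\lfloor\Omega/2\rfloor$ this gives $\langle\Psi_k,R_\Omega\Psi_k\rangle\ge(\Omega^2-1)/(8\Omega)$.

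\textbf{Two smaller repairs in item 3.}
\begin{itemize}
\item Off-diagonal pairs with a site $r$ outside $F\cup G$ do not contribute ``$O(1)$ after division by $\Omega$''; they vanish. The factor at $r$ is $\langle\xi^+_r,d^\pm_r\xi^+_r\rangle=0$ by centering. Only the at most $(|F|+|G|)^2$ pairs inside $F\cup G$ survive, each bounded by $4$, giving $O(1/\Omega)$. Your count of $O(\Omega)$ pairs times $O(1)/\Omega$ would not give a vanishing limit.
\item For the failure of strong convergence, $\pi_{\boldsymbol\omega}(O_\Omega)\xi^\emptyset=\frac{2\upsilon^+\upsilon^-}{\Omega}\sum_{p<q}\xi^{\{p,q\}}$. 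Its squared norm is $\frac{4|\upsilon^+\upsilon^-|^2}{\Omega^2}\binom{\Omega}{2}\to\frac{(1-(u^z)^2)^2}{8}$. Neither the diagonal part nor $c\,\xi^\emptyset$ has a doubly flipped component, so this bounds $\|(\pi_{\boldsymbol\omega}(R_\Omega)-c)\xi^\emptyset\|$ from below, as you intended.
\end{itemize}
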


The strong law in Theorem \ref{thm:main-observable-scales}(1) is proved in Theorem \ref{thm:lln} and Corollary \ref{cor:mean-spin}. The assertions about relative number, the centered interaction, and gauge implementability are Propositions \ref{prop:number}, \ref{prop:lemma3}, and \ref{prop:gauge-implementable}, respectively.

\subsection{Sectorwise Bogoliubov--Haag theory}\label{sectorwise-bogoliubovhaag-theory}

The sectorwise theorem separates form convergence, operator representability, self-consistency, and dynamical convergence.

\begin{thm}[sectorwise Bogoliubov--Haag limit and dynamics]\label{thm:main-sectorwise-results}
Let $\physvec{\omega}$ have a mean profile and use the product-sector data of
Definition \ref{def:main-sector-data}.
The following conclusions hold.
\begin{enumerate}
\item The matrix elements of the centered Hamiltonians converge on the flip domain:
$$\fun{\opform{Q}_{\physvec{\omega}}}{\Phi, \Psi}
=
\lim_{\Omega \to \infty}
\bkt{\Phi}{
\fun{\oarepn_{\physvec{\omega}}}{
\physham_{\Omega} - E_{\physvec{\omega},\Omega}} \Psi},$$
for all $\Phi, \Psi
\in
\sphilb{D}_{\physvec{\omega}}$.

\item The form $\opform{Q}_{\physvec{\omega}}$ is represented on
$\sphilb{D}_{\physvec{\omega}}$
by an operator if and only if
$$\sum_{p
\in
\semigrposint} \abs{t_{\physvec{\omega},p}}^{2}
<
\infty.$$
In that case it has a unique self-adjoint realization
$\physham_{\txtbogoliubov,\physvec{\omega}}$
for which $\sphilb{D}_{\physvec{\omega}}$ is a core.

\item If $\physvec{\omega}$ is aligned,
then $t_{\physvec{\omega},p}
=
0$ for every $p$ and
$$\physham_{\txtbogoliubov,\physvec{\omega}} \xi^{F}_{\physvec{\omega}}
=
\rbk{
\sum_{p
\in
F} 2 \eta_{\physvec{\omega},p} \abs{\physvec{b}_{\physvec{\omega},p}}
} \xi^{F}_{\physvec{\omega}}.$$
If its transverse mean polarization is nonzero,
the gap equation is
$$\lim_{\Omega \to \infty} \frac{1}{\Omega} \sum_{p
=
1}^{\Omega}
\frac{\eta_{\physvec{\omega},p}}
{\sqrt{
\varepsilon_{p}^{2}
+\frac{\eta_{\physvec{\omega}}^{2}}{\sminvtemperature_{c}^{2}}
\rbk{
\rbk{\bar{u}^{x}_{\physvec{\omega}}}^{2}
+\rbk{\bar{u}^{y}_{\physvec{\omega}}}^{2}
}}}
=
\sminvtemperature_{c}.$$

\item If $\physvec{\omega}$ is aligned and has a joint mean profile,
then for every $A
\in
\oa{A}$ and $t
\in
\fldreal$,
$$\slim_{\Omega \to \infty}
\fun{\oarepn_{\physvec{\omega}}}{\fun{\tau^{\Omega}_{t}}{A}}
=
\fnexp{\imunit t \physham_{\txtbogoliubov,\physvec{\omega}}}
\fun{\oarepn_{\physvec{\omega}}}{A}
\fnexp{- \imunit t \physham_{\txtbogoliubov,\physvec{\omega}}}.$$
\end{enumerate}
\end{thm}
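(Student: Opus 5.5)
The plan is to reduce the four assertions to one-site Bloch calculus on the product basis \(\xi^{F}_{\physvec{\omega}}\) of the flip domain, using the strong law of Theorem \ref{thm:main-observable-scales}(1) for the intensive parts.

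For (1), I expand \(\physham_{\Omega} - E_{\physvec{\omega},\Omega}\) around the product state. Write \(\sigma^{\pm}_{p} = m^{\pm}_{\physvec{\omega},p} + d^{\pm}_{\physvec{\omega},p}\), so that
\[
\frac{1}{\Omega} S^{+}_{\Omega} S^{-}_{\Omega} = \Omega\, \bar m^{+}_{\Omega}\bar m^{-}_{\Omega} + \bar m^{+}_{\Omega} D^{-}_{\physvec{\omega},\Omega} + \bar m^{-}_{\Omega} D^{+}_{\physvec{\omega},\Omega} + R_{\physvec{\omega},\Omega},
\]
where \(\bar m^{\pm}_{\Omega}=\Omega^{-1}\sum_{p\leq\Omega} m^{\pm}_{\physvec{\omega},p}\to \frac{\eta_{\physvec{\omega}}}{2}(\bar u^{x}_{\physvec{\omega}}\pm \imunit \bar u^{y}_{\physvec{\omega}})\). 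The \(c\)-number term and the expectation of \(R_{\physvec{\omega},\Omega}\) are absorbed by subtracting \(E_{\physvec{\omega},\Omega}\). The limit \(c_{\physvec{\omega}}\) of \(R_{\physvec{\omega},\Omega}\) from Theorem \ref{thm:main-observable-scales}(3), minus its vacuum value, then contributes nothing. The linear terms are one-site operators. On \(\xi^{F}_{\physvec{\omega}}\) only sites in \(F\cup\{p\}\) matter, so the matrix elements \(\bkt{\xi^{G}}{\cdot\,\xi^{F}}\) reduce to finite sums plus a single-flip sum \(\sum_{p\leq\Omega}\) connecting \(F\) and \(F\cup\{p\}\). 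Each one-site term \(-\varepsilon_{p}\sigma^{z}_{p}-\frac{2}{\sminvtemperature_{c}}(\bar m^{+}_\Omega\sigma^{-}_{p}+\bar m^{-}_\Omega\sigma^{+}_{p})\) is \(\physham[h]_{\physvec{\omega},p}+o(1)\) uniformly in \(p\). Its diagonal part relative to the vacuum is \(2(\physvec b_{\physvec{\omega},p}\cdot \physvec u_p)\) on flipped sites, and its off-diagonal part is \(t_{\physvec{\omega},p}\). Hence each matrix element converges, and the limiting form is
\[
\opform{Q}_{\physvec{\omega}}(\xi^{G},\xi^{F}) = \delta_{GF}\sum_{p\in F}2\,\physvec b_{\physvec{\omega},p}\cdot\physvec u_{p} + (\text{flip terms } t_{\physvec{\omega},p},\ \overline{t_{\physvec{\omega},p}} \text{ with } G=F\triangle\{p\}).
\]
The two-site terms of \(R_{\physvec{\omega},\Omega}\) carry a factor \(\Omega^{-1}\) and vanish on fixed \(F,G\).

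For (2), the form is an operator on \(\sphilb D_{\physvec{\omega}}\) iff \(p\mapsto \opform Q(\xi^{F\triangle\{p\}},\xi^{F})\) is square summable for each \(F\), that is, iff \(\sum_p\abs{t_{\physvec{\omega},p}}^2<\infty\). Given this, I would identify the operator via the infinite tensor product (Theorem \ref{thm:gns-identification}) as \(\sum_p (\physham[h]_{\physvec{\omega},p}-\langle \physham[h]_{\physvec{\omega},p}\rangle)\). This splits as a diagonal part \(\sum_p 2(\physvec b\cdot\physvec u_p)P^{-}_p\), which is essentially self-adjoint on \(\sphilb D\) as a multiplication operator, plus an off-diagonal part \(T=\sum_p(t_p\,\sigma_p^{-,\mathrm{rel}}+\text{h.c.})\). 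The hard step, which I expect to be the main obstacle, is essential self-adjointness with \(\sphilb D\) as core. My first attempt is a Nelson commutator argument with the comparison operator \(N_{\physvec{\omega}}+1\). One has \(\|T\Psi\|\leq C\|(N+1)^{1/2}\Psi\|\) by Cauchy--Schwarz with \(\sum\abs{t_p}^2<\infty\), and \(\abs{\langle T\Psi,N\Psi\rangle-\langle N\Psi,T\Psi\rangle}\leq C\|(N+1)^{1/2}\Psi\|^2\). The diagonal part commutes with \(N\). If that fails, I would instead exhibit the operator as a unitary conjugate (via the tensor product of one-site rotations diagonalizing \(\physham[h]_p\), which converges to a unitary between equivalent sectors by square summability) of a diagonal operator, using the ITP equivalence criterion of the appendix.

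For (3), alignment gives \(\physvec u_p\parallel\physvec b_p\), so \(\xi^{\pm}_p\) are eigenvectors of \(\physham[h]_p\) and \(t_{\physvec{\omega},p}=0\). The diagonal formula is read off (1), since \(\physvec b_p\cdot\physvec u_p=\eta_{\physvec{\omega},p}\abs{\physvec b_p}\). The gap equation follows by averaging the transverse components of \(\physvec u_p=\eta_p\physvec b_p/\abs{\physvec b_p}\) and matching with the definition of the mean polarization. This gives \(\eta\bar u^{x}=\lim\Omega^{-1}\sum_p\eta_p(\eta\bar u^x/\sminvtemperature_c)/\abs{\physvec b_p}\), and dividing by the nonzero transverse mean yields the equation.

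For (4), by density and uniform boundedness of \(\tau^{\Omega}_{t}\) it suffices to take \(A\in\oa A_{\Lambda}\) local and test on \(\sphilb D\). Duhamel's formula gives
\[
\tau^{\Omega}_{t}(A)-\tau^{\mathrm B}_{t}(A)=\imunit\int_0^t\tau^{\Omega}_{s}\bigl([\physham_\Omega-\physham^{\mathrm{eff}}_\Omega,\tau^{\mathrm B}_{t-s}(A)]\bigr)\,ds,
\]
with \(\physham^{\mathrm{eff}}_{\Omega}=\sum_{p\leq\Omega}\physham[h]_{\physvec{\omega},p}\) generating a local product dynamics that preserves locality. The commutator involves \(\frac{1}{\sminvtemperature_c}\bigl[(M^{\pm}_\Omega-\langle M^{\pm}\rangle)\,\cdot\,\cdot\bigr]\) times bounded local operators. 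Its strong vanishing uses Theorem \ref{thm:main-observable-scales}(1) together with uniform control of \(\tau^{\Omega}_{s}\) applied to intensive fluctuations. Here the joint mean profile guarantees that the mean spin is asymptotically invariant under \(\tau^{\Omega}_{s}\) in the sector, so that \(M^{\pm}_\Omega\) evolved stays near its mean on \(\sphilb D\). Finally, \(\physham^{\mathrm{eff}}_\Omega\) minus constants converges to \(\physham_{\txtbogoliubov,\physvec{\omega}}\) strongly in the resolvent sense on the core, which gives the claimed limit.
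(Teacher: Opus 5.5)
Parts (1)--(3) are fine. Parts (1) and (3) follow the paper's proof of Theorem~\ref{thm:bogoliubov}. In (2), Nelson's commutator theorem with comparison operator $N_{\physvec{\omega}}+1$ is a legitimate alternative to Lemma~\ref{lem:centered-tensor-sum}: the diagonal part is bounded by $2\sup_{p}\abs{\physvec{b}_{\physvec{\omega},p}}N_{\physvec{\omega}}$ and commutes with $N_{\physvec{\omega}}$. The flip part is $(N_{\physvec{\omega}}+1)^{1/2}$-bounded and shifts $N_{\physvec{\omega}}$ by $\pm1$; for the raising half this uses $\abs{\sum_{p\in G}t_{\physvec{\omega},p}c_{G\setminus\{p\}}}^{2}\le\abscard{G}\sum_{p\in G}\abs{t_{\physvec{\omega},p}}^{2}\abs{c_{G\setminus\{p\}}}^{2}$. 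So the rotation fallback is not needed, and it would fail anyway when $\inf_{p}\abs{\physvec{b}_{\physvec{\omega},p}}=0$.

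The gap is in (4). Your Duhamel formula correctly reduces the claim for local $A$ to
\[
\slim_{\Omega\to\infty}\oarepn_{\physvec{\omega}}\bigl(\tau^{\Omega}_{s}(M^{\pm}_{0,\Omega})\bigr)=\bar{m}^{\pm}_{\physvec{\omega}}\qquad(0\le\abs{s}\le\abs{t}),
\]
but that is exactly the step you do not prove. Theorem~\ref{thm:main-observable-scales}(1) gives it only at $s=0$.

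The reason you offer is also wrong as stated. A constant configuration of the degenerate model always has a joint mean profile, yet by Proposition~\ref{prop:precession} its evolved transverse mean precesses at frequency $2\tilde{\mu}\neq0$ unless $u^{z}=\sminvtemperature_{c}\varepsilon$. What makes the invariance true is alignment, through a dynamical argument. In the paper it appears only as Corollary~\ref{cor:means-constant}, which is \emph{deduced from} Theorem~\ref{thm:dynamics}, so citing it here would be circular.

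Nor is it a one-line Heisenberg-picture estimate. By Lemma~\ref{lem:exact-derivation}, the derivative of $M^{+}_{0,\Omega}$ involves $M^{+}_{1,\Omega}$, whose derivative involves $M^{+}_{2,\Omega}$, and so on. This open hierarchy is why the paper needs all mean-field polynomial families, the $\Omega$-uniform bounds of Lemma~\ref{lem:derivation-bounds}, the stationarity identities of Lemma~\ref{lem:stationarity}, and a propagation step applied to the families $\delta^{N}_{\Omega}(F_{\Omega})$.

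Your interaction-picture reduction can still be completed on its own terms, and it then gives a genuinely shorter route than the paper's. Apply the same Duhamel formula to $X=M^{\pm}_{0,\Omega}$. Since the kinetic terms cancel in $H_{\Omega}-H^{\mathrm{eff}}_{\Omega}$, every intensive $K=\frac{1}{\Omega}\sum_{p\le\Omega}k_{p}$ with one-site $\norm{k_{p}}\le1$ satisfies
\[
\imunit\commutator{H_{\Omega}-H^{\mathrm{eff}}_{\Omega}}{K}=-\frac{2\imunit}{\sminvtemperature_{c}}\Bigl(\bigl(M^{+}_{0,\Omega}-\bar{m}^{+}_{\physvec{\omega}}\bigr)\commutator{S^{-}_{\Omega}}{K}+\commutator{S^{+}_{\Omega}}{K}\bigl(M^{-}_{0,\Omega}-\bar{m}^{-}_{\physvec{\omega}}\bigr)\Bigr),
\]
with $\norm{\commutator{S^{\pm}_{\Omega}}{K}}\le2$ and $\norm{\commutator{M^{+}_{0,\Omega}}{\commutator{S^{-}_{\Omega}}{K}}}\le4/\Omega$. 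So no $\varepsilon$-weighted means appear. The last bound lets you move the fluctuation factor to the right, which your local-$A$ step also needs. Take $K=\tau^{\mathrm{B}}_{s-r}(M^{\pm}_{0,\Omega})$ and $f_{\pm}(s)=\norm{\bigl(\oarepn_{\physvec{\omega}}(\tau^{\Omega}_{s}(M^{\pm}_{0,\Omega}))-\bar{m}^{\pm}_{\physvec{\omega}}\bigr)\Psi}$. This gives the closed inequality
\[
f_{+}(s)+f_{-}(s)\le\epsilon_{\Omega}(s)+\frac{8}{\sminvtemperature_{c}}\int_{0}^{s}\bigl(f_{+}+f_{-}\bigr)(r)\,dr+\frac{16\,s\norm{\Psi}}{\sminvtemperature_{c}\Omega},
\]
where $\epsilon_{\Omega}(s)$ is the same expression with $\tau^{\Omega}_{s}$ replaced by $\tau^{\mathrm{B}}_{s}$.

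Alignment makes $\xi^{+}_{\physvec{\omega},p}$ an eigenvector of $\physham[h]_{\physvec{\omega},p}$. Hence the one-site means of $\tau^{\mathrm{B}}_{s}(\sigma^{\pm}_{p})$ stay equal to $m^{\pm}_{\physvec{\omega},p}$, and Theorem~\ref{thm:lln} gives $\epsilon_{\Omega}(s)\to0$. Gronwall and dominated convergence then give $f_{\pm}(s)\to0$. Your local-$A$ Duhamel step and the implementation identity of Theorem~\ref{thm:dynamics} then finish (4).

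This completion avoids the weighted-mean hierarchy entirely and does not even use the joint mean profile, which the paper's Taylor-series method needs to identify the termwise limits. But some such argument must be supplied; as written, part (4) is not proved.
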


The form limit, the representability criterion, the diagonal realization, and the gap equation are Theorem \ref{thm:bogoliubov}. The strong dynamical limit is Theorem \ref{thm:dynamics}.

The convergence in Theorem \ref{thm:main-sectorwise-results}(4) concerns represented local observables. It does not assert convergence of the Hamiltonians or their implementing evolution operators; Proposition \ref{prop:unitary-failure} gives an aligned counterexample to weak convergence of the latter.

\subsection{Strong-coupling ground-state orbit}\label{strong-coupling-ground-state-orbit}

The degenerate strong-coupling setting produces a circle of mutually disjoint pure product sectors.

\begin{defn}[strong-coupling ground-state orbit]\label{def:main-ground-orbit}
Assume
$\varepsilon_{p}
=
\varepsilon$
and
$\sminvtemperature_{c} \abs{\varepsilon}
<
1$.
Set
$$\begin{aligned}
r
=
\sqrt{1 - \sminvtemperature_{c}^{2} \varepsilon^{2}},
\quad
\fun{\physvec{u}}{\phi}
=
\vecbk{
r\fun{\cos}{\phi},
r\fun{\sin}{\phi},
\sminvtemperature_{c} \varepsilon}.
\end{aligned}$$
The angle satisfies $\phi
\in
\rightopeninterval{0}{2 \pi}$.
Let $\fun{\physvec{\omega}}{\phi}$ be the constant configuration with value
$\fun{\physvec{u}}{\phi}$,
let $\oastate[\psi_{\fun{\physvec{\omega}}{\phi}}]$ be its pure product state,
and let
$$\oastate[\psi_{\txtgs}]
=
\int_{\mansphere^{1}}
\oastate[\psi_{\fun{\physvec{\omega}}{\phi}}]
\opdmsr{\msrprb_{\mansphere^{1}}}(\phi)$$
be its gauge average with respect to normalized Haar measure.
The associated phase-extended algebra is
\begin{equation}\label{eq:phase-extended-algebra}
\oa{C}
=
\fun{\conti}{\mansphere^{1}, \oa{A}}
\cong
\fun{\conti}{\mansphere^{1}} \otimes_{\min} \oa{A}.
\end{equation}
\end{defn}

\begin{thm}[central decomposition of the product ground-state sectors]\label{thm:main-ground-decomposition}
Under Definition \ref{def:main-ground-orbit},
the product representations at distinct phases are mutually disjoint and
$$\oastate[\psi_{\txtgs}]
=
\int_{\mansphere^{1}}
\oastate[\psi_{\fun{\physvec{\omega}}{\phi}}]
\opdmsr{\msrprb_{\mansphere^{1}}}(\phi)$$
is their central decomposition.
Its GNS von Neumann algebra has center
$\fun{\lp^{\infty}}{\mansphere^{1}, \msrprb_{\mansphere^{1}}}$.
The direct integral of the fiber Bogoliubov--Haag Hamiltonians is
$$\physham_{\txtgs}
=
\int_{\mansphere^{1}}^{\oplus}
\physham_{\txtbogoliubov,\fun{\physvec{\omega}}{\phi}}
\opdmsr{\msrprb_{\mansphere^{1}}}(\phi).$$
It is nonnegative and has
$$\begin{aligned}
\Ker\physham_{\txtgs}
\cong
\fun{\lp^{2}}{\mansphere^{1}, \msrprb_{\mansphere^{1}}},
\quad
\opspec{\physham_{\txtgs}}
=
\set{\frac{2 n}{\sminvtemperature_{c}}}{n
\in
\monnat}.
\end{aligned}$$
Thus its excitation gap is
$2/\sminvtemperature_{c}$.
The decomposable fiber dynamics does not descend to a point-norm continuous automorphism group of $\oa{A}$.
After adjoining the phase as a central classical variable,
the fiber dynamics defines one point-norm continuous automorphism group on
$\fun{\conti}{\mansphere^{1}, \oa{A}}
\cong
\fun{\conti}{\mansphere^{1}} \otimes_{\min} \oa{A}$.
\end{thm}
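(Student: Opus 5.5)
Each assertion of the theorem will be reduced to the sectorwise results already stated, applied to the constant configurations $\physvec{\omega}(\phi)$.

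\textbf{Self-consistency and the fiber Hamiltonians.} The mean polarization of $\physvec{\omega}(\phi)$ is $\eta=1$ with $\bar{\physvec{u}}=\physvec{u}(\phi)$. The effective field is therefore
$$\physvec{b}_{p}=(r\cos\phi/\sminvtemperature_{c},\, r\sin\phi/\sminvtemperature_{c},\, \varepsilon).$$
Its norm is $\abs{\physvec{b}}=\sqrt{r^{2}/\sminvtemperature_{c}^{2}+\varepsilon^{2}}=1/\sminvtemperature_{c}$, so $\physvec{b}=\physvec{u}(\phi)/\sminvtemperature_{c}$. Hence the configuration is aligned with all signs $\eta_{p}=+1$.

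Theorem \ref{thm:main-sectorwise-results}(3) then applies. It gives $t_{p}=0$ and $\physham_{\txtbogoliubov,\physvec{\omega}(\phi)}\xi^{F}=(2\abs{F}/\sminvtemperature_{c})\xi^{F}$. Equivalently, $\physham_{\txtbogoliubov,\physvec{\omega}(\phi)}=(2/\sminvtemperature_{c})N_{\physvec{\omega}(\phi)}$, whose closure is described in Theorem \ref{thm:main-observable-scales}(2).

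\textbf{Disjointness and the central decomposition.} The second step is disjointness of the fibers. For $\phi\neq\phi'$, the mean spin $\slim\oarepn(M^{x,y}_{\Omega})$ from Theorem \ref{thm:main-observable-scales}(1) equals the distinct scalars $r(\cos\phi,\sin\phi)$ and $r(\cos\phi',\sin\phi')$, where $r>0$. These limits lie in the weak closure at infinity and are therefore central. A nonzero intertwiner would have to intertwine these different scalars, which is impossible, so the two fibers are disjoint. Alternatively, one can use Proposition \ref{prop:disjointness}, or Proposition \ref{prop:gauge-implementable} with $\sum(1-(u^{z})^{2})=\sum r^{2}=\infty$, since $\psi_{\physvec{\omega}(\phi)}\circ\mathfrak{g}_{\vartheta}=\psi_{\physvec{\omega}(\phi+\vartheta)}$.

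Each fiber is a pure product state and hence a factor. The map $\phi\mapsto\psi_{\physvec{\omega}(\phi)}$ is weak* continuous and injective, and the fibers are pairwise disjoint. By the standard criterion (Bratteli--Robinson, Prop.~4.2.9/Thm.~4.4.9 style), the GNS representation of $\psi_{\txtgs}$ is then unitarily equivalent to $\int^{\oplus}\oarepn_{\physvec{\omega}(\phi)}\,d\phi$. Its center is the diagonal algebra $L^{\infty}(\mansphere^{1})$. Concretely, the functions $\cos\phi$ and $\sin\phi$ are realized as strong limits of $M^{x}_{\Omega}/r$ and $M^{y}_{\Omega}/r$, and these generate $L^{\infty}$ as a von Neumann algebra. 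So the decomposition is the central one.

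\textbf{Spectrum of the direct integral.} Measurability of the field $\phi\mapsto\physham_{\txtbogoliubov,\physvec{\omega}(\phi)}$ needs a concrete identification. I would fix all fibers in one incomplete tensor product via the gauge unitary $\xi^{\pm}_{\phi}=U_{\phi}\xi^{\pm}_{0}$, using Theorem \ref{thm:gns-identification}. Under this identification every fiber operator becomes the same $(2/\sminvtemperature_{c})N_{0}$, and the direct integral is $(2/\sminvtemperature_{c})\,1\otimes N_{0}$ on $L^{2}(\mansphere^{1})\otimes\sphilb{H}_{0}$.

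The spectrum of $N_{0}$ is $\monnat$, with a one-dimensional kernel spanned by $\xi^{\emptyset}$. This gives nonnegativity, $\Ker\physham_{\txtgs}\cong L^{2}(\mansphere^{1})\otimes\xi^{\emptyset}$, the stated spectrum $\{2n/\sminvtemperature_{c}\}$, and the gap $2/\sminvtemperature_{c}$.

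\textbf{Dynamics.} For non-descent, I would invoke Proposition \ref{prop:ground-dynamics-non-descent}. If a single group $\alpha_{t}$ on $\oa{A}$ reproduced each fiber dynamics, it would have to agree with the fiber dynamics on $\sigma^{\pm}_{1}$. There the one-site precession about $\physvec{b}=\physvec{u}(\phi)/\sminvtemperature_{c}$ depends on $\phi$. So $\oarepn_{\phi}(\alpha_{t}(\sigma^{+}_{1}))$ would have to equal different elements for different $\phi$, which contradicts the single element $\alpha_{t}(\sigma^{+}_{1})$.

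For the phase-extended algebra, I would define
$$(\alpha_{t}f)(\phi)=\tau^{\phi}_{t}(f(\phi)),$$
where $\tau^{\phi}_{t}$ is the product automorphism generated by the one-site Hamiltonians $-\physvec{b}_{\phi}\cdot\physvec{\sigma}_{p}$. These are norm-limits of local Bogoliubov dynamics and depend jointly continuously on $(\phi,t)$ on local elements, so continuity on all of $\oa{C}$ follows by density.

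The main obstacle is the verification that this group reproduces the fiber dynamics of Theorem \ref{thm:main-sectorwise-results}(4), that is,
$$\oarepn_{\phi}(\tau^{\phi}_{t}(A))=e^{itH_{\txtbogoliubov}}\oarepn_{\phi}(A)e^{-itH_{\txtbogoliubov}}.$$
This identity requires checking on product vectors that the diagonal Hamiltonian $2\abs{\physvec{b}}N$ implements the precession, which is the same computation as in Theorem \ref{thm:dynamics}.
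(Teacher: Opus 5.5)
Your overall route is the paper's. It runs through alignment of $\boldsymbol{\omega}(\phi)$ with $|\boldsymbol{b}_{p}|=1/\beta_{c}$ and the diagonal fiber Hamiltonian $\frac{2}{\beta_{c}}N$ (Corollary~\ref{cor:strong-coupling}), then disjointness via the order parameter or Proposition~\ref{prop:disjointness}. Next come transport of all fibers to one space, so that the direct integral is the constant field $\frac{2}{\beta_{c}}\,1\otimes N_{0}$, and non-descent from the $\phi$-dependence of the one-site generator. It ends with the fiberwise phase-extended group extended by density.

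The one step that does not hold as written is the identification of the GNS representation of $\psi_{\mathrm{gs}}$ with $\int^{\oplus}\pi_{\boldsymbol{\omega}(\phi)}\,d\phi$, together with its center. You derive it from a ``standard criterion'' whose inputs are pairwise disjoint factor fibers plus weak$^{*}$-continuity and injectivity of $\phi\mapsto\psi_{\boldsymbol{\omega}(\phi)}$. The Bratteli--Robinson results you allude to are formulated for orthogonal, respectively subcentral, measures. They require the partial barycenters $\int_{E}$ and $\int_{\mathbb{S}^{1}\setminus E}$ to be orthogonal, respectively disjoint, for every Borel set $E$. Disjointness of individual fibers is a pointwise condition, not that hypothesis, and you give no argument bridging the two. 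In particular, nothing in your sketch shows that the constant section is cyclic, which is the content of Theorem~\ref{thm:ground-state-direct-integral}(1).

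The repair is the observation you make only afterwards, which has to carry the weight.
\begin{itemize}
\item First set up the gauge transport via $\pi_{\mathrm{gs},\phi}=\pi_{\mathrm{gs},0}\circ\mathfrak{g}_{\phi}$. Theorem~\ref{thm:gns-identification} is the thermal statement and is not the relevant reference.
\item Fiberwise, $\pi_{\mathrm{gs},\phi}(\frac{2}{r}M^{+}_{\Omega})=e^{i\phi}\pi_{\mathrm{gs},0}(\frac{2}{r}M^{+}_{\Omega})\to e^{i\phi}$ strongly by Theorem~\ref{thm:lln}. Dominated convergence then gives $\pi_{\mathrm{gs}}(\frac{2}{r}M^{+}_{\Omega})\to M_{e^{i\phi}}\otimes 1$ strongly on the direct integral.
\item Since $e^{i\phi}$ generates $L^{\infty}(\mathbb{S}^{1})$ as a von Neumann algebra, this gives $L^{\infty}(\mathbb{S}^{1})\subset\pi_{\mathrm{gs}}(\mathcal{A})''$.
\item Every operator in $\pi_{\mathrm{gs}}(\mathcal{A})'$ therefore commutes with $L^{\infty}$ and is decomposable (Lemma~\ref{lem:diagonal-commutant}).
\item Its fibers commute almost everywhere with the irreducible $\pi_{\mathrm{gs},\phi}(\mathcal{A})$, by Lemma~\ref{lem:fiberwise-zero} and Proposition~\ref{prop:irreducible}. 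Hence $\pi_{\mathrm{gs}}(\mathcal{A})'=L^{\infty}(\mathbb{S}^{1})$, which gives the center.
\item Apply this to the projection onto $\overline{\pi_{\mathrm{gs}}(\mathcal{A})\Psi_{\mathrm{gs}}}$. It lies in $L^{\infty}$ and fixes the nowhere-vanishing constant section, so it equals $1$, which is cyclicity.
\end{itemize}
The paper proves cyclicity instead by an explicit partition-of-unity and Fej\'er approximation. Either way, disjointness of the fibers comes out as a consequence rather than an input.

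Finally, the ``main obstacle'' you flag at the end is not open. The implementation identity is already the second half of Theorem~\ref{thm:dynamics}.
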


The aligned gauge orbit and disjointness are Corollary \ref{cor:strong-coupling}. The central decomposition and the spectral assertions are Theorem \ref{thm:ground-state-direct-integral}. Non-descent is Proposition \ref{prop:ground-dynamics-non-descent}. The phase-extended dynamics is Proposition \ref{prop:phase-extended-ground-dynamics}.

\subsection{Superconducting thermal phase}\label{superconducting-thermal-phase}

The finite-temperature results use the same degenerate Hamiltonian but a mixed product state in every phase sector.

\begin{defn}[superconducting thermal data for the main results]\label{def:main-thermal-phase}
Assume
$\varepsilon_{p}
=
\varepsilon$,
$\sminvtemperature
>
\sminvtemperature_{c}$,
and let $\eta_{0}
\in
\rbk{0, 1}$ be the positive solution of
$$\eta_{0}
=
\fun{\tanh}{\frac{\sminvtemperature}{\sminvtemperature_{c}} \eta_{0}}.$$
The superconducting condition is
$\sminvtemperature_{c} \abs{\varepsilon}
<
\eta_{0}$.
Set
$$\begin{aligned}
z_{0}
=
\sminvtemperature_{c} \varepsilon,
\quad
r_{0}
=
\sqrt{\eta_{0}^{2} - z_{0}^{2}},
\quad
\physvec{m}_{\phi}
=
\vecbk{
r_{0}\fun{\cos}{\phi},
r_{0}\fun{\sin}{\phi},
z_{0}}
\quad
\rbk{\phi
\in
\rightopeninterval{0}{2 \pi}}.
\end{aligned}$$
Let $\oastate[\psi^{\rbk{1}}_{\sminvtemperature,\phi}]$ be the one-site state with density matrix
$$D_{\phi}
=
\frac{1}{2} \rbk{1 + \physvec{m}_{\phi} \cdot \physvec{\sigma}},$$
and define the product state and its self-consistent one-site Hamiltonian by
$$\begin{aligned}
\oastate[\psi_{\sminvtemperature,\phi}]
=
\bigotimes_{p \in \semigrposint}
\oastate[\psi^{\rbk{1}}_{\sminvtemperature,\phi}],
\quad
\physham[h]_{\phi}
=
-\frac{\eta_{0}}{\sminvtemperature_{c}}
\physvec{\sigma} \cdot \frac{\physvec{m}_{\phi}}{\eta_{0}}.
\end{aligned}$$
The associated Bogoliubov product dynamics is generated locally by the copies
$\physham[h]_{\phi,p}$.
Let
$\rbk{\sphilb{H}_{0}, \oarepn_{0}, \oagnsvector[\Psi_{0}]}$
be the GNS triple of $\oastate[\psi_{\sminvtemperature,0}]$ and define
$$\begin{aligned}
\oarepn_{\phi}
=
\oarepn_{0} \circ \mathfrak{g}_{\phi},
\quad
\sphilb{H}_{\sminvtemperature}
=
\fun{\lp^{2}}{\mansphere^{1}, \msrprb_{\mansphere^{1}}; \sphilb{H}_{0}},
\quad
\oarepn_{\sminvtemperature}
=
\int_{\mansphere^{1}}^{\oplus}
\oarepn_{\phi}
\opdmsr{\msrprb_{\mansphere^{1}}}(\phi).
\end{aligned}$$
\end{defn}

Definition \ref{def:superconducting-thermal-regime} gives the detailed thermal regime, and \eqref{eq:phase-product-state} defines the Bogoliubov product states.

\subsection{Equilibrium state, Green functions, and central decomposition}\label{equilibrium-state-green-functions-and-central-decomposition}

The thermal main theorem combines the thermodynamic limit, the Green-function limit, and the decomposition of the limiting state.

\begin{thm}[equilibrium and Green-function limits]\label{thm:main-equilibrium-results}
Assume Definition \ref{def:main-thermal-phase}.
The following conclusions hold.
\begin{enumerate}
\item For every $A
\in
\oa{A}$,
$$\fun{\oastate[\psi_{\sminvtemperature}]}{A}
=
\lim_{\Omega \to \infty}
\fun{\oastate[\widehat{\psi}_{\sminvtemperature,\Omega}]}{A}
=
\int_{\mansphere^{1}}
\fun{\oastate[\psi_{\sminvtemperature,\phi}]}{A}
\opdmsr{\msrprb_{\mansphere^{1}}}(\phi),$$
where the states $\oastate[\widehat{\psi}_{\sminvtemperature,\Omega}]$ are defined
in \eqref{expedition0025007}.

\item For local $B_{1}, \dotsc, B_{k}$ and real
$t_{1}, \dotsc, t_{k}$,
$$\begin{aligned}
\lim_{\Omega \to \infty}
\fun{\oastate[\widehat{\psi}_{\sminvtemperature,\Omega}]}{
\fun{\tau^{\Omega}_{t_{1}}}{B_{1}} \dotsm
\fun{\tau^{\Omega}_{t_{k}}}{B_{k}}}
=
\int_{\mansphere^{1}}
\fun{\oastate[\psi_{\sminvtemperature,\phi}]}{
\fun{\tau^{\txtbogoliubov,\phi}_{t_{1}}}{B_{1}} \dotsm
\fun{\tau^{\txtbogoliubov,\phi}_{t_{k}}}{B_{k}}}
\opdmsr{\msrprb_{\mansphere^{1}}}(\phi).
\end{aligned}$$

\item The states $\oastate[\psi_{\sminvtemperature,\phi}]$ are mutually disjoint factor states and each is the unique KMS state at inverse temperature
$\sminvtemperature$
for their respective dynamics
$\tau^{\txtbogoliubov,\phi}$.
In particular each fiber state is extremal in its KMS simplex.
On the phase-extended algebra
$\oa{C}$
of \eqref{eq:phase-extended-algebra},
the fiber dynamics forms one point-norm continuous automorphism group and the fiberwise gauge average is a KMS state for that group.
The decomposition in (1) is the central decomposition of
$\oastate[\psi_{\sminvtemperature}]$.
Its GNS representation is the direct integral of the fiber representations and the center of the represented von Neumann algebra is
$\fun{\lp^{\infty}}{\mansphere^{1}, \msrprb_{\mansphere^{1}}}$.
It is generated by the phase of the order parameter,
represented by the strong limit
$$\slim_{\Omega \to \infty}
\fun{\oarepn_{\sminvtemperature}}{\frac{2}{r_{0}} M^{+}_{\Omega}}
=
M_{\fnexp{\imunit \phi}} \otimes \id_{\sphilb{H}_{0}},$$
where $M_{\fnexp{\imunit \phi}}$ is multiplication by the coordinate function
$\phi
\mapsto
\fnexp{\imunit \phi}$
on the gauge circle.

\item The gauge group fixes the averaged state and acts transitively on the fibers:
$$\begin{aligned}
\oastate[\psi_{\sminvtemperature,\phi}] \circ \mathfrak{g}_{\vartheta}
=
\oastate[\psi_{\sminvtemperature,\phi + \vartheta}],
\quad
\fun{\oastate[\psi_{\sminvtemperature,\phi}]}{\sigma^{+}_{p}}
=
\frac{r_{0}}{2} \fnexp{\imunit \phi}.
\end{aligned}$$

\item If in addition
$\sminvtemperature_{c} \abs{\varepsilon}
<
1$,
then as $\sminvtemperature
\to
\infty$ the thermal fiber states converge pointwise on $\oa{A}$,
uniformly in $\phi$,
to the pure product states of Definition \ref{def:main-ground-orbit}.
The limit Gibbs states converge to $\oastate[\psi_{\txtgs}]$ and their central decomposition measures converge weakly to the ground-state central decomposition measure.
\end{enumerate}
\end{thm}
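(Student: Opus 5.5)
The plan is to reduce everything to permutation symmetry and the de Finetti/Størmer structure of the limiting Gibbs state. In the degenerate model the Hamiltonian is
\[
\physham_{\Omega} = -\varepsilon \Omega M^{z}_{\Omega} - \frac{2}{\sminvtemperature_{c}\Omega} S^{+}_{\Omega}S^{-}_{\Omega},
\]
with \(S^{+}_{\Omega}S^{-}_{\Omega} = \Omega^{2}\bigl((M^{x}_{\Omega})^{2}+(M^{y}_{\Omega})^{2}\bigr)/4 + \Omega M^{z}_{\Omega}/2\). The Hamiltonian is therefore a function of the total spin, up to an \(O(1)\) correction.

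\textbf{Part (1).} I would decompose \(\bigotimes_{p\le\Omega}\fldcmp^{2}\) by total spin \(j\) using the finite-spin appendix (Wigner multiplicities). For a local \(A\) supported on the first \(n\) sites, I would evaluate \(\trace(\fnexp{-\sminvtemperature\physham_{\Omega}}A)\) as a sum over \(j\) and \(m\). Within a spin-\(j\) block the reduced density on \(n\) sites is asymptotically a mixture of product states \(\bigotimes D\) with Bloch vector of length \(2j/\Omega\) and direction distributed according to the Husimi/coherent-state measure. The \(\Gamma\)-function estimates of the analytic appendix then give a Laplace principle for the free functional
\[
f(\physvec{m}) = -\frac{1}{\sminvtemperature_{c}}\bigl(m_{x}^{2}+m_{y}^{2}\bigr) - \varepsilon m_{z} - \sminvtemperature^{-1} s(\abs{\physvec{m}}),
\]
where \(s\) is the binary entropy in \(\abs{\physvec m}\). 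Maximizing in \(\abs{\physvec m}=\eta\) and \(m_{z}\) yields the self-consistency \(\eta=\tanh(\sminvtemperature\eta/\sminvtemperature_{c})\) with \(m_{z}=z_{0}\) and transverse length \(r_{0}\), provided \(\sminvtemperature_{c}\abs{\varepsilon}<\eta_{0}\); the exact value of \(m_z\) should be checked against the definition. Gauge invariance of \(\physham_{\Omega}\) makes the maximizers a circle, and the measure concentrates on \(\{\physvec m_{\phi}\}\) with Haar weight. The trace-extension in \eqref{expedition0025007} only affects sites beyond \(\Omega\), which is irrelevant for local \(A\). Density then extends the result to all of \(\oa{A}\). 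I expect this Laplace step, with uniform control of the sub-leading entropy/prefactor terms and of the \(O(1)\) term \(\Omega M^{z}_{\Omega}/2\), to be the main obstacle.

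\textbf{Part (2).} For the dynamics I would use Theorem \ref{thm:dynamics} fiberwise. On each spin-\(j\) block the mean spin is asymptotically a c-number (Theorem \ref{thm:main-observable-scales}(1)), so \(\tau^{\Omega}_{t}(B)\) is approximated in norm on the relevant blocks by the product evolution generated by \(\physham[h]_{\phi,p}\). More concretely, I would write the commutator of \(\physham_{\Omega}\) with a local \(B\) as \(\sum_{p} [\physham[h]^{\Omega}_{p}, B]\) with the field replaced by \(\physvec M_{\Omega}\). A Dyson/Gronwall estimate, combined with the fact that \(\physvec M_{\Omega}\) is conserved up to \(O(1/\Omega)\) in the transverse length and rotates only through the phase (which is frozen in the limit), gives the stated limit. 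The Gibbs expectation is then integrated against the measure from Part (1).

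\textbf{Part (3).} Factoriality and disjointness hold because the states are product states. For distinct \(\phi\) the states have different values of the asymptotic observable \(\lim M^{+}_{\Omega}\), which is central, so they are disjoint (compare Proposition \ref{prop:disjointness}). The KMS property follows since \(D_{\phi}\propto \fnexp{-\sminvtemperature\physham[h]_{\phi}}\): this uses \(\tanh(\sminvtemperature\eta_{0}/\sminvtemperature_{c})=\eta_{0}\), and uniqueness of KMS for product dynamics with local generators. The phase-extended group is obtained by acting fiberwise on \(\fun{\conti}{\mansphere^{1},\oa{A}}\) via Proposition \ref{prop:phase-extended-ground-dynamics}. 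The central decomposition and the center \(L^{\infty}\) follow from disjointness together with the strong limit of \(\frac{2}{r_{0}}M^{+}_{\Omega}\) computed fiberwise by the law of large numbers, which generates \(L^{\infty}(\mansphere^1)\).

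\textbf{Parts (4) and (5).} Part (4) is a direct computation, since \(\mathfrak{g}_{\vartheta}\) rotates \(\physvec m_{\phi}\). For Part (5), as \(\sminvtemperature\to\infty\) we have \(\eta_{0}\to1\) and \(\physvec m_{\phi}\to\fun{\physvec u}{\phi}\) uniformly in \(\phi\). Product states depend continuously on one-site states on local elements, so the fiber states converge to the pure product states. Weak convergence of the decomposition measures is trivial because both are Haar measure transported along the circle.
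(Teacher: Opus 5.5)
Your Parts (4)--(5) and the fiber statements of Part (3) follow the paper's route. That is: the order-parameter strong limit for disjointness, the $\tanh$ fixed point for the KMS property, the local density-matrix argument for uniqueness, and uniform convergence of the Bloch vectors as $\beta\to\infty$. Part (2), however, has a genuine gap.

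\textbf{Part (2).} Its premise that ``on each spin-$j$ block the mean spin is asymptotically a c-number'' is false. Theorem~\ref{thm:main-observable-scales}(1) does not assert it: that theorem is a law of large numbers in a fixed product representation, not on finite-volume total-spin blocks. Take a unit joint eigenvector $\Psi$ of $(\physvec{S}_{\Omega}^{2},S^{z}_{\Omega})$ with $(2S/\Omega,2m/\Omega)\approx(\eta,z)$. The length of the mean spin and $M^{z}_{\Omega}$ are asymptotically scalar there. But $\langle\Psi,M^{+}_{\Omega}\Psi\rangle=0$ while $\|M^{+}_{\Omega}\Psi\|\approx\frac12\sqrt{\eta^{2}-z^{2}}$, so the phase of the order parameter is an operator, not a number. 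Hence no $\phi$ makes $\tau^{\Omega}_{t}(B)$ norm-close on a block to the product evolution generated by $h_{\phi,p}$, and your Dyson/Gronwall estimate cannot be closed in norm.

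Applying Theorem~\ref{thm:dynamics} fiberwise does not repair this. That theorem is a strong limit in a fixed (pure, aligned) representation. Part (2) instead concerns expectations in the $\Omega$-dependent states $\widehat{\omega}_{\beta,\Omega}$ of the $\Omega$-dependent elements $\tau^{\Omega}_{t_{j}}(B_{j})$, which do not converge in norm. Weak-$*$ convergence of the states together with strong convergence in the limit representation does not control this double limit. Moreover, the measure from Part (1) decomposes only the limit state, not $\widehat{\omega}_{\beta,\Omega}$.

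The missing mechanism is the paper's:
\begin{itemize}
\item $\delta_{\Omega}=i[H_{\Omega},\cdot\,]$ maps mean-field polynomials to mean-field polynomials exactly (Lemma~\ref{lem:exact-derivation}), with $\Omega$-uniform bounds (Lemma~\ref{lem:derivation-bounds}). So the Taylor series of $\tau^{\Omega}_{t}(B_{j})$ converges uniformly in $\Omega$ for small $|t|$.
\item The Gibbs expectation of each term converges by Lemma~\ref{lem:thermal-polynomial-limits}.
\item Substituting $M^{\gamma}_{0,\Omega}\mapsto m^{\gamma}_{\phi}$ intertwines $\delta_{\Omega}$ with $\delta_{\phi}$ (Lemma~\ref{lem:thermal-stationarity}). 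This is exactly where $z_{0}=\beta_{c}\varepsilon$ annihilates the precession frequency $-2\varepsilon+\frac{2}{\beta_{c}}M^{z}_{0,\Omega}$ of \eqref{eq:exact-precession}, i.e.\ your ``frozen phase''.
\item Holomorphy in a strip plus Corollary~\ref{cor:vitali-several} extends the short-time identity to all real times.
\end{itemize}

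\textbf{Part (1).} Your block/coherent-state route is a legitimate alternative to the paper's characteristic-functional argument, but two points need repair. First, your free functional contradicts your own formula for $H_{\Omega}$: the transverse coefficient is $1/(2\beta_{c})$, not $1/\beta_{c}$. With $1/\beta_{c}$ the stationarity equations give $m_{z}=\beta_{c}\varepsilon/2$ and $\eta=\tanh(2\beta\eta/\beta_{c})$, not $(\eta_{0},z_{0})$; also, $f$ is to be minimized, not maximized.

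Second, the Laplace step you flag as the main obstacle is routine (Stirling bounds plus compactness; Lemma~\ref{lem:free-energy} and Theorem~\ref{thm:concentration}). The substantive claim is that the few-site marginals of the normalized $(S,m)$ eigenprojections converge, uniformly in $(S,m)$, to the Haar mixture of product states on the circle $|\physvec{m}|=2S/\Omega$, $m_{z}=2m/\Omega$. You only assert this via the Husimi measure. The paper obtains it from:
\begin{itemize}
\item the closed formula $\langle\Psi,g^{\otimes\Omega}\Psi\rangle=\langle S,m|D^{S}(g)|S,m\rangle$ (Proposition~\ref{prop:matrix-element});
\item the uniform Bessel limit (Proposition~\ref{prop:bessel-limit});
\item Vitali's theorem for the moments;
\item the collision estimate (Lemma~\ref{lem:collision}), which turns ordered mean moments into fixed-site correlations.
\end{itemize}

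\textbf{Part (3).} You still need cyclicity of the constant section in the direct integral (Theorem~\ref{thm:gns-identification}). For the inclusion of the center in $L^{\infty}$, you need factoriality of the fibers together with $\pi_{\phi}(\mathcal{A})''=\mathcal{M}_{0}$ for all $\phi$; disjointness alone does not give that inclusion.
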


The state limit is Theorem \ref{thm:limit-state}, and the Green-function limit is Theorem \ref{thm:green}. The fiber properties, the GNS direct integral, the center, and gauge covariance are collected in Theorem \ref{thm:central-decomposition}. The zero-temperature conclusions are Corollary \ref{cor:zero-temperature-central-decomposition}.

For \(\sminvtemperature
\leq
\sminvtemperature_{c}\), Theorem \ref{thm:central-decomposition}(5) gives a single factor product state and a trivial center.

\section{The Quasi-Spin Algebra and the BCS Hamiltonian}\label{sec:algebra}

The observable algebra used throughout the paper is the UHF algebra \(\oa{A}\) of type \(2^{\infty}\), with local subalgebras \(\oa{A}_{\Lambda}
=
\bigotimes_{p
\in
\Lambda} \spmat{2}{\fldcmp}\) for finite \(\Lambda
\subset
\semigrposint\). The operator-algebraic construction and the simplicity fact used in the main text are collected in Appendix \ref{sec:oa-appendix}. The one-site Bloch sphere identities and finite-dimensional spin identities are collected in Appendix \ref{sec:bloch}.

\subsection{Quasi-spin observables and gauge symmetry}\label{quasi-spin-observables-and-gauge-symmetry}

The Pauli matrices generate the local quasi-spin observables, from which the collective spin, the BCS Hamiltonian, and the gauge automorphisms are constructed. The formulas below also fix the finite-volume permutation symmetry used in the thermal analysis.

The single-site Pauli matrices are \[
\sigma^{x}
=
\begin{pmatrix} 0 & 1 \\ 1 & 0 \end{pmatrix},
\quad
\sigma^{y}
=
\begin{pmatrix} 0 & -\imunit \\ \imunit & 0 \end{pmatrix},
\quad
\sigma^{z}
=
\begin{pmatrix} 1 & 0 \\ 0 & -1 \end{pmatrix}.
\] Set \(\physvec{\sigma}
=
\rbk{\sigma^{x}, \sigma^{y}, \sigma^{z}}\) and \[\sigma^{\pm}
=
\frac{1}{2} \rbk{\sigma^{x} \pm \imunit \sigma^{y}}.\] Their copies in the tensor factor \(p\) are written \(\sigma_{p}^{x}\), \(\sigma_{p}^{y}\), \(\sigma_{p}^{z}\), \(\physvec{\sigma}_{p}\), and \(\sigma_{p}^{\pm}\). For \(\Omega
\in
\semigrposint\), the finite integer interval of site labels is denoted by \(\intint{1..\Omega}
=
\setone{1, \cdots, \Omega}\). The collective operators over the first \(\Omega\) modes are \begin{equation}\label{eq:collective}
S_{\Omega}^{\gamma}
=
\frac{1}{2} \sum_{p
=
1}^{\Omega} \sigma_{p}^{\gamma}
\quad \rbk{\gamma
=
x, y, z},
\quad
S_{\Omega}^{\pm}
=
\sum_{p
=
1}^{\Omega} \sigma_{p}^{\pm},
\quad
\physvec{M}_{\Omega}
=
\frac{1}{\Omega} \sum_{p
=
1}^{\Omega} \physvec{\sigma}_{p}.
\end{equation} Thus \(S_{\Omega}^{\pm}
=
S_{\Omega}^{x} \pm \imunit S_{\Omega}^{y}\) and \(\physvec{M}_{\Omega}
=
\frac{2}{\Omega} \physvec{S}_{\Omega}\). The components of the mean magnetization \(\physvec{M}_{\Omega}\) are written \(M_{\Omega}^{\gamma}\) and \(M_{\Omega}^{\pm}
=
\frac{1}{\Omega} S_{\Omega}^{\pm}\). In the electron picture \(\sigma_{p}^{+}\) creates and \(\sigma_{p}^{-}\) annihilates the electron pair in mode \(p\), and \(\frac{1}{2} \rbk{1 + \sigma_{p}^{z}}\) is the pair occupation number.

The quasi-spin BCS Hamiltonian couples every pair mode to every other with equal strength. It is a mean-field Hamiltonian.

\begin{defn}[BCS Hamiltonian]
Let $\seq{\varepsilon_{p}}{p
\in
\semigrposint}$ be a bounded sequence of real numbers and define its uniform bound by
\begin{equation}\label{eq:energy-bound}
E
=
\sup_{p \in \semigrposint} \abs{\varepsilon_{p}}.
\end{equation}
Let $\sminvtemperature_{c} > 0$.
For $\Omega
\in
\semigrposint$,
the BCS Hamiltonian on the first $\Omega$ pair modes is the self-adjoint element of $\oa{A}_{\Omega}$
\begin{equation}\label{eq:bcs-hamiltonian}
\physham_{\Omega}
=
- \sum_{p
=
1}^{\Omega} \varepsilon_{p} \sigma_{p}^{z} - \frac{2 }{\sminvtemperature_{c} \Omega} S_{\Omega}^{+} S_{\Omega}^{-}
=
- \sum_{p
=
1}^{\Omega} \varepsilon_{p} \sigma_{p}^{z} - \frac{2}{\sminvtemperature_{c} \Omega} \sum_{p
=
1}^{\Omega} \sum_{p'
=
1}^{\Omega} \sigma_{p}^{+} \sigma_{p'}^{-}.
\end{equation}
The model is called degenerate when $\varepsilon_{p}
=
\varepsilon$ for all $p$.
\end{defn}

The convention \eqref{eq:bcs-hamiltonian} is that of \cite[Eq. (2)]{ThirringWalter001}. The Hamiltonian of \cite[Eq. (1)]{ThirringWehrl001} is \[\sum_{p
\leq
\Omega}
\varepsilon_{p}
\rbk{1 - \sigma_{p}^{z}} - \frac{2}{\sminvtemperature_{c} \Omega} S_{\Omega}^{-} S_{\Omega}^{+},\] which differs from \eqref{eq:bcs-hamiltonian} by the constant \(\sum_{p
\leq
\Omega} \varepsilon_{p}\) and by the operator \(\frac{2}{\sminvtemperature_{c} \Omega} \commutator{S_{\Omega}^{+}}{S_{\Omega}^{-}}
=
\frac{4}{\sminvtemperature_{c} \Omega} S_{\Omega}^{z}\). The constant is immaterial throughout, and the commutator term is a mean-field one-body term of norm \(\frac{2}{\sminvtemperature_{c}}\) which can be absorbed by the replacement \(\varepsilon_{p}
\mapsto
\varepsilon_{p} + \frac{2}{\sminvtemperature_{c} \Omega}\) and changes none of the limit statements, as will be evident from the proofs. In the electron picture \(\varepsilon_{p}\) is the single-electron kinetic energy measured from the chemical potential and \(\sminvtemperature_{c}\) is, in suitable units, the critical inverse temperature. The coupling scale in \eqref{eq:bcs-hamiltonian} is its inverse. Both facts are taken from \cite{BardeenCooperSchrieffer001,BogoliubovNikolai002} and play no logical role in the arguments of this paper.

A gauge transformation of the electrons multiplies each pair creation operator by a phase. It therefore acts on the quasi-spins as a rotation around the \(z\)-axis.

\begin{defn}[gauge automorphisms]\label{def:gauge}
For $\vartheta
\in
\fldreal$ let $u_{\vartheta}
=
\fnexp{\imunit \vartheta \sigma^{z} / 2}
\in
\spmat{2}{\fldcmp}$.
For every finite $\Lambda
\subset
\semigrposint$,
the local gauge unitary is
\begin{equation}\label{eq:local-gauge-unitary}
u_{\Lambda,\vartheta}
=
\bigotimes_{p
\in
\Lambda} u_{\vartheta}
\in
\oa{A}_{\Lambda}.
\end{equation}
The gauge automorphism $\mathfrak{g}_{\vartheta}$ of $\oa{A}$ is the unique $\ast$-automorphism with
$$\fun{\mathfrak{g}_{\vartheta}}{A}
=
u_{\Lambda,\vartheta} A \faadj{u_{\Lambda,\vartheta}},
\quad A
\in
\oa{A}_{\Lambda}.$$
The gauge group is
$\dggrgauge
=
\set{\mathfrak{g}_{\vartheta}}{\vartheta
\in
\fldreal}$.
\end{defn}

The local formulas are compatible with the embeddings \(\oa{A}_{\Lambda}
\subset
\oa{A}_{\Lambda'}\) because the conjugating unitaries differ by a unitary of \(\oa{A}_{\Lambda' \setminus \Lambda}\) which acts trivially on \(\oa{A}_{\Lambda}\). It follows that \(\mathfrak{g}_{\vartheta}\) is well defined and isometric on \(\oa{A}_{\txtloc}\) and extends to \(\oa{A}\) by continuity. The action on the generators is \begin{equation}\label{eq:gauge-action}
\fun{\mathfrak{g}_{\vartheta}}{\sigma_{p}^{\pm}}
=
\fnexp{\pm \imunit \vartheta} \sigma_{p}^{\pm},
\quad
\fun{\mathfrak{g}_{\vartheta}}{\sigma_{p}^{z}}
=
\sigma_{p}^{z},
\end{equation} since \(u_{\vartheta} \sigma^{+} \faadj{u_{\vartheta}}
=
\fnexp{\imunit \vartheta} \sigma^{+}\) by direct matrix multiplication. The family \(\rbk{\mathfrak{g}_{\vartheta}}_{\vartheta
\in
\fldreal}\) is a one-parameter group with \(\mathfrak{g}_{\vartheta + 2 \pi}
=
\mathfrak{g}_{\vartheta}\). For local \(A\), norm continuity of \(\vartheta
\mapsto
\fun{\mathfrak{g}_{\vartheta}}{A}\) follows from the finite tensor-product unitary \(u_{\Lambda,\vartheta}\) defined by \eqref{eq:local-gauge-unitary}. The density of \(\oa{A}_{\txtloc}\) and the identity \(\norm{\mathfrak{g}_{\vartheta}}
=
1\) extend it to every \(A
\in
\oa{A}\) by an \(\frac{\epsilon}{3}\)-argument.

The gauge action \eqref{eq:gauge-action} leaves the finite-volume Hamiltonian invariant: \begin{equation}\label{eq:gauge-invariance}
\fun{\mathfrak{g}_{\vartheta}}{\physham_{\Omega}}
=
\physham_{\Omega}
\quad \rbk{\vartheta
\in
\fldreal},
\end{equation} since \(\sigma_{p}^{+} \sigma_{p'}^{-}\) and \(\sigma_{p}^{z}\) are gauge invariant, and in the degenerate case \(\physham_{\Omega}\) is also invariant under all permutations of the modes \(1, \dotsc, \Omega\).

For the degenerate model the total spin gives the finite-volume spectrum. Proposition \ref{prop:degenerate-total-spin} is \cite[Eqs. (2)--(5)]{ThirringWehrl001} in the present convention.

\begin{prop}[degenerate total-spin diagonalization]\label{prop:degenerate-total-spin}
Assume $\varepsilon_{p}
=
\varepsilon$ for all $p$.
Then $\physham_{\Omega}$ is the function of the total spin operators:
\begin{equation}\label{eq:hamiltonian-total-spin}
\physham_{\Omega}
=
- 2 \varepsilon S_{\Omega}^{z} - \frac{2}{\sminvtemperature_{c} \Omega} \rbk{\physvec{S}_{\Omega}^{2} - \rbk{S_{\Omega}^{z}}^{2} + S_{\Omega}^{z}}
\end{equation}
and $\physham_{\Omega}$ is diagonal on the joint eigenspaces of $\physvec{S}_{\Omega}^{2}$ and $S_{\Omega}^{z}$.
On the joint eigenspace with quantum numbers $\fun{S}{S + 1}$ and $S_{z}$,
described in Appendix \ref{sec:spin},
the eigenvalue is
\begin{equation}\label{eq:eigenvalues}
\fun{E_{\Omega}}{S, S_{z}}
=
- 2 \varepsilon S_{z} - \frac{2}{\sminvtemperature_{c} \Omega} \rbk{\fun{S}{S + 1} - \fun{S_{z}}{S_{z} - 1}},
\end{equation}
and the two characteristic frequencies of the model are the differences
\begin{equation}\label{eq:frequencies}
\begin{aligned}
\fun{E_{\Omega}}{S, S_{z}} - \fun{E_{\Omega}}{S - 1, S_{z}}
&=
- \frac{4}{\sminvtemperature_{c} \Omega} S,
\\ 
\fun{E_{\Omega}}{S, S_{z}} - \fun{E_{\Omega}}{S, S_{z} - 1}
&=
- 2 \varepsilon + \frac{4}{\sminvtemperature_{c} \Omega} \rbk{S_{z} - 1}
\end{aligned}
\end{equation}
respectively.
\end{prop}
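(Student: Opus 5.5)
The plan is a direct algebraic computation in $\oa{A}_{\Omega}$, followed by substitution of eigenvalues. No analytic input is needed beyond the finite-dimensional decomposition of $\bigotimes_{p=1}^{\Omega}\fldcmp^{2}$ into joint eigenspaces of $\physvec{S}_{\Omega}^{2}$ and $S_{\Omega}^{z}$, which is recalled in Appendix \ref{sec:spin}.

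First I rewrite the one-body term. In the degenerate case $\sum_{p=1}^{\Omega}\varepsilon\sigma_{p}^{z}=2\varepsilon S_{\Omega}^{z}$ by the definition \eqref{eq:collective}.

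Second I rewrite the pairing term. Since $\sigma^{\pm}=\frac12(\sigma^{x}\pm\imunit\sigma^{y})$, summing over sites gives $S_{\Omega}^{\pm}=S_{\Omega}^{x}\pm\imunit S_{\Omega}^{y}$. Expanding the product,
\[
S_{\Omega}^{+}S_{\Omega}^{-}=(S_{\Omega}^{x})^{2}+(S_{\Omega}^{y})^{2}-\imunit\commutator{S_{\Omega}^{x}}{S_{\Omega}^{y}} .
\]
The commutator is computed from the one-site relation $\commutator{\sigma^{x}}{\sigma^{y}}=2\imunit\sigma^{z}$. Operators at distinct sites commute, so the cross terms drop out and $\commutator{S_{\Omega}^{x}}{S_{\Omega}^{y}}=\imunit S_{\Omega}^{z}$. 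Hence
\[
S_{\Omega}^{+}S_{\Omega}^{-}=\physvec{S}_{\Omega}^{2}-(S_{\Omega}^{z})^{2}+S_{\Omega}^{z},
\]
and inserting this into \eqref{eq:bcs-hamiltonian} yields \eqref{eq:hamiltonian-total-spin}.

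For diagonality, I note that $\physvec{S}_{\Omega}^{2}$ and $S_{\Omega}^{z}$ are commuting self-adjoint operators on the finite-dimensional space $\bigotimes_{p=1}^{\Omega}\fldcmp^{2}$. The total spin commutes with each component, and self-adjointness follows from that of the Pauli matrices. So the space is the orthogonal sum of their joint eigenspaces, labelled by $\fun{S}{S+1}$ and $S_{z}$ as in Appendix \ref{sec:spin}. By \eqref{eq:hamiltonian-total-spin}, $\physham_{\Omega}$ is a polynomial in these two operators and therefore acts on each joint eigenspace as the scalar
\[
-2\varepsilon S_{z}-\frac{2}{\sminvtemperature_{c}\Omega}\bigl(\fun{S}{S+1}-S_{z}^{2}+S_{z}\bigr).
\]
Since $S_{z}^{2}-S_{z}=\fun{S_{z}}{S_{z}-1}$, this scalar is exactly \eqref{eq:eigenvalues}.

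The frequencies \eqref{eq:frequencies} follow by subtraction.
\begin{itemize}
\item Changing $S$ to $S-1$ at fixed $S_{z}$ gives $\fun{S}{S+1}-(S-1)S=2S$, so the difference is $-\frac{4S}{\sminvtemperature_{c}\Omega}$.
\item Changing $S_{z}$ to $S_{z}-1$ at fixed $S$ gives $-2\varepsilon$ from the linear term. The quadratic term contributes $\frac{2}{\sminvtemperature_{c}\Omega}\bigl(\fun{S_{z}}{S_{z}-1}-(S_{z}-1)(S_{z}-2)\bigr)=\frac{4}{\sminvtemperature_{c}\Omega}(S_{z}-1)$.
\end{itemize}

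The only point needing care is the sign in the commutator term, which determines whether $+S_{\Omega}^{z}$ or $-S_{\Omega}^{z}$ appears. I would check it against the single-site identity $\sigma^{+}\sigma^{-}=\frac12(1+\sigma^{z})$: the $+$ sign is consistent with this. The relation to \cite[Eqs. (2)--(5)]{ThirringWehrl001} is through the constant and the $S_{\Omega}^{-}S_{\Omega}^{+}$ convention discussed after \eqref{eq:bcs-hamiltonian}.
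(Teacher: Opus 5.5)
Your proposal is correct and follows essentially the same route as the paper. You write $S^{+}_{\Omega}S^{-}_{\Omega}=(S^{x}_{\Omega})^{2}+(S^{y}_{\Omega})^{2}-\mathrm{i}[S^{x}_{\Omega},S^{y}_{\Omega}]=\physvec{S}_{\Omega}^{2}-(S^{z}_{\Omega})^{2}+S^{z}_{\Omega}$, substitute into \eqref{eq:bcs-hamiltonian}, read off the eigenvalues on the joint eigenspaces of the commuting operators $\physvec{S}_{\Omega}^{2}, S^{z}_{\Omega}$, and obtain \eqref{eq:frequencies} by subtraction; your sign check against $\sigma^{+}\sigma^{-}=\frac{1}{2}(1+\sigma^{z})$ and both explicit differences are correct.
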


\begin{proof}
The spin operators $S_{\Omega}^{\gamma}$ satisfy the angular momentum relations $\commutator{S_{\Omega}^{x}}{S_{\Omega}^{y}}
=
\imunit S_{\Omega}^{z}$ and cyclic.
The raising and lowering product is
$$S_{\Omega}^{+} S_{\Omega}^{-}
=
\rbk{S_{\Omega}^{x}}^{2} + \rbk{S_{\Omega}^{y}}^{2} + \imunit \commutator{S_{\Omega}^{y}}{S_{\Omega}^{x}}
=
\physvec{S}_{\Omega}^{2} - \rbk{S_{\Omega}^{z}}^{2} + S_{\Omega}^{z}.$$
Substituting this identity into \eqref{eq:bcs-hamiltonian} gives \eqref{eq:hamiltonian-total-spin}.
The right side of \eqref{eq:hamiltonian-total-spin} is a polynomial in the commuting self-adjoint operators $\physvec{S}_{\Omega}^{2}$ and $S_{\Omega}^{z}$.
It follows that it is diagonal on their joint eigenspaces.
Substituting the joint eigenvalues $\fun{S}{S + 1}$ and $S_{z}$ gives \eqref{eq:eigenvalues}.
Finally,
\eqref{eq:frequencies} is obtained by subtracting the two corresponding values in \eqref{eq:eigenvalues}.
\end{proof}

For \(S, S_{z}\) of order \(\Omega\) both frequencies are of order one. The first is the gap frequency \(- \Delta\) and the second is twice the effective chemical potential of \cite[Eqs. (4)--(5)]{ThirringWehrl001}. The interplay between these two frequencies governs the entire representation theory of the dynamics: the Bogoliubov Hamiltonian reproduces the first frequency but freezes the second, and the second vanishes exactly on the spectral region singled out by the gap equation. This mechanism is proved in Sections \ref{sec:bogoliubov} and \ref{sec:time}.

\section{The Law of Large Numbers for Intensive Observables}\label{sec:intensive}

Intensive observables are means of one-site observables over the first \(\Omega\) modes. In product representations they obey a law of large numbers in the strong operator topology. This is the operator-algebraic form of \cite[Lemma 1]{ThirringWehrl001} and is used throughout, both in pure product representations and in the thermal product representations. The elementary strong-convergence lemmas used in this section and in later convergence arguments are collected in Appendix \ref{sec:functional-analytic-appendix}.

\subsection{Strong convergence in product representations}\label{strong-convergence-in-product-representations}

The product-state law of large numbers proved below gives strong limits for intensive observables and for their continuous functional calculus.

\begin{thm}[law of large numbers]\label{thm:lln}
Let $\oastate[\psi]
=
\bigotimes_{p \in \semigrposint} \oastate[\psi_{p}]$ be a product state of $\oa{A}$,
either pure or with invertible density matrices,
and let $\rbk{\sphilb{H}_{\psi}, \oarepn_{\psi}, \oagnsvector[\Psi]}$ be its GNS triple.
Let $X_{p}
\in
\oa{A}_{\setone{p}}$ be one-site elements with $C
=
\sup_{p \in \semigrposint} \norm{X_{p}} < \infty$ and means $x_{p}
=
\fun{\oastate[\psi_{p}]}{X_{p}}$,
and assume $\bar{x}_{\Omega}
=
\frac{1}{\Omega} \sum_{p
=
1}^{\Omega} x_{p}
\to
\bar{x}$ as $\Omega
\to
\infty$.
The averaged observables have the following strong limit on $\sphilb{H}_{\psi}$:
$$\slim_{\Omega \to \infty}
\frac{1}{\Omega} \sum_{p
=
1}^{\Omega} \fun{\oarepn_{\psi}}{X_{p}}
=
\bar{x}.$$
\end{thm}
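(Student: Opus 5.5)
The plan is to prove strong convergence first on the cyclic vector and on a dense set of vectors, and then extend to all of \(\sphilb{H}_{\psi}\) by uniform boundedness. The averages \(A_{\Omega} = \frac{1}{\Omega}\sum_{p=1}^{\Omega} \fun{\oarepn_{\psi}}{X_{p}}\) satisfy \(\norm{A_{\Omega}} \leq C\) uniformly. Hence, by the elementary lemma of Appendix \ref{sec:functional-analytic-appendix}, it suffices to show \(A_{\Omega}\Phi \to \bar{x}\Phi\) for \(\Phi\) in a total set. We take the total set to be \(\set{\fun{\oarepn_{\psi}}{B}\oagnsvector[\Psi]}{B \in \oa{A}_{\txtloc}}\), which is dense by cyclicity together with the norm density of \(\oa{A}_{\txtloc}\) in \(\oa{A}\).

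The key step is the variance computation on the cyclic vector. Write \(Y_{p} = X_{p} - x_{p}\), so \(\fun{\oastate[\psi]}{Y_{p}} = 0\) and \(\norm{Y_{p}} \leq 2C\). By the product property, for \(p \neq p'\),
\[
\fun{\oastate[\psi]}{\faadj{Y_{p}} Y_{p'}} = \overline{\fun{\oastate[\psi_{p}]}{Y_{p}}}\,\fun{\oastate[\psi_{p'}]}{Y_{p'}} = 0 .
\]
Therefore
\[
\norm{\rbk{A_{\Omega} - \bar{x}_{\Omega}}\oagnsvector[\Psi]}^{2} = \frac{1}{\Omega^{2}}\sum_{p=1}^{\Omega} \fun{\oastate[\psi]}{\faadj{Y_{p}}Y_{p}} \leq \frac{4C^{2}}{\Omega},
\]
and combined with \(\bar{x}_{\Omega} \to \bar{x}\) this gives \(A_{\Omega}\oagnsvector[\Psi] \to \bar{x}\oagnsvector[\Psi]\).

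For a vector \(\fun{\oarepn_{\psi}}{B}\oagnsvector[\Psi]\) with \(B \in \oa{A}_{\Lambda}\) and \(\Lambda\) finite, we split the sum into \(p \in \Lambda\) and \(p \notin \Lambda\). The terms with \(p \in \Lambda\) contribute at most \(\abscard{\Lambda}\,2C\norm{B}/\Omega \to 0\) in norm. The terms with \(p \notin \Lambda\) commute with \(B\), so their contribution is
\[
\fun{\oarepn_{\psi}}{B}\,\frac{1}{\Omega}\sum_{p \leq \Omega,\, p \notin \Lambda}\fun{\oarepn_{\psi}}{Y_{p}}\,\oagnsvector[\Psi],
\]
which tends to zero by the variance bound above, since the factor \(\fun{\oarepn_{\psi}}{B}\) is bounded. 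The same splitting applied to the scalar means \(x_{p}\) shows that removing the finitely many \(p \in \Lambda\) does not affect the limit \(\bar{x}\).

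I expect no serious obstacle. The only point needing care is that the argument uses only the product property of the state and the locality of \(B\). The hypothesis \enquote{pure or with invertible density matrices} is therefore not needed for this statement; it presumably matters elsewhere, for instance for the commutant or for the GNS structure. I would note that the strong limit holds for every product state, and perhaps remark that the limit is the scalar \(\bar{x}\), hence lies in the center, consistent with the later sector analysis.
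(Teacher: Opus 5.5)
Your proof is correct and is essentially the paper's proof: reduce to vectors $\pi_{\psi}(B)\Psi$ with local $B$ using the uniform bound $C$, dispose of the finitely many sites in the support of $B$ with an $O(1/\Omega)$ estimate (the paper phrases this as the commutator $[Y_{\Omega},\pi_{\psi}(B)]$), and control the remaining sites by the variance computation on the cyclic vector via factorization of the product state. Your observation that the pure/faithful hypothesis plays no role is also accurate, since the paper's proof never uses it either.
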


\begin{proof}
Write $Y_{\Omega}
=
\frac{1}{\Omega} \sum_{p
\leq
\Omega} \fun{\oarepn_{\psi}}{X_{p}}$,
uniformly bounded by $C$.
Since vectors of the form $\fun{\oarepn_{\psi}}{B} \oagnsvector[\Psi]$ with $B
\in
\oa{A}_{\txtloc}$ are dense and the family $Y_{\Omega} - \bar{x}$ is uniformly bounded,
it suffices to prove $\lim_{\Omega \to \infty}
\norm{\rbk{Y_{\Omega} - \bar{x}} \fun{\oarepn_{\psi}}{B} \oagnsvector[\Psi]}
=
0$ for local $B$,
say $B
\in
\oa{A}_{\Lambda}$.
Split
$$\rbk{Y_{\Omega} - \bar{x}} \fun{\oarepn_{\psi}}{B} \oagnsvector[\Psi]
=
\commutator{Y_{\Omega}}{\fun{\oarepn_{\psi}}{B}} \oagnsvector[\Psi] + \fun{\oarepn_{\psi}}{B} \rbk{Y_{\Omega} - \bar{x}_{\Omega}} \oagnsvector[\Psi] + \rbk{\bar{x}_{\Omega} - \bar{x}} \fun{\oarepn_{\psi}}{B} \oagnsvector[\Psi].$$
The commutator involves only the sites of $\Lambda$.
It follows that $$\norm{\commutator{Y_{\Omega}}{\fun{\oarepn_{\psi}}{B}}}
\leq
\frac{1}{\Omega} \sum_{p
\in
\Lambda} 2 \norm{X_{p}} \norm{B}
\leq
\frac{2 C \abscard{\Lambda} \norm{B}}{\Omega}
\to
0
\quad
\rbk{\Omega \to \infty},$$
where the last term tends to $0$ by hypothesis.
For the middle term,
with $\tilde{X}_{p}
=
X_{p} - x_{p}$,
$$\norm{\rbk{Y_{\Omega} - \bar{x}_{\Omega}} \oagnsvector[\Psi]}^{2}
=
\frac{1}{\Omega^{2}} \sum_{p, q
\leq
\Omega} \fun{\oastate[\psi]}{\faadj{\tilde{X}_{p}} \tilde{X}_{q}}.$$
For $p
\neq
q$,
the two centered operators are supported on the disjoint one-site algebras $\oa{A}_{\setone{p}}$ and $\oa{A}_{\setone{q}}$.
The defining factorization property of $\oastate[\psi]
=
\bigotimes_{r
\in
\semigrposint} \oastate[\psi_{r}]$ therefore gives
$$\begin{aligned}
\fun{\oastate[\psi]}{\faadj{\tilde{X}_{p}} \tilde{X}_{q}}
=
\fun{\oastate[\psi_{p}]}{\faadj{\tilde{X}_{p}}}
\fun{\oastate[\psi_{q}]}{\tilde{X}_{q}}
=
\rbk{\fun{\oastate[\psi_{p}]}{\faadj{X_{p}}} - \cmpconj{x_{p}}}
\rbk{\fun{\oastate[\psi_{q}]}{X_{q}} - x_{q}}
=
0.
\end{aligned}$$
For $p
=
q$,
positivity of the state yields the variance estimate
$$\begin{aligned}
\fun{\oastate[\psi]}{\faadj{\tilde{X}_{p}} \tilde{X}_{p}}
=
\fun{\oastate[\psi_{p}]}{\faadj{X_{p}} X_{p}} - \abs{x_{p}}^{2}
\leq
\fun{\oastate[\psi_{p}]}{\faadj{X_{p}} X_{p}}
\leq
\norm{X_{p}}^{2}
\leq
C^{2}.
\end{aligned}$$
Thus only the $\Omega$ diagonal terms remain,
and the squared norm is at most $\frac{C^{2}}{\Omega}
\to
0$ as $\Omega \to \infty$.
\end{proof}

For a real unit vector \(\physvec{u}
\in
\mansphere^{2}\) set \[P_{\physvec{u}}
=
\frac{1}{2} \rbk{1 + \physvec{\sigma} \cdot \physvec{u}},
\quad
P_{- \physvec{u}}
=
1 - P_{\physvec{u}}.\] Choose unit vectors \(\xi_{\physvec{u}}^{\pm}
\in
\Ran P_{\pm \physvec{u}}\). For every \(\physvec{u}\) that occurs, also fix a right-handed orthonormal frame \(\rbk{\physvec{e}_{1}, \physvec{e}_{2}, \physvec{u}}\) and set \[\physvec{u}^{\pm}
=
\frac{1}{2} \rbk{\physvec{e}_{1} \pm \imunit \physvec{e}_{2}},
\quad
f^{\pm}
=
\physvec{\sigma} \cdot \physvec{u}^{\pm}.\] A spin configuration is a sequence \(\physvec{\omega}
=
\seq{\physvec{u}_{p}}{p
\in
\semigrposint}\). It determines the one-site vectors \[\xi_{\physvec{\omega},p}^{\pm}
=
\xi_{\physvec{u}_{p}}^{\pm}.\]

The spin configuration specifies a product-state background. The Bloch vector \(\physvec{u}_{p}\) gives its one-site expectations: the longitudinal component fixes pair occupation, and the transverse part fixes the pair amplitude.

\begin{defn}[mean profile and mean polarization]\label{def:mean-profile}
The spin configuration $\physvec{\omega}$ has a mean profile if the empirical measures
$\lambda_{\physvec{\omega},\Omega}
=
\frac{1}{\Omega} \sum_{p
=
1}^{\Omega} \delta_{\physvec{u}_{p}}$
converge weakly to a probability measure $\lambda_{\physvec{\omega}}$ on $\mansphere^{2}$.
The mean polarization of such a configuration is
\begin{equation}\label{eq:mean-polarization}
\eta_{\physvec{\omega}} \bar{\physvec{u}}_{\physvec{\omega}}
=
\int_{\mansphere^{2}}
\physvec{u}
\opdmsr{\fun{\lambda_{\physvec{\omega}}}{\physvec{u}}}
=
\lim_{\Omega \to \infty}
\frac{1}{\Omega}
\sum_{p
=
1}^{\Omega} \physvec{u}_{p},
\quad 0
\leq
\eta_{\physvec{\omega}}
\leq
1,
\end{equation}
with $\bar{\physvec{u}}_{\physvec{\omega}}$ a unit vector when $\eta_{\physvec{\omega}} > 0$.
\end{defn}

The measure \(\lambda_{\physvec{\omega}}\) records the asymptotic distribution of Bloch directions, with the mode ordering discarded as appropriate for unweighted mean-field averages. Its first moment gives the pair-density imbalance through its longitudinal component and the Cooper-pair amplitude through its transverse component. The full profile retains higher distributional information, but it does not determine the product sector: configurations with the same profile may have nonsummable tail differences and hence disjoint representations by Lemma \ref{lem:geometric-equivalence} and Proposition \ref{prop:disjointness}.

For every continuous \(g\) on \(\mansphere^{2}\), Definition \ref{def:mean-profile} gives \[\frac{1}{\Omega} \sum_{p
\leq
\Omega} \fun{g}{\physvec{u}_{p}}
\to
\int_{\mansphere^{2}}
\fun{g}{\physvec{u}}
\opdmsr{\fun{\lambda_{\physvec{\omega}}}{\physvec{u}}}
\quad
\rbk{\Omega
\to
\infty}.\] Appendix \ref{sec:bloch} proves the one-site identities associated with the choices above.

\begin{defn}[product representation of a spin configuration]\label{def:spin-product-representation}
The reference sequence of $\physvec{\omega}$ is
$$\xi_{\physvec{\omega}}
=
\seq{\xi_{\physvec{\omega},p}^{+}}{p
\in
\semigrposint}.$$
The product representation associated with $\physvec{\omega}$ is the representation
$\oarepn_{\physvec{\omega}}$ of $\oa{A}$ on the incomplete tensor product space
$\sphilb{H}_{\physvec{\omega}}$ with reference sequence $\xi_{\physvec{\omega}}$.
The corresponding product state $\oastate[\psi_{\physvec{\omega}}]$ is defined on local elementary tensors by
\begin{equation}\label{eq:spin-product-state}
\fun{\oastate[\psi_{\physvec{\omega}}]}{\bigotimes_{p
\in
\Lambda} a_{p}}
=
\prod_{p
\in
\Lambda}
\bkt{\xi_{\physvec{\omega},p}^{+}}{a_{p} \xi_{\physvec{\omega},p}^{+}}
\quad
\rbk{\Lambda
\subset
\semigrposint \text{ finite},
\ a_{p}
\in
\spmat{2}{\fldcmp}}.
\end{equation}
\end{defn}

This representation describes fluctuations above \(\physvec{\omega}\): local observables generate finite flips and the Hilbert-space completion their superpositions. Thus \(\physvec{\omega}\) fixes the boundary condition at infinity and hence the GNS sector.

The detailed construction is recalled in Definitions \ref{def:itps} and \ref{def:product-rep}.

\begin{cor}[mean spin in a product representation]\label{cor:mean-spin}
Let $\physvec{\omega}$ be a spin configuration with mean profile
and mean polarization $\eta_{\physvec{\omega}} \bar{\physvec{u}}_{\physvec{\omega}}$.
For each $\gamma$ it holds that
$\fun{\oarepn_{\physvec{\omega}}}{M_{\Omega}^{\gamma}}
\to
\eta_{\physvec{\omega}} \bar{u}_{\physvec{\omega}}^{\gamma}$ strongly as $\Omega \to \infty$,
and for every $\physvec{b}
\in
\fldreal^{3}$ and $t
\in
\fldreal$,
$\fun{\oarepn_{\physvec{\omega}}}{\fnexp{\imunit t \physvec{M}_{\Omega} \cdot \physvec{b}}}
\to
\fnexp{\imunit t \eta_{\physvec{\omega}} \bar{\physvec{u}}_{\physvec{\omega}} \cdot \physvec{b}}$ strongly as $\Omega \to \infty$,
uniformly on $t$-compacts.
If in addition $\lim_{\Omega \to \infty}
\frac{1}{\Omega} \sum_{p
\leq
\Omega} \varepsilon_{p} \physvec{u}_{p}
=
\physvec{k}_{\physvec{\omega}}$,
then the normalized Hamiltonians converge strongly:
$$\slim_{\Omega \to \infty}
\fun{\oarepn_{\physvec{\omega}}}{\frac{\physham_{\Omega}}{\Omega}}
=
-k_{\physvec{\omega}}^{z}
-\frac{\eta_{\physvec{\omega}}^{2}}{2 \sminvtemperature_{c}}
\rbk{\rbk{\bar{u}_{\physvec{\omega}}^{x}}^{2} + \rbk{\bar{u}_{\physvec{\omega}}^{y}}^{2}}.$$
\end{cor}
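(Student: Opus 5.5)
The plan is to reduce everything to Theorem \ref{thm:lln}, applied in the pure product state \(\oastate[\psi_{\physvec{\omega}}]\), and then to pass from strong convergence of bounded operators to functions of them.

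\emph{Mean spin.} First identify \(\oarepn_{\physvec{\omega}}\) with the GNS representation of \(\oastate[\psi_{\physvec{\omega}}]\), the cyclic vector being the reference product vector. Then take \(X_{p} = \sigma^{\gamma}_{p}\), so that \(C = 1\). The one-site Bloch identity \(\bkt{\xi^{+}_{\physvec{u}}}{\physvec{\sigma}\,\xi^{+}_{\physvec{u}}} = \physvec{u}\) from Appendix \ref{sec:bloch} gives \(x_{p} = u^{\gamma}_{p}\). By Definition \ref{def:mean-profile}, \(\bar{x}_{\Omega} \to \eta_{\physvec{\omega}} \bar{u}^{\gamma}_{\physvec{\omega}}\), since the coordinate functions are continuous on \(\mansphere^{2}\). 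Theorem \ref{thm:lln} then yields the first claim.

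\emph{Exponentials.} The operators \(T_{\Omega} = \fun{\oarepn_{\physvec{\omega}}}{\physvec{M}_{\Omega}\cdot\physvec{b}}\) are self-adjoint, satisfy \(\norm{T_{\Omega}} \le \abs{\physvec{b}}\), and converge strongly to the scalar \(c = \eta_{\physvec{\omega}}\bar{\physvec{u}}_{\physvec{\omega}}\cdot\physvec{b}\). Products of uniformly bounded strongly convergent sequences converge strongly, so every polynomial in \(T_{\Omega}\) converges strongly to the same polynomial in \(c\). Truncate the exponential series at order \(n\). The tail is bounded in norm by \(\sum_{k>n} (\abs{t}\abs{\physvec{b}})^{k}/k!\), uniformly in \(\Omega\) and in \(t\) on compacts. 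The truncated part converges strongly uniformly in \(t\) on compacts, because it is a finite sum of terms \(t^{k} T_{\Omega}^{k}\) with bounded coefficients. An \(\epsilon/3\) argument gives the exponential claim, and the fact that \(\oarepn_{\physvec{\omega}}\) commutes with functional calculus justifies moving the exponential inside. Alternatively, I could invoke the strong-convergence lemma of Appendix \ref{sec:functional-analytic-appendix} directly.

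\emph{Hamiltonian.} Write
\[
\physham_{\Omega}/\Omega = -\frac{1}{\Omega}\sum_{p\le\Omega} \varepsilon_{p}\sigma^{z}_{p} - \frac{2}{\sminvtemperature_{c}} M^{+}_{\Omega} M^{-}_{\Omega},
\qquad M^{\pm}_{\Omega} = \tfrac12\rbk{M^{x}_{\Omega} \pm \imunit M^{y}_{\Omega}}.
\]
For the first term, apply Theorem \ref{thm:lln} with \(X_{p} = \varepsilon_{p}\sigma^{z}_{p}\). Here \(C = E\), \(x_{p} = \varepsilon_{p} u^{z}_{p}\), and \(\bar{x}_{\Omega} \to k^{z}_{\physvec{\omega}}\) by hypothesis, so the first term tends strongly to \(-k^{z}_{\physvec{\omega}}\). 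For the second term, \(M^{\pm}_{\Omega}\) converge strongly to the scalars \(\frac{\eta_{\physvec{\omega}}}{2}\rbk{\bar{u}^{x}_{\physvec{\omega}} \pm \imunit \bar{u}^{y}_{\physvec{\omega}}}\) and are uniformly bounded, so their product converges strongly to
\[
\frac{\eta_{\physvec{\omega}}^{2}}{4}\rbk{\rbk{\bar{u}^{x}_{\physvec{\omega}}}^{2} + \rbk{\bar{u}^{y}_{\physvec{\omega}}}^{2}}.
\]
Multiplying by \(-2/\sminvtemperature_{c}\) gives the stated limit. The same bookkeeping also shows that the \(1/\Omega\) commutator term \(\frac{2}{\Omega}M^{z}_{\Omega}\), which arises if one symmetrizes the product, vanishes in norm.

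The only point needing care is the first step: confirming that \(\oarepn_{\physvec{\omega}}\) on the incomplete tensor product is the GNS representation of the pure product state, so that Theorem \ref{thm:lln} applies in the pure case, and that the Bloch expectation identity holds. Both are supplied by Appendices \ref{sec:itp} and \ref{sec:bloch}. The rest is routine, since products of uniformly bounded strongly convergent sequences converge strongly.
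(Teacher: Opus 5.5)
Your proposal is correct and follows the paper's proof step for step: Theorem \ref{thm:lln} with $X_{p}=\sigma^{\gamma}_{p}$ and with $X_{p}=\varepsilon_{p}\sigma^{z}_{p}$, followed by Lemma \ref{lem:strong-products} for $M^{+}_{\Omega}M^{-}_{\Omega}$. The only difference is the exponential step, where the paper cites the Duhamel-based Lemma \ref{lem:strong-exponentials}; your power-series truncation, which works because $\norm{\physvec{M}_{\Omega}\cdot\physvec{b}}\leq\abs{\physvec{b}}$ uniformly in $\Omega$, is an equally valid route.
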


\begin{proof}
The one-site state at slot $p$ is pure with Bloch vector $\physvec{u}_{p}$.
It follows that $\fun{\oastate[\psi_{p}]}{\sigma_{p}^{\gamma}}
=
u_{p}^{\gamma}$ and Theorem \ref{thm:lln} applies to $X_{p}
=
\sigma_{p}^{\gamma}$ with $\bar{x}
=
\eta_{\physvec{\omega}} \bar{u}_{\physvec{\omega}}^{\gamma}$.
This proves the first limit.
Lemma \ref{lem:strong-exponentials} gives the second.
Its generators $\physvec{M}_{\Omega} \cdot \physvec{b}$ are self-adjoint and uniformly bounded by $\abs{\physvec{b}}$-multiples.
For the Hamiltonian,
$\frac{\physham_{\Omega}}{\Omega}
=
- \frac{1}{\Omega} \sum_{p
\leq
\Omega} \varepsilon_{p} \sigma_{p}^{z}
- \frac{2}{\sminvtemperature_{c}} M_{\Omega}^{+} M_{\Omega}^{-}$
by \eqref{eq:bcs-hamiltonian} and \eqref{eq:collective}.
The first term converges strongly to $- k_{\physvec{\omega}}^{z}$ by Theorem \ref{thm:lln} with $X_{p}
=
\varepsilon_{p} \sigma_{p}^{z}$.
For the interaction term,
$M_{\Omega}^{\pm}
=
\frac{1}{2} \rbk{M^{x}_{\Omega} \pm \imunit M^{y}_{\Omega}}
\to
\frac{\eta_{\physvec{\omega}}}{2} \rbk{\bar{u}_{\physvec{\omega}}^{x} \pm \imunit \bar{u}_{\physvec{\omega}}^{y}}$ strongly as $\Omega \to \infty$ with uniform bounds.
It follows that Lemma \ref{lem:strong-products} gives $M^{+}_{\Omega} M^{-}_{\Omega}
\to
\frac{\eta_{\physvec{\omega}}^{2}}{4} \rbk{\rbk{\bar{u}_{\physvec{\omega}}^{x}}^{2} + \rbk{\bar{u}_{\physvec{\omega}}^{y}}^{2}}$ as $\Omega
\to
\infty$.
\end{proof}

The law of large numbers cannot be improved to convergence in norm. This is the remark following \cite[Lemma 1]{ThirringWehrl001}.

\begin{prop}[no uniform convergence]
For every unit vector $\physvec{u}
\in
\mansphere^{2}$,
every $c
\in
\fldcmp$,
every spin configuration $\physvec{\omega}$,
and every $\Omega$,
it follows that
$$\norm{\fun{\oarepn_{\physvec{\omega}}}{\physvec{M}_{\Omega} \cdot \physvec{u}} - c}
\geq
1.$$
\end{prop}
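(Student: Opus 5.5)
The plan is to show that the spectrum of the represented operator \(X_{\Omega} = \fun{\oarepn_{\physvec{\omega}}}{\physvec{M}_{\Omega} \cdot \physvec{u}}\) contains both \(+1\) and \(-1\). Granting this, the estimate follows at once. \(X_{\Omega}\) is self-adjoint, and every \(\mu \in \opspec{X_{\Omega}}\) gives \(\mu - c \in \opspec{X_{\Omega} - c}\). The operator \(X_{\Omega} - c\) is normal, so its norm equals its spectral radius. Hence
\[
\norm{X_{\Omega} - c} \geq \max\setone{\abs{1 - c}, \abs{-1 - c}} \geq \frac{1}{2}\rbk{\abs{1-c} + \abs{1+c}} \geq 1 .
\]
The last inequality is the triangle inequality \(\abs{1-c}+\abs{1+c} \geq 2\).

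It remains to locate \(\pm 1\) in the spectrum. The element \(\physvec{M}_{\Omega} \cdot \physvec{u} = \frac{1}{\Omega}\sum_{p \leq \Omega} \physvec{\sigma}_{p} \cdot \physvec{u}\) lies in the finite-dimensional algebra \(\oa{A}_{\Omega}\). The one-site operators \(\physvec{\sigma}_{p} \cdot \physvec{u}\) commute and each has eigenvalues \(\pm 1\), with spectral projections \(P_{\pm\physvec{u}}\) at site \(p\). The product projections \(Q_{\pm} = \bigotimes_{p \leq \Omega} P_{\pm \physvec{u}}\) are therefore nonzero elements of \(\oa{A}_{\Omega}\) and satisfy
\[
\rbk{\physvec{M}_{\Omega} \cdot \physvec{u}}\, Q_{\pm} = \pm Q_{\pm}.
\]
The representation \(\oarepn_{\physvec{\omega}}\) of the UHF algebra is injective, since \(\oa{A}\) is simple (Appendix \ref{sec:oa-appendix}); alternatively, it is injective on the full matrix algebra \(\oa{A}_{\Omega}\), which is simple. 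It follows that \(\fun{\oarepn_{\physvec{\omega}}}{Q_{\pm}} \neq 0\). Any unit vector \(\Psi_{\pm}\) in the range of \(\fun{\oarepn_{\physvec{\omega}}}{Q_{\pm}}\) then satisfies \(X_{\Omega}\Psi_{\pm} = \pm \Psi_{\pm}\). So \(\pm 1\) are eigenvalues of \(X_{\Omega}\). More simply, \(\oarepn_{\physvec{\omega}}\) is isometric, so \(\norm{X_{\Omega} - c} = \norm{\physvec{M}_{\Omega}\cdot\physvec{u} - c}\) computed in \(\oa{A}_{\Omega}\), and the bound can be checked there directly.

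The only step requiring care is this nondegeneracy point. The representation must not kill the extremal eigenprojections. This holds whichever product sector \(\physvec{\omega}\) is chosen, because the bound is a norm statement and \(\ast\)-representations of simple \(\oacstar\)-algebras are isometric. The rest is elementary spectral theory. Here \(c\) may be complex; this is handled by the normality of \(X_{\Omega} - c\), or by estimating \(\norm{(X_{\Omega}-c)\Psi_{\pm}} = \abs{\pm 1 - c}\) directly on the eigenvectors, which avoids spectral radius entirely.
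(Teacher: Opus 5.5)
Your proposal is correct and follows the paper's route: you exhibit $\pm 1$ in the spectrum via the all-aligned and all-anti-aligned product eigenvectors (your projections $Q_{\pm}$), then conclude $\norm{\fun{\oarepn_{\physvec{\omega}}}{\physvec{M}_{\Omega}\cdot\physvec{u}}-c}\geq\max\rbk{\abs{1-c},\abs{1+c}}\geq 1$. The only difference is in how nondegeneracy is transferred: the paper uses the factorization of Proposition~\ref{prop:factorization}, which makes $\oarepn_{\physvec{\omega}}$ on $\oa{A}_{\Omega}$ unitarily equivalent to $A\mapsto A\otimes 1$, whereas you use injectivity from simplicity (Lemma~\ref{lem:simple}), and either works equally well.
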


\begin{proof}
By the factorization of Proposition \ref{prop:factorization} the representation $\oarepn_{\physvec{\omega}}$
restricted to $\oa{A}_{\Omega}$ is unitarily equivalent to $A
\mapsto
A \otimes 1$.
It follows that it preserves norms and spectra of elements of $\oa{A}_{\Omega}$.
The operator
$\physvec{M}_{\Omega} \cdot \physvec{u}
=
\frac{1}{\Omega} \sum_{p
\leq
\Omega} \physvec{\sigma}_{p} \cdot \physvec{u}$
has the product vectors
$\rbk{\xi_{\physvec{u}}^{+}}^{\otimes \Omega}$ and $\rbk{\xi_{\physvec{u}}^{-}}^{\otimes \Omega}$ as eigenvectors with eigenvalues $\pm 1$.
It follows that its spectrum contains $+1$ and $-1$.
It follows that
$\norm{\physvec{M}_{\Omega} \cdot \physvec{u} - c}
\geq
\max \rbk{\abs{1 - c}, \abs{1 + c}}
\geq
1$
for every $c
\in
\fldcmp$.
\end{proof}

This proposition distinguishes the sectorwise strong convergence of Corollary \ref{cor:mean-spin} from norm convergence. The corresponding later results are Theorem \ref{thm:bogoliubov}(1) for shifted extensive Hamiltonians, Theorem \ref{thm:dynamics} for the Heisenberg dynamics, and Theorem \ref{thm:central-decomposition}(2) for the direct-integral representation.

\section{Extensive Observables}\label{sec:extensive}

Extensive observables grow with \(\Omega\). After centering they converge at best on restricted domains, and the residual objects, the relative number operator and the gauge generator, exist only relative to an equivalence class. The results below recover \cite[Lemmas 2, 3]{ThirringWehrl001} and the remark on von Neumann's theorem in \cite[Section 3]{ThirringWehrl001}. Fix a spin configuration \(\physvec{\omega}
= \seq{\physvec{u}_p}{p \in \semigrposint}\). Definition \ref{def:spin-product-representation} supplies the reference sequence \(\xi_{\physvec{\omega}}\), the product representation \(\oarepn_{\physvec{\omega}}\), and its incomplete tensor product space \(\sphilb{H}_{\physvec{\omega}}\). Proposition \ref{prop:flip-basis} constructs the flip basis \(\set{\xi_{\physvec{\omega}}^{F}}{F \text{ is a finite subset of } \semigrposint}\) from the one-site bases \(\setone{\xi_{\physvec{\omega},p}^{+}, \xi_{\physvec{\omega},p}^{-}}\). Let \(\sphilb{D}_{\physvec{\omega}}\) be the dense subspace of finite linear combinations of these flip vectors.

\subsection{The relative number operator}\label{the-relative-number-operator}

The relative number operator constructed below counts flips from a fixed spin configuration and has a natural diagonal domain and spectrum.

The physical pair number over the first \(\Omega\) modes is \[\sum_{p
=
1}^{\Omega}
\frac{1}{2}
\rbk{1 + \sigma^{z}_{p}}.\] It is extensive and generates finite-volume gauge rotations up to a scalar phase. By contrast, \(N_{\Omega,\physvec{\omega}}\) counts reversals relative to the reference background: it counts pairs for \(\physvec{u}_{p}=\vecbk{0,0,-1}\), pair holes for \(\physvec{u}_{p}=\vecbk{0,0,1}\), and rotated Bogoliubov excitations for a transverse reference.

\begin{prop}[relative number operator]\label{prop:number}
Let $N_{\Omega,\physvec{\omega}}
=
\sum_{p
=
1}^{\Omega} \frac{1}{2} \rbk{1 - \physvec{\sigma}_{p} \cdot \physvec{u}_{p}}
\in
\oa{A}_{\Omega}$.
Define $N_{\physvec{\omega}}^{0}$ on $\sphilb{D}_{\physvec{\omega}}$ by $N_{\physvec{\omega}}^{0} \xi_{\physvec{\omega}}^{F}
=
\abscard{F} \xi_{\physvec{\omega}}^{F}$.
Then $N_{\physvec{\omega}}^{0}$ is essentially self-adjoint.
Let $N_{\physvec{\omega}}$ be its closure.
The number cutoffs converge on the domain of the class-relative closure:
for any $\Psi
\in
\dom N_{\physvec{\omega}}$ it holds that
$$\fun{\oarepn_{\physvec{\omega}}}{N_{\Omega,\physvec{\omega}}} \Psi
\to
N_{\physvec{\omega}} \Psi
\quad \rbk{\Omega \to \infty},$$
and $\fnexp{\imunit t \fun{\oarepn_{\physvec{\omega}}}{N_{\Omega,\physvec{\omega}}}}
\to
\fnexp{\imunit t N_{\physvec{\omega}}}$ strongly as $\Omega \to \infty$ for every $t
\in
\fldreal$.
\end{prop}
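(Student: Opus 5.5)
The proof rests on the orthonormal flip basis $\set{\xi_{\physvec{\omega}}^{F}}{F \subset \semigrposint \text{ finite}}$ of Proposition \ref{prop:flip-basis}. In this basis every operator in sight is diagonal, so the argument reduces to statements about multiplication operators on $\ell^{2}$ of the countable index set of finite subsets.

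\emph{Essential self-adjointness.} The operator $N_{\physvec{\omega}}^{0}$ is symmetric on $\sphilb{D}_{\physvec{\omega}}$ and maps this domain into itself. Every flip vector $\xi^{F}_{\physvec{\omega}}$ is an eigenvector with real eigenvalue $\abscard{F}$. Hence $\rbk{N_{\physvec{\omega}}^{0} \pm \imunit}\sphilb{D}_{\physvec{\omega}} \supset \sphilb{D}_{\physvec{\omega}}$, because $\xi^{F}_{\physvec{\omega}} = \rbk{N^{0}_{\physvec{\omega}} \pm \imunit}\rbk{\abscard{F}\pm\imunit}^{-1}\xi^{F}_{\physvec{\omega}}$. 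Both ranges are therefore dense, and the basic criterion gives essential self-adjointness. The closure $N_{\physvec{\omega}}$ is the maximal multiplication operator
$$\dom N_{\physvec{\omega}} = \set{\Psi}{\sum_{F} \abscard{F}^{2} \abs{\bkt{\xi^{F}_{\physvec{\omega}}}{\Psi}}^{2} < \infty}.$$

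\emph{Action of the cutoffs.} The one-site projection $\frac{1}{2}\rbk{1 - \physvec{\sigma}_{p}\cdot\physvec{u}_{p}} = P_{-\physvec{u}_{p}}$ maps $\xi^{+}_{\physvec{\omega},p}$ to $0$ and $\xi^{-}_{\physvec{\omega},p}$ to itself. In the product representation, via the factorization of Proposition \ref{prop:factorization} and the definition of the incomplete tensor product, this gives
$$\fun{\oarepn_{\physvec{\omega}}}{N_{\Omega,\physvec{\omega}}}\xi^{F}_{\physvec{\omega}} = \abscard{F \cap \intint{1..\Omega}}\,\xi^{F}_{\physvec{\omega}}.$$
Consequently each cutoff is a bounded diagonal operator with eigenvalues $n_{\Omega}(F) = \abscard{F\cap\intint{1..\Omega}}$. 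These satisfy $0 \le n_{\Omega}(F) \le \abscard{F}$ and $n_{\Omega}(F) \uparrow \abscard{F}$, with equality once $\Omega \ge \max F$.

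\emph{Convergence.} For $\Psi \in \dom N_{\physvec{\omega}}$ with coefficients $c_{F} = \bkt{\xi^{F}_{\physvec{\omega}}}{\Psi}$, we have
$$\norm{\fun{\oarepn_{\physvec{\omega}}}{N_{\Omega,\physvec{\omega}}}\Psi - N_{\physvec{\omega}}\Psi}^{2} = \sum_{F}\rbk{\abscard{F} - n_{\Omega}(F)}^{2}\abs{c_{F}}^{2}.$$
Each summand tends to $0$ and is dominated by $\abscard{F}^{2}\abs{c_{F}}^{2}$, which is summable by the domain description, so dominated convergence gives the limit. The same argument applied to
$$\sum_{F}\abs{\fnexp{\imunit t n_{\Omega}(F)} - \fnexp{\imunit t\abscard{F}}}^{2}\abs{c_{F}}^{2},$$
with dominating function $4\abs{c_{F}}^{2}$, gives the strong convergence of the unitary groups for every $\Psi \in \sphilb{H}_{\physvec{\omega}}$. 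Alternatively, this follows from the strong resolvent convergence implied by the convergence on a core, via the Trotter--Kato type lemma of Appendix \ref{sec:functional-analytic-appendix}.

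\emph{Main obstacle.} The only step that is not routine is justifying the diagonal formula for $\oarepn_{\physvec{\omega}}(N_{\Omega,\physvec{\omega}})$ on the flip basis. This requires that local one-site operators act slotwise on the incomplete tensor product vectors $\xi^{F}_{\physvec{\omega}}$, and that the flip vectors indeed form an orthonormal basis of $\sphilb{H}_{\physvec{\omega}}$. I would take both facts from Proposition \ref{prop:flip-basis} and the product-representation construction in Definition \ref{def:product-rep}. Everything after that is dominated convergence.
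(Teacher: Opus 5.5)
Your proposal is correct and follows the paper's proof essentially step for step: the flip basis of Proposition \ref{prop:flip-basis}, the range criterion for essential self-adjointness, the diagonal action $\fun{\oarepn_{\physvec{\omega}}}{N_{\Omega,\physvec{\omega}}}\xi^{F}_{\physvec{\omega}} = \abscard{F\cap\intint{1..\Omega}}\,\xi^{F}_{\physvec{\omega}}$ (which the paper derives from Lemma \ref{lem:frame}), and dominated convergence; your direct majorant $4\abs{c_{F}}^{2}$ for the unitaries is equivalent to the paper's argument of convergence on basis vectors plus uniform boundedness. Drop your ``alternative'' aside: the appendix contains no Trotter--Kato lemma, and Lemma \ref{lem:strong-exponentials} does not apply because it requires uniformly bounded generators, whereas $\norm{\fun{\oarepn_{\physvec{\omega}}}{N_{\Omega,\physvec{\omega}}}} = \Omega$; your main argument does not need it.
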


\begin{proof}
By Lemma \ref{lem:frame} the operator $\frac{1}{2} \rbk{1 - \physvec{\sigma}_{p} \cdot \physvec{u}_{p}}
=
P_{- \physvec{u}_{p}}$ acts on the slot $p$ as the projection onto the flipped basis vector,
and the flipped-basis action of its representation is
\begin{equation}\label{eq:number-action}
\fun{\oarepn_{\physvec{\omega}}}{N_{\Omega,\physvec{\omega}}} \xi_{\physvec{\omega}}^{F}
=
\abscard{F \cap \intint{1..\Omega}} \xi_{\physvec{\omega}}^{F}.
\end{equation}
The operator $N_{\physvec{\omega}}^{0}$ is symmetric and diagonal in an orthonormal basis with real eigenvalues.
The ranges of $N_{\physvec{\omega}}^{0} \pm \imunit$ contain all $\rbk{\abscard{F} \pm \imunit} \xi_{\physvec{\omega}}^{F}$.
It follows that these ranges are dense.
It follows that $N_{\physvec{\omega}}^{0}$ is essentially self-adjoint by the basic range criterion for essential self-adjointness
\cite[Theorem VIII.3 and its corollary]{ReedSimon001}.
Concretely,
the closure is the multiplication operator by $F
\mapsto
\abscard{F}$ on the coefficient space $\ell^{2}$,
which is self-adjoint on its maximal domain.
For $\Psi
=
\sum_{F \subset \semigrposint, \abscard{F} < \infty} c_{F} \xi_{\physvec{\omega}}^{F}
\in
\dom N_{\physvec{\omega}}$,
that is,
with $\sum_{F \subset \semigrposint, \abscard{F} < \infty} \abscard{F}^{2} \abs{c_{F}}^{2} < \infty$,
the norm difference satisfies
$$\norm{\rbk{\fun{\oarepn_{\physvec{\omega}}}{N_{\Omega,\physvec{\omega}}} - N_{\physvec{\omega}}} \Psi}^{2}
=
\sum_{F \subset \semigrposint, \abscard{F} < \infty} \rbk{\abscard{F \cap \intint{1..\Omega}} - \abscard{F}}^{2} \abs{c_{F}}^{2}
\to
0
\quad \rbk{\Omega \to \infty}.$$
Each term tends to $0$ and is dominated by $\abscard{F}^{2} \abs{c_{F}}^{2}$.
Dominated convergence proves the limit.
For the unitaries it holds that
$$\fnexp{\imunit t \fun{\oarepn_{\physvec{\omega}}}{N_{\Omega,\physvec{\omega}}}} \xi_{\physvec{\omega}}^{F}
=
\fnexp{\imunit t \abscard{F \cap \intint{1..\Omega}}} \xi_{\physvec{\omega}}^{F}
\to
\fnexp{\imunit t \abscard{F}} \xi_{\physvec{\omega}}^{F}
=
\fnexp{\imunit t N_{\physvec{\omega}}} \xi_{\physvec{\omega}}^{F}
\quad
\rbk{\Omega
\to
\infty},$$
and uniform boundedness extends the convergence from the basis to the whole space.
\end{proof}

The operator \(N_{\physvec{\omega}}\) counts flipped spins relative to the configuration \(\physvec{\omega}\). Its vacuum is \(\xi_{\physvec{\omega}}^{\emptyset}\), and eigenvalue \(n\) counts \(n\) flips. It has meaning only in the class of \(\physvec{\omega}\); outside that class the reference vacuum contains infinitely many relative excitations, as proved next. The failure of the finite-volume cutoffs outside this class is expressed by the following proposition.

\begin{prop}[divergence of relative-number cutoffs]\label{prop:number-divergence}
Let $\physvec{\omega}'$ be a spin configuration such that
$\sum_{p
\in
\semigrposint}
\rbk{1 - \physvec{u}_{p} \cdot \physvec{u}'_{p}}
=
\infty$.
The expectations in the reference vector of $\oarepn_{\physvec{\omega}'}$ satisfy
$$\bkt{\xi_{\physvec{\omega}'}^{\emptyset}}
{\fun{\oarepn_{\physvec{\omega}'}}{N_{\Omega,\physvec{\omega}}} \xi_{\physvec{\omega}'}^{\emptyset}}
=
\sum_{p
\leq
\Omega}
\frac{1 - \physvec{u}_{p} \cdot \physvec{u}'_{p}}{2}
\to
\infty
\quad \rbk{\Omega \to \infty},$$
and the corresponding vectors have diverging norms:
$$\lim_{\Omega \to \infty}
\norm{\fun{\oarepn_{\physvec{\omega}'}}{N_{\Omega,\physvec{\omega}}} \xi_{\physvec{\omega}'}^{\emptyset}}
=
\infty.$$
In particular,
the vector sequence
$\seq{\fun{\oarepn_{\physvec{\omega}'}}{N_{\Omega,\physvec{\omega}}} \xi_{\physvec{\omega}'}^{\emptyset}}{\Omega
\in
\semigrposint}$
does not converge in $\sphilb{H}_{\physvec{\omega}'}$.
\end{prop}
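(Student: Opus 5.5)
The plan is to compute everything inside the product structure of $\sphilb{H}_{\physvec{\omega}'}$, using the fact that $N_{\Omega,\physvec{\omega}}$ is a sum of commuting one-site projections $P_{-\physvec{u}_{p}}$ acting in the slots $p \leq \Omega$. By the factorization of Proposition \ref{prop:factorization}, the reference vector $\xi_{\physvec{\omega}'}^{\emptyset}$ is the product vector $\bigotimes_{p} \xi_{\physvec{\omega}',p}^{+}$. The one-site expectation is then given by the Bloch identity from Appendix \ref{sec:bloch}:
\[
\bkt{\xi_{\physvec{u}'_{p}}^{+}}{P_{-\physvec{u}_{p}} \xi_{\physvec{u}'_{p}}^{+}}
= \frac{1}{2}\rbk{1 - \physvec{u}_{p}\cdot\physvec{u}'_{p}},
\]
which holds because $\bkt{\xi^{+}_{\physvec{u}'}}{\physvec{\sigma}\,\xi^{+}_{\physvec{u}'}} = \physvec{u}'$. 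Summing over $p \leq \Omega$ gives the displayed formula for the expectation. The divergence hypothesis makes these partial sums tend to $\infty$, since every term is nonnegative.

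For the norms, write $q_{p} = \frac{1}{2}\rbk{1-\physvec{u}_{p}\cdot\physvec{u}'_{p}} \in [0,1]$. I will expand $\norm{N\xi}^{2} = \bkt{\xi}{N^{2}\xi}$ using the product-state factorization exactly as in the proof of Theorem \ref{thm:lln}. Because each $P_{-\physvec{u}_{p}}$ is a projection, the diagonal terms are $\bkt{\xi}{P_{p}\xi} = q_{p}$, and the off-diagonal terms are $q_{p}q_{q}$. This gives
\[
\norm{\fun{\oarepn_{\physvec{\omega}'}}{N_{\Omega,\physvec{\omega}}}\xi_{\physvec{\omega}'}^{\emptyset}}^{2}
= \rbk{\sum_{p\leq\Omega} q_{p}}^{2} + \sum_{p\leq\Omega} q_{p}\rbk{1-q_{p}}
\geq \rbk{\sum_{p\leq\Omega} q_{p}}^{2}.
\]
The simpler route is Cauchy--Schwarz, $\norm{N\xi} \geq \bkt{\xi}{N\xi}$ for a unit vector $\xi$, which already gives divergence of the norms from the first part. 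I would use Cauchy--Schwarz as the main argument and mention the exact formula only as a remark.

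Non-convergence then follows because a convergent sequence in a Hilbert space is norm-bounded, while these norms diverge. The only point requiring care, and hence the main obstacle, is the bookkeeping that identifies $\xi_{\physvec{\omega}'}^{\emptyset}$ as the reference product vector. One must check that the represented one-site operator acts only in slot $p$ of the incomplete tensor product, so that the expectation factorizes into one-site Bloch expectations. This is supplied by Proposition \ref{prop:factorization} together with Definition \ref{def:spin-product-representation}, and needs no condition on the relative class of $\physvec{\omega}$ and $\physvec{\omega}'$.
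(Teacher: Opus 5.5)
Your proposal is correct and follows the paper's proof. You compute the one-site expectation $\tfrac12(1-\physvec{u}_p\cdot\physvec{u}'_p)$, sum it over $p\le\Omega$, apply the Cauchy--Schwarz bound $\langle\xi,N\xi\rangle\le\|N\xi\|$ for the positive operator and unit reference vector, and get non-convergence from the unbounded norms. The only differences are cosmetic: you read the one-site value off the Bloch expectation $\langle\xi^{+}_{\physvec{u}'},\physvec{\sigma}\,\xi^{+}_{\physvec{u}'}\rangle=\physvec{u}'$ rather than the transition-probability formula of Lemma~\ref{lem:overlap}, and your exact second-moment identity is a correct but unneeded refinement.
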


\begin{proof}
Lemma \ref{lem:overlap} gives the expectation of the projection at slot $p$:
$$\bkt{\xi_{\physvec{\omega}',p}^{+}}{P_{- \physvec{u}_{p}} \xi_{\physvec{\omega}',p}^{+}}
=
1 - \abs{\bkt{\xi_{\physvec{u}'_{p}}^{+}}{\xi_{\physvec{u}_{p}}^{+}}}^{2}
=
\frac{1 - \physvec{u}_{p} \cdot \physvec{u}'_{p}}{2}.$$
Summation over $p
\leq
\Omega$ proves the expectation formula and its divergence.
The operator $\fun{\oarepn_{\physvec{\omega}'}}{N_{\Omega,\physvec{\omega}}}$ is positive,
and $\xi_{\physvec{\omega}'}^{\emptyset}$ is a unit vector.
The Cauchy--Schwarz inequality gives
$$\bkt{\xi_{\physvec{\omega}'}^{\emptyset}}{\fun{\oarepn_{\physvec{\omega}'}}{N_{\Omega,\physvec{\omega}}} \xi_{\physvec{\omega}'}^{\emptyset}}
\leq
\norm{\fun{\oarepn_{\physvec{\omega}'}}{N_{\Omega,\physvec{\omega}}} \xi_{\physvec{\omega}'}^{\emptyset}}.$$
The norm divergence follows from the expectation divergence.
\end{proof}

Proposition \ref{prop:number-divergence} gives the precise content of the statement of \cite[Section 3]{ThirringWehrl001} that the formal number expression \(\sum_{p \in \semigrposint}
\frac{1}{2}
\rbk{1 - \physvec{\sigma}_{p} \cdot \physvec{u}_{p}}\) has no meaning in other equivalence classes.

\subsection{Gauge unitaries and the breakdown of Stone's theorem}\label{gauge-unitaries-and-the-breakdown-of-stones-theorem}

The gauge automorphisms act continuously on the quasi-local algebra but need not admit a strongly continuous unitary implementation in a fixed product representation. The summability criterion below distinguishes implementable rotations from rotations into disjoint product representations and yields the failure of Stone's theorem.

The gauge automorphisms of Definition \ref{def:gauge} rotate every Bloch vector about the \(z\)-axis. Let \(R_{\vartheta}\) denote the rotation of \(\fldreal^{3}\) characterized by \(b^{x} + \imunit b^{y}
=
\fnexp{- \imunit \vartheta} \rbk{a^{x} + \imunit a^{y}}\), \(b^{z}
=
a^{z}\), for \(\physvec{b}
=
R_{\vartheta} \physvec{a}\). Thus \(R_{\vartheta}\) is the rotation through the angle \(- \vartheta\). Linearity of \eqref{eq:gauge-action} in the components gives \begin{equation}\label{eq:gauge-rotation}
\fun{\mathfrak{g}_{\vartheta}}{\physvec{\sigma}_{p} \cdot \physvec{a}}
=
\physvec{\sigma}_{p} \cdot R_{\vartheta} \physvec{a}
\quad \rbk{\physvec{a}
\in
\fldreal^{3}}.
\end{equation} Evaluating \eqref{eq:gauge-rotation} in the pure product state of a configuration \(\physvec{\omega}\) gives \begin{equation}\label{eq:gauge-on-states}
\oastate[\psi_{\physvec{\omega}}] \circ \mathfrak{g}_{\vartheta}
=
\oastate[\psi_{R_{- \vartheta} \physvec{\omega}}],
\quad
R_{- \vartheta} \physvec{\omega}
=
\seq{R_{- \vartheta} \physvec{u}_{p}}{p \in \semigrposint},
\end{equation} verified on the generators: \(\fun{\oastate[\psi_{\physvec{\omega}}]}{\fun{\mathfrak{g}_{\vartheta}}{\physvec{\sigma}_{p} \cdot \physvec{a}}}
=
\physvec{u}_{p} \cdot R_{\vartheta} \physvec{a}
=
\rbk{R_{- \vartheta} \physvec{u}_{p}} \cdot \physvec{a}\), and extended to all of \(\oa{A}\) by multiplicativity of product states over slots and density.

\begin{prop}[implementability of the gauge group]\label{prop:gauge-implementable}
Let $\physvec{\omega}
=
\seq{\physvec{u}_{p}}{p \in \semigrposint}$ be a spin configuration and $s_{p}
=
1 - \rbk{u_{p}^{z}}^{2}$.
\begin{enumerate}
\item If $\sum_{p \in \semigrposint} s_{p} < \infty$,
then for every $\vartheta$ there is a unitary $V_{\vartheta}$ on $\sphilb{H}_{\physvec{\omega}}$
with
$V_{\vartheta} \fun{\oarepn_{\physvec{\omega}}}{A} \faadj{V_{\vartheta}}
=
\fun{\oarepn_{\physvec{\omega}}}{\fun{\mathfrak{g}_{\vartheta}}{A}}$
for all $A
\in
\oa{A}$.
\item If the transverse profile satisfies
\begin{equation}\label{eq:gauge-transverse-divergence}
\sum_{p
\in
\semigrposint} \rbk{1 - \rbk{u_{p}^{z}}^{2}}
=
\infty,
\end{equation}
then for every $\vartheta
\notin
2 \pi \ringratint$ the GNS representations of $\oastate[\psi_{\physvec{\omega}}]$ and of $\oastate[\psi_{\physvec{\omega}}] \circ \mathfrak{g}_{\vartheta}$ are disjoint.
No implementing unitary exists,
and the gauge symmetry is spontaneously broken in the product representation $\oarepn_{\physvec{\omega}}$.
\end{enumerate}
\end{prop}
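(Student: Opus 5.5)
The plan is to reduce both parts to von Neumann's equivalence criterion for pure product states. I take this criterion in the form of Lemma \ref{lem:geometric-equivalence} and Proposition \ref{prop:disjointness}: two pure product states with one-site reference vectors \(\xi_{p}\), \(\xi'_{p}\) give unitarily equivalent GNS representations if \(\sum_{p}\rbk{1-\abs{\bkt{\xi_{p}}{\xi'_{p}}}}<\infty\), and disjoint representations otherwise. Pure states have irreducible GNS representations, so they are either equivalent or disjoint, and this dichotomy is exhaustive. By \eqref{eq:gauge-on-states}, \(\oastate[\psi_{\physvec{\omega}}]\circ\mathfrak{g}_{\vartheta}\) is the pure product state of the rotated configuration \(\physvec{\omega}'=R_{-\vartheta}\physvec{\omega}\). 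Its GNS representation is realized by \(\oarepn_{\physvec{\omega}}\circ\mathfrak{g}_{\vartheta}\) on \(\sphilb{H}_{\physvec{\omega}}\) with the same cyclic vector \(\xi^{\emptyset}_{\physvec{\omega}}\). So the whole question reduces to comparing \(\oarepn_{\physvec{\omega}}\) with \(\oarepn_{\physvec{\omega}'}\).

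The key computation is the one-site overlap. Lemma \ref{lem:overlap} gives \(\abs{\bkt{\xi^{+}_{\physvec{u}}}{\xi^{+}_{\physvec{v}}}}^{2}=\frac{1}{2}\rbk{1+\physvec{u}\cdot\physvec{v}}\). A rotation about the \(z\)-axis through \(\vartheta\) gives
\(\physvec{u}_{p}\cdot R_{-\vartheta}\physvec{u}_{p}=\rbk{u^{z}_{p}}^{2}+s_{p}\cos\vartheta\), hence
\[1-\abs{\bkt{\xi^{+}_{\physvec{u}_{p}}}{\xi^{+}_{\physvec{u}'_{p}}}}^{2}=\tfrac{1}{2}s_{p}\rbk{1-\cos\vartheta}.\]
For \(x\in\closedinterval{0}{1}\) we have \(\frac{1}{2}\rbk{1-x^{2}}\leq 1-x\leq 1-x^{2}\). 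Therefore \(\sum_{p}\rbk{1-\abs{\bkt{\xi_{p}}{\xi'_{p}}}}\) is finite exactly when \(\rbk{1-\cos\vartheta}\sum_{p}s_{p}\) is finite.

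For part (1), summability of \(s_{p}\) makes the two representations unitarily equivalent for every \(\vartheta\). This gives a unitary \(V_{\vartheta}\) on \(\sphilb{H}_{\physvec{\omega}}\) with \(\faadj{V_{\vartheta}}\oarepn_{\physvec{\omega}}(A)V_{\vartheta}=\oarepn_{\physvec{\omega}}(\mathfrak{g}_{\vartheta}(A))\). After replacing \(V_{\vartheta}\) by its adjoint, this is the stated intertwining relation. More concretely, the unitary is the infinite tensor product of the one-site rotations \(u_{\vartheta}\), corrected by phases so that the product converges in the incomplete tensor product; this is where the appendix result on equivalent reference sequences enters.

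For part (2), take \(\vartheta\notin2\pi\ringratint\), so that \(1-\cos\vartheta>0\). Then the sum diverges and the representations are disjoint. If a unitary \(V\) implemented \(\mathfrak{g}_{\vartheta}\), it would intertwine \(\oarepn_{\physvec{\omega}}\) and \(\oarepn_{\physvec{\omega}}\circ\mathfrak{g}_{\vartheta}\), making them equivalent, which is a contradiction. The main obstacle is making sure the cited equivalence criterion is stated with the quantity \(1-\abs{\bkt{\cdot}{\cdot}}\), which is insensitive to the arbitrary phases of the chosen \(\xi^{+}\), rather than with \(1-\bkt{\cdot}{\cdot}\). If only the phase-sensitive form is available, I would first re-choose the phases of \(\xi^{+}_{\physvec{u}'_{p}}\) to make the overlaps positive. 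Re-choosing phases does not change the product state, so this adjustment is harmless.
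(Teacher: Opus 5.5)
Your proposal is correct and follows the paper's proof. You identify $\psi_{\physvec{\omega}}\circ\mathfrak{g}_{\vartheta}$ with the product state of the rotated configuration, compute the one-site overlap $1-s_{p}\sin^{2}(\vartheta/2)$ (you via Lemma \ref{lem:overlap}, the paper via the explicit form of $\faadj{u_{\vartheta}}$), and apply the two parts of Proposition \ref{prop:disjointness}; the phase issue you flag as the main obstacle does not arise, because that proposition is already stated in terms of the phase-free weak equivalence $\approx_{w}$ of Definition \ref{def:c-sequences}.
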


The summable alternative is the sector-preserving side of the distinction later described by Theorem \ref{thm:vn-commutant}. The divergent alternative is the hypothesis of Proposition \ref{prop:stone-fails}. The same disjoint-sector consequence appears in Proposition \ref{prop:fiber-disjoint} and enters Theorem \ref{thm:central-decomposition} for the superconducting gauge orbit.

\begin{proof}
Note that $\physvec{\sigma} \cdot \rbk{R_{- \vartheta} \physvec{u}_{p}}
=
\faadj{u_{\vartheta}} \rbk{\physvec{\sigma} \cdot \physvec{u}_{p}} u_{\vartheta}$
has $\faadj{u_{\vartheta}} \xi_{\physvec{\omega},p}^{+}$ as $+1$-eigenvector.
By \eqref{eq:gauge-rotation} and \eqref{eq:gauge-on-states} the GNS representation of
$\oastate[\psi_{\physvec{\omega}}] \circ \mathfrak{g}_{\vartheta}$ is the product representation of the rotated configuration
$R_{- \vartheta} \physvec{\omega}$,
realized by the one-site vectors $\faadj{u_{\vartheta}} \xi_{\physvec{\omega},p}^{+}$.
The one-site overlap is computed from
$\faadj{u_{\vartheta}}
=
\fun{\cos}{\frac{\vartheta}{2}} - \imunit \fun{\sin}{\frac{\vartheta}{2}} \sigma^{z}$
and $\bkt{\xi_{\physvec{\omega},p}^{+}}{\sigma^{z} \xi_{\physvec{\omega},p}^{+}}
=
u^{z}_{p}$.
The resulting overlap identity is
\begin{equation}\label{eq:gauge-overlap}
\abs{\bkt{\xi_{\physvec{\omega},p}^{+}}{\faadj{u_{\vartheta}} \xi_{\physvec{\omega},p}^{+}}}^{2}
=
\abs{\fun{\cos}{\frac{\vartheta}{2}} - \imunit \fun{\sin}{\frac{\vartheta}{2}} u_{p}^{z}}^{2}
=
1 - s_{p} \sin^{2} \frac{\vartheta}{2}.
\end{equation}

(1)
Suppose that $\sum_{p
\in
\semigrposint} s_{p}
<
\infty$.
Since $1 - \abs{w}
\leq
1 - \abs{w}^{2}$ for $\abs{w}
\leq
1$,
the sequences $\seq{\xi_{\physvec{\omega},p}^{+}}{p
\in
\semigrposint}$ and $\seq{\faadj{u_{\vartheta}} \xi_{\physvec{\omega},p}^{+}}{p
\in
\semigrposint}$ are weakly equivalent.
It follows that Proposition \ref{prop:disjointness}(1) provides a unitary $W$ between the two product representations.
Composing with the identification of the GNS representation of $\oastate[\psi_{\physvec{\omega}}] \circ \mathfrak{g}_{\vartheta}$ gives $V_{\vartheta}$.

(2)
Suppose instead that \eqref{eq:gauge-transverse-divergence} holds and $\vartheta
\notin
2 \pi \ringratint$.
Using \eqref{eq:gauge-overlap} and $1 - \abs{w}
\geq
\frac{1}{2} \rbk{1 - \abs{w}^{2}}$,
\begin{equation}\label{eq:gauge-divergence}
\sum_{p \in \semigrposint}
\rbk{1 - \abs{\bkt{\xi_{\physvec{\omega},p}^{+}}{\faadj{u_{\vartheta}} \xi_{\physvec{\omega},p}^{+}}}}
\geq
\frac{\sin^{2} \frac{\vartheta}{2}}{2} \sum_{p \in \semigrposint} s_{p}
=
\infty,
\end{equation}
It follows that the two sequences are not weakly equivalent and Proposition \ref{prop:disjointness}(2) gives disjointness.
\end{proof}

The infinite tensor product notation uses the following equivalence relation on sequences of vectors. Let \(\seq{h_{p}}{p
\geq
1}\) be a sequence of separable Hilbert spaces with \(2
\leq
\dim h_{p}
\leq
\infty\). In Proposition \ref{prop:stone-fails} the spaces are \(h_{p}
=
\fldcmp^{2}\).

\begin{defn}[adapted sequences and equivalence]\label{def:c-sequences}
A sequence $\xi
=
\seq{\xi_{p}}{p
\geq
1}$ with $\xi_{p}
\in
h_{p}$ is called adapted when $\norm{\xi_{p}}
=
1$ for all but finitely many $p$.
An adapted sequence with unit entries is an adapted sequence $\xi$
such that $\norm{\xi_{p}}
=
1$ for every $p$.
Two adapted sequences $\xi, \eta$ are equivalent,
written $\xi
\approx
\eta$,
when the absolute-defect sum is finite:
$$\sum_{p \in \semigrposint} \abs{1 - \bkt{\xi_{p}}{\eta_{p}}}
<
\infty,$$
and weakly equivalent,
written $\xi
\approx
_{w} \eta$,
when the phase-free defect sum is finite:
$$\sum_{p \in \semigrposint} \abs{1 - \abs{\bkt{\xi_{p}}{\eta_{p}}}}
<
\infty.$$
The equivalence class of $\xi$ is written $\fun{C}{\xi}$
and the weak equivalence class $\fun{C_{w}}{\xi}$.
\end{defn}

Unit entries remove irrelevant one-site normalizations without losing product states; finitely many nonunit entries instead describe vectors inside a fixed sector. Strong equivalence keeps precisely the summable tail changes that define one incomplete tensor product, whereas a nonsummable tail change gives a different boundary condition inaccessible to quasi-local observables. Weak equivalence additionally removes slotwise phases: it classifies product representations up to unitary equivalence, with disjointness and the phase relation given by Proposition \ref{prop:disjointness} and Lemma \ref{lem:phase-unitary}.

The global gauge unitaries require the complete product space, not only one incomplete tensor product sector. The incomplete tensor product space is defined in Definition \ref{def:itps}.

\begin{defn}[complete product space]\label{def:complete}
Let $\mathcal{W}$ be a set of adapted sequences with unit entries
in the sense of Definition \ref{def:c-sequences}.
Require that if $\xi
\in
\mathcal{W}$ and $\eta$ is an adapted sequence with unit entries satisfying $\eta
\approx
_{w} \xi$,
then $\eta
\in
\mathcal{W}$.
The associated complete product space is the orthogonal direct sum
$$\widehat{\sphilb{H}}_{\mathcal{W}}
=
\bigoplus_{C
\in
\mathcal{W} / {\approx}} \sphilb{H}_{C},$$
where $C$ runs over the equivalence classes contained in $\mathcal{W}$
and $\sphilb{H}_{C}$ is the incomplete tensor product space of any reference sequence of the class $C$.
A product vector $\bigotimes_{p
\in
\semigrposint} \eta_{p}$
for
$\eta
\in
\mathcal{W}$
is regarded as an element of the summand of its class.
\end{defn}

Unit sequences suffice to index physical backgrounds; finite nonunit changes already belong to their summands. Weak closure adds the phase-shifted strong classes required by gauge rotations, without introducing sectors based only on products of norms.

On the complete product space the gauge group is implemented for every configuration, but not continuously. This is the operator-theoretic reason why a global number operator cannot exist, and it resolves the apparent conflict with von Neumann's theorem discussed in \cite[Section 3]{ThirringWehrl001}.

\begin{prop}[global gauge unitaries are not weakly continuous]\label{prop:stone-fails}
Let $\widehat{\sphilb{H}}$ be the complete product space over all adapted unit sequences in $\prod_{p \in \semigrposint} \fldcmp^{2}$,
and let $\widehat{U}_{\vartheta}$ be the unitary of $\widehat{\sphilb{H}}$ acting slotwise by $\faadj{u_{\vartheta}}$.
Let $\physvec{\omega}$ be a configuration satisfying \eqref{eq:gauge-transverse-divergence}.
Then for every product vector $\eta$ in the class of $\xi_{\physvec{\omega}}$ and every $\vartheta
\notin
2 \pi \ringratint$,
$$\bkt{\eta}{\widehat{U}_{\vartheta} \eta}
=
0,$$
while $\bkt{\eta}{\widehat{U}_{0} \eta}
=
1$.
The map $\vartheta
\mapsto
\widehat{U}_{\vartheta}$ is not weakly continuous,
and there is no self-adjoint $\widehat{N}$ with $\widehat{U}_{\vartheta}
=
\fnexp{\imunit \vartheta \widehat{N}}$.
\end{prop}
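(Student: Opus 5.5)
The plan is to reduce everything to the one-site overlap identity \eqref{eq:gauge-overlap} and the orthogonality of inequivalent incomplete tensor product summands in $\widehat{\sphilb{H}}$. A product vector $\eta$ in the class $C$ of $\xi_{\physvec{\omega}}$ has the form $\eta=\bigotimes_p \eta_p$ with $\eta\approx\xi_{\physvec{\omega}}$. Its image $\widehat{U}_{\vartheta}\eta=\bigotimes_p \faadj{u_{\vartheta}}\eta_p$ is again a product vector, lying in the summand $\sphilb{H}_{C'}$ of the class $C'$ of $\seq{\faadj{u_{\vartheta}}\eta_p}{p}$. I would first check that $\widehat{U}_{\vartheta}$ is well defined and unitary on $\widehat{\sphilb{H}}$. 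Slotwise unitaries preserve $\approx$, so $\widehat{U}_{\vartheta}$ maps each $\sphilb{H}_C$ isometrically onto $\sphilb{H}_{C'}$. The inverse is given by $\widehat{U}_{-\vartheta}$, and the weak-closure requirement in Definition \ref{def:complete} ensures that $C'$ is one of the indexed classes.

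The key step is to show $C'\neq C$, and in fact $C'\not\approx_w C$. Since $\eta\approx\xi_{\physvec{\omega}}$, we have $\sum_p\abs{1-\bkt{\eta_p}{\xi^+_{\physvec{\omega},p}}}<\infty$. Hence the Bloch vectors of $\eta_p$ differ from $\physvec{u}_p$ by a square-summable amount, and the phases converge as well. I would then argue that $\sum_p\bigl(1-\abs{\bkt{\eta_p}{\faadj{u_{\vartheta}}\eta_p}}\bigr)$ diverges, by comparison with \eqref{eq:gauge-divergence} for $\xi_{\physvec{\omega}}$. The cleanest route is through weak equivalence. The relation $\approx_w$ is an equivalence relation, and weak equivalence classes are preserved by slotwise unitaries. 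So if $\faadj{u_\vartheta}\eta\approx_w\eta$, then chaining
$$\faadj{u_\vartheta}\xi_{\physvec{\omega}}\approx_w\faadj{u_\vartheta}\eta\approx_w\eta\approx_w\xi_{\physvec{\omega}}$$
would contradict \eqref{eq:gauge-divergence}. Transitivity of $\approx_w$ is the standard fact from Appendix \ref{sec:itp}. Since the two sequences are not weakly equivalent, they are not equivalent, so $C'\neq C$. Distinct summands are orthogonal in the direct sum, which gives $\bkt{\eta}{\widehat{U}_{\vartheta}\eta}=0$. Alternatively, one can use von Neumann's inner product formula: the infinite product $\prod_p\bkt{\eta_p}{\faadj{u_\vartheta}\eta_p}$ vanishes because its absolute values have divergent defect sum. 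The identity $\widehat{U}_0=1$ gives $\bkt{\eta}{\widehat{U}_0\eta}=\norm{\eta}^2=1$ for a unit product vector.

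Weak discontinuity follows at once. The function $\vartheta\mapsto\bkt{\eta}{\widehat{U}_\vartheta\eta}$ equals $1$ at $0$ and $0$ on $\fldreal\setminus 2\pi\ringratint$, so it is discontinuous at $0$. Finally, suppose $\widehat{U}_\vartheta=\fnexp{\imunit\vartheta\widehat{N}}$ with $\widehat{N}$ self-adjoint. The spectral theorem makes $\vartheta\mapsto\fnexp{\imunit\vartheta\widehat{N}}$ strongly continuous, since $\bkt{\eta}{\fnexp{\imunit\vartheta\widehat N}\eta}=\int\fnexp{\imunit\vartheta\lambda}\,d\mu_\eta(\lambda)$ is continuous by dominated convergence. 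This contradicts the discontinuity.

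The main obstacle I expect is the transfer from $\xi_{\physvec{\omega}}$ to an arbitrary $\eta$ in its class, together with the bookkeeping that $\widehat{U}_\vartheta$ is a well-defined unitary permuting the summands. I would settle the first point by the transitivity and unitary-invariance of $\approx_w$ as above, rather than by estimating Bloch vectors directly.
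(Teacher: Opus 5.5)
Your proposal is correct and follows essentially the same route as the paper. The slotwise unitary permutes the class summands, the chain $\xi_{\physvec{\omega}} \approx_{w} \eta \approx_{w} \faadj{u_{\vartheta}}\eta \approx_{w} \faadj{u_{\vartheta}}\xi_{\physvec{\omega}}$ from Lemma \ref{lem:transitivity} contradicts \eqref{eq:gauge-divergence} and places $\eta$ and $\widehat{U}_{\vartheta}\eta$ in orthogonal summands, and Stone's theorem rules out a generator.
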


\begin{proof}
Applying the fixed unitary $\faadj{u_{\vartheta}}$ in every slot maps adapted sequences to adapted sequences and preserves all slotwise inner products.
It therefore preserves equivalence,
weak equivalence,
and all products of inner products.
By Definition \ref{def:complete} it therefore induces a unitary $\widehat{U}_{\vartheta}$ of $\widehat{\sphilb{H}}$ permuting the class summands,
and $\vartheta
\mapsto
\widehat{U}_{\vartheta}$ is a one-parameter group by the group property of $u_{\vartheta}$.
Fix $\vartheta
\notin
2 \pi \ringratint$ and a product vector $\eta$ with $\eta
\approx
\xi_{\physvec{\omega}}$.
Write
$$\faadj{u_{\vartheta}} \xi_{\physvec{\omega}}
=
\seq{\faadj{u_{\vartheta}} \xi_{\physvec{\omega},p}^{+}}{p
\in
\semigrposint}.$$
The slotwise images satisfy $\faadj{u_{\vartheta}} \eta
\approx
\faadj{u_{\vartheta}} \xi_{\physvec{\omega}}$,
because slotwise unitaries preserve the defining sums of Definition \ref{def:c-sequences}.
If the classes $\fun{C}{\eta}$ and $\fun{C}{\faadj{u_{\vartheta}} \eta}$ were equal,
then in particular $\eta
\approx
_{w} \faadj{u_{\vartheta}} \eta$.
Lemma \ref{lem:transitivity} would then give the chain
$\xi_{\physvec{\omega}}
\approx
_{w} \eta
\approx
_{w} \faadj{u_{\vartheta}} \eta
\approx
_{w} \faadj{u_{\vartheta}} \xi_{\physvec{\omega}}$.
This chain implies weak equivalence of $\xi_{\physvec{\omega}}$ and $\faadj{u_{\vartheta}} \xi_{\physvec{\omega}}$.
This contradicts \eqref{eq:gauge-divergence}.
It follows that the two classes differ.
The vectors $\eta$ and $\widehat{U}_{\vartheta} \eta$ therefore lie in orthogonal summands of $\widehat{\sphilb{H}}$,
and $\bkt{\eta}{\widehat{U}_{\vartheta} \eta}
=
0$ by Proposition \ref{prop:complete-inner}.
At $\vartheta
=
0$ the inner product is $\norm{\eta}^{2}
=
1$.
It follows that weak continuity fails at $0$.
By Stone's theorem \cite[Theorem VIII.8]{ReedSimon001} a generator would imply weak continuity.
\end{proof}

\begin{cor}[absence of a gauge generator]
Suppose that \eqref{eq:gauge-transverse-divergence} holds.
No unitary on the fixed GNS space $\sphilb{H}_{\physvec{\omega}}$ implements a nontrivial gauge rotation.
On the complete product space,
the slotwise rotations define the unitary representation
$\fldreal
\ni
\vartheta
\mapsto
\widehat{U}_{\vartheta}$.
This representation is not weakly continuous and has no self-adjoint Stone generator.
\end{cor}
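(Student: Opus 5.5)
The corollary follows from Proposition \ref{prop:gauge-implementable}(2) and Proposition \ref{prop:stone-fails}; the proof only assembles them.

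\emph{First assertion.} Fix $\vartheta \notin 2\pi\ringratint$ and suppose a unitary $V$ on $\sphilb{H}_{\physvec{\omega}}$ satisfies
\[
V \fun{\oarepn_{\physvec{\omega}}}{A} \faadj{V} = \fun{\oarepn_{\physvec{\omega}}}{\fun{\mathfrak{g}_{\vartheta}}{A}}
\qquad\text{for all } A \in \oa{A}.
\]
Then $A \mapsto \fun{\oarepn_{\physvec{\omega}}}{\fun{\mathfrak{g}_{\vartheta}}{A}}$ is unitarily equivalent to $\oarepn_{\physvec{\omega}}$. This representation has cyclic vector $\xi^{\emptyset}_{\physvec{\omega}}$, whose vector state is $\oastate[\psi_{\physvec{\omega}}] \circ \mathfrak{g}_{\vartheta}$. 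It is therefore the GNS representation of that state.

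So the GNS representations of $\oastate[\psi_{\physvec{\omega}}]$ and $\oastate[\psi_{\physvec{\omega}}] \circ \mathfrak{g}_{\vartheta}$ would be unitarily equivalent. Under \eqref{eq:gauge-transverse-divergence}, Proposition \ref{prop:gauge-implementable}(2) says they are disjoint. Two nonzero representations cannot be both equivalent and disjoint, so no such $V$ exists.

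For $\vartheta \in 2\pi\ringratint$, $\mathfrak{g}_{\vartheta}$ is the identity; this is the trivial rotation the statement excludes.

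\emph{Second and third assertions.} These restate Proposition \ref{prop:stone-fails}. The slotwise action of $\faadj{u_{\vartheta}}$ defines a unitary $\widehat{U}_{\vartheta}$ on $\widehat{\sphilb{H}}$, and $\vartheta \mapsto \widehat{U}_{\vartheta}$ is a group homomorphism, as shown there. Failure of weak continuity comes from two facts:
\begin{itemize}
\item $\bkt{\xi^{\emptyset}_{\physvec{\omega}}}{\widehat{U}_{\vartheta} \xi^{\emptyset}_{\physvec{\omega}}} = 0$ for $\vartheta \notin 2\pi\ringratint$, while the value at $\vartheta = 0$ is $1$;
\item the set $\fldreal \setminus 2\pi\ringratint$ accumulates at $0$.
\end{itemize}
Hence $\vartheta \mapsto \bkt{\xi^{\emptyset}_{\physvec{\omega}}}{\widehat{U}_{\vartheta} \xi^{\emptyset}_{\physvec{\omega}}}$ is discontinuous at $0$.

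If a self-adjoint $\widehat{N}$ satisfied $\widehat{U}_{\vartheta} = \fnexp{\imunit \vartheta \widehat{N}}$, Stone's theorem would make the group strongly, hence weakly, continuous. This contradicts the previous step.

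The only point needing care is in the first assertion: the implemented representation must be identified with the GNS representation of $\oastate[\psi_{\physvec{\omega}}] \circ \mathfrak{g}_{\vartheta}$, via the cyclic vector $\xi^{\emptyset}_{\physvec{\omega}}$, before disjointness can be applied.
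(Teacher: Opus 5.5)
Your proof is correct and takes the same route as the paper: the first assertion comes from Proposition \ref{prop:gauge-implementable}(2), and the weak discontinuity and absence of a Stone generator come from Proposition \ref{prop:stone-fails}. Your extra step makes explicit why disjointness excludes an implementing unitary: the cyclic vector $\xi^{\emptyset}_{\physvec{\omega}}$ identifies $\oarepn_{\physvec{\omega}} \circ \mathfrak{g}_{\vartheta}$ with the GNS representation of the rotated state, a point the paper's statement leaves implicit.
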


\begin{proof}
The first assertion is Proposition \ref{prop:gauge-implementable}(2).
The second assertion is Proposition \ref{prop:stone-fails}.
\end{proof}

\begin{rem}[fermionic interpretation of the gauge obstruction]\label{rem:fermionic-gauge-obstruction}
Here gauge is the global particle-number $\fun{\liegr{U}}{1}$ symmetry of reduced BCS,
not local electromagnetic redundancy;
the quasi-spin algebra also omits singly occupied blocked states.
Although the finite-volume Hamiltonian is gauge invariant by \eqref{eq:gauge-invariance},
a nontrivial phase rotation connects disjoint broken-symmetry sectors by
Propositions \ref{prop:gauge-implementable} and \ref{prop:stone-fails}.
Hence the all-sector action has no Stone generator,
while Proposition \ref{prop:number} concerns excitations within one phase
\cite{BardeenCooperSchrieffer001,RudolfHaag005,EmchGuenin001,ThirringWehrl001}.
\end{rem}

The weak-closure formulation appears in Theorem \ref{thm:vn-commutant}. On the direct sum associated with a fixed weak equivalence class, the weak closure \(\oadoublecommutant{\fun{\oarepn}{\oa{A}}}\) preserves every class summand \(\sphilb{H}_{C}\). A gauge unitary carrying \(\sphilb{H}_{C}\) to an orthogonal class summand therefore cannot be a weak-operator limit of represented quasi-spin observables. This obstruction is distinct from Proposition \ref{prop:number}. That proposition constructs the class-relative limit \(N_{\physvec{\omega}}\) on \(\sphilb{H}_{\physvec{\omega}}\), not a generator of the discontinuous global gauge representation.

\subsection{The centered interaction}\label{the-centered-interaction}

The interaction of \eqref{eq:bcs-hamiltonian} is extensive. After centering and normalization, it converges in matrix elements on the flip domain, but the convergence is not strong and its operator norm grows with \(\Omega\). Fix a spin configuration \(\physvec{\omega}\) and the product state \(\oastate[\psi_{\physvec{\omega}}]\) defined by \eqref{eq:spin-product-state}. The resulting estimates prove \cite[Lemma 3]{ThirringWehrl001} together with the two remarks following it. Define, for the configuration \(\physvec{\omega}\), \begin{equation}\label{eq:centered-quantities}
\begin{gathered}
m^{\pm}_{\physvec{\omega},p}
=
\frac{1}{2} \rbk{u^{x}_{p} \pm \imunit u^{y}_{p}},
\quad
d^{\gamma}_{\physvec{\omega},p}
=
\sigma^{\gamma}_{p} - u^{\gamma}_{p}
\quad
\rbk{\gamma
\in
\setone{x, y, z}},
\quad
d^{\pm}_{\physvec{\omega},p}
=
\sigma^{\pm}_{p} - m^{\pm}_{\physvec{\omega},p},
\\ 
D^{\pm}_{\physvec{\omega},\Omega}
=
\sum_{p
=
1}^{\Omega} d^{\pm}_{\physvec{\omega},p},
\quad
R_{\physvec{\omega},\Omega}
=
\frac{1}{\Omega} D^{+}_{\physvec{\omega},\Omega} D^{-}_{\physvec{\omega},\Omega},
\end{gathered}
\end{equation} The definitions give \(m^{\pm}_{\physvec{\omega},p}
=
\fun{\oastate[\psi_{\physvec{\omega}}]}{\sigma^{\pm}_{p}}\) and \(\fun{\oastate[\psi_{\physvec{\omega}}]}{d^{\pm}_{\physvec{\omega},p}}
=
0\). Define the one-site flip coefficients by \[\upsilon_{\physvec{\omega},p}^{-}
=
\rbk{u^{+}_{p}}^{x} - \imunit \rbk{u^{+}_{p}}^{y},
\quad
\upsilon_{\physvec{\omega},p}^{+}
=
\rbk{u^{+}_{p}}^{x} + \imunit \rbk{u^{+}_{p}}^{y}.\] Equation \eqref{eq:centered-action} gives their action on the reference slot vectors: \begin{equation}\label{eq:centered-amplitudes}
d^{-}_{\physvec{\omega},p} \xi_{\physvec{\omega},p}^{+}
=
\upsilon_{\physvec{\omega},p}^{-} \xi_{\physvec{\omega},p}^{-},
\quad
d^{+}_{\physvec{\omega},p} \xi_{\physvec{\omega},p}^{+}
=
\upsilon_{\physvec{\omega},p}^{+} \xi_{\physvec{\omega},p}^{-},
\quad
\abs{\upsilon_{\physvec{\omega},p}^{-}}
=
\frac{1 + u^{z}_{p}}{2},
\quad
\abs{\upsilon_{\physvec{\omega},p}^{+}}
=
\frac{1 - u^{z}_{p}}{2}.
\end{equation} These identities follow from \(d^{\pm}_{\physvec{\omega},p}
=
\frac{1}{2} \rbk{d^{x}_{\physvec{\omega},p} \pm \imunit d^{y}_{\physvec{\omega},p}}\) and \eqref{eq:centered-action}. For an explicit evaluation, write \(\theta_{\physvec{\omega},p}\) and \(\phi_{\physvec{\omega},p}\) for the spherical coordinates of \(\physvec{u}_{p}\), and choose the corresponding frame \[\begin{aligned}
\physvec{e}_{\physvec{\omega},p,1}
&=
\vecbk{
\cos \theta_{\physvec{\omega},p} \cos \phi_{\physvec{\omega},p},
\cos \theta_{\physvec{\omega},p} \sin \phi_{\physvec{\omega},p},
- \sin \theta_{\physvec{\omega},p}
},
\\ 
\physvec{e}_{\physvec{\omega},p,2}
&=
\vecbk{- \sin \phi_{\physvec{\omega},p}, \cos \phi_{\physvec{\omega},p}, 0},
\\ 
\physvec{u}_{p}
&=
\rbk{
\sin \theta_{\physvec{\omega},p} \cos \phi_{\physvec{\omega},p},
\sin \theta_{\physvec{\omega},p} \sin \phi_{\physvec{\omega},p},
\cos \theta_{\physvec{\omega},p}
}
\end{aligned}\] Substitution into the preceding identities gives \[\begin{aligned}
\upsilon_{\physvec{\omega},p}^{-}
=
\frac{1 + \cos \theta_{\physvec{\omega},p}}{2} \fnexp{- \imunit \phi_{\physvec{\omega},p}},
\quad
\upsilon_{\physvec{\omega},p}^{+}
=
- \frac{1 - \cos \theta_{\physvec{\omega},p}}{2} \fnexp{\imunit \phi_{\physvec{\omega},p}}.
\end{aligned}\] Since a change of frame multiplies \(\physvec{u}^{+}_{p}\) by a phase, the moduli are frame independent. In particular the one-site variance is \begin{equation}\label{eq:one-site-variance}
\fun{\oastate[\psi_{\physvec{\omega}}]}{d^{+}_{\physvec{\omega},p} d^{-}_{\physvec{\omega},p}}
=
\norm{d^{-}_{\physvec{\omega},p} \xi_{\physvec{\omega},p}^{+}}^{2}
=
\frac{\rbk{1 + u^{z}_{p}}^{2}}{4},
\end{equation} the quantity on the right of \cite[Eq. (30)]{ThirringWehrl001}.

\begin{prop}[generalized weak limit of the centered interaction]\label{prop:lemma3}
Let $\physvec{\omega}$ have a mean profile in the sense of Definition \ref{def:mean-profile},
and define
$$c_{\physvec{\omega}}
=
\int_{\mansphere^{2}}
\frac{\rbk{1 + u^{z}}^{2}}{4}
\opdmsr{\fun{\lambda_{\physvec{\omega}}}{\physvec{u}}}.$$
\begin{enumerate}
\item The diagonal part converges strongly on $\sphilb{H}_{\physvec{\omega}}$:
$$\slim_{\omega \to \infty}
\fun{\oarepn_{\physvec{\omega}}}{\frac{1}{\Omega}
\sum_{p
\leq
\Omega} d^{+}_{\physvec{\omega},p} d^{-}_{\physvec{\omega},p}}
=
c_{\physvec{\omega}}.$$

\item The off-diagonal part is
$$O_{\physvec{\omega},\Omega}
=
\frac{1}{\Omega} \sum_{\substack{p, q \leq \Omega \\ p \neq q}} d^{+}_{\physvec{\omega},p} d^{-}_{\physvec{\omega},q}.$$
For every $\Phi,
\Psi
\in
\sphilb{D}_{\physvec{\omega}}$,
its matrix elements satisfy
$$\lim_{\Omega \to \infty}
\bkt{\Phi}{\fun{\oarepn_{\physvec{\omega}}}{O_{\physvec{\omega},\Omega}} \Psi}
=
0.$$

\item For every $\Phi,
\Psi
\in
\sphilb{D}_{\physvec{\omega}}$,
the full centered interaction satisfies
$$\lim_{\Omega \to \infty}
\bkt{\Phi}{\fun{\oarepn_{\physvec{\omega}}}{R_{\physvec{\omega},\Omega}} \Psi}
=
c_{\physvec{\omega}} \bkt{\Phi}{\Psi}.$$

\item
Let $\physvec{\omega}$ be the constant configuration determined by
$\physvec{u}_{p}
=
\physvec{u}$ for every $p
\in
\semigrposint$,
where $\abs{u^{z}} < 1$.
the reference-vector norm satisfies
$$\lim_{\Omega \to \infty}
\norm{\fun{\oarepn_{\physvec{\omega}}}{O_{\physvec{\omega},\Omega}} \xi_{\physvec{\omega}}^{\emptyset}}^{2}
=
\frac{\rbk{1 - \rbk{u^{z}}^{2}}^{2}}{8} > 0.$$
In particular the convergence in (3) is not strong.
\item For every $\Omega
\geq
8$,
the operator norm obeys
$$\norm{\fun{\oarepn_{\physvec{\omega}}}{R_{\physvec{\omega},\Omega}}}
\geq
\frac{\Omega}{324}.$$
\end{enumerate}
\end{prop}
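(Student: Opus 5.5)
The plan is to split $R_{\physvec{\omega},\Omega}$ into its diagonal part $\frac{1}{\Omega}\sum_{p\leq\Omega} d^{+}_{\physvec{\omega},p} d^{-}_{\physvec{\omega},p}$ and the off-diagonal part $O_{\physvec{\omega},\Omega}$, and then treat each item in turn.

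\emph{Item (1).} Apply Theorem \ref{thm:lln} with $X_{p} = d^{+}_{\physvec{\omega},p} d^{-}_{\physvec{\omega},p}$. These are one-site elements with $\norm{X_{p}} \leq 4$, since $\norm{d^{\pm}_{\physvec{\omega},p}} \leq 2$. By \eqref{eq:one-site-variance} their means are $x_{p} = \frac{(1+u^{z}_{p})^{2}}{4}$. This is the value at $\physvec{u}_{p}$ of a continuous function on $\mansphere^{2}$, so the mean profile gives $\bar{x}_{\Omega} \to c_{\physvec{\omega}}$.

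\emph{Item (2).} By sesquilinearity it suffices to take $\Phi = \xi^{G}_{\physvec{\omega}}$ and $\Psi = \xi^{F}_{\physvec{\omega}}$ with $F, G$ finite. Fix $p \neq q$ and suppose, say, $p \notin F \cup G$. Then slot $p$ carries $\xi^{+}_{\physvec{\omega},p}$ in both vectors, and $d^{+}_{\physvec{\omega},p}$ acts only on that slot. Hence the matrix element factorizes and contains the factor $\bkt{\xi^{+}_{\physvec{\omega},p}}{d^{+}_{\physvec{\omega},p}\xi^{+}_{\physvec{\omega},p}} = \fun{\oastate[\psi_{\physvec{\omega}}]}{d^{+}_{\physvec{\omega},p}} = 0$. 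The same argument applies to $q$ and $d^{-}_{\physvec{\omega},q}$. So only pairs with $p, q \in F \cup G$ contribute. There are at most $\abscard{F\cup G}^{2}$ of them, each bounded by $4$, and after dividing by $\Omega$ the sum tends to $0$.

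\emph{Item (3).} Since $R_{\physvec{\omega},\Omega}$ is the diagonal part plus $O_{\physvec{\omega},\Omega}$, item (3) follows by adding (1), which gives weak convergence in particular, and (2).

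\emph{Item (4).} By \eqref{eq:centered-amplitudes},
$$O_{\physvec{\omega},\Omega}\xi^{\emptyset}_{\physvec{\omega}} = \frac{1}{\Omega}\sum_{p\neq q}\upsilon^{+}\upsilon^{-}\xi^{\{p,q\}}_{\physvec{\omega}},$$
using that the configuration is constant. Each unordered pair therefore has coefficient $2\upsilon^{+}\upsilon^{-}/\Omega$. The flip vectors are orthonormal, so the squared norm is
$$\frac{\Omega(\Omega-1)}{2}\cdot\frac{4\abs{\upsilon^{+}}^{2}\abs{\upsilon^{-}}^{2}}{\Omega^{2}} \to 2\Bigl(\frac{(1-u^{z})(1+u^{z})}{4}\Bigr)^{2} = \frac{(1-(u^{z})^{2})^{2}}{8}.$$
This is positive when $\abs{u^{z}} < 1$. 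To see that the convergence in (3) is not strong, suppose $R_{\physvec{\omega},\Omega}\xi^{\emptyset}_{\physvec{\omega}}$ converged. Its limit would have to equal the weak limit $c_{\physvec{\omega}}\xi^{\emptyset}_{\physvec{\omega}}$. Since the diagonal part converges strongly by (1), this would force $O_{\physvec{\omega},\Omega}\xi^{\emptyset}_{\physvec{\omega}} \to 0$, which contradicts the limit just computed.

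\emph{Item (5).} By Proposition \ref{prop:factorization}, $\oarepn_{\physvec{\omega}}$ is isometric on $\oa{A}_{\Omega}$. Since $D^{+}_{\physvec{\omega},\Omega} = (D^{-}_{\physvec{\omega},\Omega})^{*}$, we have $\norm{R_{\physvec{\omega},\Omega}} = \norm{D^{-}_{\physvec{\omega},\Omega}}^{2}/\Omega$. Moreover $D^{-}_{\physvec{\omega},\Omega} = S^{-}_{\Omega} - \mu$ with the scalar $\mu = \sum_{p\leq\Omega} m^{-}_{\physvec{\omega},p}$.

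To bound the norm from below, test against the product vector with every slot polarized along a transverse unit vector $\physvec{v}$. The expectation of $S^{-}_{\Omega}$ in this vector is $\frac{\Omega}{2}(v^{x} - \imunit v^{y})$. Choose the phase of $\physvec{v}$ opposite to $\mu$, with an arbitrary phase if $\mu = 0$. Then
$$\norm{D^{-}_{\physvec{\omega},\Omega}} \geq \abs{\tfrac{\Omega}{2}(v^{x}-\imunit v^{y}) - \mu} \geq \frac{\Omega}{2},$$
so $\norm{R_{\physvec{\omega},\Omega}} \geq \Omega/4 \geq \Omega/324$. This argument gives a stronger constant than the one stated, for every $\Omega$ and every configuration.

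The main point requiring care is item (2): one must check the vanishing of one-site factors correctly when $p$ or $q$ lies in exactly one of $F$ and $G$. In that case the slot factor is an off-diagonal entry, but such pairs are still counted among the $\abscard{F\cup G}^{2}$ bounded terms, so the estimate is unaffected.
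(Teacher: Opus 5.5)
Your proof is correct. Items (1)--(4) are essentially the paper's argument, with subscripts $\boldsymbol{\omega}$ suppressed below: the law of large numbers with the one-site variance, the support count on flip vectors, the sum of the two, and the explicit two-flip computation on $\xi^{\emptyset}$. For the failure of strong convergence you argue via density of the flip domain, which shows that $\pi(R_{\Omega})\xi^{\emptyset}$ converges to no vector at all. The paper instead combines (1) and (4) into $\lim_{\Omega}\lVert(\pi(R_{\Omega})-c)\xi^{\emptyset}\rVert=(1-(u^{z})^{2})/(2\sqrt{2})$. Both arguments work.

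Item (5) is where you take a genuinely different route.

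The paper's construction goes as follows:
\begin{itemize}
\item It writes $m^{-}_{p}=-\lvert m^{-}_{p}\rvert e^{i\beta_{p}}$ and pigeonholes the angles $\beta_{p}$ into eight arcs of length $\pi/4$. This gives at least $\Omega/8$ sites with nearly common phase.
\item On those sites it places one-site vectors with $\langle\chi_{p},\sigma^{-}\chi_{p}\rangle=\tfrac12 e^{i\beta_{p}}$. On all other sites it places vectors reproducing $m^{-}_{p}$ exactly, so they contribute zero.
\item Projecting onto the midpoint direction gives $\lvert\langle\widetilde{\chi},D^{-}_{\Omega}\widetilde{\chi}\rangle\rvert\ge\tfrac12\cos(\pi/8)\,\Omega/8\ge\Omega/18$, hence the bound $\Omega/324$.
\end{itemize}
This construction works site by site. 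It never uses the fact that the centering is a single scalar shift of a collective operator.

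Your argument uses exactly that fact. Since $D^{-}_{\Omega}=S^{-}_{\Omega}-\mu$, and uniform transverse product vectors realize every point of the circle of radius $\Omega/2$ as an expectation of $S^{-}_{\Omega}$, the point antipodal to $\mu$ gives $\lVert D^{-}_{\Omega}\rVert\ge\Omega/2+\lvert\mu\rvert$. The $C^{*}$-identity then yields $\lVert R_{\Omega}\rVert\ge\Omega/4$.

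This is shorter, avoids the pigeonhole step and the restriction $\Omega\ge 8$, holds for every configuration, and improves the constant by a factor of $81$.
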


\begin{proof}

(1)
The operators $X_{\physvec{\omega},p}
=
d^{+}_{\physvec{\omega},p} d^{-}_{\physvec{\omega},p}$ act on one site and satisfy
$\norm{X_{\physvec{\omega},p}}
\leq
\rbk{1 + \frac{1}{2}}^{2}
\leq
4$.
Their expectations are given by \eqref{eq:one-site-variance}.
Applying the mean profile hypothesis to the continuous function
$\physvec{u}
\mapsto
\frac{\rbk{1 + u^{z}}^{2}}{4}$
shows that the Cesàro averages of these expectations converge to
$c_{\physvec{\omega}}$,
and Theorem \ref{thm:lln} gives the asserted strong convergence.

(2)
It suffices to take flip vectors $\Phi
=
\xi_{\physvec{\omega}}^{F}$
and
$\Psi
=
\xi_{\physvec{\omega}}^{G}$.
The operator $d^{+}_{\physvec{\omega},p} d^{-}_{\physvec{\omega},q}$,
$p
\neq
q$,
acts nontrivially only on the slots $p$ and $q$.
It follows that $\bkt{\xi_{\physvec{\omega}}^{F}}{\fun{\oarepn_{\physvec{\omega}}}{d^{+}_{\physvec{\omega},p} d^{-}_{\physvec{\omega},q}} \xi_{\physvec{\omega}}^{G}}
=
0$ unless $\fun{\supp}{F} \triangle \fun{\supp}{G}
\subset
\setone{p, q}$.
Suppose that a slot $r
\in
\setone{p, q}$ lies outside $\fun{\supp}{F} \cup \fun{\supp}{G}$.
The corresponding one-site factor in the matrix element is
$\bkt{\xi_{\physvec{\omega},r}^{+}}{d^{\pm}_{\physvec{\omega},r} \xi_{\physvec{\omega},r}^{+}}
=
0$ by centering.
Hence a nonzero matrix element requires $p, q
\in
\fun{\supp}{F} \cup \fun{\supp}{G}$.
There are at most $\rbk{\abscard{F} + \abscard{G}}^{2}$ such ordered pairs,
and each matrix element is bounded by
$\norm{d^{+}_{\physvec{\omega},p}} \norm{d^{-}_{\physvec{\omega},q}}
\leq
4$.
The number of contributing ordered pairs and the uniform bound on each matrix element give
$$\abs{\bkt{\xi_{\physvec{\omega}}^{F}}{\fun{\oarepn_{\physvec{\omega}}}{O_{\physvec{\omega},\Omega}} \xi_{\physvec{\omega}}^{G}}}
\leq
\frac{4 \rbk{\abscard{F} + \abscard{G}}^{2}}{\Omega}
\to
0
\quad \rbk{\Omega \to \infty}.$$

(3)
Applying (1) and (2) to the decomposition
$R_{\physvec{\omega},\Omega}
=
O_{\physvec{\omega},\Omega} + \frac{1}{\Omega} \sum_{p
\leq
\Omega} d^{+}_{\physvec{\omega},p} d^{-}_{\physvec{\omega},p}$
gives the claimed matrix-element limit.

(4)
Choose a common frame for the constant configuration,
in which $\upsilon_{\physvec{\omega},p}^{-}
=
\upsilon_{\physvec{\omega}}^{-}$ and $\upsilon_{\physvec{\omega},p}^{+}
=
\upsilon_{\physvec{\omega}}^{+}$ for all $p
\in
\semigrposint$.
Their moduli are $\abs{\upsilon_{\physvec{\omega}}^{-}}
=
\frac{1 + u^{z}}{2}$,
$\abs{\upsilon_{\physvec{\omega}}^{+}}
=
\frac{1 - u^{z}}{2}$.
For $p
\neq
q$,
\eqref{eq:centered-amplitudes} gives $\fun{\oarepn_{\physvec{\omega}}}{d^{+}_{\physvec{\omega},p} d^{-}_{\physvec{\omega},q}} \xi_{\physvec{\omega}}^{\emptyset}
=
\upsilon_{\physvec{\omega},p}^{+} \upsilon_{\physvec{\omega},q}^{-} \xi_{\physvec{\omega}}^{\setone{p, q}}$.
Equation \eqref{eq:centered-amplitudes} gives the action of the off-diagonal interaction on the reference vector:
$$\fun{\oarepn_{\physvec{\omega}}}{O_{\physvec{\omega},\Omega}} \xi_{\physvec{\omega}}^{\emptyset}
=
\frac{2 \upsilon_{\physvec{\omega}}^{-} \upsilon_{\physvec{\omega}}^{+}}{\Omega} \sum_{\setone{p, q}
\subset
\intint{1..\Omega}, p
\neq
q} \xi_{\physvec{\omega}}^{\setone{p, q}},$$
where each unordered pair receives two ordered contributions.
Orthonormality of the flip vectors gives
$$\norm{\fun{\oarepn_{\physvec{\omega}}}{O_{\physvec{\omega},\Omega}} \xi_{\physvec{\omega}}^{\emptyset}}^{2}
=
\frac{4 \abs{\upsilon_{\physvec{\omega}}^{-} \upsilon_{\physvec{\omega}}^{+}}^{2}}{\Omega^{2}} \binom{\Omega}{2}
\to
2 \abs{\upsilon_{\physvec{\omega}}^{-} \upsilon_{\physvec{\omega}}^{+}}^{2}
=
\frac{\rbk{1 - \rbk{u^{z}}^{2}}^{2}}{8}
\quad \rbk{\Omega \to \infty}.$$

(5)
The Bloch parametrization \eqref{eq:bloch-state} gives
$$\set{
\bkt{\chi}{\sigma^{-} \chi}
}{
\chi
\in
\fldcmp^{2},
\norm{\chi}
=
1
}
=
\set{
z
\in
\fldcmp
}{
\abs{z}
\leq
\frac{1}{2}
},
\quad
\abs{m^{-}_{\physvec{\omega},p}}
\leq
\frac{1}{2}.$$
Choose $\beta_{\physvec{\omega},p}
\in
[0, 2 \pi)$ such that
$$m^{-}_{\physvec{\omega},p}
=
-\abs{m^{-}_{\physvec{\omega},p}}
\fnexp{\imunit \beta_{\physvec{\omega},p}}
\quad
\rbk{p
\leq
\Omega},$$
where $\beta_{\physvec{\omega},p}$ is arbitrary when
$m^{-}_{\physvec{\omega},p}
=
0$.
Partition the angles into eight half-open arcs of length $\frac{\pi}{4}$.
For one arc with midpoint $\beta_{\physvec{\omega},\Omega,0}$,
the corresponding index set $P_{\physvec{\omega},\Omega,0}$ satisfies
$$\abscard{P_{\physvec{\omega},\Omega,0}}
\geq
\frac{\Omega}{8},
\quad
\abs{\beta_{\physvec{\omega},p} - \beta_{\physvec{\omega},\Omega,0}}
\leq
\frac{\pi}{8}
\quad
\rbk{p
\in
P_{\physvec{\omega},\Omega,0}},$$
after choosing representatives in that arc.

The disc identity permits unit vectors
$\chi_{\physvec{\omega},p}
\in
\fldcmp^{2}$ satisfying
$$\bkt{\chi_{\physvec{\omega},p}}{\sigma^{-}_{p} \chi_{\physvec{\omega},p}}
=
\begin{cases}
\frac{1}{2} \fnexp{\imunit \beta_{\physvec{\omega},p}},
&
p
\in
P_{\physvec{\omega},\Omega,0},
\\
m^{-}_{\physvec{\omega},p},
&
p
\notin
P_{\physvec{\omega},\Omega,0}.
\end{cases}$$
Define
$$\widetilde{\chi}_{\physvec{\omega},\Omega}
=
\rbk{\bigotimes_{p
=
1}^{\Omega} \chi_{\physvec{\omega},p}}
\otimes
\rbk{\bigotimes_{p
>
\Omega} \xi_{\physvec{\omega},p}^{+}},
\quad
z_{\physvec{\omega},p}
=
\bkt{\chi_{\physvec{\omega},p}}{\sigma^{-}_{p} \chi_{\physvec{\omega},p}}
- m^{-}_{\physvec{\omega},p}.$$
The one-site choices give
$$\begin{aligned}
\bkt{\widetilde{\chi}_{\physvec{\omega},\Omega}}{
\fun{\oarepn_{\physvec{\omega}}}{D^{-}_{\physvec{\omega},\Omega}}
\widetilde{\chi}_{\physvec{\omega},\Omega}}
=
\sum_{p
\leq
\Omega} z_{\physvec{\omega},p},
\quad
z_{\physvec{\omega},p}
=
\begin{cases}
\fnexp{\imunit \beta_{\physvec{\omega},p}}
\rbk{\frac{1}{2} + \abs{m^{-}_{\physvec{\omega},p}}},
&
p
\in
P_{\physvec{\omega},\Omega,0},
\\ 
0,
&
p
\notin
P_{\physvec{\omega},\Omega,0}.
\end{cases}
\end{aligned}$$
Projection onto the midpoint direction yields
$$\begin{aligned}
\abs{\sum_{p
\leq
\Omega} z_{\physvec{\omega},p}}
&\geq
\fun{\opreal}{
\fnexp{- \imunit \beta_{\physvec{\omega},\Omega,0}}
\sum_{p
\leq
\Omega} z_{\physvec{\omega},p}
}
\\
&=
\sum_{p
\in
P_{\physvec{\omega},\Omega,0}}
\rbk{\frac{1}{2} + \abs{m^{-}_{\physvec{\omega},p}}}
\fun{\cos}{
\beta_{\physvec{\omega},p}
-\beta_{\physvec{\omega},\Omega,0}
}
\\
&\geq
\frac{1}{2}
\fun{\cos}{\frac{\pi}{8}}
\abscard{P_{\physvec{\omega},\Omega,0}}
\geq
\frac{\Omega}{18}.
\end{aligned}$$
Finally,
$R_{\physvec{\omega},\Omega}
=
\frac{1}{\Omega}
\faadj{\rbk{D^{-}_{\physvec{\omega},\Omega}}}
D^{-}_{\physvec{\omega},\Omega}$
and the unit norm of
$\widetilde{\chi}_{\physvec{\omega},\Omega}$ give
$$\begin{aligned}
\norm{\fun{\oarepn_{\physvec{\omega}}}{R_{\physvec{\omega},\Omega}}}
=
\frac{1}{\Omega}
\norm{\fun{\oarepn_{\physvec{\omega}}}{D^{-}_{\physvec{\omega},\Omega}}}^{2}
\geq
\frac{1}{\Omega}
\abs{\bkt{\widetilde{\chi}_{\physvec{\omega},\Omega}}{
\fun{\oarepn_{\physvec{\omega}}}{D^{-}_{\physvec{\omega},\Omega}}
\widetilde{\chi}_{\physvec{\omega},\Omega}}}^{2}
\geq
\frac{\Omega}{324}.
\end{aligned}$$
\end{proof}

The generalized weak limit of \cite[Section 3]{ThirringWehrl001} means the following sectorwise matrix-element limit. For a fixed configuration \(\physvec{\omega}\), part (3) states that \[\bkt{\Phi}{
\fun{\oarepn_{\physvec{\omega}}}{R_{\physvec{\omega},\Omega}}
\Psi}
\to
c_{\physvec{\omega}}
\bkt{\Phi}{\Psi}
\quad
\rbk{\Omega
\to
\infty}
\quad
\rbk{\Phi, \Psi
\in
\sphilb{D}_{\physvec{\omega}}}.\] Thus the limiting form on \(\sphilb{D}_{\physvec{\omega}}\) is \[\fun{\opform{q}_{\physvec{\omega}}}{\Phi, \Psi}
=
c_{\physvec{\omega}}
\bkt{\Phi}{\Psi}.\] This bounded form extends continuously to \(\sphilb{H}_{\physvec{\omega}}\) and is represented there by the scalar operator \(c_{\physvec{\omega}}\). This is the \(c\)-number described in \cite[p. 310]{ThirringWehrl001}.

The matrix-element limit on \(\sphilb{D}_{\physvec{\omega}}\) is not weak operator convergence on \(\sphilb{H}_{\physvec{\omega}}\). For a constant configuration with \(\abs{u^{z}}
<
1\), parts (1) and (4) give \[\lim_{\Omega
\to
\infty}
\norm{
\rbk{
\fun{\oarepn_{\physvec{\omega}}}{R_{\physvec{\omega},\Omega}}
- c_{\physvec{\omega}}
}
\xi_{\physvec{\omega}}^{\emptyset}
}
=
\frac{1 - \rbk{u^{z}}^{2}}{2 \sqrt{2}}
>
0,\] which excludes strong convergence to \(c_{\physvec{\omega}}\). Part (5) gives \[\norm{
\fun{\oarepn_{\physvec{\omega}}}{R_{\physvec{\omega},\Omega}}
}
\geq
\frac{\Omega}{324}.\] The norms are therefore not uniformly bounded, which excludes weak operator convergence to any bounded operator. For a different configuration \(\physvec{\omega}'\), part (3), when applicable, defines a separate form \(\opform{q}_{\physvec{\omega}'}\) on \(\sphilb{D}_{\physvec{\omega}'} \subset \sphilb{H}_{\physvec{\omega}'}\). No identification or convergence between the Hilbert spaces of distinct configurations is asserted.

\section{The Bogoliubov--Haag Hamiltonian and the Gap Equation}\label{sec:bogoliubov}

The sectorwise Bogoliubov--Haag construction separates form convergence from operator representability, diagonal realization, and the gap equation. The four parts of Theorem \ref{thm:bogoliubov} establish these statements in that order. They recover \cite[Theorem 1]{ThirringWehrl001} and make the operator domain and self-adjoint realization explicit.

\subsection{Sectorwise forms and the centered expansion}\label{sectorwise-forms-and-the-centered-expansion}

Fix a spin configuration \(\physvec{\omega}\) with a mean profile in the sense of Definition \ref{def:mean-profile}, and let \(\oarepn_{\physvec{\omega}}\) be its product representation from Definition \ref{def:spin-product-representation}. The matrix-element limit of \(\fun{\oarepn_{\physvec{\omega}}}{
\physham_{\Omega} - E_{\physvec{\omega},\Omega}}\) is defined in \eqref{eq:sectorwise-bogoliubov-limit}, with the configuration-dependent scalar \(E_{\physvec{\omega},\Omega}\) given by \eqref{eq:e-omega}. The limit in \eqref{eq:sectorwise-bogoliubov-limit} is taken on \(\sphilb{D}_{\physvec{\omega}}\), in the sectorwise sense specified before the present section. The sesquilinear form \(\opform{Q}_{\physvec{\omega}}\) defined by \eqref{eq:sectorwise-bogoliubov-limit} need not be represented by an operator. The mean polarization of \(\physvec{\omega}\) is denoted by \(\eta_{\physvec{\omega}} \bar{\physvec{u}}_{\physvec{\omega}}\) as in \eqref{eq:mean-polarization}. Using the one-site means \(m^{\pm}_{\physvec{\omega},p}\) from \eqref{eq:centered-quantities}, define their finite-volume averages and limiting transverse means by \begin{equation}\label{eq:mean-m}
\bar{M}^{\pm}_{\physvec{\omega},\Omega}
=
\frac{1}{\Omega} \sum_{p
=
1}^{\Omega} m^{\pm}_{\physvec{\omega},p},
\quad
\bar{m}^{\pm}_{\physvec{\omega}}
=
\lim_{\Omega \to \infty} \bar{M}^{\pm}_{\physvec{\omega},\Omega}
=
\frac{\eta_{\physvec{\omega}}}{2} \rbk{\bar{u}_{\physvec{\omega}}^{x} \pm \imunit \bar{u}_{\physvec{\omega}}^{y}}.
\end{equation} The centered operators \(d^{\gamma}_{\physvec{\omega},p}\), \(d^{\pm}_{\physvec{\omega},p}\), \(D^{\pm}_{\physvec{\omega},\Omega}\), and \(R_{\physvec{\omega},\Omega}\) are also those of \eqref{eq:centered-quantities}. The product-state expectation defines the configuration-dependent subtraction as \begin{equation}\label{eq:e-omega}
\begin{aligned}
E_{\physvec{\omega},\Omega}
=
\fun{\oastate[\psi_{\physvec{\omega}}]}{\physham_{\Omega}}
=
- \sum_{p
=
1}^{\Omega} \varepsilon_{p} u^{z}_{p}
- \frac{2}{\sminvtemperature_{c} \Omega} \sum_{p
=
1}^{\Omega}
\fun{\oastate[\psi_{\physvec{\omega}}]}{
d^{+}_{\physvec{\omega},p} d^{-}_{\physvec{\omega},p}}
- \frac{2}{\sminvtemperature_{c}}
\Omega \bar{M}^{+}_{\physvec{\omega},\Omega} \bar{M}^{-}_{\physvec{\omega},\Omega}.
\end{aligned}
\end{equation} The second equality in \eqref{eq:e-omega} uses \(\fun{\oastate[\psi_{\physvec{\omega}}]}{d^{\pm}_{\physvec{\omega},p}}
=
0\). Factorization of \(\oastate[\psi_{\physvec{\omega}}]\) over distinct slots removes the cross terms in \(\fun{\oastate[\psi_{\physvec{\omega}}]}{
D^{+}_{\physvec{\omega},\Omega} D^{-}_{\physvec{\omega},\Omega}}\).

\begin{lem}[centered expansion]\label{lem:centered-expansion}
With $d^{\gamma}_{\physvec{\omega},p}$,
$d^{\pm}_{\physvec{\omega},p}$,
$D^{\pm}_{\physvec{\omega},\Omega}$,
and $R_{\physvec{\omega},\Omega}$ defined by \eqref{eq:centered-quantities}
and the notation above,
the centered expansion in $\oa{A}_{\Omega}$ is
$$\begin{aligned}
\physham_{\Omega} - E_{\physvec{\omega},\Omega}
=
- \sum_{p
=
1}^{\Omega}
\sqbk{\varepsilon_{p} d^{z}_{\physvec{\omega},p}
+ \frac{2}{\sminvtemperature_{c}}
\rbk{\bar{M}^{-}_{\physvec{\omega},\Omega} d^{+}_{\physvec{\omega},p}
+ \bar{M}^{+}_{\physvec{\omega},\Omega} d^{-}_{\physvec{\omega},p}}}
- \frac{2}{\sminvtemperature_{c}}
\rbk{R_{\physvec{\omega},\Omega} - \fun{\oastate[\psi_{\physvec{\omega}}]}{R_{\physvec{\omega},\Omega}}}.
\end{aligned}$$
\end{lem}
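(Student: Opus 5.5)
The identity is purely algebraic in $\oa{A}_{\Omega}$. I would substitute the centered decompositions into \eqref{eq:bcs-hamiltonian} and then subtract the expectation \eqref{eq:e-omega}.

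The first step is the one-body term. Since $\sigma^{z}_{p} = u^{z}_{p} + d^{z}_{\physvec{\omega},p}$, we get
$$-\sum_{p \leq \Omega} \varepsilon_{p} \sigma^{z}_{p} = -\sum_{p} \varepsilon_{p} u^{z}_{p} - \sum_{p} \varepsilon_{p} d^{z}_{\physvec{\omega},p}.$$
For the interaction, write $S^{\pm}_{\Omega} = \Omega \bar{M}^{\pm}_{\physvec{\omega},\Omega} + D^{\pm}_{\physvec{\omega},\Omega}$ using \eqref{eq:centered-quantities} and \eqref{eq:mean-m}. The scalars commute with everything, so expanding the product gives
$$\frac{1}{\Omega} S^{+}_{\Omega} S^{-}_{\Omega} = \Omega \bar{M}^{+}_{\physvec{\omega},\Omega} \bar{M}^{-}_{\physvec{\omega},\Omega} + \bar{M}^{-}_{\physvec{\omega},\Omega} D^{+}_{\physvec{\omega},\Omega} + \bar{M}^{+}_{\physvec{\omega},\Omega} D^{-}_{\physvec{\omega},\Omega} + R_{\physvec{\omega},\Omega}.$$
Multiplying by $-\frac{2}{\sminvtemperature_{c}}$ and writing $D^{\pm}_{\physvec{\omega},\Omega} = \sum_{p} d^{\pm}_{\physvec{\omega},p}$ produces the linear fluctuation terms in the bracket, the scalar $-\frac{2}{\sminvtemperature_{c}} \Omega \bar{M}^{+} \bar{M}^{-}$, and $-\frac{2}{\sminvtemperature_{c}} R_{\physvec{\omega},\Omega}$.

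The second step is to take $\oastate[\psi_{\physvec{\omega}}]$ of this same expansion, which yields $E_{\physvec{\omega},\Omega}$. Since $\fun{\oastate[\psi_{\physvec{\omega}}]}{d^{\gamma}_{\physvec{\omega},p}} = 0$ and $\fun{\oastate[\psi_{\physvec{\omega}}]}{d^{\pm}_{\physvec{\omega},p}} = 0$, all linear fluctuation terms drop out. What remains is
$$E_{\physvec{\omega},\Omega} = -\sum_{p} \varepsilon_{p} u^{z}_{p} - \frac{2}{\sminvtemperature_{c}} \Omega \bar{M}^{+} \bar{M}^{-} - \frac{2}{\sminvtemperature_{c}} \fun{\oastate[\psi_{\physvec{\omega}}]}{R_{\physvec{\omega},\Omega}}.$$
Product factorization over distinct slots, together with centering, kills the $p \neq q$ terms of $\fun{\oastate[\psi_{\physvec{\omega}}]}{R_{\physvec{\omega},\Omega}}$. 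Hence $\fun{\oastate[\psi_{\physvec{\omega}}]}{R_{\physvec{\omega},\Omega}} = \frac{1}{\Omega} \sum_{p} \fun{\oastate[\psi_{\physvec{\omega}}]}{d^{+}_{\physvec{\omega},p} d^{-}_{\physvec{\omega},p}}$, which matches the middle term of \eqref{eq:e-omega}.

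Subtracting the two expressions cancels the scalars $-\sum_{p} \varepsilon_{p} u^{z}_{p}$ and $-\frac{2}{\sminvtemperature_{c}} \Omega \bar{M}^{+} \bar{M}^{-}$, and leaves exactly the stated formula. No step is a real obstacle. The only points that need care are bookkeeping: the operator order $D^{+} D^{-}$ must be kept (no symmetrization, since $d^{+}_{\physvec{\omega},p}$ and $d^{-}_{\physvec{\omega},p}$ do not commute at the same site), and each mean must be paired with the correct fluctuation, namely $\bar{M}^{-}$ with $D^{+}$ and $\bar{M}^{+}$ with $D^{-}$.
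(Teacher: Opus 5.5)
Your proposal is correct and follows the paper's proof. You substitute $\sigma^{z}_{p} = u^{z}_{p} + d^{z}_{\physvec{\omega},p}$ and $S^{\pm}_{\Omega} = D^{\pm}_{\physvec{\omega},\Omega} + \Omega \bar{M}^{\pm}_{\physvec{\omega},\Omega}$ into \eqref{eq:bcs-hamiltonian}, expand, and subtract \eqref{eq:e-omega}, using centering and factorization to reduce $\fun{\oastate[\psi_{\physvec{\omega}}]}{R_{\physvec{\omega},\Omega}}$ to its diagonal terms; the only cosmetic difference is that you recompute $E_{\physvec{\omega},\Omega}$ by applying the state to the expansion rather than quoting its second form from \eqref{eq:e-omega}.
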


\begin{proof}
Substitute $\sigma^{z}_{p}
=
u^{z}_{p} + d^{z}_{\physvec{\omega},p}$ and $S^{\pm}_{\Omega}
=
D^{\pm}_{\physvec{\omega},\Omega} + \Omega \bar{M}^{\pm}_{\physvec{\omega},\Omega}$ into \eqref{eq:bcs-hamiltonian}:
$$\begin{aligned}
\physham_{\Omega}
&=
- \sum_{p
\leq
\Omega} \varepsilon_{p} u^{z}_{p}
- \sum_{p
\leq
\Omega} \varepsilon_{p} d^{z}_{\physvec{\omega},p}
- \frac{2}{\sminvtemperature_{c} \Omega} D^{+}_{\physvec{\omega},\Omega} D^{-}_{\physvec{\omega},\Omega}
\\ 
&\quad
- \frac{2}{\sminvtemperature_{c}} \bar{M}^{-}_{\physvec{\omega},\Omega} D^{+}_{\physvec{\omega},\Omega}
- \frac{2}{\sminvtemperature_{c}} \bar{M}^{+}_{\physvec{\omega},\Omega} D^{-}_{\physvec{\omega},\Omega}
- \frac{2}{\sminvtemperature_{c}} \Omega \bar{M}^{+}_{\physvec{\omega},\Omega} \bar{M}^{-}_{\physvec{\omega},\Omega}.
\end{aligned}$$
Subtracting \eqref{eq:e-omega} and using $R_{\physvec{\omega},\Omega}
=
\frac{1}{\Omega} D^{+}_{\physvec{\omega},\Omega} D^{-}_{\physvec{\omega},\Omega}$ with $\fun{\oastate[\psi_{\physvec{\omega}}]}{R_{\physvec{\omega},\Omega}}
=
\frac{1}{\Omega} \sum_{p
\leq
\Omega} \fun{\oastate[\psi_{\physvec{\omega}}]}{d^{+}_{\physvec{\omega},p} d^{-}_{\physvec{\omega},p}}$ gives the identity.
\end{proof}

The one-body part of Lemma \ref{lem:centered-expansion} is, up to the replacement of \(\bar{M}^{\pm}_{\physvec{\omega},\Omega}\) by their limits, a sum of centered one-site operators. The effective one-site Hamiltonian associated with the configuration \(\physvec{\omega}\) is \begin{equation}\label{eq:effective-hamiltonian}
\physham[h]_{\physvec{\omega},p}
=
- \varepsilon_{p} \sigma^{z}_{p} - \frac{2}{\sminvtemperature_{c}} \rbk{\bar{m}^{-}_{\physvec{\omega}} \sigma^{+}_{p} + \bar{m}^{+}_{\physvec{\omega}} \sigma^{-}_{p}}
=
- \physvec{b}_{\physvec{\omega},p} \cdot \physvec{\sigma}_{p},
\quad
\physvec{b}_{\physvec{\omega},p}
=
\rbk{\frac{1}{\sminvtemperature_{c}} \eta_{\physvec{\omega}} \bar{u}_{\physvec{\omega}}^{x}, \frac{1}{\sminvtemperature_{c}} \eta_{\physvec{\omega}} \bar{u}_{\physvec{\omega}}^{y}, \varepsilon_{p}},
\end{equation} where the second form follows from \(2 \rbk{\bar{m}^{-}_{\physvec{\omega}} \sigma^{+} + \bar{m}^{+}_{\physvec{\omega}} \sigma^{-}}
=
\eta_{\physvec{\omega}} \rbk{\bar{u}_{\physvec{\omega}}^{x} \sigma^{x} + \bar{u}_{\physvec{\omega}}^{y} \sigma^{y}}\), expanded from \(\sigma^{\pm}
=
\frac{1}{2} \rbk{\sigma^{x} \pm \imunit \sigma^{y}}\). Its matrix elements in the flip frame of the slot \(p\) are read off the decomposition of Lemma \ref{lem:frame} applied to \(\physvec{\sigma} \cdot \physvec{b}_{\physvec{\omega},p}\): \begin{equation}\label{eq:flip-amplitude}
\physham[h]_{\physvec{\omega},p} \xi_{\physvec{\omega},p}^{+}
=
- \rbk{\physvec{b}_{\physvec{\omega},p} \cdot \physvec{u}_{p}} \xi_{\physvec{\omega},p}^{+} + t_{\physvec{\omega},p} \xi_{\physvec{\omega},p}^{-},
\quad
t_{\physvec{\omega},p}
=
- 2 \physvec{b}_{\physvec{\omega},p} \cdot \physvec{u}^{+}_{p},
\end{equation} and the flip amplitude \(t_{\physvec{\omega},p}\) vanishes exactly when \(\physvec{b}_{\physvec{\omega},p}\) is parallel to \(\physvec{u}_{p}\), because \(\physvec{u}^{+}_{p}\) spans, together with its conjugate, the complexified orthogonal complement of \(\physvec{u}_{p}\).

\subsection{Operator representability and the gap equation}\label{operator-representability-and-the-gap-equation}

The limiting form \eqref{eq:sectorwise-bogoliubov-limit} does not automatically determine an operator on \(\sphilb{H}_{\physvec{\omega}}\). Its representability is controlled by the square summability of the configuration-dependent flip amplitudes \(\seq{t_{\physvec{\omega},p}}{p
\in
\semigrposint}\) from \eqref{eq:flip-amplitude}. When this condition holds, the form determines a unique self-adjoint Hamiltonian \(\physham_{\txtbogoliubov,\physvec{\omega}}\) for which the flip domain is a core. Alignment eliminates the flip amplitudes, makes this Hamiltonian diagonal, and leaves the gap equation as the self-consistency condition. Thus \(\opform{Q}_{\physvec{\omega}}\) denotes the limiting form, whereas \(\physham_{\txtbogoliubov,\physvec{\omega}}\) denotes its self-adjoint realization. The operator initially defined on the flip domain is used only in the construction and receives no separate symbol. The lowercase bold symbol \(\physvec{b}_{\physvec{\omega},p}\) remains the one-site effective field of \eqref{eq:effective-hamiltonian} and does not denote another Hamiltonian. The aligned conclusion and the gap equation below recast \cite[Theorem 1 and Eqs. (33)--(36)]{ThirringWehrl001} in the sign conventions of \eqref{eq:bcs-hamiltonian}. Part (2) specifies the self-adjoint realization and its core.

\begin{defn}[aligned spin configuration]\label{def:aligned-configuration}
Let $\physvec{\omega}$ be a spin configuration with a mean profile in the sense of
Definition \ref{def:mean-profile},
and let $\physvec{b}_{\physvec{\omega},p}$ be defined by
\eqref{eq:effective-hamiltonian} for every $p
\in
\semigrposint$.
The configuration $\physvec{\omega}$ is aligned if there exists a sign sequence
$\seq{\eta_{\physvec{\omega},p}}{p
\in
\semigrposint}$ such that
$$\eta_{\physvec{\omega},p}
\in
\setone{+1, -1},
\quad
\physvec{b}_{\physvec{\omega},p}
\neq
0,
\quad
\physvec{u}_{p}
=
\eta_{\physvec{\omega},p} \frac{\physvec{b}_{\physvec{\omega},p}}{\abs{\physvec{b}_{\physvec{\omega},p}}}
\quad
\rbk{p
\in
\semigrposint}.$$
\end{defn}

The vector \(\physvec{b}_{\physvec{\omega},p}\) is the self-consistent one-site field. Alignment is the zero-torque condition \(\physvec{b}_{\physvec{\omega},p}\times\physvec{u}_{p}=0\), so the reference product state is stationary. The sign \(\eta_{\physvec{\omega},p}=+1\) selects the lower local eigenstate of \(-\physvec{\sigma}_{p}\cdot\physvec{b}_{\physvec{\omega},p}\), whereas \(-1\) selects the inverted state; the all-positive branch is the product ground-state branch. Alignment also removes \eqref{eq:flip-amplitude}, diagonalizes the Bogoliubov--Haag Hamiltonian in relative excitations, and reduces transverse self-consistency to the gap equation in Theorem \ref{thm:bogoliubov}(4).

\begin{thm}[sectorwise Bogoliubov--Haag limit and the gap equation]\label{thm:bogoliubov}
Let $\physvec{\omega}$ be a spin configuration with a mean profile in the sense of
Definition \ref{def:mean-profile}.
Let $\oarepn_{\physvec{\omega}}$ act on $\sphilb{H}_{\physvec{\omega}}$ as in Definition \ref{def:spin-product-representation}.
\begin{enumerate}
\item For all $\Phi, \Psi
\in
\sphilb{D}_{\physvec{\omega}}$ the limit
\begin{equation}\label{eq:sectorwise-bogoliubov-limit}
\fun{\opform{Q}_{\physvec{\omega}}}{\Phi, \Psi}
=
\lim_{\Omega \to \infty}
\bkt{\Phi}{
\fun{\oarepn_{\physvec{\omega}}}{
\physham_{\Omega} - E_{\physvec{\omega},\Omega}} \Psi}
\end{equation}
exists and equals $\sum_{p \in \semigrposint} \bkt{\Phi}{\fun{\oarepn_{\physvec{\omega}}}{\physham[h]_{\physvec{\omega},p} - \fun{\oastate[\psi_{\physvec{\omega}}]}{\physham[h]_{\physvec{\omega},p}}} \Psi}$,
a sum with finitely many nonzero terms for each pair of flip vectors.
\item The form $\opform{Q}_{\physvec{\omega}}$ is represented on $\sphilb{D}_{\physvec{\omega}}$ by an operator if and only if
$\sum_{p
\in
\semigrposint} \abs{t_{\physvec{\omega},p}}^{2}
<
\infty$.
In that case there is a unique self-adjoint operator
$\physham_{\txtbogoliubov,\physvec{\omega}}$ on $\sphilb{H}_{\physvec{\omega}}$
such that $\sphilb{D}_{\physvec{\omega}}$ is a core for
$\physham_{\txtbogoliubov,\physvec{\omega}}$ and
$\bkt{\Phi}{\physham_{\txtbogoliubov,\physvec{\omega}} \Psi}
=
\fun{\opform{Q}_{\physvec{\omega}}}{\Phi, \Psi}$
for all $\Phi,
\Psi
\in
\sphilb{D}_{\physvec{\omega}}$.
This operator is the Bogoliubov--Haag Hamiltonian in the sector
$\physvec{\omega}$.
\item If $\physvec{\omega}$ is aligned in the sense of
Definition \ref{def:aligned-configuration},
then $t_{\physvec{\omega},p}
=
0$ for all $p$.
The restriction of the operator in (2) to
$\sphilb{D}_{\physvec{\omega}}$ is the diagonal sum
$$\left.\physham_{\txtbogoliubov,\physvec{\omega}}\right|_{\sphilb{D}_{\physvec{\omega}}}
=
\sum_{p \in \semigrposint} \fun{\oarepn_{\physvec{\omega}}}{\eta_{\physvec{\omega},p} \abs{\physvec{b}_{\physvec{\omega},p}} \rbk{1 - \physvec{\sigma}_{p} \cdot \physvec{u}_{p}}}.$$
On the flip vectors it acts by
$\physham_{\txtbogoliubov,\physvec{\omega}} \xi_{\physvec{\omega}}^{F}
=
\rbk{\sum_{p
\in
F} 2 \eta_{\physvec{\omega},p} \abs{\physvec{b}_{\physvec{\omega},p}}} \xi_{\physvec{\omega}}^{F}$.
\item If $\physvec{\omega}$ is aligned in the sense of
Definition \ref{def:aligned-configuration} and
$\eta_{\physvec{\omega}} \rbk{\bar{u}_{\physvec{\omega}}^{x}, \bar{u}_{\physvec{\omega}}^{y}}
\neq
0$,
the gap equation holds:
\begin{equation}\label{eq:gap-equation}
\lim_{\Omega \to \infty} \frac{1}{\Omega} \sum_{p
=
1}^{\Omega} \frac{\eta_{\physvec{\omega},p}}{\sqrt{\varepsilon_{p}^{2} + \frac{1}{\sminvtemperature_{c}^{2}} \eta_{\physvec{\omega}}^{2} \rbk{\rbk{\bar{u}_{\physvec{\omega}}^{x}}^{2} + \rbk{\bar{u}_{\physvec{\omega}}^{y}}^{2}}}}
=
\sminvtemperature_{c}.
\end{equation}
\end{enumerate}
\end{thm}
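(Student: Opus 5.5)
We prove the four parts in order, starting from the centered expansion of Lemma \ref{lem:centered-expansion}. Since $\bkt{\Phi}{\cdot\,\Psi}$ is sesquilinear, it suffices for (1) to take flip vectors $\Phi=\xi^{F}_{\physvec{\omega}}$ and $\Psi=\xi^{G}_{\physvec{\omega}}$. The expansion has three kinds of terms.

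\emph{Two-body remainder.} The term $-\frac{2}{\sminvtemperature_{c}}\rbk{R_{\physvec{\omega},\Omega}-\fun{\oastate[\psi_{\physvec{\omega}}]}{R_{\physvec{\omega},\Omega}}}$ has matrix element tending to $-\frac{2}{\sminvtemperature_{c}}\rbk{c_{\physvec{\omega}}-c_{\physvec{\omega}}}\bkt{\Phi}{\Psi}=0$. This follows from Proposition \ref{prop:lemma3}(3), together with the convergence of the scalars $\fun{\oastate[\psi_{\physvec{\omega}}]}{R_{\physvec{\omega},\Omega}}$ to $c_{\physvec{\omega}}$ given by the mean profile.

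\emph{One-body terms.} Write $\bar{M}^{\pm}_{\physvec{\omega},\Omega}=\bar m^{\pm}_{\physvec{\omega}}+\delta^{\pm}_{\Omega}$ with $\delta^{\pm}_{\Omega}\to0$. The centered one-site operators $d^{\gamma}_{\physvec{\omega},p}$ have vanishing diagonal reference matrix element. Hence $\bkt{\xi^{F}}{d_{p}\xi^{G}}=0$ unless $p\in F\cup G$, and only finitely many $p$ contribute. Each such term is bounded by a constant times $|\delta^{\pm}_{\Omega}|$, so the $\delta$-corrections vanish in the limit. The remaining sum is $\sum_{p\le\Omega}\bkt{\Phi}{\oarepn_{\physvec{\omega}}\rbk{\physham[h]_{\physvec{\omega},p}-\fun{\oastate[\psi_{\physvec{\omega}}]}{\physham[h]_{\physvec{\omega},p}}}\Psi}$, because $\physham[h]_{\physvec{\omega},p}$ minus its mean equals $-\varepsilon_{p}d^{z}_{p}-\frac{2}{\sminvtemperature_{c}}\rbk{\bar m^{-}d^{+}_{p}+\bar m^{+}d^{-}_{p}}$. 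This sum stabilizes once $\Omega\ge\max(F\cup G)$, which gives (1).

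For (2), compute the limiting form explicitly on flip vectors. Set $h_{p}=\physham[h]_{\physvec{\omega},p}-\fun{\oastate[\psi_{\physvec{\omega}}]}{\physham[h]_{\physvec{\omega},p}}$. In the basis $\xi^{\pm}_{p}$ (Lemma \ref{lem:frame}), $h_{p}$ has zero $(+,+)$ entry, $(-,+)$ entry $t_{\physvec{\omega},p}$, $(+,-)$ entry $\bar t_{\physvec{\omega},p}$, and $(-,-)$ entry $2\physvec{b}_{\physvec{\omega},p}\cdot\physvec{u}_{p}$. Therefore $\opform{Q}_{\physvec{\omega}}(\xi^{F},\xi^{G})$ consists of:
\begin{itemize}
\item a diagonal part $\sum_{p\in F}2\physvec{b}_{p}\cdot\physvec{u}_{p}$ when $F=G$;
\item off-diagonal entries $t_{p}$ when $F=G\cup\{p\}$ with $p\notin G$, and $\bar t_{p}$ when $G=F\cup\{p\}$ with $p\notin F$.
\end{itemize}
Representability on $\sphilb{D}_{\physvec{\omega}}$ means that $\Phi\mapsto\opform{Q}(\Phi,\xi^{G})$ is bounded for each $G$, i.e.\ that $\sum_{F}|\opform{Q}(\xi^{F},\xi^{G})|^{2}<\infty$. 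For $G=\emptyset$ this sum is $\sum_{p}|t_{p}|^{2}$, which gives necessity. Conversely, for general $G$ the column is square-summable iff $\sum_{p\notin G}|t_{p}|^{2}<\infty$, which gives sufficiency. We then define the operator $T$ by these columns on $\sphilb{D}_{\physvec{\omega}}$; it is symmetric.

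\emph{Main obstacle: essential self-adjointness of $T$.} We split $T=A+B$, where $A$ is the diagonal part (self-adjoint on the closure of $\sphilb{D}_{\physvec{\omega}}$, with $|A\xi^{F}|\le 2E'|F|\,\xi^{F}$ for $E'=\sup_{p}|\physvec{b}_{p}|$) and $B$ is the flip part, which changes $|F|$ by $\pm1$. We plan to establish the relative bound
\[
\|B\Psi\|\le\tau\,\|(N_{\physvec{\omega}}+1)^{1/2}\Psi\|,\qquad \tau^{2}=\sum_{p}|t_{p}|^{2},
\]
which is analogous to creation/annihilation estimates. Since $B$ only couples the $N$-eigenspaces $n$ and $n\pm1$, Nelson's commutator theorem applies with comparison operator $N_{\physvec{\omega}}+1$ from Proposition \ref{prop:number}, for which $\sphilb{D}_{\physvec{\omega}}$ is a core. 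Its hypotheses are $\|T\Psi\|\le c\|(N+1)\Psi\|$, which holds because $|A|\le 2E'N$ and $B$ is $N^{1/2}$-bounded, and $|\bkt{T\Psi}{N\Psi}-\bkt{N\Psi}{T\Psi}|\le c\|(N+1)^{1/2}\Psi\|^{2}$, which holds because $[N,B]$ is again $\tau$-bounded by $(N+1)^{1/2}$. Nelson's theorem then yields essential self-adjointness on $\sphilb{D}_{\physvec{\omega}}$. Uniqueness of $\physham_{\txtbogoliubov,\physvec{\omega}}$ with core $\sphilb{D}_{\physvec{\omega}}$ is then automatic: any such operator is the closure of $T$. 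This step is where most of the care goes, in particular checking the $\ell^{2}$ bound for $B$ on multi-flip vectors.

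For (3), alignment makes $\physvec{b}_{p}$ parallel to $\physvec{u}_{p}$, so $t_{p}=-2\physvec{b}_{p}\cdot\physvec{u}^{+}_{p}=0$. Moreover $\physvec{b}_{p}\cdot\physvec{u}_{p}=\eta_{\physvec{\omega},p}|\physvec{b}_{p}|$, so $T$ is the diagonal operator with eigenvalues $\sum_{p\in F}2\eta_{p}|\physvec{b}_{p}|$. Since $h_{p}=\eta_{p}|\physvec{b}_{p}|(1-\physvec{\sigma}_{p}\cdot\physvec{u}_{p})$, this matches the stated sum. It is essentially self-adjoint by the range criterion, exactly as in Proposition \ref{prop:number}.

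For (4), average the alignment identity. Writing $\Delta=\frac{\eta_{\physvec{\omega}}}{\sminvtemperature_{c}}|(\bar u^{x},\bar u^{y})|$, we have $u^{x}_{p}+\imunit u^{y}_{p}=\eta_{p}\frac{\eta_{\physvec{\omega}}(\bar u^{x}+\imunit\bar u^{y})/\sminvtemperature_{c}}{\sqrt{\varepsilon_{p}^{2}+\Delta^{2}}}$. Taking the Ces\`aro mean over $p\le\Omega$, the left side tends to $\eta_{\physvec{\omega}}(\bar u^{x}+\imunit\bar u^{y})$ by \eqref{eq:mean-polarization}. This common factor is nonzero, so dividing by it shows that the averages of $\eta_{p}/\sqrt{\varepsilon_{p}^{2}+\Delta^{2}}$ converge to $\sminvtemperature_{c}$, which is \eqref{eq:gap-equation}.
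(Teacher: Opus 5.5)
Your proof is correct. Parts (1), (3), (4) and the representability criterion in (2) follow the paper's argument essentially step by step. The genuine difference is how you prove essential self-adjointness in (2).

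\textbf{The two routes.} The paper applies Lemma \ref{lem:centered-tensor-sum} to the centered one-site operators $a_{p}=h_{\physvec{\omega},p}-\psi_{\physvec{\omega}}(h_{\physvec{\omega},p})$, using $\|a_{p}\xi^{+}_{\physvec{\omega},p}\|=|t_{\physvec{\omega},p}|$. In that lemma the infinite tensor product $\bigotimes_{p}e^{ita_{p}}$ is a strongly continuous unitary group whose Stone generator extends your operator $T$. A resolvent approximation on the finite blocks $\mathcal{K}_{\Lambda}$ then shows that $\mathcal{D}_{\physvec{\omega}}$ is a core. You instead apply Nelson's commutator theorem with comparison operator $N_{\physvec{\omega}}+1$, and this works: the diagonal part commutes with $N_{\physvec{\omega}}$ and is dominated by $2\sup_{p}|\physvec{b}_{\physvec{\omega},p}|\,N_{\physvec{\omega}}$, and $[N_{\physvec{\omega}},B]=B^{+}-B^{-}$ for the creation and annihilation parts of the flip operator.

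\textbf{A correction in the step you flagged.} The bound $\|B\Psi\|\le\tau\|(N_{\physvec{\omega}}+1)^{1/2}\Psi\|$ is false with constant $\tau$. Take a uniform superposition of $n$ flips over $M$ sites with $t_{p}=\tau/\sqrt{M}$; then $\|B\Psi\|^{2}\to(2n+1)\tau^{2}\|\Psi\|^{2}$ as $M\to\infty$, whereas $\|(N_{\physvec{\omega}}+1)^{1/2}\Psi\|^{2}=(n+1)\|\Psi\|^{2}$. Column-wise Cauchy--Schwarz does give $\|B^{-}\Psi\|^{2}\le\tau^{2}\|N_{\physvec{\omega}}^{1/2}\Psi\|^{2}$, but it does not close for $B^{+}$. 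For $B^{+}$ use the identity $f^{+}_{p}f^{-}_{p}+f^{-}_{p}f^{+}_{p}=1$ from Lemma \ref{lem:frame}, which gives $\|B^{+}\Psi\|^{2}\le\|B^{-}\Psi\|^{2}+\tau^{2}\|\Psi\|^{2}$. Hence $\|B\Psi\|\le2\tau\|(N_{\physvec{\omega}}+1)^{1/2}\Psi\|$, and the constant is irrelevant for Nelson's theorem.

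\textbf{What each route buys.} Your route additionally gives $\dom N_{\physvec{\omega}}\subset\dom\physham_{\txtbogoliubov,\physvec{\omega}}$, shows that the Hamiltonian is $N_{\physvec{\omega}}$-bounded, and shows that every core of $N_{\physvec{\omega}}$ is a core for it. The paper's route needs no number operator, and it identifies $e^{it\physham_{\txtbogoliubov,\physvec{\omega}}}$ with the product $\bigotimes_{p}e^{ita_{p}}$ of one-site unitaries. That directly exhibits the sector dynamics as a product evolution.
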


The hypotheses of Theorem \ref{thm:bogoliubov} determine the limit in \eqref{eq:gap-equation}, but they do not determine the summands separately. In the degenerate strong-coupling specialization, Corollary \ref{cor:strong-coupling} gives \(\eta_{\fun{\physvec{\omega}}{\phi},p}
=
1\), \(\eta_{\fun{\physvec{\omega}}{\phi}}
=
1\), and \[\rbk{\bar{u}_{\fun{\physvec{\omega}}{\phi}}^{x}}^{2}
+
\rbk{\bar{u}_{\fun{\physvec{\omega}}{\phi}}^{y}}^{2}
=
r^{2}
=
1 - \sminvtemperature_{c}^{2} \varepsilon^{2}.\] The limit in \eqref{eq:gap-equation} is then evaluated directly as \[\begin{aligned}
\lim_{\Omega \to \infty}
\frac{1}{\Omega}
\sum_{p = 1}^{\Omega}
\frac{\eta_{\fun{\physvec{\omega}}{\phi},p}}
{\sqrt{\varepsilon^{2}
+\frac{1}{\sminvtemperature_{c}^{2}}
\eta_{\fun{\physvec{\omega}}{\phi}}^{2}
r^{2}}}
=
\frac{1}{\sqrt{\varepsilon^{2} + \frac{1}{\sminvtemperature_{c}^{2}}
\rbk{1 - \sminvtemperature_{c}^{2} \varepsilon^{2}}}}
=
\sminvtemperature_{c}.
\end{aligned}\]

\begin{proof}

(1)
Lemma \ref{lem:centered-expansion} decomposes the finite-volume matrix element into its one-body and centered-interaction contributions:
$$\begin{aligned}
&\bkt{\Phi}{\fun{\oarepn_{\physvec{\omega}}}{
\physham_{\Omega} - E_{\physvec{\omega},\Omega}} \Psi}
\\ 
&=
- \sum_{p
\leq
\Omega} \bkt{\Phi}{\fun{\oarepn_{\physvec{\omega}}}{\sqbk{\varepsilon_{p} d^{z}_{\physvec{\omega},p} + \frac{2}{\sminvtemperature_{c}} \rbk{\bar{M}^{-}_{\physvec{\omega},\Omega} d^{+}_{\physvec{\omega},p} + \bar{M}^{+}_{\physvec{\omega},\Omega} d^{-}_{\physvec{\omega},p}}}} \Psi}
\\ 
&\quad
- \frac{2}{\sminvtemperature_{c}} \bkt{\Phi}{\fun{\oarepn_{\physvec{\omega}}}{R_{\physvec{\omega},\Omega} - \fun{\oastate[\psi_{\physvec{\omega}}]}{R_{\physvec{\omega},\Omega}}} \Psi}.
\end{aligned}$$
Proposition \ref{prop:lemma3}(3),
\eqref{eq:one-site-variance},
and the mean-profile condition in Definition \ref{def:mean-profile} give the two limits
$$\begin{aligned}
\lim_{\Omega \to \infty}
\bkt{\Phi}{\fun{\oarepn_{\physvec{\omega}}}{R_{\physvec{\omega},\Omega}} \Psi}
=
c_{\physvec{\omega}} \bkt{\Phi}{\Psi},
\quad
\lim_{\Omega \to \infty}
\fun{\oastate[\psi_{\physvec{\omega}}]}{R_{\physvec{\omega},\Omega}}
=
\lim_{\Omega \to \infty}
\frac{1}{\Omega}
\sum_{p \leq \Omega}
\frac{\rbk{1 + u^{z}_{p}}^{2}}{4}
=
c_{\physvec{\omega}}.
\end{aligned}$$
By linearity,
take $\Phi
=
\xi_{\physvec{\omega}}^{F}$ and $\Psi
=
\xi_{\physvec{\omega}}^{G}$,
and set
$\Lambda_{F,G}
=
\fun{\supp}{F} \cup \fun{\supp}{G}$.
The centered-interaction contribution satisfies
$$\begin{aligned}
\lim_{\Omega \to \infty}
\bkt{\Phi}{\fun{\oarepn_{\physvec{\omega}}}{R_{\physvec{\omega},\Omega}} \Psi}
- \fun{\oastate[\psi_{\physvec{\omega}}]}{R_{\physvec{\omega},\Omega}}
\bkt{\Phi}{\Psi}
=
c_{\physvec{\omega}} \bkt{\Phi}{\Psi}
-c_{\physvec{\omega}} \bkt{\Phi}{\Psi}
=
0.
\end{aligned}$$

For every $p
\notin
\Lambda_{F,G}$,
centering and factorization give
$\bkt{\Phi}{\fun{\oarepn_{\physvec{\omega}}}{d^{\pm}_{\physvec{\omega},p}} \Psi}
=
0$.
Since $\norm{d^{\pm}_{\physvec{\omega},p}}
\leq
2$,
the finite sums obey
$$\abs{
\sum_{p
\leq
\Omega}
\bkt{\Phi}{
\fun{\oarepn_{\physvec{\omega}}}{d^{\pm}_{\physvec{\omega},p}}
\Psi}
}
\leq
2 \abscard{\Lambda_{F,G}}.$$
Equation \eqref{eq:effective-hamiltonian} gives
$$\physham[h]_{\physvec{\omega},p}
-
\fun{\oastate[\psi_{\physvec{\omega}}]}{
\physham[h]_{\physvec{\omega},p}}
=
- \varepsilon_{p} d^{z}_{\physvec{\omega},p}
-
\frac{2}{\sminvtemperature_{c}}
\rbk{
\bar{m}^{-}_{\physvec{\omega}} d^{+}_{\physvec{\omega},p}
+
\bar{m}^{+}_{\physvec{\omega}} d^{-}_{\physvec{\omega},p}
}.$$
The error made by replacing
$\bar{M}^{\pm}_{\physvec{\omega},\Omega}$ with
$\bar{m}^{\pm}_{\physvec{\omega}}$ is bounded by
$$\begin{aligned}
&\abs{
-\sum_{p
\leq
\Omega}
\bkt{\Phi}{
\fun{\oarepn_{\physvec{\omega}}}{
\varepsilon_{p} d^{z}_{\physvec{\omega},p}
+
\frac{2}{\sminvtemperature_{c}}
\rbk{
\bar{M}^{-}_{\physvec{\omega},\Omega} d^{+}_{\physvec{\omega},p}
+
\bar{M}^{+}_{\physvec{\omega},\Omega} d^{-}_{\physvec{\omega},p}
}}
\Psi}
-
\sum_{p
\leq
\Omega}
\bkt{\Phi}{
\fun{\oarepn_{\physvec{\omega}}}{
\physham[h]_{\physvec{\omega},p}
-
\fun{\oastate[\psi_{\physvec{\omega}}]}{
\physham[h]_{\physvec{\omega},p}}
}
\Psi}
}
\\
&\leq
\frac{4 \abscard{\Lambda_{F,G}}}{\sminvtemperature_{c}}
\rbk{
\abs{\bar{M}^{-}_{\physvec{\omega},\Omega}
- \bar{m}^{-}_{\physvec{\omega}}}
+
\abs{\bar{M}^{+}_{\physvec{\omega},\Omega}
- \bar{m}^{+}_{\physvec{\omega}}}
}
\to
0
\quad
\rbk{\Omega
\to
\infty}.
\end{aligned}$$

For $\Lambda_{F,G}
\subset
\intint{1..\Omega}$,
the remaining one-site sum is already constant in $\Omega$:
$$\begin{aligned}
\sum_{p \leq \Omega}
\bkt{\xi_{\physvec{\omega}}^{F}}
{\fun{\oarepn_{\physvec{\omega}}}{\physham[h]_{\physvec{\omega},p}
-\fun{\oastate[\psi_{\physvec{\omega}}]}{\physham[h]_{\physvec{\omega},p}}}
\xi_{\physvec{\omega}}^{G}}
=
\begin{cases}
\displaystyle
\sum_{p \in F}
\bkt{\xi_{\physvec{\omega}}^{F}}
{\fun{\oarepn_{\physvec{\omega}}}{\physham[h]_{\physvec{\omega},p}
-\fun{\oastate[\psi_{\physvec{\omega}}]}{\physham[h]_{\physvec{\omega},p}}}
\xi_{\physvec{\omega}}^{F}},
&
F
=
G,
\\
\displaystyle
\bkt{\xi_{\physvec{\omega}}^{F}}{
\fun{\oarepn_{\physvec{\omega}}}{
\physham[h]_{\physvec{\omega},p_{0}}
-
\fun{\oastate[\psi_{\physvec{\omega}}]}{
\physham[h]_{\physvec{\omega},p_{0}}}
}
\xi_{\physvec{\omega}}^{G}},
&
F \triangle G
=
\setone{p_{0}},
\\
0,
&
\abscard{F \triangle G}
\geq
2.
\end{cases}
\end{aligned}$$
This proves the limit in (1).

(2)
Write $F
\sim
_{p} G$ when $F \triangle G
=
\setone{p}$.
From \eqref{eq:flip-amplitude} and its conjugate,
the nonzero matrix elements of $\opform{Q}_{\physvec{\omega}}$ on the flip basis are
$$\fun{\opform{Q}_{\physvec{\omega}}}{\xi_{\physvec{\omega}}^{F}, \xi_{\physvec{\omega}}^{F}}
=
\sum_{p
\in
F} 2 \eta'_{\physvec{\omega},p},
\quad
\fun{\opform{Q}_{\physvec{\omega}}}{\xi_{\physvec{\omega}}^{F}, \xi_{\physvec{\omega}}^{G}}
=
t_{\physvec{\omega},p} \text{ or } \cmpconj{t_{\physvec{\omega},p}} \quad \rbk{F
\sim
_{p} G},$$
where
$2 \eta'_{\physvec{\omega},p}
=
\bkt{\xi_{\physvec{\omega},p}^{-}}{\physham[h]_{\physvec{\omega},p} \xi_{\physvec{\omega},p}^{-}}
- \bkt{\xi_{\physvec{\omega},p}^{+}}{\physham[h]_{\physvec{\omega},p} \xi_{\physvec{\omega},p}^{+}}
=
2 \physvec{b}_{\physvec{\omega},p} \cdot \physvec{u}_{p}$
is the diagonal gap of the slot $p$.
If an operator represents $\opform{Q}_{\physvec{\omega}}$ on $\sphilb{D}_{\physvec{\omega}}$,
the functional
$\Phi
\mapsto
\fun{\opform{Q}_{\physvec{\omega}}}{\Phi,\xi_{\physvec{\omega}}^{\emptyset}}$
must be represented by a vector in $\sphilb{H}_{\physvec{\omega}}$.
The expansion coefficients of this vector in the flip basis are
$\fun{\opform{Q}_{\physvec{\omega}}}{\xi_{\physvec{\omega}}^{F}, \xi_{\physvec{\omega}}^{\emptyset}}$.
Their square sum is
$\sum_{p \in \semigrposint} \abs{t_{\physvec{\omega},p}}^{2}$,
which must be finite.
Conversely,
if $\sum_{p \in \semigrposint} \abs{t_{\physvec{\omega},p}}^{2} < \infty$,
define an operator on the flip basis by
$$\xi_{\physvec{\omega}}^{G}
\mapsto
\rbk{\sum_{p
\in
G} 2 \physvec{b}_{\physvec{\omega},p} \cdot \physvec{u}_{p}} \xi_{\physvec{\omega}}^{G} + \sum_{p
\notin
G} t_{\physvec{\omega},p} \xi_{\physvec{\omega}}^{G \cup \setone{p}} + \sum_{p
\in
G} \cmpconj{t_{\physvec{\omega},p}} \xi_{\physvec{\omega}}^{G \setminus \setone{p}},$$
with the middle series norm convergent.
Equation \eqref{eq:flip-amplitude} fixes the assignment of
$t_{\physvec{\omega},p}$ versus $\cmpconj{t_{\physvec{\omega},p}}$.
This assignment extends linearly to an operator on
$\sphilb{D}_{\physvec{\omega}}$.
Its matrix is Hermitian,
and the operator is symmetric.
Comparison with the matrix elements of $\opform{Q}_{\physvec{\omega}}$ computed above shows that
it realizes the form.

Apply Lemma \ref{lem:centered-tensor-sum} to
$a_{\physvec{\omega},p}
=
\physham[h]_{\physvec{\omega},p}
- \fun{\oastate[\psi_{\physvec{\omega}}]}{\physham[h]_{\physvec{\omega},p}}$.
Equation \eqref{eq:flip-amplitude} gives
$\norm{a_{\physvec{\omega},p}\xi_{\physvec{\omega},p}^{+}}
=
\abs{t_{\physvec{\omega},p}}$.
Thus the square-summability condition is exactly the hypothesis of that lemma.
The operator just constructed is therefore essentially self-adjoint.
Its self-adjoint closure is the operator
$\physham_{\txtbogoliubov,\physvec{\omega}}$ asserted in (2),
and $\sphilb{D}_{\physvec{\omega}}$ is a core.
If another self-adjoint operator has the same core and represents
$\opform{Q}_{\physvec{\omega}}$ there,
the two operators agree on that common core and hence are equal.

(3)
Definition \ref{def:aligned-configuration} and the definition of
$\physvec{u}^{+}_{p}$ in \eqref{eq:flip-vectors} give
$$\physvec{b}_{\physvec{\omega},p} \cdot \physvec{u}^{+}_{p}
=
\eta_{\physvec{\omega},p} \abs{\physvec{b}_{\physvec{\omega},p}} \physvec{u}_{p} \cdot \physvec{u}^{+}_{p}
=
\frac{\eta_{\physvec{\omega},p} \abs{\physvec{b}_{\physvec{\omega},p}}}{2}
\physvec{u}_{p} \cdot
\rbk{
\physvec{e}_{\physvec{\omega},p,1}
+
\imunit \physvec{e}_{\physvec{\omega},p,2}
}
=
0.$$
Equation \eqref{eq:flip-amplitude} now gives
$t_{\physvec{\omega},p}
=
0$.
The aligned one-site Hamiltonian and its centered version are
$$\physham[h]_{\physvec{\omega},p}
=
- \eta_{\physvec{\omega},p} \abs{\physvec{b}_{\physvec{\omega},p}} \physvec{\sigma}_{p} \cdot \physvec{u}_{p},
\quad
\physham[h]_{\physvec{\omega},p} - \fun{\oastate[\psi_{\physvec{\omega}}]}{\physham[h]_{\physvec{\omega},p}}
=
\eta_{\physvec{\omega},p} \abs{\physvec{b}_{\physvec{\omega},p}} \rbk{1 - \physvec{\sigma}_{p} \cdot \physvec{u}_{p}}.$$
By Lemma \ref{lem:frame},
its image under $\oarepn_{\physvec{\omega}}$ acts on $\xi_{\physvec{\omega}}^{F}$ as
$2 \eta_{\physvec{\omega},p} \abs{\physvec{b}_{\physvec{\omega},p}}$
times the indicator of $p
\in
F$.
The restriction of $\physham_{\txtbogoliubov,\physvec{\omega}}$ to
$\sphilb{D}_{\physvec{\omega}}$ is therefore the diagonal sum stated in (3).

(4)
Alignment determines the transverse components $\rbk{u^{x}_{p}, u^{y}_{p}}
=
\frac{\eta_{\physvec{\omega},p} \eta_{\physvec{\omega}}}{\sminvtemperature_{c} \abs{\physvec{b}_{\physvec{\omega},p}}} \rbk{\bar{u}_{\physvec{\omega}}^{x}, \bar{u}_{\physvec{\omega}}^{y}}$.
Average this identity over $p
\leq
\Omega$.
Equation \eqref{eq:mean-polarization} identifies the limit of the left side.
The averaged transverse gap equation is
$$\eta_{\physvec{\omega}} \rbk{\bar{u}_{\physvec{\omega}}^{x}, \bar{u}_{\physvec{\omega}}^{y}}
=
\frac{1}{\sminvtemperature_{c}} \eta_{\physvec{\omega}} \rbk{\bar{u}_{\physvec{\omega}}^{x}, \bar{u}_{\physvec{\omega}}^{y}} \lim_{\Omega \to \infty} \frac{1}{\Omega} \sum_{p
\leq
\Omega} \frac{\eta_{\physvec{\omega},p}}{\abs{\physvec{b}_{\physvec{\omega},p}}}.$$
The limit on the right exists because the left side converges and
$\frac{1}{\sminvtemperature_{c}} \eta_{\physvec{\omega}} \rbk{\bar{u}_{\physvec{\omega}}^{x}, \bar{u}_{\physvec{\omega}}^{y}}
\neq
0$.
Dividing by $\eta_{\physvec{\omega}} \rbk{\bar{u}_{\physvec{\omega}}^{x}, \bar{u}_{\physvec{\omega}}^{y}}
\neq
0$ and inserting $\abs{\physvec{b}_{\physvec{\omega},p}}$ from \eqref{eq:effective-hamiltonian} gives \eqref{eq:gap-equation}.
\end{proof}

The energy renormalization \eqref{eq:e-omega} reproduces \cite[Eq. (35)]{ThirringWehrl001}, up to constant rearrangements caused by the different operator ordering of the interaction.

\begin{rem}
The transverse mean of the configuration defines the gap parameter
$$\Delta_{\physvec{\omega}}
=
\frac{\eta_{\physvec{\omega}}}{\sminvtemperature_{c}}
\sqrt{\rbk{\bar{u}_{\physvec{\omega}}^{x}}^{2} + \rbk{\bar{u}_{\physvec{\omega}}^{y}}^{2}}.$$
Equation \eqref{eq:gap-equation} is the self-consistency equation for
$\Delta_{\physvec{\omega}}$.
The effective-field norm in \eqref{eq:effective-hamiltonian} becomes
$$\abs{\physvec{b}_{\physvec{\omega},p}}
=
\sqrt{\varepsilon_{p}^{2} + \Delta_{\physvec{\omega}}^{2}}.$$
In the aligned ground-state sector
$\eta_{\physvec{\omega},p}
\equiv
1$,
Theorem \ref{thm:bogoliubov}(3) gives the diagonal action
$$\physham_{\txtbogoliubov,\physvec{\omega}}
\xi_{\physvec{\omega}}^{F}
=
\rbk{
\sum_{p
\in
F}
2 \sqrt{\varepsilon_{p}^{2} + \Delta_{\physvec{\omega}}^{2}}
}
\xi_{\physvec{\omega}}^{F}.$$
Thus $\physham_{\txtbogoliubov,\physvec{\omega}}$ is nonnegative,
and a one-flip vector at the slot $p$ has excitation energy
$2 \sqrt{\varepsilon_{p}^{2} + \Delta_{\physvec{\omega}}^{2}}$.
If $\Delta_{\physvec{\omega}}
>
0$,
the excitation energy of a mode with $\varepsilon_{p}
=
0$ is $2 \Delta_{\physvec{\omega}}$.
This positive spectral separation is the energy gap,
which explains the name gap equation.
\end{rem}

\subsection{Gauge orbit of aligned product ground states}\label{gauge-orbit-of-aligned-product-ground-states}

The degenerate strong-coupling specialization of Theorem \ref{thm:bogoliubov} constructs constant aligned configurations indexed by the gauge angle. Their product vectors are ground states of the corresponding Bogoliubov--Haag Hamiltonians, and their product representations are mutually disjoint.

\begin{cor}[strong coupling and the gauge orbit of product ground states]\label{cor:strong-coupling}
Let the model be degenerate,
$\varepsilon_{p}
=
\varepsilon$ with $\sminvtemperature_{c} \abs{\varepsilon} < 1$,
and set $r
=
\sqrt{1 - \sminvtemperature_{c}^{2} \varepsilon^{2}}$.
For each $\phi
\in
[0, 2 \pi)$ define the unit vector
$$\fun{\physvec{u}}{\phi}
=
\rbk{r \fun{\cos}{\phi}, r \fun{\sin}{\phi}, \sminvtemperature_{c} \varepsilon}
\in
\mansphere^{2},$$
and let $\fun{\physvec{\omega}}{\phi}$ be the constant configuration determined by
$\physvec{u}_{p}
=
\fun{\physvec{u}}{\phi}$ for every $p
\in
\semigrposint$.
The configuration $\fun{\physvec{\omega}}{\phi}$
is aligned in the sense of Definition \ref{def:aligned-configuration},
with $\eta_{\fun{\physvec{\omega}}{\phi},p}
\equiv
1$,
$\eta_{\fun{\physvec{\omega}}{\phi}}
=
1$,
$\abs{\physvec{b}_{\fun{\physvec{\omega}}{\phi},p}}
=
\frac{1}{\sminvtemperature_{c}}$,
and satisfies the gap equation \eqref{eq:gap-equation}.
The associated Hamiltonian $$\physham_{\txtbogoliubov,\fun{\physvec{\omega}}{\phi}}
=
\frac{1}{\sminvtemperature_{c}}
\sum_{p \in \semigrposint}
\fun{\oarepn_{\fun{\physvec{\omega}}{\phi}}}{1 - \physvec{\sigma}_{p} \cdot \fun{\physvec{u}}{\phi}}
\geq
0$$ has the product vector $\xi_{\fun{\physvec{\omega}}{\phi}}^{\emptyset}$ as its ground state.
The gauge automorphisms permute these product states,
$\oastate[\psi_{\fun{\physvec{\omega}}{\phi}}] \circ \mathfrak{g}_{\vartheta}
=
\oastate[\psi_{\fun{\physvec{\omega}}{\phi + \vartheta}}]$,
and the representations $\oarepn_{\fun{\physvec{\omega}}{\phi}}$ for distinct $\phi
\in
[0, 2 \pi)$ are mutually disjoint.
\end{cor}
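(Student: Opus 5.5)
The corollary amounts to checking the hypotheses of Theorem \ref{thm:bogoliubov} for a constant configuration, then reading off gauge covariance and disjointness from \eqref{eq:gauge-on-states} and Proposition \ref{prop:gauge-implementable}.

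\emph{Mean profile and effective field.} The constant configuration $\fun{\physvec{\omega}}{\phi}$ trivially has a mean profile, namely $\lambda = \delta_{\fun{\physvec{u}}{\phi}}$. Its mean polarization is $\fun{\physvec{u}}{\phi}$ itself, so $\eta_{\fun{\physvec{\omega}}{\phi}} = 1$ and $\bar{\physvec{u}} = \fun{\physvec{u}}{\phi}$. Substituting into \eqref{eq:effective-hamiltonian} gives
$$\physvec{b}_{\fun{\physvec{\omega}}{\phi},p} = \rbk{\tfrac{r}{\sminvtemperature_{c}} \fun{\cos}{\phi}, \tfrac{r}{\sminvtemperature_{c}} \fun{\sin}{\phi}, \varepsilon} = \tfrac{1}{\sminvtemperature_{c}} \fun{\physvec{u}}{\phi}.$$
This uses $\varepsilon = \tfrac{1}{\sminvtemperature_{c}} \cdot \sminvtemperature_{c} \varepsilon$ for the third component. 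Hence $\abs{\physvec{b}} = 1/\sminvtemperature_{c} \neq 0$ and $\physvec{u}_{p} = \physvec{b}/\abs{\physvec{b}}$. So the configuration is aligned with every sign equal to $+1$.

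\emph{Gap equation, Hamiltonian, and ground state.} The gap equation follows from Theorem \ref{thm:bogoliubov}(4) because the transverse part is $r \neq 0$; this requires $\sminvtemperature_{c} \abs{\varepsilon} < 1$. It can also be checked directly, as in the computation after that theorem. Theorem \ref{thm:bogoliubov}(3) gives $t_{p} = 0$. Square summability is then trivial, and the operator of (2) restricted to $\sphilb{D}$ equals $\tfrac{1}{\sminvtemperature_{c}} \sum_{p} \fun{\oarepn}{1 - \physvec{\sigma}_{p} \cdot \fun{\physvec{u}}{\phi}}$. On flip vectors it acts by $\xi^{F} \mapsto \tfrac{2\abscard{F}}{\sminvtemperature_{c}} \xi^{F}$. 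The self-adjoint closure is therefore $\tfrac{2}{\sminvtemperature_{c}} N_{\fun{\physvec{\omega}}{\phi}}$ in the sense of Proposition \ref{prop:number}. This operator is nonnegative, and its kernel is spanned by $\xi^{\emptyset}$, which proves the ground-state assertion.

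\emph{Gauge covariance and disjointness.} Equation \eqref{eq:gauge-on-states} gives $\oastate[\psi_{\fun{\physvec{\omega}}{\phi}}] \circ \mathfrak{g}_{\vartheta} = \oastate[\psi_{R_{-\vartheta} \fun{\physvec{\omega}}{\phi}}]$. Since $R_{\vartheta}$ is rotation by $-\vartheta$, the rotation $R_{-\vartheta}$ turns the transverse angle by $+\vartheta$, so $R_{-\vartheta} \fun{\physvec{u}}{\phi} = \fun{\physvec{u}}{\phi+\vartheta}$. Pure product states are determined by their one-site Bloch vectors, which gives the stated identity. For $\phi \neq \phi'$ in $[0, 2\pi)$, set $\vartheta = \phi' - \phi \notin 2\pi\ringratint$. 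Then $\sum_{p} \rbk{1 - (u^{z}_{p})^{2}} = \sum_{p} r^{2} = \infty$. Proposition \ref{prop:gauge-implementable}(2) gives disjointness of the GNS representations of $\oastate[\psi_{\fun{\physvec{\omega}}{\phi}}]$ and $\oastate[\psi_{\fun{\physvec{\omega}}{\phi'}}]$. These are $\oarepn_{\fun{\physvec{\omega}}{\phi}}$ and $\oarepn_{\fun{\physvec{\omega}}{\phi'}}$, up to unitary equivalence via the cyclic reference vectors.

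\emph{Main point of care.} The step needing most attention is the sign convention of $R_{\vartheta}$ in \eqref{eq:gauge-on-states}, which must produce $\phi + \vartheta$ rather than $\phi - \vartheta$. I would verify it directly from $\fun{\oastate}{\fun{\mathfrak{g}_{\vartheta}}{\sigma^{+}_{p}}} = \fnexp{\imunit\vartheta} m^{+}$. With $m^{+} = \tfrac{r}{2} \fnexp{\imunit \phi}$, this yields the phase $\phi + \vartheta$. A second point is identifying the closure of the diagonal operator with a positive multiple of $N$, so that nonnegativity and the ground state hold on the full domain and not just on $\sphilb{D}$.
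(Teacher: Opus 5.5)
Your proposal is correct and follows the paper's proof essentially step for step. Alignment comes from $\boldsymbol{b}_{\boldsymbol{\omega}(\phi),p}=\beta_{c}^{-1}\boldsymbol{u}(\phi)$, the Hamiltonian and its ground state from Theorem~\ref{thm:bogoliubov}(3), and gauge covariance from \eqref{eq:gauge-on-states}, using that $R_{-\vartheta}$ rotates by $+\vartheta$.

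The only variation is the disjointness step, and both versions reduce to Proposition~\ref{prop:disjointness}(2). You invoke Proposition~\ref{prop:gauge-implementable}(2), using $\psi_{\boldsymbol{\omega}(\phi')}=\psi_{\boldsymbol{\omega}(\phi)}\circ\mathfrak{g}_{\phi'-\phi}$ and $\sum_{p}r^{2}=\infty$. The paper instead applies Lemma~\ref{lem:geometric-equivalence} and Proposition~\ref{prop:disjointness}(2) directly to $\lvert\boldsymbol{u}(\phi)-\boldsymbol{u}(\phi')\rvert^{2}=2r^{2}(1-\cos(\phi-\phi'))>0$.
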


This result gives a continuum of mutually disjoint, gauge-related ground-state product representations. Their gauge average and its central direct integral are constructed in Theorem \ref{thm:ground-state-direct-integral}. Their thermal analogues are the factor representations in the direct integral constructed in Section \ref{sec:decomposition}.

\begin{proof}
For the constant configuration the mean polarization is $\eta_{\fun{\physvec{\omega}}{\phi}} \bar{\physvec{u}}_{\fun{\physvec{\omega}}{\phi}}
=
\fun{\physvec{u}}{\phi}$.
It follows that $\eta_{\fun{\physvec{\omega}}{\phi}}
=
1$ and
$$\physvec{b}_{\fun{\physvec{\omega}}{\phi},p}
=
\vecbk{\frac{1}{\sminvtemperature_{c}} r \fun{\cos}{\phi},
\frac{1}{\sminvtemperature_{c}} r \fun{\sin}{\phi},
\varepsilon},
\quad
\abs{\physvec{b}_{\fun{\physvec{\omega}}{\phi},p}}
=
\sqrt{\frac{1}{\sminvtemperature_{c}^{2}} r^{2} + \varepsilon^{2}}
=
\frac{1}{\sminvtemperature_{c}},
\quad
\frac{\physvec{b}_{\fun{\physvec{\omega}}{\phi},p}}{\abs{\physvec{b}_{\fun{\physvec{\omega}}{\phi},p}}}
=
\fun{\physvec{u}}{\phi}.$$
Definition \ref{def:aligned-configuration} shows that the configuration is
aligned with $\eta_{\fun{\physvec{\omega}}{\phi},p}
=
1$,
and the gap equation \eqref{eq:gap-equation} is satisfied because every summand equals
$\sminvtemperature_{c}$.
The form of $\physham_{\txtbogoliubov,\fun{\physvec{\omega}}{\phi}}$ and its ground state are Theorem \ref{thm:bogoliubov}(3).
The gauge action on the states is \eqref{eq:gauge-on-states},
and $R_{- \vartheta}$ rotates through the angle $+ \vartheta$.
It follows that $R_{- \vartheta} \fun{\physvec{u}}{\phi}
=
\fun{\physvec{u}}{\phi + \vartheta}$.
Directly,
$\fun{\oastate[\psi_{\fun{\physvec{\omega}}{\phi}}]}{\fun{\mathfrak{g}_{\vartheta}}{\sigma^{+}_{p}}}
=
\fnexp{\imunit \vartheta} \frac{r}{2} \fnexp{\imunit \phi}$ by \eqref{eq:gauge-action}.
For $\phi
\neq
\phi'$ the constant configurations differ by a fixed positive amount at every site,
$\abs{\fun{\physvec{u}}{\phi} - \fun{\physvec{u}}{\phi'}}^{2}
=
2 r^{2} \rbk{1 - \fun{\cos}{\phi - \phi'}} > 0$.
It follows that $\sum_{p \in \semigrposint} \abs{\physvec{u}_{p} - \physvec{u}'_{p}}^{2}
=
\infty$ and Lemma \ref{lem:geometric-equivalence} with Proposition \ref{prop:disjointness}(2) give disjointness.
\end{proof}

\subsection{Direct integral of the product ground-state sectors}\label{direct-integral-of-the-product-ground-state-sectors}

The gauge orbit in Corollary \ref{cor:strong-coupling} admits a gauge-invariant average independently of the positive-temperature Gibbs-state limit. Its central decomposition consists of the pure product ground states on that orbit. The normalized Haar probability measure on the gauge circle is \begin{equation}\label{eq:gauge-haar-measure}
\msrprb_{\mansphere^{1}}
=
\frac{\opdmsr{\phi}}{2 \pi},
\quad
\phi
\in
\rightopeninterval{0}{2 \pi}.
\end{equation} The weak-\(\ast\) continuity of \(\phi
\mapsto
\oastate[\psi_{\fun{\physvec{\omega}}{\phi}}]\) follows from the gauge covariance in Corollary \ref{cor:strong-coupling}. For \(A
\in
\oa{A}\), it defines the gauge-averaged ground-sector state \begin{equation}\label{eq:ground-state-average}
\fun{\oastate[\psi_{\txtgs}]}{A}
=
\int_{\mansphere^{1}}
\fun{\oastate[\psi_{\fun{\physvec{\omega}}{\phi}}]}{A}
\opdmsr{\msrprb_{\mansphere^{1}}}(\phi).
\end{equation}

Fix the GNS triple \(\rbk{\sphilb{H}_{\txtgs,0},
\oarepn_{\txtgs,0},
\oagnsvector[\Psi_{\txtgs,0}]}\) of \(\oastate[\psi_{\fun{\physvec{\omega}}{0}}]\) and define \begin{equation}\label{eq:ground-fiber-representations}
\oarepn_{\txtgs,\phi}
=
\oarepn_{\txtgs,0} \circ \mathfrak{g}_{\phi}.
\end{equation} The triple \(\rbk{\sphilb{H}_{\txtgs,0},
\oarepn_{\txtgs,\phi},
\oagnsvector[\Psi_{\txtgs,0}]}\) is a GNS triple of \(\oastate[\psi_{\fun{\physvec{\omega}}{\phi}}]\). In this subsection, \(\physham_{\txtbogoliubov,\fun{\physvec{\omega}}{\phi}}\) denotes the realization of the associated Bogoliubov--Haag Hamiltonian on this GNS triple. For a finite \(\Lambda
\subset
\semigrposint\), define the ground-sector one-site Hamiltonians and product dynamics on \(\oa{A}_{\Lambda}\) by \begin{equation}\label{eq:ground-fiber-bogoliubov-dynamics}
\begin{aligned}
\physham[h]_{\txtgs,\phi,p}
&=
\frac{1}{\sminvtemperature_{c}}
\rbk{
1
-
\physvec{\sigma}_{p}
\cdot
\fun{\physvec{u}}{\phi}
},
\\
\fun{\tau^{\txtbogoliubov,\phi}_{t}}{A}
&=
\fnexp{
\imunit t
\sum_{p
\in
\Lambda}
\physham[h]_{\txtgs,\phi,p}
}
A
\fnexp{
-
\imunit t
\sum_{p
\in
\Lambda}
\physham[h]_{\txtgs,\phi,p}
}.
\end{aligned}
\end{equation} The compatible local actions extend by density to a point-norm continuous product dynamics on \(\oa{A}\) for each fixed \(\phi\). Define \begin{equation}\label{eq:ground-direct-integral-representation}
\begin{aligned}
\sphilb{H}_{\txtgs}
&=
\int_{\mansphere^{1}}^{\oplus}
\sphilb{H}_{\txtgs,0}
\opdmsr{\msrprb_{\mansphere^{1}}}(\phi),\\
\oarepn_{\txtgs}
&=
\int_{\mansphere^{1}}^{\oplus}
\oarepn_{\txtgs,\phi}
\opdmsr{\msrprb_{\mansphere^{1}}}(\phi),\\
\fun{\oagnsvector[\Psi_{\txtgs}]}{\phi}
&=
\oagnsvector[\Psi_{\txtgs,0}].
\end{aligned}
\end{equation}

\begin{thm}[direct integral of product ground-state sectors]\label{thm:ground-state-direct-integral}
Under the hypotheses of Corollary \ref{cor:strong-coupling},
the construction \eqref{eq:ground-direct-integral-representation} has the following properties.
\begin{enumerate}
\item The triple
$\rbk{\sphilb{H}_{\txtgs},
\oarepn_{\txtgs},
\oagnsvector[\Psi_{\txtgs}]}$
is the GNS triple of the gauge-averaged state
\eqref{eq:ground-state-average}.
\item The decomposition
\begin{equation}\label{eq:ground-central-decomposition}
\oastate[\psi_{\txtgs}]
=
\int_{\mansphere^{1}}
\oastate[\psi_{\fun{\physvec{\omega}}{\phi}}]
\opdmsr{\msrprb_{\mansphere^{1}}}(\phi)
\end{equation}
is the central decomposition of
$\oastate[\psi_{\txtgs}]$ into mutually disjoint pure product states.
For
\begin{equation}\label{eq:ground-von-neumann-algebra}
\oa{M}_{\txtgs}
=
\oadoublecommutant{\fun{\oarepn_{\txtgs}}{\oa{A}}},
\end{equation}
the center is
\begin{equation}\label{eq:ground-center}
\oa{M}_{\txtgs}
\cap
\oacommutant{\oa{M}_{\txtgs}}
=
\fun{\lp^{\infty}}{\mansphere^{1},\msrprb_{\mansphere^{1}}}.
\end{equation}
\item The fiber Hamiltonians have the positive self-adjoint direct integral
\begin{equation}\label{eq:ground-direct-integral-hamiltonian}
\physham_{\txtgs}
=
\int_{\mansphere^{1}}^{\oplus}
\physham_{\txtbogoliubov,\fun{\physvec{\omega}}{\phi}}
\opdmsr{\msrprb_{\mansphere^{1}}}(\phi).
\end{equation}
The operator \eqref{eq:ground-direct-integral-hamiltonian} satisfies
\begin{equation}\label{eq:ground-direct-integral-spectrum}
\begin{aligned}
\physham_{\txtgs}
&\geq
0,
\quad
\physham_{\txtgs}\oagnsvector[\Psi_{\txtgs}]
=
0,
\\ 
\Ker \physham_{\txtgs}
&=
\set{\phi \mapsto \fun{f}{\phi}\oagnsvector[\Psi_{\txtgs,0}]}
{f \in \fun{\lp^{2}}{\mansphere^{1},\msrprb_{\mansphere^{1}}}},
\\ 
\opspec{\physham_{\txtgs}}
&=
\set{\frac{2 n}{\sminvtemperature_{c}}}
{n \in \monnat}.
\end{aligned}
\end{equation}
In particular,
the ground space has the gauge-circle multiplicity and the spectral gap above it is
$2 / \sminvtemperature_{c}$.
\item The unitary group generated by
\eqref{eq:ground-direct-integral-hamiltonian} defines a normal
$\oawstar$-dynamics on $\oa{M}_{\txtgs}$.
The vector state induced by
$\oagnsvector[\Psi_{\txtgs}]$ is a ground state of this dynamics,
and its fiber dynamics is the Bogoliubov dynamics associated with
$\fun{\physvec{\omega}}{\phi}$.
\end{enumerate}
\end{thm}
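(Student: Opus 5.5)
The proof goes through the four items in order, starting from the fibers. Each fiber state $\oastate[\psi_{\fun{\physvec{\omega}}{\phi}}]$ is a pure product state. Its GNS representation $\oarepn_{\txtgs,\phi}=\oarepn_{\txtgs,0}\circ\mathfrak{g}_{\phi}$ is therefore irreducible, and Corollary \ref{cor:strong-coupling} makes these representations mutually disjoint for distinct phases.

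For (1), the map $\phi\mapsto\oarepn_{\txtgs,\phi}(A)$ is strongly continuous, by norm continuity of $\phi\mapsto\mathfrak{g}_{\phi}(A)$. Hence \eqref{eq:ground-direct-integral-representation} is a measurable field of representations and $\oarepn_{\txtgs}$ is a well-defined direct integral. The vector state of $\oagnsvector[\Psi_{\txtgs}]$ is the integral of the fiber vector states, which is \eqref{eq:ground-state-average}. The only nontrivial point is cyclicity of $\oagnsvector[\Psi_{\txtgs}]$. To prove it, I will show that the von Neumann algebra $\oa{M}_{\txtgs}$ contains the diagonal algebra $\fun{\lp^{\infty}}{\mansphere^{1}}\otimes 1$. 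Corollary \ref{cor:mean-spin} gives strong convergence $\oarepn_{\txtgs,\phi}(\frac{2}{r}M^{+}_{\Omega})\to \fnexp{\imunit\phi}$ in each fiber. These operators are uniformly bounded, so dominated convergence on the direct integral gives $\slim\oarepn_{\txtgs}(\frac{2}{r}M^{+}_{\Omega})=M_{\fnexp{\imunit\phi}}\otimes 1$. This function generates $\fun{\lp^{\infty}}{\mansphere^{1}}$ as a von Neumann algebra, so the diagonal algebra lies in $\oa{M}_{\txtgs}$. The closed span of $\oa{M}_{\txtgs}\oagnsvector[\Psi_{\txtgs}]$ then contains $f(\phi)\,\oarepn_{\txtgs,\phi}(A)\oagnsvector[\Psi_{\txtgs,0}]$ for all $f$ and $A$. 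Because each fiber vector is cyclic, this span is all of $\sphilb{H}_{\txtgs}$; a standard measurable-field argument with a countable dense set of local $A$ makes this precise.

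For (2), the diagonal algebra is contained in $\oa{M}_{\txtgs}$ by the argument just given. It also lies in $\oacommutant{\oa{M}_{\txtgs}}$, because $\oarepn_{\txtgs}$ is decomposable. Conversely, the center is contained in the commutant of the diagonal algebra, so it consists of decomposable operators. I expect the main obstacle here. Irreducibility of almost every fiber forces each decomposable operator in $\oacommutant{\oa{M}_{\txtgs}}$ to be fiberwise scalar. To carry this out, I would use the standard fact that $\oacommutant{\oa{M}_{\txtgs}}\cap(\text{decomposables})$ is the field of fiber commutants, which needs separability of $\oa{A}$. A cleaner route is to show that $\oacommutant{\oa{M}_{\txtgs}}$ is itself contained in the diagonal algebra. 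That follows once the diagonal algebra is known to be maximal abelian in $\oacommutant{\oa{M}_{\txtgs}}$ and to lie inside $\oa{M}_{\txtgs}$. Either way the center equals $\fun{\lp^{\infty}}{\mansphere^{1},\msrprb_{\mansphere^{1}}}$, and the decomposition \eqref{eq:ground-central-decomposition} is the one associated with this center. Hence it is the central decomposition into the mutually disjoint pure product states.

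For (3), Corollary \ref{cor:strong-coupling} and Theorem \ref{thm:bogoliubov}(3) show that each fiber Hamiltonian is diagonal in the flip basis, with eigenvalue $\frac{2}{\sminvtemperature_{c}}\abscard{F}$ on $\xi^{F}$. In the common space $\sphilb{H}_{\txtgs,0}$, the fiber $\phi$ uses the flip basis transported by the gauge implementation. The unitary $\phi\mapsto\oarepn_{\txtgs,0}$ identification makes every fiber Hamiltonian unitarily equivalent, in fact equal, to the fixed operator $N_{\fun{\physvec{\omega}}{0}}\cdot\frac{2}{\sminvtemperature_{c}}$ transported to the fiber GNS triple. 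The field is therefore measurable, and $\physham_{\txtgs}$ is essentially $1\otimes \frac{2}{\sminvtemperature_{c}}N$. Nonnegativity, the spectrum $\frac{2}{\sminvtemperature_{c}}\monnat$, the kernel $\fun{\lp^{2}}{\mansphere^{1}}\otimes\oagnsvector[\Psi_{\txtgs,0}]$, and the gap all follow, using that $\ker N$ is spanned by the reference vector.

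For (4), the unitaries $\fnexp{\imunit t\physham_{\txtgs}}$ are decomposable. In each fiber they implement $\tau^{\txtbogoliubov,\phi}$ on $\oarepn_{\txtgs,\phi}(\oa{A})$, because the local one-site generators reproduce the diagonal action. So conjugation maps $\oarepn_{\txtgs}(\oa{A})$ into $\oa{M}_{\txtgs}$, and by normality it maps $\oa{M}_{\txtgs}$ onto itself. This defines a $\oawstar$-dynamics. The ground-state property follows from $\physham_{\txtgs}\ge0$ together with $\physham_{\txtgs}\oagnsvector[\Psi_{\txtgs}]=0$, via the standard spectral criterion $-\imunit\bkt{\Psi}{A^{*}[\physham_{\txtgs},A]\Psi}\ge0$.
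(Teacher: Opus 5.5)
Your proposal is correct and follows the paper's proof essentially step for step: the order-parameter strong limit places $\fun{\lp^{\infty}}{\mansphere^{1}}\otimes 1$ in $\oa{M}_{\txtgs}$, decomposability plus fiberwise irreducibility gives $\oacommutant{\oa{M}_{\txtgs}}=\fun{\lp^{\infty}}{\mansphere^{1},\msrprb_{\mansphere^{1}}}$, and every fiber carries the same diagonal Hamiltonian. The one variation is cyclicity, which you prove by an annihilator argument over a countable dense set of local $A$ instead of the paper's partition of unity with trigonometric-polynomial approximation; this is equally valid and somewhat shorter.

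Two small points need fixing. In (3), the claim ``in fact equal'' should rest on the identity $\fun{\oarepn_{\txtgs,\phi}}{1-\physvec{\sigma}_{p}\cdot\fun{\physvec{u}}{\phi}}=\fun{\oarepn_{\txtgs,0}}{1-\physvec{\sigma}_{p}\cdot\fun{\physvec{u}}{0}}$, which follows from \eqref{eq:gauge-action}, and not on a ``gauge implementation'', since by Proposition \ref{prop:gauge-implementable}(2) the gauge is not unitarily implementable on $\sphilb{H}_{\txtgs,0}$. In (4), the ground-state criterion should read $\bkt{\Psi_{\txtgs}}{X^{*}[\physham_{\txtgs},X]\Psi_{\txtgs}}=\bkt{X\Psi_{\txtgs}}{\physham_{\txtgs}X\Psi_{\txtgs}}\geq 0$, without the extra factor $-\imunit$.
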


\begin{proof}

(1)
The expectation of the direct-integral vector is
$$\bkt{\oagnsvector[\Psi_{\txtgs}]}{
\fun{\oarepn_{\txtgs}}{A}
\oagnsvector[\Psi_{\txtgs}]}
=
\int_{\mansphere^{1}}
\fun{\oastate[\psi_{\fun{\physvec{\omega}}{\phi}}]}{A}
\opdmsr{\msrprb_{\mansphere^{1}}}(\phi)
=
\fun{\oastate[\psi_{\txtgs}]}{A}.$$
It remains to verify cyclicity.
Theorem \ref{thm:lln} and
\eqref{eq:ground-fiber-representations} give the fiberwise strong limit
$$\slim_{\Omega \to \infty}
\fun{\oarepn_{\txtgs,\phi}}{\frac{2}{r} M_{\Omega}^{+}}
=
\fnexp{\imunit \phi}.$$
The operators on the left are uniformly bounded.
Dominated convergence on the direct integral gives
\begin{equation}\label{eq:ground-order-parameter-limit}
\slim_{\Omega \to \infty}
\fun{\oarepn_{\txtgs}}{
\frac{2}{r} M_{\Omega}^{+}}
=
M_{\fnexp{\imunit\phi}}
\otimes
\id_{\sphilb{H}_{\txtgs,0}}.
\end{equation}
Powers of \eqref{eq:ground-order-parameter-limit} and its adjoint show that the closure of
$\fun{\oarepn_{\txtgs}}{\oa{A}}
\oagnsvector[\Psi_{\txtgs}]$
is invariant under multiplication by every trigonometric polynomial in $\phi$.

Continuous $\sphilb{H}_{\txtgs,0}$-valued sections are dense in
$\sphilb{H}_{\txtgs}$.
For a continuous section $\xi$,
the cyclicity of each fiber vector supplies,
locally in $\phi$,
an element $A
\in
\oa{A}$ for which
$\fun{\oarepn_{\txtgs,\phi}}{A}
\oagnsvector[\Psi_{\txtgs,0}]$
approximates $\fun{\xi}{\phi}$.
Norm continuity of the gauge action extends the approximation to an open arc.
A finite partition of unity subordinate to these arcs gives a finite sum of such sections with continuous scalar coefficients.
Uniform approximation of those coefficients by trigonometric polynomials and
\eqref{eq:ground-order-parameter-limit} place $\xi$ in the closure of
$\fun{\oarepn_{\txtgs}}{\oa{A}}
\oagnsvector[\Psi_{\txtgs}]$.
This proves (1).

(2)
Equation \eqref{eq:ground-order-parameter-limit} places the multiplication algebra
$\fun{\lp^{\infty}}{\mansphere^{1},\msrprb_{\mansphere^{1}}}$
inside $\oa{M}_{\txtgs}$.
An operator in
$\oacommutant{\oa{M}_{\txtgs}}$
commutes with this multiplication algebra and is therefore decomposable by
Lemma \ref{lem:diagonal-commutant}.
Its fibers commute with
$\fun{\oarepn_{\txtgs,\phi}}{\oa{A}}$ almost everywhere.
Proposition \ref{prop:irreducible} makes each fiber scalar,
and Corollary \ref{cor:strong-coupling} identifies the distinct fibers as mutually disjoint sectors.
Consequently,
\begin{equation}\label{eq:ground-commutant}
\oacommutant{\oa{M}_{\txtgs}}
=
\fun{\lp^{\infty}}{\mansphere^{1},\msrprb_{\mansphere^{1}}},
\end{equation}
and the bicommutant is the algebra of all decomposable
$\opspbddlin{\sphilb{H}_{\txtgs,0}}$-valued operators.
Equations \eqref{eq:ground-center} and
\eqref{eq:ground-central-decomposition} follow.
This proves (2).

(3)
Equation \eqref{eq:gauge-action} and the formula for
$\fun{\physvec{u}}{\phi}$ in Corollary \ref{cor:strong-coupling} give
$$\fun{\oarepn_{\txtgs,\phi}}{
1
-
\physvec{\sigma}_{p}
\cdot
\fun{\physvec{u}}{\phi}}
=
\fun{\oarepn_{\txtgs,0}}{
1
-
\physvec{\sigma}_{p}
\cdot
\fun{\physvec{u}}{0}}.$$
The represented fiber Hamiltonians are therefore the same positive self-adjoint operator on
$\sphilb{H}_{\txtgs,0}$.
The diagonal action in Theorem \ref{thm:bogoliubov}(3) is
$$\physham_{\txtbogoliubov,\fun{\physvec{\omega}}{0}}
\xi_{\fun{\physvec{\omega}}{0}}^{F}
=
\frac{2 \abscard{F}}{\sminvtemperature_{c}}
\xi_{\fun{\physvec{\omega}}{0}}^{F}.$$
Proposition \ref{prop:flip-basis} shows that these vectors form an orthonormal basis.
The kernel and spectrum in
\eqref{eq:ground-direct-integral-spectrum} follow fiberwise.
This proves (3).

(4)
The unitaries
$\fnexp{\imunit t\physham_{\txtgs}}$
are decomposable and preserve the decomposable von Neumann algebra
$\oa{M}_{\txtgs}$ by conjugation.
Their positive generator annihilates
$\oagnsvector[\Psi_{\txtgs}]$.
The restriction to the fiber at $\phi$ is generated by
$\physham_{\txtbogoliubov,\fun{\physvec{\omega}}{\phi}}$,
which proves (4).
\end{proof}

Theorem \ref{thm:ground-state-direct-integral} is a central decomposition of the gauge-averaged state into states that are ground states of the covariant family of fiber Bogoliubov dynamics. The precise distinction from a ground-state decomposition for one fixed dynamics is recorded next.

\subsection{The phase-extended algebra}\label{sec:phase-extended-algebra}

The obstruction on the bare quasi-local algebra disappears when the gauge phase is retained as a central classical variable. This construction is the restriction to the gauge orbit of the physical-algebra method of \cite{PavelBona001,PavelBona002}.

Bóna's physical algebra is not generally \(\fun{\conti}{\mansphere^{1}} \otimes \oa{A}\). In the norm-continuous setting of \cite{PavelBona001}, it has the form \(\oa{A} \otimes N\), where \(N\) is a commutative \(\oacstar\)-algebra of macroscopic classical observables. If \(K\) is the Gelfand spectrum of \(N\), then \(N
\cong
\fun{\conti}{K}\). A closed invariant gauge orbit \(\mathcal{O}
\subset
K\) gives the quotient \begin{equation}\label{eq:bona-orbit-quotient}
\rbk{\oa{A} \otimes_{\min} N}
\big/
\rbk{\oa{A} \otimes_{\min} I_{\mathcal{O}}}
\cong
\oa{A} \otimes_{\min} \fun{\conti}{\mathcal{O}},
\end{equation} where \(I_{\mathcal{O}}\) is the ideal of functions in \(\fun{\conti}{K}\) vanishing on \(\mathcal{O}\). The quotient identification uses the nuclearity of the UHF algebra \(\oa{A}\). For the ground-state orbit of Definition \ref{def:main-ground-orbit}, \(\mathcal{O}
=
\mansphere^{1}\). For this orbit, the quotient algebra is \(\oa{C}\) in \eqref{eq:phase-extended-algebra}. The dynamics on Bóna's full algebra generally combines a classical flow on \(K\) with the quantum dynamics in its fibers. Every point of the aligned ground-state orbit is stationary for the self-consistent classical flow. Stationarity makes the classical pullback trivial on this orbit. The extended dynamics reduces to the fiberwise formula below. The copy of the bare algebra in \(\oa{C}\) is the constant-section embedding \begin{equation}\label{eq:constant-section-embedding}
\fun{\iota}{A}(\phi)
=
A
\quad
\rbk{A
\in
\oa{A}}.
\end{equation}

\begin{prop}[phase-extended ground-state dynamics]\label{prop:phase-extended-ground-dynamics}
The formula
\begin{equation}\label{eq:phase-extended-ground-dynamics}
\funrbk{\fun{\widetilde{\alpha}^{\txtgs}_{t}}{F}}{\phi}
=
\fun{\tau^{\txtbogoliubov,\phi}_{t}}{\fun{F}{\phi}}
\end{equation}
defines a point-norm continuous automorphism group on
$\oa{C}$.
The state
\begin{equation}\label{eq:phase-extended-ground-state}
\fun{\oastate[\widetilde{\psi}_{\txtgs}]}{F}
=
\int_{\mansphere^{1}}
\fun{\oastate[\psi_{\fun{\physvec{\omega}}{\phi}}]}{\fun{F}{\phi}}
\opdmsr{\msrprb_{\mansphere^{1}}}(\phi)
\end{equation}
is a ground state of
$\rbk{\oa{C}, \widetilde{\alpha}^{\txtgs}}$.
The representation
\begin{equation}\label{eq:phase-extended-ground-representation}
\funrbk{\fun{\oarepn_{\txtgs,\txtexternal}}{F}\xi}{\phi}
=
\fun{\oarepn_{\txtgs,\phi}}{\fun{F}{\phi}}
\fun{\xi}{\phi}
\end{equation}
satisfies
\begin{equation}\label{eq:phase-extended-ground-covariance}
\fun{\alpha^{\txtgs}_{t}}{
\fun{\oarepn_{\txtgs,\txtexternal}}{F}
}
=
\fun{\oarepn_{\txtgs,\txtexternal}}{
\fun{\widetilde{\alpha}^{\txtgs}_{t}}{F}
}.
\end{equation}
If
$r
>
0$,
the constant-section algebra
$\fun{\iota}{\oa{A}}$
is not invariant under
$\widetilde{\alpha}^{\txtgs}$.
\end{prop}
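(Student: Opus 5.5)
The plan is to reduce everything to the phase-zero fiber through gauge covariance of the one-site Hamiltonians. Equation \eqref{eq:gauge-rotation} together with $R_{-\phi}\fun{\physvec{u}}{0}=\fun{\physvec{u}}{\phi}$ gives $\physvec{\sigma}_{p}\cdot\fun{\physvec{u}}{\phi}=\fun{\mathfrak{g}_{-\phi}}{\physvec{\sigma}_{p}\cdot\fun{\physvec{u}}{0}}$, hence $\physham[h]_{\txtgs,\phi,p}=\fun{\mathfrak{g}_{-\phi}}{\physham[h]_{\txtgs,0,p}}$. On each $\oa{A}_{\Lambda}$, and then on $\oa{A}$ by density, this yields
$$\tau^{\txtbogoliubov,\phi}_{t}=\mathfrak{g}_{-\phi}\circ\tau^{\txtbogoliubov,0}_{t}\circ\mathfrak{g}_{\phi}.$$

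\textbf{Well-definedness and continuity of $\widetilde{\alpha}^{\txtgs}$.} For $F\in\oa{C}$ I will show that $\phi\mapsto\fun{\tau^{\txtbogoliubov,\phi}_{t}}{\fun{F}{\phi}}$ is norm continuous. The map $(\phi,A)\mapsto\fun{\mathfrak{g}_{\phi}}{A}$ is jointly continuous, since each $\mathfrak{g}_{\phi}$ is isometric and $\phi\mapsto\fun{\mathfrak{g}_{\phi}}{A}$ is norm continuous. Composing with the isometry $\tau^{\txtbogoliubov,0}_{t}$ and then with $\mathfrak{g}_{-\phi}$ keeps the map continuous. The automorphism and group properties of $\widetilde{\alpha}^{\txtgs}$ hold fiberwise, and the sup norm is preserved.

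For point-norm continuity in $t$, the displayed identity gives
$$\sup_{\phi}\norm{\fun{\tau^{\txtbogoliubov,\phi}_{t}}{\fun{F}{\phi}}-\fun{F}{\phi}}=\sup_{A\in K}\norm{\fun{\tau^{\txtbogoliubov,0}_{t}}{A}-A},$$
where $K=\set{\fun{\mathfrak{g}_{\phi}}{\fun{F}{\phi}}}{\phi\in\mansphere^{1}}$ is norm compact. A point-norm continuous group of isometries converges uniformly on compact sets by an $\frac{\epsilon}{3}$ argument, so the supremum tends to $0$ as $t\to0$.

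\textbf{Covariance \eqref{eq:phase-extended-ground-covariance}.} Here $\alpha^{\txtgs}_{t}$ is conjugation by $\fnexp{\imunit t\physham_{\txtgs}}$ from Theorem \ref{thm:ground-state-direct-integral}(4). Since both sides are decomposable, it suffices to prove the fiber identity
$$\fnexp{\imunit t\physham_{\txtbogoliubov,\fun{\physvec{\omega}}{\phi}}}\fun{\oarepn_{\txtgs,\phi}}{A}\fnexp{-\imunit t\physham_{\txtbogoliubov,\fun{\physvec{\omega}}{\phi}}}=\fun{\oarepn_{\txtgs,\phi}}{\fun{\tau^{\txtbogoliubov,\phi}_{t}}{A}}$$
for local $A\in\oa{A}_{\Lambda}$, and then extend by density and the measurability of continuous sections. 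By the proof of Theorem \ref{thm:ground-state-direct-integral}(3), $\physham_{\txtbogoliubov,\fun{\physvec{\omega}}{\phi}}$ is diagonal in the flip basis, with one-site contributions $\fun{\oarepn_{\txtgs,\phi}}{\physham[h]_{\txtgs,\phi,p}}$. It therefore splits as
$$\physham_{\txtbogoliubov,\fun{\physvec{\omega}}{\phi}}=\fun{\oarepn_{\txtgs,\phi}}{\textstyle\sum_{p\in\Lambda}\physham[h]_{\txtgs,\phi,p}}+H_{\Lambda^{c}}.$$
The first summand is bounded, the second is a diagonal self-adjoint operator built from the slots outside $\Lambda$, and the two strongly commute. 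The operator $H_{\Lambda^{c}}$ also commutes with $\fun{\oarepn_{\txtgs,\phi}}{\oa{A}_{\Lambda}}$, because of the tensor factorization in Proposition \ref{prop:factorization} and the flip-basis description. The exponential therefore factorizes, and the local formula \eqref{eq:ground-fiber-bogoliubov-dynamics} results.

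\textbf{Ground state and non-invariance.} The representation $\oarepn_{\txtgs,\txtexternal}$, with vector $\oagnsvector[\Psi_{\txtgs}]$, reproduces $\oastate[\widetilde{\psi}_{\txtgs}]$. By the covariance just proved, $\widetilde{\alpha}^{\txtgs}$ is implemented by $\fnexp{\imunit t\physham_{\txtgs}}$. Theorem \ref{thm:ground-state-direct-integral}(3) gives $\physham_{\txtgs}\geq0$ and $\physham_{\txtgs}\oagnsvector[\Psi_{\txtgs}]=0$. For $F,G$ in the smooth domain I will write
$$-\imunit\,\fun{\oastate[\widetilde{\psi}_{\txtgs}]}{\faadj{F}\fun{\delta}{F}}=\bkt{\oarepn_{\txtgs,\txtexternal}(F)\oagnsvector[\Psi_{\txtgs}]}{\physham_{\txtgs}\,\oarepn_{\txtgs,\txtexternal}(F)\oagnsvector[\Psi_{\txtgs}]}\geq0,$$
which is the standard ground-state criterion (Bratteli--Robinson, Prop.~5.3.19). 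This step is not cyclic, so I only need the implementing inequality, not cyclicity.

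For non-invariance, take $A=\sigma^{+}_{1}$. The generator gives
$$\frac{d}{dt}\Big|_{t=0}\fun{\tau^{\txtbogoliubov,\phi}_{t}}{\sigma^{+}_{1}}=-\frac{\imunit}{\sminvtemperature_{c}}\commutator{\physvec{\sigma}_{1}\cdot\fun{\physvec{u}}{\phi}}{\sigma^{+}_{1}}.$$
This commutator contains the term $r\fnexp{-\imunit\phi}\sigma^{z}_{1}$ up to a nonzero constant, in addition to a $\phi$-independent multiple of $\sigma^{+}_{1}$. It is therefore not constant in $\phi$ when $r>0$. If $\fun{\widetilde{\alpha}^{\txtgs}_{t}}{\fun{\iota}{\sigma^{+}_{1}}}$ were a constant section for all small $t$, its derivative at $t=0$ would also be constant, which is a contradiction.

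I expect the main obstacle to be the rigorous fiber covariance: splitting the unbounded $\physham_{\txtbogoliubov}$ into a local bounded part and a commuting remainder, and passing from local elements to all of $\oa{C}$ inside the direct integral. The compactness argument for uniform point-norm continuity is the other place where care is needed.
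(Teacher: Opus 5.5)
Your proof is correct. The covariance, ground-state and non-invariance steps use the same mechanism as the paper: fiberwise implementation through the local/tail splitting of Theorem \ref{thm:dynamics}, the positive generator $H_{\mathrm{gs}}$ annihilating $\Psi_{\mathrm{gs}}$, and a $\phi$-dependent one-site derivative. Your proof that $\widetilde{\alpha}^{\mathrm{gs}}$ is a point-norm continuous group, however, takes a different route.

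The paper defines the formula on the dense subalgebra $C(\mathbb{S}^{1})\odot\mathcal{A}_{\mathrm{loc}}$. It uses that $(t,\phi)\mapsto\tau^{\mathrm{Bog},\phi}_{t}(A)$ is jointly norm continuous for local $A$, since finitely many one-site generators vary continuously in $\phi$, and then extends by isometry. You instead use the exact conjugacy $\tau^{\mathrm{Bog},\phi}_{t}=\mathfrak{g}_{-\phi}\circ\tau^{\mathrm{Bog},0}_{t}\circ\mathfrak{g}_{\phi}$. This shows directly that the pointwise formula maps all of $\mathcal{C}$ into $\mathcal{C}$. It also reduces uniformity in $\phi$ to uniform convergence of the single group $\tau^{\mathrm{Bog},0}$ on the compact set $\{\mathfrak{g}_{\phi}(F(\phi))\}$. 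Your version avoids the density-and-extension step and makes the uniform modulus explicit. The paper's version does not need the fibers to lie on one gauge orbit, and would apply to any norm-continuous family of uniformly bounded one-site generators.

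For the ground state, you check $-\mathrm{i}\,\widetilde{\psi}_{\mathrm{gs}}(F^{*}\delta(F))\geq 0$ directly, rather than restricting the $W^{*}$-ground state of Theorem \ref{thm:ground-state-direct-integral}(4). You should state the domain step explicitly: $F\in\operatorname{dom}\delta$ gives $\pi_{\mathrm{gs},\mathrm{ext}}(F)\Psi_{\mathrm{gs}}\in\operatorname{dom}H_{\mathrm{gs}}$ with $\mathrm{i}H_{\mathrm{gs}}\pi_{\mathrm{gs},\mathrm{ext}}(F)\Psi_{\mathrm{gs}}=\pi_{\mathrm{gs},\mathrm{ext}}(\delta(F))\Psi_{\mathrm{gs}}$. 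This follows from Stone's theorem, because $e^{\mathrm{i}tH_{\mathrm{gs}}}\pi_{\mathrm{gs},\mathrm{ext}}(F)\Psi_{\mathrm{gs}}=\pi_{\mathrm{gs},\mathrm{ext}}(\widetilde{\alpha}^{\mathrm{gs}}_{t}(F))\Psi_{\mathrm{gs}}$.

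There is one small slip in the non-invariance computation. In fact $[\boldsymbol{\sigma}\cdot\boldsymbol{u}(\phi),\sigma^{+}]=2\beta_{c}\varepsilon\,\sigma^{+}-re^{\mathrm{i}\phi}\sigma^{z}$, so the $\sigma^{z}$-coefficient is $-re^{\mathrm{i}\phi}$, not a fixed multiple of $re^{-\mathrm{i}\phi}$. Your conclusion that it is nonconstant for $r>0$ is unaffected.
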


\begin{proof}
The algebraic tensor product
$\fun{\conti}{\mansphere^{1}} \odot \oa{A}_{\txtloc}$
is dense in
$\oa{C}$.
For every local
$A$,
the map
\begin{equation*}
\rbk{t,\phi}
\longmapsto
\fun{\tau^{\txtbogoliubov,\phi}_{t}}{A}
\end{equation*}
is norm continuous.
Its norm-continuity is uniform on compact subsets of
$\fldreal \times \mansphere^{1}$,
because the local product dynamics is generated by finitely many one-site Hamiltonians depending continuously on
$\phi$.
The fiberwise group law and the preceding uniform continuity show that
\eqref{eq:phase-extended-ground-dynamics} defines a point-norm continuous group of isometric
$\ast$-automorphisms on the dense algebraic tensor product.
The group extends uniquely to
$\oa{C}$.

For
$F
\in
\oa{C}$,
the right side of
\eqref{eq:phase-extended-ground-representation}
is a bounded measurable operator field with essential supremum at most
$\norm{F}$.
It defines the representation
$\oarepn_{\txtgs,\txtexternal}$.
Equations \eqref{eq:ground-wstar-dynamics-fibers} and
\eqref{eq:phase-extended-ground-representation} give
\eqref{eq:phase-extended-ground-covariance} fiberwise.
The constant section
$\oagnsvector[\Psi_{\txtgs}]$
represents the state
\eqref{eq:phase-extended-ground-state}.
Theorem \ref{thm:ground-state-direct-integral}(4) makes its vector state a ground state of
$\alpha^{\txtgs}$.
Covariance \eqref{eq:phase-extended-ground-covariance} identifies
\eqref{eq:phase-extended-ground-state}
with the restriction of that ground state to the invariant represented copy of
$\oa{C}$.

For
$A
\in
\oa{A}$,
the evolved constant section is
\begin{equation}\label{eq:evolved-constant-section}
\funrbk{
\fun{\widetilde{\alpha}^{\txtgs}_{t}}{\fun{\iota}{A}}
}{\phi}
=
\fun{\tau^{\txtbogoliubov,\phi}_{t}}{A}.
\end{equation}
For the one-site observable
$\sigma^{z}_{p}$,
the derivative of the section in
\eqref{eq:evolved-constant-section} at
$t
=
0$
is
\begin{equation}\label{eq:phase-extended-generator-obstruction}
\left.
\frac{\partial}{\partial t}
\fun{\tau^{\txtbogoliubov,\phi}_{t}}{\sigma^{z}_{p}}
\right|_{t = 0}
=
\frac{2 r}{\sminvtemperature_{c}}
\rbk{
\fun{\sin}{\phi}\sigma^{x}_{p}
-
\fun{\cos}{\phi}\sigma^{y}_{p}
}.
\end{equation}
This section depends nontrivially on
$\phi$
when
$r
>
0$.
Equation \eqref{eq:phase-extended-generator-obstruction} shows that
\eqref{eq:evolved-constant-section} is not generally constant.
\end{proof}

\begin{rem}[fiber ground states and descent of the dynamics]\label{rem:ground-dynamics-descent}
The direct-integral Hamiltonian
\eqref{eq:ground-direct-integral-hamiltonian} defines the normal
$\oawstar$-dynamics whose action on
$X
\in
\oa{M}_{\txtgs}$
is
\begin{equation}\label{eq:ground-wstar-dynamics}
\fun{\alpha^{\txtgs}_{t}}{X}
=
\fnexp{\imunit t\physham_{\txtgs}}
X
\fnexp{-\imunit t\physham_{\txtgs}}.
\end{equation}
Its action on a represented quasi-local observable is the decomposable operator whose fiber is
\begin{equation}\label{eq:ground-wstar-dynamics-fibers}
\funrbk{
\fun{\alpha^{\txtgs}_{t}}{
\fun{\oarepn_{\txtgs}}{A}
}
\xi
}{\phi}
=
\fun{\oarepn_{\txtgs,\phi}}{
\fun{\tau^{\txtbogoliubov,\phi}_{t}}{A}
}
\fun{\xi}{\phi}.
\end{equation}
Equation \eqref{eq:ground-wstar-dynamics} combines the
$\phi$-dependent Bogoliubov dynamics after the passage to the von Neumann algebra
$\oa{M}_{\txtgs}$.
\end{rem}

\begin{prop}[non-descent to the quasi-local algebra]\label{prop:ground-dynamics-non-descent}
Suppose that
$r
>
0$.
There is no group homomorphism
$\tau
\colon
\fldreal
\to
\fun{\grauto}{\oa{A}}$ such that
$t
\mapsto
\fun{\tau_{t}}{A}$
is norm continuous for every
$A
\in
\oa{A}$ and
\begin{equation}\label{eq:ground-dynamics-descent}
\fun{\alpha^{\txtgs}_{t}}{
\fun{\oarepn_{\txtgs}}{A}
}
=
\fun{\oarepn_{\txtgs}}{
\fun{\tau_{t}}{A}
}.
\end{equation}
The represented quasi-local algebra
$\fun{\oarepn_{\txtgs}}{\oa{A}}$
is not invariant under
$\alpha^{\txtgs}$.
\end{prop}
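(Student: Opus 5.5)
The plan is to show that any candidate $\tau_{t}$ would have to coincide with every fiber dynamics $\tau^{\txtbogoliubov,\phi}_{t}$ at once. This is impossible because these fiber dynamics genuinely depend on $\phi$ when $r>0$. The argument proves the second (stronger) assertion first: non-invariance of $\fun{\oarepn_{\txtgs}}{\oa{A}}$ under $\alpha^{\txtgs}$. The first assertion then follows, since \eqref{eq:ground-dynamics-descent} would force invariance; neither norm continuity of $\tau$ nor the group law is needed.

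\emph{Step 1 (fiber identification).} Suppose that for some $t$ and $A\in\oa{A}$ there is $Y\in\oa{A}$ with $\fun{\alpha^{\txtgs}_{t}}{\fun{\oarepn_{\txtgs}}{A}}=\fun{\oarepn_{\txtgs}}{Y}$. Both sides are decomposable operators. By \eqref{eq:ground-wstar-dynamics-fibers}, the fibers of the left side are $\fun{\oarepn_{\txtgs,\phi}}{\fun{\tau^{\txtbogoliubov,\phi}_{t}}{A}}$, and by \eqref{eq:ground-direct-integral-representation} the fibers of the right side are $\fun{\oarepn_{\txtgs,\phi}}{Y}$. Equality of decomposable operators gives equality of these fibers for $\msrprb_{\mansphere^{1}}$-almost every $\phi$. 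Each $\oarepn_{\txtgs,\phi}=\oarepn_{\txtgs,0}\circ\mathfrak{g}_{\phi}$ is a nonzero representation of the simple UHF algebra (Appendix \ref{sec:oa-appendix}), hence faithful and isometric. It follows that
$$\norm{Y-\fun{\tau^{\txtbogoliubov,\phi}_{t}}{A}}=0$$
for almost every $\phi$. The map $\phi\mapsto\fun{\tau^{\txtbogoliubov,\phi}_{t}}{A}$ is norm continuous: for local $A$ this was noted in the proof of Proposition \ref{prop:phase-extended-ground-dynamics}, and it extends to all of $\oa{A}$ by density. Continuity upgrades the almost-everywhere equality to every $\phi\in\mansphere^{1}$. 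Hence $\phi\mapsto\fun{\tau^{\txtbogoliubov,\phi}_{t}}{A}$ would be constant, equal to $Y$.

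\emph{Step 2 (phase dependence).} Take $A=\sigma^{z}_{p}$. By \eqref{eq:ground-fiber-bogoliubov-dynamics}, $\tau^{\txtbogoliubov,\phi}_{t}$ acts on site $p$ as the rotation generated by $-\frac{1}{\sminvtemperature_{c}}\physvec{\sigma}\cdot\fun{\physvec{u}}{\phi}$, the constant part dropping out. The result is an explicit trigonometric expression in $t$ whose first-order term is \eqref{eq:phase-extended-generator-obstruction}. I will not rely on the derivative alone but exhibit a single $t$ at which constancy fails. The Bloch-rotation formula gives $\fun{\tau^{\txtbogoliubov,\phi}_{t}}{\sigma^{z}_{p}}=\physvec{\sigma}_{p}\cdot\physvec{v}_{\phi}(t)$, where $\physvec{v}_{\phi}(t)$ is $\physvec{e}_{z}$ rotated about $\fun{\physvec{u}}{\phi}$ by angle $\mp 2t/\sminvtemperature_{c}$. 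Its transverse component contains $\frac{r}{1}\sin(2t/\sminvtemperature_{c})\,(\sin\phi,-\cos\phi)$ plus a term along $(\cos\phi,\sin\phi)$. For $t$ with $\sin(2t/\sminvtemperature_{c})\neq0$ and $r>0$, the vectors at $\phi=0$ and $\phi=\pi$ differ. Since the Pauli matrices are linearly independent, the corresponding elements of $\oa{A}_{\setone{p}}$ differ, which contradicts Step 1.

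\emph{Conclusion and main obstacle.} This proves that $\fun{\oarepn_{\txtgs}}{\oa{A}}$ is not $\alpha^{\txtgs}$-invariant. Any homomorphism $\tau$ satisfying \eqref{eq:ground-dynamics-descent} would produce such a $Y=\fun{\tau_{t}}{\sigma^{z}_{p}}$ for every $t$, so no such $\tau$ exists. The only delicate point is Step 1: passing from equality of decomposable operators to equality in $\oa{A}$. This uses two facts: faithfulness of every fiber representation, which comes from simplicity, and continuity in $\phi$, which removes the null set. The remaining computation is an explicit $2\times2$ rotation.
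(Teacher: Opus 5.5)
Your proposal is correct and follows the paper's argument. In both, the fibers are identified via \eqref{eq:ground-wstar-dynamics-fibers}, faithfulness of each $\oarepn_{\txtgs,\phi}$ comes from simplicity, norm continuity in $\phi$ removes the null set, and the contradiction is the $\phi$-dependence of the one-site Bogoliubov dynamics. The only difference is cosmetic: you show that $\tau^{\txtbogoliubov,\phi}_{t}(\sigma^{z}_{p})$ depends on $\phi$ at a single finite time, instead of using the generator \eqref{eq:phase-extended-generator-obstruction}, and this makes explicit that neither the group law nor point-norm continuity of $\tau$ is needed.
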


\begin{proof}
Suppose that such a group homomorphism exists.
Lemma \ref{lem:simple} shows that the UHF algebra
$\oa{A}$ is simple.
The nonzero representation
$\oarepn_{\txtgs}$ is therefore faithful.
For each fixed $t$,
the algebraic part of
\eqref{eq:ground-dynamics-descent} is equivalent to
\begin{equation}\label{eq:ground-represented-algebra-invariance}
\fun{\alpha^{\txtgs}_{t}}{
\fun{\oarepn_{\txtgs}}{\oa{A}}
}
=
\fun{\oarepn_{\txtgs}}{\oa{A}}.
\end{equation}
If \eqref{eq:ground-represented-algebra-invariance} held,
it would determine
$\tau_{t}$
uniquely by conjugating the restriction of
$\alpha^{\txtgs}_{t}$
with the faithful representation
$\oarepn_{\txtgs}$.
Point-norm continuity of the resulting group would remain an additional requirement.
Equivalently,
\eqref{eq:ground-dynamics-descent} would require
\begin{equation}\label{eq:ground-dynamics-descent-fibers}
\fun{\oarepn_{\txtgs,\phi}}{
\fun{\tau_{t}}{A}
}
=
\fun{\oarepn_{\txtgs,\phi}}{
\fun{\tau^{\txtbogoliubov,\phi}_{t}}{A}
}
\end{equation}
for almost every $\phi$ with the same element
$\fun{\tau_{t}}{A}
\in
\oa{A}$ on every fiber.
Faithfulness of
$\oarepn_{\txtgs,\phi}$
would force
$\tau_{t}
=
\tau^{\txtbogoliubov,\phi}_{t}$
for almost every $\phi$.
The norm continuity of the right side in $\phi$ would extend this equality to every $\phi$.
This is impossible when
$r
>
0$.
The one-site generator in
\eqref{eq:phase-extended-generator-obstruction}
depends nontrivially on
$\phi$.
No phase-independent group on
$\oa{A}$
can satisfy
\eqref{eq:ground-dynamics-descent-fibers}.
In particular,
\eqref{eq:ground-wstar-dynamics} does not preserve
$\fun{\oarepn_{\txtgs}}{\oa{A}}$.
\end{proof}

\begin{rem}[ground-state interpretation]
This obstruction agrees with the absence of a norm limit of the finite-volume dynamics
\eqref{eq:finite-volume-dynamics} on $\oa{A}$.
Theorem \ref{thm:dynamics} instead gives a strong limit only after a product representation has been fixed.

For a fixed
$\oacstar$-dynamical system
$\rbk{\oa{A},\tau}$,
a decomposition inside its ground-state convex set requires almost every component state to be a ground state for that same group
$\tau$.
Equation \eqref{eq:ground-central-decomposition} instead has components
$\oastate[\psi_{\fun{\physvec{\omega}}{\phi}}]$
that are ground states of their respective groups
$\tau^{\txtbogoliubov,\phi}$.
It is therefore a central decomposition of
$\oastate[\psi_{\txtgs}]$
and a direct integral of fiber ground-state representations,
but it cannot be a ground-state decomposition for a single dynamics on
$\oa{A}$
that reproduces all of the fiber Bogoliubov dynamics.
A common fiberwise dynamics is obtained in
Proposition \ref{prop:phase-extended-ground-dynamics}
after adjoining the phase as a central classical variable.
This is the gauge-orbit restriction of the physical-algebra construction introduced for quantum mean-field systems in
\cite{PavelBona001}.
The equilibrium and ground-state analysis of \cite{PavelBona002} includes the strong-coupling quasi-spin BCS model in that framework.
Proposition \ref{prop:ground-dynamics-non-descent} does not contradict those results,
because it asserts non-descent only to the bare UHF algebra $\oa{A}$.
Theorem \ref{thm:ground-state-direct-integral} gives the corresponding von Neumann direct-integral realization.
\end{rem}

\section{Time Evolution in Product Sectors}\label{sec:time}

The thermodynamic-limit Heisenberg dynamics is derived inside each aligned product sector. Uniform estimates for noncommutative polynomials and a classical substitution produce the limiting dynamics. The final subsections isolate the failures that occur without the gap equation or at the level of implementing unitaries.

For the finite-volume Hamiltonian \eqref{eq:bcs-hamiltonian}, the Heisenberg dynamics is defined by \begin{equation}\label{eq:finite-volume-dynamics}
\fun{\tau^{\Omega}_{t}}{A}
=
\fnexp{\imunit t \physham_{\Omega}} A \fnexp{- \imunit t \physham_{\Omega}}.
\end{equation} The dynamics \eqref{eq:finite-volume-dynamics} does not converge in norm on \(\oa{A}\), because the mean-field interaction is not quasi-local.

\begin{defn}[frozen mean-field dynamics]\label{def:frozen-mean-field-dynamics}
Let $\physvec{\omega}$ be a spin configuration with a mean polarization.
Its frozen mean field is the time-independent family
$\seq{\physvec{b}_{\physvec{\omega},p}}{p \in \semigrposint}$
of \eqref{eq:effective-hamiltonian},
evaluated from $\physvec{\omega}$ once and held fixed for all times.
The associated frozen mean-field dynamics is the product automorphism group generated by
$\physham[h]_{\physvec{\omega},p}
=
- \physvec{b}_{\physvec{\omega},p} \cdot \physvec{\sigma}_{p}$.
For $A
\in
\oa{A}_{\Lambda}$,
its time-$t$ automorphism is
$$A
\mapsto
\rbk{\bigotimes_{p \in \Lambda}
\fnexp{\imunit t \physham[h]_{\physvec{\omega},p}}}
A
\rbk{\bigotimes_{p \in \Lambda}
\fnexp{- \imunit t \physham[h]_{\physvec{\omega},p}}}.$$
\end{defn}

Thus frozen means that the field is not recomputed from the time-evolved mean polarization. Nevertheless, in every product representation associated with a configuration aligned in the sense of Definition \ref{def:aligned-configuration}, it converges strongly to the Bogoliubov dynamics, whereas Proposition \ref{prop:no-frozen-mean-field-dynamics} gives an exact failure mechanism for the dynamics of Definition \ref{def:frozen-mean-field-dynamics} in degenerate constant non-aligned sectors. The analysis proves the aligned convergence statement of \cite[Section 4]{ThirringWehrl001}, establishes the counterexample, and prepares the algebraic machinery reused for the thermal Green functions in Section \ref{sec:green}. The symbols \(\sphilb{H}_{\physvec{\omega}}\) and \(\oarepn_{\physvec{\omega}}\) retain the meanings fixed in Definition \ref{def:spin-product-representation}.

\subsection{Mean-field polynomials and uniform derivation bounds}\label{sec:polynomials}

Noncommutative polynomials in the actual finite-volume observables combine finitely supported quasi-spin observables with the weighted mean spins. These polynomials are stable under commutation with the finite-volume Hamiltonian, and the uniform bounds proved below control the Taylor series for the represented dynamics.

The commutator with \(\physham_{\Omega}\) maps local observables into polynomials in local observables and the \(\varepsilon\)-weighted means \begin{equation}\label{eq:weighted-means}
M^{\gamma}_{n, \Omega}
=
\frac{1}{\Omega} \sum_{p
=
1}^{\Omega} \varepsilon_{p}^{n} \sigma^{\gamma}_{p},
\quad
M^{\pm}_{n, \Omega}
=
\frac{1}{\Omega} \sum_{p
=
1}^{\Omega} \varepsilon_{p}^{n} \sigma^{\pm}_{p}
\quad \rbk{n
\geq
0},
\end{equation} which for \(n
=
0\) are the mean magnetizations of \eqref{eq:collective}. The operators \(\frac{1}{2 \Omega} \sum_{p
\leq
\Omega} \varepsilon^{n}_{p} \physvec{\sigma}_{p}\) are the quantities \(s_{n}\) of \cite[Eq. (38)]{ThirringWehrl001}.

\begin{defn}[mean-field polynomial family]\label{def:mean-field-polynomial-family}
Fix an integer $\Omega_{0}
\geq
1$ and a finite set
$\Lambda_{0}
\subset
\intint{1..\Omega_{0}}$.
A mean-field polynomial family is a sequence
$\seq{F_{\Omega}}{\Omega \geq \Omega_{0}}$ with
$F_{\Omega}
\in
\oa{A}_{\Omega}$ of the form
$$F_{\Omega}
=
\sum_{a
\in
I} c_{a}
\prod_{j
=
1}^{\ell_{a}} X_{a,j,\Omega},$$
where $I$ is finite and the coefficients $c_{a}
\in
\fldcmp$ do not depend on $\Omega$.
Each factor $X_{a,j,\Omega}$ is either a fixed local operator
$\sigma^{\gamma}_{p}$ with $p
\in
\Lambda_{0}$ and $\gamma
\in
\setone{+, -, z}$,
or one of the operators $M^{\gamma}_{n,\Omega}$ in
\eqref{eq:weighted-means}.
The choice and order of all factors are independent of $\Omega$.
A family with one summand is called a monomial family.
If its unique coefficient is $c$ and its mean-field factors have indices
$n_{1}, \dotsc, n_{r}$,
set
$$\fun{\nu}{F}
=
\abs{c} E^{\sum_{j
=
1}^{r} n_{j}}.$$
\end{defn}

For a monomial family, the local and mean-field factor bounds give the uniform estimate \(\norm{F_{\Omega}}
\leq
\fun{\nu}{F}\).

All commutators in the finite-volume derivation are ordinary operator commutators in \(\oa{A}_{\Omega}\). The one-site commutators may equivalently be computed first in \(\spmat{2}{\fldcmp}\) and then placed at the site \(p\).

\begin{lem}[exact finite-volume derivation]\label{lem:exact-derivation}
For $\Omega
\geq
\Omega_{0}$,
define
$\fun{\delta_{\Omega}}{A}
=
\imunit \commutator{\physham_{\Omega}}{A}$ for every $A
\in
\oa{A}_{\Omega}$.
Its values on the factors in Definition
\ref{def:mean-field-polynomial-family} are
\begin{align*}
\fun{\delta_{\Omega}}{\sigma^{\gamma}_{p}} &
=
\imunit \rbk{- \varepsilon_{p} \commutator{\sigma^{z}_{p}}{\sigma^{\gamma}_{p}} - \frac{2}{\sminvtemperature_{c}} M^{+}_{0,\Omega} \commutator{\sigma^{-}_{p}}{\sigma^{\gamma}_{p}} - \frac{2}{\sminvtemperature_{c}} \commutator{\sigma^{+}_{p}}{\sigma^{\gamma}_{p}} M^{-}_{0,\Omega}}, \\
\fun{\delta_{\Omega}}{M^{+}_{n,\Omega}} &
=
\imunit \rbk{- 2 M^{+}_{n + 1,\Omega} + \frac{2}{\sminvtemperature_{c}} M^{+}_{0,\Omega} M^{z}_{n,\Omega}}, \\
\fun{\delta_{\Omega}}{M^{-}_{n,\Omega}} &
=
\imunit \rbk{2 M^{-}_{n + 1,\Omega} - \frac{2}{\sminvtemperature_{c}} M^{z}_{n,\Omega} M^{-}_{0,\Omega}}, \\
\fun{\delta_{\Omega}}{M^{z}_{n,\Omega}} &
=
- \frac{4 \imunit}{\sminvtemperature_{c}} \rbk{M^{+}_{0,\Omega} M^{-}_{n,\Omega} - M^{+}_{n,\Omega} M^{-}_{0,\Omega}}.
\end{align*}
For every mean-field polynomial family
$\seq{F_{\Omega}}{\Omega \geq \Omega_{0}}$,
$\seq{\fun{\delta_{\Omega}}{F_{\Omega}}}{\Omega \geq \Omega_{0}}$
is again a mean-field polynomial family.
\end{lem}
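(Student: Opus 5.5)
The plan is a direct commutator computation. I rewrite the Hamiltonian from \eqref{eq:bcs-hamiltonian} and \eqref{eq:collective} in mean-field form,
$$\physham_{\Omega}
=
- \sum_{q \leq \Omega} \varepsilon_{q} \sigma^{z}_{q}
- \frac{2 \Omega}{\sminvtemperature_{c}} M^{+}_{0,\Omega} M^{-}_{0,\Omega},$$
and use only the one-site relations $\commutator{\sigma^{z}}{\sigma^{\pm}} = \pm 2 \sigma^{\pm}$ and $\commutator{\sigma^{+}}{\sigma^{-}} = \sigma^{z}$, together with the fact that operators at distinct sites commute. For the weighted means \eqref{eq:weighted-means} these give
$$\commutator{\sigma^{z}_{q}}{M^{\pm}_{n,\Omega}} = \pm \frac{2}{\Omega} \varepsilon_{q}^{n} \sigma^{\pm}_{q},
\quad
\commutator{M^{\mp}_{0,\Omega}}{M^{\pm}_{n,\Omega}} = \mp \frac{1}{\Omega} M^{z}_{n,\Omega},
\quad
\commutator{M^{\pm}_{0,\Omega}}{M^{z}_{n,\Omega}} = \mp \frac{2}{\Omega} M^{\pm}_{n,\Omega},$$
and $\commutator{M^{\pm}_{0,\Omega}}{\sigma^{\gamma}_{p}} = \frac{1}{\Omega} \commutator{\sigma^{\pm}_{p}}{\sigma^{\gamma}_{p}}$ for $p \leq \Omega$. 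The key point is that the factor $\Omega$ in front of the interaction cancels the factor $\frac{1}{\Omega}$ produced by each commutator with a mean.

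With these identities, each of the four displayed formulas follows from the Leibniz rule $\commutator{\physham_{\Omega}}{X} = \sum_{q} \commutator{-\varepsilon_{q}\sigma^{z}_{q}}{X} - \frac{2\Omega}{\sminvtemperature_{c}}\rbk{M^{+}_{0,\Omega}\commutator{M^{-}_{0,\Omega}}{X} + \commutator{M^{+}_{0,\Omega}}{X} M^{-}_{0,\Omega}}$.
\begin{enumerate}
\item For $X = \sigma^{\gamma}_{p}$, only the term $q = p$ survives in the one-body part. The two interaction terms give exactly the $M^{+}_{0,\Omega}\commutator{\sigma^{-}_{p}}{\sigma^{\gamma}_{p}}$ and $\commutator{\sigma^{+}_{p}}{\sigma^{\gamma}_{p}}M^{-}_{0,\Omega}$ terms, with the operator order kept.
\item For $X = M^{+}_{n,\Omega}$, the one-body sum raises the weight index, giving $-2M^{+}_{n+1,\Omega}$. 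In the interaction only $M^{+}_{0,\Omega}\commutator{M^{-}_{0,\Omega}}{M^{+}_{n,\Omega}}$ is nonzero, which yields $\frac{2}{\sminvtemperature_{c}} M^{+}_{0,\Omega} M^{z}_{n,\Omega}$. The case $X = M^{-}_{n,\Omega}$ is symmetric.
\item For $X = M^{z}_{n,\Omega}$, the one-body part commutes with $X$. Both interaction terms contribute, giving $-\frac{4\imunit}{\sminvtemperature_{c}}\rbk{M^{+}_{0,\Omega}M^{-}_{n,\Omega} - M^{+}_{n,\Omega}M^{-}_{0,\Omega}}$.
\end{enumerate}

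For the closure statement, $\delta_{\Omega}$ is a derivation, so on a monomial $\prod_{j} X_{a,j,\Omega}$ it equals $\sum_{j} X_{a,1,\Omega}\dotsm \fun{\delta_{\Omega}}{X_{a,j,\Omega}} \dotsm X_{a,\ell_{a},\Omega}$. Substituting the four formulas gives a finite sum of products of admissible factors whose coefficients and factor order do not depend on $\Omega$. The local factors arise because the one-site commutators $\commutator{\sigma^{z}}{\sigma^{\gamma}}$ and $\commutator{\sigma^{\pm}}{\sigma^{\gamma}}$ with $\gamma \in \setone{+,-,z}$ are scalar multiples of $\sigma^{+}$, $\sigma^{-}$, $\sigma^{z}$, or zero. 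The coefficient $\varepsilon_{p}$ attached to a local factor is a fixed number, since $p \in \Lambda_{0}$ is fixed.

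The only delicate step is bookkeeping rather than analysis. First, one must check that no $\Omega$-dependent scalar survives; this holds because every interaction commutator carries exactly $\frac{1}{\Omega}$. Second, the same $\Omega_{0}$ and $\Lambda_{0}$ must serve for the new family, which holds because $\delta_{\Omega}$ creates no new local sites.
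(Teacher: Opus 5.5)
Your proof is correct and follows essentially the same route as the paper: a direct commutator computation from the one-site Pauli relations, followed by the Leibniz rule for the closure of mean-field polynomial families. The only cosmetic difference is bookkeeping: you write the interaction as $\frac{2\Omega}{\beta_{c}} M^{+}_{0,\Omega} M^{-}_{0,\Omega}$ and use commutators between means, each carrying a factor $\frac{1}{\Omega}$, whereas the paper keeps $S^{\pm}_{\Omega}$ and uses that $[S^{\pm}_{\Omega}, \sigma^{\gamma}_{p}]$ is a one-site operator.
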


\begin{proof}
For the factor $\sigma^{\gamma}_{p}$,
the kinetic part contributes $\imunit \commutator{- \varepsilon_{p} \sigma^{z}_{p}}{\sigma^{\gamma}_{p}}$,
and the interaction part
$$\begin{aligned}
- \frac{2 \imunit}{\sminvtemperature_{c} \Omega}
\commutator{S^{+}_{\Omega} S^{-}_{\Omega}}{\sigma^{\gamma}_{p}}
&=
- \frac{2 \imunit}{\sminvtemperature_{c} \Omega}
\rbk{S^{+}_{\Omega}
\commutator{S^{-}_{\Omega}}{\sigma^{\gamma}_{p}}
+ \commutator{S^{+}_{\Omega}}{\sigma^{\gamma}_{p}} S^{-}_{\Omega}}
\\ 
&=
- \frac{2 \imunit}{\sminvtemperature_{c}}
\rbk{M^{+}_{0, \Omega}
\commutator{\sigma^{-}_{p}}{\sigma^{\gamma}_{p}}
+ \commutator{\sigma^{+}_{p}}{\sigma^{\gamma}_{p}} M^{-}_{0, \Omega}},
\end{aligned}$$
because $\commutator{S^{\pm}_{\Omega}}{\sigma^{\gamma}_{p}}
=
\commutator{\sigma^{\pm}_{p}}{\sigma^{\gamma}_{p}}$ is one-site.

For the factor $M^{+}_{n,\Omega}$,
the kinetic part gives $$\frac{\imunit}{\Omega}
\sum_{p
\leq
\Omega}
\varepsilon^{n}_{p}
\commutator{- \varepsilon_{p} \sigma^{z}_{p}}{\sigma^{+}_{p}}
=
- 2 \imunit M^{+}_{n + 1, \Omega}$$
from $\commutator{\sigma^{z}}{\sigma^{+}}
=
2 \sigma^{+}.$
The interaction part gives,
with $\commutator{S^{-}_{\Omega}}{\sigma^{+}_{p}}
=
- \sigma^{z}_{p}$ and $\commutator{S^{+}_{\Omega}}{\sigma^{+}_{p}}
=
0$,
$$- \frac{2 \imunit}{\sminvtemperature_{c} \Omega} \cdot \frac{1}{\Omega} \sum_{p
\leq
\Omega} \varepsilon^{n}_{p} S^{+}_{\Omega} \rbk{- \sigma^{z}_{p}}
=
\frac{2 \imunit}{\sminvtemperature_{c}} M^{+}_{0, \Omega} M^{z}_{n, \Omega}.$$

For the factor $M^{-}_{n,\Omega}$,
$\commutator{\sigma^{z}}{\sigma^{-}}
=
- 2 \sigma^{-}$ gives $+ 2 \imunit M^{-}_{n + 1, \Omega}$,
and $\commutator{S^{+}_{\Omega}}{\sigma^{-}_{p}}
=
\sigma^{z}_{p}$,
$\commutator{S^{-}_{\Omega}}{\sigma^{-}_{p}}
=
0$ give $- \frac{2 \imunit}{\sminvtemperature_{c}} M^{z}_{n, \Omega} M^{-}_{0, \Omega}$.

For the factor $M^{z}_{n,\Omega}$,
the kinetic part vanishes,
and $\commutator{S^{-}_{\Omega}}{\sigma^{z}_{p}}
=
2 \sigma^{-}_{p}$,
$\commutator{S^{+}_{\Omega}}{\sigma^{z}_{p}}
=
- 2 \sigma^{+}_{p}$ give
$$- \frac{2 \imunit}{\sminvtemperature_{c} \Omega} \cdot \frac{1}{\Omega} \sum_{p
\leq
\Omega} \varepsilon^{n}_{p} \rbk{2 S^{+}_{\Omega} \sigma^{-}_{p} - 2 \sigma^{+}_{p} S^{-}_{\Omega}}
=
- \frac{4 \imunit}{\sminvtemperature_{c}} \rbk{M^{+}_{0, \Omega} M^{-}_{n, \Omega} - M^{+}_{n, \Omega} M^{-}_{0, \Omega}}.$$

All identities are exact for every finite $\Omega$.
No error terms arise because the commutators of the collective operators with one-site operators are themselves one-site operators.

On a product of factors,
the Leibniz rule expands $\fun{\delta_{\Omega}}{F_{\Omega}}$ into a finite sum obtained by replacing one factor at a time by its corresponding generator formula.
Every summand is a product of the factors allowed in Definition
\ref{def:mean-field-polynomial-family}.
This proves the asserted closure of mean-field polynomial families under
$\delta_{\Omega}$.
\end{proof}

\begin{lem}[uniform derivation bounds]\label{lem:derivation-bounds}
Let $a
=
2 E + \frac{8}{\sminvtemperature_{c}}$.
For every monomial family
$\seq{W_{\Omega}}{\Omega \geq \Omega_{0}}$ of degree $\ell
\geq
1$ and every $N
\geq
0$,
the iterated derivation satisfies uniformly in $\Omega$
$$\norm{\fun{\delta_{\Omega}^{N}}{W_{\Omega}}}
\leq
a^{N} \frac{\rbk{\ell + N - 1}!}{\rbk{\ell - 1}!} \fun{\nu}{W}.$$
For $B
\in
\oa{A}_{\Lambda_{0}}$ with $\abscard{\Lambda_{0}}
=
k$,
writing $\fun{\delta_{\Omega}}{B}
=
\imunit \commutator{\physham_{\Omega}}{B}$,
the preceding estimate gives
$$\norm{\fun{\delta_{\Omega}^{N}}{B}}
\leq
a^{N} \frac{\rbk{k + N - 1}!}{\rbk{k - 1}!} \cdot 6^{k} \norm{B}.$$
The Heisenberg series
$\fun{\tau^{\Omega}_{t}}{B}
=
\sum_{N
\geq
0} \frac{t^{N}}{N!} \fun{\delta^{N}_{\Omega}}{B}$
converges in norm for $\abs{t}
<
\frac{1}{a}$,
uniformly in $\Omega$.
It extends the Heisenberg dynamics analytically to the strip
$\abs{\opimag z}
<
\frac{1}{a}$.
On this strip,
the uniform bound is
$\norm{\fun{\tau^{\Omega}_{t + \imunit s}}{B}}
\leq
6^{k} \norm{B} \rbk{1 - a \abs{s}}^{- k}$ for all real $t$ and $\abs{s} < \frac{1}{a}$.
\end{lem}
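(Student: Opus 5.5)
The plan is an induction on $N$ that tracks not the operator norm directly but a weighted size of an explicit monomial expansion. For a finite sum $G_{\Omega}=\sum_{a}W^{(a)}_{\Omega}$ of monomial families, write $\fun{\nu}{G}=\sum_{a}\fun{\nu}{W^{(a)}}$. Since $\norm{W^{(a)}_{\Omega}}\leq\fun{\nu}{W^{(a)}}$ for each monomial, we get $\norm{G_{\Omega}}\leq\fun{\nu}{G}$ uniformly in $\Omega$. The inductive claim is:
\[
\fun{\delta^{N}_{\Omega}}{W_{\Omega}}=\sum_{a}W^{(a)}_{\Omega},
\]
where every $W^{(a)}$ is a monomial family of degree at most $\ell+N$, the expansion does not depend on $\Omega$, and
\[
\sum_{a}\fun{\nu}{W^{(a)}}\leq a^{N}\frac{(\ell+N-1)!}{(\ell-1)!}\fun{\nu}{W}.
\]
The first estimate of the lemma follows at once from this claim.

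The single-step estimate comes from Lemma \ref{lem:exact-derivation}. There $\delta_{\Omega}$ acts by the Leibniz rule, replacing one factor at a time. I check that replacing one factor by its generator formula produces monomials of degree at most one higher whose total $\nu$-weight is at most $a$ times the original weight. The cases are as follows.
\begin{itemize}
\item A local factor $\sigma^{\gamma}_{p}$. The kinetic term $\varepsilon_{p}\commutator{\sigma^{z}_{p}}{\sigma^{\gamma}_{p}}$ has coefficient at most $2E$, using $\abs{\varepsilon_{p}}\leq E$, which is absorbed into $c$. The two interaction terms each carry $\frac{2}{\sminvtemperature_{c}}$ times a commutator of coefficient size at most $2$, with one extra factor $M^{\pm}_{0,\Omega}$ of weight $E^{0}$. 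The total is at most $2E+\frac{8}{\sminvtemperature_{c}}=a$.
\item A factor $M^{\pm}_{n,\Omega}$. This gives $2M^{\pm}_{n+1,\Omega}$, which costs a factor $2E$ through the raised index, and a quadratic term of weight $\frac{2}{\sminvtemperature_{c}}$.
\item A factor $M^{z}_{n,\Omega}$. This gives two quadratic terms of total weight $\frac{8}{\sminvtemperature_{c}}$.
\end{itemize}
A monomial of degree $d\leq\ell+N$ therefore produces at most $d$ replacements, with total weight at most $a(\ell+N)$ times its own weight. This closes the induction, with the factorial ratio coming from the product $\ell(\ell+1)\dotsm(\ell+N-1)$. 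Terms whose degree does not increase are harmless, because the count only uses an upper bound on the degree.

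For a local $B\in\oa{A}_{\Lambda_{0}}$ with $\abscard{\Lambda_{0}}=k$, I expand $B$ in tensor products of matrix units. This gives $4^{k}$ terms, and each coefficient is a matrix entry of modulus at most $\norm{B}$. I then write $e_{11}=\frac{1}{2}(1+\sigma^{z})$, $e_{22}=\frac{1}{2}(1-\sigma^{z})$, $e_{12}=\sigma^{+}$, $e_{21}=\sigma^{-}$. Each one-site matrix unit has absolute coefficient sum $1$. Hence $B$ is a sum of monomials in $\sigma^{\pm,z}_{p}$ ($p\in\Lambda_{0}$) of degree at most $k$ with total coefficient weight at most $4^{k}\norm{B}\leq6^{k}\norm{B}$. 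Constant terms are killed by $\delta_{\Omega}$. Monomials of degree $\ell\leq k$ satisfy
\[
\frac{(\ell+N-1)!}{(\ell-1)!}\leq\frac{(k+N-1)!}{(k-1)!},
\]
so the first estimate yields the second.

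Finally, summing the Taylor series gives
\[
\sum_{N}\frac{\abs{t}^{N}}{N!}a^{N}\frac{(k+N-1)!}{(k-1)!}=(1-a\abs{t})^{-k}.
\]
This is finite for $\abs{t}<\frac{1}{a}$, uniformly in $\Omega$. The series agrees with $\fun{\tau^{\Omega}_{t}}{B}$ because $\physham_{\Omega}$ is bounded on the finite-dimensional algebra. For complex $z=t+\imunit s$ I use $\tau^{\Omega}_{t+\imunit s}=\tau^{\Omega}_{t}\circ\tau^{\Omega}_{\imunit s}$. The series for $\tau^{\Omega}_{\imunit s}(B)$ converges and has the bound above with $\abs{s}$ in place of $\abs{t}$, and $\tau^{\Omega}_{t}$ is isometric for real $t$, which gives the strip bound. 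The main obstacle is the bookkeeping in the single-step estimate: the constant $a$ must dominate every branch, in particular the local-factor case, where both interaction commutators appear. I expect the stated $a$ to absorb these branches with room to spare.
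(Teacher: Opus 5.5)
Your proposal is correct and follows essentially the paper's argument: the same one-factor weight bounds read off from Lemma \ref{lem:exact-derivation}, the same Leibniz-rule induction producing $a^{N}\prod_{j=0}^{N-1}(\ell+j)$, the same binomial series $(1-x)^{-k}$, and the same use of the real-time isometry to obtain the strip bound. The only variation is that you expand $B$ in matrix units instead of the normalized Pauli basis. This gives the slightly sharper constant $4^{k}\le 6^{k}$, and you also state explicitly that degree-zero terms are annihilated by $\delta_{\Omega}$.
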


\begin{proof}
The generator formulas in Lemma \ref{lem:exact-derivation} give a uniform estimate for one factor.
The one-site commutators
$\commutator{\sigma^{\pm}_{p}}{\sigma^{\gamma}_{p}}$
are scalar multiples of at most one Pauli factor,
and their scalar coefficients have modulus at most $2$.
The total $\nu$-weights of all terms produced from each type of factor satisfy
$$\begin{aligned}
\sigma^{\gamma}_{p} \colon\quad
&2 E + \frac{4}{\sminvtemperature_{c}} + \frac{4}{\sminvtemperature_{c}}
\leq
a,
\\ 
M^{\pm}_{n,\Omega} \colon\quad
&2 E^{n + 1} + \frac{2}{\sminvtemperature_{c}} E^{n}
\leq
a E^{n},
\\ 
M^{z}_{n,\Omega} \colon\quad
&\frac{8}{\sminvtemperature_{c}} E^{n}
\leq
a E^{n}.
\end{aligned}$$
Each resulting monomial has degree at most $2$.

Consider a monomial of degree $\ell$.
The Leibniz rule replaces one of its $\ell$ factors by one of the terms estimated in the one-factor calculation.
The resulting monomials have degree at most $\ell + 1$,
and the sum of their $\nu$-weights is at most $a \ell$ times the original weight.
After $j$ iterations the degree is at most $\ell + j$.
The case $N
=
0$ is immediate.
Assume $N
\geq
1$.
The product of the successive weight bounds is
$$a^{N}
\prod_{j
=
0}^{N - 1} \rbk{\ell + j}
\fun{\nu}{W}
=
a^{N}
\frac{\rbk{\ell + N - 1}!}{\rbk{\ell - 1}!}
\fun{\nu}{W}.$$
The triangle inequality converts this total weight estimate into the asserted operator-norm estimate for
$\fun{\delta_{\Omega}^{N}}{W_{\Omega}}$.

An element $B
\in
\oa{A}_{\Lambda_{0}}$ expands over the tensor products of the one-site family $1, \sigma^{x}, \sigma^{y}, \sigma^{z}$,
which is orthonormal for the normalized trace state $\oastate[\psi_{\mathrm{tr}}]$ of $\oa{A}_{\Lambda_{0}}$.
Write such a tensor product as
$P
=
\bigotimes_{p
\in
\Lambda_{0}} P_{p}$,
where
$P_{p}
\in
\setone{1, \sigma^{x}, \sigma^{y}, \sigma^{z}}$,
and define
$$\fun{r}{P}
=
\abscard{\set{p \in \Lambda_{0}}{P_{p} \in \setone{\sigma^{x}, \sigma^{y}}}}.$$
For this tensor product,
its coefficient and norm bound are
$$c_{P}
=
\fun{\oastate[\psi_{\mathrm{tr}}]}{\faadj{P} B},
\quad
\abs{c_{P}}
=
\abs{\fun{\oastate[\psi_{\mathrm{tr}}]}{\faadj{P} B}}
\leq
\norm{B}.$$
The identities
$\sigma^{x}
=
\sigma^{+} + \sigma^{-}$ and
$\sigma^{y}
=
- \imunit \rbk{\sigma^{+} - \sigma^{-}}$
expand $P$ into $2^{\fun{r}{P}}$ monomials,
each with a scalar coefficient of modulus $\abs{c_{P}}$.
For fixed $r
\in
\setone{0, \dotsc, k}$,
the number of possible tensor products $P$ is
$\binom{k}{r}
2^{r}
2^{k - r}$.
The three factors count the choices of the $r$ sites,
the choices between $\sigma^{x}$ and $\sigma^{y}$ at these sites,
and the choices between $1$ and $\sigma^{z}$ at the remaining sites.
The sum of the coefficient moduli before collecting equal monomials is bounded by
$$\begin{aligned}
\sum_{\substack{P = \bigotimes_{p \in \Lambda_{0}} P_{p} \\
P_{p} \in \setone{1, \sigma^{x}, \sigma^{y}, \sigma^{z}}}}
2^{\fun{r}{P}}
\abs{c_{P}}
\leq
\norm{B}
\sum_{r
=
0}^{k}
\binom{k}{r}
2^{r}
2^{k - r}
2^{r}
=
2^{k} \norm{B}
\sum_{r
=
0}^{k}
\binom{k}{r} 2^{r}
=
6^{k} \norm{B}.
\end{aligned}$$
The resulting expansion has a finite index set $I_{B}$ and monomials
$W_{i}$ of degree at most $k$ such that
$$B
=
\sum_{i
\in
I_{B}} c_{i} W_{i},
\quad
\sum_{i
\in
I_{B}} \abs{c_{i}}
\leq
6^{k} \norm{B}.$$
Each $W_{i}$ has unit weight,
and the monomial estimate with $\ell
\leq
k$ gives the asserted bound for
$\fun{\delta_{\Omega}^{N}}{B}$.

The coefficient in the Heisenberg series is the binomial coefficient
$$\frac{1}{N!}
\frac{\rbk{k + N - 1}!}{\rbk{k - 1}!}
=
\frac{\rbk{k + N - 1}!}{N! \rbk{k - 1}!}
=
\binom{k + N - 1}{N}.$$
For
$\abs{x}
<
1$,
termwise differentiation of the geometric series gives
$$\begin{aligned}
&\sum_{N \geq 0}
\binom{k + N - 1}{N} x^{N}
=
\sum_{m
\geq
k - 1}
\binom{m}{k - 1} x^{m - k + 1}
\\ 
&=
\frac{1}{\rbk{k - 1}!}
\rbk{\opod{x}}^{k - 1}
\sum_{m
\geq
0} x^{m}
=
\frac{1}{\rbk{k - 1}!}
\rbk{\opod{x}}^{k - 1}
\frac{1}{1 - x}
=
\rbk{1 - x}^{-k}.
\end{aligned}$$
For
$a \abs{t}
<
1$,
substitution of $x=a\abs{t}$ yields
$$\sum_{N
\geq
0}
\frac{\rbk{a \abs{t}}^{N}}{N!}
\frac{\rbk{k + N - 1}!}{\rbk{k - 1}!}
=
\rbk{1 - a \abs{t}}^{-k}.$$
This identity proves uniform norm convergence of the Heisenberg series for
$\abs{t}
<
\frac{1}{a}$.

For fixed finite $\Omega$ the map $t
\mapsto
\fun{\tau^{\Omega}_{t}}{B}$ is entire in norm,
with $N$-th derivative $\fun{\tau^{\Omega}_{t}}{\fun{\delta^{N}_{\Omega}}{B}}$,
because $\physham_{\Omega}$ is bounded.
For real $t$,
the isometry of $\tau^{\Omega}_{t}$ and its commutation with
$\delta_{\Omega}$ ensure that the Taylor coefficients at $t$ obey the same bounds.
The Taylor series at $t$ with increment $\imunit s$ satisfies
$$\norm{\fun{\tau^{\Omega}_{t + \imunit s}}{B}}
\leq
6^{k} \norm{B}
\rbk{1 - a \abs{s}}^{- k},
\quad
t
\in
\fldreal,
\quad
\abs{s}
<
\frac{1}{a}.$$
The estimate is uniform in $\Omega$ and proves the analytic extension on the stated strip.
\end{proof}

\subsection{Classical substitution and stationarity}\label{classical-substitution-and-stationarity}

Classical substitution replaces the mean-field generators by their sectorwise scalar limits. The resulting stationary derivation generates the candidate limiting dynamics.

\begin{defn}[joint mean profile]\label{def:joint-mean-profile}
A spin configuration $\physvec{\omega}$ has a joint mean profile when the empirical measures
$\frac{1}{\Omega} \sum_{p
\leq
\Omega} \delta_{\rbk{\varepsilon_{p}, \physvec{u}_{p}}}$
on $\closedinterval{- E}{E} \times \mansphere^{2}$ converge weakly.
For such a configuration define
\begin{equation}\label{eq:mu-moments}
\mu^{\gamma}_{\physvec{\omega},k}
=
\lim_{\Omega \to \infty} \frac{1}{\Omega} \sum_{p
=
1}^{\Omega} \varepsilon^{k}_{p} u^{\gamma}_{p}
\quad \rbk{\gamma
=
x, y, z},
\quad
\mu^{\pm}_{\physvec{\omega},k}
=
\frac{1}{2} \rbk{\mu^{x}_{\physvec{\omega},k} \pm \imunit \mu^{y}_{\physvec{\omega},k}}
=
\lim_{\Omega \to \infty} \frac{1}{\Omega} \sum_{p
=
1}^{\Omega} \varepsilon^{k}_{p} m^{\pm}_{\physvec{\omega},p}.
\end{equation}
\end{defn}

The joint mean profile retains the correlation between one-particle energy and Bloch direction that the ordinary profile discards. Its moments \(\mu^{\gamma}_{\physvec{\omega},k}\) are energy-weighted polarizations. They enter because commutation with the kinetic Hamiltonian raises the energy power, so the full joint profile supplies the limits required by repeated Heisenberg commutators.

Weak convergence and the energy bound \eqref{eq:energy-bound} give \(\abs{\mu^{\gamma}_{\physvec{\omega},k}}
\leq
E^{k}\). The definitions \eqref{eq:mu-moments} and \eqref{eq:mean-m} identify \(\mu^{\pm}_{\physvec{\omega},0}
=
\bar{m}^{\pm}_{\physvec{\omega}}\).

Let \(\seq{F_{\Omega}}{\Omega \geq \Omega_{0}}\) be a mean-field polynomial family of Definition \ref{def:mean-field-polynomial-family}. Replacing every factor \(M^{\gamma}_{k,\Omega}\) in its fixed polynomial expression by \(\mu^{\gamma}_{\physvec{\omega},k}\) produces an element \(F_{\physvec{\omega}}
\in
\oa{A}_{\Lambda_{0}}\). On \(\oa{A}_{\Lambda_{0}}\) define \[\fun{\delta_{\txtbogoliubov,\physvec{\omega}}}{B}
=
\imunit
\sum_{p
\in
\Lambda_{0}}
\commutator{\physham[h]_{\physvec{\omega},p}}{B},\] where \(\physham[h]_{\physvec{\omega},p}\) is the effective one-site Hamiltonian \eqref{eq:effective-hamiltonian}. On a local factor this derivation is \[\fun{\delta_{\txtbogoliubov,\physvec{\omega}}}{\sigma^{\gamma}_{p}}
=
\imunit
\commutator{\physham[h]_{\physvec{\omega},p}}{\sigma^{\gamma}_{p}}.\]

\begin{lem}[stationarity of the means]\label{lem:stationarity}
Let $\physvec{\omega}$ have a joint mean profile and be aligned in the sense of
Definition \ref{def:aligned-configuration}.
For every $k
\geq
0$,
substitution of the moments \eqref{eq:mu-moments} into the three mean-field formulas of Lemma
\ref{lem:exact-derivation} gives
\begin{align*}
\imunit \rbk{- 2 \mu^{+}_{\physvec{\omega},k + 1} + \frac{2}{\sminvtemperature_{c}} \mu^{+}_{\physvec{\omega},0} \mu^{z}_{\physvec{\omega},k}} &
=
0, \\
\imunit \rbk{2 \mu^{-}_{\physvec{\omega},k + 1} - \frac{2}{\sminvtemperature_{c}} \mu^{z}_{\physvec{\omega},k} \mu^{-}_{\physvec{\omega},0}} &
=
0, \\
- \frac{4 \imunit}{\sminvtemperature_{c}} \rbk{\mu^{+}_{\physvec{\omega},0} \mu^{-}_{\physvec{\omega},k} - \mu^{+}_{\physvec{\omega},k} \mu^{-}_{\physvec{\omega},0}} &
=
0.
\end{align*}
For every mean-field polynomial family,
substitution in the polynomial expression for
$\fun{\delta_{\Omega}}{F_{\Omega}}$ satisfies
$$\rbk{\fun{\delta_{\Omega}}{F_{\Omega}}}_{\physvec{\omega}}
=
\fun{\delta_{\txtbogoliubov,\physvec{\omega}}}{F_{\physvec{\omega}}}.$$
\end{lem}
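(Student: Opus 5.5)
The plan is to derive the three scalar identities directly from alignment, and then obtain the polynomial statement by comparing two Leibniz expansions, one for $\delta_{\Omega}$ and one for $\delta_{\txtbogoliubov,\physvec{\omega}}$.

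\emph{Step 1 (alignment in terms of the means).} By \eqref{eq:effective-hamiltonian} and \eqref{eq:mean-m}, $b^{x}_{\physvec{\omega},p} + \imunit b^{y}_{\physvec{\omega},p} = \frac{2}{\sminvtemperature_{c}} \bar{m}^{+}_{\physvec{\omega}}$ and $b^{z}_{\physvec{\omega},p} = \varepsilon_{p}$. Definition \ref{def:aligned-configuration} then gives
$$m^{\pm}_{\physvec{\omega},p} = c_{p} \bar{m}^{\pm}_{\physvec{\omega}}, \quad u^{z}_{p} = \sminvtemperature_{c} c_{p} \varepsilon_{p}, \quad c_{p} = \frac{\eta_{\physvec{\omega},p}}{\sminvtemperature_{c} \abs{\physvec{b}_{\physvec{\omega},p}}} \in \fldreal.$$
Consequently $\varepsilon_{p} m^{+}_{\physvec{\omega},p} = \frac{1}{\sminvtemperature_{c}} \bar{m}^{+}_{\physvec{\omega}} u^{z}_{p}$ holds for every $p$.

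\emph{Step 2 (the first two identities).} Multiply the last relation by $\varepsilon_{p}^{k}$ and take the Cesàro average over $p \leq \Omega$. Both sides are averages of continuous functions of $(\varepsilon_{p}, \physvec{u}_{p})$ on the compact set $\closedinterval{-E}{E} \times \mansphere^{2}$, so the joint mean profile (Definition \ref{def:joint-mean-profile}) supplies the limits. This gives $\mu^{+}_{\physvec{\omega},k+1} = \frac{1}{\sminvtemperature_{c}} \mu^{+}_{\physvec{\omega},0} \mu^{z}_{\physvec{\omega},k}$, using $\mu^{+}_{\physvec{\omega},0} = \bar{m}^{+}_{\physvec{\omega}}$; this is the first identity. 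The second identity is its complex conjugate, because $\mu^{-}_{\physvec{\omega},k} = \cmpconj{\mu^{+}_{\physvec{\omega},k}}$ and $\mu^{z}_{\physvec{\omega},k}$ is real.

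\emph{Step 3 (the third identity).} Since $c_{p}$ is real,
$$\mu^{+}_{\physvec{\omega},0} \mu^{-}_{\physvec{\omega},k} = \bar{m}^{+}_{\physvec{\omega}} \lim_{\Omega \to \infty} \frac{1}{\Omega} \sum_{p \leq \Omega} \varepsilon^{k}_{p} c_{p} \bar{m}^{-}_{\physvec{\omega}} = \mu^{+}_{\physvec{\omega},k} \mu^{-}_{\physvec{\omega},0}.$$
The limit exists because it equals $\mu^{-}_{\physvec{\omega},k}$ up to the constant factor $\bar{m}^{+}_{\physvec{\omega}}$. No separate existence of $\lim \frac{1}{\Omega} \sum \varepsilon^{k}_{p} c_{p}$ is needed, which also covers the case $\bar{m}^{\pm}_{\physvec{\omega}} = 0$. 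This proves the third identity.

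\emph{Step 4 (substitution commutes with the derivation).} Substitution is defined on the fixed formal polynomial expression. The expression for $\delta_{\Omega}(F_{\Omega})$ is the one produced in Lemma \ref{lem:exact-derivation}: apply the Leibniz rule and replace one factor at a time by its generator formula. Substitution turns each $M^{\gamma}_{n,\Omega}$ into a scalar and leaves the local factors untouched, so it commutes with forming these Leibniz sums. The terms in which a mean-field factor has been replaced by its generator substitute to zero, by Steps 2 and 3. For a local factor $\sigma^{\gamma}_{p}$, substituting $\mu^{\pm}_{\physvec{\omega},0} = \bar{m}^{\pm}_{\physvec{\omega}}$ into the first formula of Lemma \ref{lem:exact-derivation} gives exactly
$$\imunit \Bigl( -\varepsilon_{p} \commutator{\sigma^{z}_{p}}{\sigma^{\gamma}_{p}} - \frac{2}{\sminvtemperature_{c}} \bigl( \bar{m}^{-}_{\physvec{\omega}} \commutator{\sigma^{+}_{p}}{\sigma^{\gamma}_{p}} + \bar{m}^{+}_{\physvec{\omega}} \commutator{\sigma^{-}_{p}}{\sigma^{\gamma}_{p}} \bigr) \Bigr) = \imunit \commutator{\physham[h]_{\physvec{\omega},p}}{\sigma^{\gamma}_{p}}.$$
What remains is the Leibniz expansion of the derivation $\delta_{\txtbogoliubov,\physvec{\omega}}$ applied to the product $F_{\physvec{\omega}}$. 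Here the scalar factors contribute nothing, and $\commutator{\physham[h]_{\physvec{\omega},q}}{\sigma^{\gamma}_{p}} = 0$ for $q \neq p$. Hence the substituted expression equals $\delta_{\txtbogoliubov,\physvec{\omega}}(F_{\physvec{\omega}})$.

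The main obstacle is this bookkeeping. One must check that substitution is applied to the formal expression generated by Lemma \ref{lem:exact-derivation}, not to operators, and that the ordering of noncommuting local factors is preserved throughout. Once that is fixed, Step 4 is a term-by-term comparison.
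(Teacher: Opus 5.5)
Your proof is correct and follows the paper's argument: alignment makes $m^{\pm}_{\physvec{\omega},p}$ and $u^{z}_{p}$ proportional to $\beta_{c}^{-1}\bar{m}^{\pm}_{\physvec{\omega}}$ and $\varepsilon_{p}$ with the common real factor $\eta_{\physvec{\omega},p}/\abs{\physvec{b}_{\physvec{\omega},p}}$, the joint mean profile turns this into the three moment identities, and the Leibniz rule transfers them to polynomial families. The only difference is cosmetic: you average the pointwise identity $\varepsilon_{p} m^{+}_{\physvec{\omega},p} = \beta_{c}^{-1} \bar{m}^{+}_{\physvec{\omega}} u^{z}_{p}$ directly and pull constants through the limits in the third identity, which avoids the paper's auxiliary limits $\kappa_{\physvec{\omega},k}$ and hence its case split on whether $\bar{m}^{\pm}_{\physvec{\omega}}$ vanishes.
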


\begin{proof}
If $\bar{m}^{\pm}_{\physvec{\omega}}
=
0$,
alignment forces $\physvec{b}_{\physvec{\omega},p}
=
\rbk{0, 0, \varepsilon_{p}}$ and $\physvec{u}_{p}
=
\pm \hat{\physvec{z}}$ wherever $\varepsilon_{p}
\neq
0$.
It follows that $m^{\pm}_{\physvec{\omega},p}
=
0$ there,
and at sites with $\varepsilon_{p}
=
0$ alignment requires $\physvec{b}_{\physvec{\omega},p}
\neq
0$.
It follows that such sites are absent.
These observations give $\mu^{\pm}_{\physvec{\omega},k}
=
0$ for all $k$.
Every term obtained by substituting the moments into the mean-field formulas of Lemma
\ref{lem:exact-derivation}
contains one of
$\mu^{\pm}_{\physvec{\omega},0}$,
$\mu^{\pm}_{\physvec{\omega},k + 1}$,
or $\mu^{\pm}_{\physvec{\omega},k}$.
All three scalar expressions vanish.

Suppose that $\bar{m}^{\pm}_{\physvec{\omega}}
\neq
0$.
Alignment gives
$$m^{\pm}_{\physvec{\omega},p}
=
\frac{\eta_{\physvec{\omega},p} \frac{1}{\sminvtemperature_{c}}}{\abs{\physvec{b}_{\physvec{\omega},p}}} \bar{m}^{\pm}_{\physvec{\omega}},
\quad
u^{z}_{p}
=
\frac{\eta_{\physvec{\omega},p} \varepsilon_{p}}{\abs{\physvec{b}_{\physvec{\omega},p}}}.$$
The transverse component of $\physvec{b}_{\physvec{\omega},p}$ is
$$\frac{1}{\sminvtemperature_{c}} \eta_{\physvec{\omega}} \rbk{\bar{u}_{\physvec{\omega}}^{x}, \bar{u}_{\physvec{\omega}}^{y}}
=
\frac{2}{\sminvtemperature_{c}} \rbk{\opreal \bar{m}^{+}_{\physvec{\omega}}, \opimag \bar{m}^{+}_{\physvec{\omega}}}.$$
Define
$$\kappa_{\physvec{\omega},k}
=
\lim_{\Omega \to \infty} \frac{1}{\Omega} \sum_{p
\leq
\Omega} \frac{\eta_{\physvec{\omega},p} \varepsilon^{k}_{p}}{\abs{\physvec{b}_{\physvec{\omega},p}}}.$$
The limit exists by \eqref{eq:mu-moments} and the condition
$\bar{m}^{\pm}_{\physvec{\omega}}
\neq
0$.
The moments are
$$\mu^{\pm}_{\physvec{\omega},k}
=
\frac{1}{\sminvtemperature_{c}} \kappa_{\physvec{\omega},k} \bar{m}^{\pm}_{\physvec{\omega}},
\quad
\mu^{z}_{\physvec{\omega},k}
=
\kappa_{\physvec{\omega},k + 1}.$$
Substitution into the formulas of Lemma \ref{lem:exact-derivation} gives
\begin{align*}
\imunit \rbk{- \frac{2}{\sminvtemperature_{c}} \kappa_{\physvec{\omega},k + 1} \bar{m}^{+}_{\physvec{\omega}} + \frac{2}{\sminvtemperature_{c}} \bar{m}^{+}_{\physvec{\omega}} \kappa_{\physvec{\omega},k + 1}} &
=
0, \\
\imunit \rbk{\frac{2}{\sminvtemperature_{c}} \kappa_{\physvec{\omega},k + 1} \bar{m}^{-}_{\physvec{\omega}} - \frac{2}{\sminvtemperature_{c}} \kappa_{\physvec{\omega},k + 1} \bar{m}^{-}_{\physvec{\omega}}} &
=
0, \\
- \frac{4 \imunit}{\sminvtemperature_{c}} \rbk{\bar{m}^{+}_{\physvec{\omega}} \cdot \frac{\kappa_{\physvec{\omega},k}}{\sminvtemperature_{c}} \bar{m}^{-}_{\physvec{\omega}} - \frac{\kappa_{\physvec{\omega},k}}{\sminvtemperature_{c}} \bar{m}^{+}_{\physvec{\omega}} \cdot \bar{m}^{-}_{\physvec{\omega}}} &
=
0.
\end{align*}

The Leibniz rule extends the substitution identity to every polynomial family.
For local factors it is the definition of $\delta_{\txtbogoliubov,\physvec{\omega}}$,
and for mean-field factors both sides vanish by the three identities just proved.
\end{proof}

\begin{lem}[strong limits of mean-field polynomials]\label{lem:polynomial-limits}
Let $\physvec{\omega}$ have a joint mean profile.
For every mean-field polynomial family
$\seq{F_{\Omega}}{\Omega \geq \Omega_{0}}$ of Definition
\ref{def:mean-field-polynomial-family},
it holds that
$$\slim_{\Omega \to \infty}
\fun{\oarepn_{\physvec{\omega}}}{F_{\Omega}}
=
\fun{\oarepn_{\physvec{\omega}}}{F_{\physvec{\omega}}}.$$
\end{lem}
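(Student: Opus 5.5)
The plan is to reduce to monomial families and then to single factors, using the law of large numbers (Theorem \ref{thm:lln}) for each mean-field factor, and to combine the factors with the product rule for strong limits (Lemma \ref{lem:strong-products}).

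\emph{Step 1: reduction to monomials.} A mean-field polynomial family is a finite sum with $\Omega$-independent coefficients, and strong limits are linear. It therefore suffices to treat a monomial family $F_{\Omega}=c\prod_{j=1}^{\ell}X_{j,\Omega}$. Since the substitution $F_{\Omega}\mapsto F_{\physvec{\omega}}$ is defined factor by factor on the fixed polynomial expression, the substituted element is $F_{\physvec{\omega}}=c\prod_{j}X_{j,\physvec{\omega}}$. Here $X_{j,\physvec{\omega}}=X_{j}$ when the factor is a local $\sigma^{\gamma}_{p}$, and $X_{j,\physvec{\omega}}=\mu^{\gamma}_{\physvec{\omega},k}\cdot 1$ when the factor is $M^{\gamma}_{k,\Omega}$.

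\emph{Step 2: single factors.} A local factor $\sigma^{\gamma}_{p}$ with $p\in\Lambda_{0}$ is independent of $\Omega$, so its represented image converges trivially. For a mean-field factor, I apply Theorem \ref{thm:lln} to the product state $\oastate[\psi_{\physvec{\omega}}]$, which is pure as that theorem allows. I take the one-site elements $X_{p}=\varepsilon_{p}^{k}\sigma^{\gamma}_{p}$ with $\gamma\in\{x,y,z,\pm\}$. These are bounded by $E^{k}$, and their means are $x_{p}=\varepsilon_{p}^{k}u^{\gamma}_{p}$, or $\varepsilon_{p}^{k}m^{\pm}_{\physvec{\omega},p}$ for $\gamma=\pm$. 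The Cesàro averages of these means converge to $\mu^{\gamma}_{\physvec{\omega},k}$: the function $(\varepsilon,\physvec{u})\mapsto\varepsilon^{k}u^{\gamma}$ is continuous on the compact set $[-E,E]\times\mansphere^{2}$, so this follows from the joint mean profile (Definition \ref{def:joint-mean-profile}) via \eqref{eq:mu-moments}. Theorem \ref{thm:lln} then gives $\fun{\oarepn_{\physvec{\omega}}}{M^{\gamma}_{k,\Omega}}\to\mu^{\gamma}_{\physvec{\omega},k}$ strongly.

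\emph{Step 3: products.} Every factor is uniformly bounded in norm: $\norm{M^{\gamma}_{k,\Omega}}\le E^{k}$ and local Pauli factors have norm at most $1$. Uniform boundedness is what makes strong convergence pass to products. Induction on $\ell$ with Lemma \ref{lem:strong-products} gives
\[
\fun{\oarepn_{\physvec{\omega}}}{\textstyle\prod_{j}X_{j,\Omega}}\to\textstyle\prod_{j}\fun{\oarepn_{\physvec{\omega}}}{X_{j,\physvec{\omega}}}
\]
strongly, and multiplicativity of $\oarepn_{\physvec{\omega}}$ identifies the right side with $\fun{\oarepn_{\physvec{\omega}}}{F_{\physvec{\omega}}}/c$.

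The only point needing care, and the step I expect to be the main obstacle, is the order of factors. The mean-field factors do not commute with local factors at finite $\Omega$; their commutators are $O(1/\Omega)$ in norm. The product lemma, however, requires no commutation: it only needs $A_{\Omega}B_{\Omega}\Psi-AB\Psi=A_{\Omega}(B_{\Omega}-B)\Psi+(A_{\Omega}-A)B\Psi$ together with $\sup_{\Omega}\norm{A_{\Omega}}<\infty$. So the fixed order in the polynomial is preserved on both sides, and no reordering argument is needed.
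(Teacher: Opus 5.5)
Your proposal is correct and follows the paper's own argument. The paper also obtains $\fun{\oarepn_{\physvec{\omega}}}{M^{\gamma}_{k,\Omega}} \to \mu^{\gamma}_{\physvec{\omega},k}$ from Theorem \ref{thm:lln} with $X_{p} = \varepsilon_{p}^{k}\sigma^{\gamma}_{p}$, keeps the local factors fixed, and then applies Lemma \ref{lem:strong-products} repeatedly to the uniformly bounded factors of each monomial.

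Your remark that the product lemma needs no reordering of the noncommuting factors is correct and simply makes explicit what the paper leaves implicit.
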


\begin{proof}
The $\sigma$-generators are constant in $\Omega$.
For a mean-field generator,
Theorem \ref{thm:lln} gives
$$\slim_{\Omega \to \infty}
\fun{\oarepn_{\physvec{\omega}}}{M^{\gamma}_{k, \Omega}}
=
\mu^{\gamma}_{\physvec{\omega},k},$$
where the one-site variables are
$X_{p}
=
\varepsilon^{k}_{p} \sigma^{\gamma}_{p}$,
with $\norm{X_{p}}
\leq
E^{k}$.
Their means are $\varepsilon^{k}_{p} u^{\gamma}_{p}$ or
$\varepsilon^{k}_{p} m^{\pm}_{\physvec{\omega},p}$.

All factors are uniformly bounded.
It follows that Lemma \ref{lem:strong-products},
applied repeatedly to the factors of each monomial,
gives the claim.
\end{proof}

\subsection{Convergence of the Heisenberg dynamics}\label{convergence-of-the-heisenberg-dynamics}

The preceding polynomial estimates and stationarity identities are now combined to prove sectorwise strong convergence of the represented Heisenberg dynamics.

\begin{thm}[Bogoliubov dynamics in aligned phases]\label{thm:dynamics}
Let $\physvec{\omega}$ have a joint mean profile and be aligned in the sense of
Definition \ref{def:aligned-configuration}.
Let $\oarepn_{\physvec{\omega}}$ be the product representation of Definition \ref{def:spin-product-representation}.
Let $\physham_{\txtbogoliubov,\physvec{\omega}}$ be the self-adjoint Bogoliubov--Haag Hamiltonian of Theorem \ref{thm:bogoliubov}(3) on
$\sphilb{H}_{\physvec{\omega}}$.
Define
$\fun{\hat{\tau}_{\physvec{\omega},t}}{X}
=
\fnexp{\imunit t \physham_{\txtbogoliubov,\physvec{\omega}}} X \fnexp{- \imunit t \physham_{\txtbogoliubov,\physvec{\omega}}}$.
Then for every $A
\in
\oa{A}$ and $t
\in
\fldreal$,
the strong limit on $\sphilb{H}_{\physvec{\omega}}$ is
$$\slim_{\Omega \to \infty}
\fun{\oarepn_{\physvec{\omega}}}{\fun{\tau^{\Omega}_{t}}{A}}
=
\fun{\hat{\tau}_{\physvec{\omega},t}}{\fun{\oarepn_{\physvec{\omega}}}{A}}.$$
Moreover $\hat{\tau}_{\physvec{\omega},t}$ implements on $\sphilb{H}_{\physvec{\omega}}$ the product automorphism group
$$\tau^{\txtbogoliubov}_{\physvec{\omega},t}
=
\bigotimes_{p \in \semigrposint} \fun{\Ad}{\fnexp{\imunit t \physham[h]^{\txteff}_{\physvec{\omega},p}}},
\quad
\physham[h]^{\txteff}_{\physvec{\omega},p}
=
\eta_{\physvec{\omega},p} \abs{\physvec{b}_{\physvec{\omega},p}} \rbk{1 - \physvec{\sigma}_{p} \cdot \physvec{u}_{p}},$$
of $\oa{A}$:
$\fun{\hat{\tau}_{\physvec{\omega},t}}{\fun{\oarepn_{\physvec{\omega}}}{A}}
=
\fun{\oarepn_{\physvec{\omega}}}{\fun{\tau^{\txtbogoliubov}_{\physvec{\omega},t}}{A}}$.
\end{thm}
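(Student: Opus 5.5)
First I will treat the implementation claim, since it is independent of the limit and fixes the right-hand side. By Theorem \ref{thm:bogoliubov}(3), $\physham_{\txtbogoliubov,\physvec{\omega}}$ is diagonal in the flip basis with eigenvalue $\sum_{p\in F}2\eta_{\physvec{\omega},p}\abs{\physvec{b}_{\physvec{\omega},p}}$ on $\xi^{F}_{\physvec{\omega}}$. By Lemma \ref{lem:frame}, $\oarepn_{\physvec{\omega}}(\physham[h]^{\txteff}_{\physvec{\omega},p})$ has the same action in slot $p$. Hence, for $A\in\oa{A}_{\Lambda}$, the unitary $\fnexp{\imunit t\physham_{\txtbogoliubov,\physvec{\omega}}}$ factors, up to a unitary commuting with $\oarepn_{\physvec{\omega}}(\oa{A}_{\Lambda})$, as $\oarepn_{\physvec{\omega}}\bigl(\prod_{p\in\Lambda}\fnexp{\imunit t\physham[h]^{\txteff}_{\physvec{\omega},p}}\bigr)$. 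The commuting factor is the exponential of the contributions from slots outside $\Lambda$, and it commutes with the local operator by the factorization of Proposition \ref{prop:factorization}. Checking this on flip vectors gives $\hat{\tau}_{\physvec{\omega},t}(\oarepn_{\physvec{\omega}}(A))=\oarepn_{\physvec{\omega}}(\tau^{\txtbogoliubov}_{\physvec{\omega},t}(A))$ for local $A$. Density of $\oa{A}_{\txtloc}$ and isometry of both sides extend it to all of $\oa{A}$. I also note the identity $\physham[h]^{\txteff}_{\physvec{\omega},p}=\physham[h]_{\physvec{\omega},p}-\fun{\oastate[\psi_{\physvec{\omega}}]}{\physham[h]_{\physvec{\omega},p}}$. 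Consequently $\tau^{\txtbogoliubov}_{\physvec{\omega},t}$ is the frozen mean-field group, with generator $\delta_{\txtbogoliubov,\physvec{\omega}}$ on $\oa{A}_{\Lambda_0}$.

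For the strong limit I will first restrict to local $B\in\oa{A}_{\Lambda_0}$ with $\abscard{\Lambda_0}=k$ and small times $\abs{t}<1/(2a)$, with $a$ as in Lemma \ref{lem:derivation-bounds}. Expand $\tau^{\Omega}_t(B)=\sum_N \frac{t^N}{N!}\delta_\Omega^N(B)$. By Lemma \ref{lem:exact-derivation}, each $\delta_\Omega^N(B)$ is a mean-field polynomial family. By Lemma \ref{lem:polynomial-limits}, it converges strongly to $\oarepn_{\physvec{\omega}}((\delta_\Omega^N(B))_{\physvec{\omega}})$. Iterating Lemma \ref{lem:stationarity} gives $(\delta_\Omega^N(B))_{\physvec{\omega}}=\delta_{\txtbogoliubov,\physvec{\omega}}^N(B)$. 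The uniform bound $a^N\binom{k+N-1}{N}N!\,6^k\norm{B}$ of Lemma \ref{lem:derivation-bounds} dominates the series uniformly in $\Omega$, so termwise strong limits may be summed. The resulting series $\sum_N\frac{t^N}{N!}\delta_{\txtbogoliubov,\physvec{\omega}}^N(B)$ is the norm-convergent expansion of $\tau^{\txtbogoliubov}_{\physvec{\omega},t}(B)$, because the generator is bounded on $\oa{A}_{\Lambda_0}$. This proves the claim for $\abs{t}<1/(2a)$ and local $B$.

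The main obstacle is the passage to all $t$. The radius of convergence is uniform but finite, and $\tau^{\Omega}_{t}(B)$ is no longer a finite-site polynomial at intermediate times, so the Taylor argument cannot simply be restarted. I will use the group property. Write $\tau^\Omega_{t+s}(B)=\tau^\Omega_t(\tau^\Omega_s(B))$ with $\abs{t},\abs{s}<1/(2a)$. For fixed $s$, truncate the series of $\tau^\Omega_s(B)$ at order $N_0$. The tail is uniformly small in norm, since the bound above is uniform in $\Omega$, and each $\tau^\Omega_t$ is isometric. The truncated part is a finite combination of mean-field polynomial families $P_\Omega$.

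It then remains to show that $\oarepn_{\physvec{\omega}}(\tau^\Omega_t(P_\Omega))$ converges strongly to $\hat\tau_{\physvec{\omega},t}(\oarepn_{\physvec{\omega}}(P_{\physvec{\omega}}))$ for mean-field polynomial families, not only for local $B$. I will prove this by the same Taylor argument, using the monomial estimate of Lemma \ref{lem:derivation-bounds} for degree $\ell$ together with Lemmas \ref{lem:polynomial-limits} and \ref{lem:stationarity}. By stationarity, the scalar mean-field factors are invariant under the limiting dynamics. An induction on time steps of length below $1/(2a)$ then covers all real $t$, and the class of mean-field polynomial families is stable under this step. Finally, the uniform bound $\norm{\tau^\Omega_t(A)}=\norm{A}$ and density of $\oa{A}_{\txtloc}$ extend the strong limit to every $A\in\oa{A}$ by an $\frac{\epsilon}{3}$-argument.
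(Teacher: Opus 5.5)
Your proposal is correct and follows essentially the same route as the paper. You obtain the implementation identity from the diagonal flip-basis action and the factorization $U_{\Lambda}$, run a short-time Taylor expansion controlled by Lemma \ref{lem:derivation-bounds} with termwise limits from Lemmas \ref{lem:polynomial-limits} and \ref{lem:stationarity}, and reach all $t$ by proving the claim for every mean-field polynomial family and propagating it in short steps via the group property. Your explicit truncation of the $s$-series is only a rephrasing of the paper's interchange of the limit and the summation, which it justifies by the same uniform tail bounds.
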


\begin{proof}
The local automorphism group gives the implementation statement.
The prescription $\tau^{\txtbogoliubov}_{\physvec{\omega},t}$ on $\oa{A}_{\Lambda}$ is conjugation by the local unitary $\bigotimes_{p
\in
\Lambda} \fnexp{\imunit t \physham[h]^{\txteff}_{\physvec{\omega},p}}$.
As in Definition \ref{def:gauge} it defines a one-parameter automorphism group of $\oa{A}$,
norm continuous in $t$ on each $\oa{A}_{\Lambda}$ with generator $\fun{\delta_{\txtbogoliubov,\physvec{\omega}}}{B}
=
\imunit \sum_{p
\in
\Lambda} \commutator{\physham[h]^{\txteff}_{\physvec{\omega},p}}{B}$ there.
Alignment gives the local Hamiltonian identity
$$\begin{aligned}
\physham[h]_{\physvec{\omega},p}
=
- \eta_{\physvec{\omega},p}
\abs{\physvec{b}_{\physvec{\omega},p}}
\physvec{\sigma}_{p} \cdot \physvec{u}_{p}
=
\physham[h]^{\txteff}_{\physvec{\omega},p}
-
\eta_{\physvec{\omega},p}
\abs{\physvec{b}_{\physvec{\omega},p}}.
\end{aligned}$$
The two local Hamiltonians differ by a scalar,
and therefore generate the same derivation $\delta_{\txtbogoliubov,\physvec{\omega}}$.
Theorem \ref{thm:bogoliubov}(3) realizes the operator on $\sphilb{H}_{\physvec{\omega}}$ as
$$\physham_{\txtbogoliubov,\physvec{\omega}}
=
\sum_{p \in \semigrposint}
\fun{\oarepn_{\physvec{\omega}}}{\physham[h]^{\txteff}_{\physvec{\omega},p}}.$$
This operator acts diagonally on the flip basis.
Under the factorization $U_{\Lambda}$ of Proposition \ref{prop:factorization},
the Hamiltonian splits as
$$\physham_{\txtbogoliubov,\physvec{\omega}}
=
\rbk{\sum_{p
\in
\Lambda}
\physham[h]^{\txteff}_{\physvec{\omega},p}}
\otimes 1
+
1 \otimes \physham_{\txtbogoliubov,\physvec{\omega}}^{\mathrm{tail}}.$$
The two summands commute,
and the tail part is diagonal in the tail flip basis.
The exponential and a local observable $B
\in
\oa{A}_{\Lambda}$ factorize as
$$\begin{aligned}
\fnexp{\imunit t \physham_{\txtbogoliubov,\physvec{\omega}}}
&=
\rbk{\bigotimes_{p
\in
\Lambda}
\fnexp{\imunit t \physham[h]^{\txteff}_{\physvec{\omega},p}}}
\otimes
\fnexp{\imunit t \physham_{\txtbogoliubov,\physvec{\omega}}^{\mathrm{tail}}},
\\
\fun{\oarepn_{\physvec{\omega}}}{B}
&=
U_{\Lambda}
\rbk{B \otimes 1}
\faadj{U_{\Lambda}}.
\end{aligned}$$
Conjugation therefore gives
$$\fun{\hat{\tau}_{\physvec{\omega},t}}{
\fun{\oarepn_{\physvec{\omega}}}{B}}
=
\fun{\oarepn_{\physvec{\omega}}}{
\fun{\tau^{\txtbogoliubov}_{\physvec{\omega},t}}{B}}.$$
Density extends this to $\oa{A}$.

For the convergence,
let $\fun{S_{\physvec{\omega}}}{t}$ denote the following statement.
For every mean-field polynomial family
$\seq{F_{\Omega}}{\Omega \geq \Omega_{0}}$,
it follows that
$$\slim_{\Omega \to \infty}
\fun{\oarepn_{\physvec{\omega}}}{\fun{\tau^{\Omega}_{t}}{F_{\Omega}}}
=
\fun{\hat{\tau}_{\physvec{\omega},t}}{\fun{\oarepn_{\physvec{\omega}}}{F_{\physvec{\omega}}}}.$$
Every local element gives a constant family with
$F_{\physvec{\omega}}
=
F_{\Omega}$.
The operators in the convergence statement are uniformly bounded,
so the convergence extends from local elements to $\oa{A}$ by density.
Together with $\fun{S_{\physvec{\omega}}}{t}$ for every $t$,
this proves the theorem.

We first establish $\fun{S_{\physvec{\omega}}}{t}$ for $\abs{t} < \frac{1}{a}$,
where $a$ is the constant from Lemma \ref{lem:derivation-bounds}.
By finite linearity,
Lemma \ref{lem:derivation-bounds} gives the uniformly norm-convergent Taylor expansion
$$\fun{\tau^{\Omega}_{t}}{F_{\Omega}}
=
\sum_{N
\geq
0}
\frac{t^{N}}{N!}
\fun{\delta_{\Omega}^{N}}{F_{\Omega}},$$
norm convergent uniformly in $\Omega$.
For fixed $N$,
Lemma \ref{lem:polynomial-limits} and Lemma \ref{lem:stationarity} give
$$\slim_{\Omega \to \infty}
\fun{\oarepn_{\physvec{\omega}}}{\fun{\delta_{\Omega}^{N}}{F_{\Omega}}}
=
\fun{\oarepn_{\physvec{\omega}}}{\fun{\delta_{\txtbogoliubov,\physvec{\omega}}^{N}}{F_{\physvec{\omega}}}}.$$
The uniform tail bounds of Lemma \ref{lem:derivation-bounds} justify interchanging the limit and the sum on any fixed vector.
The resulting series is
$$\sum_{N \geq 0}
\frac{t^{N}}{N!}
\fun{\delta_{\txtbogoliubov,\physvec{\omega}}^{N}}{F_{\physvec{\omega}}}
=
\fun{\tau^{\txtbogoliubov}_{\physvec{\omega},t}}{F_{\physvec{\omega}}}.$$
It converges in norm because $\delta_{\txtbogoliubov,\physvec{\omega}}$ preserves supports.
Set
$\bar{b}_{\physvec{\omega}}
=
\sup_{p \in \semigrposint} \abs{\physvec{b}_{\physvec{\omega},p}}
<
\infty$.
If $C$ is supported on $k$ sites,
the scalar-shift identity above and
$\norm{\physham[h]_{\physvec{\omega},p}}
=
\abs{\physvec{b}_{\physvec{\omega},p}}$
give
$$\begin{aligned}
\norm{\fun{\delta_{\txtbogoliubov,\physvec{\omega}}}{C}}
\leq
\sum_{p \in \fun{\supp}{C}}
\norm{\commutator{\physham[h]_{\physvec{\omega},p}}{C}}
\leq
2 k \bar{b}_{\physvec{\omega}} \norm{C}.
\end{aligned}$$
The derivation preserves the support of $C$,
and induction therefore gives
$$\norm{\fun{\delta^{N}_{\txtbogoliubov,\physvec{\omega}}}{C}}
\leq
\rbk{2 k \bar{b}_{\physvec{\omega}}}^{N} \norm{C}.$$
The implementation identity converts the represented sum into
$\fun{\hat{\tau}_{\physvec{\omega},t}}{\fun{\oarepn_{\physvec{\omega}}}{F_{\physvec{\omega}}}}$.

The group property propagates the short-time convergence.

If $\fun{S_{\physvec{\omega}}}{t}$ holds and $\abs{s} < \frac{1}{a}$,
then $\fun{S_{\physvec{\omega}}}{t + s}$ holds.
The group property and the uniformly convergent expansion in $s$ give
$$\fun{\oarepn_{\physvec{\omega}}}{\fun{\tau^{\Omega}_{t + s}}{F_{\Omega}}}
=
\sum_{N
\geq
0}
\frac{s^{N}}{N!}
\fun{\oarepn_{\physvec{\omega}}}{\fun{\tau^{\Omega}_{t}}{\fun{\delta_{\Omega}^{N}}{F_{\Omega}}}}.$$
By $\fun{S_{\physvec{\omega}}}{t}$ applied to the polynomial family
$\seq{\fun{\delta_{\Omega}^{N}}{F_{\Omega}}}{\Omega \geq \Omega_{0}}$,
the strong limit of the $N$-th summand is
$$\slim_{\Omega \to \infty}
\frac{s^{N}}{N!}
\fun{\oarepn_{\physvec{\omega}}}{
\fun{\tau^{\Omega}_{t}}{
\fun{\delta_{\Omega}^{N}}{F_{\Omega}}}}
=
\frac{s^{N}}{N!}
\fun{\hat{\tau}_{\physvec{\omega},t}}{
\fun{\oarepn_{\physvec{\omega}}}{
\fun{\delta_{\txtbogoliubov,\physvec{\omega}}^{N}}{F_{\physvec{\omega}}}}}.$$
The interchange of the limit $\Omega
\to
\infty$ with the $N$-summation is again justified by the uniform tail bounds.
The stationarity identity of Lemma \ref{lem:stationarity} and the continuity of $\hat{\tau}_{\physvec{\omega},t}$ identify the limiting series as
$$\sum_{N \geq 0}
\frac{s^{N}}{N!}
\fun{\hat{\tau}_{\physvec{\omega},t}}{
\fun{\oarepn_{\physvec{\omega}}}{
\fun{\delta_{\txtbogoliubov,\physvec{\omega}}^{N}}{F_{\physvec{\omega}}}}}
=
\fun{\hat{\tau}_{\physvec{\omega},t}}{
\fun{\oarepn_{\physvec{\omega}}}{
\fun{\tau^{\txtbogoliubov}_{\physvec{\omega},s}}{F_{\physvec{\omega}}}}}
=
\fun{\hat{\tau}_{\physvec{\omega},t + s}}{
\fun{\oarepn_{\physvec{\omega}}}{F_{\physvec{\omega}}}},$$
the last step by the implementation identity and the group property of $\hat{\tau}_{\physvec{\omega}}$.

For any $t
\in
\fldreal$,
choose $n
\in
\semigrposint$ such that $\frac{\abs{t}}{n}
<
\frac{1}{a}$.
The short-time result gives $\fun{S_{\physvec{\omega}}}{t/n}$,
and repeated application of the propagation implication with $s
=
t/n$ gives $\fun{S_{\physvec{\omega}}}{j t/n}$ for $j
=
1,\ldots,n$.
In particular,
$\fun{S_{\physvec{\omega}}}{t}$ holds for every $t
\in
\fldreal$.
\end{proof}

\begin{cor}[asymptotic constancy of the means]\label{cor:means-constant}
Under the hypotheses of Theorem \ref{thm:dynamics},
for every integer $k
\geq
0$,
$\gamma
\in
\setone{+, -, z}$,
and $t
\in
\fldreal$,
$$\slim_{\Omega \to \infty}
\fun{\oarepn_{\physvec{\omega}}}{\fun{\tau^{\Omega}_{t}}{M^{\gamma}_{k, \Omega}}}
=
\mu^{\gamma}_{\physvec{\omega},k}.$$
\end{cor}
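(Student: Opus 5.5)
The plan is to apply Theorem \ref{thm:dynamics} to the constant family $F_{\Omega} = M^{\gamma}_{k,\Omega}$. By Definition \ref{def:mean-field-polynomial-family} this is a monomial mean-field polynomial family with a single mean-field factor and no local factors; we may take $\Lambda_{0} = \emptyset$, or any fixed finite set such as $\setone{1}$ with $\Omega_{0} = 1$. Its classical substitution is the scalar
$$F_{\physvec{\omega}} = \mu^{\gamma}_{\physvec{\omega},k}\,1 \in \oa{A}_{\Lambda_{0}},$$
taken from \eqref{eq:mu-moments}. The joint mean profile hypothesis guarantees that these moments exist.

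The proof of Theorem \ref{thm:dynamics} establishes the statement $\fun{S_{\physvec{\omega}}}{t}$ for every $t$ and for every mean-field polynomial family, not only for local observables. Applying it to this family gives
$$\slim_{\Omega \to \infty} \fun{\oarepn_{\physvec{\omega}}}{\fun{\tau^{\Omega}_{t}}{M^{\gamma}_{k,\Omega}}} = \fun{\hat{\tau}_{\physvec{\omega},t}}{\fun{\oarepn_{\physvec{\omega}}}{\mu^{\gamma}_{\physvec{\omega},k} 1}}.$$
The right side equals $\mu^{\gamma}_{\physvec{\omega},k}$, because conjugation by the unitary $\fnexp{\imunit t \physham_{\txtbogoliubov,\physvec{\omega}}}$ fixes scalars.

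As a cross-check on the short-time part, one can argue directly. Lemma \ref{lem:stationarity} gives $\fun{\delta_{\txtbogoliubov,\physvec{\omega}}}{F_{\physvec{\omega}}} = 0$ for a scalar. Hence every term with $N \geq 1$ in the limiting Taylor series vanishes, which is consistent with the stationarity identities for the moments.

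The only point needing care is the bookkeeping: the statement $\fun{S_{\physvec{\omega}}}{t}$ must cover families whose local support $\Lambda_{0}$ is empty or trivial. Nothing in Lemmas \ref{lem:derivation-bounds}, \ref{lem:stationarity} and \ref{lem:polynomial-limits} uses local factors. For instance, the monomial bound of Lemma \ref{lem:derivation-bounds} with $\ell = 1$ and $\fun{\nu}{F} = E^{k}$ controls the series for $\abs{t} < 1/a$. The propagation argument then extends the convergence to all $t$ exactly as in the proof of Theorem \ref{thm:dynamics}.
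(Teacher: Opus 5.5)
Your proposal is correct and is essentially the paper's own proof. The paper applies the statement $S_{\physvec{\omega}}(t)$ from the proof of Theorem~\ref{thm:dynamics} to the family $F_{\Omega} = M^{\gamma}_{k,\Omega}$, whose substituted element is the scalar $\mu^{\gamma}_{\physvec{\omega},k}$ and is therefore fixed by $\hat{\tau}_{\physvec{\omega},t}$; your extra bookkeeping about the empty local support and the degree-one monomial bound is consistent with that argument.
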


\begin{proof}
Apply $\fun{S_{\physvec{\omega}}}{t}$ to the mean-field polynomial family
$F_{\Omega}
=
M^{\gamma}_{k,\Omega}$.
Its substituted element is the scalar
$F_{\physvec{\omega}}
=
\mu^{\gamma}_{\physvec{\omega},k}$,
fixed by $\hat{\tau}_{\physvec{\omega},t}$.
\end{proof}

Corollary \ref{cor:means-constant} contains \cite[Eqs. (42)--(45)]{ThirringWehrl001}: in representations satisfying the gap equation, all time derivatives of the weighted means vanish in the limit. Their vanishing makes the weighted means constants of the limiting motion. Theorem \ref{thm:dynamics} gives the precise form of \cite[Eqs. (46)--(47)]{ThirringWehrl001}.

\begin{rem}[physical status of the sector dynamics]\label{rem:physical-sector-dynamics}
The finite-volume BCS dynamics is gauge invariant and number conserving.
For an aligned phase, Theorem \ref{thm:dynamics} gives convergence to its self-consistent Bogoliubov dynamics
\cite{RudolfHaag005,ThirringWehrl001}.
Proposition \ref{prop:no-frozen-mean-field-dynamics} excludes only the frozen mean-field dynamics of
Definition \ref{def:frozen-mean-field-dynamics} for a non-aligned configuration.
\end{rem}

\subsection{Failure of the Bogoliubov dynamics without the gap equation}\label{failure-of-the-bogoliubov-dynamics-without-the-gap-equation}

The degenerate model permits an exact comparison between finite-volume precession and the proposed Bogoliubov dynamics. The comparison proves that a constant configuration has the required limiting dynamics exactly when it is aligned, or equivalently when it satisfies the gap equation in this setting. The proof starts from the commutation relation \(\commutator{\physham_{\Omega}}{S^{z}_{\Omega}}
=
0\) and the frequency formula \eqref{eq:frequencies}. On the joint eigenvectors of \(\physvec{S}^{2}_{\Omega}\) and \(S^{z}_{\Omega}\), these identities give \begin{equation}\label{eq:exact-precession}
\fun{\tau^{\Omega}_{t}}{M^{+}_{0, \Omega}}
=
M^{+}_{0, \Omega} \fnexp{\imunit t \rbk{- 2 \varepsilon + \frac{2}{\sminvtemperature_{c}} M^{z}_{0, \Omega}}},
\quad
\fun{\tau^{\Omega}_{t}}{M^{z}_{0, \Omega}}
=
M^{z}_{0, \Omega}.
\end{equation} The joint eigenvectors form a basis, which makes \eqref{eq:exact-precession} an operator identity. The term second frequency refers to the second spectral difference in \eqref{eq:frequencies}. It is the energy change at fixed \(S\) under the transition \(S_{z}
=
m \mapsto m + 1\) generated by \(S^{+}_{\Omega}\): \begin{equation}\label{eq:second-frequency-transition}
\fun{E_{\Omega}}{S, m + 1}
-
\fun{E_{\Omega}}{S, m}
=
- 2 \varepsilon
+
\frac{4 m}{\sminvtemperature_{c} \Omega}.
\end{equation} Since \(M^{z}_{0, \Omega}
=
\frac{2}{\Omega} S^{z}_{\Omega}\), the exponent in \eqref{eq:exact-precession} is precisely the spectral difference \eqref{eq:second-frequency-transition} on the joint eigenspace labeled by \(\rbk{S, m}\).

\begin{prop}[precession and the second frequency]\label{prop:precession}
Let the model be degenerate,
let $\physvec{u}
\in
\mansphere^{2}$,
and let $\physvec{\omega}$ be the constant configuration determined by
$\physvec{u}_{p}
=
\physvec{u}$ for every $p
\in
\semigrposint$.
Suppose that $m^{+}
=
\frac{1}{2} \rbk{u^{x} + \imunit u^{y}}
\neq
0$.
Set
$$2 \tilde{\mu}
=
2 \varepsilon - \frac{2}{\sminvtemperature_{c}} u^{z}
=
\frac{2}{\sminvtemperature_{c}}
\rbk{\sminvtemperature_{c} \varepsilon - u^{z}}.$$
The operator form of the spectral difference \eqref{eq:second-frequency-transition} has the strong limit
$$\slim_{\Omega \to \infty}
\fun{\oarepn_{\physvec{\omega}}}{
- 2 \varepsilon
+
\frac{2}{\sminvtemperature_{c}} M^{z}_{0, \Omega}}
=
- 2 \tilde{\mu}.$$
The transverse mean and the longitudinal mean have the strong limit
$$\slim_{\Omega \to \infty}
\fun{\oarepn_{\physvec{\omega}}}{\fun{\tau^{\Omega}_{t}}{M^{+}_{0, \Omega}}}
=
m^{+} \fnexp{- \imunit t 2 \tilde{\mu}},
\quad
\slim_{\Omega \to \infty}
\fun{\oarepn_{\physvec{\omega}}}{\fun{\tau^{\Omega}_{t}}{M^{z}_{0, \Omega}}}
=
u^{z}.$$
The limiting order parameter precesses about the $z$-axis with the frequency $2 \tilde{\mu}$,
which vanishes exactly when $u^{z}
=
\sminvtemperature_{c} \varepsilon$,
that is,
exactly when the configuration is aligned in the sense of
Definition \ref{def:aligned-configuration}.
\end{prop}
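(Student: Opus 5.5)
The plan is to take the exact operator identity \eqref{eq:exact-precession} as the starting point and pass to the strong limit term by term. The first step is the spectral-difference limit. For the constant configuration, Corollary \ref{cor:mean-spin} gives $\fun{\oarepn_{\physvec{\omega}}}{M^{z}_{0,\Omega}} \to u^{z}$ strongly, with $\eta_{\physvec{\omega}}\bar{\physvec{u}}_{\physvec{\omega}} = \physvec{u}$. Adding the constant $-2\varepsilon$ and multiplying by $\frac{2}{\sminvtemperature_{c}}$ then yields the limit $-2\varepsilon + \frac{2}{\sminvtemperature_{c}} u^{z} = -2\tilde{\mu}$ at once.

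The second step handles the longitudinal mean. By \eqref{eq:exact-precession}, $\fun{\tau^{\Omega}_{t}}{M^{z}_{0,\Omega}} = M^{z}_{0,\Omega}$ exactly, because $\physham_{\Omega}$ commutes with $S^{z}_{\Omega}$. The claimed limit $u^{z}$ is therefore the same corollary again.

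The third step treats the transverse mean. Write
$$\fun{\tau^{\Omega}_{t}}{M^{+}_{0,\Omega}} = M^{+}_{0,\Omega}\, G_{\Omega,t}, \qquad G_{\Omega,t} = \fnexp{\imunit t (-2\varepsilon + \frac{2}{\sminvtemperature_{c}} M^{z}_{0,\Omega})}.$$
Both factors lie in $\oa{A}_{\Omega}$, and $\oarepn_{\physvec{\omega}}$ is a $\ast$-homomorphism, so the represented operator is $\fun{\oarepn_{\physvec{\omega}}}{M^{+}_{0,\Omega}} \fun{\oarepn_{\physvec{\omega}}}{G_{\Omega,t}}$. The factor $\fun{\oarepn_{\physvec{\omega}}}{G_{\Omega,t}}$ converges strongly to $\fnexp{-\imunit t 2\tilde{\mu}}$. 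This follows either from the exponential clause of Corollary \ref{cor:mean-spin} with $\physvec{b} = \frac{2}{\sminvtemperature_{c}}\hat{\physvec{z}}$ and the scalar phase $\fnexp{-2\imunit t\varepsilon}$ factored out, or from Lemma \ref{lem:strong-exponentials}. The factor $\fun{\oarepn_{\physvec{\omega}}}{M^{+}_{0,\Omega}}$ converges strongly to $m^{+}$ by Theorem \ref{thm:lln}. Both families are uniformly bounded, the unitaries by $1$ and the means by $1$, so Lemma \ref{lem:strong-products} gives the product limit $m^{+}\fnexp{-\imunit t2\tilde{\mu}}$.

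The step that needs care, and the only one that is not a direct citation, is the justification of \eqref{eq:exact-precession} as an operator identity and its ordering. I would check it directly. $S^{+}_{\Omega}$ maps the $(S,m)$ eigenspace to $(S,m+1)$, so
$$\fnexp{\imunit t\physham_{\Omega}} S^{+}_{\Omega} \fnexp{-\imunit t\physham_{\Omega}} = S^{+}_{\Omega}\, \fnexp{\imunit t (E_{\Omega}(S,m+1) - E_{\Omega}(S,m))}$$
on that eigenspace. By \eqref{eq:second-frequency-transition} the phase factor equals $\fnexp{\imunit t(-2\varepsilon + \frac{4}{\sminvtemperature_{c}\Omega}S^{z}_{\Omega})}$, acting on the right, that is, before $S^{+}_{\Omega}$ raises $m$. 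This matches \eqref{eq:exact-precession} with $M^{z}_{0,\Omega} = \frac{2}{\Omega}S^{z}_{\Omega}$, and completeness of the joint eigenbasis makes it an operator identity. Since the scalar limits commute, the ordering of the factors is immaterial after the limit.

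Finally, $2\tilde{\mu} = 0$ if and only if $u^{z} = \sminvtemperature_{c}\varepsilon$. I would match this to alignment using $\physvec{b}_{\physvec{\omega},p} = (u^{x}/\sminvtemperature_{c}, u^{y}/\sminvtemperature_{c}, \varepsilon)$. Suppose $u^{z} = \sminvtemperature_{c}\varepsilon$. Then $\physvec{b} = \physvec{u}/\sminvtemperature_{c}$, which is nonzero and parallel to $\physvec{u}$, so the configuration is aligned with $\eta_{\physvec{\omega},p} = 1$. Conversely, suppose $\physvec{u} = \pm\physvec{b}/\abs{\physvec{b}}$. Since $m^{+} \neq 0$, the transverse components give $\abs{\physvec{b}} = \pm 1/\sminvtemperature_{c}$. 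Hence the sign is $+$ and $\abs{\physvec{b}} = 1/\sminvtemperature_{c}$, and the $z$-component then gives $u^{z} = \sminvtemperature_{c}\varepsilon$.
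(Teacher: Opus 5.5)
Your proposal is correct and follows essentially the same route as the paper. It uses the exact identity \eqref{eq:exact-precession}, Corollary~\ref{cor:mean-spin} and Theorem~\ref{thm:lln} for the factors, and Lemmas~\ref{lem:strong-exponentials} and~\ref{lem:strong-products} for the product, and it settles the alignment equivalence by comparing the transverse and longitudinal components of $\physvec{u}$ and $\physvec{b}$ exactly as the paper does. Your explicit check of the factor ordering in \eqref{eq:exact-precession}, via the action of $S^{+}_{\Omega}$ on the $(S,m)$ eigenspaces, is a useful addition; the paper carries out that step only in the text preceding the proposition, not in the proof itself.
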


\begin{proof}
The transverse limit are given
due to the equation \eqref{eq:exact-precession},
Corollary \ref{cor:mean-spin} for the factors,
Lemma \ref{lem:strong-exponentials} for the exponential factor with generator $\frac{2}{\sminvtemperature_{c}} M^{z}_{0, \Omega}$,
and Lemma \ref{lem:strong-products} for the product.
The limits of the factors are $m^{+}$ and $u^{z}$.
Alignment of the constant configuration with $m^{+}
\neq
0$ means $\physvec{u} \parallel \physvec{b}
=
\rbk{\frac{1}{\sminvtemperature_{c}} u^{x}, \frac{1}{\sminvtemperature_{c}} u^{y}, \varepsilon}$,
which for the transverse components is automatic.
The $z$-component of the same alignment condition reads $u^{z}
=
\sminvtemperature_{c} \varepsilon$.
$\physvec{u} \parallel \physvec{b}$ with equal transverse parts up to the factor $\frac{1}{\sminvtemperature_{c} \abs{\physvec{b}}}$ forces $\abs{\physvec{b}}
=
\frac{1}{\sminvtemperature_{c}}$.
It follows that $u^{z}
=
\sminvtemperature_{c} \varepsilon$.
Conversely $u^{z}
=
\sminvtemperature_{c} \varepsilon$ gives $\physvec{b}
=
\frac{1}{\sminvtemperature_{c}} \physvec{u}$.
\end{proof}

The signed thermodynamic limit of the second characteristic spectral difference in \eqref{eq:frequencies} is \(-2 \tilde{\mu}\). The parameter \(2 \tilde{\mu}\) is the corresponding phase-precession frequency in Proposition \ref{prop:precession}. The chemical potential adjustment of \cite[Section 1]{ThirringWehrl001} replaces \(\physham_{\Omega}\) by \(\physham_{\Omega} + \tilde{\mu} \sum_{p
\leq
\Omega} \sigma^{z}_{p}\). Equivalently, the energy \(\varepsilon\) is measured from the effective chemical potential and is shifted according to \[\varepsilon
\mapsto
\varepsilon - \tilde{\mu}
=
\frac{1}{\sminvtemperature_{c}} u^{z}.\] After this adjustment, the configuration is aligned in the sense of Definition \ref{def:aligned-configuration}.

\begin{prop}[no frozen mean-field dynamics in non-aligned classes]\label{prop:no-frozen-mean-field-dynamics}
Let the model be degenerate,
let $\physvec{u}
\in
\mansphere^{2}$,
and let $\physvec{\omega}$ be the constant configuration determined by
$\physvec{u}_{p}
=
\physvec{u}$ for every $p
\in
\semigrposint$.
Suppose that $\rbk{u^{x}, u^{y}}
\neq
0$ and $u^{z}
\neq
\sminvtemperature_{c} \varepsilon$.
This configuration is not aligned in the sense of Definition
\ref{def:aligned-configuration}.
Let $\tau^{\physham[h]}$ be the frozen mean-field dynamics of Definition \ref{def:frozen-mean-field-dynamics} for this configuration.
Equation \eqref{eq:effective-hamiltonian} applied to the mean polarization of the constant configuration gives the time-independent field
$$\physvec{b}
=
\vecbk{\frac{1}{\sminvtemperature_{c}} u^{x},
\frac{1}{\sminvtemperature_{c}} u^{y},
\varepsilon}
=
\frac{1}{\sminvtemperature_{c}}
\vecbk{u^{x},
u^{y},
\sminvtemperature_{c} \varepsilon}.$$
The one-site generator of $\tau^{\physham[h]}$ is
$\physham[h]
=
- \physvec{b} \cdot \physvec{\sigma}$ and 
the frozen dynamics has the following longitudinal strong limit:
$$\begin{aligned}
\slim_{\Omega \to \infty}
\fun{\oarepn_{\physvec{\omega}}}{\fun{\tau^{\physham[h]}_{t}}{M^{z}_{0, \Omega}}}
=
u^{z} + A \rbk{1 - \fun{\cos}{2 \abs{\physvec{b}} t}},
\quad
A
=
\frac{\rbk{\rbk{u^{x}}^{2} + \rbk{u^{y}}^{2}} \rbk{\varepsilon - \frac{u^{z}}{\sminvtemperature_{c}}}}{\sminvtemperature_{c} \abs{\physvec{b}}^{2}}
\neq
0.
\end{aligned}$$
The finite-volume dynamics satisfies
$\fun{\oarepn_{\physvec{\omega}}}{\fun{\tau^{\Omega}_{t}}{M^{z}_{0, \Omega}}}
\to
u^{z}$ as $\Omega
\to
\infty$ for every $t$.
The frozen mean-field dynamics $\tau^{\physham[h]}$ does not reproduce the limiting Heisenberg dynamics of the non-aligned class.
\end{prop}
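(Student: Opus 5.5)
The plan is to compute both longitudinal limits explicitly and compare them. \emph{Non-alignment} comes first. If $\physvec{u} = \pm\physvec{b}/\abs{\physvec{b}}$, then the transverse components with $\rbk{u^{x},u^{y}} \neq 0$ force $\abs{\physvec{b}} = \frac{1}{\sminvtemperature_{c}}$ with sign $+$. The $z$-component then gives $u^{z} = \sminvtemperature_{c}\varepsilon$, exactly as in the proof of Proposition \ref{prop:precession}. This contradicts the hypothesis. The \emph{finite-volume side} is immediate: \eqref{eq:exact-precession} gives $\fun{\tau^{\Omega}_{t}}{M^{z}_{0,\Omega}} = M^{z}_{0,\Omega}$ exactly, and Corollary \ref{cor:mean-spin} gives the strong limit $u^{z}$ for every $t$.

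For the \emph{frozen dynamics}, I would first compute the one-site Heisenberg evolution. Set $\physvec{n} = \physvec{b}/\abs{\physvec{b}}$. Conjugation by $\fnexp{\imunit t \physham[h]}$ with $\physham[h] = -\physvec{b}\cdot\physvec{\sigma}$ rotates Bloch vectors about $\physvec{n}$ by the angle $2\abs{\physvec{b}}t$. Hence $\fun{\tau^{\physham[h]}_{t}}{\sigma^{z}_{p}} = \physvec{\sigma}_{p}\cdot\fun{\physvec{a}}{t}$, where
$$\fun{\physvec{a}}{t} = n^{z}\physvec{n} + \fun{\cos}{2\abs{\physvec{b}}t}\rbk{\hat{\physvec{z}} - n^{z}\physvec{n}} \pm \fun{\sin}{2\abs{\physvec{b}}t}\,\physvec{n}\times\hat{\physvec{z}}.$$
The overall sign of the sine term is irrelevant, as shown next. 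The frozen image of $M^{z}_{0,\Omega}$ is therefore the average $\frac{1}{\Omega}\sum_{p\leq\Omega}\physvec{\sigma}_{p}\cdot\fun{\physvec{a}}{t}$ of one-site operators. These have norm at most $\abs{\fun{\physvec{a}}{t}} = 1$, and in the product state their means are constant in $p$, equal to $\physvec{u}\cdot\fun{\physvec{a}}{t}$. Theorem \ref{thm:lln} then gives the strong limit $\physvec{u}\cdot\fun{\physvec{a}}{t}$.

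It remains to evaluate this scalar. The sine term contributes $\physvec{u}\cdot\rbk{\physvec{n}\times\hat{\physvec{z}}} = \hat{\physvec{z}}\cdot\rbk{\physvec{u}\times\physvec{n}}$. This vanishes because the $z$-component of $\physvec{u}\times\physvec{b}$ is $\rbk{u^{x}u^{y}-u^{y}u^{x}}/\sminvtemperature_{c} = 0$. The remaining terms give
$$u^{z} + \rbk{1-\fun{\cos}{2\abs{\physvec{b}}t}}\rbk{\frac{b^{z}\,\physvec{u}\cdot\physvec{b}}{\abs{\physvec{b}}^{2}} - u^{z}}.$$
Substituting $b^{z} = \varepsilon$, $\physvec{u}\cdot\physvec{b} = \rbk{\rho^{2} + \sminvtemperature_{c}\varepsilon u^{z}}/\sminvtemperature_{c}$ and $\abs{\physvec{b}}^{2} = \rho^{2}/\sminvtemperature_{c}^{2} + \varepsilon^{2}$, with $\rho^{2} = \rbk{u^{x}}^{2} + \rbk{u^{y}}^{2}$, the bracket simplifies to $A = \rho^{2}\rbk{\varepsilon - u^{z}/\sminvtemperature_{c}}/\rbk{\sminvtemperature_{c}\abs{\physvec{b}}^{2}}$. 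This is nonzero because $\rho > 0$ and $u^{z} \neq \sminvtemperature_{c}\varepsilon$.

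To conclude, pick $t$ with $\fun{\cos}{2\abs{\physvec{b}}t} \neq 1$; this is possible since $\abs{\physvec{b}} \geq \rho/\sminvtemperature_{c} > 0$. At such $t$ the two strong limits differ, so $\tau^{\physham[h]}$ cannot reproduce the limiting Heisenberg dynamics. The only step needing care is the rotation formula for the one-site evolution together with the cancellation of the sine term. I expect the algebraic simplification of $A$ to be the main, though routine, obstacle.
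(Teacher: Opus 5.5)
Your proposal is correct and follows essentially the same route as the paper. The frozen one-site evolution is a rotation about $\hat{\physvec{b}}$ by the angle $2\abs{\physvec{b}}t$, and Theorem \ref{thm:lln} turns the frozen mean into $\physvec{u}\cdot R(\hat{\physvec{z}})$. The Rodrigues sine term vanishes by the triple-product cancellation, and the cosine coefficient simplifies to $A$. For the true dynamics, \eqref{eq:exact-precession} together with Corollary \ref{cor:mean-spin} gives the constant limit $u^{z}$. Your explicit check of non-alignment, borrowed from the argument of Proposition \ref{prop:precession}, is a small addition that the paper's proof leaves implicit.
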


\begin{proof}
The one-site Heisenberg evolution under $\physham[h]
=
- \physvec{b} \cdot \physvec{\sigma}$ rotates the Bloch sphere.
Lemma \ref{lem:pauli-product} gives the Heisenberg differential equation
$$\frac{d}{d t} \fun{\tau^{\physham[h]}_{t}}{\physvec{\sigma} \cdot \physvec{a}}
=
\fun{\tau^{\physham[h]}_{t}}{\imunit \commutator{\physham[h]}{\physvec{\sigma} \cdot \physvec{a}}}
=
\fun{\tau^{\physham[h]}_{t}}{\physvec{\sigma} \cdot \rbk{2 \physvec{b} \times \physvec{a}}},$$
since the product identity applied in both orders gives $\commutator{\physvec{b} \cdot \physvec{\sigma}}{\physvec{a} \cdot \physvec{\sigma}}
=
2 \imunit \physvec{\sigma} \cdot \rbk{\physvec{b} \times \physvec{a}}$.
It follows that $\imunit \commutator{- \physvec{b} \cdot \physvec{\sigma}}{\physvec{a} \cdot \physvec{\sigma}}
=
2 \physvec{\sigma} \cdot \rbk{\physvec{b} \times \physvec{a}}$.
It follows that $\fun{\tau^{\physham[h]}_{t}}{\physvec{\sigma} \cdot \physvec{a}}
=
\physvec{\sigma} \cdot \fun{R^{\physvec{b}}_{t}}{\physvec{a}}$,
where $R^{\physvec{b}}_{t}
=
\fnexp{2 t \abs{\physvec{b}} K}$ with $K \physvec{a}
=
\hat{\physvec{b}} \times \physvec{a}$ is the rotation about $\hat{\physvec{b}}
=
\frac{\physvec{b}}{\abs{\physvec{b}}}$ by the angle $2 \abs{\physvec{b}} t$.
The evolved one-site observables $\fun{\tau^{\physham[h]}_{t}}{\sigma^{z}_{p}}
=
\physvec{\sigma}_{p} \cdot \fun{R^{\physvec{b}}_{t}}{\hat{\physvec{z}}}$ are one-site with means $\physvec{u} \cdot \fun{R^{\physvec{b}}_{t}}{\hat{\physvec{z}}}$.
Theorem \ref{thm:lln} gives
$$\slim_{\Omega \to \infty}
\fun{\oarepn_{\physvec{\omega}}}{
\fun{\tau^{\physham[h]}_{t}}{M^{z}_{0, \Omega}}}
=
\fun{u^{z}}{t}.$$
Lemma \ref{lem:rodrigues},
applied with $\physvec{q}
=
\hat{\physvec{b}}$,
$\physvec{a}
=
\hat{\physvec{z}}$,
and $\theta
=
2 \abs{\physvec{b}} t$,
gives
$$\begin{aligned}
&\fun{u^{z}}{t}
=
\physvec{u} \cdot \fun{R^{\physvec{b}}_{t}}{\hat{\physvec{z}}}
\\ 
&=
\rbk{\physvec{u} \cdot \hat{\physvec{b}}} \rbk{\hat{\physvec{b}} \cdot \hat{\physvec{z}}}
+\sqbk{u^{z} - \rbk{\physvec{u} \cdot \hat{\physvec{b}}} \rbk{\hat{\physvec{b}} \cdot \hat{\physvec{z}}}}
\fun{\cos}{2 \abs{\physvec{b}} t}
+\physvec{u} \cdot \rbk{\hat{\physvec{b}} \times \hat{\physvec{z}}} \fun{\sin}{2 \abs{\physvec{b}} t}.
\end{aligned}$$
The scalar triple-product term vanishes:
$$\physvec{u} \cdot \rbk{\hat{\physvec{b}} \times \hat{\physvec{z}}}
=
u^{x} \hat{b}^{y} - u^{y} \hat{b}^{x}
=
\frac{1}{\sminvtemperature_{c} \abs{\physvec{b}}} \rbk{u^{x} u^{y} - u^{y} u^{x}}
=
0.$$
The longitudinal part of \eqref{eq:rodrigues-rotation} is
$$\begin{aligned}
&\rbk{\physvec{u} \cdot \hat{\physvec{b}}} \hat{b}^{z} - u^{z}
=
\frac{\rbk{\frac{1}{\sminvtemperature_{c}} \rbk{\rbk{u^{x}}^{2} + \rbk{u^{y}}^{2}} + \varepsilon u^{z}} \varepsilon - u^{z} \rbk{\frac{1}{\sminvtemperature_{c}^{2}} \rbk{\rbk{u^{x}}^{2} + \rbk{u^{y}}^{2}} + \varepsilon^{2}}}
{\abs{\physvec{b}}^{2}}
\\ 
&=
\frac{\rbk{\rbk{u^{x}}^{2} + \rbk{u^{y}}^{2}} \rbk{\varepsilon - \frac{u^{z}}{\sminvtemperature_{c}}}}
{\sminvtemperature_{c} \abs{\physvec{b}}^{2}}
=
A.
\end{aligned}$$
It follows that $\fun{u^{z}}{t}
=
u^{z} + A \rbk{1 - \fun{\cos}{2 \abs{\physvec{b}} t}}$ with $A
\neq
0$ under the stated hypotheses.
The exact conservation in \eqref{eq:exact-precession} gives the constant limit for the true dynamics,
and the two limits differ for all $t
\notin
\frac{\pi}{\abs{\physvec{b}}} \ringratint$.
\end{proof}

\subsection{Weak non-convergence of the evolution operators}\label{weak-non-convergence-of-the-evolution-operators}

Even for configurations aligned in the sense of Definition \ref{def:aligned-configuration}, where Theorem \ref{thm:dynamics} gives the correct Heisenberg dynamics, the evolution operators \(\fnexp{\imunit t \fun{\oarepn_{\physvec{\omega}}}{
\physham_{\Omega} - E_{\physvec{\omega},\Omega}}}\) do not converge to \(\fnexp{\imunit t \physham_{\txtbogoliubov,\physvec{\omega}}}\), where \(\oarepn_{\physvec{\omega}}\) is the product representation of Definition \ref{def:spin-product-representation}. This is the explicit computation of \cite[p. 311]{ThirringWehrl001}, carried out here with full error control. The combinatorial input is a local limit theorem for the symmetric binomial distribution with a quadratic phase.

\begin{lem}[binomial sums with quadratic phase]\label{lem:binomial-clt}
For $2 s
\in
\semigrposint$ let $\fun{P_{s}}{m}
=
2^{- 2 s} \binom{2 s}{s + m}$,
where $m$ runs over $\setone{- s, - s + 1, \dotsc, s}$,
so that $s + m
\in
\monnat$.
Half-integer $s$ is allowed.
Then for all $\alpha, \beta
\in
\fldreal$,
$$\lim_{s
\to
\infty} \sum_{m
=
- s}^{s} \fun{P_{s}}{m} \fnexp{\imunit \alpha \frac{m^{2}}{s} + \imunit \beta \frac{m}{s}}
=
\frac{1}{\sqrt{1 - \imunit \alpha}},$$
with the principal branch of the square root.
\end{lem}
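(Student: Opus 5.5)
The plan is to prove the lemma as a weak-convergence statement: identify the limit of the distributions of $x=m/\sqrt{s}$, then integrate the quadratic phase against that limit and handle the unbounded tail by a uniform Gaussian bound. Write $m = x\sqrt{s}$. Under $P_{s}$ the variable $m$ is $\frac12\sum_{j=1}^{2s}\epsilon_{j}$ with independent symmetric signs $\epsilon_{j}$, so $\mathbb{E}[m^{2}] = s/2$ and $m/\sqrt{s}$ has variance $1/2$. The phase becomes $\fnexp{\imunit \alpha x^{2} + \imunit \beta x/\sqrt{s}}$. The linear term tends to $1$ uniformly on $\abs{x}\le K$ and is bounded by $1$ in modulus everywhere, so it only contributes an error $\abs{\beta} K/\sqrt{s}$ on the bulk plus a tail term. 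The candidate limit is therefore
$$\int_{\fldreal} \fnexp{\imunit\alpha x^{2}} \frac{\fnexp{-x^{2}}}{\sqrt{\pi}}\, dx = \frac{1}{\sqrt{1-\imunit\alpha}},$$
a Gaussian integral evaluated by analytic continuation in $\alpha$ from the real case, where $\opreal(1-\imunit\alpha)=1>0$. This continuation fixes the principal branch.

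The convergence step has three parts. First, the de Moivre--Laplace central limit theorem, in the weak form that $m/\sqrt{s}$ converges in distribution to $\mathcal{N}(0,1/2)$, holds for half-integer $s$ as well, since only $2s\to\infty$ is used. Second, the function $x\mapsto \fnexp{\imunit\alpha x^{2}}$ is bounded and continuous, so weak convergence gives convergence of its expectation directly. Third, the $\beta$ term is handled by the estimate
$$\abs{\fnexp{\imunit\beta m/s}-1} \le \abs{\beta}\,\abs{m}/s,$$
whose expectation is at most $\abs{\beta}\sqrt{\mathbb{E}[m^{2}]}/s = \abs{\beta}/\sqrt{2s} \to 0$. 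This removes $\beta$ without truncation.

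The main obstacle is mostly bookkeeping. Weak convergence applies directly because the integrand $\fnexp{\imunit \alpha x^{2}}$ is bounded continuous; the tail control a local limit theorem would otherwise need is not required. If the paper prefers an explicit local estimate, I would use Stirling's formula to show $\sqrt{s}\,P_{s}(m)\to \pi^{-1/2}\fnexp{-m^{2}/s}$ uniformly for $\abs{m}\le s^{2/3}$. For the tail I would use the Hoeffding bound $\sum_{\abs{m}>K\sqrt{s}}P_{s}(m)\le 2\fnexp{-K^{2}}$. The Riemann sums over the lattice of spacing $1/\sqrt{s}$ then converge to the Gaussian integral, with the approximation uniform on compacts in $(\alpha,\beta)$. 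The same structure also yields a rate of convergence if one is needed later.
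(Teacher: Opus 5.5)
Your argument is correct, but it takes a different route from the paper. You write the sum as an expectation in $X_s=m/\sqrt{s}$. You remove the linear phase with $|e^{i\beta m/s}-1|\le|\beta|\,|m|/s$ and Cauchy--Schwarz against $\mathbb{E}[m^2]=s/2$, which costs $O(s^{-1/2})$ with no truncation. You then apply the de Moivre--Laplace theorem $X_s\Rightarrow\mathcal{N}(0,1/2)$ to the bounded continuous test function $x\mapsto e^{i\alpha x^2}$, so no tail estimate is needed at all.

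The paper instead proves a local limit theorem by hand:
\begin{itemize}
\item it discards $|m|>s^{3/5}$ via the entropy bound $P_s(m)\le C(2s+1)^{3/2}e^{-m^2/s}$ from Corollary~\ref{cor:entropy-bounds};
\item it derives $P_s(m)=(\pi s)^{-1/2}e^{-m^2/s}(1+O(s^{-3/5}))$ uniformly on the bulk from the Stirling--Robbins bounds;
\item it recognizes the bulk sum as a Riemann sum of $\pi^{-1/2}e^{-(1-i\alpha)x^2}$ with mesh $s^{-1/2}$.
\end{itemize}
Both proofs end with the same analytic continuation of the Gaussian integral in $\zeta=1-i\alpha$ over $\operatorname{Re}\zeta>0$.

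Your version is shorter, works for any bounded continuous function of $m/\sqrt{s}$, and avoids the relative-error and Riemann-sum bookkeeping. Its cost is that it imports the central limit theorem from outside, or equivalently L\'evy's continuity theorem applied to $\mathbb{E}[e^{itX_s}]=\cos^{2s}(t/(2\sqrt{s}))\to e^{-t^2/4}$. The paper stays within the Stirling--Robbins machinery it already develops for the concentration theorem, and it obtains explicit pointwise asymptotics for $P_s(m)$. Your closing Stirling/Hoeffding sketch is essentially the paper's route and is not needed for your main argument.
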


\begin{proof}
We first control the tails.
The entropy function
$$\fun{\hat{h}}{x}
=
- x \fun{\log}{x} - \rbk{1 - x} \fun{\log}{1 - x}$$ satisfies $\fun{\hat{h}}{\frac{1}{2} + u}
\leq
\log 2 - 2 u^{2}$,
since $\hat{h}''
=
- \frac{1}{x \rbk{1 - x}}
\leq
- 4$ on $\rbk{0, 1}$ and $\hat{h}' \rbk{\frac{1}{2}}
=
0$.
By Corollary \ref{cor:entropy-bounds} with $n
=
2 s$,
$a
=
s + m$,
$$\fun{P_{s}}{m}
\leq
\fnexp{2 s \sqbk{\fun{\hat{h}}{\frac{1}{2} + \frac{m}{2 s}} - \log 2} + \frac{3}{2} \fun{\log}{2 s + 1} + 3}
\leq
C \rbk{2 s + 1}^{3 / 2} \fnexp{- \frac{m^{2}}{s}}.$$
The tail probability satisfies
$$\sum_{\substack{m = - s, - s + 1, \dotsc, s \\ \abs{m} > s^{3 / 5}}} \fun{P_{s}}{m}
\leq
C \rbk{2 s + 1}^{5 / 2} \fnexp{- s^{1 / 5}}
\to
0
\quad \rbk{s \to \infty}.$$
Every summand in the oscillatory sum has modulus at most
$\fun{P_{s}}{m}$.
The tail contribution therefore vanishes with the tail probability.

We next derive the bulk asymptotics.
For $\abs{m}
\leq
s^{3 / 5}$,
Proposition \ref{prop:stirling} applied to the three factorials gives
$$\begin{aligned}
\fun{P_{s}}{m}
&=
\frac{1}{\sqrt{\pi s}} \rbk{1 - \frac{m^{2}}{s^{2}}}^{- 1 / 2} \fnexp{- \fun{g_{s}}{m} + \fun{O}{\frac{1}{s}}},
\\ 
\fun{g_{s}}{m}
&=
\rbk{s + m} \fun{\log}{1 + \frac{m}{s}} + \rbk{s - m} \fun{\log}{1 - \frac{m}{s}},
\end{aligned}$$
by the same computation as in Corollary \ref{cor:entropy-bounds} with the square-root factors kept exactly.
The Taylor expansion $\fun{\log}{1 \pm x}
=
\pm x - \frac{x^{2}}{2} \pm \frac{x^{3}}{3} - \dotsb$ gives,
for $\abs{m}
\leq
s^{3 / 5}$,
$$\fun{g_{s}}{m}
=
\frac{m^{2}}{s} + \fun{O}{\frac{m^{4}}{s^{3}}}
=
\frac{m^{2}}{s} + \fun{O}{s^{- 3 / 5}},$$
uniformly in the bulk.
For the square-root factor,
set $y
=
\frac{m^{2}}{s^{2}}$.
The bulk restriction gives
$$\begin{aligned}
0
\leq
y
\leq
s^{- 4 / 5},
\quad
0
\leq
y^{2}
\leq
s^{- 8 / 5}.
\end{aligned}$$
For sufficiently large $s$ one has $y
\leq
\frac{1}{2}$,
and Taylor's formula on $\sqbk{0, \frac{1}{2}}$ gives the uniform expansion
$$\begin{aligned}
\rbk{1 - \frac{m^{2}}{s^{2}}}^{- 1 / 2}
=
1
+\frac{1}{2} \frac{m^{2}}{s^{2}}
+\fun{O}{\frac{m^{4}}{s^{4}}}
=
1
+\fun{O}{s^{- 4 / 5}}.
\end{aligned}$$
The three uniform error factors give
$$\fun{P_{s}}{m}
=
\frac{1}{\sqrt{\pi s}} \fnexp{- \frac{m^{2}}{s}} \rbk{1 + \fun{r_{s}}{m}},
\quad
\sup_{\substack{m = - s, - s + 1, \dotsc, s \\ \abs{m}
\leq
s^{3 / 5}}} \abs{\fun{r_{s}}{m}}
=
\fun{O}{s^{- 3 / 5}}
\to
0
\quad \rbk{s \to \infty}.$$

The bulk sum is then identified as a Riemann sum.
On the bulk,
$\fnexp{\imunit \beta m / s}
=
1 + \fun{O}{s^{- 2 / 5}}$ uniformly,
and with $x_{m}
=
\frac{m}{\sqrt{s}}$,
spacing $\Delta x
=
\frac{1}{\sqrt{s}}$,
$$\sum_{\substack{m = - s, - s + 1, \dotsc, s \\ \abs{m}
\leq
s^{3 / 5}}} \fun{P_{s}}{m} \fnexp{\imunit \alpha \frac{m^{2}}{s} + \imunit \beta \frac{m}{s}}
=
\frac{1}{\sqrt{\pi}} \sum_{\substack{m = - s, - s + 1, \dotsc, s \\ \abs{x_{m}}
\leq
s^{1 / 10}}} \fnexp{- \rbk{1 - \imunit \alpha} x_{m}^{2}} \Delta x + \fun{o}{1}.$$
The error estimate follows because the error terms are summable against $\frac{1}{\sqrt{\pi s}} \fnexp{- m^{2} / s}$.
This Gaussian weight has bounded total mass.
The function $x
\mapsto
\fnexp{- \rbk{1 - \imunit \alpha} x^{2}}$ is continuous,
absolutely integrable with Gaussian decay,
and has derivative bounded on $\fldreal$ growing at most linearly times the Gaussian.
It follows that its Riemann sums with spacing $\frac{1}{\sqrt{s}}$ over $\abs{x}
\leq
s^{1 / 10}$ converge to $\int_{- \infty}^{\infty} \fnexp{- \rbk{1 - \imunit \alpha} x^{2}} \opdmsr{x}$.

It remains to evaluate the Gaussian integral.
For $\alpha
=
0$ the integral is $\sqrt{\pi}$.
Both sides of
$$\frac{1}{\sqrt{\pi}} \int_{- \infty}^{\infty} \fnexp{- \rbk{1 - \imunit \alpha} x^{2}} \opdmsr{x}
=
\frac{1}{\sqrt{1 - \imunit \alpha}}$$
are holomorphic functions of $\zeta
=
1 - \imunit \alpha$ on the half-plane $\opreal \zeta > 0$,
the left side by Morera and dominated convergence,
the right side with the principal branch.
They agree for $\zeta
\in
\rbk{0, \infty}$ by the real Gaussian integral and rescaling.
The identity theorem extends the equality to the whole domain.
\end{proof}

\begin{prop}[the evolution operators do not converge weakly]\label{prop:unitary-failure}
Let the model be degenerate with $\varepsilon
=
0$.
Let $\physvec{\omega}$ be the constant configuration determined by
$\physvec{u}_{p}
=
\hat{\physvec{x}}$ for every $p
\in
\semigrposint$.
This configuration is aligned in the sense of
Definition \ref{def:aligned-configuration} and satisfies the gap equation.
Let $\oarepn_{\physvec{\omega}}$ be its product representation from Definition \ref{def:spin-product-representation}.
Theorem \ref{thm:bogoliubov}(3) gives
$$\physham_{\txtbogoliubov,\physvec{\omega}}
=
\frac{1}{\sminvtemperature_{c}} \sum_{p \in \semigrposint} \fun{\oarepn_{\physvec{\omega}}}{1 - \sigma^{x}_{p}},
\quad
\physham_{\txtbogoliubov,\physvec{\omega}} \xi_{\physvec{\omega}}^{\emptyset}
=
0.$$
The vacuum matrix coefficient has the nontrivial limit:
i.e.,
for $t \neq 0$,
we obtain
$$\lim_{\Omega \to \infty}
\bkt{\xi_{\physvec{\omega}}^{\emptyset}}{
\fnexp{\imunit t \fun{\oarepn_{\physvec{\omega}}}{
\physham_{\Omega} - E_{\physvec{\omega},\Omega}}}
\xi_{\physvec{\omega}}^{\emptyset}}
=
\frac{\fnexp{- \imunit \frac{t}{2 \sminvtemperature_{c}}}}
{\sqrt{1 - \imunit \frac{t}{\sminvtemperature_{c}}}}
\neq
1
=
\bkt{\xi_{\physvec{\omega}}^{\emptyset}}{\fnexp{\imunit t \physham_{\txtbogoliubov,\physvec{\omega}}} \xi_{\physvec{\omega}}^{\emptyset}}.$$
In particular,
the following weak-limit identity fails:
$$\wlim_{\Omega \to \infty}
\fnexp{\imunit t \fun{\oarepn_{\physvec{\omega}}}{
\physham_{\Omega} - E_{\physvec{\omega},\Omega}}}
=
\fnexp{\imunit t \physham_{\txtbogoliubov,\physvec{\omega}}}.$$
The represented generators satisfy only
$\fun{\oarepn_{\physvec{\omega}}}{
\physham_{\Omega} - E_{\physvec{\omega},\Omega}}
\to
\physham_{\txtbogoliubov,\physvec{\omega}}$ in the generalized weak sense as $\Omega \to \infty$ and the Heisenberg dynamics converges strongly.
\end{prop}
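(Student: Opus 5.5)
The plan is to reduce the vacuum matrix coefficient to a finite-dimensional total-spin computation, and then evaluate it with Lemma \ref{lem:binomial-clt}.

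\textbf{Preliminary checks.} With $\varepsilon = 0$ and $\physvec{u}_{p} = \hat{\physvec{x}}$, the mean polarization is $\hat{\physvec{x}}$ and $\eta_{\physvec{\omega}} = 1$. Hence \eqref{eq:effective-hamiltonian} gives $\physvec{b}_{\physvec{\omega},p} = \rbk{1/\sminvtemperature_{c}, 0, 0}$. So the configuration is aligned with all signs $+1$, and $\abs{\physvec{b}_{\physvec{\omega},p}} = 1/\sminvtemperature_{c}$. The gap equation \eqref{eq:gap-equation} then holds termwise, as in Corollary \ref{cor:strong-coupling} with $r = 1$. Theorem \ref{thm:bogoliubov}(3) gives the stated form of $\physham_{\txtbogoliubov,\physvec{\omega}}$ and shows that it annihilates $\xi^{\emptyset}_{\physvec{\omega}}$, so the right-hand coefficient equals $1$.

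\textbf{Reduction to finite volume.} By Proposition \ref{prop:factorization}, $\oarepn_{\physvec{\omega}}$ restricted to $\oa{A}_{\Omega}$ is $A \mapsto A \otimes 1$, and $\xi^{\emptyset}_{\physvec{\omega}}$ factorizes as $\chi_{\Omega} \otimes (\text{tail})$ with $\chi_{\Omega} = (\xi^{+}_{\hat{\physvec{x}}})^{\otimes \Omega}$. Since $\physham_{\Omega} - E_{\physvec{\omega},\Omega} \in \oa{A}_{\Omega}$, the matrix coefficient equals
\[
\bkt{\chi_{\Omega}}{\fnexp{\imunit t (\physham_{\Omega} - E_{\physvec{\omega},\Omega})} \chi_{\Omega}}
\]
in $(\fldcmp^{2})^{\otimes \Omega}$.

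\textbf{Spectral evaluation.} The vector $\chi_{\Omega}$ is symmetric, so it lies in the total-spin sector $S = s = \Omega/2$. It is the $+1/2$ eigenvector of each $\sigma^{x}_{p}$, i.e.\ the spin coherent state along $x$. Its expansion in the Dicke basis $\ket{s, m}$ has weights
\[
\abs{\bkt{s,m}{\chi_{\Omega}}}^{2} = 2^{-2s} \binom{2s}{s+m} = \fun{P_{s}}{m}.
\]
This is the step needing the most care: I would verify it with the finite-spin formulas of Appendix \ref{sec:spin}. Concretely, $\chi_{\Omega} = 2^{-\Omega/2} \sum_{F} \bigotimes(\cdots)$, and grouping by the number of up spins yields the normalized Dicke states with multiplicity $\binom{2s}{s+m}$.

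Next, $E_{\physvec{\omega},\Omega}$ follows from \eqref{eq:e-omega}, using $u^{z} = 0$, $m^{\pm} = 1/2$ and the one-site variance $1/4$ from \eqref{eq:one-site-variance}:
\[
E_{\physvec{\omega},\Omega} = -\frac{\Omega}{2\sminvtemperature_{c}} - \frac{1}{2\sminvtemperature_{c}}.
\]
Subtracting this from \eqref{eq:eigenvalues} at $S = s$, $S_{z} = m$, $\varepsilon = 0$ gives
\[
\fun{E_{\Omega}}{s, m} - E_{\physvec{\omega},\Omega} = -\frac{1}{2\sminvtemperature_{c}} + \frac{1}{\sminvtemperature_{c}} \frac{m^{2}}{s} - \frac{1}{\sminvtemperature_{c}} \frac{m}{s}.
\]
The matrix coefficient is therefore
\[
\fnexp{-\imunit t/(2\sminvtemperature_{c})} \sum_{m} \fun{P_{s}}{m} \fnexp{\imunit \alpha m^{2}/s + \imunit \beta m/s}, \qquad \alpha = t/\sminvtemperature_{c}, \quad \beta = -t/\sminvtemperature_{c}.
\]
Lemma \ref{lem:binomial-clt} (half-integer $s$ allowed, since $\Omega$ may be odd) gives the stated limit.

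\textbf{Conclusion.} For $t \neq 0$ the limit has modulus $(1 + t^{2}/\sminvtemperature_{c}^{2})^{-1/4} < 1$, so it differs from $1$ and the weak-limit identity fails on the vacuum vector. The generalized weak convergence of the generators is Theorem \ref{thm:bogoliubov}(1) together with (2), since $t_{\physvec{\omega},p} = 0$ here. Strong convergence of the Heisenberg dynamics is Theorem \ref{thm:dynamics}, whose joint-mean-profile hypothesis holds trivially for a constant configuration.
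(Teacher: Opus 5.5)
Your proposal is correct and follows the paper's proof essentially step for step. You reduce to the first $\Omega$ slots via the factorization, expand the $\hat{\physvec{x}}$-coherent vector in the maximal multiplet with weights $2^{-2s}\binom{2s}{s+m}$, compute the shifted energy $\frac{1}{\sminvtemperature_{c} s}(m^{2}-m)-\frac{1}{2\sminvtemperature_{c}}$, and apply Lemma~\ref{lem:binomial-clt} with $\alpha=-\beta=t/\sminvtemperature_{c}$. The only slip is cosmetic: $\xi^{+}_{\hat{\physvec{x}}}$ is the $+1$ eigenvector of $\sigma^{x}_{p}$ (the $+1/2$ eigenvector of $\frac{1}{2}\sigma^{x}_{p}$), and this does not affect the computation.
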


\begin{proof}
Write $s
=
\frac{\Omega}{2}$,
assuming $\Omega$ even,
and the odd case is identical.
The vector $\xi_{\physvec{\omega}}^{\emptyset}
=
\bigotimes_{p
\leq
\Omega} \xi_{\hat{\physvec{x}}}^{+}$,
restricted to the first $\Omega$ slots through the factorization,
is the coherent vector with maximal spin along $\hat{\physvec{x}}$.
Expanding each slot $\xi_{\hat{\physvec{x}}}^{+}
=
\frac{1}{\sqrt{2}} \rbk{\ket{\uparrow} + \ket{\downarrow}}$ in the $\sigma^{z}$-basis,
$$\bigotimes_{p
\leq
\Omega} \xi_{\hat{\physvec{x}}}^{+}
=
2^{- s} \sum_{I
\subset
\intint{1..\Omega}} e_{I}
=
\sum_{m
=
- s}^{s} 2^{- s} \sqrt{\binom{2 s}{s + m}} \ket{s, m},$$
where $e_{I}$ has up-spins exactly on $I$
and the vector $\ket{s, m}$ is the normalized sum of the $e_{I}$ with
$\abscard{I}
=
s + m$.
It is the total-spin vector in the maximal multiplet with $\physvec{S}^{2}
=
\fun{s}{s + 1}$,
$S^{z}
=
m$.
With $\varepsilon
=
0$ the expression \eqref{eq:eigenvalues} gives $$\physham_{\Omega} \ket{s, m}
=
\fun{E_{\Omega}}{s, m} \ket{s, m},
\quad
\fun{E_{\Omega}}{s, m}
=
- \frac{2}{\sminvtemperature_{c} \Omega} \rbk{\fun{s}{s + 1} - \fun{m}{m - 1}},$$
and \eqref{eq:e-omega} evaluates to $E_{\physvec{\omega},\Omega}
=
- \frac{\Omega}{2 \sminvtemperature_{c}} - \frac{1}{2 \sminvtemperature_{c}}$,
from $\fun{\oastate[\psi]}{d^{+} d^{-}}
=
\frac{1}{4}$ and $\bar{M}^{\pm}
=
\frac{1}{2}$.
The shifted energy in the maximal-spin sector is
$$\fun{E_{\Omega}}{s, m} - E_{\physvec{\omega},\Omega}
=
\frac{2}{\sminvtemperature_{c} \Omega} \rbk{m^{2} - m} - \frac{1}{2 \sminvtemperature_{c}}
=
\frac{1}{\sminvtemperature_{c} s} \rbk{m^{2} - m} - \frac{1}{2 \sminvtemperature_{c}},$$
where
$$\frac{2}{\sminvtemperature_{c} \Omega} \fun{s}{s + 1}
=
\frac{\Omega}{2 \sminvtemperature_{c}} + \frac{1}{\sminvtemperature_{c}}.$$
The vacuum matrix coefficient is therefore
$$\bkt{\xi_{\physvec{\omega}}^{\emptyset}}{
\fnexp{\imunit t \fun{\oarepn_{\physvec{\omega}}}{
\physham_{\Omega} - E_{\physvec{\omega},\Omega}}}
\xi_{\physvec{\omega}}^{\emptyset}}
=
\fnexp{- \imunit \frac{t}{2 \sminvtemperature_{c}}} \sum_{m
=
- s}^{s} \fun{P_{s}}{m} \fnexp{\imunit \frac{t}{\sminvtemperature_{c} s} \rbk{m^{2} - m}}.$$
Lemma \ref{lem:binomial-clt} with $\alpha
=
\frac{t}{\sminvtemperature_{c}}$,
$\beta
=
- \frac{t}{\sminvtemperature_{c}}$ gives the limit $\fnexp{- \imunit \frac{t}{2 \sminvtemperature_{c}}} \rbk{1 - \imunit \frac{t}{\sminvtemperature_{c}}}^{- 1 / 2}$,
whose modulus is $\rbk{1 + \frac{1}{\sminvtemperature_{c}^{2}} t^{2}}^{- 1 / 4} < 1$ for $t
\neq
0$.
\end{proof}

The moral drawn in \cite[Sections 3--4]{ThirringWehrl001} deserves repetition: weak convergence of unitaries to a unitary limit would upgrade itself to strong convergence and, with uniform boundedness of all time derivatives on a core, to convergence of the generators. Proposition \ref{prop:unitary-failure} shows that no such convergence takes place. It follows that the correct statement of the Bogoliubov--Haag method at zero temperature is the convergence of the Heisenberg dynamics of the observables, Theorem \ref{thm:dynamics}, and not a convergence of Hamiltonians or evolution groups.

\section{Gibbs States of the Degenerate Model and Concentration}\label{sec:thermal}

The thermal analysis begins with the following setting, which remains in force through Section \ref{sec:decomposition}.

\begin{defn}[degenerate thermal setting]\label{def:degenerate-thermal-setting}
Fix $\varepsilon
\in
\fldreal$ and an inverse temperature $\sminvtemperature
>
0$.
The degenerate thermal setting is the BCS model of
\eqref{eq:bcs-hamiltonian} with
$$\varepsilon_{p}
=
\varepsilon
\quad
\rbk{p
\in
\semigrposint}.$$
\end{defn}

All results from the present section through Section \ref{sec:decomposition} use the degenerate thermal setting of Definition \ref{def:degenerate-thermal-setting}. Individual statements impose additional temperature conditions when required. The Hamiltonian \eqref{eq:bcs-hamiltonian} is the function \eqref{eq:hamiltonian-total-spin} of the total spin. Its spectral decomposition produces an exact probability distribution on the total-spin quantum numbers. The free-energy analysis then proves concentration at a unique maximizer, which supplies the order parameter for the limiting Gibbs state.

\subsection{Finite-volume Gibbs states and spectral weights}\label{finite-volume-gibbs-states-and-spectral-weights}

The total-spin decomposition rewrites the finite-volume Gibbs trace as an explicit probability distribution on the spin quantum numbers.

\begin{defn}[finite-volume Gibbs states]
The Gibbs state of $\Omega$ modes and its extension to $\oa{A}$ are
\begin{equation}\label{eq:finite-volume-gibbs-state}
\begin{aligned}
\fun{\oastate[\psi_{\sminvtemperature,\Omega}]}{A}
&=
\frac{\sqfun{\trace}{\fnexp{- \sminvtemperature \physham_{\Omega}} A}}
{\sqfun{\trace}{\fnexp{- \sminvtemperature \physham_{\Omega}}}},
&
\oastate[\widehat{\psi}_{\sminvtemperature,\Omega}]
&=
\oastate[\psi_{\sminvtemperature,\Omega}] \otimes
\oastate[\psi^{\mathrm{tr}}_{> \Omega}],
\end{aligned}
\end{equation}
where $A
\in
\oa{A}_{\Omega}$,
and $\oastate[\psi^{\mathrm{tr}}_{> \Omega}]$ is the product of the normalized trace states on the modes beyond $\Omega$.
\end{defn}

The extension is a convenience: all limits in this section are taken on local elements, which lie in \(\oa{A}_{\Omega}\) eventually. It follows that the choice of tail state is immaterial. Gauge invariance \eqref{eq:gauge-invariance} and permutation invariance give \begin{equation}\label{eq:gibbs-invariance}
\oastate[\psi_{\sminvtemperature,\Omega}] \circ \mathfrak{g}_{\vartheta}
=
\oastate[\psi_{\sminvtemperature,\Omega}],
\quad
\oastate[\psi_{\sminvtemperature,\Omega}] \circ \mathfrak{g}_{\pi}
=
\oastate[\psi_{\sminvtemperature,\Omega}]
\end{equation} for all gauge angles \(\vartheta\) and all permutations \(\pi\) of \(\intint{1..\Omega}\).

\begin{defn}[admissible spin quantum numbers]\label{def:admissible-spin-quantum-numbers}
For $\Omega
\in
\semigrposint$,
a total-spin value $S$ is admissible at volume $\Omega$ if
$S
=
\frac{\Omega}{2} - j$
for some $j
\in
\intint{0..\lfloor \Omega / 2 \rfloor}$.
For such $S$,
a weight $m$ is admissible if
$m
\in
\setone{- S, - S + 1, \dotsc, S}$.
The pair $\rbk{S, m}$ is admissible at volume $\Omega$ when both conditions hold.
\end{defn}

For each total-spin value \(S\) admissible at volume \(\Omega\) in the sense of Definition \ref{def:admissible-spin-quantum-numbers}, define the completely symmetric tensor subspace by \begin{equation}\label{eq:spin-s-space}
V_{S}
=
\setone{\Psi
\in
\rbk{\fldcmp^{2}}^{\otimes 2 S} : U_{\pi} \Psi
=
\Psi \text{ for every permutation } \pi \text{ of } \intint{1..2 S}},
\end{equation} where \(U_{\pi}\) is the unitary that permutes the \(2 S\) tensor factors according to \(\pi\). Lemma \ref{lem:irreps} proves that \(V_{S}\) is the irreducible spin-\(S\) space. For each weight \(m\) admissible for \(S\) in the sense of Definition \ref{def:admissible-spin-quantum-numbers}, let \(\ket{S, m}\) be a normalized vector in the one-dimensional \(S^{z}\)-eigenspace of \(V_{S}\) with eigenvalue \(m\). By Proposition \ref{prop:multiplicity} the trace splits over the joint eigenspaces of \(\rbk{\physvec{S}^{2}_{\Omega}, S^{z}_{\Omega}}\): for any function \(X\) of the total spin operators, \begin{equation}\label{eq:spectral-trace}
\sqfun{\trace}{\fnexp{- \sminvtemperature \physham_{\Omega}} X}
=
\sum_{\substack{j = 0, 1, \dotsc, \lfloor \Omega / 2 \rfloor \\ S = \Omega / 2 - j \\ m = - S, - S + 1, \dotsc, S}} m^{\rbk{\Omega}}_{S} \fnexp{- \sminvtemperature \fun{E_{\Omega}}{S, m}} \bkt{\ket{S, m}}{X \ket{S, m}},
\end{equation} where \(\fun{E_{\Omega}}{S, m}\) is defined in \eqref{eq:eigenvalues}. For the spin-polynomial operators used in this section, Proposition \ref{prop:matrix-element} shows that the diagonal matrix element in \eqref{eq:spectral-trace} has the same value in every spin-\(S\) summand of \(\rbk{\fldcmp^{2}}^{\otimes \Omega}\). The associated two-parameter probability distribution is \begin{equation}\label{eq:p-omega}
\fun{P_{\Omega}}{S, m}
=
\frac{m^{\rbk{\Omega}}_{S} \fnexp{- \sminvtemperature \fun{E_{\Omega}}{S, m}}}{\sum_{\substack{j' = 0, 1, \dotsc, \lfloor \Omega / 2 \rfloor \\ S' = \Omega / 2 - j' \\ m' = - S', - S' + 1, \dotsc, S'}} m^{\rbk{\Omega}}_{S'} \fnexp{- \sminvtemperature \fun{E_{\Omega}}{S', m'}}},
\end{equation}

\subsection{The free-energy function and its maximizer}\label{the-free-energy-function-and-its-maximizer}

The spectral weights determine a limiting free energy with a unique maximizer. This maximizer supplies the order parameter used in the limiting Gibbs state.

\begin{defn}[free-energy function and domain]\label{def:free-energy-function}
The free-energy function $h$ has domain
\begin{equation}\label{eq:free-energy-domain}
\dom h
=
\set{\rbk{\eta, z}}{\abs{z}
\leq
\eta
\leq
1}
\end{equation}
and is defined by
\begin{equation}\label{eq:free-energy}
\begin{aligned}
\fun{h}{\eta, z}
&=
\log 2
+ \fun{\tilde{s}}{\eta}
+ \frac{\sminvtemperature}{2 \sminvtemperature_{c}} \eta^{2}
+ \sminvtemperature \varepsilon z
- \frac{\sminvtemperature}{2 \sminvtemperature_{c}} z^{2},
\\ 
\fun{\tilde{s}}{\eta}
&=
- \frac{1 - \eta}{2} \fun{\log}{1 - \eta} - \frac{1 + \eta}{2} \fun{\log}{1 + \eta},
\end{aligned}
\end{equation}
where $\fun{\tilde{s}}{1}
=
- \log 2$ is defined by continuity.
\end{defn}

The probability distribution \eqref{eq:p-omega} is supported on the lattice points \(\rbk{\eta, z}
=
\rbk{\frac{2 S}{\Omega}, \frac{2 m}{\Omega}}\) in the domain \eqref{eq:free-energy-domain}.

\begin{lem}[uniform free energy asymptotics]\label{lem:free-energy}
There is a constant $C$,
depending only on $\varepsilon, \frac{1}{\sminvtemperature_{c}}, \sminvtemperature$,
such that the following estimate holds uniformly in $\Omega$ and in pairs
$\rbk{S, m}$ admissible at volume $\Omega$ in the sense of
Definition \ref{def:admissible-spin-quantum-numbers}:
$$\abs{\fun{\log}{m^{\rbk{\Omega}}_{S} \fnexp{- \sminvtemperature \fun{E_{\Omega}}{S, m}}} - \Omega \fun{h}{\frac{2 S}{\Omega}, \frac{2 m}{\Omega}}}
\leq
C \rbk{1 + \fun{\log}{\Omega}}.$$
\end{lem}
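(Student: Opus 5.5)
The plan is to make both factors in $m^{\rbk{\Omega}}_{S} \fnexp{- \sminvtemperature \fun{E_{\Omega}}{S, m}}$ explicit and compare each with the matching part of $\Omega h$. The error budget is $O(\log \Omega)$ for the multiplicity and $O(1)$ for the energy. Write $S = \frac{\Omega}{2} - j$, $\eta = \frac{2S}{\Omega}$ and $z = \frac{2m}{\Omega}$, so that admissibility gives $\abs{z} \leq \eta \leq 1$.

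\emph{Multiplicity.} I would take from Proposition \ref{prop:multiplicity} the standard formula
$$m^{\rbk{\Omega}}_{S} = \binom{\Omega}{j} - \binom{\Omega}{j-1} = \binom{\Omega}{j}\frac{2S+1}{\Omega/2+S+1}.$$
If the appendix states it only in the difference form, I would first derive the product form by elementary algebra. The prefactor lies in $\sqbk{\frac{1}{\Omega+1}, 2}$, so its logarithm is bounded by $\fun{\log}{\Omega+1} + \log 2$. For the binomial coefficient I would apply Corollary \ref{cor:entropy-bounds} with $n = \Omega$ and $a = j$. This gives $\abs{\fun{\log}{\binom{\Omega}{j}} - \Omega \fun{\hat{h}}{j/\Omega}} \leq \frac{3}{2}\fun{\log}{\Omega+1} + 3$, uniformly in $0 \leq j \leq \Omega/2$ and including the endpoint $j = 0$. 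It then remains to check the identity $\fun{\hat{h}}{\frac{1-\eta}{2}} = \log 2 + \fun{\tilde{s}}{\eta}$, which follows by expanding the logarithms since $j/\Omega = \frac{1-\eta}{2}$. At $\eta = 1$ the identity holds by the continuity convention of Definition \ref{def:free-energy-function}.

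\emph{Energy.} From \eqref{eq:eigenvalues},
$$-\sminvtemperature \fun{E_{\Omega}}{S,m} = 2\sminvtemperature\varepsilon m + \frac{2\sminvtemperature}{\sminvtemperature_{c}\Omega}\rbk{S^{2} - m^{2}} + \frac{2\sminvtemperature}{\sminvtemperature_{c}\Omega}\rbk{S + m}.$$
The first two terms equal exactly $\Omega\rbk{\sminvtemperature\varepsilon z + \frac{\sminvtemperature}{2\sminvtemperature_{c}}\rbk{\eta^{2} - z^{2}}}$. Because $\abs{S + m} \leq 2S \leq \Omega$, the last term is bounded by $\frac{2\sminvtemperature}{\sminvtemperature_{c}}$.

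Adding the three error bounds gives the claim with a constant $C$ depending only on $\sminvtemperature$, $1/\sminvtemperature_{c}$, and (trivially) $\varepsilon$. The only delicate point is uniformity at the boundary of $\dom h$: near $\eta = 1$, and at $j = 0$ where $\binom{\Omega}{0} = 1$. I expect this to be handled by the uniform form of Corollary \ref{cor:entropy-bounds}. If that corollary excludes $a \in \setone{0, n}$, I would treat $j = 0$ directly: there $\fun{\log}{m^{\rbk{\Omega}}_{S}} = \fun{\log}{\Omega+1}$ and $\fun{h}{1, z}$ has entropy part $\log 2 - \log 2 = 0$. I would also check the sign and normalization in the Stirling-based bound before relying on it.
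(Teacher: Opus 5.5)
Your proposal is correct and follows the paper's proof step for step: the product form of the multiplicity with prefactor in $[\frac{1}{\Omega+1},2]$, then Corollary~\ref{cor:entropy-bounds} for the binomial coefficient, then the exact splitting of the Boltzmann exponent with the $O(1)$ remainder $\frac{2\beta}{\beta_{c}\Omega}(S+m)$. Your choice $a=j=\frac{\Omega}{2}-S$ instead of the paper's $a=\frac{\Omega}{2}+S$ is immaterial by $\binom{n}{a}=\binom{n}{n-a}$. The fallback is unnecessary because Corollary~\ref{cor:entropy-bounds} explicitly covers $a\in\{0,n\}$. Note also that at $j=0$ the multiplicity is $1$, not $\Omega+1$: its logarithm is $0$, and $\Omega+1=2S+1$ is the dimension of the multiplet.
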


\begin{proof}
The multiplicity \eqref{eq:multiplicity} is $m^{\rbk{\Omega}}_{S}
=
\binom{\Omega}{\frac{\Omega}{2} + S} \cdot \frac{2 S + 1}{\frac{\Omega}{2} + S + 1}$.
The second factor lies in $\sqbk{\frac{1}{\Omega + 1}, 2}$ and contributes at most $\fun{\log}{\Omega + 1}$ to the error.

For $S$ admissible at volume $\Omega$ in the sense of
Definition \ref{def:admissible-spin-quantum-numbers},
the integer $a
=
\frac{\Omega}{2} + S$ satisfies $0
\leq
a
\leq
\Omega$.
The pair $\rbk{\Omega, a}$ satisfies the hypotheses of Corollary
\ref{cor:entropy-bounds}.
Since $\eta
=
\frac{2 S}{\Omega}$,
the normalized binomial index and the entropy term in that corollary are
$$\begin{aligned}
\frac{a}{\Omega}
&=
\frac{1 + \eta}{2},
\\ 
\fun{s_{\Omega}}{\frac{\Omega}{2} + S}
&=
- \frac{1 + \eta}{2} \fun{\log}{\frac{1 + \eta}{2}}
- \frac{1 - \eta}{2} \fun{\log}{\frac{1 - \eta}{2}}
=
\log 2 + \fun{\tilde{s}}{\eta}.
\end{aligned}$$
The uniform entropy estimate of Corollary \ref{cor:entropy-bounds} now gives
$$\abs{
\log
\binom{\Omega}{\frac{\Omega}{2} + S}
- \Omega \rbk{\log 2 + \fun{\tilde{s}}{\eta}}
}
\leq
\frac{3}{2} \fun{\log}{\Omega + 1} + 3.$$

Equation \eqref{eq:eigenvalues} expresses the logarithm of the Boltzmann factor as
$$\begin{aligned}
-\sminvtemperature \fun{E_{\Omega}}{S, m}
&=
2 \sminvtemperature \varepsilon m + \frac{2 \sminvtemperature}{\Omega \sminvtemperature_{c}} \rbk{S^{2} - m^{2}} + \frac{2 \sminvtemperature}{\Omega \sminvtemperature_{c}} \rbk{S + m}
\\ 
&=
\Omega \sqbk{\sminvtemperature \varepsilon z + \frac{\sminvtemperature}{2 \sminvtemperature_{c}} \rbk{\eta^{2} - z^{2}}} + \frac{2 \sminvtemperature}{\Omega \sminvtemperature_{c}} \rbk{S + m},
\end{aligned}$$
and the last term is bounded by $\frac{2 \sminvtemperature}{\sminvtemperature_{c}}$ in modulus.
The logarithmic error is the sum of the multiplicity-ratio error,
the binomial entropy error,
and the final term in the Boltzmann exponent.
Their bounds are respectively
$\fun{\log}{\Omega + 1}$,
$\frac{3}{2} \fun{\log}{\Omega + 1} + 3$,
and $\frac{2 \sminvtemperature}{\sminvtemperature_{c}}$.
Their sum is bounded by
$C \rbk{1 + \fun{\log}{\Omega}}$
for a constant $C$ depending only on
$\varepsilon$,
$\frac{1}{\sminvtemperature_{c}}$,
and $\sminvtemperature$.
\end{proof}

\begin{prop}[unique maximizer]\label{prop:maximizer}
Fix $\sminvtemperature > 0$.
\begin{enumerate}
\item If $\sminvtemperature > \sminvtemperature_{c}$,
the equation $\eta
=
\fun{\tanh}{\frac{\sminvtemperature}{\sminvtemperature_{c}} \eta}$ has a unique solution $\eta_{0}
=
\fun{\eta_{0}}{\sminvtemperature}
\in
\rbk{0, 1}$ besides $\eta
=
0$.
If moreover $\sminvtemperature_{c} \abs{\varepsilon} < \eta_{0}$,
the function $h$ attains its maximum over the domain
$\dom h$ of \eqref{eq:free-energy-domain} at the unique point
$\rbk{\eta_{0}, z_{0}}$,
$z_{0}
=
\sminvtemperature_{c} \varepsilon$,
which is interior in the sense $\abs{z_{0}} < \eta_{0}$,
and for every $\delta > 0$
$$\sup \set{\fun{h}{\eta, z}}{\rbk{\eta, z}
\in
\dom h, \ \abs{\eta - \eta_{0}} + \abs{z - z_{0}}
\geq
\delta}
\leq
\fun{h}{\eta_{0}, z_{0}} - \fun{\kappa}{\delta}$$
with $\fun{\kappa}{\delta} > 0$.
\item If $\sminvtemperature
\leq
\sminvtemperature_{c}$,
the maximum of $h$ over $\dom h$ is attained at the unique point $\rbk{\eta_{\ast}, z_{\ast}}$ with $\eta_{\ast}
=
\fun{\tanh}{\sminvtemperature \abs{\varepsilon}}$ and $z_{\ast}
=
\fun{\mathrm{sgn}}{\varepsilon} \eta_{\ast}$,
lying on the boundary $\abs{z}
=
\eta$,
with the same quantitative gap statement.
For $\varepsilon
=
0$ the maximizer is $\rbk{0, 0}$.
\end{enumerate}
\end{prop}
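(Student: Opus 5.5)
The plan is to reduce the two-variable maximization to a one-variable problem by maximizing first in $z$ at fixed $\eta$, and then in $\eta$. For fixed $\eta\in[0,1]$ the $z$-dependent part $\sminvtemperature\varepsilon z - \frac{\sminvtemperature}{2\sminvtemperature_{c}} z^{2}$ is a strictly concave quadratic with unconstrained maximizer $z=\sminvtemperature_{c}\varepsilon$. On the interval $\abs{z}\leq\eta$ the constrained maximizer is therefore $\fun{\hat z}{\eta}=\sminvtemperature_{c}\varepsilon$ when $\eta\geq\sminvtemperature_{c}\abs{\varepsilon}$, and $\fun{\hat z}{\eta}=\fun{\mathrm{sgn}}{\varepsilon}\eta$ otherwise. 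This gives the profile $\fun{g}{\eta}=\fun{h}{\eta,\fun{\hat z}{\eta}}$, and every maximizer of $h$ must have the form $(\eta,\fun{\hat z}{\eta})$ with $\eta$ a maximizer of $g$. The quantitative gap follows at the end from compactness of $\dom h$ and continuity of $h$. The only discontinuity point is $\tilde s$ at $\eta=1$, and $\tilde s$ is in fact continuous there. Once uniqueness of the maximizer is known, $\sup h$ over the compact set $\set{(\eta,z)\in\dom h}{\abs{\eta-\eta_0}+\abs{z-z_0}\geq\delta}$ is attained at a point different from the maximizer, hence is strictly smaller. This defines $\fun{\kappa}{\delta}>0$.

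For the one-variable analysis I would use $\tilde s'(\eta)=-\operatorname{artanh}\eta$, which is strictly concave with $\tilde s''=-1/(1-\eta^{2})$. On the region $\eta\geq\sminvtemperature_{c}\abs{\varepsilon}$,
$$g(\eta)=\mathrm{const}+\tilde s(\eta)+\frac{\sminvtemperature}{2\sminvtemperature_{c}}\eta^{2},\qquad g'(\eta)=\frac{\sminvtemperature}{\sminvtemperature_{c}}\eta-\operatorname{artanh}\eta .$$
Critical points are exactly the solutions of $\eta=\tanh(\frac{\sminvtemperature}{\sminvtemperature_{c}}\eta)$. On the region $\eta<\sminvtemperature_{c}\abs{\varepsilon}$,
$$g(\eta)=\log 2+\tilde s(\eta)+\sminvtemperature\abs{\varepsilon}\eta,\qquad g'(\eta)=\sminvtemperature\abs{\varepsilon}-\operatorname{artanh}\eta .$$
This expression is strictly decreasing, and it vanishes at $\eta_{*}=\tanh(\sminvtemperature\abs{\varepsilon})$. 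The function $g$ is $C^{1}$ across the junction, since the two one-sided derivatives agree at $\eta=\sminvtemperature_{c}\abs{\varepsilon}$.

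For part (1), uniqueness of the nonzero root $\eta_{0}$ comes from strict concavity of $\tanh(k\eta)$ on $(0,\infty)$ with slope $k=\sminvtemperature/\sminvtemperature_{c}>1$ at $0$. Consequently $\eta-\tanh(k\eta)$ is negative on $(0,\eta_{0})$ and positive on $(\eta_{0},1]$, so $g'>0$ on $(0,\eta_{0})$ and $g'<0$ beyond $\eta_{0}$ within the first region. Assume $\sminvtemperature_{c}\abs{\varepsilon}<\eta_{0}$. In the lower region $\eta<\sminvtemperature_{c}\abs{\varepsilon}$ I must show $g'>0$, i.e. $\eta<\tanh(\sminvtemperature\abs{\varepsilon})$. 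Since $\sminvtemperature\abs{\varepsilon}=k\cdot\sminvtemperature_{c}\abs{\varepsilon}$ and $\sminvtemperature_{c}\abs{\varepsilon}<\eta_{0}$, we have $\tanh(k\,\sminvtemperature_{c}\abs{\varepsilon})>\sminvtemperature_{c}\abs{\varepsilon}>\eta$. Hence $g$ increases up to $\eta_{0}$ and decreases after it, so $\eta_{0}$ is the unique maximizer. The corresponding $z$ is $z_{0}=\sminvtemperature_{c}\varepsilon$, which is interior.

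For part (2), $k\leq1$ gives $\tanh(k\eta)<\eta$ for $\eta>0$, so $g'<0$ throughout the upper region. In the lower region, $g$ increases up to $\min(\eta_{*},\sminvtemperature_{c}\abs{\varepsilon})$. I must check $\eta_{*}=\tanh(\sminvtemperature\abs{\varepsilon})\leq\sminvtemperature_{c}\abs{\varepsilon}$. This holds because $\tanh x\leq x$ gives $\tanh(\sminvtemperature\abs{\varepsilon})\leq\sminvtemperature\abs{\varepsilon}\leq\sminvtemperature_{c}\abs{\varepsilon}$, with strict inequality when $\varepsilon\neq0$. So the maximum is at $\eta_{*}$ with $z_{*}=\fun{\mathrm{sgn}}{\varepsilon}\eta_{*}$, on the boundary $\abs{z}=\eta$. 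When $\varepsilon=0$ both regions degenerate, and the maximizer is $(0,0)$.

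The main obstacle is the careful bookkeeping at the junction $\eta=\sminvtemperature_{c}\abs{\varepsilon}$ and at the endpoints $\eta=0,1$. In particular one needs the sign of $g'$ near $\eta=1$, where $\operatorname{artanh}\eta\to\infty$ forces $g'<0$, so the maximum is not at $\eta=1$. Uniqueness of $(\eta,z)$, not only of $\eta$, follows from strict concavity of the quadratic in $z$.
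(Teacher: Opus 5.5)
Your proposal is correct and follows essentially the paper's route: reduce to a one-variable profile by maximizing the strictly concave quadratic in $z$, then read off the signs of $\frac{\beta}{\beta_{c}}\eta-\operatorname{artanh}\eta$ and $\beta\abs{\varepsilon}-\operatorname{artanh}\eta$ on the two regions separated by $\eta=\beta_{c}\abs{\varepsilon}$ (in part (2) this is exactly the paper's split into $G$ and $\varphi(\eta)=h(\eta,\eta)$), and obtain $\kappa(\delta)$ by compactness. The differences are minor: in part (1) the paper completes the square, $h(\eta,z)=G(\eta)-\frac{\beta}{2\beta_{c}}(z-z_{0})^{2}$, and uses $h\le G(\eta)\le G(\eta_{0})$ with $(\eta_{0},z_{0})$ already feasible, so your separate check that $g'>0$ on the lower region is correct but not needed; the paper also gets uniqueness of $\eta_{0}$ from monotonicity of $\operatorname{artanh}(\eta)/\eta$ rather than from concavity of $\tanh(k\eta)$.
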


\begin{proof}
Completion of the square in $z$ gives
\begin{equation}\label{eq:h-decomposition}
\fun{h}{\eta, z}
=
\fun{G}{\eta} - \frac{\sminvtemperature}{2 \sminvtemperature_{c}} \rbk{z - z_{0}}^{2},
\quad
\fun{G}{\eta}
=
\log 2 + \fun{\tilde{s}}{\eta} + \frac{\sminvtemperature}{2 \sminvtemperature_{c}} \eta^{2} + \frac{\sminvtemperature \sminvtemperature_{c} \varepsilon^{2}}{2},
\end{equation}
with $z_{0}
=
\sminvtemperature_{c} \varepsilon$,
and the derivative of $G$ is
$$\fun{G'}{\eta}
=
\frac{\sminvtemperature}{\sminvtemperature_{c}} \eta - \fun{\mathrm{arctanh}}{\eta},$$
since $\tilde{s}'
=
- \frac{1}{2} \fun{\log}{\frac{1 + \eta}{1 - \eta}}$.
The function $\eta
\mapsto
\frac{\fun{\mathrm{arctanh}}{\eta}}{\eta}$ is continuous on $\rbk{0, 1}$,
strictly increasing,
with limit $1$ at $0$ and $\infty$ at $1$.
Monotonicity follows from the power series $\frac{\fun{\mathrm{arctanh}}{\eta}}{\eta}
=
\sum_{j
\geq
0} \frac{\eta^{2 j}}{2 j + 1}$ with positive coefficients.

(1)
For $\sminvtemperature > \sminvtemperature_{c}$ the ratio crosses the level $\frac{\sminvtemperature}{\sminvtemperature_{c}} > 1$ exactly once,
at some $\eta_{0}
\in
\rbk{0, 1}$.
It follows that $G' > 0$ on $\rbk{0, \eta_{0}}$ and $G' < 0$ on $\rbk{\eta_{0}, 1}$,
and $G$ has the unique maximizer $\eta_{0}$ on $\closedinterval{0}{1}$.
The fixed-point equation $\fun{\mathrm{arctanh}}{\eta_{0}}
=
\frac{\sminvtemperature}{\sminvtemperature_{c}} \eta_{0}$ is equivalent to $\eta_{0}
=
\fun{\tanh}{\frac{\sminvtemperature}{\sminvtemperature_{c}} \eta_{0}}$,
and $\eta
=
0$ is the only other solution by the same monotonicity.
Since $\abs{z_{0}} < \eta_{0}$,
the point $\rbk{\eta_{0}, z_{0}}$ lies in $\dom h$ by
\eqref{eq:free-energy-domain}.
Equation \eqref{eq:h-decomposition} gives
$$\fun{h}{\eta, z}
\leq
\fun{G}{\eta}
\leq
\fun{G}{\eta_{0}}
=
\fun{h}{\eta_{0}, z_{0}}.$$
Equality holds only if $z
=
z_{0}$ and if $\eta
=
\eta_{0}$.
For the quantitative gap,
$h$ is upper semicontinuous and continuous on the compact domain $\dom h$,
with $\fun{\tilde{s}}{1}$-continuity at the edge $\eta
=
1$.
The supremum over the compact set $\dom h \setminus B_{\delta}$ is attained and is strictly less than the unique maximum.

(2)
Assume that $\sminvtemperature
\leq
\sminvtemperature_{c}$.
The ratio $\frac{\fun{\mathrm{arctanh}}{\eta}}{\eta}$ exceeds $1
\geq
\frac{\sminvtemperature}{\sminvtemperature_{c}}$ for all $\eta
\in
\rbk{0, 1}$.
It follows that $G'
\leq
0$ throughout with equality only at $\eta
=
0$,
and $G$ is strictly decreasing on $\closedinterval{0}{1}$.
If $\varepsilon
=
0$,
then
$\fun{h}{\eta, z}
\leq
\fun{G}{\eta}
\leq
\fun{G}{0}
=
\fun{h}{0, 0}$ with equality only at $\rbk{0, 0}$.
It remains to consider $\varepsilon
\neq
0$.
By symmetry,
assume that $\varepsilon > 0$.
On the region $\eta
\geq
\min \rbk{z_{0}, 1}$,
the bound is
$$\fun{h}{\eta, z}
\leq
\fun{G}{\eta}
\leq
\fun{G}{\min \rbk{z_{0}, 1}}.$$
If $\sminvtemperature_{c} \varepsilon
\geq
1$,
this region meets $\dom h$ at most in the segment $\eta
=
1$.
The ensuing case distinction is therefore vacuous on that segment.
On the region $\eta < \min \rbk{z_{0}, 1}$ the $z$-maximum at fixed $\eta$ is on the boundary $z
=
\eta$,
since the quadratic in $z$ increases up to $z_{0} > \eta$.
It follows that there $\fun{h}{\eta, z}
\leq
\fun{\varphi}{\eta}
=
\log 2 + \fun{\tilde{s}}{\eta} + \sminvtemperature \varepsilon \eta$,
and the same bound $\fun{h}{\eta, z}
\leq
\fun{\varphi}{\eta}$ extends to all of $\dom h$ when $\sminvtemperature_{c} \varepsilon
\geq
1$.
Indeed,
the constrained $z$-maximum is at $z
=
\eta$ for every $\eta
\leq
1
\leq
z_{0}$.
The derivative $\varphi'
=
\sminvtemperature \varepsilon - \fun{\mathrm{arctanh}}{\eta}$ vanishes exactly at $\eta_{\ast}
=
\fun{\tanh}{\sminvtemperature \varepsilon}$,
which satisfies $\eta_{\ast} < \min \rbk{z_{0}, 1}$ because $\fun{\tanh}{\sminvtemperature \varepsilon} < \min \rbk{\sminvtemperature \varepsilon, 1}
\leq
\min \rbk{\sminvtemperature_{c} \varepsilon, 1}$.
Thus $\varphi$,
strictly concave since $\varphi''
=
\tilde{s}'' < 0$,
has the unique maximizer $\eta_{\ast}$ on $\closedinterval{0}{1}$ and the boundary point $\rbk{\eta_{\ast}, \eta_{\ast}}$ realizes $\fun{h}{\eta_{\ast}, \eta_{\ast}}
=
\fun{\varphi}{\eta_{\ast}}$.
The comparison of the two regional maxima is needed only for
$\sminvtemperature_{c} \varepsilon < 1$,
and in that case
$\fun{h}{z_{0}, z_{0}}
=
\fun{\varphi}{z_{0}}
\leq
\fun{\varphi}{\eta_{\ast}}$.
It follows that the global maximum is at $\rbk{\eta_{\ast}, \eta_{\ast}}$ in all cases,
unique by the strict concavity of $\varphi$ and the strict monotonicity of $G$.
Negative $\varepsilon$ is symmetric under $z
\mapsto
- z$.
The quantitative gap follows again by compactness.
\end{proof}

\begin{thm}[concentration of the total-spin distribution]\label{thm:concentration}
Let $\rbk{\eta_{\txtmax}, z_{\txtmax}}$ denote the maximizer of Proposition \ref{prop:maximizer} for the given $\rbk{\sminvtemperature, \varepsilon}$,
assuming $\sminvtemperature_{c} \abs{\varepsilon} < \fun{\eta_{0}}{\sminvtemperature}$ when $\sminvtemperature > \sminvtemperature_{c}$.
For every $\delta > 0$ there are $c_{\delta} > 0$ and $\Omega_{\delta}$ with
$$\sum_{\substack{\rbk{S, m}: \abs{\frac{2 S}{\Omega} - \eta_{\txtmax}} + \abs{\frac{2 m}{\Omega} - z_{\txtmax}}
\geq
\delta}} \fun{P_{\Omega}}{S, m}
\leq
\fnexp{- c_{\delta} \Omega}
\quad \rbk{\Omega
\geq
\Omega_{\delta}}.$$
The concentration estimate also implies that,
for every bounded function $g$ on the free-energy domain
$\dom h$ of \eqref{eq:free-energy-domain}
continuous at $\rbk{\eta_{\txtmax}, z_{\txtmax}}$,
$$\lim_{\Omega \to \infty}
\sum_{\substack{j = 0, 1, \dotsc, \lfloor \Omega / 2 \rfloor \\ S = \Omega / 2 - j \\ m = - S, - S + 1, \dotsc, S}} \fun{P_{\Omega}}{S, m} \fun{g}{\frac{2 S}{\Omega}, \frac{2 m}{\Omega}}
=
\fun{g}{\eta_{\txtmax}, z_{\txtmax}}.$$
\end{thm}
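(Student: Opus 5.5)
The plan is a standard Laplace-type (large deviations) argument: combine the uniform asymptotics of Lemma \ref{lem:free-energy} with the quantitative gap of Proposition \ref{prop:maximizer}. Write $w_{\Omega}(S,m) = m^{(\Omega)}_{S} \fnexp{-\sminvtemperature E_{\Omega}(S,m)}$ for the unnormalized weights, so that $P_{\Omega}(S,m) = w_{\Omega}(S,m)/Z_{\Omega}$ with $Z_{\Omega}$ the denominator of \eqref{eq:p-omega}. Set $h_{\max} = h(\eta_{\txtmax}, z_{\txtmax})$. Three facts are needed:
\begin{itemize}
\item an upper bound on each weight $w_{\Omega}(S,m)$ in the bad region;
\item a count of the admissible pairs;
\item a lower bound on $Z_{\Omega}$.
\end{itemize}

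\emph{Upper bound and counting.} Lemma \ref{lem:free-energy} gives $w_{\Omega}(S,m) \le \fnexp{\Omega h(2S/\Omega, 2m/\Omega) + C(1+\log\Omega)}$. On the region $\abs{2S/\Omega - \eta_{\txtmax}} + \abs{2m/\Omega - z_{\txtmax}} \ge \delta$, Proposition \ref{prop:maximizer} gives $h \le h_{\max} - \kappa(\delta)$. The gap statement covers both cases of that proposition, using the assumption $\sminvtemperature_{c}\abs{\varepsilon} < \eta_{0}$ when $\sminvtemperature > \sminvtemperature_{c}$. The number of admissible pairs $(S,m)$ is at most $(\Omega+1)^{2}$. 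Hence the total bad weight is at most $(\Omega+1)^{2} \fnexp{\Omega(h_{\max} - \kappa(\delta)) + C(1+\log\Omega)}$.

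\emph{Lower bound on $Z_{\Omega}$.} This is the only mildly delicate step. I will pick a single admissible lattice pair $(S_{\Omega}, m_{\Omega})$ with $(2S_{\Omega}/\Omega, 2m_{\Omega}/\Omega)$ within $O(1/\Omega)$ of $(\eta_{\txtmax}, z_{\txtmax})$. The lattice has spacing $2/\Omega$ in each coordinate, and $S \equiv m \equiv \Omega/2$ modulo integers. The constraint $\abs{m} \le S$ is satisfiable near the maximizer, including on the boundary $\abs{z}=\eta$ in case (2), by taking $m = \pm S$. Then $Z_{\Omega} \ge w_{\Omega}(S_{\Omega}, m_{\Omega}) \ge \fnexp{\Omega h(2S_{\Omega}/\Omega, 2m_{\Omega}/\Omega) - C(1+\log\Omega)}$.

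It remains to control $\Omega\bigl(h_{\max} - h(\text{lattice point})\bigr)$. Uniform continuity of $h$ on the compact domain only gives $o(\Omega)$, but that already suffices. I will instead choose any $\delta' < \kappa(\delta)/4$ and use continuity of $h$ at the maximizer to get $h \ge h_{\max} - \delta'$ at the lattice point for large $\Omega$. The endpoint $\eta = 1$ needs no care because $\eta_{\txtmax} < 1$.

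\emph{Conclusion.} Dividing the two bounds gives a bad-region probability at most $(\Omega+1)^{2} \fnexp{2C(1+\log\Omega)} \fnexp{-\Omega(\kappa(\delta) - \delta')}$. This is $\le \fnexp{-c_{\delta}\Omega}$ with $c_{\delta} = \kappa(\delta)/2$ for $\Omega \ge \Omega_{\delta}$, since the polynomial prefactors are absorbed.

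For the second assertion, fix $\epsilon > 0$. Choose $\delta$ such that $\abs{g - g(\eta_{\txtmax}, z_{\txtmax})} < \epsilon$ on the $\delta$-neighbourhood in the $\ell^{1}$ metric used above. Split the sum $\sum P_{\Omega}(g - g(\eta_{\txtmax},z_{\txtmax}))$ into the near part, bounded by $\epsilon$, and the far part, bounded by $2\norm{g}_{\infty}\fnexp{-c_{\delta}\Omega}$. Letting $\Omega \to \infty$ and then $\epsilon \to 0$ gives the limit.
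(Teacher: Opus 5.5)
Your proposal is correct and follows the paper's proof essentially step for step. You bound $Z_{\Omega}$ from below by a single admissible lattice term using Lemma \ref{lem:free-energy} and continuity of $h$ at the maximizer (the paper takes $\epsilon_{0} = \kappa(\delta)/4$, as you do), bound the excluded region from above via the quantitative gap of Proposition \ref{prop:maximizer} and the $(\Omega+1)^{2}$ count, and prove the second assertion by the same $\epsilon$-splitting. Your explicit check that some admissible pair with $\abs{m} \leq S$ lies within $O(1/\Omega)$ of the maximizer, taking $m = \pm S$ in the boundary case, is slightly more careful than the paper's brief remark about moving along lattice directions.
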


\begin{proof}
Write $Z_{\Omega}$ for the denominator of \eqref{eq:p-omega} and $h_{\txtmax}
=
\fun{h}{\eta_{\txtmax}, z_{\txtmax}}$,
with $h$ the free-energy function of
Definition \ref{def:free-energy-function}.
We first derive a lower bound for $Z_{\Omega}$.
The function $h$ is continuous at $\rbk{\eta_{\txtmax}, z_{\txtmax}}$.
It follows that for any prescribed $\epsilon_{0} > 0$ there is $\rho > 0$ with $\fun{h}{\eta, z}
\geq
h_{\txtmax} - \epsilon_{0}$ on the $\rho$-ball.
The lattice of points
$\rbk{\frac{2 S}{\Omega}, \frac{2 m}{\Omega}}$
associated with pairs $\rbk{S, m}$ admissible at volume $\Omega$ in the sense of
Definition \ref{def:admissible-spin-quantum-numbers}
has spacing $\frac{2}{\Omega}$ in each coordinate.
For sufficiently large $\Omega$,
it contains a point of the $\rho$-ball intersected with $\dom h$.
Indeed,
the maximizer lies in $\dom h$.
Moreover,
the constraint $\abs{z}
\leq
\eta$
is preserved when $m$ moves towards $0$ or $S$ moves towards $\frac{\Omega}{2}$ along lattice directions.
A single term together with Lemma \ref{lem:free-energy} gives
$$Z_{\Omega}
\geq
\fnexp{\Omega \rbk{h_{\txtmax} - \epsilon_{0}} - C \rbk{1 + \fun{\log}{\Omega}}}.$$
We next derive an upper bound for the tail.
By Proposition \ref{prop:maximizer} the exponent on the excluded region is at most $h_{\txtmax} - \fun{\kappa}{\delta}$,
and there are at most $\rbk{\Omega + 1}^{2}$ lattice points.
It follows that the numerator of the tail is at most $\rbk{\Omega + 1}^{2} \fnexp{\Omega \rbk{h_{\txtmax} - \fun{\kappa}{\delta}} + C \rbk{1 + \fun{\log}{\Omega}}}$.
Choosing $\epsilon_{0}
=
\frac{\fun{\kappa}{\delta}}{4}$,
the quotient is at most $\fnexp{- \frac{\fun{\kappa}{\delta}}{2} \Omega}$ for $\Omega$ large.
For the second statement split the sum at the $\delta$-ball,
use boundedness on the tail,
continuity on the ball,
and let $\delta \downarrow 0$ along the standard $\frac{\epsilon}{2}$-argument.
\end{proof}

\section{The Limit Gibbs State and the Gauge Average}\label{sec:limitstate}

The symmetry-breaking product states appearing in the limit are now defined, and the limit of the finite-volume Gibbs states \eqref{eq:finite-volume-gibbs-state} is computed and identified. The thermal functional is analyzed first, and its moment limits then determine the limiting state.

\begin{defn}[superconducting thermal regime]\label{def:superconducting-thermal-regime}
A degenerate thermal setting in the sense of
Definition \ref{def:degenerate-thermal-setting}
is in the superconducting thermal regime if
$\sminvtemperature
>
\sminvtemperature_{c}$ and
$\sminvtemperature_{c} \abs{\varepsilon}
<
\eta_{0}$,
where
$\eta_{0}
=
\fun{\eta_{0}}{\sminvtemperature}$
is the positive solution specified in
Proposition \ref{prop:maximizer}(1).
The associated order-parameter coordinates are
\begin{equation}\label{eq:order-parameters}
z_{0}
=
\sminvtemperature_{c} \varepsilon,
\quad
r_{0}
=
\sqrt{\eta_{0}^{2} - z_{0}^{2}}
>
0.
\end{equation}
The corresponding circle of Bloch vectors is
\begin{equation}\label{eq:symmetry-breaking-bloch-vector}
\physvec{m}_{\phi}
=
\rbk{r_{0} \fun{\cos}{\phi}, r_{0} \fun{\sin}{\phi}, z_{0}}
\quad
\rbk{\phi
\in
[0, 2 \pi)}.
\end{equation}
\end{defn}

The principal results from the present section through Section \ref{sec:decomposition} use the superconducting thermal regime of Definition \ref{def:superconducting-thermal-regime}. Statements treating \(\sminvtemperature
\leq
\sminvtemperature_{c}\) identify that alternative temperature range explicitly.

\subsection{Symmetry-breaking product states}\label{symmetry-breaking-product-states}

The self-consistency equation defines a gauge-indexed circle of product states. Their gauge average will be identified with the limiting Gibbs state in Theorem \ref{thm:limit-state}. Their factoriality and KMS properties are proved later without presupposing the central decomposition. The notation is that of Definition \ref{def:superconducting-thermal-regime} and \eqref{eq:order-parameters}--\eqref{eq:symmetry-breaking-bloch-vector}. The case \(\sminvtemperature
\leq
\sminvtemperature_{c}\), where the maximizer has \(\eta
=
\abs{z}\) and the transverse radius vanishes, is carried along in remarks.

\begin{defn}[Bogoliubov product states]
For $\phi
\in
[0, 2 \pi)$ let $\oastate[\psi^{\rbk{1}}_{\sminvtemperature,\phi}]$ be the state of $\spmat{2}{\fldcmp}$ with Bloch vector
$\physvec{m}_{\phi}$ defined by \eqref{eq:symmetry-breaking-bloch-vector},
as parametrized in \eqref{eq:bloch-state}.
The corresponding product state of $\oa{A}$ is defined by
\begin{equation}\label{eq:phase-product-state}
\oastate[\psi_{\sminvtemperature,\phi}]
=
\bigotimes_{p \in \semigrposint}
\oastate[\psi^{\rbk{1}}_{\sminvtemperature,\phi}].
\end{equation}
\end{defn}

The inverse-temperature index is retained because \(\physvec{m}_{\phi}\) depends on \(\sminvtemperature\) through the order parameters \eqref{eq:order-parameters}.

The normalized Haar probability measure on the gauge circle is \(\msrprb_{\mansphere^{1}}\) from \eqref{eq:gauge-haar-measure}.

\begin{prop}[factoriality, self-consistency, and gauge covariance]\label{prop:bogoliubov-product-properties}
For every $\phi
\in
[0, 2 \pi)$,
the Bogoliubov product state \eqref{eq:phase-product-state} has the following properties.
\begin{enumerate}
\item The one-site state $\oastate[\psi^{\rbk{1}}_{\sminvtemperature,\phi}]$ is the Gibbs state of the effective Hamiltonian
$\physham[h]_{\phi}$.
For every $A
\in
\spmat{2}{\fldcmp}$,
its density matrix and effective Hamiltonian satisfy
\begin{equation}\label{eq:self-consistency}
\begin{aligned}
\fun{\oastate[\psi^{\rbk{1}}_{\sminvtemperature,\phi}]}{A}
&=
\sqfun{\trace}{D_{\phi} A},
\\ 
D_{\phi}
&=
\frac{\fnexp{- \sminvtemperature \physham[h]_{\phi}}}{\sqfun{\trace}{\fnexp{- \sminvtemperature \physham[h]_{\phi}}}},
\\ 
\physham[h]_{\phi}
&=
- \frac{1}{\sminvtemperature_{c}} \eta_{0} \physvec{\sigma} \cdot \hat{\physvec{m}}_{\phi}
=
- \varepsilon \sigma^{z} - \frac{2}{\sminvtemperature_{c}} \rbk{m^{-}_{\phi} \sigma^{+} + m^{+}_{\phi} \sigma^{-}},
\\ 
\hat{\physvec{m}}_{\phi}
&=
\frac{\physvec{m}_{\phi}}{\eta_{0}},
\\ 
m^{\pm}_{\phi}
&=
\frac{1}{2} \rbk{m^{x}_{\phi} \pm \imunit m^{y}_{\phi}}
=
\frac{r_{0}}{2} \fnexp{\pm \imunit \phi}.
\end{aligned}
\end{equation}

\item The density matrix $D_{\phi}$ in \eqref{eq:self-consistency} is invertible.
The product state $\oastate[\psi_{\sminvtemperature,\phi}]$ is a factor state,
and its GNS representation is the one given by Proposition \ref{prop:gns-faithful}.

\item The gauge group permutes the family according to
\begin{equation}\label{eq:gauge-covariance-states}
\oastate[\psi_{\sminvtemperature,\phi}] \circ \mathfrak{g}_{\vartheta}
=
\oastate[\psi_{\sminvtemperature,\phi + \vartheta}],
\quad
\oastate[\psi_{\sminvtemperature,\phi}]
=
\oastate[\psi_{\sminvtemperature,0}] \circ \mathfrak{g}_{\phi}.
\end{equation}
\end{enumerate}
\end{prop}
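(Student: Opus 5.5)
The proof splits along the three parts of the statement. All three are one-site or product-state computations from the Bloch calculus of Appendix \ref{sec:bloch}, together with the GNS material for faithful product states.

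For (1), I first rewrite the effective Hamiltonian. Using $2(m^{-}_{\phi}\sigma^{+}+m^{+}_{\phi}\sigma^{-}) = m^{x}_{\phi}\sigma^{x}+m^{y}_{\phi}\sigma^{y}$ and $z_{0}=\sminvtemperature_{c}\varepsilon$, the expression $-\varepsilon\sigma^{z}-\frac{2}{\sminvtemperature_{c}}(m^{-}_{\phi}\sigma^{+}+m^{+}_{\phi}\sigma^{-})$ equals $-\frac{1}{\sminvtemperature_{c}}\physvec{\sigma}\cdot\physvec{m}_{\phi}$. Since $\physvec{m}_{\phi}=\eta_{0}\hat{\physvec{m}}_{\phi}$, this is $-\frac{\eta_{0}}{\sminvtemperature_{c}}\physvec{\sigma}\cdot\hat{\physvec{m}}_{\phi}$, which gives the two forms of $\physham[h]_{\phi}$ in \eqref{eq:self-consistency}. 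Next, for a unit vector $\hat{\physvec{n}}$ the identity $(\physvec{\sigma}\cdot\hat{\physvec{n}})^{2}=1$ gives
$$\fnexp{a\physvec{\sigma}\cdot\hat{\physvec{n}}}=\cosh a+\sinh a\,\physvec{\sigma}\cdot\hat{\physvec{n}},\qquad \trace\fnexp{a\physvec{\sigma}\cdot\hat{\physvec{n}}}=2\cosh a .$$
Hence the normalized Gibbs density is $\frac{1}{2}(1+\tanh(a)\,\physvec{\sigma}\cdot\hat{\physvec{n}})$. Taking $a=\frac{\sminvtemperature}{\sminvtemperature_{c}}\eta_{0}$ and $\hat{\physvec{n}}=\hat{\physvec{m}}_{\phi}$, the fixed-point equation $\eta_{0}=\tanh(\frac{\sminvtemperature}{\sminvtemperature_{c}}\eta_{0})$ of Proposition \ref{prop:maximizer}(1) shows that this density has Bloch vector $\eta_{0}\hat{\physvec{m}}_{\phi}=\physvec{m}_{\phi}$. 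It therefore coincides with $D_{\phi}$, which gives (1). The step that needs care is the identification of this density with the state parametrized in \eqref{eq:bloch-state}. I expect that to be the only real consistency check in part (1).

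For (2), the eigenvalues of $D_{\phi}$ are $\frac{1}{2}(1\pm\abs{\physvec{m}_{\phi}})=\frac{1}{2}(1\pm\eta_{0})$, and these are strictly positive because $\eta_{0}<1$. So $D_{\phi}$ is invertible, and it also lies strictly inside the Bloch ball. I then invoke Proposition \ref{prop:gns-faithful} for product states with invertible density matrices. This supplies the GNS representation, presumably the standard form on the infinite tensor product of $\spmat{2}{\fldcmp}$ with Hilbert--Schmidt reference vectors $D_{\phi}^{1/2}$.

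Factoriality is the main obstacle, unless it is already part of Proposition \ref{prop:gns-faithful} or of the appendix on infinite tensor products. My first route is the standard clustering criterion. For local $B$ in $\oa{A}_{\Lambda}$ and any $A$ supported outside $\Lambda$, the product property gives $\oastate[\psi](AB)=\oastate[\psi](A)\oastate[\psi](B)$ exactly. By density, $\oastate[\psi]$ is therefore asymptotically factorizing for all $A,B\in\oa{A}$. The Powers/Bratteli--Robinson characterization of factor states on UHF algebras then makes the state a factor state. If that characterization is not available in the appendices, I would argue directly. An element $Z$ of the center lies in the weak closure of $\oarepn(\oa{A}_{>\Omega})$ for every $\Omega$, because it commutes with $\oarepn(\oa{A}_{\Omega})$ and the factorization of Proposition \ref{prop:factorization} identifies the relative commutant in the standard form. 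Hence $Z$ lies in the tail algebra. The tail algebra of a product state is trivial, by a Kolmogorov zero-one argument: $\langle\Psi,Z\oarepn(B)\Psi\rangle=\langle\Psi,Z\Psi\rangle\oastate[\psi](B)$ for local $B$.

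For (3), I apply \eqref{eq:gauge-rotation} on a single site. The map $\mathfrak{g}_{\vartheta}$ acts on Bloch vectors by $R_{-\vartheta}$, which rotates by $+\vartheta$ about the $z$-axis, so $R_{-\vartheta}\physvec{m}_{\phi}=\physvec{m}_{\phi+\vartheta}$. Because $\mathfrak{g}_{\vartheta}$ is a product of one-site automorphisms, this gives the equality on local elementary tensors. Density then extends it to all of $\oa{A}$, exactly as in \eqref{eq:gauge-on-states}. Setting $\phi=0$ yields the second identity.
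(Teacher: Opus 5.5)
Your parts (1) and (3) follow the paper's argument. In (1) the paper also combines $(\boldsymbol{\sigma}\cdot\hat{\boldsymbol{u}})^{2}=1$, the fixed-point equation and the Bloch parametrization. In (3) it checks the one-site expectations on the spanning set $1,\sigma^{z},\sigma^{\pm}$ instead of citing \eqref{eq:gauge-rotation}, which is the same computation.

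The real difference is in (2). Factoriality is not an open obstacle: the paper simply cites Theorem~\ref{thm:factor} from the appendix. That proof works in the Hilbert--Schmidt realization of Proposition~\ref{prop:gns-faithful}.
\begin{itemize}
\item A central $Z$ commutes with the left multiplications $\pi(\mathcal{A}_{\Lambda})$.
\item Because $Z$ lies in the bicommutant, it also commutes with the slotwise right multiplications, which lie in $\pi(\mathcal{A})'$.
\item These two families generate $\mathbb{B}(\bigotimes_{p\in\Lambda}h_{p})\otimes 1$, so $Z$ is trivial on every finite block.
\item Lemma~\ref{lem:tail}, a zero--one law on the flip basis, then makes $Z$ scalar.
\end{itemize}

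Your direct route is a valid alternative. Its first step is the identity $\pi(\mathcal{A}_{\Omega})'\cap\pi(\mathcal{A})''=\pi(\mathcal{A}_{>\Omega})''$. This holds because $\pi(\mathcal{A}_{\Omega})$ is a full matrix algebra: with its matrix units, the normal map $x\mapsto\sum_{k}e_{k1}xe_{1k}$ projects $\pi(\mathcal{A})''$ onto the relative commutant and sends products $fn$ into $\pi(\mathcal{A}_{>\Omega})''$.

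Your factorization identity then gives $\langle (Z-c)^{*}\Psi,\pi(B)\Psi\rangle=0$, hence $(Z-c)^{*}\Psi=0$. Centrality and cyclicity then force $Z=c$. You should write out this closing step.

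This route is representation-independent and uses only the product property of the state. The paper's route instead exploits the standard form and reuses the tail lemma that is already needed for irreducibility of pure product representations.

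The Powers clustering route also works, but it imports an external theorem. Note that it needs clustering that is uniform in $B$ supported outside a finite set, which is what exact factorization plus density of $\mathcal{A}_{\mathrm{loc}}$ gives.
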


\begin{proof}

(1)
For any $\kappa
\in
\fldreal$ and unit vector $\hat{\physvec{u}}$,
the Pauli relation $\rbk{\physvec{\sigma} \cdot \hat{\physvec{u}}}^{2}
=
1$ gives the exponential identity
\begin{equation*}
\fnexp{\sminvtemperature \kappa \physvec{\sigma} \cdot \hat{\physvec{u}}}
=
\fun{\cosh}{\sminvtemperature \kappa} + \fun{\sinh}{\sminvtemperature \kappa}
\physvec{\sigma} \cdot \hat{\physvec{u}}.
\end{equation*}
The normalized Gibbs matrix has Bloch vector
$\fun{\tanh}{\sminvtemperature \kappa} \hat{\physvec{u}}$.
Set $\kappa
=
\frac{1}{\sminvtemperature_{c}} \eta_{0}$.
The fixed point equation in Proposition \ref{prop:maximizer} gives
$\fun{\tanh}{\frac{\sminvtemperature}{\sminvtemperature_{c}} \eta_{0}}
=
\eta_{0}$.
The resulting Bloch vector is
$\eta_{0} \hat{\physvec{m}}_{\phi}
=
\physvec{m}_{\phi}$.
The Bloch parametrization \eqref{eq:bloch-state} proves the density-matrix identity in
\eqref{eq:self-consistency}.
The second form of $\physham[h]_{\phi}$ is the effective Hamiltonian
\eqref{eq:effective-hamiltonian} of the constant configuration aligned in the
sense of Definition \ref{def:aligned-configuration} in the
$\hat{\physvec{m}}_{\phi}$-direction,
because $\frac{1}{\sminvtemperature_{c}} \eta_{0} \hat{m}^{z}_{\phi}
=
\frac{1}{\sminvtemperature_{c}} z_{0}
=
\varepsilon$.
This is the self-consistency relation of \cite[Eqs. (9)--(10)]{ThirringWalter001}.

(2)
Equation \eqref{eq:symmetry-breaking-bloch-vector} gives
$\abs{\physvec{m}_{\phi}}
=
\eta_{0} < 1$.
The eigenvalue formula \eqref{eq:bloch-density-spectrum} proves that $D_{\phi}$ is invertible.
Proposition \ref{prop:gns-faithful} supplies the GNS representation,
and Theorem \ref{thm:factor} proves that $\oastate[\psi_{\sminvtemperature,\phi}]$ is a factor state.

(3)
For every $p
\in
\semigrposint$,
the one-site expectations satisfy
\begin{equation*}
\begin{aligned}
\fun{\oastate[\psi_{\sminvtemperature,\phi}]}{\fun{\mathfrak{g}_{\vartheta}}{\sigma^{z}_{p}}}
&=
z_{0}
=
\fun{\oastate[\psi_{\sminvtemperature,\phi + \vartheta}]}{\sigma^{z}_{p}},
\\
\fun{\oastate[\psi_{\sminvtemperature,\phi}]}{\fun{\mathfrak{g}_{\vartheta}}{\sigma^{+}_{p}}}
&=
\fnexp{\imunit \vartheta} m^{+}_{\phi}
=
m^{+}_{\phi + \vartheta}
=
\fun{\oastate[\psi_{\sminvtemperature,\phi + \vartheta}]}{\sigma^{+}_{p}}.
\end{aligned}
\end{equation*}
Taking adjoints gives the same equality for $\sigma^{-}_{p}$.
The matrices $1,\sigma^{z},\sigma^{+},\sigma^{-}$ span
$\spmat{2}{\fldcmp}$.
Equality on this spanning set proves that the one-site states agree.
Multiplicativity gives the first identity in \eqref{eq:gauge-covariance-states} on
$\oa{A}_{\txtloc}$,
and continuity extends it to $\oa{A}$.
Its specialization to $\phi
=
0$ and $\vartheta
=
\phi$ gives the second identity.
\end{proof}

\subsection{The thermal functional and its limit}\label{the-thermal-functional-and-its-limit}

The joint generating functional of the intensive spin observables has a locally uniform limit. Its derivatives provide all ordered mean moments needed for convergence of the states.

\begin{thm}[limit of the thermal functionals]\label{thm:functional-limit}
For $r
\geq
1$ and $\physvec{u}_{1}, \dotsc, \physvec{u}_{r}
\in
\fldreal^{3}$,
we set their sum $\physvec{w}
=
\sum_{j
=
1}^{r} \physvec{u}_{j}$.
The vector $\physvec{m}_{\phi}$ in the limiting integral is defined by
\eqref{eq:symmetry-breaking-bloch-vector}.
The joint characteristic function has the limit
$$\begin{aligned}
&\fun{F_{\Omega}}{\physvec{u}_{1}, \dotsc, \physvec{u}_{r}}
=
\fun{\oastate[\psi_{\sminvtemperature,\Omega}]}{\fnexp{\imunit \physvec{u}_{1} \cdot \physvec{M}_{\Omega}} \dotsm \fnexp{\imunit \physvec{u}_{r} \cdot \physvec{M}_{\Omega}}}
\\ 
&\to
\int_{\mansphere^{1}}
\fnexp{\imunit \physvec{w} \cdot \physvec{m}_{\phi}}
\opdmsr{\msrprb_{\mansphere^{1}}}(\phi)
\quad \rbk{\Omega \to \infty}.
\end{aligned}$$
The convergence extends to complex $\physvec{u}_{j}$,
locally uniformly on $\fldcmp^{3 r}$,
with the right side understood as $\fun{G_{\infty}}{\eta_{0}, z_{0}. \physvec{w}}$ of Proposition \ref{prop:bessel-limit},
and all partial derivatives at $\physvec{u}
=
0$ converge accordingly.
For $\sminvtemperature
\leq
\sminvtemperature_{c}$ the same holds with the right side $\fnexp{\imunit \physvec{w} \cdot \physvec{m}_{\sminvtemperature,\ast}}$,
$\physvec{m}_{\sminvtemperature,\ast}
=
\rbk{0, 0, \fun{\tanh}{\sminvtemperature \varepsilon}}$.
\end{thm}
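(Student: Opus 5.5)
The plan is to reduce the Gibbs expectation to diagonal matrix elements in the total-spin basis, combine the product of exponentials into one group element of the spin-$S$ representation, and then feed the resulting matrix-element asymptotics into the concentration Theorem \ref{thm:concentration}.

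First, the reduction. Since $\physvec{M}_{\Omega} = \frac{2}{\Omega}\physvec{S}_{\Omega}$, every factor $\fnexp{\imunit \physvec{u}_{j}\cdot\physvec{M}_{\Omega}}$ is a function of the total spin operators. Hence \eqref{eq:spectral-trace} and Proposition \ref{prop:matrix-element} give
$$\fun{F_{\Omega}}{\physvec{u}_{1},\dotsc,\physvec{u}_{r}}=\sum_{S,m}\fun{P_{\Omega}}{S,m}\,\bkt{\ket{S,m}}{\fun{D^{S}}{g_{\Omega}}\ket{S,m}},$$
where $g_{\Omega}=\prod_{j}\fnexp{\imunit \physvec{u}_{j}\cdot\physvec{\sigma}/\Omega}\in\liegr{SL}(2,\fldcmp)$ and $D^{S}$ is the spin-$S$ representation on $V_{S}$. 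In the real case $g_{\Omega}\in\liegr{SU}(2)$.

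Second, the group element. By Baker--Campbell--Hausdorff in $\spmat{2}{\fldcmp}$, $g_{\Omega}=\fnexp{\imunit \physvec{w}_{\Omega}\cdot\physvec{\sigma}/\Omega}$ with $\physvec{w}_{\Omega}=\physvec{w}+\fun{O}{\Omega^{-1}}$, uniformly for $\physvec{u}$ in compacts of $\fldcmp^{3r}$. In the representation the generator becomes $\frac{2}{\Omega}\physvec{w}_{\Omega}\cdot\physvec{S}$, and $\norm{\physvec{S}}\le\Omega/2$. So replacing $\physvec{w}_{\Omega}$ by $\physvec{w}$ costs only $\fun{O}{\Omega^{-1}}$ in operator norm, after a Duhamel estimate using the bound $\norm{\fnexp{\frac{2}{\Omega}\physvec{w}\cdot\physvec{S}}}\le \fnexp{\abs{\opimag\physvec{w}}}$. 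The problem is thus reduced to the single diagonal element $\bkt{\ket{S,m}}{\fnexp{\imunit\frac{2}{\Omega}\physvec{w}\cdot\physvec{S}_{\Omega}}\ket{S,m}}$.

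Third, the asymptotics. I would invoke Proposition \ref{prop:bessel-limit} for this diagonal Wigner element. It should converge to $\fun{G_{\infty}}{\frac{2S}{\Omega},\frac{2m}{\Omega},\physvec{w}}$, uniformly in admissible $(S,m)$ and locally uniformly in $\physvec{w}$. On the real axis the limit is $\fnexp{\imunit w^{z}z}\,\fun{J_{0}}{\abs{\physvec{w}_{\perp}}\sqrt{\eta^{2}-z^{2}}}=\int_{\mansphere^{1}}\fnexp{\imunit\physvec{w}\cdot\physvec{m}}\opdmsr{\msrprb_{\mansphere^{1}}}$, where $\physvec{m}$ has transverse radius $\sqrt{\eta^{2}-z^{2}}$ and longitudinal component $z$, and the limit is continuous in $(\eta,z)$. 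Theorem \ref{thm:concentration} with $g=\fun{G_{\infty}}{\cdot,\cdot,\physvec{w}}$ then yields the limit at $(\eta_{0},z_{0})$, which is the stated gauge integral. Because the convergence in the previous step is uniform in $(S,m)$, the error from replacing the matrix element by $G_{\infty}$ is negligible. For $\sminvtemperature\le\sminvtemperature_{c}$ the maximizer lies on $\abs{z}=\eta$, so the radius vanishes and the limit is $\fnexp{\imunit w^{z}\fun{\tanh}{\sminvtemperature\varepsilon}}$.

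Fourth, complex arguments and derivatives. For complex $\physvec{u}$ the $F_{\Omega}$ are entire and locally uniformly bounded by $\fnexp{\sum_{j}\abs{\opimag\physvec{u}_{j}}}$. Locally uniform convergence on $\fldcmp^{3r}$ then follows either from the complex form of Proposition \ref{prop:bessel-limit} or from Vitali's theorem combined with convergence on the real subspace. Cauchy estimates give convergence of all partial derivatives at $0$. The main obstacle is the uniform diagonal-matrix-element asymptotics near the edge $\abs{m}\approx S$ and for complex $\physvec{w}$. If Proposition \ref{prop:bessel-limit} covers only the real case, I would use Vitali as described to avoid needing a complex version.
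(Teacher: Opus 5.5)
Your proposal is correct and follows essentially the same route as the paper. Both reduce via \eqref{eq:spectral-trace} and Proposition~\ref{prop:matrix-element} to diagonal elements of $D^{S}(g_{\Omega})$, apply the uniform limit of Proposition~\ref{prop:bessel-limit}, concentrate at the maximizer by Theorem~\ref{thm:concentration}, identify the limit through Lemma~\ref{lem:bessel-average}, and use Vitali for complex arguments and derivatives. The only difference is your Baker--Campbell--Hausdorff and Duhamel replacement of $\physvec{w}_{\Omega}$ by $\physvec{w}$. That step is harmless but unnecessary, because Proposition~\ref{prop:bessel-limit} already admits the $O(\Omega^{-2})$ remainder $Q_{\Omega}$ and complex $\physvec{v}$ ranging over a compact set.
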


\begin{proof}
Since $\physvec{u} \cdot \physvec{M}_{\Omega}
=
\frac{2}{\Omega} \physvec{u} \cdot \physvec{S}_{\Omega}$,
the product of exponentials is $g_{\Omega}^{\otimes \Omega}$,
where
$$g_{\Omega}
=
\prod_{j
=
1}^{r} \fnexp{\imunit \physvec{u}_{j} \cdot \physvec{\sigma} / \Omega}.$$
This factorization follows by applying the identity
$$\rbk{\fnexp{\physvec{a} \cdot \physvec{\sigma}}}^{\otimes \Omega}
=
\fnexp{2 \physvec{a} \cdot \physvec{S}}$$
from the proof of Proposition \ref{prop:matrix-element} to each factor.
Expanding each factor and collecting,
uniformly for the $\physvec{u}_{j}$ in a compact subset of $\fldcmp^{3 r}$,
$$g_{\Omega}
=
1 + \frac{\imunit}{\Omega} \physvec{w} \cdot \physvec{\sigma} + \fun{O}{\frac{1}{\Omega^{2}}},$$
which is the hypothesis of Proposition \ref{prop:bessel-limit}.
The spectral trace formula \eqref{eq:spectral-trace} and Proposition \ref{prop:matrix-element} give the decomposition
$$\fun{F_{\Omega}}{\physvec{u}}
=
\sum_{\substack{j = 0, 1, \dotsc, \lfloor \Omega / 2 \rfloor \\ S = \Omega / 2 - j \\ m = - S, - S + 1, \dotsc, S}} \fun{P_{\Omega}}{S, m} \fun{G_{\Omega}}{S, m},
\quad
\fun{G_{\Omega}}{S, m}
=
\bkt{\ket{S, m}}{\fun{D^{S}}{g_{\Omega}} \ket{S, m}},$$
where
$\fun{D^{S}}{g}
=
\fnrestr{g^{\otimes 2 S}}{V_{S}}$
for $g
\in
\fun{\liegr{GL}}{2, \fldcmp}$.
Split
$$\begin{aligned}
&\abs{\fun{F_{\Omega}}{\physvec{u}} - \fun{G_{\infty}}{\eta_{\txtmax}, z_{\txtmax}. \physvec{w}}}
\\ 
&\leq
\sum_{\substack{j = 0, 1, \dotsc, \lfloor \Omega / 2 \rfloor \\ S = \Omega / 2 - j \\ m = - S, - S + 1, \dotsc, S}}
\fun{P_{\Omega}}{S, m}
\abs{\fun{G_{\Omega}}{S, m} - \fun{G_{\infty}}{\frac{2 S}{\Omega}, \frac{2 m}{\Omega}. \physvec{w}}}
\\
&\quad+
\abs{\sum_{\substack{j = 0, 1, \dotsc, \lfloor \Omega / 2 \rfloor \\ S = \Omega / 2 - j \\ m = - S, - S + 1, \dotsc, S}}
\fun{P_{\Omega}}{S, m}
\fun{G_{\infty}}
{\frac{2 S}{\Omega}, \frac{2 m}{\Omega}. \physvec{w}}
-\fun{G_{\infty}}{\eta_{\txtmax}, z_{\txtmax}. \physvec{w}}}.
\end{aligned}$$
The first term tends to $0$ uniformly on compacts by Proposition \ref{prop:bessel-limit}.
For the second term,
apply Theorem \ref{thm:concentration} to
$g
=
\fun{G_{\infty}}{\cdot, \cdot. \physvec{w}}$.
This function is continuous on the free-energy domain
$\dom h$ of \eqref{eq:free-energy-domain} and is bounded there locally uniformly in
$\physvec{w}$.
Uniformity for $\physvec{w}$ in a compact set follows from the uniform modulus of continuity of
$G_{\infty}$ on the product of $\dom h$ with that compact set.
The concentration estimate then makes the second term tend to $0$ uniformly on compact sets.
For $\sminvtemperature > \sminvtemperature_{c}$ the maximizer is $\rbk{\eta_{0}, z_{0}}$ with $\eta_{0}^{2} - z_{0}^{2}
=
r_{0}^{2}$,
and Lemma \ref{lem:bessel-average} identifies,
for real $\physvec{w}$,
$$\fun{G_{\infty}}{\eta_{0}, z_{0}. \physvec{w}}
=
\fnexp{\imunit z_{0} w^{z}} \fun{J_{0}}{r_{0} \abs{\physvec{w}_{\perp}}}
=
\int_{\mansphere^{1}} \fnexp{\imunit \physvec{w} \cdot \physvec{m}_{\phi}} \opdmsr{\msrprb_{\mansphere^{1}}}(\phi).$$
For $\sminvtemperature
\leq
\sminvtemperature_{c}$ the maximizer has $\eta_{\txtmax}^{2} - z_{\txtmax}^{2}
=
0$ and $\fun{G_{\infty}}{\eta_{\ast}, z_{\ast}, \physvec{w}}
=
\fnexp{\imunit z_{\ast} w^{z}}
=
\fnexp{\imunit \physvec{w} \cdot \physvec{m}_{\sminvtemperature,\ast}}$,
where $z_{\ast}
=
\fun{\tanh}{\sminvtemperature \varepsilon}$ with sign included.
Finally,
the family $F_{\Omega}$ is holomorphic on $\fldcmp^{3 r}$ and uniformly bounded on compacts,
since $$\norm{\fnexp{\imunit \physvec{u} \cdot \physvec{M}_{\Omega}}}
\leq
\fnexp{\norm{\opimag \physvec{u} \cdot \physvec{M}_{\Omega}}}
\leq
\fnexp{3 \max_{\gamma} \abs{\opimag u^{\gamma}}},$$
It follows that Corollary \ref{cor:vitali-several} upgrades the pointwise convergence to locally uniform convergence with all derivatives.
\end{proof}

The functional \(F_{\Omega}\) replaces the Euler-angle generating function of \cite[Eqs. (4)--(6)]{ThirringWalter001}. The difficulty with generating the \(\sigma^{y}\)-moments at \(\Omega
=
\infty\) observed in the footnote to \cite[Eq. (5)]{ThirringWalter001} does not arise here, the three spin directions entering the exponents symmetrically through the vector parameters.

\subsection{Moments and the limit state}\label{moments-and-the-limit-state}

The moment argument below converts convergence of the thermal functional into convergence on local monomials and thereby identifies the limiting state.

\begin{lem}[collision estimate]\label{lem:collision}
Let $r, k
\geq
0$.
If $k
\geq
1$,
choose distinct fixed sites $p_{1}, \dotsc, p_{k}$ and set
$R
=
\setone{p_{1}, \dotsc, p_{k}}$.
When $k
=
0$,
set $R
=
\emptyset$.
Define $p_{\ast}
=
\max R$ for $k
\geq
1$ and $p_{\ast}
=
0$ for $k
=
0$.
For every $j
\in
\intint{1..k}$,
let $B_{j}
\in
\oa{A}_{\setone{p_{j}}}$ with $\norm{B_{j}}
\leq
1$,
and let $\gamma_{1}, \dotsc, \gamma_{r}
\in
\setone{x, y, z}$.
For $\Omega
\geq
p_{\ast} + r$ define
$$X_{j,\Omega}
=
B_{j}
\quad
\rbk{j
\in
\intint{1..k}},
\quad
X_{k+i,\Omega}
=
M^{\gamma_{i}}_{\Omega}
\quad
\rbk{i
\in
\intint{1..r}},$$
where $M^{\gamma}_{\Omega}
=
M^{\gamma}_{0,\Omega}$ is the mean magnetization
\eqref{eq:collective}.
Fix a permutation $\tau
\in
\grsym{k+r}$.
When $k
=
0$ or $r
=
0$,
the corresponding lists and products are understood as empty.
The product with the order specified by $\tau$ is
\begin{equation}\label{expedition0025003}
\Pi_{\Omega}
=
\prod_{\ell
=
1}^{k+r} X_{\fun{\tau}{\ell},\Omega}
=
X_{\fun{\tau}{1},\Omega} \dotsm X_{\fun{\tau}{k+r},\Omega}.
\end{equation}
For any distinct sites $q_{1}, \dotsc, q_{r}
\in
\intint{1..\Omega} \setminus R$,
$$\abs{\fun{\oastate[\psi_{\sminvtemperature,\Omega}]}{\Pi_{\Omega}} - \fun{\oastate[\psi_{\sminvtemperature,\Omega}]}{B_{1} \dotsm B_{k} \sigma^{\gamma_{1}}_{q_{1}} \dotsm \sigma^{\gamma_{r}}_{q_{r}}}}
\leq
\frac{r \rbk{r - 1} + 2 r k}{\Omega}
\leq
\frac{\rbk{r + k}^{2} 2^{r}}{\Omega}.$$
\end{lem}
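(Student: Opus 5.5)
The plan is to expand each mean factor and exploit the permutation invariance \eqref{eq:gibbs-invariance} of $\oastate[\psi_{\sminvtemperature,\Omega}]$ under permutations of $\intint{1..\Omega}$ fixing $R$ pointwise. First I substitute $M^{\gamma_i}_{\Omega}=\frac{1}{\Omega}\sum_{s_i\le\Omega}\sigma^{\gamma_i}_{s_i}$ into \eqref{expedition0025003}. This writes $\Pi_{\Omega}$ as $\Omega^{-r}\sum_{(s_1,\dots,s_r)}\Pi_{\Omega}(s)$, where $\Pi_{\Omega}(s)$ is the ordered product with $M^{\gamma_i}_{\Omega}$ replaced by $\sigma^{\gamma_i}_{s_i}$.

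Call a tuple $s$ \emph{good} when the $s_i$ are pairwise distinct and avoid $R$. For a good tuple, all factors act on distinct sites and therefore commute. So $\Pi_{\Omega}(s)=B_1\dotsm B_k\sigma^{\gamma_1}_{s_1}\dotsm\sigma^{\gamma_r}_{s_r}$, regardless of the order $\tau$.

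Next I invoke the permutation invariance. Any two good tuples are related by a permutation of $\intint{1..\Omega}$ fixing $R$, and such permutations preserve the Gibbs state. Hence every good term has the same expectation $c$, namely the comparison quantity in the statement for the prescribed $q_1,\dots,q_r$. Denote by $N_{\mathrm{good}}$ the number of good tuples and set $N_{\mathrm{bad}}=\Omega^{r}-N_{\mathrm{good}}$. The difference to bound is then
\[
\Omega^{-r}\sum_{s\ \mathrm{bad}}\bigl(\fun{\oastate[\psi_{\sminvtemperature,\Omega}]}{\Pi_{\Omega}(s)}-c\bigr).
\]
Each summand is at most $2$ in modulus, since all factors have norm at most $1$ and states are contractive.

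The remaining step, which I expect to be the main bookkeeping point, is counting the bad tuples to get the exact constant $\frac{r(r-1)+2rk}{\Omega}$. A tuple is bad if $s_i=s_j$ for some unordered pair $i<j$, of which there are $\binom{r}{2}$ choices, each contributing at most $\Omega^{r-1}$ tuples. It is also bad if $s_i\in R$ for some $i$, giving at most $rk\,\Omega^{r-1}$ tuples. So $N_{\mathrm{bad}}\le\bigl(\tfrac{r(r-1)}{2}+rk\bigr)\Omega^{r-1}$. Multiplying by the factor $2$ and by $\Omega^{-r}$ gives exactly the first bound.

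The hypothesis $\Omega\ge p_\ast+r$ guarantees that good tuples exist, so $c$ is well defined and equals the value at the given $q_i$. The edge cases $r=0$, where there is nothing to expand, and $k=0$ are immediate. The second inequality is elementary: $r(r-1)+2rk\le (r+k)^2\le (r+k)^2 2^r$.
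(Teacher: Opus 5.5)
Your proposal is correct and follows the paper's proof essentially step for step. Both expand the mean magnetizations, use the permutation invariance \eqref{eq:gibbs-invariance} under permutations fixing $R$ to give every good tuple the common value, and bound the bad tuples by the union bound $\bigl(rk+\binom{r}{2}\bigr)\Omega^{r-1}$ with a factor $2$ per summand.
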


\begin{proof}
If $r
=
0$,
the product \eqref{expedition0025003} contains only the operators $B_{j}$.
They act on distinct sites and commute,
so the asserted difference is zero.
Assume below that $r
\geq
1$.

For a tuple $\physvec{q}'
=
\rbk{q_{1}', \dotsc, q_{r}'}
\in
\intint{1..\Omega}^{r}$ define
$$\fun{X_{j}}{\physvec{q}'}
=
B_{j}
\quad
\rbk{j
\in
\intint{1..k}},
\quad
\fun{X_{k+i}}{\physvec{q}'}
=
\sigma^{\gamma_{i}}_{q_{i}'}
\quad
\rbk{i
\in
\intint{1..r}},$$
and preserve the order in \eqref{expedition0025003} by setting
$\fun{\Pi}{\physvec{q}'}
=
\prod_{\ell
=
1}^{k+r} \fun{X_{\fun{\tau}{\ell}}}{\physvec{q}'}$.
Expanding every mean magnetization in \eqref{expedition0025003} gives
\begin{equation}\label{expedition0025004}
\fun{\oastate[\psi_{\sminvtemperature,\Omega}]}{\Pi_{\Omega}}
=
\frac{1}{\Omega^{r}}
\sum_{\physvec{q}'
\in
\intint{1..\Omega}^{r}}
\fun{\oastate[\psi_{\sminvtemperature,\Omega}]}{\fun{\Pi}{\physvec{q}'}}.
\end{equation}

Let $\mathcal{G}_{\Omega}$ be the set of tuples whose entries are pairwise distinct and lie outside $R$,
and let $\mathcal{B}_{\Omega}
=
\intint{1..\Omega}^{r} \setminus \mathcal{G}_{\Omega}$.
For $\physvec{q}'
\in
\mathcal{G}_{\Omega}$,
the factors of $\fun{\Pi}{\physvec{q}'}$ act on distinct sites and commute.
Commutativity gives
$$\fun{\Pi}{\physvec{q}'}
=
B_{1} \dotsm B_{k}
\sigma^{\gamma_{1}}_{q_{1}'} \dotsm
\sigma^{\gamma_{r}}_{q_{r}'}.$$
There is a permutation $\rho$ of $\intint{1..\Omega}$ that fixes every site in $R$ and satisfies
$\fun{\rho}{q_{i}'}
=
q_{i}$ for every $i
\in
\intint{1..r}$.
Permutation invariance \eqref{eq:gibbs-invariance} then gives
$$\fun{\oastate[\psi_{\sminvtemperature,\Omega}]}{\fun{\Pi}{\physvec{q}'}}
=
a_{\Omega},
\quad
a_{\Omega}
=
\fun{\oastate[\psi_{\sminvtemperature,\Omega}]}{
B_{1} \dotsm B_{k}
\sigma^{\gamma_{1}}_{q_{1}} \dotsm
\sigma^{\gamma_{r}}_{q_{r}}}.$$

A tuple belongs to $\mathcal{B}_{\Omega}$ only if some entry lies in $R$ or two entries coincide.
The union bound gives the explicit count
$$\begin{aligned}
\abscard{\mathcal{B}_{\Omega}}
\leq
\sum_{i
=
1}^{r}
k \Omega^{r-1}
+
\sum_{1
\leq
i < j
\leq
r}
\Omega^{r-1}
=
\rbk{r k + \binom{r}{2}} \Omega^{r-1}.
\end{aligned}$$
Since the total number of tuples is $\Omega^{r}$,
\eqref{expedition0025004} and the common value on $\mathcal{G}_{\Omega}$ yield the exact error identity
$$\fun{\oastate[\psi_{\sminvtemperature,\Omega}]}{\Pi_{\Omega}} - a_{\Omega}
=
\frac{1}{\Omega^{r}}
\sum_{\physvec{q}'
\in
\mathcal{B}_{\Omega}}
\rbk{
\fun{\oastate[\psi_{\sminvtemperature,\Omega}]}{\fun{\Pi}{\physvec{q}'}}
- a_{\Omega}
}.$$
Every factor in the products has norm at most $1$.
Both state values in each summand have modulus at most $1$,
and the count above gives
$$\abs{\fun{\oastate[\psi_{\sminvtemperature,\Omega}]}{\Pi_{\Omega}} - a_{\Omega}}
\leq
\frac{2 \abscard{\mathcal{B}_{\Omega}}}{\Omega^{r}}
\leq
\frac{r \rbk{r - 1} + 2 r k}{\Omega}
\leq
\frac{\rbk{r + k}^{2} 2^{r}}{\Omega}.$$
\end{proof}

\begin{thm}[the limit Gibbs state]\label{thm:limit-state}
Assume the degenerate thermal setting of
Definition \ref{def:degenerate-thermal-setting}.
If this setting is in the superconducting thermal regime of
Definition \ref{def:superconducting-thermal-regime},
the limit
$$\fun{\oastate[\psi_{\sminvtemperature}]}{A}
=
\lim_{\Omega
\to
\infty} \fun{\oastate[\widehat{\psi}_{\sminvtemperature,\Omega}]}{A}
\quad \rbk{A
\in
\oa{A}}$$
exists and defines the gauge-invariant state
\begin{equation}\label{eq:limit-state}
\oastate[\psi_{\sminvtemperature}]
=
\int_{\mansphere^{1}} \oastate[\psi_{\sminvtemperature,\phi}] \opdmsr{\msrprb_{\mansphere^{1}}}(\phi),
\end{equation}
meaning $\fun{\oastate[\psi_{\sminvtemperature}]}{A}
=
\int_{\mansphere^{1}} \fun{\oastate[\psi_{\sminvtemperature,\phi}]}{A} \opdmsr{\msrprb_{\mansphere^{1}}}(\phi)$ for every $A
\in
\oa{A}$.
The integrand is continuous in $\phi$.
For $\sminvtemperature
\leq
\sminvtemperature_{c}$ the limit exists and is the product state $\oastate[\psi_{\sminvtemperature,\ast}]
=
\bigotimes_{p \in \semigrposint} \oastate[\psi^{\rbk{1}}_{\physvec{m}_{\sminvtemperature,\ast}}]$ with $\physvec{m}_{\sminvtemperature,\ast}
=
\rbk{0, 0, \fun{\tanh}{\sminvtemperature \varepsilon}}$.
\end{thm}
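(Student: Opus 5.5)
The plan is to prove convergence on a total set of local elements, namely products $B_{1}\dotsm B_{k}$ of one-site elements at distinct sites, and then extend to $\oa{A}$ by density together with the uniform bound $\abs{\fun{\oastate[\widehat{\psi}_{\sminvtemperature,\Omega}]}{A}}\leq\norm{A}$. By linearity it suffices to treat $B_{j}\in\setone{\sigma^{x}_{p_{j}},\sigma^{y}_{p_{j}},\sigma^{z}_{p_{j}}}$ at distinct sites $p_{1},\dotsc,p_{k}$, since together with $1$ these span $\oa{A}_{\Lambda}$. For $\Omega\geq\max_j p_{j}$, permutation invariance \eqref{eq:gibbs-invariance} lets us replace $\sigma^{\gamma_{1}}_{p_{1}}\dotsm\sigma^{\gamma_{k}}_{p_{k}}$ by the same product at any distinct sites. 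Lemma \ref{lem:collision}, applied with no fixed sites and $r=k$ mean-field factors, then gives
$$\fun{\oastate[\psi_{\sminvtemperature,\Omega}]}{\sigma^{\gamma_{1}}_{p_{1}}\dotsm\sigma^{\gamma_{k}}_{p_{k}}}
=
\fun{\oastate[\psi_{\sminvtemperature,\Omega}]}{M^{\gamma_{1}}_{\Omega}\dotsm M^{\gamma_{k}}_{\Omega}}
+\fun{O}{\frac{k^{2}2^{k}}{\Omega}}.$$
This reduces the problem to the limit of ordered mean moments.

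Next, the ordered moment $\fun{\oastate[\psi_{\sminvtemperature,\Omega}]}{M^{\gamma_{1}}_{\Omega}\dotsm M^{\gamma_{k}}_{\Omega}}$ equals $(-\imunit)^{k}$ times the mixed derivative $\partial_{u_{1}^{\gamma_{1}}}\dotsm\partial_{u_{k}^{\gamma_{k}}}$ at $0$ of $F_{\Omega}(\physvec{u}_{1},\dotsc,\physvec{u}_{k})$, taking one vector parameter per factor so that the operator ordering is retained. Theorem \ref{thm:functional-limit} gives convergence of all derivatives at $0$. The limiting function $\int\fnexp{\imunit(\sum_j\physvec{u}_{j})\cdot\physvec{m}_{\phi}}\opdmsr{\msrprb_{\mansphere^{1}}}(\phi)$ has this derivative equal to $\imunit^{k}\int\prod_j m^{\gamma_{j}}_{\phi}\opdmsr{\msrprb_{\mansphere^{1}}}$. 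Differentiation under the integral is justified by the entire, bounded integrand on compacts. Consequently the moment converges to $\int\prod_j m_{\phi}^{\gamma_{j}}\opdmsr{\msrprb_{\mansphere^{1}}}(\phi)$.

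It remains to identify this with $\int\fun{\oastate[\psi_{\sminvtemperature,\phi}]}{\sigma^{\gamma_{1}}_{p_{1}}\dotsm\sigma^{\gamma_{k}}_{p_{k}}}\opdmsr{\msrprb_{\mansphere^{1}}}$, which holds because the product state factorizes over distinct sites with one-site means $m_{\phi}^{\gamma}$. Continuity of $\phi\mapsto\fun{\oastate[\psi_{\sminvtemperature,\phi}]}{A}$ follows from the gauge covariance \eqref{eq:gauge-covariance-states} and the norm continuity of $\vartheta\mapsto\mathfrak{g}_{\vartheta}(A)$. Gauge invariance of the limit follows from Haar invariance, or directly from \eqref{eq:gibbs-invariance}. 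Positivity and normalization pass to the limit, so the limit functional is a state.

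The case $\sminvtemperature\leq\sminvtemperature_{c}$ is identical, with the limit functional replaced by $\fnexp{\imunit\physvec{w}\cdot\physvec{m}_{\sminvtemperature,\ast}}$ and the moments by $\prod_j m^{\gamma_{j}}_{\sminvtemperature,\ast}$, yielding the product state. The main obstacle is the bookkeeping in the derivative step: each $\physvec{u}_{j}$ must be attached to its own exponential so that ordered, noncommuting moments are generated correctly. This works because only the first-order term in each $\physvec{u}_{j}$ survives the mixed derivative, giving exactly $M^{\gamma_{1}}_{\Omega}\dotsm M^{\gamma_{k}}_{\Omega}$ in the correct order.
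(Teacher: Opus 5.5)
Your proposal is correct and follows essentially the same route as the paper: reduction to distinct-site Pauli monomials, then the case of Lemma~\ref{lem:collision} with no fixed sites to pass to ordered mean moments, then identification of those moments as mixed derivatives of the functional of Theorem~\ref{thm:functional-limit} with one vector parameter per exponential, then factorization of the fiber product states and extension by density. Your explicit justification of the continuity of $\phi\mapsto\fun{\oastate[\psi_{\sminvtemperature,\phi}]}{A}$, via \eqref{eq:gauge-covariance-states} and norm continuity of the gauge action, covers a point that the paper's proof leaves implicit.
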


\begin{proof}
Every element of $\oa{A}_{R}$,
$R
=
\setone{p_{1}, \dotsc, p_{k}}$,
is a linear combination of monomials with one factor per site.
These monomials have the form
$\sigma^{\alpha_{1}}_{p_{1}} \dotsm \sigma^{\alpha_{k}}_{p_{k}}$ with $\alpha_{j}
\in
\setone{1, x, y, z}$.
Indeed,
a product of Pauli matrices at one site reduces to a scalar multiple of a single such factor.
Consider such a monomial with the identity factors dropped,
$r$ nontrivial Pauli factors at the distinct sites $q_{1}, \dotsc, q_{r}$ remaining.
For $r
=
0$ the required comparison is exact.
Assume $r
\geq
1$.
Expanding the mean magnetizations gives
$$\fun{\oastate[\psi_{\sminvtemperature,\Omega}]}{M^{\gamma_{1}}_{\Omega} \dotsm M^{\gamma_{r}}_{\Omega}}
=
\frac{1}{\Omega^{r}}
\sum_{\rbk{p_{1}, \dotsc, p_{r}}
\in
\intint{1..\Omega}^{r}}
\fun{\oastate[\psi_{\sminvtemperature,\Omega}]}{
\sigma^{\gamma_{1}}_{p_{1}} \dotsm \sigma^{\gamma_{r}}_{p_{r}}}.$$
Let $\mathcal{B}^{\rbk{r}}_{\Omega}$ be the set of tuples in this sum with at least two equal entries.
For every tuple outside $\mathcal{B}^{\rbk{r}}_{\Omega}$,
permutation invariance \eqref{eq:gibbs-invariance} gives
$$\fun{\oastate[\psi_{\sminvtemperature,\Omega}]}{
\sigma^{\gamma_{1}}_{p_{1}} \dotsm \sigma^{\gamma_{r}}_{p_{r}}}
=
\fun{\oastate[\psi_{\sminvtemperature,\Omega}]}{
\sigma^{\gamma_{1}}_{q_{1}} \dotsm \sigma^{\gamma_{r}}_{q_{r}}}.$$
The distinct-tuple terms cancel after subtracting the fixed-site expectation:
$$\begin{aligned}
&\fun{\oastate[\psi_{\sminvtemperature,\Omega}]}{M^{\gamma_{1}}_{\Omega} \dotsm M^{\gamma_{r}}_{\Omega}}
-\fun{\oastate[\psi_{\sminvtemperature,\Omega}]}
{\sigma^{\gamma_{1}}_{q_{1}} \dotsm \sigma^{\gamma_{r}}_{q_{r}}}
\\ 
&=
\frac{1}{\Omega^{r}}
\sum_{\rbk{p_{1}, \dotsc, p_{r}}
\in
\mathcal{B}^{\rbk{r}}_{\Omega}}
\rbk{
\fun{\oastate[\psi_{\sminvtemperature,\Omega}]}{
\sigma^{\gamma_{1}}_{p_{1}} \dotsm \sigma^{\gamma_{r}}_{p_{r}}}
-
\fun{\oastate[\psi_{\sminvtemperature,\Omega}]}{
\sigma^{\gamma_{1}}_{q_{1}} \dotsm \sigma^{\gamma_{r}}_{q_{r}}}
}.
\end{aligned}$$
The union bound over the colliding pairs gives
$$\abscard{\mathcal{B}^{\rbk{r}}_{\Omega}}
\leq
\sum_{1
\leq
i < j
\leq
r} \Omega^{r - 1}
=
\binom{r}{2} \Omega^{r - 1}.$$
Each state value in the preceding difference has modulus at most $1$.
It follows that
$$\abs{
\fun{\oastate[\psi_{\sminvtemperature,\Omega}]}{M^{\gamma_{1}}_{\Omega} \dotsm M^{\gamma_{r}}_{\Omega}}
-
\fun{\oastate[\psi_{\sminvtemperature,\Omega}]}{
\sigma^{\gamma_{1}}_{q_{1}} \dotsm \sigma^{\gamma_{r}}_{q_{r}}}}
\leq
\frac{2 \abscard{\mathcal{B}^{\rbk{r}}_{\Omega}}}{\Omega^{r}}
\leq
\frac{r \rbk{r - 1}}{\Omega}.$$
This is the $k
=
0$ specialization of Lemma \ref{lem:collision} and yields
$$\fun{\oastate[\psi_{\sminvtemperature,\Omega}]}{\sigma^{\gamma_{1}}_{q_{1}} \dotsm \sigma^{\gamma_{r}}_{q_{r}}}
=
\fun{\oastate[\psi_{\sminvtemperature,\Omega}]}{M^{\gamma_{1}}_{\Omega} \dotsm M^{\gamma_{r}}_{\Omega}} + \fun{O}{\frac{1}{\Omega}},$$
where the ordered mean moments are derivatives of the functional in Theorem \ref{thm:functional-limit}.
More precisely,
they are
$\rbk{- \imunit}^{r} \partial_{u_{1}^{\gamma_{1}}} \dotsm \partial_{u_{r}^{\gamma_{r}}} \fun{F_{\Omega}}{0}$,
with one derivative for each exponential factor.
By that theorem the derivatives converge to the corresponding derivatives of $\int_{\mansphere^{1}} \fnexp{\imunit \physvec{w} \cdot \physvec{m}_{\phi}} \opdmsr{\msrprb_{\mansphere^{1}}}(\phi)$ at $0$,
namely to $\int_{\mansphere^{1}} \prod_{j
=
1}^{r} m^{\gamma_{j}}_{\phi} \opdmsr{\msrprb_{\mansphere^{1}}}(\phi)$.
Factorization of the product state over the distinct sites gives
$$\fun{\oastate[\psi_{\sminvtemperature,\phi}]}{\sigma^{\gamma_{1}}_{q_{1}} \dotsm \sigma^{\gamma_{r}}_{q_{r}}}
=
\prod_{j = 1}^{r} m^{\gamma_{j}}_{\phi}.$$
It follows that the limit exists on every local monomial and agrees with \eqref{eq:limit-state} there.
Linearity extends it to $\oa{A}_{\txtloc}$.
For a general $A
\in
\oa{A}$ and $\epsilon
>
0$,
pick local $A'$ with $\norm{A - A'} < \epsilon$.
The state norm bound gives
$\limsup_{\Omega,\Omega' \to \infty} \abs{\fun{\oastate[\widehat{\psi}_{\sminvtemperature,\Omega}]}{A} - \fun{\oastate[\widehat{\psi}_{\sminvtemperature,\Omega'}]}{A}}
\leq
2 \epsilon$.
It follows that the limit exists,
and the identity \eqref{eq:limit-state} persists since both sides are norm-continuous in $A$.
The limit of states is a state,
and gauge invariance follows from \eqref{eq:gibbs-invariance} or from \eqref{eq:gauge-covariance-states} and translation invariance of $\msrprb_{\mansphere^{1}}$.
The case $\sminvtemperature
\leq
\sminvtemperature_{c}$ is identical with the constant $\physvec{m}_{\sminvtemperature,\ast}$ in place of the $\phi$-average.
\end{proof}

The state \eqref{eq:limit-state} exhibits off-diagonal long-range order while each fiber does not:

\begin{prop}[long-range order]\label{prop:odlro}
Assume the superconducting thermal regime of
Definition \ref{def:superconducting-thermal-regime}.
For all $p
\neq
q$,
$$\fun{\oastate[\psi_{\sminvtemperature}]}{\sigma^{+}_{p} \sigma^{-}_{q}}
=
\frac{r_{0}^{2}}{4} > 0,
\quad
\fun{\oastate[\psi_{\sminvtemperature}]}{\sigma^{+}_{p}}
=
0,$$
whereas $$\fun{\oastate[\psi_{\sminvtemperature,\phi}]}{\sigma^{+}_{p} \sigma^{-}_{q}}
=
m^{+}_{\phi} m^{-}_{\phi}
=
\frac{r_{0}^{2}}{4},
\quad
\fun{\oastate[\psi_{\sminvtemperature,\phi}]}{\sigma^{+}_{p}}
=
m^{+}_{\phi}
=
\frac{r_{0}}{2} \fnexp{\imunit \phi}
\neq
0.$$
In particular $\oastate[\psi_{\sminvtemperature}]$ is not a product state,
and each $\oastate[\psi_{\sminvtemperature,\phi}]$ breaks the gauge symmetry.
\end{prop}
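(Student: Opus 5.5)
The plan is a direct evaluation from the integral representation \eqref{eq:limit-state} of Theorem \ref{thm:limit-state}, together with the product structure of the fibers \eqref{eq:phase-product-state}.

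First I will compute the fiber values. For $p\neq q$ the state $\oastate[\psi_{\sminvtemperature,\phi}]$ is a product over sites, so
\[
\fun{\oastate[\psi_{\sminvtemperature,\phi}]}{\sigma^{+}_{p}\sigma^{-}_{q}}
=\fun{\oastate[\psi^{\rbk{1}}_{\sminvtemperature,\phi}]}{\sigma^{+}}\,\fun{\oastate[\psi^{\rbk{1}}_{\sminvtemperature,\phi}]}{\sigma^{-}} .
\]
The Bloch parametrization with $\physvec{m}_{\phi}$ from \eqref{eq:symmetry-breaking-bloch-vector}, already recorded in \eqref{eq:self-consistency}, gives $\fun{\oastate[\psi^{\rbk{1}}_{\sminvtemperature,\phi}]}{\sigma^{\pm}}=m^{\pm}_{\phi}=\frac{r_{0}}{2}\fnexp{\pm\imunit\phi}$. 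Hence the product equals $m^{+}_{\phi}m^{-}_{\phi}=r_{0}^{2}/4$, which does not depend on $\phi$. The single expectation $m^{+}_{\phi}$ is nonzero because $r_{0}>0$ in the superconducting regime.

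Next I will integrate these values against $\msrprb_{\mansphere^{1}}$. The correlation $r_{0}^{2}/4$ is constant in $\phi$, so its average is again $r_{0}^{2}/4$. For the single expectation, $\int_{\mansphere^{1}}\fnexp{\imunit\phi}\opdmsr{\msrprb_{\mansphere^{1}}}(\phi)=0$, so $\fun{\oastate[\psi_{\sminvtemperature}]}{\sigma^{+}_{p}}=0$. This also agrees with the gauge invariance stated in Theorem \ref{thm:limit-state}.

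Finally I will derive the two qualitative consequences. Suppose $\oastate[\psi_{\sminvtemperature}]$ were a product state. Then its two-site correlation would factor as $\fun{\oastate[\psi_{\sminvtemperature}]}{\sigma^{+}_{p}}\fun{\oastate[\psi_{\sminvtemperature}]}{\sigma^{-}_{q}}=0$, which contradicts the value $r_{0}^{2}/4>0$. For gauge breaking, \eqref{eq:gauge-covariance-states} gives $\oastate[\psi_{\sminvtemperature,\phi}]\circ\mathfrak{g}_{\vartheta}=\oastate[\psi_{\sminvtemperature,\phi+\vartheta}]$. These two states already differ on $\sigma^{+}_{p}$ whenever $\fnexp{\imunit\vartheta}\neq1$, since $m^{+}_{\phi}\neq0$.

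No step here is a real obstacle. The only point that needs care is that the factorization step uses $p\neq q$, i.e. that the two operators sit on distinct sites.
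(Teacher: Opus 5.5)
Your proposal is correct and follows the paper's own argument. The fiber values come from site factorization and the Bloch vector $\physvec{m}_{\phi}$, and the averaged values come from integrating against $\msrprb_{\mansphere^{1}}$ in \eqref{eq:limit-state}. The failure of the product property follows because $\fun{\oastate[\psi_{\sminvtemperature}]}{\sigma^{+}_{p}} = 0$ would force the two-point function to vanish, contradicting $r_{0}^{2}/4 > 0$. Your explicit use of \eqref{eq:gauge-covariance-states} to exhibit the gauge breaking is a small addition that the paper leaves implicit in $m^{+}_{\phi} \neq 0$.
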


\begin{proof}
By \eqref{eq:limit-state} and factorization of the fibers,
$\fun{\oastate[\psi_{\sminvtemperature}]}{\sigma^{+}_{p} \sigma^{-}_{q}}
=
\int_{\mansphere^{1}} m^{+}_{\phi} m^{-}_{\phi} \opdmsr{\msrprb_{\mansphere^{1}}}(\phi)
=
\frac{r_{0}^{2}}{4}$
because the integrand is constant.
On the other hand,
$\fun{\oastate[\psi_{\sminvtemperature}]}{\sigma^{+}_{p}}
=
\int_{\mansphere^{1}} \frac{r_{0}}{2} \fnexp{\imunit \phi} \opdmsr{\msrprb_{\mansphere^{1}}}(\phi)
=
0$.
The adjoint relation for a state also gives
$\fun{\oastate[\psi_{\sminvtemperature}]}{\sigma^{-}_{q}}
=
0$.
A product state would therefore make the two-point function vanish,
contrary to its value $\frac{r_{0}^{2}}{4}
>
0$.
\end{proof}

\section{Thermal Green Functions}\label{sec:green}

The finite-volume Gibbs states and Heisenberg dynamics in this section are those defined by \eqref{eq:finite-volume-gibbs-state} and \eqref{eq:finite-volume-dynamics}. The superconducting thermal regime is the one fixed in Definition \ref{def:superconducting-thermal-regime}. The main result of \cite{ThirringWalter001} states that the exact thermal Green functions in this regime converge to the gauge average of the Green functions of the Bogoliubov dynamics. The proof combines the estimates for noncommutative polynomials in Section \ref{sec:polynomials}, the moment convergence of Section \ref{sec:limitstate}, and the Vitali theorem.

\subsection{Degenerate specialization of the finite-volume polynomials}\label{degenerate-specialization-of-the-finite-volume-polynomials}

The degenerate thermal setting of Definition \ref{def:degenerate-thermal-setting} reduces every mean-field polynomial family of Definition \ref{def:mean-field-polynomial-family} to one involving only the unweighted mean-spin operators. Substitution of the fiber means then gives the Bogoliubov derivations used in the Green-function limit.

The assumptions and notation remain those of Definition \ref{def:superconducting-thermal-regime} and \eqref{eq:order-parameters}. The modifications for \(\sminvtemperature
\leq
\sminvtemperature_{c}\) consist in replacing the \(\phi\)-average by evaluation at \(\physvec{m}_{\sminvtemperature,\ast}\) and hold verbatim.

For the weighted means \eqref{eq:weighted-means}, the degenerate model satisfies \(M^{\gamma}_{n,\Omega}
=
\varepsilon^{n} M^{\gamma}_{0, \Omega}\). It follows that the derivation closes on the generators \(\sigma^{\gamma}_{p}\) and \(M^{\gamma}_{0,\Omega}\) alone, with the derivation rules of Lemma \ref{lem:exact-derivation} specialized to \begin{equation}\label{eq:degenerate-derivation}
\begin{aligned}
\fun{\delta_{\Omega}}{M^{+}_{0,\Omega}}
&=
\imunit \rbk{- 2 \varepsilon M^{+}_{0,\Omega} + \frac{2}{\sminvtemperature_{c}} M^{+}_{0,\Omega} M^{z}_{0,\Omega}},
\\ 
\fun{\delta_{\Omega}}{M^{-}_{0,\Omega}}
&=
\imunit \rbk{2 \varepsilon M^{-}_{0,\Omega} - \frac{2}{\sminvtemperature_{c}} M^{z}_{0,\Omega} M^{-}_{0,\Omega}},
\\ 
\fun{\delta_{\Omega}}{M^{z}_{0,\Omega}}
&=
- \frac{4 \imunit}{\sminvtemperature_{c}} \rbk{M^{+}_{0,\Omega} M^{-}_{0,\Omega} - M^{+}_{0,\Omega} M^{-}_{0,\Omega}}
=
0,
\end{aligned}
\end{equation} where the last relation reflects the exact conservation of \(S^{z}_{\Omega}\). The bounds of Lemma \ref{lem:derivation-bounds} hold as stated. For each \(\phi\), let \(F_{\phi}\) be the local element obtained from a mean-field polynomial family \(\seq{F_{\Omega}}{\Omega \geq \Omega_{0}}\) by replacing each \(M^{\gamma}_{0,\Omega}\) in its fixed expression by \(m^{\gamma}_{\phi}\). On local elements define \[\fun{\delta_{\phi}}{B}
=
\imunit
\sum_{p
\in
\fun{\supp}{B}}
\commutator{\physham[h]_{\phi,p}}{B},\] where \(\physham[h]_{\phi,p}\) is the copy of \eqref{eq:self-consistency} at the site \(p\).

\begin{lem}[thermal stationarity]\label{lem:thermal-stationarity}
For every $\phi$,
substitution of $M^{\gamma}_{0,\Omega}$ by $m^{\gamma}_{\phi}$ in
\eqref{eq:degenerate-derivation} gives zero.
For every mean-field polynomial family in the degenerate model,
it holds that
$$\rbk{\fun{\delta_{\Omega}}{F_{\Omega}}}_{\phi}
=
\fun{\delta_{\phi}}{F_{\phi}}.$$
\end{lem}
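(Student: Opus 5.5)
The proof follows the template of Lemma \ref{lem:stationarity}, specialized to the degenerate model and with the thermal Bloch vector $\physvec{m}_{\phi}$ in place of the pure configuration. The first step is a direct substitution into \eqref{eq:degenerate-derivation}. Replacing $M^{\pm}_{0,\Omega}$ by $m^{\pm}_{\phi} = \frac{r_{0}}{2}\fnexp{\pm\imunit\phi}$ and $M^{z}_{0,\Omega}$ by $m^{z}_{\phi} = z_{0} = \sminvtemperature_{c}\varepsilon$, the first line becomes
$$\imunit m^{+}_{\phi}\rbk{-2\varepsilon + \frac{2}{\sminvtemperature_{c}} z_{0}} = 0,$$
and the second line vanishes identically in the same way. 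The third line already vanishes before substitution, and after substitution it becomes $m^{+}_{\phi}m^{-}_{\phi} - m^{+}_{\phi}m^{-}_{\phi} = 0$. This settles the first assertion. The key input is the choice $z_{0} = \sminvtemperature_{c}\varepsilon$ from \eqref{eq:order-parameters}; this choice is exactly the vanishing of the second frequency, as in Proposition \ref{prop:precession}.

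The second assertion concerns the local factors. For $\sigma^{\gamma}_{p}$, the degenerate case of the first formula of Lemma \ref{lem:exact-derivation}, after substitution, reads
$$\imunit\rbk{-\varepsilon\commutator{\sigma^{z}_{p}}{\sigma^{\gamma}_{p}} - \frac{2}{\sminvtemperature_{c}} m^{+}_{\phi}\commutator{\sigma^{-}_{p}}{\sigma^{\gamma}_{p}} - \frac{2}{\sminvtemperature_{c}}\commutator{\sigma^{+}_{p}}{\sigma^{\gamma}_{p}} m^{-}_{\phi}} = \imunit\commutator{\physham[h]_{\phi,p}}{\sigma^{\gamma}_{p}}.$$
The equality uses the second expression for $\physham[h]_{\phi}$ in \eqref{eq:self-consistency}, because the scalars $m^{\pm}_{\phi}$ commute with everything. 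The right-hand side is $\fun{\delta_{\phi}}{\sigma^{\gamma}_{p}}$.

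For general families I would argue by the Leibniz rule, as in Lemma \ref{lem:exact-derivation}. The derivation $\delta_{\Omega}$ applied to a monomial is a sum over factors, in which one factor at a time is replaced by its generator formula. Substitution is an algebra map from the fixed noncommutative polynomial expression to $\oa{A}_{\Lambda_{0}}$, so it commutes with this expansion. Each mean-field factor then contributes zero by the first step, and each local factor contributes the corresponding term of $\delta_{\phi}$. The derivation $\delta_{\phi}$ also satisfies the Leibniz rule on products of local factors. It annihilates the scalar factors $m^{\gamma}_{\phi}$, and it is insensitive to the choice of support set, since sites outside the support contribute zero commutators. The two sides therefore agree.

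The only delicate point is bookkeeping. Substitution must be applied to the fixed polynomial expression of $\fun{\delta_{\Omega}}{F_{\Omega}}$ produced by the Leibniz expansion, not to the operator itself. I would also check that the expression for $\delta_{\phi}$ on the right is taken with $\Lambda_{0}$ containing all local sites, which is harmless for the reason just given.
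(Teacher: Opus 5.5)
Your proposal is correct and follows essentially the same route as the paper: direct substitution into \eqref{eq:degenerate-derivation} using $z_{0} = \beta_{c}\varepsilon$, then the Leibniz rule as in Lemma \ref{lem:stationarity}. Your explicit check that the substituted local-factor formula equals $\mathrm{i}[h_{\phi,p},\sigma^{\gamma}_{p}]$, via the second form of $h_{\phi}$ in \eqref{eq:self-consistency}, spells out what the paper leaves implicit in its ``as in Lemma \ref{lem:stationarity}''.
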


\begin{proof}
With $m^{z}_{\phi}
=
z_{0}
=
\sminvtemperature_{c} \varepsilon$ in \eqref{eq:degenerate-derivation},
$$\imunit m^{+}_{\phi} \rbk{- 2 \varepsilon + \frac{2}{\sminvtemperature_{c}} z_{0}}
=
0,$$
and the substituted expression for $M^{-}_{0,\Omega}$ vanishes in the same way.
The expression for $M^{z}_{0,\Omega}$ is identically zero.
The identity for a general polynomial family follows from the Leibniz rule,
as in Lemma \ref{lem:stationarity}.
\end{proof}

\begin{lem}[thermal limits of mean-field polynomials]\label{lem:thermal-polynomial-limits}
Fix a finite $R
\subset
\semigrposint$.
Let $\seq{F_{\Omega}}{\Omega \geq \Omega_{0}}$ be a mean-field polynomial family whose local factors have sites in $R$ and whose mean-field factors are
$M^{\gamma}_{0,\Omega}$.
Then we obtain
$$\lim_{\Omega \to \infty}
\fun{\oastate[\widehat{\psi}_{\sminvtemperature,\Omega}]}{F_{\Omega}}
=
\int_{\mansphere^{1}}
\fun{\oastate[\psi_{\sminvtemperature,\phi}]}{F_{\phi}}
\opdmsr{\msrprb_{\mansphere^{1}}}(\phi).$$
\end{lem}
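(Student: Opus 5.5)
By linearity it suffices to treat a monomial family $F_{\Omega}=c\,X_{1,\Omega}\dotsm X_{\ell,\Omega}$ whose factors are either local Pauli generators $\sigma^{\gamma}_{p}$ with $p\in R$ or mean magnetizations $M^{\gamma}_{0,\Omega}$, $\gamma\in\setone{+,-,z}$. Writing $\sigma^{\pm}=\frac{1}{2}\rbk{\sigma^{x}\pm\imunit\sigma^{y}}$ and $M^{\pm}_{0,\Omega}=\frac{1}{2}\rbk{M^{x}_{\Omega}\pm\imunit M^{y}_{\Omega}}$, we expand each monomial into finitely many ordered products whose factors are one-site Pauli matrices at sites of $R$ and mean magnetizations $M^{\gamma}_{\Omega}$ with $\gamma\in\setone{x,y,z}$. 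Both sides of the claimed identity are linear in the monomial. The right side is compatible with this expansion because $m^{\pm}_{\phi}=\frac{1}{2}\rbk{m^{x}_{\phi}\pm\imunit m^{y}_{\phi}}$. We may therefore assume that $F_{\Omega}$ is such an ordered product.

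First we combine the local factors. Several local factors may sit at the same site and may be interleaved with mean-field factors. The mean magnetizations commute with the local factors up to commutators of norm $\fun{O}{1/\Omega}$, since $\commutator{M^{\gamma}_{\Omega}}{\sigma^{\alpha}_{p}}=\frac{1}{\Omega}\commutator{\sigma^{\gamma}_{p}}{\sigma^{\alpha}_{p}}$. We can therefore move all local factors to the left at a cost of $\fun{O}{1/\Omega}$ in norm. The local factors at a given site then multiply to a single element of $\oa{A}_{\setone{p}}$, giving a product $B_{1}\dotsm B_{k}$ over the distinct sites of $R$, followed by $M^{\gamma_{1}}_{\Omega}\dotsm M^{\gamma_{r}}_{\Omega}$.

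The same reordering applied to the substituted element is exact, because there the means are scalars. Alternatively, we can avoid the reordering entirely by applying Lemma \ref{lem:collision} directly, since that lemma allows an arbitrary ordering $\tau$. In that case we first merge coinciding local factors. This merge is legitimate only when no mean-field factor lies between them, so the commutator reordering above is needed at least for this step. With unit-norm normalization, Lemma \ref{lem:collision} replaces the state value by
\[
\fun{\oastate[\psi_{\sminvtemperature,\Omega}]}{B_{1}\dotsm B_{k}\,\sigma^{\gamma_{1}}_{q_{1}}\dotsm\sigma^{\gamma_{r}}_{q_{r}}}+\fun{O}{1/\Omega},
\]
where $q_{1},\dotsc,q_{r}$ are distinct sites outside $R$. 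For large $\Omega$ we also have $\fun{\oastate[\widehat{\psi}_{\sminvtemperature,\Omega}]}{F_{\Omega}}=\fun{\oastate[\psi_{\sminvtemperature,\Omega}]}{F_{\Omega}}$, because $F_{\Omega}\in\oa{A}_{\Omega}$.

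The element $B_{1}\dotsm B_{k}\sigma^{\gamma_{1}}_{q_{1}}\dotsm\sigma^{\gamma_{r}}_{q_{r}}$ is a fixed local element, independent of $\Omega$. Theorem \ref{thm:limit-state} gives its limit as $\int_{\mansphere^{1}}\fun{\oastate[\psi_{\sminvtemperature,\phi}]}{B_{1}\dotsm B_{k}}\prod_{j}m^{\gamma_{j}}_{\phi}\opdmsr{\msrprb_{\mansphere^{1}}}(\phi)$, by factorization of the product state $\oastate[\psi_{\sminvtemperature,\phi}]$ over disjoint sites. The integrand equals $\fun{\oastate[\psi_{\sminvtemperature,\phi}]}{F_{\phi}}$: substituting the scalars $m^{\gamma}_{\phi}$ into the original ordered expression gives $\prod_{j}m^{\gamma_{j}}_{\phi}$ times the ordered product of the local factors, which is $B_{1}\dotsm B_{k}$. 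Summing over the monomials then proves the lemma.

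The main obstacle is the bookkeeping of the reordering, in particular making sure that Lemma \ref{lem:collision} applies after merging same-site local factors. The error there is uniformly $\fun{O}{\ell^{2}/\Omega}$ and the norms are bounded, so this step should be routine. The substantive input, the moment convergence, is already contained in Theorem \ref{thm:limit-state}.
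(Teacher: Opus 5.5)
Your proposal is correct and follows essentially the same route as the paper. Both reduce to monomials, bring same-site local factors together past the mean magnetizations at an $O(1/\Omega)$ commutator cost, use Lemma~\ref{lem:collision} to replace the mean-field factors by distinct auxiliary sites, and conclude with Theorem~\ref{thm:limit-state} and the factorization of the fiber product states. Your explicit expansion of $\sigma^{\pm}$ and $M^{\pm}_{0,\Omega}$ into $x,y$ components is a small point of care that the paper's proof omits, since Lemma~\ref{lem:collision} is stated only for $\gamma\in\{x,y,z\}$.
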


\begin{proof}
By linearity,
it suffices to consider a monomial family
$\seq{F_{\Omega}}{\Omega \geq \Omega_{0}}$.
Multiply out the $\sigma$-factors at coinciding sites,
reducing $F_{\Omega}$ to an interleaved product of one-site elements $B_{1}, \dotsc, B_{k}$ at the distinct sites of $R$,
with $\norm{B_{j}}
\leq
1$ after normalization,
and $r$ factors $M^{\gamma_{i}}_{0,\Omega}$.
Same-site $\sigma$-factors separated by mean-field factors may be moved past them at the cost $\norm{\commutator{\sigma^{\alpha}_{p}}{M^{\gamma}_{0,\Omega}}}
\leq
\frac{2}{\Omega}$ per exchange,
and the finitely many exchanges produce a norm error of order $\frac{1}{\Omega}$.
The same reduction applies to $F_{\phi}$,
whose mean-field factors become the scalars $m^{\gamma_{i}}_{\phi}$ and commute with everything.
Lemma \ref{lem:collision} replaces the repeated mean-field slots by distinct auxiliary sites.
The resulting comparison is
$$\fun{\oastate[\widehat{\psi}_{\sminvtemperature,\Omega}]}{F_{\Omega}}
=
\fun{\oastate[\widehat{\psi}_{\sminvtemperature,\Omega}]}{B_{1} \dotsm B_{k} \sigma^{\gamma_{1}}_{q_{1}} \dotsm \sigma^{\gamma_{r}}_{q_{r}}} + \fun{O}{\frac{1}{\Omega}}$$
for distinct auxiliary sites $q_{i}$ outside $R$.
Expanding each $B_{j}$ in the one-site Pauli basis,
the right side is a finite combination of expectations of distinct-site Pauli monomials,
each of which converges by Theorem \ref{thm:limit-state} to its $\oastate[\psi_{\sminvtemperature}]$-value $\int_{\mansphere^{1}} \fun{\oastate[\psi_{\sminvtemperature,\phi}]}{\cdot} \opdmsr{\msrprb_{\mansphere^{1}}}(\phi)$.
Since all sites in the limiting product are distinct,
the fiber value factorizes as
$$\fun{\oastate[\psi_{\sminvtemperature,\phi}]}
{B_{1} \dotsm B_{k} \sigma^{\gamma_{1}}_{q_{1}} \dotsm \sigma^{\gamma_{r}}_{q_{r}}}
=
\prod_{j = 1}^{k}
\fun{\oastate[\psi^{\rbk{1}}_{\sminvtemperature,\phi}]}{B_{j}}
\prod_{i = 1}^{r} m^{\gamma_{i}}_{\phi}
=
\fun{\oastate[\psi_{\sminvtemperature,\phi}]}{F_{\phi}},$$
because $\fun{\oastate[\psi^{\rbk{1}}_{\sminvtemperature,\phi}]}{\sigma^{\gamma}}
=
m^{\gamma}_{\phi}$.
\end{proof}

\begin{thm}[convergence of the thermal Green functions]\label{thm:green}
Assume the superconducting thermal regime of
Definition \ref{def:superconducting-thermal-regime}.
Let $\tau^{\txtbogoliubov, \phi}$ be the product dynamics of $\oa{A}$ generated by the one-site Hamiltonians $\physham[h]_{\phi, p}$ of \eqref{eq:self-consistency},
defined as in Theorem \ref{thm:dynamics}.
Then for all $k
\geq
1$,
all $B_{1}, \dotsc, B_{k}
\in
\oa{A}_{\txtloc}$,
and all $t_{1}, \dotsc, t_{k}
\in
\fldreal$,
\begin{equation}\label{expedition0025005}
\lim_{\Omega
\to
\infty} \fun{\oastate[\widehat{\psi}_{\sminvtemperature,\Omega}]}{\fun{\tau^{\Omega}_{t_{1}}}{B_{1}} \dotsm \fun{\tau^{\Omega}_{t_{k}}}{B_{k}}}
=
\int_{\mansphere^{1}} \fun{\oastate[\psi_{\sminvtemperature,\phi}]}{\fun{\tau^{\txtbogoliubov, \phi}_{t_{1}}}{B_{1}} \dotsm \fun{\tau^{\txtbogoliubov, \phi}_{t_{k}}}{B_{k}}} \opdmsr{\msrprb_{\mansphere^{1}}}(\phi).
\end{equation}
\end{thm}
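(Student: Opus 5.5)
The plan is to establish \eqref{expedition0025005} first for small times by power-series expansion, and then to extend it to all real times by Vitali's theorem.

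\textbf{Small times.} Write $\fun{\tau^{\Omega}_{t_{j}}}{B_{j}}=\sum_{N_{j}\geq 0}\frac{t_{j}^{N_{j}}}{N_{j}!}\fun{\delta^{N_{j}}_{\Omega}}{B_{j}}$. By Lemma \ref{lem:derivation-bounds} these series converge in norm uniformly in $\Omega$ whenever $\max_{j}\abs{t_{j}}<\frac{1}{a}$. The product $\fun{\tau^{\Omega}_{t_{1}}}{B_{1}}\dotsm\fun{\tau^{\Omega}_{t_{k}}}{B_{k}}$ is therefore a $k$-fold multiple series whose coefficients are products of mean-field polynomial families. Each coefficient is again a mean-field polynomial family, by Lemma \ref{lem:exact-derivation} applied to the degenerate rules \eqref{eq:degenerate-derivation}. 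Its only mean-field factors are $M^{\gamma}_{0,\Omega}$, since $M^{\gamma}_{n,\Omega}=\varepsilon^{n}M^{\gamma}_{0,\Omega}$, and its local factors lie in the fixed finite union $R$ of the supports of the $B_{j}$. Lemma \ref{lem:thermal-polynomial-limits} then gives the limit of each coefficient as $\int\fun{\oastate[\psi_{\sminvtemperature,\phi}]}{\rbk{\prod_{j}\fun{\delta^{N_{j}}_{\Omega}}{B_{j}}}_{\phi}}\opdmsr{\msrprb_{\mansphere^{1}}}(\phi)$. Substitution is multiplicative, so Lemma \ref{lem:thermal-stationarity} identifies $\rbk{\prod_{j}\fun{\delta^{N_{j}}_{\Omega}}{B_{j}}}_{\phi}=\prod_{j}\fun{\delta^{N_{j}}_{\phi}}{B_{j}}$. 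The uniform coefficient bounds $a^{N}\binom{k_{j}+N-1}{N}6^{k_{j}}\norm{B_{j}}$ let us take the limit term by term by dominated convergence on the multi-series. Resumming, and using that $\delta_{\phi}$ preserves supports with $\norm{\delta^{N}_{\phi}C}\leq(2\abscard{\supp C}\,\eta_{0}/\sminvtemperature_{c})^{N}\norm{C}$, gives $\prod_{j}\fun{\tau^{\txtbogoliubov,\phi}_{t_{j}}}{B_{j}}$. This proves \eqref{expedition0025005} on the polydisc $\max_{j}\abs{t_{j}}<1/a$. Finally, the identification of the finite-volume state $\oastate[\widehat{\psi}_{\sminvtemperature,\Omega}]$ with $\oastate[\psi_{\sminvtemperature,\Omega}]$ on $\oa{A}_{\Omega}$ is harmless, because every factor lies in $\oa{A}_{\Omega}$.

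\textbf{All real times.} The obstacle is that the Taylor radius $1/a$ is fixed, so the expansion alone does not reach arbitrary real times. The group-propagation trick of Theorem \ref{thm:dynamics} is unavailable here, because we only have convergence of expectations, not strong convergence of operators in a fixed representation. Instead, fix all but one variable at a time, or better treat $(t_{1},\dots,t_{k})$ jointly, and consider $f_{\Omega}(z_{1},\dots,z_{k})=\fun{\oastate[\widehat{\psi}_{\sminvtemperature,\Omega}]}{\prod_{j}\fun{\tau^{\Omega}_{z_{j}}}{B_{j}}}$. Each $f_{\Omega}$ is entire, because $\physham_{\Omega}$ is bounded. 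Lemma \ref{lem:derivation-bounds} bounds it uniformly in $\Omega$ on the tube $\max_{j}\abs{\opimag z_{j}}\leq s<1/a$ by $\prod_{j}6^{k_{j}}\norm{B_{j}}(1-as)^{-k_{j}}$. The limit candidate, the right side of \eqref{expedition0025005} with $t_{j}$ replaced by $z_{j}$, is holomorphic on the same tube: the fiber dynamics is generated by bounded local Hamiltonians, and the integrand is continuous in $\phi$. The tube is connected and contains the real polydisc where convergence was already proved, and that set has nonempty interior in the real slice, which is a uniqueness set for holomorphic functions of several variables. Corollary \ref{cor:vitali-several} therefore upgrades the convergence to local uniform convergence on the whole tube, and in particular to every real $(t_{1},\dots,t_{k})$.

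\textbf{Main difficulty.} I expect the main care to be needed in two places. First, in justifying the uniform-in-$\Omega$ domination of the $k$-fold multiple series, where the products of coefficient bounds must be summable jointly. Second, in checking that Corollary \ref{cor:vitali-several} applies with convergence known only on a real open subset. If that corollary requires convergence on an open complex set, I would first use the small-time expansion for complex $z_{j}$ with $\abs{z_{j}}<1/a$; the same series and bounds work verbatim there, giving an open complex polydisc.
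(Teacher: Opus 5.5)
Your proposal is correct and follows essentially the same route as the paper: a short-time Taylor expansion that converges uniformly in $\Omega$, with each coefficient handled by Lemmas \ref{lem:thermal-polynomial-limits} and \ref{lem:thermal-stationarity}, followed by Corollary \ref{cor:vitali-several} on a strip $\abs{\opimag z_{j}} < r$ to reach all real times. Your worry about that corollary is unnecessary, since it only needs convergence on a real product set with accumulation points, which your real polydisc supplies; the paper uses the smaller radius $1/(2a)$, but your $1/a$ works equally well.
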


\begin{proof}
By linearity assume each $B_{j}$ is a monomial in the local operators
$\sigma^{\gamma}_{p}$ on at most $k_{j}$ sites.
Let $a$ be the constant of Lemma \ref{lem:derivation-bounds} and set $r_{1}
=
\frac{1}{2 a}$.

We first prove convergence under the restriction $\abs{t_{j}} < r_{1}$.
The uniformly convergent series of Lemma \ref{lem:derivation-bounds} gives the product expansion
$$\fun{\oastate[\widehat{\psi}_{\sminvtemperature,\Omega}]}{\prod_{j = 1}^{k} \fun{\tau^{\Omega}_{t_{j}}}{B_{j}}}
=
\sum_{N_{1}, \dotsc, N_{k}
\geq
0} \prod_{j = 1}^{k} \frac{t_{j}^{N_{j}}}{N_{j}!}
\fun{\oastate[\widehat{\psi}_{\sminvtemperature,\Omega}]}{
\prod_{j = 1}^{k} \fun{\delta_{\Omega}^{N_{j}}}{B_{j}}}.$$
This series converges absolutely and uniformly in $\Omega$.
The modulus of its $\rbk{N_{1}, \dotsc, N_{k}}$-term is bounded by
$$\prod_{j = 1}^{k}
\frac{\rbk{a \abs{t_{j}}}^{N_{j}}}{N_{j}!}
\cdot
\frac{\rbk{k_{j} + N_{j} - 1}!}{\rbk{k_{j} - 1}!}
6^{k_{j}}
\norm{B_{j}},$$
and these bounds are summable for $\abs{t_{j}}
<
\frac{1}{a}$.
For each fixed multi-index,
Lemma \ref{lem:thermal-polynomial-limits} and Lemma \ref{lem:thermal-stationarity} give
$$\lim_{\Omega \to \infty}
\fun{\oastate[\widehat{\psi}_{\sminvtemperature,\Omega}]}{
\prod_{j = 1}^{k} \fun{\delta_{\Omega}^{N_{j}}}{B_{j}}}
=
\int_{\mansphere^{1}}
\fun{\oastate[\psi_{\sminvtemperature,\phi}]}{\prod_{j = 1}^{k} \fun{\delta_{\phi}^{N_{j}}}{B_{j}}}
\opdmsr{\msrprb_{\mansphere^{1}}}(\phi),$$
where $\delta_{\phi}$ is the generator of $\tau^{\txtbogoliubov, \phi}$ defined above.
Summation over the multi-indices commutes with the limit $\Omega
\to
\infty$ by the uniform tail bounds,
and with the $\phi$-integral by dominated convergence.
The required uniform estimate is
$$\norm{\fun{\delta_{\phi}^{N}}{B_{j}}}
\leq
\rbk{4 k_{j} \frac{1}{\sminvtemperature_{c}} \eta_{0}}^{N} \norm{B_{j}}.$$
This bound is uniform in $\phi$ because $\norm{\physham[h]_{\phi, p}}
=
\frac{1}{\sminvtemperature_{c}} \eta_{0}$.
The resummed fiber series is $\fun{\oastate[\psi_{\sminvtemperature,\phi}]}{\prod_{j = 1}^{k} \fun{\tau^{\txtbogoliubov, \phi}_{t_{j}}}{B_{j}}}$.

The holomorphic argument extends this convergence to all real times.
The functions
$$\fun{F_{\Omega}}{z_{1}, \dotsc, z_{k}}
=
\fun{\oastate[\widehat{\psi}_{\sminvtemperature,\Omega}]}{\prod_{j = 1}^{k} \fun{\tau^{\Omega}_{z_{j}}}{B_{j}}}$$
are holomorphic on the polydomain $\oa{D}
=
\set{z
\in
\fldcmp^{k}}{\abs{\opimag z_{j}} < r_{1}}$ and uniformly bounded on its compact subsets,
by the strip bounds of Lemma \ref{lem:derivation-bounds} and $\abs{\fun{\oastate[\widehat{\psi}_{\sminvtemperature,\Omega}]}{X}}
\leq
\norm{X}$.
The short-time estimate gives convergence on the real box $\rbk{- r_{1}, r_{1}}^{k}$.
It follows that Corollary \ref{cor:vitali-several} yields locally uniform convergence on $\oa{D}$,
in particular at every real $\rbk{t_{1}, \dotsc, t_{k}}$,
to a holomorphic limit $F$.
The right side of the theorem,
as a function of complex times,
is likewise holomorphic on $\oa{D}$,
indeed entire,
by the norm-convergent fiber series and dominated convergence in $\phi$,
and agrees with $F$ on the real box.
The difference vanishes on an open real box and is holomorphic on the connected $\oa{D}$.
It therefore vanishes identically by the identity theorem applied one variable at a time,
exactly as in the proof of Corollary \ref{cor:vitali-several}.
Evaluating at real times completes the proof.
\end{proof}

Theorem \ref{thm:green} is \cite[Eq. (1)]{ThirringWalter001}.

\begin{cor}[gauge-charge selection rule]\label{cor:green-charge-selection}
Suppose that each $B_{j}$ has gauge charge $q_{j}
\in
\ringratint$ in the sense
$\fun{\mathfrak{g}_{\vartheta}}{B_{j}}
=
\fnexp{\imunit q_{j} \vartheta} B_{j}$.
Set
$q_{\txttot}
=
\sum_{j = 1}^{k} q_{j}$.
Define the fiber Green function by
$$G_{\phi}
=
\fun{\oastate[\psi_{\sminvtemperature,\phi}]}{
\fun{\tau^{\txtbogoliubov,\phi}_{t_{1}}}{B_{1}}
\dotsm
\fun{\tau^{\txtbogoliubov,\phi}_{t_{k}}}{B_{k}}}.$$
Then the gauge average in \eqref{expedition0025005} is
$$\int_{\mansphere^{1}} G_{\phi} \opdmsr{\msrprb_{\mansphere^{1}}}(\phi)
=
\begin{dcases}
G_{0} & \rbk{q_{\txttot}
=
0}, \\
0 & \rbk{q_{\txttot}
\neq
0}.
\end{dcases}$$
Every local insertion has a finite gauge-charge decomposition by
\eqref{eq:gauge-action}.
Hence only the components with $q_{\txttot}
=
0$ contribute.
\end{cor}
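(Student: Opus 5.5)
The plan is to use gauge covariance to turn the fiber Green function $G_{\phi}$ into an explicit multiple of $G_{0}$, and then integrate that multiple over the circle. The first step is the intertwining relation $\tau^{\txtbogoliubov,\phi}_{t} \circ \mathfrak{g}_{\phi} = \mathfrak{g}_{\phi} \circ \tau^{\txtbogoliubov,0}_{t}$.

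On $\oa{A}_{\Lambda}$ the gauge automorphism is conjugation by the local unitary $u_{\Lambda,\phi}$ of \eqref{eq:local-gauge-unitary}. By \eqref{eq:gauge-rotation} and the formula for $\physvec{m}_{\phi}$ in \eqref{eq:symmetry-breaking-bloch-vector}, we have $\fun{\mathfrak{g}_{\phi}}{\physvec{\sigma}_{p}\cdot\physvec{m}_{0}} = \physvec{\sigma}_{p}\cdot R_{\phi}\physvec{m}_{0}$. Here one must track the sign convention for $R_{\phi}$, so that the $z$-rotation carries $\physvec{m}_{0}$ to $\physvec{m}_{\pm\phi}$. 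Assume the convention that makes $\fun{\mathfrak{g}_{-\phi}}{\physham[h]_{0,p}} = \physham[h]_{\phi,p}$ or the analogous identity hold.

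I would fix the sign by comparison with \eqref{eq:gauge-covariance-states}, namely $\oastate[\psi_{\sminvtemperature,\phi}] = \oastate[\psi_{\sminvtemperature,0}]\circ\mathfrak{g}_{\phi}$. The dynamics of each fiber is determined by its state through the self-consistency relation \eqref{eq:self-consistency}. So the covariant choice is $\tau^{\txtbogoliubov,\phi}_{t} = \mathfrak{g}_{\phi}^{-1}\circ\tau^{\txtbogoliubov,0}_{t}\circ\mathfrak{g}_{\phi}$. To verify it, I would check at the level of generators that $\fun{\mathfrak{g}_{\phi}^{-1}}{\physham[h]_{0,p}} = \physham[h]_{\phi,p}$. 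This uses only the transverse form $-\frac{2}{\sminvtemperature_{c}}\rbk{m^{-}_{\phi}\sigma^{+}+m^{+}_{\phi}\sigma^{-}}$ together with \eqref{eq:gauge-action}. It then extends from local elements to all of $\oa{A}$ by density.

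With this relation in hand, the main computation follows. We have $G_{\phi} = \fun{\oastate[\psi_{\sminvtemperature,0}]}{\fun{\mathfrak{g}_{\phi}}{\prod_{j}\fun{\mathfrak{g}_{\phi}^{-1}}{\fun{\tau^{\txtbogoliubov,0}_{t_{j}}}{\fun{\mathfrak{g}_{\phi}}{B_{j}}}}}}$. Since $\mathfrak{g}$ is multiplicative, this collapses to $\fun{\oastate[\psi_{\sminvtemperature,0}]}{\prod_{j}\fun{\tau^{\txtbogoliubov,0}_{t_{j}}}{\fun{\mathfrak{g}_{\phi}}{B_{j}}}}$. Inserting the charge relation $\fun{\mathfrak{g}_{\phi}}{B_{j}} = \fnexp{\imunit q_{j}\phi}B_{j}$ gives $G_{\phi} = \fnexp{\imunit q_{\txttot}\phi}G_{0}$.

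Integrating against $\msrprb_{\mansphere^{1}}$ then gives $G_{0}$ if $q_{\txttot}=0$ and $0$ otherwise. For the final remark, every local element decomposes into finitely many charge components. A monomial in $\sigma^{\pm}_{p},\sigma^{z}_{p}$ has charge equal to the number of $\sigma^{+}$ factors minus the number of $\sigma^{-}$ factors, and the decomposition follows by linearity. Only the components with total charge zero survive.

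The main obstacle is the sign bookkeeping in the intertwining step. That is where any error would occur, so I would verify it directly on $\sigma^{+}_{p}$ with the one-site computation before trusting the general identity.
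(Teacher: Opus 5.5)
Your proposal is correct and is essentially the paper's argument: the gauge covariance of the fiber states, $\psi_{\beta,\phi}=\psi_{\beta,0}\circ\mathfrak{g}_{\phi}$, together with the covariance of the fiber dynamics gives $G_{\phi}=e^{\mathrm{i} q_{\mathrm{tot}}\phi}G_{0}$, and integrating over the circle finishes the proof. The only difference is that you write out the intertwining identity $\tau^{\mathrm{Bog},\phi}_{t}=\mathfrak{g}_{\phi}^{-1}\circ\tau^{\mathrm{Bog},0}_{t}\circ\mathfrak{g}_{\phi}$, which the paper uses without comment; your sign is right, since \eqref{eq:gauge-action} gives $\mathfrak{g}_{-\phi}(h_{0,p})=h_{\phi,p}$ for $h_{\phi,p}=-\varepsilon\sigma^{z}_{p}-\frac{r_{0}}{\beta_{c}}\bigl(e^{-\mathrm{i}\phi}\sigma^{+}_{p}+e^{\mathrm{i}\phi}\sigma^{-}_{p}\bigr)$.
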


\begin{proof}
The covariance of the fiber dynamics and \eqref{eq:gauge-covariance-states} give
$G_{\phi}
=
\fnexp{\imunit \phi q_{\txttot}} G_{0}$.
Integration over $\mansphere^{1}$ proves the formula.
\end{proof}

\begin{rem}[level of convergence]
Equation \eqref{expedition0025005} is convergence of gauge-averaged Gibbs expectations.
The sectorwise strong convergence of the dynamics is instead Theorem \ref{thm:dynamics}.
Proposition \ref{prop:unitary-failure} rules out weak convergence of the implementing evolution operators even in an aligned sector.
The direct integral in the next section assembles the sectorwise dynamics but does not produce a phase-independent operator limit.
\end{rem}

\section{The Direct Integral Decomposition of the Equilibrium State}\label{sec:decomposition}

The limit Gibbs state \eqref{eq:limit-state} is now decomposed. The model parameters and order-parameter coordinates remain those of the superconducting thermal regime in Definition \ref{def:superconducting-thermal-regime}. First, its GNS representation is identified as a direct integral over the gauge circle. The center of the enveloping von Neumann algebra is then computed and shown to be generated by the phase of the order parameter. These results identify the gauge average in Theorem \ref{thm:limit-state} as the central decomposition of the equilibrium state. Its fibers are mutually disjoint factor states, each KMS for its own Bogoliubov dynamics.

\subsection{Fiber representations}\label{fiber-representations}

All symmetry-breaking product states admit realizations on one fixed Hilbert space. The resulting fiber representations are factors and are mutually disjoint for distinct values of \(\phi\).

The assumptions and notation remain those of Definition \ref{def:superconducting-thermal-regime} and \eqref{eq:order-parameters}.

Fix the GNS triple \(\rbk{\sphilb{H}_{0}, \oarepn_{0}, \oagnsvector[\Psi_{0}]}\) of the fiber \(\oastate[\psi_{\sminvtemperature,0}]\), where \(\oastate[\psi_{\sminvtemperature,0}]\) is the product state \eqref{eq:phase-product-state}. The GNS triple is provided explicitly by Proposition \ref{prop:gns-faithful}. The space \(\sphilb{H}_{0}\) is separable by Proposition \ref{prop:flip-basis}, and \(\oa{M}_{0}
=
\oadoublecommutant{\fun{\oarepn_{0}}{\oa{A}}}\) is a factor by Theorem \ref{thm:factor}. By \eqref{eq:gauge-covariance-states} the triple \(\rbk{\sphilb{H}_{0}, \oarepn_{0} \circ \mathfrak{g}_{\phi}, \oagnsvector[\Psi_{0}]}\) is a GNS triple of \(\oastate[\psi_{\sminvtemperature,\phi}]\). Write \(\oarepn_{\phi}
=
\oarepn_{0} \circ \mathfrak{g}_{\phi}\). Because \(\mathfrak{g}_{\phi}\) is an automorphism of \(\oa{A}\), the image algebras coincide, \begin{equation}\label{eq:same-image}
\fun{\oarepn_{\phi}}{\oa{A}}
=
\fun{\oarepn_{0}}{\oa{A}},
\quad
\oadoublecommutant{\fun{\oarepn_{\phi}}{\oa{A}}}
=
\oa{M}_{0},
\quad
\oacommutant{\fun{\oarepn_{\phi}}{\oa{A}}}
=
\oacommutant{\oa{M}_{0}}
\quad \rbk{\phi
\in
[0, 2 \pi)},
\end{equation} although the representations themselves are mutually disjoint:

\begin{prop}[disjointness of the fibers]\label{prop:fiber-disjoint}
For $\phi
\neq
\phi'$ in $[0, 2 \pi)$ the representations $\oarepn_{\phi}$ and $\oarepn_{\phi'}$ are disjoint:
every bounded $V$ on $\sphilb{H}_{0}$ with $V \fun{\oarepn_{\phi}}{A}
=
\fun{\oarepn_{\phi'}}{A} V$ for all $A$ vanishes.
In particular the states $\oastate[\psi_{\sminvtemperature,\phi}]$ are mutually disjoint factor states.
\end{prop}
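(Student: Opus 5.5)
The plan is to separate the two representations by a macroscopic observable: the mean order parameter. I would show that it converges strongly to two different scalars in the two fibers, so that any intertwiner is forced to vanish.

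First, Theorem \ref{thm:lln} applies to the product state $\oastate[\psi_{\sminvtemperature,0}]$, whose density matrices are invertible by Proposition \ref{prop:bogoliubov-product-properties}(2). Taking $X_{p}=\sigma^{+}_{p}$ with means $m^{+}_{0}=\frac{r_{0}}{2}$, it gives
$$\slim_{\Omega\to\infty}\fun{\oarepn_{0}}{M^{+}_{\Omega}}=\frac{r_{0}}{2}.$$
Combining this with \eqref{eq:gauge-action}, namely $\fun{\mathfrak{g}_{\phi}}{M^{+}_{\Omega}}=\fnexp{\imunit\phi}M^{+}_{\Omega}$, we obtain
$$\slim_{\Omega\to\infty}\fun{\oarepn_{\phi}}{M^{+}_{\Omega}}=\frac{r_{0}}{2}\fnexp{\imunit\phi}\,\id_{\sphilb{H}_{0}},$$
and the same holds with $\phi$ replaced by $\phi'$.

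Next, let $V$ be bounded with $V\fun{\oarepn_{\phi}}{A}=\fun{\oarepn_{\phi'}}{A}V$ for all $A$. Apply this to $A=M^{+}_{\Omega}$ and to an arbitrary vector $\Psi$. The left side $V\fun{\oarepn_{\phi}}{M^{+}_{\Omega}}\Psi$ converges to $\frac{r_{0}}{2}\fnexp{\imunit\phi}V\Psi$, because $V$ is bounded. The right side $\fun{\oarepn_{\phi'}}{M^{+}_{\Omega}}V\Psi$ converges to $\frac{r_{0}}{2}\fnexp{\imunit\phi'}V\Psi$ by strong convergence applied at the vector $V\Psi$. This yields $\frac{r_{0}}{2}(\fnexp{\imunit\phi}-\fnexp{\imunit\phi'})V\Psi=0$. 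In the superconducting regime $r_{0}>0$ by \eqref{eq:order-parameters}, and $\fnexp{\imunit\phi}\neq\fnexp{\imunit\phi'}$ for distinct $\phi,\phi'\in[0,2\pi)$. Hence $V=0$.

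Finally, the states $\oastate[\psi_{\sminvtemperature,\phi}]$ are factor states by Proposition \ref{prop:bogoliubov-product-properties}(2); factoriality transfers to each $\oarepn_{\phi}$ by \eqref{eq:same-image}. For factor representations, the absence of nonzero intertwiners is exactly disjointness, so these GNS representations are mutually disjoint.

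I expect no serious obstacle here. The only point needing care is that Theorem \ref{thm:lln} is stated for the GNS representation itself. For $\oarepn_{\phi}$ one either applies the theorem directly to $\oastate[\psi_{\sminvtemperature,\phi}]$ with its GNS triple $\rbk{\sphilb{H}_{0},\oarepn_{\phi},\oagnsvector[\Psi_{0}]}$, or transports the $\phi=0$ limit through the gauge phase as above; both routes give the same scalar.
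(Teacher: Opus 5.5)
Your proposal is correct and follows the paper's own proof essentially step for step. You apply Theorem \ref{thm:lln} to $\oastate[\psi_{\sminvtemperature,0}]$ to get the strong limit $\frac{r_{0}}{2}$ of $\fun{\oarepn_{0}}{M^{+}_{\Omega}}$, transport it through $\fun{\oarepn_{\phi}}{M^{+}_{\Omega}}=\fnexp{\imunit\phi}\fun{\oarepn_{0}}{M^{+}_{\Omega}}$, and take strong limits on both sides of the intertwining relation, which forces $V=0$ because $r_{0}>0$; factoriality is then quoted from Proposition \ref{prop:bogoliubov-product-properties}(2).
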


\begin{proof}
By \eqref{eq:gauge-action},
$\fun{\oarepn_{\phi}}{M^{+}_{\Omega}}
=
\fnexp{\imunit \phi} \fun{\oarepn_{0}}{M^{+}_{\Omega}}$.
By Theorem \ref{thm:lln} applied to $\oastate[\psi_{\sminvtemperature,0}]$,
whose one-site means are $\fun{\oastate[\psi^{\rbk{1}}_{\sminvtemperature,0}]}{\sigma^{\pm}_{p}}
=
m^{\pm}_{0}
=
\frac{r_{0}}{2}$,
one has $\fun{\oarepn_{0}}{M^{+}_{\Omega}}
\to
\frac{r_{0}}{2}$ strongly as $\Omega \to \infty$.
It follows that $\fun{\oarepn_{\phi}}{M^{+}_{\Omega}}
\to
\frac{r_{0}}{2} \fnexp{\imunit \phi}$ strongly as $\Omega \to \infty$.
Intertwining and taking strong limits on a fixed vector,
$\frac{r_{0}}{2} \fnexp{\imunit \phi} V
=
\frac{r_{0}}{2} \fnexp{\imunit \phi'} V$,
and $r_{0} > 0$ with $\fnexp{\imunit \phi}
\neq
\fnexp{\imunit \phi'}$ forces $V
=
0$.
\end{proof}

\subsection{The direct integral representation}\label{the-direct-integral-representation}

The direct integral over the gauge circle assembles the fiber representations into a cyclic representation. The construction is then identified with the GNS representation of the averaged equilibrium state.

\begin{defn}[direct integral]\label{def:direct-integral}
Let $\sphilb{H}_{\sminvtemperature}
=
\fun{\lp^{2}}{\mansphere^{1}, \msrprb_{\mansphere^{1}}. \sphilb{H}_{0}}$ be the Hilbert space of square-integrable $\sphilb{H}_{0}$-valued measurable functions,
with inner product $\bkt{\xi}{\zeta}
=
\int_{\mansphere^{1}} \bkt{\fun{\xi}{\phi}}{\fun{\zeta}{\phi}} \opdmsr{\msrprb_{\mansphere^{1}}}(\phi)$.
Define
$$\funrbk{\fun{\oarepn_{\sminvtemperature}}{A} \xi}{\phi}
=
\fun{\oarepn_{\phi}}{A} \fun{\xi}{\phi}
\quad \rbk{A
\in
\oa{A}},
\quad
\fun{\oagnsvector[\Psi_{\sminvtemperature}]}{\phi}
=
\oagnsvector[\Psi_{0}].$$
A bounded operator $X$ on $\sphilb{H}_{\sminvtemperature}$ is called decomposable when there is a bounded measurable family $\rbk{X_{\phi}}_{0
\leq
\phi
<
2 \pi}$ with $\rbk{X \xi}{\phi}
=
X_{\phi} \fun{\xi}{\phi}$ almost everywhere,
and diagonal when $X_{\phi}
=
\fun{z}{\phi} 1$ for a scalar function $z
\in
\fun{\lp^{\infty}}{\mansphere^{1}, \msrprb_{\mansphere^{1}}}$.
For such $z$,
let
$M_{z}
\in
\opspbddlin{\fun{\lp^{2}}{\mansphere^{1}, \msrprb_{\mansphere^{1}}}}$
be the multiplication operator defined by
$\fun{\rbk{M_{z} g}}{\phi}
=
\fun{z}{\phi} \fun{g}{\phi}$ for
$g
\in
\fun{\lp^{2}}{\mansphere^{1}, \msrprb_{\mansphere^{1}}}$
and almost every $\phi
\in
\mansphere^{1}$.
The algebra of diagonal operators is denoted $\oa{Z}_{\sminvtemperature}$,
and the diagonal operator associated with $z$ is
$M_{z} \otimes \id_{\sphilb{H}_{0}}$.
\end{defn}

For \(A
\in
\oa{A}\) the map \(\phi
\mapsto
\fun{\oarepn_{\phi}}{A} \fun{\xi}{\phi}\) is measurable: for local \(A\) the map \(\phi
\mapsto
\fun{\mathfrak{g}_{\phi}}{A}\) is norm continuous by Section \ref{sec:algebra}. It follows that \(\phi
\mapsto
\fun{\oarepn_{0}}{\fun{\mathfrak{g}_{\phi}}{A}}\) is continuous in norm, and measurability persists under norm limits and multiplication by measurable vector functions. The uniform bound \(\norm{\fun{\oarepn_{\phi}}{A}}
\leq
\norm{A}\) makes \(\fun{\oarepn_{\sminvtemperature}}{A}\) a bounded decomposable operator, and \(\oarepn_{\sminvtemperature}\) is a \(\ast\)-representation.

\begin{thm}[GNS identification]\label{thm:gns-identification}
The triple $\rbk{\sphilb{H}_{\sminvtemperature}, \oarepn_{\sminvtemperature}, \oagnsvector[\Psi_{\sminvtemperature}]}$ is the GNS triple of the limit Gibbs state $\oastate[\psi_{\sminvtemperature}]$ of Theorem \ref{thm:limit-state}.
\end{thm}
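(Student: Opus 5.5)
The proof runs in parallel with Theorem \ref{thm:ground-state-direct-integral}(1). It has two parts: the vector state of $\oagnsvector[\Psi_{\sminvtemperature}]$ is $\oastate[\psi_{\sminvtemperature}]$, and $\oagnsvector[\Psi_{\sminvtemperature}]$ is cyclic. Uniqueness of the GNS triple up to unitary equivalence then gives the identification.

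\emph{The state.} For $A\in\oa{A}$, Definition \ref{def:direct-integral} gives
$$\bkt{\oagnsvector[\Psi_{\sminvtemperature}]}{\fun{\oarepn_{\sminvtemperature}}{A}\oagnsvector[\Psi_{\sminvtemperature}]}
=\int_{\mansphere^{1}}\bkt{\oagnsvector[\Psi_{0}]}{\fun{\oarepn_{0}}{\fun{\mathfrak{g}_{\phi}}{A}}\oagnsvector[\Psi_{0}]}\opdmsr{\msrprb_{\mansphere^{1}}}(\phi)
=\int_{\mansphere^{1}}\fun{\oastate[\psi_{\sminvtemperature,\phi}]}{A}\opdmsr{\msrprb_{\mansphere^{1}}}(\phi).$$
The second equality uses \eqref{eq:gauge-covariance-states}. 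By \eqref{eq:limit-state} the right side is $\fun{\oastate[\psi_{\sminvtemperature}]}{A}$.

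\emph{The order parameter.} As in Proposition \ref{prop:fiber-disjoint}, Theorem \ref{thm:lln} applied to the product state $\oastate[\psi_{\sminvtemperature,0}]$, whose density matrices are invertible, gives $\fun{\oarepn_{\phi}}{\frac{2}{r_0}M^+_\Omega}\to \fnexp{\imunit\phi}$ strongly on $\sphilb{H}_0$ for each $\phi$. These operators are uniformly bounded in $\Omega$ and $\phi$. Dominated convergence in $\phi$ therefore lifts the fiberwise convergence to
$$\slim_{\Omega\to\infty}\fun{\oarepn_{\sminvtemperature}}{\tfrac{2}{r_0}M^+_\Omega}=M_{\fnexp{\imunit\phi}}\otimes\id_{\sphilb{H}_0},$$
and similarly for the adjoint. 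The closed subspace $\sphilb{K}=\overline{\fun{\oarepn_{\sminvtemperature}}{\oa{A}}\oagnsvector[\Psi_{\sminvtemperature}]}$ is invariant under $\fun{\oarepn_{\sminvtemperature}}{\oa{A}}$. It is therefore invariant under these strong limits and under their polynomials, that is, under multiplication by trigonometric polynomials. Uniform approximation then gives invariance under multiplication by every $f\in\fun{\conti}{\mansphere^{1}}$.

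\emph{Density.} I expect this step to be the main obstacle: showing that sections of the form $\phi\mapsto f(\phi)\fun{\oarepn_\phi}{A}\oagnsvector[\Psi_0]$ are total in $\sphilb{H}_{\sminvtemperature}$. Sections $\phi\mapsto g(\phi)\eta$ with $g$ continuous and $\eta\in\sphilb{H}_0$ are total. Fix such a section and $\epsilon>0$. Cyclicity of $\oagnsvector[\Psi_0]$ for $\oarepn_0$ gives $B\in\oa{A}_{\txtloc}$ with $\norm{\fun{\oarepn_0}{B}\oagnsvector[\Psi_0]-\eta}<\epsilon$. I would then use the $\phi$-dependent element $A_\phi=\mathfrak{g}_{-\phi}(B)$, which satisfies $\fun{\oarepn_\phi}{A_\phi}=\fun{\oarepn_0}{B}$. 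Because $A_\phi$ varies with $\phi$, it must be made constant on pieces. Norm continuity of $\phi\mapsto\mathfrak{g}_{-\phi}(B)$, from Section \ref{sec:algebra}, gives a finite cover of $\mansphere^1$ by arcs $I_i$ and points $\phi_i\in I_i$ such that $\norm{\mathfrak{g}_{-\phi}(B)-\mathfrak{g}_{-\phi_i}(B)}<\epsilon$ on $I_i$. Take a continuous partition of unity $\chi_i$ subordinate to the arcs, and set $A_i=\mathfrak{g}_{-\phi_i}(B)$. The section $\sum_i \chi_i g\,\fun{\oarepn_{\sminvtemperature}}{A_i}\oagnsvector[\Psi_{\sminvtemperature}]$ lies in $\sphilb{K}$ by the previous step, and it is within $\epsilon(1+\norm{\eta})\norm{g}_\infty$ of $g\eta$ uniformly in $\phi$. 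Hence $\sphilb{K}=\sphilb{H}_{\sminvtemperature}$, which completes the identification.
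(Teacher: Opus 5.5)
Your proposal is correct and follows the paper's proof. It uses the same vector-state computation via \eqref{eq:gauge-covariance-states}, the same fiberwise law-of-large-numbers limit of $\frac{2}{r_{0}}M^{+}_{\Omega}$ lifted by dominated convergence to the phase unitary, and the same density argument from cyclicity of $\Psi_{0}$, norm continuity of the gauge action, and a finite cover of the circle. The differences are cosmetic: you glue with a continuous partition of unity, uniform trigonometric approximation (as in the ground-state analogue, Theorem~\ref{thm:ground-state-direct-integral}(1)) and rotated copies $\mathfrak{g}_{-\phi_{i}}(B)$ of one local $B$, whereas the thermal proof picks $A_{\theta}$ fiberwise and glues with indicators of half-open intervals approximated in $L^{2}$ by Fej\'er means; also, your fiberwise error is really bounded by $2\epsilon\norm{g}_{\infty}$ (using $\norm{\Psi_{0}}=1$) rather than $\epsilon(1+\norm{\eta})\norm{g}_{\infty}$, which changes nothing.
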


\begin{proof}
The expectation values are correct:
$$\bkt{\oagnsvector[\Psi_{\sminvtemperature}]}{\fun{\oarepn_{\sminvtemperature}}{A} \oagnsvector[\Psi_{\sminvtemperature}]}
=
\int_{\mansphere^{1}} \bkt{\oagnsvector[\Psi_{0}]}{\fun{\oarepn_{\phi}}{A} \oagnsvector[\Psi_{0}]} \opdmsr{\msrprb_{\mansphere^{1}}}(\phi)
=
\int_{\mansphere^{1}} \fun{\oastate[\psi_{\sminvtemperature,\phi}]}{A} \opdmsr{\msrprb_{\mansphere^{1}}}(\phi)
=
\fun{\oastate[\psi_{\sminvtemperature}]}{A}.$$
It remains to prove that $\oagnsvector[\Psi_{\sminvtemperature}]$ is cyclic.
First,
the order parameter provides the diagonal phases.
By Proposition \ref{prop:fiber-disjoint} and its proof,
$\fun{\oarepn_{\phi}}{\frac{2}{r_{0}} M^{+}_{\Omega}}
\to
\fnexp{\imunit \phi}$ strongly on $\sphilb{H}_{0}$ as $\Omega \to \infty$ for each $\phi$,
with the uniform bound $\norm{\frac{2}{r_{0}} M^{+}_{\Omega}}
\leq
\frac{2}{r_{0}}$.
For $\xi
\in
\sphilb{H}_{\sminvtemperature}$ dominated convergence in $\phi$ gives
\begin{equation}\label{eq:order-parameter-limit}
\slim_{\Omega \to \infty}
\fun{\oarepn_{\sminvtemperature}}{\frac{2}{r_{0}} M^{+}_{\Omega}}
=
u
=
M_{\fnexp{\imunit \phi}} \otimes \id_{\sphilb{H}_{0}}
\quad \text{on $\sphilb{H}_{\sminvtemperature}$}.
\end{equation}
Iterating with Lemma \ref{lem:strong-products},
$$\slim_{\Omega \to \infty}
\fun{\oarepn_{\sminvtemperature}}{\rbk{\frac{2}{r_{0}} M^{+}_{\Omega}}^{j} \rbk{\frac{2}{r_{0}} M^{-}_{\Omega}}^{l} A}
=
u^{j}
\rbk{\faadj{u}}^{l}
\fun{\oarepn_{\sminvtemperature}}{A}$$
for all $j, l
\geq
0$ and $A
\in
\oa{A}$.
It follows that the closure of $\fun{\oarepn_{\sminvtemperature}}{\oa{A}} \oagnsvector[\Psi_{\sminvtemperature}]$ contains $\fun{f}{\phi}$-modulated vectors
$\rbk{M_{f} \otimes \id_{\sphilb{H}_{0}}}
\fun{\oarepn_{\sminvtemperature}}{A}
\oagnsvector[\Psi_{\sminvtemperature}]$ for every trigonometric polynomial $f$ and every $A
\in
\oa{A}$.
Now approximate an arbitrary $\xi
\in
\sphilb{H}_{\sminvtemperature}$.
Sections of the form $\phi
\mapsto
\sum_{i
=
1}^{I} \fun{1_{E_{i}}}{\phi} \zeta_{i}$,
where $I
\in
\semigrposint$ and $\zeta_{i}
\in
\sphilb{H}_{0}$ are dense by the definition of the Bochner space,
and each such section is $\lp^{2}$-approximated by continuous sections via regularization of the indicators.
It is therefore enough to treat a continuous section $\xi$.
Fix $\epsilon > 0$.
For each $\theta
\in
[0, 2 \pi]$ choose $A_{\theta}
\in
\oa{A}$ with $\norm{\fun{\oarepn_{\theta}}{A_{\theta}} \oagnsvector[\Psi_{0}] - \fun{\xi}{\theta}} < \epsilon$,
possible since $\oagnsvector[\Psi_{0}]$ is cyclic for the GNS representation $\oarepn_{\theta}$.
By the norm continuity of $\phi
\mapsto
\fun{\mathfrak{g}_{\phi}}{A_{\theta}}$ and the continuity of $\xi$ there is an open arc $U_{\theta} \ni \theta$ on which
$$\norm{\fun{\oarepn_{\phi}}{A_{\theta}} \oagnsvector[\Psi_{0}] - \fun{\xi}{\phi}}
\leq
\norm{\fun{\mathfrak{g}_{\phi}}{A_{\theta}} - \fun{\mathfrak{g}_{\theta}}{A_{\theta}}} + \norm{\fun{\oarepn_{\theta}}{A_{\theta}} \oagnsvector[\Psi_{0}] - \fun{\xi}{\theta}} + \norm{\fun{\xi}{\theta} - \fun{\xi}{\phi}} < 3 \epsilon.$$
By compactness finitely many arcs $U_{\theta_{1}}, \dotsc, U_{\theta_{J}}$ cover the circle.
Choose a finite partition $\seq{I_{i}}{i
=
1}^{I}$ of $[0, 2 \pi)$ into half-open intervals,
each contained in one covering arc $U_{\theta_{j \rbk{i}}}$,
and set $A_{i}
=
A_{\theta_{j \rbk{i}}}$.
The section $\zeta
=
\sum_{i
=
1}^{I} 1_{I_{i}} \cdot \rbk{\phi
\mapsto
\fun{\oarepn_{\phi}}{A_{i}} \oagnsvector[\Psi_{0}]}$ then satisfies $\norm{\xi - \zeta}_{\sphilb{H}_{\sminvtemperature}}
\leq
3 \epsilon$.

It remains to show that $\zeta$ lies in the closure of
$\fun{\oarepn_{\sminvtemperature}}{\oa{A}} \oagnsvector[\Psi_{\sminvtemperature}]$.
Lemma \ref{lem:fejer}(2)--(3) provides trigonometric polynomials
$f_{i}
=
\operatorname{Fej}_{N_{i}} 1_{I_{i}}$ with $\norm{f_{i}}_{\infty}
\leq
1$
that converge to $1_{I_i}$ in
$\fun{\lp^{2}}{\mansphere^{1}, \msrprb_{\mansphere^{1}}}$.
The vectors
$\sum_{i
=
1}^{I}
\rbk{M_{f_{i}} \otimes \id_{\sphilb{H}_{0}}}
\fun{\oarepn_{\sminvtemperature}}
{A_{i}} \oagnsvector[\Psi_{\sminvtemperature}]$
belong to the required closure.
Their distance from $\zeta$ is at most
$$\sum_{i
=
1}^{I} \norm{f_{i} - 1_{I_{i}}}_{\lp^{2}} \cdot \sup_{\phi
\in
[0, 2 \pi)} \norm{\fun{\oarepn_{\phi}}{A_{i}} \oagnsvector[\Psi_{0}]}.$$
This bound tends to zero.
The uniqueness clause of the GNS construction identifies this cyclic representation with the GNS representation of $\oastate[\psi_{\sminvtemperature}]$.
\end{proof}

\subsection{The center and the central decomposition}\label{the-center-and-the-central-decomposition}

The center of the direct-integral von Neumann algebra is generated by the phase of the order parameter. Its computation identifies the gauge average as the central decomposition into factor states. The auxiliary decomposable-operator lemmas used in this subsection are collected in Appendix \ref{sec:functional-analytic-appendix}.

\begin{thm}[the center is the order parameter algebra]\label{thm:center}
Let $\oa{M}_{\sminvtemperature}
=
\oadoublecommutant{\fun{\oarepn_{\sminvtemperature}}{\oa{A}}}$.
The center is precisely the order-parameter algebra:
$$\fun{Z}{\oa{M}_{\sminvtemperature}}
=
\oa{M}_{\sminvtemperature} \cap \oacommutant{\oa{M}_{\sminvtemperature}}
=
\oa{Z}_{\sminvtemperature}
\cong
\fun{\lp^{\infty}}{\mansphere^{1}, \msrprb_{\mansphere^{1}}},$$
and the unitary $u
=
M_{\fnexp{\imunit \phi}} \otimes \id_{\sphilb{H}_{0}}$ generating $\oa{Z}_{\sminvtemperature}$ is the strong limit \eqref{eq:order-parameter-limit} of the normalized order parameters $\fun{\oarepn_{\sminvtemperature}}{\frac{2}{r_{0}} M^{+}_{\Omega}}$.
\end{thm}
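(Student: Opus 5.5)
The plan mirrors the ground-state argument in Theorem \ref{thm:ground-state-direct-integral}(2), with the fibers now factors instead of irreducible.

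\textbf{Step 1: the diagonal algebra lies in $\oa{M}_{\sminvtemperature}$.} By \eqref{eq:order-parameter-limit}, the unitary $u=M_{\fnexp{\imunit\phi}}\otimes\id_{\sphilb{H}_{0}}$ is a strong limit of elements of $\fun{\oarepn_{\sminvtemperature}}{\oa{A}}$. Hence $u\in\oa{M}_{\sminvtemperature}$, and so is $\faadj{u}$. The von Neumann algebra generated by $u$ consists of the operators $M_{f}\otimes\id$ with $f$ a bounded Borel function of $\fnexp{\imunit\phi}$. Since $\phi\mapsto\fnexp{\imunit\phi}$ separates the points of the circle, this is all of $\oa{Z}_{\sminvtemperature}$. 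Every $\fun{\oarepn_{\sminvtemperature}}{A}$ is decomposable, so it commutes with $\oa{Z}_{\sminvtemperature}$. Together these give $\oa{Z}_{\sminvtemperature}\subset\oa{M}_{\sminvtemperature}\cap\oacommutant{\oa{M}_{\sminvtemperature}}$.

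\textbf{Step 2: the reverse inclusion.} Let $X$ be central. Because $X$ commutes with $\oa{Z}_{\sminvtemperature}\subset\oa{M}_{\sminvtemperature}$, Lemma \ref{lem:diagonal-commutant} makes $X$ decomposable, $X=\int^{\oplus}X_{\phi}$. Since $X\in\oacommutant{\oa{M}_{\sminvtemperature}}$, for each $A\in\oa{A}$ we have $X_{\phi}\fun{\oarepn_{\phi}}{A}=\fun{\oarepn_{\phi}}{A}X_{\phi}$ for almost every $\phi$. The null set must be made independent of $A$. For this I run over a countable norm-dense subset of $\oa{A}_{\txtloc}$, which exists because the UHF algebra is separable, and then pass to all of $\oa{A}$ by norm continuity. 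This places $X_{\phi}$ in $\oacommutant{\fun{\oarepn_{\phi}}{\oa{A}}}=\oacommutant{\oa{M}_{0}}$ almost everywhere, by \eqref{eq:same-image}.

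Next I need $X_{\phi}\in\oa{M}_{0}$. The plan is to use that $X\in\oa{M}_{\sminvtemperature}$ is a strong limit of bounded nets in $\fun{\oarepn_{\sminvtemperature}}{\oa{A}}$, by Kaplansky. Passing to sequences is possible since $\sphilb{H}_{\sminvtemperature}$ is separable. From fiberwise convergence along a subsequence on a countable dense set of sections, I would conclude that $X_{\phi}$ lies in the fiber bicommutant $\oa{M}_{0}$ almost everywhere. If that route proves delicate, I will instead invoke the standard fact that for a direct integral over the diagonal $\oa{Z}_{\sminvtemperature}$, the decomposition $\oa{M}_{\sminvtemperature}=\int^{\oplus}\oa{M}_{0}$ holds. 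That fact follows from $\oacommutant{\oa{M}_{\sminvtemperature}}=\int^{\oplus}\oacommutant{\oa{M}_{0}}$ and commutation theorems for decomposable algebras. Once $X_{\phi}\in\oa{M}_{0}\cap\oacommutant{\oa{M}_{0}}$, Theorem \ref{thm:factor} (factoriality, Proposition \ref{prop:bogoliubov-product-properties}(2)) gives $X_{\phi}=\fun{z}{\phi}1$. Measurability and boundedness of $z$ come from $z(\phi)=\bkt{\oagnsvector[\Psi_{0}]}{X_{\phi}\oagnsvector[\Psi_{0}]}$. Hence $X\in\oa{Z}_{\sminvtemperature}$.

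\textbf{Main obstacle.} I expect the main obstacle to be the membership $X_{\phi}\in\oa{M}_{0}$. The centre lies inside $\oa{M}_{\sminvtemperature}$, but decomposability only yields the commutant condition fiberwise. Unlike the ground-state case, the fibers are not irreducible, so the commutant condition alone does not force scalars. One could try to use disjointness of the fibers (Proposition \ref{prop:fiber-disjoint}) to show directly that $\oacommutant{\oa{M}_{\sminvtemperature}}$ is exactly the set of decomposable fields in $\oacommutant{\oa{M}_{0}}$. However, the inclusion $\oa{M}_{\sminvtemperature}\supseteq\int^{\oplus}\oa{M}_{0}$ is precisely what requires $\oa{Z}_{\sminvtemperature}\subset\oa{M}_{\sminvtemperature}$ from Step 1. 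I would therefore formulate the argument as
$$\oa{M}_{\sminvtemperature}=\rbk{\oa{Z}_{\sminvtemperature}\cup\fun{\oarepn_{\sminvtemperature}}{\oa{A}}}''=\int^{\oplus}\oa{M}_{0}\,\opdmsr{\msrprb_{\mansphere^{1}}}(\phi),$$
quoting the decomposable-operator lemmas of Appendix \ref{sec:functional-analytic-appendix}. The centre of $\int^{\oplus}\oa{M}_{0}$ is then $\int^{\oplus}Z(\oa{M}_{0})=\oa{Z}_{\sminvtemperature}$. The final identification of $u$ as the generator is \eqref{eq:order-parameter-limit} together with Step 1.
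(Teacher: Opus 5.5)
Your proposal is correct, but your primary route for the key step $X_\phi\in\mathcal{M}_0$ differs from the paper's.

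\textbf{The paper's route.} The paper chooses a sequence $(c_j)$ that is weak-operator dense in the unit ball of $\mathcal{M}_0'$. It observes that the constant fields $\phi\mapsto c_j$ lie in $\pi_\beta(\mathcal{A})'$, precisely because every fiber has the same image $\pi_\phi(\mathcal{A})=\pi_0(\mathcal{A})$ by \eqref{eq:same-image}. A central $X\in\mathcal{M}_\beta$ therefore commutes with these fields. Lemma \ref{lem:fiberwise-zero} then gives $[X_\phi,c_j]=0$ almost everywhere, hence $X_\phi\in\mathcal{M}_0''=\mathcal{M}_0$.

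\textbf{Your route.} Your Kaplansky argument instead establishes the ``easy half'' of the direct-integral decomposition, $\mathcal{M}_\beta\subset\int^\oplus\pi_\phi(\mathcal{A})''$. The steps are as follows.
\begin{itemize}
\item Take a bounded sequence $\pi_\beta(A_n)\to X$ strongly.
\item Test it on the constant sections $\phi\mapsto e_k$ through an orthonormal basis of $\mathcal{H}_0$. This is the precise form of your ``countable dense set of sections'' that makes the argument clean.
\item This gives $L^2$-convergence of $\phi\mapsto\|(\pi_\phi(A_n)-X_\phi)e_k\|$ to zero, and hence a diagonal subsequence converging almost everywhere for every $k$.
\item The uniform bound $\|\pi_\phi(A_n)\|\le\|X\|$ upgrades this to fiberwise strong convergence. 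So $X_\phi$ is a strong limit of elements of $\pi_\phi(\mathcal{A})$.
\end{itemize}

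\textbf{Comparison.} For the reverse inclusion $Z(\mathcal{M}_\beta)\subset\mathcal{Z}_\beta$, your route needs only that almost every fiber is a factor. It does not need the common-image property, at the cost of Kaplansky's density theorem and a subsequence extraction. The paper's constant-field trick is shorter and stays within the bicommutant theorem and the appendix lemmas, but it is tailored to fibers that are gauge twists of a single representation.

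Step 1 also differs. You quote the Borel functional calculus, $W^*(u)=\{f(u)\}$. The paper instead proves that $\{u,u^*\}'$ commutes with all diagonal operators by hand, via Fourier coefficients and Lemma \ref{lem:fejer}(3), and then applies the bicommutant theorem.

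\textbf{Drop the fallback.} The appendix lemmas do not yield $\mathcal{M}_\beta=\int^\oplus\mathcal{M}_0$ or $Z(\int^\oplus\mathcal{M}_0)=\int^\oplus Z(\mathcal{M}_0)$. Those would require a commutation theorem for decomposable algebras, or exactly the paper's constant-field argument. Your Kaplansky route already makes them unnecessary.
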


\begin{proof}
We first prove that the order-parameter algebra is contained in the center:
$\oa{Z}_{\sminvtemperature}
\subset
\fun{Z}{\oa{M}_{\sminvtemperature}}$.
The operator $u$ is the strong limit of elements of $\fun{\oarepn_{\sminvtemperature}}{\oa{A}}$ by \eqref{eq:order-parameter-limit},
and hence in $\oa{M}_{\sminvtemperature}$.
It is unitary,
and it is diagonal with scalar fibers.
It therefore commutes with every $\fun{\oarepn_{\sminvtemperature}}{A}$.
It follows that $u
\in
\fun{Z}{\oa{M}_{\sminvtemperature}}$.

We next show that the von Neumann algebra $\oawstar \setone{u}$ generated by $u$ contains every diagonal operator.
Let $T
\in
\oacommutant{\setone{u, \faadj{u}}}$ and $\xi, \zeta
\in
\sphilb{H}_{\sminvtemperature}$,
and consider the two integrable functions $\fun{k_{1}}{\phi}
=
\bkt{\fun{\rbk{\faadj{T} \xi}}{\phi}}{\fun{\zeta}{\phi}}$ and $\fun{k_{2}}{\phi}
=
\bkt{\fun{\xi}{\phi}}{\fun{\rbk{T \zeta}}{\phi}}$.
For every $j
\in
\ringratint$ the operator $u^{j}$ is the diagonal
$M_{\fnexp{\imunit j \phi}} \otimes \id_{\sphilb{H}_{0}}$,
and commutation with $T$ gives
$$\int_{\mansphere^{1}} \fnexp{\imunit j \phi} \fun{k_{1}}{\phi} \opdmsr{\msrprb_{\mansphere^{1}}}(\phi)
=
\bkt{\faadj{T} \xi}{u^{j} \zeta}
=
\bkt{\xi}{u^{j} T \zeta}
=
\int_{\mansphere^{1}} \fnexp{\imunit j \phi} \fun{k_{2}}{\phi} \opdmsr{\msrprb_{\mansphere^{1}}}(\phi).$$
It follows that all Fourier coefficients of $k_{1} - k_{2}
\in
\fun{\lp^{1}}{\mansphere^{1}, \msrprb_{\mansphere^{1}}}$ vanish and $k_{1}
=
k_{2}$ almost everywhere by Lemma \ref{lem:fejer}(3).
It holds that
$$\bkt{\xi}{\rbk{T \rbk{M_{z} \otimes \id_{\sphilb{H}_{0}}}
-\rbk{M_{z} \otimes \id_{\sphilb{H}_{0}}} T} \zeta}
=
\int_{\mansphere^{1}} z \rbk{k_{2} - k_{1}} \opdmsr{\msrprb_{\mansphere^{1}}}(\phi)
=
0$$ for every $z
\in
\fun{\lp^{\infty}}{\mansphere^{1}, \msrprb_{\mansphere^{1}}}$ and all $\xi, \zeta$,
that is,
$M_{z} \otimes \id_{\sphilb{H}_{0}}
\in
\oadoublecommutant{\setone{u, \faadj{u}}}
=
\oawstar \setone{u}$ by the bicommutant theorem.
Since $u$ is diagonal,
every element of $\oawstar \setone{u}$ is diagonal.
Thus $\oawstar \setone{u}
=
\oa{Z}_{\sminvtemperature}$.
The centrality of $u$ now gives $\oa{Z}_{\sminvtemperature}
\subset
\fun{Z}{\oa{M}_{\sminvtemperature}}$ because the center is a von Neumann algebra containing $u$.

We now prove the reverse inclusion:
$\fun{Z}{\oa{M}_{\sminvtemperature}}
\subset
\oa{Z}_{\sminvtemperature}$.
Let $Z
\in
\fun{Z}{\oa{M}_{\sminvtemperature}}$.
Since $\oa{Z}_{\sminvtemperature}
\subset
\oa{M}_{\sminvtemperature}$ and $Z
\in
\oacommutant{\oa{M}_{\sminvtemperature}}$,
the operator $Z$ commutes with all diagonal operators.
Lemma \ref{lem:diagonal-commutant} shows that $Z$ is decomposable,
$Z
=
\rbk{Z_{\phi}}$.

We first show that almost every fiber $Z_{\phi}$ belongs to
$\oacommutant{\oa{M}_{0}}$.
For a countable norm-dense set $\set{A_{i}}{i
\in
\semigrposint}
\subset
\oa{A}$,
the commutators $\commutator{Z}{\fun{\oarepn_{\sminvtemperature}}{A_{i}}}$ vanish.
Lemma \ref{lem:fiberwise-zero} provides a null set off which $\commutator{Z_{\phi}}{\fun{\oarepn_{\phi}}{A_{i}}}
=
0$ for all $i$.
For fixed such $\phi$ the set $\set{\fun{\mathfrak{g}_{\phi}}{A_{i}}}{i
\in
\semigrposint}$ is norm dense.
It follows that $Z_{\phi}
\in
\oacommutant{\fun{\oarepn_{0}}{\oa{A}}}
=
\oacommutant{\oa{M}_{0}}$ by \eqref{eq:same-image}.

It remains to show that almost every fiber $Z_{\phi}$ also belongs to
$\oa{M}_{0}$.
Choose a sequence $\seq{c_{j}}{j
\geq
1}$ in the unit ball of $\oacommutant{\oa{M}_{0}}$,
dense in that ball for the weak operator topology.
The unit ball of $\opspbddlin{\sphilb{H}_{0}}$ is compact in the weak operator topology by a Tychonoff argument on matrix elements.
Since $\sphilb{H}_{0}$ is separable,
this topology on the unit ball is induced by the metric
$$\fun{d}{S, S'}
=
\sum_{i, l
\geq
1} 2^{- i - l} \abs{\bkt{e_{i}}{\rbk{S - S'} e_{l}}},$$
where $\seq{e_{i}}{i
\geq
1}$ is an orthonormal basis.
The unit ball of $\oacommutant{\oa{M}_{0}}$ is a closed subset of this compact metrizable space.
It is therefore compact metrizable and separable,
which proves the existence of the sequence.
Each constant field $C_{j}
=
\rbk{c_{j}}_{\phi}$ is a bounded decomposable operator,
and $C_{j}
\in
\oacommutant{\fun{\oarepn_{\sminvtemperature}}{\oa{A}}}$ because $c_{j}$ commutes with $\fun{\oarepn_{\phi}}{A}
\in
\fun{\oarepn_{0}}{\oa{A}}$ for every $\phi$ and $A$,
again by \eqref{eq:same-image}.
It follows that $Z
\in
\oa{M}_{\sminvtemperature}
\subset
\oadoublecommutant{\fun{\oarepn_{\sminvtemperature}}{\oa{A}}}
\subset
\oacommutant{\set{C_{j}}{j
\in
\semigrposint}}$,
and Lemma \ref{lem:fiberwise-zero} gives a null set off which $\commutator{Z_{\phi}}{c_{j}}
=
0$ for all $j$.
Commutation passes to weak operator limits.
It follows that $Z_{\phi}$ commutes with the whole ball of $\oacommutant{\oa{M}_{0}}$,
whence $Z_{\phi}
\in
\oadoublecommutant{\oa{M}_{0}}
=
\oa{M}_{0}$.
Therefore $Z_{\phi}
\in
\oa{M}_{0} \cap \oacommutant{\oa{M}_{0}}
=
\fldcmp 1$ almost everywhere,
by Theorem \ref{thm:factor}.
Writing $Z_{\phi}
=
\fun{z}{\phi} 1$,
the function $z$ is a measurable matrix element.
It is bounded by $\norm{Z}$.
It follows that $Z
=
M_{z} \otimes \id_{\sphilb{H}_{0}}
\in
\oa{Z}_{\sminvtemperature}$.
\end{proof}

\begin{prop}[KMS property of the fibers]\label{prop:fiber-kms}
Each $\oastate[\psi_{\sminvtemperature,\phi}]$ is a KMS state at inverse temperature $\sminvtemperature$ for the Bogoliubov dynamics $\tau^{\txtbogoliubov, \phi}$:
for all $A, B
\in
\oa{A}_{\txtloc}$,
$$\fun{\oastate[\psi_{\sminvtemperature,\phi}]}{A \fun{\tau^{\txtbogoliubov, \phi}_{t + \imunit \sminvtemperature}}{B}}
=
\fun{\oastate[\psi_{\sminvtemperature,\phi}]}{\fun{\tau^{\txtbogoliubov, \phi}_{t}}{B} A}
\quad \rbk{t
\in
\fldreal},$$
and the KMS condition in the sense of \cite[Definition 5.3.1, Proposition 5.3.7]{BratteliRobinson004} holds.
\end{prop}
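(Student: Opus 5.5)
The plan is to reduce the KMS condition to a one-site statement and then extend to the product. Fix $\phi$ and a finite $\Lambda\subset\semigrposint$ with $A,B\in\oa{A}_{\Lambda}$. On $\oa{A}_{\Lambda}$ the dynamics $\tau^{\txtbogoliubov,\phi}_{t}$ is conjugation by $\fnexp{\imunit t H_{\Lambda}}$ with $H_{\Lambda}=\sum_{p\in\Lambda}\physham[h]_{\phi,p}$. This is a bounded self-adjoint element, so $z\mapsto\fun{\tau^{\txtbogoliubov,\phi}_{z}}{B}=\fnexp{\imunit z H_{\Lambda}}B\fnexp{-\imunit z H_{\Lambda}}$ is entire. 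By the factorization of the product state \eqref{eq:phase-product-state}, the restriction of $\oastate[\psi_{\sminvtemperature,\phi}]$ to $\oa{A}_{\Lambda}$ is the trace state with density matrix $D_{\Lambda}=\bigotimes_{p\in\Lambda}D_{\phi,p}$. Proposition \ref{prop:bogoliubov-product-properties}(1) identifies each one-site factor as $D_{\phi}=\fnexp{-\sminvtemperature\physham[h]_{\phi}}/\sqfun{\trace}{\fnexp{-\sminvtemperature\physham[h]_{\phi}}}$. Since the one-site Hamiltonians at distinct sites commute, this gives
$$D_{\Lambda}=\frac{\fnexp{-\sminvtemperature H_{\Lambda}}}{\sqfun{\trace}{\fnexp{-\sminvtemperature H_{\Lambda}}}}.$$

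So on $\oa{A}_{\Lambda}$ the state is a finite-volume Gibbs state for $H_{\Lambda}$, and the KMS identity is a trace computation. Write $Z=\sqfun{\trace}{\fnexp{-\sminvtemperature H_{\Lambda}}}$. Substituting $t+\imunit\sminvtemperature$ gives
$$Z\,\fun{\oastate[\psi_{\sminvtemperature,\phi}]}{A\fun{\tau^{\txtbogoliubov,\phi}_{t+\imunit\sminvtemperature}}{B}}=\sqfun{\trace}{\fnexp{-\sminvtemperature H_{\Lambda}}A\fnexp{\imunit tH_{\Lambda}}\fnexp{-\sminvtemperature H_{\Lambda}}B\fnexp{-\imunit tH_{\Lambda}}\fnexp{\sminvtemperature H_{\Lambda}}}.$$
Cyclicity of the trace moves $\fnexp{\sminvtemperature H_{\Lambda}}$ to the front, where it cancels the leading $\fnexp{-\sminvtemperature H_{\Lambda}}$. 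What remains is $\sqfun{\trace}{A\,\fnexp{-\sminvtemperature H_{\Lambda}}\fun{\tau^{\txtbogoliubov,\phi}_{t}}{B}}$. One more cyclic permutation gives $\sqfun{\trace}{\fnexp{-\sminvtemperature H_{\Lambda}}\fun{\tau^{\txtbogoliubov,\phi}_{t}}{B}A}$, which is $Z$ times the right-hand side of the asserted identity. Because $\tau^{\txtbogoliubov,\phi}_{t}$ preserves $\oa{A}_{\Lambda}$, all the operators involved stay in $\oa{A}_{\Lambda}$, and the tail factors of the product state play no role.

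To obtain the KMS condition in the sense of \cite[Definition 5.3.1, Proposition 5.3.7]{BratteliRobinson004}, I would use that $\oa{A}_{\txtloc}$ is a norm-dense, $\tau^{\txtbogoliubov,\phi}$-invariant $\ast$-subalgebra consisting of entire analytic elements for $\tau^{\txtbogoliubov,\phi}$. The cited definition requires the identity $\omega(A\tau_{\imunit\sminvtemperature}(B))=\omega(BA)$ on such a subalgebra, which is the case $t=0$ above. The general-$t$ form follows by replacing $B$ with $\fun{\tau^{\txtbogoliubov,\phi}_{t}}{B}$, which is again local. Proposition 5.3.7 then upgrades this to the full strip-analytic KMS condition on $\oa{A}$.

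I expect no serious obstacle. The only point needing care is the identification of the restricted product state with the local Gibbs state of $H_{\Lambda}$. This uses the commutation of $\physham[h]_{\phi,p}$ at distinct sites, so that the exponential factorizes, and the normalizations multiply.
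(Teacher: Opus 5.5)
Your proposal is correct and follows the paper's own route. You restrict to a finite $\Lambda$, identify the marginal of the product state as the Gibbs state of $\sum_{p\in\Lambda}\physham[h]_{\phi,p}$, verify the boundary identity by cyclicity of the trace, and invoke \cite[Proposition 5.3.7]{BratteliRobinson004} on the norm-dense invariant $\ast$-subalgebra $\oa{A}_{\txtloc}$ of entire analytic elements. Your added remarks, that the one-site exponentials factorize over distinct sites and that the general-$t$ identity follows from the $t=0$ case applied to $\tau^{\txtbogoliubov,\phi}_{t}(B)$, spell out details the paper leaves implicit.
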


\begin{proof}
Local elements are entire analytic for the product dynamics because the generator is bounded on each $\oa{A}_{\Lambda}$.
It follows that the boundary form of the KMS condition on the norm-dense $\ast$-subalgebra $\oa{A}_{\txtloc}$ of analytic elements implies the KMS property \cite[Proposition 5.3.7]{BratteliRobinson004}.
Fix a finite $\Lambda$ containing the supports of $A$ and $B$.
On $\oa{A}_{\Lambda}$,
$\oastate[\psi_{\sminvtemperature,\phi}]$ restricts to the Gibbs state of
$\physham^{\phi}_{\Lambda}
=
\sum_{p
\in
\Lambda} \physham[h]_{\phi, p}$
at inverse temperature $\sminvtemperature$.
Indeed,
\eqref{eq:self-consistency} gives its density matrix as
$$D_{\phi}^{\otimes \Lambda}
=
\frac{1}{\sqfun{\trace}{\fnexp{- \sminvtemperature \physham^{\phi}_{\Lambda}}}}
\fnexp{- \sminvtemperature \physham^{\phi}_{\Lambda}}.$$
and the complex-time evolution is
$$\fun{\tau^{\txtbogoliubov, \phi}_{z}}{B}
=
\fnexp{\imunit z \physham^{\phi}_{\Lambda}} B \fnexp{- \imunit z \physham^{\phi}_{\Lambda}}.$$
This evolution stays in $\oa{A}_{\Lambda}$ for every complex $z$.
Cyclicity of the trace gives
$$\begin{aligned}
&\sqfun{\trace}
{\fnexp{- \sminvtemperature \physham^{\phi}_{\Lambda}} A \fnexp{- \sminvtemperature \physham^{\phi}_{\Lambda}} \fnexp{\imunit t \physham^{\phi}_{\Lambda}} B \fnexp{- \imunit t \physham^{\phi}_{\Lambda}} \fnexp{\sminvtemperature \physham^{\phi}_{\Lambda}}}
\\ 
&=
\sqfun{\trace}
{\fnexp{- \sminvtemperature \physham^{\phi}_{\Lambda}} \fnexp{\imunit t \physham^{\phi}_{\Lambda}} B \fnexp{- \imunit t \physham^{\phi}_{\Lambda}} A},
\end{aligned}$$
After normalization,
the left side is $\fun{\oastate[\psi_{\sminvtemperature,\phi}]}{A \fun{\tau^{\txtbogoliubov, \phi}_{t + \imunit \sminvtemperature}}{B}}$ by the definition of the analytic continuation.
This proves the desired identity.
\end{proof}

\begin{prop}[phase-extended KMS state]\label{prop:phase-extended-kms}
On the phase-extended algebra
$\oa{C}$
in \eqref{eq:phase-extended-algebra},
the formula
\begin{equation}\label{eq:phase-extended-thermal-dynamics}
\funrbk{\fun{\widetilde{\tau}^{\sminvtemperature}_{t}}{F}}{\phi}
=
\fun{\tau^{\txtbogoliubov,\phi}_{t}}{\fun{F}{\phi}}
\end{equation}
defines a point-norm continuous automorphism group.
The state
\begin{equation}\label{eq:phase-extended-kms-state}
\fun{\oastate[\widetilde{\psi}_{\sminvtemperature}]}{F}
=
\int_{\mansphere^{1}}
\fun{\oastate[\psi_{\sminvtemperature,\phi}]}{\fun{F}{\phi}}
\opdmsr{\msrprb_{\mansphere^{1}}}(\phi)
\end{equation}
is a KMS state at inverse temperature
$\sminvtemperature$
for
$\widetilde{\tau}^{\sminvtemperature}$.
\end{prop}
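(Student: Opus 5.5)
The proof mirrors Proposition \ref{prop:phase-extended-ground-dynamics} for the dynamics and Proposition \ref{prop:fiber-kms} for the KMS relation, with the $\phi$-integral added at the end.

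\textbf{The automorphism group.} Work on the dense algebraic tensor product $\fun{\conti}{\mansphere^{1}} \odot \oa{A}_{\txtloc}$. For local $A \in \oa{A}_{\Lambda}$, the fiber dynamics $\fun{\tau^{\txtbogoliubov,\phi}_{t}}{A}$ is conjugation by $\fnexp{\imunit t \physham^{\phi}_{\Lambda}}$, where $\physham^{\phi}_{\Lambda} = \sum_{p \in \Lambda} \physham[h]_{\phi,p}$. The map $\phi \mapsto \physham[h]_{\phi,p}$ is norm continuous by \eqref{eq:self-consistency}. Hence $(t,\phi) \mapsto \fun{\tau^{\txtbogoliubov,\phi}_{t}}{A}$ is jointly norm continuous, uniformly on compact sets. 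Consequently \eqref{eq:phase-extended-thermal-dynamics} sends $\fun{\conti}{\mansphere^{1}, \oa{A}}$ into itself on simple tensors. It acts fiberwise by $\ast$-automorphisms, so it is isometric for the sup norm. The group law holds fiberwise, and the inverse is given by $-t$. The maps therefore extend by continuity to a group of automorphisms of $\oa{C}$. Point-norm continuity follows on the dense set from the uniform continuity in $t$, and then on all of $\oa{C}$ by an $\frac{\epsilon}{3}$-argument.

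\textbf{KMS relation on a dense set.} The state \eqref{eq:phase-extended-kms-state} is well defined because $\phi \mapsto \fun{\oastate[\psi_{\sminvtemperature,\phi}]}{\fun{F}{\phi}}$ is continuous. This uses the gauge covariance \eqref{eq:gauge-covariance-states} together with continuity of $F$. For $F = f \otimes A$ with $A \in \oa{A}_{\Lambda}$, set $\fun{\widetilde{\tau}^{\sminvtemperature}_{z}}{F}(\phi) = \fun{f}{\phi} \fnexp{\imunit z \physham^{\phi}_{\Lambda}} A \fnexp{-\imunit z \physham^{\phi}_{\Lambda}}$. This is entire in $z$ with values in $\oa{C}$, since the generators are uniformly bounded in $\phi$. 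So the $\ast$-algebra $\fun{\conti}{\mansphere^{1}} \odot \oa{A}_{\txtloc}$ consists of entire analytic elements, and it is norm dense and invariant.

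For such $F$ and $G$, apply Proposition \ref{prop:fiber-kms} fiberwise to $\fun{F}{\phi}$ and $\fun{G}{\phi}$:
$$\fun{\oastate[\psi_{\sminvtemperature,\phi}]}{\fun{F}{\phi} \fun{\tau^{\txtbogoliubov,\phi}_{t+\imunit\sminvtemperature}}{\fun{G}{\phi}}} = \fun{\oastate[\psi_{\sminvtemperature,\phi}]}{\fun{\tau^{\txtbogoliubov,\phi}_{t}}{\fun{G}{\phi}} \fun{F}{\phi}}.$$
Integrating over $\phi$ gives the boundary form of the KMS condition for $\oastate[\widetilde{\psi}_{\sminvtemperature}]$. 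Finally, invoke \cite[Proposition 5.3.7]{BratteliRobinson004} exactly as in Proposition \ref{prop:fiber-kms}.

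\textbf{Main obstacle.} The delicate point is verifying that the analytic continuation of $\widetilde{\tau}^{\sminvtemperature}$ on the dense set agrees fiberwise with the continuation of each $\tau^{\txtbogoliubov,\phi}$, and stays in $\oa{C}$. I would check this with the uniform bound $\norm{\fun{\delta_{\phi}^{N}}{A}} \le (4k\eta_{0}/\sminvtemperature_{c})^{N}\norm{A}$ from the proof of Theorem \ref{thm:green}. That bound is uniform in $\phi$, so the power series in $z$ converges in the sup norm of $\oa{C}$.
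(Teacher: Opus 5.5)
Your proposal is correct and follows the paper's proof. You transfer the point-norm argument of Proposition~\ref{prop:phase-extended-ground-dynamics} to the thermal one-site Hamiltonians, apply Proposition~\ref{prop:fiber-kms} fiberwise on the dense entire-analytic subalgebra $\fun{\conti}{\mansphere^{1}} \odot \oa{A}_{\txtloc}$, integrate over the gauge circle, and close with \cite[Proposition 5.3.7]{BratteliRobinson004}; your added remarks (the uniform-in-$\phi$ bounds for entire analyticity and the continuity of the integrand via gauge covariance) only make explicit what the paper leaves implicit.
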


\begin{proof}
The point-norm argument in the proof of
Proposition \ref{prop:phase-extended-ground-dynamics}
applies to the one-site Hamiltonians
$\physham[h]_{\phi,p}$
from \eqref{eq:self-consistency} and proves that
\eqref{eq:phase-extended-thermal-dynamics}
defines a $\oacstar$-dynamics.

The algebra
$\fun{\conti}{\mansphere^{1}} \odot \oa{A}_{\txtloc}$
is a norm-dense
$\ast$-subalgebra of entire analytic elements.
For two elements
$F$
and
$G$
in this subalgebra,
Proposition \ref{prop:fiber-kms} gives the boundary identity
$$\begin{aligned}
\fun{\oastate[\psi_{\sminvtemperature,\phi}]}
{\fun{F}{\phi}
\fun{\tau^{\txtbogoliubov,\phi}_{t + \imunit \sminvtemperature}}{\fun{G}{\phi}}}
=
\fun{\oastate[\psi_{\sminvtemperature,\phi}]}
{\fun{\tau^{\txtbogoliubov,\phi}_{t}}{\fun{G}{\phi}}
\fun{F}{\phi}}
\end{aligned}$$
for every
$\phi$.
Integration over
$\mansphere^{1}$
proves the KMS boundary identity for the state
\eqref{eq:phase-extended-kms-state}.
The analytic-element criterion in
\cite[Proposition 5.3.7]{BratteliRobinson004}
extends it to
$\oa{C}$.
\end{proof}

The fibers are moreover the only equilibrium states of their own dynamics. This strengthens the KMS statement: the fibers are mutually disjoint factor states, and each is the unique KMS state of its own Bogoliubov dynamics.

\begin{prop}[uniqueness of the fiber KMS states]\label{prop:kms-unique}
For each $\phi$,
the state $\oastate[\psi_{\sminvtemperature,\phi}]$ is the only KMS state of $\rbk{\oa{A}, \tau^{\txtbogoliubov, \phi}}$ at inverse temperature $\sminvtemperature$.
In particular $\oastate[\psi_{\sminvtemperature,\phi}]$ is an extremal KMS state.
\end{prop}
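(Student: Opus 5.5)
The plan is to show that any KMS state $\omega$ of $\rbk{\oa{A}, \tau^{\txtbogoliubov, \phi}}$ at inverse temperature $\sminvtemperature$ coincides with $\oastate[\psi_{\sminvtemperature,\phi}]$ on every local algebra $\oa{A}_{\Lambda}$, hence everywhere by density. The dynamics is a product dynamics: on $\oa{A}_{\Lambda}$ it is implemented by $\fnexp{\imunit t \physham^{\phi}_{\Lambda}}$ with $\physham^{\phi}_{\Lambda} = \sum_{p \in \Lambda} \physham[h]_{\phi,p}$, and it leaves $\oa{A}_{\Lambda}$ and the tail algebra $\oa{A}_{\Lambda^{c}}$ invariant separately. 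The standard finite-dimensional argument should therefore apply, in the form of \cite[Proposition 5.2.23 or Example 5.3.2]{BratteliRobinson004}: the restriction of a KMS state to a finite matrix algebra with inner dynamics $\fnexp{\imunit t H}\,\cdot\,\fnexp{-\imunit t H}$ is KMS for that dynamics, and the unique such state on a full matrix algebra is the Gibbs state of $H$.

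The steps run as follows. First, restrict $\omega$ to $\oa{A}_{\Lambda}$. Every $A, B \in \oa{A}_{\Lambda}$ is entire analytic, and $\tau^{\txtbogoliubov,\phi}_{z}(B) = \fnexp{\imunit z \physham^{\phi}_{\Lambda}} B \fnexp{-\imunit z \physham^{\phi}_{\Lambda}}$ stays in $\oa{A}_{\Lambda}$, so the analytic KMS relation $\omega(A\,\tau_{\imunit\sminvtemperature}(B)) = \omega(BA)$ holds within $\oa{A}_{\Lambda}$.

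Second, let $\rho_{\Lambda}$ be the density matrix of $\omega|_{\oa{A}_{\Lambda}}$ with respect to the trace. Choosing $A = \fnexp{\sminvtemperature \physham^{\phi}_{\Lambda}} X$ and $B$ a matrix unit turns the relation into $\sqfun{\trace}{\rho_{\Lambda} X \fnexp{-\sminvtemperature \physham^{\phi}_{\Lambda}} B} = \sqfun{\trace}{\rho_{\Lambda} B \fnexp{\sminvtemperature \physham^{\phi}_{\Lambda}} X \fnexp{-\sminvtemperature \physham^{\phi}_{\Lambda}}}$ for all $X, B$. This says that $\fnexp{\sminvtemperature \physham^{\phi}_{\Lambda}} \rho_{\Lambda}$ commutes with all of $\spmat{2}{\fldcmp}^{\otimes \Lambda}$. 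That algebra is a full matrix algebra with trivial center, so $\fnexp{\sminvtemperature \physham^{\phi}_{\Lambda}} \rho_{\Lambda}$ is a scalar. Hence $\rho_{\Lambda} = \fnexp{-\sminvtemperature \physham^{\phi}_{\Lambda}}/\sqfun{\trace}{\fnexp{-\sminvtemperature \physham^{\phi}_{\Lambda}}} = D_{\phi}^{\otimes \Lambda}$, the last equality by \eqref{eq:self-consistency}.

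Third, this shows that $\omega$ agrees with $\oastate[\psi_{\sminvtemperature,\phi}]$ on $\oa{A}_{\txtloc}$, and norm continuity gives equality on $\oa{A}$. Extremality is then immediate, because a one-point KMS simplex has its only element as an extreme point. Alternatively, it follows from factoriality (Proposition \ref{prop:bogoliubov-product-properties}(2)) together with the standard equivalence between extremal KMS states and factor KMS states \cite[Theorem 5.3.30]{BratteliRobinson004}.

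The main obstacle is the first step. One must justify that the KMS condition, which is stated via analytic continuation on $\oa{A}$, genuinely restricts to the finite-dimensional algebra. Here that holds because local elements are entire analytic and the complex-time evolution preserves $\oa{A}_{\Lambda}$, so the analytic form of the KMS condition \cite[Definition 5.3.1]{BratteliRobinson004} can be applied with both arguments in $\oa{A}_{\Lambda}$. Once this is in place, the remaining steps are routine linear algebra.
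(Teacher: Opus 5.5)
Your argument follows the paper's proof step for step. You restrict to $\oa{A}_{\Lambda}$, which is legitimate because the complex-time evolution preserves it. You pass to the density matrix $\rho_{\Lambda}$, use cyclicity and nondegeneracy of the trace to make a twisted density commute with the full matrix algebra, conclude that it is scalar, and extend from $\oa{A}_{\txtloc}$ by continuity. Extremality via the one-point KMS set is also how the paper finishes.

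However, the trace identity displayed in your second step is wrong. Set $H=\sum_{p\in\Lambda}\physham[h]_{\phi,p}$, so that $\tau_{\mathrm{i}\beta}(B)=\mathrm{e}^{-\beta H}B\,\mathrm{e}^{\beta H}$. Substituting $A=\mathrm{e}^{\beta H}X$ in $\omega\bigl(A\,\tau_{\mathrm{i}\beta}(B)\bigr)=\omega(BA)$ gives
\[
\operatorname{Tr}\bigl(\rho_{\Lambda}\,\mathrm{e}^{\beta H}X\mathrm{e}^{-\beta H}B\,\mathrm{e}^{\beta H}\bigr)
=
\operatorname{Tr}\bigl(\rho_{\Lambda}B\,\mathrm{e}^{\beta H}X\bigr),
\]
not $\operatorname{Tr}(\rho_{\Lambda}X\mathrm{e}^{-\beta H}B)=\operatorname{Tr}(\rho_{\Lambda}B\,\mathrm{e}^{\beta H}X\mathrm{e}^{-\beta H})$. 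Your version cannot be right: at $X=1$ it says $\rho_{\Lambda}\mathrm{e}^{-\beta H}=\rho_{\Lambda}$. For the invertible Gibbs density this would force $H=0$, although $\norm{\physham[h]_{\phi,p}}=\eta_{0}/\beta_{c}>0$. So, as written, the commutation claim does not follow from your identity.

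The repair is immediate. Cyclicity turns the correct identity into $\operatorname{Tr}\bigl((\mathrm{e}^{\beta H}\rho_{\Lambda}\mathrm{e}^{\beta H}X\mathrm{e}^{-\beta H}-\mathrm{e}^{\beta H}X\rho_{\Lambda})B\bigr)=0$ for all $B$. Hence $\rho_{\Lambda}\mathrm{e}^{\beta H}X=X\rho_{\Lambda}\mathrm{e}^{\beta H}$ for all $X\in\oa{A}_{\Lambda}$, so $\rho_{\Lambda}\mathrm{e}^{\beta H}$ is a scalar. Normalization then gives $\rho_{\Lambda}=\mathrm{e}^{-\beta H}/\operatorname{Tr}\mathrm{e}^{-\beta H}=D_{\phi}^{\otimes\Lambda}$.

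The paper avoids the substitution altogether. It keeps $A$ general, derives $\mathrm{e}^{\beta H}\rho_{\Lambda}A\,\mathrm{e}^{-\beta H}=A\rho_{\Lambda}$, and uses that $A\mapsto A\mathrm{e}^{-\beta H}$ is a bijection of $\oa{A}_{\Lambda}$ to see that $\mathrm{e}^{\beta H}\rho_{\Lambda}$ is central. With either correction, the rest of your proof goes through unchanged.
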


\begin{proof}
Let $\oastate[\psi]$ be a KMS state at $\sminvtemperature$ and fix a finite $\Lambda$.
The dynamics preserves $\oa{A}_{\Lambda}$ and acts there by conjugation with $\fnexp{\imunit t \physham^{\phi}_{\Lambda}}$,
$\physham^{\phi}_{\Lambda}
=
\sum_{p
\in
\Lambda} \physham[h]_{\phi, p}$,
with entire continuation $\fun{\tau^{\txtbogoliubov, \phi}_{\imunit \sminvtemperature}}{B}
=
\fnexp{- \sminvtemperature \physham^{\phi}_{\Lambda}} B \fnexp{\sminvtemperature \physham^{\phi}_{\Lambda}}$ for $B
\in
\oa{A}_{\Lambda}$.
The restriction of $\oastate[\psi]$ to the finite-dimensional $\oa{A}_{\Lambda}$ has a density matrix $\rho_{\Lambda}$ with respect to the matrix trace,
and the KMS boundary identity at $t
=
0$ reads
$$\sqfun{\trace}{\rho_{\Lambda} A \fnexp{- \sminvtemperature \physham^{\phi}_{\Lambda}} B \fnexp{\sminvtemperature \physham^{\phi}_{\Lambda}}}
=
\sqfun{\trace}{\rho_{\Lambda} B A}
\quad \rbk{A, B
\in
\oa{A}_{\Lambda}}.$$
Cyclicity of the trace turns this into $\sqfun{\trace}{\rbk{\fnexp{\sminvtemperature \physham^{\phi}_{\Lambda}} \rho_{\Lambda} A \fnexp{- \sminvtemperature \physham^{\phi}_{\Lambda}} - A \rho_{\Lambda}} B}
=
0$ for all $B$,
and the nondegeneracy of the finite-dimensional trace gives the KMS identity
$$\fnexp{\sminvtemperature \physham^{\phi}_{\Lambda}} \rho_{\Lambda} A \fnexp{- \sminvtemperature \physham^{\phi}_{\Lambda}}
=
A \rho_{\Lambda}
\quad \rbk{A
\in
\oa{A}_{\Lambda}}.$$
With $\sigma
=
\fnexp{\sminvtemperature \physham^{\phi}_{\Lambda}} \rho_{\Lambda}$ this says $\sigma \rbk{A \fnexp{- \sminvtemperature \physham^{\phi}_{\Lambda}}}
=
\rbk{A \fnexp{- \sminvtemperature \physham^{\phi}_{\Lambda}}} \sigma$,
and the mapping $A
\mapsto
A \fnexp{- \sminvtemperature \physham^{\phi}_{\Lambda}}$ is a bijection of $\oa{A}_{\Lambda}$.
It follows that $\sigma$ commutes with the full matrix algebra and is a scalar.
Normalizing yields
$\rho_{\Lambda}
=
\fnexp{- \sminvtemperature \physham^{\phi}_{\Lambda}} / \sqfun{\trace}{\fnexp{- \sminvtemperature \physham^{\phi}_{\Lambda}}}$,
which is the density matrix of $\fnrestr{\oastate[\psi_{\sminvtemperature,\phi}]}{\oa{A}_{\Lambda}}$ by \eqref{eq:self-consistency}.
The marginals of $\oastate[\psi]$ and $\oastate[\psi_{\sminvtemperature,\phi}]$ agree on every $\oa{A}_{\Lambda}$.
They agree on the dense subalgebra $\oa{A}_{\txtloc}$,
and continuity extends the equality to $\oa{A}$.
Extremality follows since the KMS states at a fixed inverse temperature form a convex set of which $\oastate[\psi_{\sminvtemperature,\phi}]$ is the only element.
\end{proof}

\begin{thm}[central decomposition of the equilibrium state]\label{thm:central-decomposition}
Assume the superconducting thermal regime of
Definition \ref{def:superconducting-thermal-regime}.
The limit Gibbs state $\oastate[\psi_{\sminvtemperature}]$ of the degenerate BCS model admits the direct integral decomposition
$$\oastate[\psi_{\sminvtemperature}]
=
\int_{\mansphere^{1}} \oastate[\psi_{\sminvtemperature,\phi}] \opdmsr{\msrprb_{\mansphere^{1}}}(\phi),$$
with the following properties.
\begin{enumerate}
\item The GNS representation of $\oastate[\psi_{\sminvtemperature}]$ is the direct integral $\sphilb{H}_{\sminvtemperature}
=
\int^{\oplus}_{\mansphere^{1}} \sphilb{H}_{0} \opdmsr{\msrprb_{\mansphere^{1}}}(\phi)$ with $\oarepn_{\sminvtemperature}
=
\int^{\oplus}_{\mansphere^{1}} \oarepn_{\phi} \opdmsr{\msrprb_{\mansphere^{1}}}(\phi)$,
in the sense of Theorem \ref{thm:gns-identification}.
\item The center of $\oa{M}_{\sminvtemperature}
=
\oadoublecommutant{\fun{\oarepn_{\sminvtemperature}}{\oa{A}}}$ is exactly the algebra $\fun{\lp^{\infty}}{\mansphere^{1}, \msrprb_{\mansphere^{1}}}$ of diagonal operators,
generated by the unitary phase $u$ of the order parameter,
which is the strong limit of the local order parameters $\fun{\oarepn_{\sminvtemperature}}{\frac{2}{r_{0}} M^{+}_{\Omega}}$.
The decomposition is the central decomposition of $\oastate[\psi_{\sminvtemperature}]$.
The states $\oastate[\psi_{\sminvtemperature,\phi}]$ are obtained from the minimal projections of the center in the following sense.
For every measurable $E
\subset
\mansphere^{1}$ of positive measure,
$$\frac{
\bkt{\oagnsvector[\Psi_{\sminvtemperature}]}{
\rbk{M_{1_{E}} \otimes \id_{\sphilb{H}_{0}}}
\fun{\oarepn_{\sminvtemperature}}{A}
\oagnsvector[\Psi_{\sminvtemperature}]}
}{
\bkt{\oagnsvector[\Psi_{\sminvtemperature}]}{
\rbk{M_{1_{E}} \otimes \id_{\sphilb{H}_{0}}}
\oagnsvector[\Psi_{\sminvtemperature}]}
}
=
\frac{1}{\fun{\msrprb_{\mansphere^{1}}}{E}} \int_{E} \fun{\oastate[\psi_{\sminvtemperature,\phi}]}{A} \opdmsr{\msrprb_{\mansphere^{1}}}(\phi).$$
\item The fiber states $\oastate[\psi_{\sminvtemperature,\phi}]$ are mutually disjoint factor states,
invariant under their own Bogoliubov dynamics $\tau^{\txtbogoliubov, \phi}$,
and the unique extremal KMS states of $\tau^{\txtbogoliubov, \phi}$ at inverse temperature
$\sminvtemperature$.
The thermal Green functions of the model converge to the corresponding fiber averages.
\item The gauge group acts covariantly:
$\oastate[\psi_{\sminvtemperature,\phi}] \circ \mathfrak{g}_{\vartheta}
=
\oastate[\psi_{\sminvtemperature,\phi + \vartheta}]$,
the state $\oastate[\psi_{\sminvtemperature}]$ is gauge invariant,
the automorphisms $\mathfrak{g}_{\vartheta}$ are unitarily implemented on $\sphilb{H}_{\sminvtemperature}$ by $\rbk{U_{\vartheta} \xi}{\phi}
=
\fun{\xi}{\phi + \vartheta}$ with $U_{\vartheta} \oagnsvector[\Psi_{\sminvtemperature}]
=
\oagnsvector[\Psi_{\sminvtemperature}]$,
and the induced action on the center is the rotation of the circle.
In each fiber the gauge symmetry is broken,
$\fun{\oastate[\psi_{\sminvtemperature,\phi}]}{\sigma^{+}_{p}}
=
\frac{r_{0}}{2} \fnexp{\imunit \phi}
\neq
0$.
\item For $\sminvtemperature
\leq
\sminvtemperature_{c}$ the limit Gibbs state is the factor state $\oastate[\psi_{\sminvtemperature,\ast}]$,
its GNS representation has trivial center,
and the decomposition is trivial.
\end{enumerate}
\end{thm}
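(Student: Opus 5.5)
The proof will be an assembly of the results already established in Sections \ref{sec:limitstate}--\ref{sec:decomposition}, treated item by item.

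For (1), the decomposition formula itself is Theorem \ref{thm:limit-state}, and the GNS identification is Theorem \ref{thm:gns-identification}.

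For (2), the identification of the center with $\oa{Z}_{\sminvtemperature}\cong\fun{\lp^{\infty}}{\mansphere^{1},\msrprb_{\mansphere^{1}}}$ and its generation by the strong limit $u$ of $\fun{\oarepn_{\sminvtemperature}}{\frac{2}{r_{0}}M^{+}_{\Omega}}$ are Theorem \ref{thm:center}. To see that the decomposition is the central one, I will note that the measure on the state space given by $\phi\mapsto\oastate[\psi_{\sminvtemperature,\phi}]$ is the orthogonal measure associated with the abelian subalgebra $\oa{Z}_{\sminvtemperature}$ of $\oacommutant{\oa{M}_{\sminvtemperature}}$. Since this subalgebra is the full center, the decomposition is central by the standard characterization (Bratteli--Robinson, Section 4.4). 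The displayed formula for $M_{1_E}\otimes\id$ is a direct computation: $\oagnsvector[\Psi_{\sminvtemperature}]$ is the constant section, so both numerator and denominator are integrals over $E$ of fiberwise expectations, the denominator being $\fun{\msrprb_{\mansphere^{1}}}{E}$. Shrinking $E$ to a point $\phi$ recovers the fiber state by weak-$\ast$ continuity in $\phi$.

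For (3), disjointness is Proposition \ref{prop:fiber-disjoint}, factoriality is Proposition \ref{prop:bogoliubov-product-properties}(2), the KMS property is Proposition \ref{prop:fiber-kms}, and uniqueness and extremality are Proposition \ref{prop:kms-unique}. Invariance under $\tau^{\txtbogoliubov,\phi}$ follows from the KMS condition, or directly from the commutation of $D_{\phi}^{\otimes\Lambda}$ with $\physham^{\phi}_{\Lambda}$. The Green-function statement is Theorem \ref{thm:green}.

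For (4), covariance is \eqref{eq:gauge-covariance-states}, and gauge invariance of $\oastate[\psi_{\sminvtemperature}]$ then follows from translation invariance of Haar measure. I define $U_{\vartheta}$ by the shift $(U_{\vartheta}\xi)(\phi)=\xi(\phi+\vartheta)$, which is unitary by invariance of $\msrprb_{\mansphere^{1}}$ and fixes the constant section. The implementation property is checked fiberwise from $\oarepn_{\phi+\vartheta}(A)=\oarepn_{\phi}(\mathfrak{g}_{\vartheta}(A))$, with the ordering of $\mathfrak{g}_{\phi}\circ\mathfrak{g}_{\vartheta}$ handled by commutativity of the gauge group; if needed the sign of $\vartheta$ is fixed by this check. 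This gives $U_{\vartheta}\oarepn_{\sminvtemperature}(A)\faadj{U_{\vartheta}}=\oarepn_{\sminvtemperature}(\mathfrak{g}_{\vartheta}(A))$. On diagonal operators, $U_{\vartheta}$ conjugation acts by $M_z\mapsto M_{z(\cdot+\vartheta)}$, which is the rotation of the circle. Symmetry breaking in each fiber is the value of $m^{+}_{\phi}$ in \eqref{eq:self-consistency}.

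For (5), Theorem \ref{thm:limit-state} gives the product state $\oastate[\psi_{\sminvtemperature,\ast}]$. Its one-site Bloch vector has modulus $\abs{\tanh(\sminvtemperature\varepsilon)}<1$, so the density matrix is invertible and Theorem \ref{thm:factor} gives factoriality. The center is then trivial and the decomposition is trivial.

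The only step that is not a pure citation is the identification in (2) of the decomposition with the central one in the orthogonal-measure sense. I expect it to be the main obstacle, though a mild one: one has to check that the map $\oa{Z}_{\sminvtemperature}\ni M_{1_E}\otimes\id\mapsto$ the unnormalized state displayed above matches the measure $\int_E\delta_{\oastate[\psi_{\sminvtemperature,\phi}]}\,d\msrprb_{\mansphere^{1}}$. Once this is granted, separability of $\oa{A}$ makes the central measure unique and concentrated on factor states, in agreement with (3).
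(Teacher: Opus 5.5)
Your proposal is correct and follows the paper's proof essentially item by item. Item (1) comes from Theorem~\ref{thm:gns-identification}, (2) from Theorem~\ref{thm:center} plus the constant-section computation, (3) from Propositions~\ref{prop:fiber-disjoint}, \ref{prop:fiber-kms}, \ref{prop:kms-unique}, Theorem~\ref{thm:factor} and Theorem~\ref{thm:green}, (4) from \eqref{eq:gauge-covariance-states} and the shift computation $\pi_{\phi+\vartheta}=\pi_{\phi}\circ\mathfrak{g}_{\vartheta}$, and (5) from Theorem~\ref{thm:limit-state} with Theorem~\ref{thm:factor}. The only difference is in (2), where you make ``central'' precise through the orthogonal-measure correspondence. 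There $z\mapsto M_{z}\otimes\mathrm{id}$ is the $\ast$-homomorphism attached to the push-forward measure, whose range is the full center; $\phi\mapsto\omega_{\beta,\phi}$ is injective because $r_{0}>0$. This is a sharper rendering of the paper's appeal to Bratteli--Robinson's decomposition theory rather than a different route.
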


\begin{proof}

(1)
The first assertion is Theorem \ref{thm:gns-identification}.

(2)
The second assertion up to the formula is Theorem \ref{thm:center}.
The formula itself is the computation
$$\bkt{\oagnsvector[\Psi_{\sminvtemperature}]}{
\rbk{M_{1_{E}} \otimes \id_{\sphilb{H}_{0}}}
\fun{\oarepn_{\sminvtemperature}}{A}
\oagnsvector[\Psi_{\sminvtemperature}]}
=
\int_{E} \fun{\oastate[\psi_{\sminvtemperature,\phi}]}{A} \opdmsr{\msrprb_{\mansphere^{1}}}(\phi),
\quad
\bkt{\oagnsvector[\Psi_{\sminvtemperature}]}{
\rbk{M_{1_{E}} \otimes \id_{\sphilb{H}_{0}}}
\oagnsvector[\Psi_{\sminvtemperature}]}
=
\fun{\msrprb_{\mansphere^{1}}}{E},$$
both immediate from Definition \ref{def:direct-integral}.
That the decomposition deserves the name central is the combination of these formulas with (2).
The states subordinate to $\oastate[\psi_{\sminvtemperature}]$ through positive elements of the center are exactly the averages of the $\oastate[\psi_{\sminvtemperature,\phi}]$ over measurable sets.
More generally,
if $z
\in
\fun{\lp^{\infty}}{\mansphere^{1}, \msrprb_{\mansphere^{1}}}$ is nonnegative and
$\int_{\mansphere^{1}} z \opdmsr{\msrprb_{\mansphere^{1}}}(\phi)
=
1$,
the corresponding subordinate state is
$$A
\mapsto
\int_{\mansphere^{1}} z\rbk{\phi} \fun{\oastate[\psi_{\sminvtemperature,\phi}]}{A}
\opdmsr{\msrprb_{\mansphere^{1}}}(\phi).$$
These states realize the decomposition of $\oastate[\psi_{\sminvtemperature}]$ along $\fun{Z}{\oa{M}_{\sminvtemperature}}
=
\fun{\lp^{\infty}}{\mansphere^{1}, \msrprb_{\mansphere^{1}}}$ with fiber states $\oastate[\psi_{\sminvtemperature,\phi}]$,
in the sense of the decomposition theory of \cite[Section 4.2]{BratteliRobinson003}.

(3)
The third assertion collects Proposition \ref{prop:fiber-disjoint},
Theorem \ref{thm:factor},
Propositions \ref{prop:fiber-kms} and \ref{prop:kms-unique},
and Theorem \ref{thm:green}.
Invariance of $\oastate[\psi_{\sminvtemperature,\phi}]$ under its dynamics follows from the KMS property at $t$ real,
or directly from the invariance of $\oastate[\psi^{\rbk{1}}_{\sminvtemperature,\phi}]$ under $\fun{\Ad}{\fnexp{\imunit t \physham[h]_{\phi}}}$.

(4)
The covariance and invariance are \eqref{eq:gauge-covariance-states} and Theorem \ref{thm:limit-state}.
For the implementation,
$U_{\vartheta}$ is unitary because the measure is rotation invariant.
It fixes the constant section $\oagnsvector[\Psi_{\sminvtemperature}]$.
Its covariance is expressed by
$$\funrbk{U_{\vartheta} \fun{\oarepn_{\sminvtemperature}}{A} \xi}{\phi}
=
\fun{\oarepn_{\phi + \vartheta}}{A} \fun{\xi}{\phi + \vartheta}
=
\funrbk{\fun{\oarepn_{\sminvtemperature}}{\fun{\mathfrak{g}_{\vartheta}}{A}} U_{\vartheta} \xi}{\phi},$$
where $\oarepn_{\phi + \vartheta}
=
\oarepn_{\phi} \circ \mathfrak{g}_{\vartheta}$.
On the center,
$U_{\vartheta}
\rbk{M_{z} \otimes \id_{\sphilb{H}_{0}}}
\faadj{U_{\vartheta}}
=
M_{z \rbk{\cdot + \vartheta}} \otimes \id_{\sphilb{H}_{0}}$ by the same computation,
the rotation of the circle.
Symmetry breaking in the fibers is Proposition \ref{prop:odlro}.

(5)
For $\sminvtemperature
\leq
\sminvtemperature_{c}$ Theorem \ref{thm:limit-state} gives $\oastate[\psi_{\sminvtemperature}]
=
\oastate[\psi_{\sminvtemperature,\ast}]$,
a faithful product state,
and Theorem \ref{thm:factor} gives factoriality.
It follows that the center of its GNS representation is trivial.
\end{proof}

\begin{rem}[equilibrium decomposition and the dynamics]\label{rem:equilibrium-fixed-dynamics}
The extremal decomposition of a KMS state for a fixed $\oacstar$-dynamical system keeps that dynamics fixed
\cite[Theorem 5.3.30]{BratteliRobinson004}.
Theorem \ref{thm:central-decomposition} is not such a decomposition on $\oa{A}$.
Each fiber is KMS for a phase-dependent $\tau^{\txtbogoliubov,\phi}$,
whose generator depends on $\phi$ when $r_{0}>0$.
Proposition \ref{prop:phase-extended-kms} restores one fixed dynamics on the enlarged algebra
$\oa{C}$.
The phase is a central classical variable.
The fiber family is obtained by evaluating
$\widetilde{\tau}^{\sminvtemperature}$
at each phase,
and \eqref{eq:phase-extended-kms-state} is KMS for this single extended dynamics.
The general mean-field variational decomposition of limiting Gibbs states into pure phases is established in \cite{RaggioWerner002}.
The theorem here realizes the same pure-phase decomposition mechanism explicitly as the gauge-circle orbit of the degenerate model and records the associated fiber dynamics.
\end{rem}

Theorem \ref{thm:central-decomposition} completes the program: above the critical inverse temperature the equilibrium state of the BCS model is not a factor state. Its central decomposition is a genuine direct integral over the phase circle of the order parameter. The disjoint fibers are factor KMS states. Each fiber carries its own Bogoliubov dynamics and is the unique KMS product state of the corresponding self-consistent mean field. The gauge symmetry is unbroken in the average, but it is broken in every fiber and transitively permutes the fibers. This is the operator-algebraic content of the Bogoliubov--Haag treatment of the BCS model \cite{BogoliubovNikolai002,RudolfHaag005,ThirringWehrl001,ThirringWalter001}, in the decomposition-theoretic language of \cite{HaagHugenholtzWinnink001,EmchGuenin001,BratteliRobinson003,BratteliRobinson004}. The analogy with the gauge-orbit decomposition of the condensed phases of the Bose gas, and with the general theory of ergodic decompositions of invariant equilibrium states, is developed in the companion note on the hard-core Bose gas.

The finite-temperature decomposition has the ground-state decomposition of Theorem \ref{thm:ground-state-direct-integral} as its zero-temperature limit. The order of limits in the following result is the thermodynamic limit followed by the zero-temperature limit.

\begin{cor}[zero-temperature limit of the central decomposition]\label{cor:zero-temperature-central-decomposition}
Assume the degenerate strong-coupling condition
$\sminvtemperature_{c}\abs{\varepsilon}
<
1$ from Corollary \ref{cor:strong-coupling}.
Restore the inverse-temperature dependence in
\eqref{eq:order-parameters}--\eqref{eq:symmetry-breaking-bloch-vector} by writing
$$\begin{aligned}
\eta_{0,\sminvtemperature}
&=
\fun{\eta_{0}}{\sminvtemperature},
\\
r_{0,\sminvtemperature}
&=
\sqrt{\eta_{0,\sminvtemperature}^{2}
-
\sminvtemperature_{c}^{2}\varepsilon^{2}},
\\
\physvec{m}_{\sminvtemperature,\phi}
&=
\rbk{
r_{0,\sminvtemperature}\fun{\cos}{\phi},
r_{0,\sminvtemperature}\fun{\sin}{\phi},
\sminvtemperature_{c}\varepsilon
}.
\end{aligned}$$
The following conclusions hold as
$\sminvtemperature
\to
\infty$.
\begin{enumerate}
\item The superconducting thermal regime of
Definition \ref{def:superconducting-thermal-regime} holds for every sufficiently large
$\sminvtemperature$,
and the order parameters satisfy
\begin{equation}\label{eq:zero-temperature-order-parameters}
\begin{aligned}
\eta_{0,\sminvtemperature}
\to
1,
\quad
r_{0,\sminvtemperature}
\to
r
=
\sqrt{1 - \sminvtemperature_{c}^{2}\varepsilon^{2}}.
\end{aligned}
\end{equation}
The thermal Bloch vectors converge uniformly on the gauge circle to the ground-state vectors of
Corollary \ref{cor:strong-coupling}:
\begin{equation}\label{eq:zero-temperature-bloch-vectors}
\sup_{\phi
\in
[0, 2 \pi)}
\abs{
\physvec{m}_{\sminvtemperature,\phi}
-
\fun{\physvec{u}}{\phi}
}
=
\abs{
r_{0,\sminvtemperature}
-
r
}
\to
0.
\end{equation}
\item For every $A
\in
\oa{A}$,
the thermal fiber states converge to the pure product ground states uniformly in the gauge angle:
\begin{equation}\label{eq:zero-temperature-fiber-states}
\sup_{\phi
\in
[0, 2 \pi)}
\abs{
\fun{\oastate[\psi_{\sminvtemperature,\phi}]}{A}
-
\fun{\oastate[\psi_{\fun{\physvec{\omega}}{\phi}}]}{A}
}
\to
0.
\end{equation}
The one-site effective Hamiltonians satisfy
\begin{equation}\label{eq:zero-temperature-effective-hamiltonians}
\sup_{\phi
\in
[0, 2 \pi)}
\norm{
\physham[h]_{\sminvtemperature,\phi}
+
\frac{1}{\sminvtemperature_{c}}
\physvec{\sigma}
\cdot
\fun{\physvec{u}}{\phi}
}
\to
0,
\end{equation}
where
$\physham[h]_{\sminvtemperature,\phi}$
denotes the Hamiltonian
$\physham[h]_{\phi}$ of \eqref{eq:self-consistency} with its dependence on
$\sminvtemperature$ restored.
The limiting one-site generator differs by the scalar
$1 / \sminvtemperature_{c}$
from the positive summand of
$\physham_{\txtbogoliubov,\fun{\physvec{\omega}}{\phi}}$
in Corollary \ref{cor:strong-coupling}.
\item The limit Gibbs states of Theorem \ref{thm:limit-state} converge in the weak-$\ast$ topology to the gauge-averaged ground-sector state:
\begin{equation}\label{eq:zero-temperature-averaged-state}
\lim_{\sminvtemperature
\to
\infty}
\fun{\oastate[\psi_{\sminvtemperature}]}{A}
=
\fun{\oastate[\psi_{\txtgs}]}{A}.
\end{equation}
Equivalently,
the finite-volume Gibbs states have the iterated limit
\begin{equation}\label{eq:iterated-zero-temperature-limit}
\lim_{\sminvtemperature
\to
\infty}
\lim_{\Omega
\to
\infty}
\fun{\oastate[\widehat{\psi}_{\sminvtemperature,\Omega}]}{A}
=
\fun{\oastate[\psi_{\txtgs}]}{A}.
\end{equation}
\item Let
$\Phi_{\sminvtemperature}$ and
$\Phi_{\txtgs}$ be the maps from $\mansphere^{1}$ to
$\oaspstate_{\oa{A}}$ defined by
$$\begin{aligned}
\fun{\Phi_{\sminvtemperature}}{\phi}
&=
\oastate[\psi_{\sminvtemperature,\phi}],
\\
\fun{\Phi_{\txtgs}}{\phi}
&=
\oastate[\psi_{\fun{\physvec{\omega}}{\phi}}].
\end{aligned}$$
The central decomposition measures converge weakly on
$\oaspstate_{\oa{A}}$:
\begin{equation}\label{eq:zero-temperature-central-measures}
\wlim_{\sminvtemperature \to \infty}
\pushout{\rbk{\Phi_{\sminvtemperature}}}
\msrprb_{\mansphere^{1}}
=
\pushout{\rbk{\Phi_{\txtgs}}}
\msrprb_{\mansphere^{1}}.
\end{equation}
The barycenters in \eqref{eq:zero-temperature-central-measures} are respectively the finite-temperature state of
Theorem \ref{thm:central-decomposition} and the ground-sector state of
Theorem \ref{thm:ground-state-direct-integral}.
The finite-temperature central decomposition into factor states with invertible one-site density matrices converges to the central decomposition into pure product ground states.
\end{enumerate}
\end{cor}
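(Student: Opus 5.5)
The plan is to reduce everything to one-site Bloch vectors and then pass to states by the product structure and density of $\oa{A}_{\txtloc}$.

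For (1), the superconducting condition $\sminvtemperature_{c}\abs{\varepsilon}<\eta_{0,\sminvtemperature}$ follows once we know $\eta_{0,\sminvtemperature}\to 1$. For this I would use the fixed-point characterization in Proposition \ref{prop:maximizer}(1). The function $\sminvtemperature\mapsto\eta_{0}(\sminvtemperature)$ is increasing, because $\fun{\mathrm{arctanh}}{\eta}/\eta$ is strictly increasing and equals $\sminvtemperature/\sminvtemperature_{c}$ at $\eta_{0}$. The same identity with $\sminvtemperature/\sminvtemperature_{c}\to\infty$ forces $\eta_{0}\to 1$: were $\eta_{0}\le\eta_{1}<1$ along a subsequence, the left side would stay bounded. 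Since $\sminvtemperature_{c}\abs{\varepsilon}<1$, the inequality $\sminvtemperature_{c}\abs{\varepsilon}<\eta_{0,\sminvtemperature}$ then holds for large $\sminvtemperature$. Continuity of the square root gives $r_{0,\sminvtemperature}\to r$. Because the $z$-components of $\physvec{m}_{\sminvtemperature,\phi}$ and $\fun{\physvec{u}}{\phi}$ coincide and the transverse parts differ only in radius, the difference equals $\abs{r_{0,\sminvtemperature}-r}$ for every $\phi$, which is \eqref{eq:zero-temperature-bloch-vectors}.

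For (2), the one-site states have density matrices $\frac12(1+\physvec{m}\cdot\physvec{\sigma})$. Since $\fun{\physvec{u}}{\phi}$ is a unit vector, its density matrix is the projection onto $\xi^{+}$, so the one-site ground state of Definition \ref{def:spin-product-representation} is its Bloch state. The one-site states therefore differ in norm by at most $\frac12\abs{\physvec{m}_{\sminvtemperature,\phi}-\fun{\physvec{u}}{\phi}}\cdot\norm{\physvec{\sigma}}$-type constants, uniformly in $\phi$. For an elementary tensor on $k$ sites, a telescoping product estimate bounds the difference of product-state values by $k$ times the one-site difference times the product of norms. Linearity and the decomposition of local elements into elementary tensors extend the uniform convergence to $\oa{A}_{\txtloc}$. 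An $\epsilon/3$ argument using $\norm{\text{state}}=1$ extends it to $\oa{A}$, still uniformly in $\phi$. For \eqref{eq:zero-temperature-effective-hamiltonians}, $\physham[h]_{\sminvtemperature,\phi}=-\sminvtemperature_{c}^{-1}\physvec{\sigma}\cdot\physvec{m}_{\sminvtemperature,\phi}$ by \eqref{eq:self-consistency}, and $\norm{\physvec{\sigma}\cdot\physvec{a}}=\abs{\physvec{a}}$, so the norm equals $\sminvtemperature_{c}^{-1}\abs{r_{0,\sminvtemperature}-r}$. The scalar remark compares this with \eqref{eq:ground-fiber-bogoliubov-dynamics}.

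For (3), I would integrate (2) over $\phi$ using \eqref{eq:limit-state} and \eqref{eq:ground-state-average}; uniform convergence gives \eqref{eq:zero-temperature-averaged-state}. Identity \eqref{eq:iterated-zero-temperature-limit} is then Theorem \ref{thm:limit-state} applied at each large $\sminvtemperature$. For (4), take $f$ bounded and continuous on $\oaspstate_{\oa{A}}$ with the weak-$\ast$ topology. Then $\int f\circ\Phi_{\sminvtemperature}\,\opdmsr{\msrprb_{\mansphere^{1}}}\to\int f\circ\Phi_{\txtgs}\,\opdmsr{\msrprb_{\mansphere^{1}}}$ by dominated convergence, since (2) gives $\Phi_{\sminvtemperature}(\phi)\to\Phi_{\txtgs}(\phi)$ weak-$\ast$ pointwise. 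The barycenter and central-decomposition identifications are Theorem \ref{thm:central-decomposition} and Theorem \ref{thm:ground-state-direct-integral}.

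The main obstacle, such as it is, is the monotonicity and limit of $\eta_{0}(\sminvtemperature)$ in (1), together with making the one-site-to-product uniformity in (2) explicit. Everything else is bookkeeping.
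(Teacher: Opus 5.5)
Your proposal is correct and follows the paper's proof essentially step by step. Specifically, you obtain $\eta_{0,\beta}\to 1$ from the bounded left side of $\mathrm{arctanh}(\eta_{0,\beta})/\eta_{0,\beta}=\beta/\beta_{c}$, propagate the one-site Bloch-vector estimate to product states by telescoping and density uniformly in $\phi$, and integrate over the gauge circle for (3). The one variation is in (4): you use pointwise weak-$\ast$ convergence of $\Phi_{\beta}(\phi)$ with dominated convergence, whereas the paper uses uniform convergence in a metric on the compact metrizable state space. Your version is equally valid, because $f\circ\Phi_{\beta}$ is continuous and hence measurable, and it is slightly more elementary.
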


\begin{proof}

(1)
The fixed-point equation of Proposition \ref{prop:maximizer}(1) is
\begin{equation}\label{eq:zero-temperature-fixed-point}
\frac{\fun{\mathrm{arctanh}}{\eta_{0,\sminvtemperature}}}
{\eta_{0,\sminvtemperature}}
=
\frac{\sminvtemperature}{\sminvtemperature_{c}}.
\end{equation}
If \eqref{eq:zero-temperature-order-parameters} failed in its first assertion,
there would be a $\delta
>
0$ and a sequence
$\sminvtemperature_{j}
\to
\infty$ for which
$\eta_{0,\sminvtemperature_{j}}
\leq
1
-
\delta$.
The left side of
\eqref{eq:zero-temperature-fixed-point} would remain bounded,
whereas its right side would tend to infinity.
This is impossible.
The remaining assertions in
\eqref{eq:zero-temperature-order-parameters} and
\eqref{eq:zero-temperature-bloch-vectors} follow from their defining formulas.
The inequality
$\sminvtemperature_{c}\abs{\varepsilon}
<
\eta_{0,\sminvtemperature}$
holds eventually because
$\sminvtemperature_{c}\abs{\varepsilon}
<
1$.
This proves (1).

(2)
Let
$\oastate[\psi^{\rbk{1}}_{\fun{\physvec{\omega}}{\phi}}]$
denote the one-site pure state with Bloch vector
$\fun{\physvec{u}}{\phi}$.
The Bloch parametrization
\eqref{eq:bloch-state} gives
$$\norm{
\oastate[\psi^{\rbk{1}}_{\sminvtemperature,\phi}]
-
\oastate[\psi^{\rbk{1}}_{\fun{\physvec{\omega}}{\phi}}]
}
=
\abs{
\physvec{m}_{\sminvtemperature,\phi}
-
\fun{\physvec{u}}{\phi}
}
=
\abs{
r_{0,\sminvtemperature}
-
r
}.$$
For $A
\in
\oa{A}_{\Lambda}$,
the telescoping expansion of the two product states gives
\begin{equation}\label{eq:zero-temperature-local-bound}
\sup_{\phi
\in
[0, 2 \pi)}
\abs{
\fun{\oastate[\psi_{\sminvtemperature,\phi}]}{A}
-
\fun{\oastate[\psi_{\fun{\physvec{\omega}}{\phi}}]}{A}
}
\leq
\abscard{\Lambda}
\abs{
r_{0,\sminvtemperature}
-
r
}
\norm{A}.
\end{equation}
The density of $\oa{A}_{\txtloc}$ in $\oa{A}$ and the unit norm of every state extend
\eqref{eq:zero-temperature-local-bound} to the convergence
\eqref{eq:zero-temperature-fiber-states}.
Equation \eqref{eq:self-consistency} gives
$$\physham[h]_{\sminvtemperature,\phi}
=
-
\frac{1}{\sminvtemperature_{c}}
\physvec{\sigma}
\cdot
\physvec{m}_{\sminvtemperature,\phi}.$$
Equation \eqref{eq:zero-temperature-effective-hamiltonians} follows from
\eqref{eq:zero-temperature-bloch-vectors}.
This proves (2).

(3)
The central decomposition in
Theorem \ref{thm:central-decomposition} and
the uniform convergence
\eqref{eq:zero-temperature-fiber-states} give
$$\begin{aligned}
\abs{
\fun{\oastate[\psi_{\sminvtemperature}]}{A}
-
\fun{\oastate[\psi_{\txtgs}]}{A}
}
&\leq
\int_{\mansphere^{1}}
\abs{
\fun{\oastate[\psi_{\sminvtemperature,\phi}]}{A}
-
\fun{\oastate[\psi_{\fun{\physvec{\omega}}{\phi}}]}{A}
}
\opdmsr{\msrprb_{\mansphere^{1}}}(\phi)
\\
&\leq
\sup_{\phi
\in
[0, 2 \pi)}
\abs{
\fun{\oastate[\psi_{\sminvtemperature,\phi}]}{A}
-
\fun{\oastate[\psi_{\fun{\physvec{\omega}}{\phi}}]}{A}
}
\to
0.
\end{aligned}$$
Theorem \ref{thm:limit-state} identifies the inner limit in
\eqref{eq:iterated-zero-temperature-limit}.
This proves (3).

(4)
The algebra $\oa{A}$ is separable.
Its state space is therefore compact and metrizable in the weak-$\ast$ topology.
Equation \eqref{eq:zero-temperature-fiber-states} implies that the maps
$\Phi_{\sminvtemperature}$
converge uniformly to
$\Phi_{\txtgs}$ in a metric inducing that topology.
Every continuous function on the compact state space is uniformly continuous.
Its integrals against the two pushforward measures in
\eqref{eq:zero-temperature-central-measures} therefore converge.
This is weak convergence of the central decomposition measures and proves (4).
\end{proof}

Corollary \ref{cor:zero-temperature-central-decomposition} concerns the state-valued central decompositions and their barycenters. It does not assert unitary convergence of the GNS triples: the finite-temperature fibers are faithful factor representations, whereas the limiting fibers are pure irreducible representations. It also does not interchange the thermodynamic and zero-temperature limits in \eqref{eq:iterated-zero-temperature-limit}.

\appendix

\section{Appendix}\label{appendix}

The appendices supply some results and proofs invoked in the main text.

\subsection{Functional-Analytic Lemmas}\label{sec:functional-analytic-appendix}

Elementary Hilbert-space and direct-integral facts used in the main text are collected here. Throughout the appendices, \(\msrprb_{\mansphere^{1}}
=
\frac{\opdmsr{\phi}}{2 \pi}\) denotes the normalized Haar probability measure on \(\mansphere^{1}\).

\subsubsection{Strong convergence}\label{strong-convergence}

The following lemmas record the stability properties of strong convergence used in the main limiting arguments.

\begin{lem}[products of strongly convergent sequences]\label{lem:strong-products}
Let $A_{\Omega}
\to
A$ and $B_{\Omega}
\to
B$ strongly as $\Omega \to \infty$ on a Hilbert space with $\sup_{\Omega \geq 1} \norm{A_{\Omega}} < \infty$.
Then $A_{\Omega} B_{\Omega}
\to
A B$ strongly as $\Omega \to \infty$.
\end{lem}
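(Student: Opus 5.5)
The plan is to split $A_{\Omega}B_{\Omega}\Psi - AB\Psi$ into two differences and treat them separately. Fix a vector $\Psi$ in the Hilbert space and set $C = \sup_{\Omega \geq 1}\norm{A_{\Omega}} < \infty$. Inserting and subtracting $A_{\Omega}B\Psi$ gives
$$A_{\Omega}B_{\Omega}\Psi - AB\Psi
=
A_{\Omega}\rbk{B_{\Omega} - B}\Psi
+
\rbk{A_{\Omega} - A}B\Psi .$$
The target is to show that each of the two terms on the right tends to $0$ in norm.

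For the first term, the uniform bound gives
$$\norm{A_{\Omega}\rbk{B_{\Omega} - B}\Psi} \leq C \norm{\rbk{B_{\Omega} - B}\Psi}.$$
This tends to $0$ by strong convergence of $B_{\Omega}$ applied to the fixed vector $\Psi$. For the second term, I apply strong convergence of $A_{\Omega}$ to the fixed vector $B\Psi$, which gives $\norm{\rbk{A_{\Omega} - A}B\Psi} \to 0$ directly. The triangle inequality then finishes the argument.

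The only point that needs care is where the uniform bound is used. It is needed precisely on the first term, because there the operator $A_{\Omega}$ acts on a moving vector $\rbk{B_{\Omega} - B}\Psi$, and without a bound on $\norm{A_{\Omega}}$ one could not pass from smallness of that vector to smallness of its image. No bound on $B_{\Omega}$ is needed. Even the bound on $A_{\Omega}$ would follow from the uniform boundedness principle in the sequential setting, but I take it as given, as the statement does. No other step presents an obstacle.
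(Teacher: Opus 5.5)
Your proof is correct and is the same argument as the paper's: insert $A_{\Omega}B\Psi$, bound the first term by $\sup_{\Omega}\norm{A_{\Omega}}\,\norm{(B_{\Omega}-B)\Psi}$, and apply strong convergence of $A_{\Omega}$ to the fixed vector $B\Psi$. Your side remark is also correct: for sequences, the uniform bound on $\norm{A_{\Omega}}$ would follow automatically from the uniform boundedness principle.
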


\begin{proof}
For a vector $\Psi$,
$\norm{\rbk{A_{\Omega} B_{\Omega} - A B} \Psi}
\leq
\norm{A_{\Omega}} \norm{\rbk{B_{\Omega} - B} \Psi} + \norm{\rbk{A_{\Omega} - A} B \Psi}
\to
0$ as $\Omega \to \infty$.
\end{proof}

\begin{lem}[exponentials of strongly convergent sequences]\label{lem:strong-exponentials}
Let $X_{\Omega}
\to
X$ strongly as $\Omega \to \infty$ with all $X_{\Omega}, X$ self-adjoint and $\sup_{\Omega \geq 1} \norm{X_{\Omega}}
=
C < \infty$.
Then $\fnexp{\imunit t X_{\Omega}}
\to
\fnexp{\imunit t X}$ strongly as $\Omega \to \infty$,
uniformly for $t$ in compact subsets of $\fldreal$.
\end{lem}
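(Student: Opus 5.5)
The plan is to first prove the statement for a single fixed vector $\Psi$ and a single time $t$, and then upgrade it to uniform convergence on compact $t$-sets. The tool is Duhamel's formula, which is legitimate here because every operator involved is bounded; no domain issues arise. Writing $U_{\Omega}(t)=\fnexp{\imunit t X_{\Omega}}$ and $U(t)=\fnexp{\imunit t X}$, one has
$$U_{\Omega}(t)\Psi-U(t)\Psi
=
\imunit\int_{0}^{t} U_{\Omega}(t-s)\rbk{X_{\Omega}-X}U(s)\Psi \,\opdmsr{s}.$$
This follows by differentiating $s\mapsto U_{\Omega}(t-s)U(s)\Psi$ in norm, which is allowed since all generators are bounded, and integrating from $0$ to $t$.

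Since each $U_{\Omega}(t-s)$ is unitary, the Duhamel formula gives
$$\norm{\rbk{U_{\Omega}(t)-U(t)}\Psi}
\leq
\int_{0}^{\abs{t}}\norm{\rbk{X_{\Omega}-X}U(s)\Psi}\,\opdmsr{s}.$$
I then bound the right side uniformly for $\abs{t}\leq T$. The map $s\mapsto U(s)\Psi$ is norm continuous, so its image $K=\set{U(s)\Psi}{\abs{s}\leq T}$ is a compact subset of the Hilbert space.

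The key step is to show that strong convergence of $X_{\Omega}$ is uniform on the compact set $K$. This is an $\epsilon/3$-argument that uses the uniform bound $\norm{X_{\Omega}-X}\leq 2C$; note that $\norm{X}\leq C$ follows from the strong limit. I cover $K$ by finitely many $\epsilon$-balls, use pointwise convergence at the finitely many centres, and control the remaining distance to a centre by $2C\epsilon$. This gives $\sup_{\xi\in K}\norm{(X_{\Omega}-X)\xi}\to0$, so the integral bound is at most $T\sup_{K}\norm{(X_{\Omega}-X)\xi}\to0$, uniformly for $\abs{t}\leq T$.

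No extension from a dense set of vectors is needed, since the argument already handles every $\Psi$ directly. The only step requiring care is the uniformity on $K$, which rests on compactness of the orbit together with the uniform bound $C$. An alternative route would approximate $\fnexp{\imunit t x}$ uniformly on $[-C,C]$ by polynomials and use Lemma \ref{lem:strong-products} for powers of $X_{\Omega}$; I would keep that as a fallback.
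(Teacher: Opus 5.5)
Your proof is correct and follows the paper's route. Both arguments use Duhamel's formula for the bounded generators and then the estimate $\int_0^{|t|}\|(X_\Omega-X)\,U(\pm s)\Psi\|\,ds$. For $t<0$ the integrand should involve $U(-s)\Psi$ rather than $U(s)\Psi$, but your set $K$ already contains these vectors, so nothing changes.

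The only difference is how the limit is moved inside the integral. The paper uses dominated convergence in $s$ together with monotonicity of the bound in $|t|$. You instead use uniform convergence of the uniformly bounded sequence $X_\Omega-X$ on the compact orbit $K$, which gives the uniformity in $t$ a little more directly.
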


\begin{proof}
Note first $\norm{X}
\leq
C$,
since $\norm{X \Psi}
=
\lim \norm{X_{\Omega} \Psi}
\leq
C \norm{\Psi}$.
The Duhamel formula
$$\fnexp{\imunit t X_{\Omega}} - \fnexp{\imunit t X}
=
\imunit \int_{0}^{t} \fnexp{\imunit \rbk{t - s} X_{\Omega}} \rbk{X_{\Omega} - X} \fnexp{\imunit s X} \opdmsr{s}$$
follows by integrating the derivative of $s
\mapsto
\fnexp{\imunit \rbk{t - s} X_{\Omega}} \fnexp{\imunit s X}$.
The integrand is norm continuous in $s$ because both exponential families are norm continuous for bounded generators.
Applied to a vector $\Psi$,
$$\norm{\rbk{\fnexp{\imunit t X_{\Omega}} - \fnexp{\imunit t X}} \Psi}
\leq
\int_{0}^{\abs{t}} \norm{\rbk{X_{\Omega} - X} \fnexp{\pm \imunit s X} \Psi} \opdmsr{s},$$
with the sign of $t$.
The integrand tends to $0$ for each $s$ and is bounded by $2 C \norm{\Psi}$.
It follows that the integral tends to $0$ by dominated convergence,
uniformly for $t$ in compacts by monotonicity of the bound in $\abs{t}$.
\end{proof}

\subsubsection{Decomposable operators}\label{decomposable-operators}

Let \(\sphilb{H}_0\) be a separable complex Hilbert space. The following results characterize decomposable operators and diagonal multiplication operators on the direct integral used in Section \ref{sec:decomposition}.

\begin{lem}[commutant of the diagonal algebra]\label{lem:diagonal-commutant}
A bounded operator on $\sphilb{H}_{\sminvtemperature}
=
\fun{\lp^{2}}{\mansphere^{1}, \msrprb_{\mansphere^{1}}} \otimes \sphilb{H}_{0}$ commutes with all diagonal operators
$M_{z} \otimes \id_{\sphilb{H}_{0}}$,
$z
\in
\fun{\lp^{\infty}}{\mansphere^{1}, \msrprb_{\mansphere^{1}}}$,
if and only if it is decomposable.
\end{lem}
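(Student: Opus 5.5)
The plan is to prove the two directions separately. The easy direction, that a decomposable operator commutes with every diagonal operator, is a fiberwise calculation. The converse is the standard description of the commutant of $L^{\infty}\otimes 1$ as $L^{\infty}(\mansphere^{1};\opspbddlin{\sphilb{H}_{0}})$. For that direction I would use the identification $\sphilb{H}_{\sminvtemperature}=\fun{\lp^{2}}{\mansphere^{1},\msrprb_{\mansphere^{1}}}\otimes\sphilb{H}_{0}$ and fix an orthonormal basis $\seq{e_{i}}{i\geq1}$ of $\sphilb{H}_{0}$, which is possible because $\sphilb{H}_{0}$ is separable.

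\emph{Easy direction.} Let $X$ be decomposable with fibers $X_{\phi}$. Then for every $z$ and almost every $\phi$,
\[
\rbk{X\rbk{M_{z}\otimes\id}\xi}(\phi)=X_{\phi}\,z(\phi)\xi(\phi)=z(\phi)X_{\phi}\xi(\phi),
\]
and this is $\rbk{\rbk{M_{z}\otimes\id}X\xi}(\phi)$.

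\emph{Converse.} Let $T$ commute with all $M_{z}\otimes\id$. The argument has three steps.
\begin{enumerate}
\item \emph{Matrix entries are multiplication operators.} For each pair $i,l$, compress $T$ to the operator $T_{il}$ on $\fun{\lp^{2}}{\mansphere^{1}}$ defined by $\bkt{g}{T_{il}f}=\bkt{g\otimes e_{i}}{T(f\otimes e_{l})}$. It commutes with every $M_{z}$. Since $\fun{\lp^{\infty}}{\mansphere^{1},\msrprb_{\mansphere^{1}}}$ is maximal abelian on $\fun{\lp^{2}}{\mansphere^{1},\msrprb_{\mansphere^{1}}}$, we get $T_{il}=M_{t_{il}}$ with $\norm{t_{il}}_{\infty}\leq\norm{T}$. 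To prove maximality directly: put $t=T_{il}1\in\lp^{2}$. Then $T_{il}z=zT_{il}1=tz$ for $z\in\lp^{\infty}$. Taking $z$ to be the indicator of $\set{\phi}{\abs{t(\phi)}>\norm{T}}$ shows $t\in\lp^{\infty}$, and density of $\lp^{\infty}$ in $\lp^{2}$ then gives $T_{il}=M_{t}$.
\item \emph{Fiber operators exist.} Fix a representative of each $t_{il}$ and remove a single null set, using that the family is countable. For every finite $n$ the $n\times n$ truncation $\rbk{t_{il}(\phi)}_{i,l\leq n}$ must have operator norm at most $\norm{T}$ for almost every $\phi$. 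To see this, test $T$ against vectors $\sum_{l\leq n} f_{l}\otimes e_{l}$ with $f_{l}$ equal to $1_{E}$ times a fixed rational coefficient vector, and run over the countably many rational coefficient vectors and over all measurable $E$: a measurable set where the truncation norm exceeds $\norm{T}$ cannot have positive measure. Consequently, off one null set the infinite matrix $\rbk{t_{il}(\phi)}$ defines a bounded operator $X_{\phi}$ with $\norm{X_{\phi}}\leq\norm{T}$.
\item \emph{Identification.} The map $\phi\mapsto X_{\phi}$ is weakly measurable because its matrix entries are measurable. For sections of the form $f\otimes e_{l}$, the relations $(T(f\otimes e_{l}))(\phi)=\sum_{i} t_{il}(\phi)f(\phi)e_{i}=X_{\phi}(f\otimes e_{l})(\phi)$ hold almost everywhere. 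Finite linear combinations of such sections are dense, and both $T$ and the decomposable operator built from $X_{\phi}$ are bounded by $\norm{T}$, so $T$ equals that decomposable operator.
\end{enumerate}

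I expect step 2 to be the main obstacle. The individual entries are easy, but a uniform almost-everywhere bound for the whole fiber requires care with null sets. The fix is to make every almost-everywhere statement over a countable family: rational vectors, finite truncation sizes $n$, and the indices $i,l$.
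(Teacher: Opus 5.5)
Your proposal is correct and follows essentially the same route as the paper. You compress $T$ to the scalar entries $T_{il}$, identify each as a multiplication operator via $t=T_{il}1$ and an indicator test, bound the fiber matrices almost everywhere by testing against $1_E$-sections with rational coefficient vectors, and conclude by agreement on the total set of sections $f\otimes e_l$; organizing the null-set bookkeeping through finite truncations of size $n$ is equivalent to the paper's almost-everywhere bound on the scalar functions $\sum \bar c_j c'_l x_{jl}$.
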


\begin{proof}
Decomposable operators commute with diagonal ones fiberwise.
Conversely,
let $X
\in
\opspbddlin{\sphilb{H}_{\sminvtemperature}}$ commute with
$M_{z} \otimes \id_{\sphilb{H}_{0}}$ for every
$z
\in
\fun{\lp^{\infty}}{\mansphere^{1}, \msrprb_{\mansphere^{1}}}$,
and fix an orthonormal basis $\seq{e_{j}}{j
\geq
1}$ of $\sphilb{H}_{0}$.
For $j, l
\in
\semigrposint$,
and $g
\in
\fun{\lp^{2}}{\mansphere^{1}, \msrprb_{\mansphere^{1}}}$,
the vector $X \rbk{g \otimes e_{l}}$ is an
$\sphilb{H}_{0}$-valued square-integrable function of $\phi$.
Define its $e_{j}$-component $X_{j l} g
\in
\fun{\lp^{2}}{\mansphere^{1}, \msrprb_{\mansphere^{1}}}$ by
$$\fun{\rbk{X_{j l} g}}{\phi}
=
\bkt{e_{j}}{\fun{\rbk{X \rbk{g \otimes e_{l}}}}{\phi}}_{\sphilb{H}_{0}}
\quad \rbk{\phi
\in
\mansphere^{1} \text{ almost everywhere}}.$$
This definition gives a bounded operator
$X_{j l}
\colon
\fun{\lp^{2}}{\mansphere^{1}, \msrprb_{\mansphere^{1}}}
\to
\fun{\lp^{2}}{\mansphere^{1}, \msrprb_{\mansphere^{1}}}$
because
$$\int_{\mansphere^{1}}
\abs{\fun{\rbk{X_{j l} g}}{\phi}}^{2}
\opdmsr{\msrprb_{\mansphere^{1}}}(\phi)
\leq
\norm{X \rbk{g \otimes e_{l}}}^{2}_{\sphilb{H}_{\sminvtemperature}}
\leq
\norm{X}^{2}
\int_{\mansphere^{1}}
\abs{\fun{g}{\phi}}^{2}
\opdmsr{\msrprb_{\mansphere^{1}}}(\phi).$$
For $z
\in
\fun{\lp^{\infty}}{\mansphere^{1}, \msrprb_{\mansphere^{1}}}$,
the commutation of $X$ with
$M_{z} \otimes \id_{\sphilb{H}_{0}}$ gives
$$\begin{aligned}
\fun{\rbk{X_{j l} \rbk{z g}}}{\phi}
&=
\bkt{e_{j}}{
\fun{
\rbk{
X
\rbk{M_{z} \otimes \id_{\sphilb{H}_{0}}}
\rbk{g \otimes e_{l}}
}
}{\phi}
}_{\sphilb{H}_{0}}
\\
&=
\bkt{e_{j}}{
\fun{
\rbk{
\rbk{M_{z} \otimes \id_{\sphilb{H}_{0}}}
X
\rbk{g \otimes e_{l}}
}
}{\phi}
}_{\sphilb{H}_{0}}
\\
&=
\fun{z}{\phi}
\bkt{e_{j}}{\fun{\rbk{X \rbk{g \otimes e_{l}}}}{\phi}}_{\sphilb{H}_{0}}
=
\fun{z}{\phi} \fun{\rbk{X_{j l} g}}{\phi}
\end{aligned}$$
almost everywhere.
Thus $X_{j l}$ commutes with every multiplication operator on
$\fun{\lp^{2}}{\mansphere^{1}, \msrprb_{\mansphere^{1}}}$ induced by a function in
$\fun{\lp^{\infty}}{\mansphere^{1}, \msrprb_{\mansphere^{1}}}$.
The commutant of the multiplication algebra on $\fun{\lp^{2}}{\mansphere^{1}, \msrprb_{\mansphere^{1}}}$ is the multiplication algebra itself.
To verify this fact,
let
$Y
\in
\opspbddlin{\fun{\lp^{2}}{\mansphere^{1}, \msrprb_{\mansphere^{1}}}}$
commute with $M_{z}$ for every
$z
\in
\fun{\lp^{\infty}}{\mansphere^{1}, \msrprb_{\mansphere^{1}}}$,
and
set $y
=
\fun{Y}{1}
\in
\fun{\lp^{2}}{\mansphere^{1}, \msrprb_{\mansphere^{1}}}$.
Then $\fun{Y}{z}
=
Y M_{z} 1
=
M_{z} y
=
z y$ for $z
\in
\fun{\lp^{\infty}}{\mansphere^{1}, \msrprb_{\mansphere^{1}}}$,
and $\norm{z y}_{2}
\leq
\norm{Y} \norm{z}_{2}$ for all $z
\in
\fun{\lp^{\infty}}{\mansphere^{1}, \msrprb_{\mansphere^{1}}}$ forces $\abs{y}
\leq
\norm{Y}$ almost everywhere,
testing with $z
=
1_{\set{\phi}{\phi
\in
\mansphere^{1}, \ \abs{\fun{y}{\phi}} > \norm{Y} + \epsilon}}$.
It follows that $Y
=
M_{y}$ on the dense subspace
$\fun{\lp^{\infty}}{\mansphere^{1}, \msrprb_{\mansphere^{1}}}$.
Continuity extends the equality to $\fun{\lp^{2}}{\mansphere^{1}, \msrprb_{\mansphere^{1}}}$.
It follows that $X_{j l}
=
M_{x_{j l}}$ with $x_{j l}
\in
\fun{\lp^{\infty}}{\mansphere^{1}, \msrprb_{\mansphere^{1}}}$,
$\norm{x_{j l}}_{\infty}
\leq
\norm{X}$.
For each finite $J$ and rational-coefficient vectors $\zeta
=
\sum_{j
\in
J} c_{j} e_{j}$,
$\zeta'
=
\sum_{l
\in
J} c'_{l} e_{l}$,
consider the scalar function
$\sum_{j, l
\in
J} \cmpconj{c_{j}} c'_{l} x_{j l}$.
For every measurable $E$,
the vector $1_{E} \otimes \zeta'$ is mapped by $X$ with norm control.
It follows that the scalar function is bounded almost everywhere by
$\norm{X} \norm{\zeta} \norm{\zeta'}$.
Taking the countable union of null sets over all rational data,
for almost every $\phi$ the matrix $\rbk{\fun{x_{j l}}{\phi}}$ defines a bounded operator $X_{\phi}$ on $\sphilb{H}_{0}$ with $\norm{X_{\phi}}
\leq
\norm{X}$.
The family is measurable,
and $X$ agrees with the decomposable operator of the family on the total set of vectors $g \otimes e_{l}$ with $g
\in
\fun{\lp^{\infty}}{\mansphere^{1}, \msrprb_{\mansphere^{1}}}$ and $l
\in
\semigrposint$.
Both operators are bounded,
which makes them equal everywhere.
\end{proof}

\begin{lem}[vanishing fiberwise]\label{lem:fiberwise-zero}
A decomposable operator $X
=
\rbk{X_{\phi}}$ vanishes if and only if $X_{\phi}
=
0$ almost everywhere.
Two decomposable operators commuting as operators have commuting fibers almost everywhere.
\end{lem}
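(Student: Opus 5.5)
The plan is to work with matrix elements against a countable dense family of vectors, and then use the fact that vector-valued $\lp^{2}$ functions vanish when all their components vanish almost everywhere. Fix an orthonormal basis $\seq{e_{j}}{j \geq 1}$ of the separable space $\sphilb{H}_{0}$. For the first assertion, the implication from fiberwise vanishing to $X = 0$ is immediate: the defining identity $\fun{\rbk{X \xi}}{\phi} = X_{\phi} \fun{\xi}{\phi}$ shows that $X\xi$ vanishes almost everywhere for every $\xi$, hence $X\xi = 0$ in $\sphilb{H}_{\sminvtemperature}$.

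For the converse, assume $X = 0$. Apply $X$ to the constant sections $1 \otimes e_{l}$. Then $X_{\phi} e_{l} = 0$ for $\phi$ outside a null set $N_{l}$. Let $N = \bigcup_{l} N_{l}$, which is a countable union of null sets and hence null. For $\phi \notin N$, the bounded operator $X_{\phi}$ annihilates every basis vector, so it annihilates their closed span. Thus $X_{\phi} = 0$ off $N$.

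For the second assertion, let $X = \rbk{X_{\phi}}$ and $Y = \rbk{Y_{\phi}}$ be decomposable with $XY = YX$. First verify that the product of decomposable operators is decomposable with fibers $X_{\phi} Y_{\phi}$. This holds because
$$\fun{\rbk{XY\xi}}{\phi} = X_{\phi} \fun{\rbk{Y\xi}}{\phi} = X_{\phi} Y_{\phi} \fun{\xi}{\phi}$$
almost everywhere. Measurability of $\phi \mapsto X_{\phi} Y_{\phi}$ follows from weak measurability: the matrix element $\bkt{e_{j}}{X_{\phi} Y_{\phi} e_{l}}$ is the sum $\sum_{i} \bkt{e_{j}}{X_{\phi} e_{i}} \bkt{e_{i}}{Y_{\phi} e_{l}}$, a pointwise limit of measurable functions. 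The uniform bound is $\norm{X_{\phi}}\,\norm{Y_{\phi}}$. Then $XY - YX$ is the decomposable operator with fibers $\commutator{X_{\phi}}{Y_{\phi}}$. Since this operator is zero, the first assertion gives $\commutator{X_{\phi}}{Y_{\phi}} = 0$ almost everywhere.

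The only delicate point is the bookkeeping of null sets. Each identity holds only almost everywhere, and the exceptional sets depend on the test vector. Separability of $\sphilb{H}_{0}$ is exactly what permits taking countably many test vectors and a single common null set; this is the same device used in Lemma \ref{lem:diagonal-commutant}.
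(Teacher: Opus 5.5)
Your proof is correct, and it reaches the key step more directly than the paper does.

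For the implication $X=0\Rightarrow X_\phi=0$ a.e., the paper argues through scalar matrix coefficients:
\begin{itemize}
\item it pairs $X(1\otimes\zeta_{i'})$ against the sections $1_I\otimes\zeta_i$, with $I$ ranging over the countable $\pi$-system of rational arcs;
\item it deduces that the complex measures $E\mapsto\int_E\langle\zeta_i,X_\phi\zeta_{i'}\rangle$ vanish on that $\pi$-system;
\item it extends this to all Borel sets by the $\pi$--$\lambda$ theorem;
\item it invokes uniqueness of Radon--Nikodym derivatives to get $\langle\zeta_i,X_\phi\zeta_{i'}\rangle=0$ a.e.
\end{itemize}
Only after this does the paper take the countable union of null sets and use totality.

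You instead observe that $X(1\otimes e_l)=0$ already \emph{is} the statement that its representative $\phi\mapsto X_\phi e_l$ vanishes a.e., because elements of the Bochner space are a.e.-classes of functions. This removes the whole measure-theoretic detour. Your countable-union and boundedness steps then coincide with the paper's. The paper's route is longer and gains no generality here.

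For the commutator statement, you also prove something the paper only asserts: $XY$ is decomposable with fibers $X_\phi Y_\phi$. You check this correctly, including measurability via the Parseval expansion of the matrix elements.

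Even that verification can be skipped. Apply $[X,Y]=0$ to $1\otimes e_l$ and combine the a.e. identities $(Y\xi)(\phi)=Y_\phi\xi(\phi)$ and $(X\eta)(\phi)=X_\phi\eta(\phi)$. This gives $[X_\phi,Y_\phi]e_l=0$ a.e. directly, without checking that the product family is measurable.
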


\begin{proof}
Identify $\mansphere^{1}$ with the phase interval
$\rightopeninterval{0}{2 \pi}$,
and let
$$\mathcal{I}
=
\setone{\emptyset, \mansphere^{1}}
\cup
\set{
\rightopeninterval{2 \pi r}{2 \pi s}
}{
r, s
\in
\fldrat \cap \unitclosedinterval,
\quad r < s
}.$$
This is a countable $\pi$-system that generates the Borel subsets of
$\mansphere^{1}$.
Fix a countable total family $\seq{\zeta_{i}}{i
\geq
1}$ of $\sphilb{H}_{0}$.
For $i, i'
\in
\semigrposint$,
define the integrable function
$$\fun{f_{i i'}}{\phi}
=
\bkt{\zeta_{i}}{X_{\phi} \zeta_{i'}}_{\sphilb{H}_{0}}
\quad \rbk{\phi
\in
\mansphere^{1} \text{ almost everywhere}}.$$
If $X
=
0$,
then every $I
\in
\mathcal{I}$ satisfies
$$0
=
\bkt{1_{I} \otimes \zeta_{i}}{
X \rbk{1_{\mansphere^{1}} \otimes \zeta_{i'}}
}_{\sphilb{H}_{\sminvtemperature}}
=
\int_{I}
\fun{f_{i i'}}{\phi}
\opdmsr{\msrprb_{\mansphere^{1}}}(\phi).$$
Define a finite complex measure on the Borel subsets of $\mansphere^{1}$ by
$$\fun{\nu_{i i'}}{E}
=
\int_{E}
\fun{f_{i i'}}{\phi}
\opdmsr{\msrprb_{\mansphere^{1}}}(\phi).$$
The preceding identity says that $\nu_{i i'}$ vanishes on $\mathcal{I}$.
The Borel sets $E$ satisfying
$\fun{\nu_{i i'}}{E}
=
0$
form a Dynkin system:
they contain $\mansphere^{1}$,
are closed under complements in $\mansphere^{1}$,
and are closed under countable disjoint unions.
This Dynkin system contains the generating $\pi$-system $\mathcal{I}$.
The $\pi$--$\lambda$ theorem therefore shows that $\nu_{i i'}$ vanishes on every Borel set.
Uniqueness of the Radon--Nikodym derivative now gives
$f_{i i'}
=
0$ almost everywhere.

The pairs $\rbk{i, i'}$
form a countable set.
Outside the union of their null sets,
all matrix elements
$\bkt{\zeta_{i}}{X_{\phi} \zeta_{i'}}_{\sphilb{H}_{0}}$
vanish.
Since $\seq{\zeta_{i}}{i
\geq
1}$ is total and $X_{\phi}$ is bounded,
this implies $X_{\phi}
=
0$ almost everywhere.

Conversely,
if $X_{\phi}
=
0$ almost everywhere,
then every $\xi
\in
\sphilb{H}_{\sminvtemperature}$ satisfies
$$\norm{X \xi}^{2}_{\sphilb{H}_{\sminvtemperature}}
=
\int_{\mansphere^{1}}
\norm{X_{\phi} \fun{\xi}{\phi}}^{2}_{\sphilb{H}_{0}}
\opdmsr{\msrprb_{\mansphere^{1}}}(\phi)
=
0.$$
Thus $X
=
0$.

If $X$ and $Y$ are decomposable,
then $\commutator{X}{Y}$ is decomposable with fibers
$\commutator{X_{\phi}}{Y_{\phi}}$.
Applying the equivalence just proved to this commutator proves the final assertion.
\end{proof}

\subsection{Operator-Algebraic Setup}\label{sec:oa-appendix}

The quasi-local observable algebra, its local embeddings, and the operator-algebraic conventions used throughout the paper are fixed here.

\subsubsection{The quasi-local algebra and spin notation}\label{the-quasi-local-algebra-and-spin-notation}

The kinematical arena of the quasi-spin BCS model is one two-dimensional degree of freedom for each pair mode \(p\). In the electron picture the mode \(p\) stands for the pair \(\rbk{k \uparrow, -k \downarrow}\) of single-electron states, and the two-dimensional space is spanned by the unoccupied and the doubly occupied pair state. The reduction of the electron algebra to this pair subspace is the quasi-spin formalism. In this note the quasi-spin algebra is taken as the definition of the model.

\begin{defn}[quasi-spin algebra]\label{def:quasilocal}
For a finite subset $\Lambda
\subset
\semigrposint$ the local algebra is
$$\oa{A}_{\Lambda}
=
\bigotimes_{p
\in
\Lambda} \spmat{2}{\fldcmp},$$
and for $\Lambda
\subset
\Lambda'$ the algebra $\oa{A}_{\Lambda}$ is embedded into $\oa{A}_{\Lambda'}$ by tensoring with the identity on $\Lambda' \setminus \Lambda$.
The quasi-spin algebra $\oa{A}$ is the $\oacstar$-inductive limit
$$\oa{A}
=
\gtclos{\bigcup_{\Lambda
\subset
\semigrposint, \abscard{\Lambda} < \infty} \oa{A}_{\Lambda}}^{\norm{\cdot}},$$
and $\oa{A}_{\txtloc}
=
\bigcup_{\Lambda} \oa{A}_{\Lambda}$ is the algebra of local elements.
For $\Omega
\in
\semigrposint$ the abbreviation $\oa{A}_{\Omega}
=
\oa{A}_{\setone{1, \dotsc, \Omega}}$ is used.
\end{defn}

The algebra \(\oa{A}\) is the UHF algebra of type \(2^{\infty}\). It is simple, unital, and separable \cite[Section 2.6]{BratteliRobinson003}. Simplicity is used in the main text only through the fact that every representation of \(\oa{A}\) is faithful, and for completeness a short proof is recorded.

\begin{lem}[simplicity]\label{lem:simple}
Every nonzero closed two-sided ideal of $\oa{A}$ equals $\oa{A}$.
In particular every representation of $\oa{A}$ on a Hilbert space with $\fun{\oarepn}{1}
\neq
0$ is isometric.
\end{lem}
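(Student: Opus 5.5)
The proof has two steps. First, every nonzero closed two-sided ideal $\mathcal{J}$ of $\oa{A}$ meets some local algebra $\oa{A}_{\Omega}$ nontrivially. Second, the full matrix algebra $\oa{A}_{\Omega} \cong M_{2^{\Omega}}(\mathbb{C})$ is simple, so that intersection contains the unit, and then $\mathcal{J} = \oa{A}$. The statement about representations follows by applying this to the kernel.

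Only the first step is nontrivial, and I would handle it with the conditional expectations onto local algebras. For each $\Omega$, let $E_{\Omega} \colon \oa{A} \to \oa{A}_{\Omega}$ be the slice map $\id_{\oa{A}_{\Omega}} \otimes \oastate[\psi^{\mathrm{tr}}_{>\Omega}]$. It is contractive and positive, and it is the identity on $\oa{A}_{\Omega}$.

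The key identity is that $E_{\Omega}$ is an average over unitary conjugations by the tail unitaries. For $A \in \oa{A}_{\Lambda}$ with $\Lambda \supset \{1,\dots,\Omega\}$ finite, $E_{\Omega}(A)$ is the normalized Haar integral of $uAu^{*}$ over the unitary group of $\oa{A}_{\Lambda\setminus\{1..\Omega\}}$. In fact a finite average suffices: averaging over the Pauli group on those sites gives the partial trace, because the normalized trace on $M_{2}$ equals $\frac14\sum_{P} P X P^{*}$ over $P \in \{1,\sigma^x,\sigma^y,\sigma^z\}$. Since $\mathcal{J}$ is a closed two-sided ideal, this gives $E_{\Omega}(A) \in \mathcal{J}$ for local $A \in \mathcal{J}$.

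For general $A \in \mathcal{J}$ I would approximate. Given $A \in \mathcal{J}$ and local $A'$ with $\norm{A-A'}<\epsilon$, take $\Lambda$ containing the support of $A'$. The finite Pauli average over $\Lambda\setminus\{1..\Omega\}$ applied to $A$ lies in $\mathcal{J}$. Applied to $A'$ it gives $E_{\Omega}(A')$. Since the average is a contractive convex combination, the two results differ by less than $\epsilon$. Letting $\epsilon \to 0$ and using closedness of $\mathcal{J}$ gives $E_{\Omega}(A) \in \mathcal{J}$ for every $A \in \mathcal{J}$.

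Now take $0 \neq A \in \mathcal{J}$ and replace it by $A^{*}A \ge 0$, which is still nonzero. Since $E_{\Omega}(A^*A) \to A^*A$ in norm as $\Omega \to \infty$ (by density of local elements), for large $\Omega$ we get $0 \neq E_{\Omega}(A^*A) \in \mathcal{J} \cap \oa{A}_{\Omega}$. Simplicity of $M_{2^{\Omega}}(\mathbb{C})$ puts $1$ in this intersection, hence $\mathcal{J} = \oa{A}$.

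For the second assertion, let $\oarepn$ be a representation with $\oarepn(1) \neq 0$. Its kernel is a closed ideal that does not contain $1$, so by the first part it is zero. An injective $*$-homomorphism between $\oacstar$-algebras is isometric, by the standard spectral radius argument.

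The main obstacle is the first step, namely showing that $E_{\Omega}$ maps $\mathcal{J}$ into itself. The finite Pauli-averaging formula is what makes this clean, and verifying it on $M_{2}$ is the one routine check needed.
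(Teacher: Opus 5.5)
Your proof is correct, but it takes a different route from the paper's.

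\textbf{The paper's route.} The paper never locates a nonzero local element inside the ideal. It takes a proper closed ideal $\mathcal{J}$ and the quotient map $q\colon \oa{A}\to\oa{A}/\mathcal{J}$, for which $q(1)=1\neq 0$. It observes that the restriction of $q$ to each matrix algebra $\oa{A}_{\Lambda}$ has a proper kernel, so simplicity makes it injective, and hence isometric because injective $\ast$-homomorphisms are isometric. By density $q$ is isometric on all of $\oa{A}$, so $\mathcal{J}=\ker q=\{0\}$.

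\textbf{Your route.} You build the trace slice maps $E_{\Omega}$ and realize them on local elements as finite Pauli twirls. This makes each $E_{\Omega}$ map $\mathcal{J}$ into itself, with closedness of $\mathcal{J}$ handling nonlocal elements. The convergence $E_{\Omega}(A)\to A$ then produces a nonzero element of $\mathcal{J}\cap\oa{A}_{\Omega}$, and simplicity of the matrix algebra puts $1$ in $\mathcal{J}$.

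\textbf{Comparison.} The paper's argument is shorter and more general. It uses only that the local blocks are simple and the embeddings unital, so it applies verbatim to any $C^{\ast}$-inductive limit of simple unital algebras, with no conditional expectations. Yours relies on the tensor-product structure, namely a finite unitary group whose average is the normalized partial trace. In exchange it is more hands-on: it needs the automatic-isometry fact only for the representation statement, not for the ideal statement. It also exhibits explicitly how an ideal reaches a local algebra, via $E_{\Omega}(\mathcal{J})=\mathcal{J}\cap\oa{A}_{\Omega}$, so that $\mathcal{J}$ is the closure of its local parts.

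\textbf{Two harmless slips.} First, replacing $A$ by $A^{\ast}A$ is unnecessary, since $E_{\Omega}(A)\to A\neq 0$ already gives a nonzero local element. Second, in the approximation step the bound is $2\epsilon$ rather than $\epsilon$, once you also use contractivity of $E_{\Omega}$ to compare $E_{\Omega}(A')$ with $E_{\Omega}(A)$.
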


\begin{proof}
Let $\oa{J}
\neq
\oa{A}$ be a closed two-sided ideal and $q \colon \oa{A}
\to
\oa{A} / \oa{J}$ the quotient map.
Since $\oa{J}
\neq
\oa{A}$,
the quotient is a unital $\oacstar$-algebra and $\fun{q}{1}
=
1
\neq
0$.
Each local algebra $\oa{A}_{\Lambda}$ is isomorphic to a full matrix algebra and therefore simple.
The kernel of the restriction of $q$ to $\oa{A}_{\Lambda}$ is a proper two-sided ideal that does not contain $1$.
Simplicity therefore makes this kernel $\setone{0}$.
An injective $\ast$-homomorphism between $\oacstar$-algebras is isometric.
It follows that $q$ is isometric on $\oa{A}_{\txtloc}$.
Continuity makes $q$ isometric on $\oa{A}$,
and $\oa{J}
=
\Ker q
=
\setone{0}$.
For a representation $\oarepn$ with $\fun{\oarepn}{1}
\neq
0$ the kernel is a closed two-sided ideal not containing $1$.
The kernel is therefore zero,
and a faithful representation is isometric by the same cited fact.
\end{proof}

The permutation automorphisms are used for the thermal analysis. For a permutation \(\pi\) of \(\semigrposint\) moving only finitely many points, the assignment \(\sigma_{p}^{\gamma}
\mapsto
\sigma_{\fun{\pi}{p}}^{\gamma}\) extends, exactly as in Definition \ref{def:gauge}, to a \(\ast\)-automorphism \(\mathfrak{g}_{\pi}\) of \(\oa{A}\).

\subsection{One-Site Spin Calculus and the Bloch Sphere}\label{sec:bloch}

The product-representation calculations in the main text use the following one-site spin geometry. For complex vectors \(\physvec{a}, \physvec{b}
\in
\fldcmp^{3}\) write \(\physvec{a} \cdot \physvec{b}
=
\sum_{\gamma \in \setone{x, y, z}} a^{\gamma} b^{\gamma}\) without complex conjugation, and \(\physvec{\sigma} \cdot \physvec{a}
=
\sum_{\gamma \in \setone{x, y, z}} a^{\gamma} \sigma^{\gamma}\).

\begin{lem}[product identity]\label{lem:pauli-product}
For all $\physvec{a}, \physvec{b}
\in
\fldcmp^{3}$, it holds that
$$\rbk{\physvec{\sigma} \cdot \physvec{a}} \rbk{\physvec{\sigma} \cdot \physvec{b}}
=
\physvec{a} \cdot \physvec{b}
+\imunit \physvec{\sigma} \cdot \rbk{\physvec{a} \times \physvec{b}}.$$
\end{lem}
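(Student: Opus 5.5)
The plan is to reduce everything to the matrix-unit/Pauli multiplication table. Both sides are bilinear in $\physvec{a}$ and $\physvec{b}$, and the dot and cross products are defined here without complex conjugation. It therefore suffices to verify the identity for $\physvec{a}$ and $\physvec{b}$ running over the standard basis vectors $\hat{\physvec{x}}, \hat{\physvec{y}}, \hat{\physvec{z}}$ of $\fldcmp^{3}$, and then extend by complex bilinearity.

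For basis vectors the identity reads $\sigma^{\alpha}\sigma^{\beta} = \delta_{\alpha\beta} + \imunit \sum_{\gamma} \epsilon_{\alpha\beta\gamma}\sigma^{\gamma}$, where $\epsilon_{\alpha\beta\gamma}$ is the Levi-Civita symbol. This splits into two sets of cases.

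\begin{itemize}
\item \emph{Diagonal cases $\alpha=\beta$.} These need $(\sigma^{\alpha})^{2}=1$, which I would check by squaring each of the three explicit $2\times 2$ matrices given in Section \ref{quasi-spin-observables-and-gauge-symmetry}.
\item \emph{Off-diagonal cases $\alpha\neq\beta$.} Here I would compute $\sigma^{x}\sigma^{y}=\imunit\sigma^{z}$ directly by matrix multiplication. The cyclic relatives $\sigma^{y}\sigma^{z}=\imunit\sigma^{x}$ and $\sigma^{z}\sigma^{x}=\imunit\sigma^{y}$ follow in the same way. The reversed orders then follow from the anticommutation $\sigma^{\alpha}\sigma^{\beta}=-\sigma^{\beta}\sigma^{\alpha}$. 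That relation can itself be read off by taking adjoints of $\sigma^{x}\sigma^{y}=\imunit\sigma^{z}$, since all three Pauli matrices are self-adjoint, or it can be checked directly.
\end{itemize}

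Summing the basis identities with the coefficients $a^{\alpha}b^{\beta}$ gives $\sum_{\alpha}a^{\alpha}b^{\alpha}=\physvec{a}\cdot\physvec{b}$ for the scalar part. For the remaining part it gives $\imunit\sum_{\gamma}\bigl(\sum_{\alpha,\beta}\epsilon_{\alpha\beta\gamma}a^{\alpha}b^{\beta}\bigr)\sigma^{\gamma}=\imunit\physvec{\sigma}\cdot(\physvec{a}\times\physvec{b})$.

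There is no genuine obstacle. The only point needing care is the sign convention in the cross product together with the ordering $\sigma^{x}\sigma^{y}=+\imunit\sigma^{z}$. I would fix this by explicitly writing out the single product $\sigma^{x}\sigma^{y}$ from the matrices in the paper.
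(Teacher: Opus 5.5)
Your proposal is correct and follows the paper's proof: both verify $\sigma^{\alpha}\sigma^{\beta}=\delta^{\alpha\beta}+\imunit\sum_{\gamma}\epsilon^{\alpha\beta\gamma}\sigma^{\gamma}$ on the nine basis pairs and then extend to all $\physvec{a},\physvec{b}$ by complex bilinearity. Your only addition is the shortcut of obtaining the reversed orders by taking adjoints, which is valid because each $\sigma^{\gamma}$ is self-adjoint.
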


\begin{proof}
The Pauli matrices satisfy $\sigma^{\alpha} \sigma^{\beta}
=
\delta^{\alpha \beta} + \imunit \sum_{\gamma \in \setone{x, y, z}} \epsilon^{\alpha \beta \gamma} \sigma^{\gamma}$,
which is checked entry by entry for the nine pairs $\rbk{\alpha, \beta}$.
Multiplying by $a^{\alpha} b^{\beta}$ and summing gives the assertion,
because both sides are bilinear in $\rbk{\physvec{a}, \physvec{b}}$.
\end{proof}

\begin{lem}[Rodrigues rotation formula]\label{lem:rodrigues}
Let $\physvec{q}
\in
\mansphere^{2}$ and define the linear map $K_{\physvec{q}}$ on $\fldreal^{3}$ by
$$\fun{K_{\physvec{q}}}{\physvec{a}}
=
\physvec{q} \times \physvec{a}.$$
For $\theta
\in
\fldreal$,
the rotation $R_{\physvec{q},\theta}
=
\fnexp{\theta K_{\physvec{q}}}$ satisfies
\begin{equation}\label{eq:rodrigues-rotation}
\fun{R_{\physvec{q},\theta}}{\physvec{a}}
=
\rbk{\physvec{q} \cdot \physvec{a}} \physvec{q}
+
\sqbk{
\physvec{a}
-
\rbk{\physvec{q} \cdot \physvec{a}} \physvec{q}
}
\fun{\cos}{\theta}
+
\rbk{\physvec{q} \times \physvec{a}}
\fun{\sin}{\theta}.
\end{equation}
It fixes the axis $\fldreal \physvec{q}$ and rotates $\physvec{q}^{\perp}$ through the oriented angle $\theta$.
\end{lem}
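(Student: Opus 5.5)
The plan is to reduce everything to the one-parameter group $t\mapsto \fnexp{tK_{\physvec{q}}}$ restricted to the invariant decomposition $\fldreal^{3} = \fldreal\physvec{q} \oplus \physvec{q}^{\perp}$. I would first record the two algebraic facts that carry the proof. First, $K_{\physvec{q}}\physvec{q} = \physvec{q}\times\physvec{q} = 0$. Second, $K_{\physvec{q}}$ maps $\physvec{q}^{\perp}$ into itself, since $\physvec{q}\cdot(\physvec{q}\times\physvec{a})=0$. Moreover, for $\physvec{a}_{\perp}\in\physvec{q}^{\perp}$ the triple-product expansion $\physvec{q}\times(\physvec{q}\times\physvec{a}) = (\physvec{q}\cdot\physvec{a})\physvec{q} - \abs{\physvec{q}}^{2}\physvec{a}$ gives $K_{\physvec{q}}^{2}\physvec{a}_{\perp} = -\physvec{a}_{\perp}$, using $\abs{\physvec{q}}=1$. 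Equivalently, on all of $\fldreal^{3}$ one has $K_{\physvec{q}}^{2} = P - 1$ with $P\physvec{a}=(\physvec{q}\cdot\physvec{a})\physvec{q}$, and hence $K_{\physvec{q}}^{3}=-K_{\physvec{q}}$.

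Next I would sum the exponential series. Splitting $\fnexp{\theta K_{\physvec{q}}}=\sum_{n}\theta^{n}K_{\physvec{q}}^{n}/n!$ into even and odd powers and using $K_{\physvec{q}}^{2n+1}=(-1)^{n}K_{\physvec{q}}$ and $K_{\physvec{q}}^{2n}=(-1)^{n}(1-P)$ for $n\geq1$ yields
\[
\fnexp{\theta K_{\physvec{q}}} = P + (1-P)\fun{\cos}{\theta} + K_{\physvec{q}}\fun{\sin}{\theta}.
\]
Applied to $\physvec{a}$, this is exactly \eqref{eq:rodrigues-rotation}. Absolute convergence of the series for a bounded operator on a finite-dimensional space justifies the regrouping. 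As a cross-check, one can instead verify that the right-hand side $R(\theta)$ satisfies $R(0)=1$ and $R'(\theta)=K_{\physvec{q}}R(\theta)$; uniqueness for linear ODEs then gives the same conclusion.

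For the geometric statement, the formula gives $R_{\physvec{q},\theta}\physvec{q}=\physvec{q}$ immediately. To handle $\physvec{q}^{\perp}$, I would choose a unit vector $\physvec{e}\in\physvec{q}^{\perp}$. Then $(\physvec{e},\physvec{q}\times\physvec{e},\physvec{q})$ is a right-handed orthonormal frame, and $R_{\physvec{q},\theta}\physvec{e}=\fun{\cos}{\theta}\physvec{e}+\fun{\sin}{\theta}\,\physvec{q}\times\physvec{e}$. So the restriction to $\physvec{q}^{\perp}$ is the planar rotation through $\theta$, oriented by $\physvec{q}$. Orthogonality of $R_{\physvec{q},\theta}$ also follows abstractly, because $K_{\physvec{q}}$ is antisymmetric and the exponential of an antisymmetric matrix is orthogonal with determinant $1$.

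There is no serious obstacle here. The only point needing care is the sign convention in the triple-product identity, which fixes $K_{\physvec{q}}^{2}=P-1$ rather than $1-P$. I would double-check it on $\physvec{q}=\hat{\physvec{z}}$, $\physvec{a}=\hat{\physvec{x}}$: there $K_{\physvec{q}}\hat{\physvec{x}}=\hat{\physvec{y}}$ and $K_{\physvec{q}}^{2}\hat{\physvec{x}}=-\hat{\physvec{x}}$, as required.
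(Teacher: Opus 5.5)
Your proposal is correct and follows essentially the same route as the paper: the triple-product identity gives $K_{\physvec{q}}^{2}\physvec{a} = (\physvec{q}\cdot\physvec{a})\physvec{q} - \physvec{a}$ and $K_{\physvec{q}}^{3} = -K_{\physvec{q}}$, and the exponential series is then split into even and odd powers. The paper writes the result as $1 + \sin\theta\, K_{\physvec{q}} + (1-\cos\theta)K_{\physvec{q}}^{2}$, which equals your $P + (1-P)\cos\theta + K_{\physvec{q}}\sin\theta$, and it reads off the fixed axis and the planar rotation on $\physvec{q}^{\perp}$ directly from the formula, as you do.
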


\begin{proof}
The vector triple-product identity gives
$$\begin{aligned}
\fun{K_{\physvec{q}}^{2}}{\physvec{a}}
=
\physvec{q} \times \rbk{\physvec{q} \times \physvec{a}}
=
\rbk{\physvec{q} \cdot \physvec{a}} \physvec{q}
-\physvec{a},
\quad
K_{\physvec{q}}^{3}
=
- K_{\physvec{q}}.
\end{aligned}$$
Separating the odd and even powers in the exponential series yields
$$\fnexp{\theta K_{\physvec{q}}}
=
1
+
\fun{\sin}{\theta} K_{\physvec{q}}
+
\rbk{1 - \fun{\cos}{\theta}} K_{\physvec{q}}^{2}.$$
Applying this identity to $\physvec{a}$ gives \eqref{eq:rodrigues-rotation}.
The formula fixes $\physvec{q}$.
For $\physvec{a}
\in
\physvec{q}^{\perp}$,
it reduces to
$$\fun{R_{\physvec{q},\theta}}{\physvec{a}}
=
\physvec{a} \fun{\cos}{\theta}
+
\rbk{\physvec{q} \times \physvec{a}} \fun{\sin}{\theta},$$
which is the oriented planar rotation through $\theta$.
\end{proof}

For a real unit vector \(\physvec{u}
\in
\mansphere^{2}\), the operator \(\physvec{\sigma} \cdot \physvec{u}\) is self-adjoint and satisfies \(\rbk{\physvec{\sigma} \cdot \physvec{u}}^{2}
=
1\) by Lemma \ref{lem:pauli-product} and \(\sqfun{\trace}{\physvec{\sigma} \cdot \physvec{u}}
=
0\). Its spectrum is therefore \(\setone{+1, -1}\) with one-dimensional eigenspaces and spectral projections \begin{equation}\label{eq:bloch-projection}
P_{\physvec{u}}
=
\frac{1}{2} \rbk{1 + \physvec{\sigma} \cdot \physvec{u}},
\quad
1 - P_{\physvec{u}}
=
P_{- \physvec{u}}.
\end{equation} Choose a unit vector \(\xi_{\physvec{u}}^{+}
\in
\fldcmp^{2}\) with \(P_{\physvec{u}} \xi_{\physvec{u}}^{+}
=
\xi_{\physvec{u}}^{+}\) for each real unit vector \(\physvec{u}\) that occurs. The choice is unique up to a phase. The space \(\fldcmp^{2}\) is equipped with its standard Hermitian inner product \(\bkt{u}{v}
=
\faadj{u} v\), linear in the second variable.

\begin{lem}[transition probability]\label{lem:overlap}
For unit vectors $\physvec{u}, \physvec{u}'
\in
\mansphere^{2}$, it follows that
$$\abs{\bkt{\xi_{\physvec{u}'}^{+}}{\xi_{\physvec{u}}^{+}}}^{2}
=
\sqfun{\trace}{P_{\physvec{u}'} P_{\physvec{u}}}
=
\frac{1 + \physvec{u} \cdot \physvec{u}'}{2}.$$
\end{lem}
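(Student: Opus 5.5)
The plan is to reduce both equalities to the trace identity for rank-one projections in $\spmat{2}{\fldcmp}$. Since $\xi_{\physvec{u}}^{+}$ and $\xi_{\physvec{u}'}^{+}$ are unit vectors spanning the ranges of $P_{\physvec{u}}$ and $P_{\physvec{u}'}$, we have $P_{\physvec{u}} = \xi_{\physvec{u}}^{+} \faadj{\rbk{\xi_{\physvec{u}}^{+}}}$, and similarly for $\physvec{u}'$. These ranges are one-dimensional by the spectral description following Lemma \ref{lem:pauli-product} and \eqref{eq:bloch-projection}. This representation is independent of the phase choice for $\xi^{+}$.

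The first equality then follows from cyclicity of the trace:
$$\sqfun{\trace}{P_{\physvec{u}'} P_{\physvec{u}}} = \bkt{\xi_{\physvec{u}'}^{+}}{\xi_{\physvec{u}}^{+}}\,\bkt{\xi_{\physvec{u}}^{+}}{\xi_{\physvec{u}'}^{+}} = \abs{\bkt{\xi_{\physvec{u}'}^{+}}{\xi_{\physvec{u}}^{+}}}^{2},$$
using the convention that the inner product is conjugate-linear in the first slot.

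For the second equality, expand $P_{\physvec{u}'} P_{\physvec{u}} = \frac{1}{4}\rbk{1 + \physvec{\sigma}\cdot\physvec{u}' + \physvec{\sigma}\cdot\physvec{u} + \rbk{\physvec{\sigma}\cdot\physvec{u}'}\rbk{\physvec{\sigma}\cdot\physvec{u}}}$. Lemma \ref{lem:pauli-product} rewrites the last product as $\physvec{u}'\cdot\physvec{u} + \imunit\,\physvec{\sigma}\cdot\rbk{\physvec{u}'\times\physvec{u}}$. The Pauli matrices are traceless and $\sqfun{\trace}{1} = 2$, so taking the trace leaves only $\frac{1}{4}\rbk{2 + 2\,\physvec{u}\cdot\physvec{u}'}$. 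This equals $\frac{1 + \physvec{u}\cdot\physvec{u}'}{2}$.

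No step is a genuine obstacle. The only point needing care is the rank-one identification of $P_{\physvec{u}}$, which is exactly the one-dimensionality of the eigenspaces recorded before \eqref{eq:bloch-projection}.
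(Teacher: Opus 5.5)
Your proposal is correct and takes the same route as the paper. You identify $\sqfun{\trace}{P_{\physvec{u}'} P_{\physvec{u}}}$ with the squared overlap through the rank-one form $P_{\physvec{u}} = \xi^{+}_{\physvec{u}} \faadj{\rbk{\xi^{+}_{\physvec{u}}}}$, then expand the product of the two Bloch projections using Lemma \ref{lem:pauli-product} and the tracelessness of the Pauli matrices; your remark on one-dimensionality and phase independence only spells out what the paper states in one line.
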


\begin{proof}
For rank-one projections onto unit vectors,
the product trace equals the squared modulus of the inner product of the spanning unit vectors.
The squared overlap is
$$\abs{\bkt{\xi_{\physvec{u}'}^{+}}{\xi_{\physvec{u}}^{+}}}^{2}
=
\sqfun{\trace}{P_{\physvec{u}'} P_{\physvec{u}}}
=
\frac{1}{4} \sqfun{\trace}{1 + \physvec{\sigma} \cdot \physvec{u} + \physvec{\sigma} \cdot \physvec{u}' + \rbk{\physvec{\sigma} \cdot \physvec{u}'} \rbk{\physvec{\sigma} \cdot \physvec{u}}}
=
\frac{2 + 2 \physvec{u} \cdot \physvec{u}'}{4},$$
where the last equality uses $\sqfun{\trace}{\sigma^{\gamma}}
=
0$ and Lemma \ref{lem:pauli-product} in the last step.
\end{proof}

Every unit vector \(\physvec{u}
\in
\mansphere^{2}\) is completed to a right-handed orthonormal frame \(\rbk{\physvec{e}_{1}, \physvec{e}_{2}, \physvec{u}}\) of \(\fldreal^{3}\). The choice of frame is arbitrary but fixed once and for all for each \(\physvec{u}\) that occurs. The complex flip vectors and flip operators are \begin{equation}\label{eq:flip-vectors}
\physvec{u}^{\pm}
=
\frac{1}{2} \rbk{\physvec{e}_{1} \pm \imunit \physvec{e}_{2}}
\in \fldreal^{3},
\quad
f^{\pm}
=
\physvec{\sigma} \cdot \physvec{u}^{\pm}
\in \spmat{2}{\fldcmp},
\end{equation} so that \(\physvec{u}^{-}
=
\cmpconj{\physvec{u}^{+}}\) componentwise and \(\faadj{\rbk{f^{+}}}
=
f^{-}\).

\begin{lem}[frame relations]\label{lem:frame}
The frame relations associated with \eqref{eq:flip-vectors} are:
$$\rbk{f^{\pm}}^{2}
=
0,
\quad
f^{+} f^{-}
=
P_{\physvec{u}},
\quad
f^{-} f^{+}
=
P_{- \physvec{u}},
\quad
\commutator{\physvec{\sigma} \cdot \physvec{u}}{f^{\pm}}
=
\pm 2 f^{\pm}.$$
The vector $\xi_{\physvec{u}}^{-}
=
f^{-} \xi_{\physvec{u}}^{+}$ is a unit eigenvector of $\physvec{\sigma} \cdot \physvec{u}$ for the eigenvalue $-1$,
and the flip operators act on the two eigenvectors by
$$f^{+} \xi_{\physvec{u}}^{-}
=
\xi_{\physvec{u}}^{+},
\quad
f^{+} \xi_{\physvec{u}}^{+}
=
0,
\quad
f^{-} \xi_{\physvec{u}}^{-}
=
0.$$
For every $\gamma
\in
\setone{x, y, z}$, we obtain
\begin{equation}\label{eq:frame-decomposition}
\sigma^{\gamma}
=
u^{\gamma} \rbk{\physvec{\sigma} \cdot \physvec{u}} + 2 \rbk{u^{-}}^{\gamma} f^{+} + 2 \rbk{u^{+}}^{\gamma} f^{-}.
\end{equation}
\end{lem}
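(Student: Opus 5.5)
The plan is to reduce every assertion to the product identity of Lemma \ref{lem:pauli-product}, after first recording a small table of complex-bilinear dot and cross products among the frame vectors. All products are taken without conjugation, as in Appendix \ref{sec:bloch}; note that $\physvec{u}^{\pm}$ actually lie in $\fldcmp^{3}$.

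The table follows from orthonormality and right-handedness, which give $\physvec{u}\times\physvec{e}_{1}=\physvec{e}_{2}$, $\physvec{u}\times\physvec{e}_{2}=-\physvec{e}_{1}$ and $\physvec{e}_{1}\times\physvec{e}_{2}=\physvec{u}$:
\[
\physvec{u}^{\pm}\cdot\physvec{u}^{\pm}=0,\quad
\physvec{u}^{+}\cdot\physvec{u}^{-}=\tfrac12,\quad
\physvec{u}\cdot\physvec{u}^{\pm}=0,
\]
\[
\physvec{u}\times\physvec{u}^{\pm}=\mp\imunit\,\physvec{u}^{\pm},\quad
\physvec{u}^{+}\times\physvec{u}^{-}=-\tfrac{\imunit}{2}\physvec{u}.
\]
Inserting these into Lemma \ref{lem:pauli-product} gives the algebraic relations directly:
\begin{itemize}
\item $(f^{\pm})^{2}=\physvec{u}^{\pm}\cdot\physvec{u}^{\pm}=0$, since the cross term vanishes because $\physvec{u}^{\pm}\times\physvec{u}^{\pm}=0$.
\item $f^{+}f^{-}=\tfrac12+\imunit\,\physvec{\sigma}\cdot(-\tfrac{\imunit}{2}\physvec{u})=P_{\physvec{u}}$, and symmetrically $f^{-}f^{+}=\tfrac12-\tfrac12\physvec{\sigma}\cdot\physvec{u}=P_{-\physvec{u}}$.
\item The commutator is twice the antisymmetric part, so $\commutator{\physvec{\sigma}\cdot\physvec{u}}{f^{\pm}}=2\imunit\,\physvec{\sigma}\cdot(\physvec{u}\times\physvec{u}^{\pm})=\pm2f^{\pm}$.
\end{itemize}

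The vector statements then follow without further computation.
\begin{itemize}
\item \emph{Eigenvalue of $\xi^{-}_{\physvec{u}}$.} The commutator relation gives $(\physvec{\sigma}\cdot\physvec{u})f^{-}\xi^{+}_{\physvec{u}}=f^{-}(\physvec{\sigma}\cdot\physvec{u}-2)\xi^{+}_{\physvec{u}}=-f^{-}\xi^{+}_{\physvec{u}}$.
\item \emph{Unit norm.} Since $\faadj{(f^{-})}=f^{+}$, we get $\norm{f^{-}\xi^{+}_{\physvec{u}}}^{2}=\bkt{\xi^{+}_{\physvec{u}}}{P_{\physvec{u}}\xi^{+}_{\physvec{u}}}=1$.
\item \emph{Flip actions.} First, $f^{+}\xi^{-}_{\physvec{u}}=f^{+}f^{-}\xi^{+}_{\physvec{u}}=P_{\physvec{u}}\xi^{+}_{\physvec{u}}=\xi^{+}_{\physvec{u}}$. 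Next, $f^{+}\xi^{+}_{\physvec{u}}=f^{+}f^{+}f^{-}\xi^{+}_{\physvec{u}}=0$, using $P_{\physvec{u}}=f^{+}f^{-}$ and $(f^{+})^{2}=0$. Finally, $f^{-}\xi^{-}_{\physvec{u}}=(f^{-})^{2}\xi^{+}_{\physvec{u}}=0$.
\end{itemize}

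The decomposition \eqref{eq:frame-decomposition} comes from the resolution of the identity on $\fldreal^{3}$ in the frame, namely $\delta^{\alpha\gamma}=u^{\alpha}u^{\gamma}+e_{1}^{\alpha}e_{1}^{\gamma}+e_{2}^{\alpha}e_{2}^{\gamma}$.
\begin{itemize}
\item Substitute $\physvec{e}_{1}=\physvec{u}^{+}+\physvec{u}^{-}$ and $\physvec{e}_{2}=-\imunit(\physvec{u}^{+}-\physvec{u}^{-})$. This gives
\[
e_{1}^{\alpha}e_{1}^{\gamma}+e_{2}^{\alpha}e_{2}^{\gamma}=2\bigl((u^{+})^{\alpha}(u^{-})^{\gamma}+(u^{-})^{\alpha}(u^{+})^{\gamma}\bigr).
\]
\item Then $\sigma^{\gamma}=\sum_{\alpha}\delta^{\alpha\gamma}\sigma^{\alpha}$ expands to the stated formula.
\end{itemize}

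There is no genuine obstacle here. The only step needing care is the sign bookkeeping in the cross products, because it fixes which of $f^{\pm}$ raises. I would check it once by taking $\physvec{u}=\hat{\physvec{z}}$, $\physvec{e}_{1}=\hat{\physvec{x}}$, $\physvec{e}_{2}=\hat{\physvec{y}}$. In that case $f^{\pm}=\sigma^{\pm}$, and every relation reduces to the standard identities for $\sigma^{\pm}$.
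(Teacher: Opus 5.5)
Your proposal is correct and follows the paper's proof essentially step for step. It uses the same table of bilinear dot and cross products fed into Lemma \ref{lem:pauli-product}, the same derivation of the vector identities from $f^{+}f^{-}=P_{\physvec{u}}$ and nilpotency, and the same orthonormal-frame splitting of $\sigma^{\gamma}$; the only cosmetic differences are that you read off the commutator directly as $2\imunit\,\physvec{\sigma}\cdot(\physvec{u}\times\physvec{u}^{\pm})$ instead of subtracting the two one-sided products, and you write the transverse part via the componentwise resolution of the identity.
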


\begin{proof}
Lemma \ref{lem:pauli-product} reduces every operator product to a scalar product and a cross product of frame vectors.
The orthonormality of the frame gives
$$\begin{aligned}
\physvec{u}^{\pm} \cdot \physvec{u}^{\pm}
=
\frac{1}{4} \rbk{\physvec{e}_{1} \cdot \physvec{e}_{1} - \physvec{e}_{2} \cdot \physvec{e}_{2} \pm 2 \imunit \physvec{e}_{1} \cdot \physvec{e}_{2}}
=
0,
\quad
\physvec{u}^{\pm} \times \physvec{u}^{\pm}
=
0.
\end{aligned}$$
The Pauli product formula therefore yields $\rbk{f^{\pm}}^{2}
=
0$.

The mixed products retain the orientation of the frame:
$$\begin{aligned}
\physvec{u}^{+} \cdot \physvec{u}^{-}
=
\frac{1}{2},
\quad
\physvec{u}^{+} \times \physvec{u}^{-}
=
\frac{1}{4} \rbk{\physvec{e}_{1} + \imunit \physvec{e}_{2}}
\times
\rbk{\physvec{e}_{1} - \imunit \physvec{e}_{2}}
=
-\frac{\imunit}{2} \physvec{u}.
\end{aligned}$$
Substitution into Lemma \ref{lem:pauli-product} gives the symmetric product and both ordered products:
$$\begin{gathered}
f^{+} f^{-}
=
\frac{1}{2} + \imunit \physvec{\sigma} \cdot \rbk{- \frac{\imunit}{2} \physvec{u}}
=
P_{\physvec{u}},
\quad
f^{-} f^{+}
=
1 - P_{\physvec{u}}
=
P_{- \physvec{u}},
\quad
f^{+} f^{-} + f^{-} f^{+}
=
1.
\end{gathered}$$

For the commutator,
the frame geometry gives
$\physvec{u} \cdot \physvec{u}^{\pm}
=
0$ and $\physvec{u} \times \physvec{u}^{\pm}
=
\mp \imunit \physvec{u}^{\pm}$.
Applying Lemma \ref{lem:pauli-product} in the two possible orders yields
$\rbk{\physvec{\sigma} \cdot \physvec{u}} f^{\pm}
=
\pm f^{\pm}$ and $f^{\pm} \rbk{\physvec{\sigma} \cdot \physvec{u}}
=
\mp f^{\pm}$.
Their difference is the asserted commutator relation.

The relation for the minus sign and the identity $f^{+} f^{-}
=
P_{\physvec{u}}$ show that
$$\begin{aligned}
\rbk{\physvec{\sigma} \cdot \physvec{u}} f^{-} \xi_{\physvec{u}}^{+}
=
- f^{-} \xi_{\physvec{u}}^{+},
\quad
\norm{f^{-} \xi_{\physvec{u}}^{+}}^{2}
=
\bkt{\xi_{\physvec{u}}^{+}}{P_{\physvec{u}} \xi_{\physvec{u}}^{+}}
=
1.
\end{aligned}$$
Thus $\xi_{\physvec{u}}^{-}
=
f^{-} \xi_{\physvec{u}}^{+}$ is a unit eigenvector for the eigenvalue $-1$.
The product relations and nilpotency now give all three flip identities at once:
$$\begin{aligned}
f^{+} \xi_{\physvec{u}}^{-}
=
f^{+} f^{-} \xi_{\physvec{u}}^{+}
=
\xi_{\physvec{u}}^{+},
\quad
f^{+} \xi_{\physvec{u}}^{+}
=
\rbk{f^{+}}^{2} \xi_{\physvec{u}}^{-}
=
0,
\quad
f^{-} \xi_{\physvec{u}}^{-}
=
\rbk{f^{-}}^{2} \xi_{\physvec{u}}^{+}
=
0.
\end{aligned}$$

It remains to prove \eqref{eq:frame-decomposition}.
The orthonormal frame decomposes the coordinate Pauli matrix as
$$\sigma^{\gamma}
=
u^{\gamma} \physvec{\sigma} \cdot \physvec{u}
+ \rbk{\physvec{e}_{1}}^{\gamma} \physvec{\sigma} \cdot \physvec{e}_{1}
+ \rbk{\physvec{e}_{2}}^{\gamma} \physvec{\sigma} \cdot \physvec{e}_{2}.$$
The definitions of $\physvec{u}^{\pm}$ and $f^{\pm}$ identify its transverse part as
$$\rbk{\physvec{e}_{1}}^{\gamma} \physvec{\sigma} \cdot \physvec{e}_{1}
+ \rbk{\physvec{e}_{2}}^{\gamma} \physvec{\sigma} \cdot \physvec{e}_{2}
=
2 \rbk{u^{-}}^{\gamma} f^{+}
+ 2 \rbk{u^{+}}^{\gamma} f^{-}.$$
Combining these two identities proves \eqref{eq:frame-decomposition}.
\end{proof}

The states of \(\spmat{2}{\fldcmp}\) are parametrized by the closed unit ball of \(\fldreal^{3}\): every state is of the form \begin{equation}\label{eq:bloch-state}
\fun{\oastate[\psi^{\rbk{1}}_{\physvec{m}}]}{A}
=
\sqfun{\trace}{\frac{1}{2} \rbk{1 + \physvec{m} \cdot \physvec{\sigma}} A},
\quad
\abs{\physvec{m}}
\leq
1,
\end{equation} with \(\fun{\oastate[\psi^{\rbk{1}}_{\physvec{m}}]}{\sigma^{\gamma}}
=
m^{\gamma}\), and \(\oastate[\psi^{\rbk{1}}_{\physvec{m}}]\) is pure exactly for \(\abs{\physvec{m}}
=
1\), where the density matrix \eqref{eq:bloch-state} is the projection \eqref{eq:bloch-projection}. The vector \(\physvec{m}\) is called the Bloch vector of the state. The spectral decomposition of the density matrix in \eqref{eq:bloch-state} gives the eigenvalues \begin{equation}\label{eq:bloch-density-spectrum}
\frac{1}{2} \rbk{1 \pm \abs{\physvec{m}}}.
\end{equation} In particular, \eqref{eq:bloch-density-spectrum} proves the preceding purity criterion.

A frequently used consequence of \eqref{eq:frame-decomposition} concerns centered one-site operators. For a unit vector \(\physvec{u}\) and \(\gamma
\in
\setone{x, y, z}\) let \(d^{\gamma}
=
\sigma^{\gamma} - u^{\gamma}\) denote the fluctuation of \(\sigma^{\gamma}\) around its expectation in the pure state defined by \(\xi_{\physvec{u}}^{+}\). Then \eqref{eq:frame-decomposition} together with \(\rbk{\physvec{\sigma} \cdot \physvec{u}} \xi_{\physvec{u}}^{+}
=
\xi_{\physvec{u}}^{+}\) gives \begin{equation}\label{eq:centered-action}
d^{\gamma} \xi_{\physvec{u}}^{+}
=
2 \rbk{u^{+}}^{\gamma} \xi_{\physvec{u}}^{-},
\quad
\bkt{\xi_{\physvec{u}}^{+}}{d^{\gamma} \xi_{\physvec{u}}^{+}}
=
0,
\end{equation} It follows that a centered one-site operator maps the reference vector into the flipped vector, with amplitude bounded by \(2 \abs{\rbk{u^{+}}^{\gamma}}
\leq
1\).

\subsection{Analytic Preliminaries}\label{sec:analytic}

The thermal analysis of the degenerate model rests on a handful of classical facts: Wallis' formula, the Stirling bounds with explicit error control, the integral representation of the Bessel function \(J_{0}\), the Fejér kernel, and Vitali's theorem on families of holomorphic functions. The original papers take the corresponding steps from \cite{WatsonWhittaker001} and from the theory of hypergeometric functions. Complete proofs are given here.

\subsubsection{Wallis' formula and the Stirling--Robbins bounds}\label{wallis-formula-and-the-stirlingrobbins-bounds}

Wallis' formula and the Stirling--Robbins bounds yield the explicit factorial and entropy estimates used in the total-spin asymptotics.

\begin{lem}[Wallis integrals]\label{lem:wallis}
For $k
\in
\monnat$ let $W_{k}
=
\int_{0}^{\pi / 2} \sin^{k} t \opdmsr{t}$,
where the integral is the Riemann integral.
Then $W_{2 k}
=
\frac{\pi}{2} \cdot \frac{\rbk{2 k}!}{4^{k} \rbk{k!}^{2}}$ and $W_{2 k + 1}
=
\frac{4^{k} \rbk{k!}^{2}}{\rbk{2 k + 1}!}$,
and the Wallis limit is
$$\lim_{k
\to
\infty} \frac{1}{\sqrt{k}} \cdot \frac{4^{k} \rbk{k!}^{2}}{\rbk{2 k}!}
=
\sqrt{\pi}.$$
\end{lem}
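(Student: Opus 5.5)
The plan is to prove the two closed forms by the standard reduction recursion, and then to squeeze the Wallis ratio between consecutive integrals.

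First, for $k \geq 2$, integrate by parts with $u = \sin^{k-1} t$ and $dv = \sin t \opdmsr{t}$. The boundary terms vanish at $0$ and at $\frac{\pi}{2}$. Using $\cos^{2} t = 1 - \sin^{2} t$ gives $W_{k} = (k-1)(W_{k-2} - W_{k})$, that is,
$$W_{k} = \frac{k-1}{k} W_{k-2}.$$
The initial values are $W_{0} = \frac{\pi}{2}$ and $W_{1} = 1$. Induction on $k$ then gives both formulas. In the even case the step multiplies by $\frac{2k-1}{2k} = \frac{(2k)(2k-1)}{4k^{2}}$, which is exactly how $\frac{(2k)!}{4^{k}(k!)^{2}}$ updates. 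In the odd case the step multiplies by $\frac{2k}{2k+1}$, which matches the update of $\frac{4^{k}(k!)^{2}}{(2k+1)!}$.

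For the limit, note that $0 \leq \sin t \leq 1$ on $[0, \frac{\pi}{2}]$. Hence $W_{k}$ is nonincreasing in $k$ and strictly positive, so
$$W_{2k+1} \leq W_{2k} \leq W_{2k-1} = \frac{2k+1}{2k} W_{2k+1}.$$
Therefore $\frac{W_{2k}}{W_{2k+1}} \to 1$.

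Write $a_{k} = \frac{4^{k}(k!)^{2}}{(2k)!}$. The closed forms give $W_{2k} = \frac{\pi}{2 a_{k}}$ and $W_{2k+1} = \frac{a_{k}}{2k+1}$. Their ratio is
$$\frac{W_{2k}}{W_{2k+1}} = \frac{\pi(2k+1)}{2 a_{k}^{2}},$$
so $\frac{a_{k}^{2}}{k} = \frac{\pi(2k+1)}{2k} \cdot \frac{W_{2k+1}}{W_{2k}} \to \pi$. Taking positive square roots yields $\frac{a_{k}}{\sqrt{k}} \to \sqrt{\pi}$.

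No step is a real obstacle. The only point needing care is the algebra matching the recursion factors to the factorial expressions, together with the bookkeeping in the squeeze.
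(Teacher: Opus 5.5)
Your proof is correct and follows the paper's argument essentially step for step. It uses the integration-by-parts recursion $W_{k}=\frac{k-1}{k}W_{k-2}$ with $W_{0}=\frac{\pi}{2}$, $W_{1}=1$, induction for the closed forms, and the monotonicity squeeze $1\leq W_{2k}/W_{2k+1}\leq \frac{2k+1}{2k}$, which the closed forms then turn into the Wallis limit.
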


\begin{proof}
Integration by parts gives $W_{k}
=
\frac{k - 1}{k} W_{k - 2}$ for $k
\geq
2$,
and $W_{0}
=
\frac{\pi}{2}$,
$W_{1}
=
1$.
Induction yields the two closed formulas.
Since $0
\leq
\sin t
\leq
1$ the sequence $\seq{W_{k}}{k
\geq
1}$ is nonincreasing,
and the ratio estimate is
$$1
\leq
\frac{W_{2 k}}{W_{2 k + 1}}
\leq
\frac{W_{2 k - 1}}{W_{2 k + 1}}
=
\frac{2 k + 1}{2 k}
\to
1
\quad \rbk{k \to \infty}.$$
Inserting the closed formulas,
$\frac{W_{2 k}}{W_{2 k + 1}}
=
\frac{\pi}{2} \rbk{2 k + 1} \rbk{\frac{\rbk{2 k}!}{4^{k} \rbk{k!}^{2}}}^{2}
\to
1$ as $k
\to
\infty$,
which rearranges to the stated limit.
\end{proof}

\begin{prop}[Stirling--Robbins bounds]\label{prop:stirling}
For every $n
\in
\semigrposint$ there is $r_{n}$ with
$$n!
=
\sqrt{2 \pi n} \rbk{\frac{n}{e}}^{n} \fnexp{r_{n}},
\quad
\frac{1}{12 n + 1} < r_{n} < \frac{1}{12 n}.$$
\end{prop}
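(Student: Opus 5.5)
The plan is Robbins' classical argument: define the sequence
$d_{n} = \log n! - \rbk{n + \frac{1}{2}} \log n + n$
and control the consecutive differences $d_{n} - d_{n+1}$. A direct computation gives
$$d_{n} - d_{n+1}
= \rbk{n + \tfrac{1}{2}} \log\frac{n+1}{n} - 1.$$
Write $\frac{n+1}{n} = \frac{1+x}{1-x}$ with $x = \frac{1}{2n+1}$. The series $\frac{1}{2}\log\frac{1+x}{1-x} = \sum_{k\ge 0} \frac{x^{2k+1}}{2k+1}$ then gives
$$d_{n} - d_{n+1} = \sum_{k \geq 1} \frac{x^{2k}}{2k+1},$$
which is positive. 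So $\rbk{d_{n}}$ is strictly decreasing.

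Both bounds come from this series. For the upper bound, replace $\frac{1}{2k+1}$ by $\frac{1}{3}$. The geometric sum gives
$$d_{n} - d_{n+1} < \frac{x^{2}}{3\rbk{1-x^{2}}} = \frac{1}{12n} - \frac{1}{12(n+1)}.$$
Hence $d_{n} - \frac{1}{12n}$ is increasing. For the lower bound, keep only the first term, or compare with $\sum_{k \ge 1} 3^{-k} x^{2k}$. This yields
$$d_{n} - d_{n+1} > \frac{1}{12n+1} - \frac{1}{12(n+1)+1},$$
so $d_{n} - \frac{1}{12n+1}$ is decreasing. This lower estimate is the only delicate point. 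I would verify it by checking $\frac{x^{2}}{3} + \frac{x^{4}}{5} \ge \frac{12}{(12n+1)(12n+13)}$ as an elementary polynomial inequality in $n$. If that fails at small $n$, I would instead use the full geometric comparison $\frac{x^{2}}{3-x^{2}}$.

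From the two monotonicity statements, $d_{n}$ converges to a limit $C$ and satisfies
$$C + \frac{1}{12n+1} < d_{n} < C + \frac{1}{12n}.$$
Setting $r_{n} = d_{n} - C$, the identity $n! = e^{C}\sqrt{n}\rbk{n/e}^{n} e^{r_{n}}$ holds with the asserted bounds, provided $e^{C} = \sqrt{2\pi}$.

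To identify the constant, substitute the asymptotic form into $\frac{4^{k}(k!)^{2}}{\sqrt{k}\,(2k)!}$. Since $r_{n} \to 0$, this ratio tends to $e^{C}/\sqrt{2}$. Lemma \ref{lem:wallis} says the same ratio tends to $\sqrt{\pi}$, so $e^{C} = \sqrt{2\pi}$. This completes the statement.
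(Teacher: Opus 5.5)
Your proposal is correct and matches the paper's proof step for step. Both use the same sequence $\log n! - (n+\frac{1}{2})\log n + n$, the same series $\sum_{k\ge1} x^{2k}/(2k+1)$ with $x = 1/(2n+1)$, the same pair of monotone auxiliary sequences, and the same Wallis identification of the constant. The hedge on the lower bound is unnecessary: keeping only the first term $x^{2}/3$ already works for every $n \geq 1$, since $(12n+1)(12n+13) - 36(2n+1)^{2} = 24n - 23 > 0$, which is exactly the check the paper performs.
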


\begin{proof}
Define $a_{n}
=
\log n! - \rbk{n + \frac{1}{2}} \log n + n$.
The consecutive difference of $a_{n}$ is
$$a_{n} - a_{n + 1}
=
\rbk{n + \frac{1}{2}} \log \frac{n + 1}{n} - 1.$$
With $x
=
\frac{1}{2 n + 1}
\in
\rbk{0, 1}$ one has $\frac{n + 1}{n}
=
\frac{1 + x}{1 - x}$ and the elementary expansion
$$\rbk{n + \frac{1}{2}} \log \frac{n + 1}{n}
=
\frac{1}{2 x} \log \frac{1 + x}{1 - x}
=
\sum_{j
=
0}^{\infty} \frac{x^{2 j}}{2 j + 1}
=
1 + \frac{x^{2}}{3} + \frac{x^{4}}{5} + \dotsb,$$
where the series converges absolutely for $\abs{x} < 1$.
It follows that $0 < a_{n} - a_{n + 1}$,
with the two-sided bounds
$$\frac{x^{2}}{3}
\leq
a_{n} - a_{n + 1}
=
\sum_{j
\geq
1} \frac{x^{2 j}}{2 j + 1}
\leq
\frac{1}{3} \sum_{j
\geq
1} x^{2 j}
=
\frac{1}{3} \cdot \frac{x^{2}}{1 - x^{2}}
=
\frac{1}{3} \cdot \frac{1}{\rbk{2 n + 1}^{2} - 1}
=
\frac{1}{12} \rbk{\frac{1}{n} - \frac{1}{n + 1}}.$$
Set $b_{n}
=
a_{n} - \frac{1}{12 n}$ and $c_{n}
=
a_{n} - \frac{1}{12 n + 1}$.
Note that the upper bound above is strict for $x > 0$,
the terms with $j
\geq
2$ strictly smaller than the geometric comparison,
and the lower bound is strict as well.
It follows that the monotonicities of $b_{n}$ and $c_{n}$ are strict.
The upper bound gives $b_{n} - b_{n + 1}
=
\rbk{a_{n} - a_{n + 1}} - \frac{1}{12 n \rbk{n + 1}}
\leq
0$.
It follows that $b_{n}$ is nondecreasing.
The lower bound gives
$$c_{n} - c_{n + 1}
\geq
\frac{1}{3 \rbk{2 n + 1}^{2}} - \rbk{\frac{1}{12 n + 1} - \frac{1}{12 n + 13}}
=
\frac{12}{36 \rbk{2 n + 1}^{2}} - \frac{12}{\rbk{12 n + 1} \rbk{12 n + 13}}
\geq
0,$$
because,
for $n
\in
\semigrposint$,
it holds that
$$\begin{aligned}
&\rbk{12 n + 1} \rbk{12 n + 13} - 36 \rbk{2 n + 1}^{2}
=
\rbk{144 n^{2} + 168 n + 13} - \rbk{144 n^{2} + 144 n + 36}
\\ 
&=
24 n - 23 > 0.
\end{aligned}$$
It follows that $c_{n}$ is nonincreasing.
Since $c_{n} - b_{n}
=
\frac{1}{12 n} - \frac{1}{12 n + 1}
\to
0$ as $n
\to
\infty$,
the two monotone sequences converge to a common limit $a_{\infty}$,
and $b_{n} < a_{\infty} < c_{n}$ rearranges to
$$\frac{1}{12 n + 1} < a_{n} - a_{\infty} < \frac{1}{12 n}.$$
It remains to identify $\fnexp{a_{\infty}}
=
\sqrt{2 \pi}$.
From the definition of $a_{n}$,
$$\frac{4^{k} \rbk{k!}^{2}}{\rbk{2 k}! \sqrt{k}}
=
\fnexp{2 a_{k} - a_{2 k}} \cdot \frac{4^{k} k^{2 k + 1} e^{- 2 k}}{\rbk{2 k}^{2 k + \frac{1}{2}} e^{- 2 k} \sqrt{k}}
=
\fnexp{2 a_{k} - a_{2 k}} \cdot \frac{1}{\sqrt{2}},$$
and by Lemma \ref{lem:wallis} the left side tends to $\sqrt{\pi}$.
It follows that $\fnexp{a_{\infty}}
=
\sqrt{2} \sqrt{\pi}
=
\sqrt{2 \pi}$.
\end{proof}

The following consequence is the form in which Proposition \ref{prop:stirling} enters the concentration argument. It replaces the appeal to Binet's second formula in \cite[Eq. (21)]{ThirringWalter001}, of which it is the two-sided integrated version with the same error order.

\begin{cor}[uniform entropy bounds]\label{cor:entropy-bounds}
For $n
\in
\semigrposint$ and $a
\in
\monnat$ with $a
\leq
n$ write $\fun{s_{n}}{a}
=
-\frac{a}{n} \log \frac{a}{n}
-\frac{n - a}{n} \log \frac{n - a}{n}$ with the convention $0 \log 0
=
0$.
Then for all $n
\in
\semigrposint$ and $a
\in
\monnat$ with $a
\leq
n$, it holds that
$$\abs{\log \binom{n}{a} - n \fun{s_{n}}{a}}
\leq
\frac{3}{2} \fun{\log}{n + 1} + 3.$$
\end{cor}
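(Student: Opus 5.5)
The plan is to apply Proposition \ref{prop:stirling} to the three factorials in $\binom{n}{a}=\frac{n!}{a!\,(n-a)!}$ and to cancel the exponential and polynomial parts exactly. The boundary cases $a=0$ and $a=n$ are handled first. There $\binom{n}{a}=1$ and $\fun{s_{n}}{a}=0$ under the convention $0\log 0=0$, so the left side vanishes and the bound is trivial.

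For $1\leq a\leq n-1$, set $b=n-a\geq 1$ and write each factorial as $k!=\sqrt{2\pi k}\,(k/e)^{k}\fnexp{r_{k}}$. Then
$$\log\binom{n}{a}
=
n\log n - a\log a - b\log b
+\frac12\log\frac{n}{2\pi a b}
+r_{n}-r_{a}-r_{b}.$$
The $e^{-k}$ factors cancel because $a+b=n$. The identity $n\log n-a\log a-b\log b=-a\log\frac{a}{n}-b\log\frac{b}{n}=n\,\fun{s_{n}}{a}$ removes the main term exactly.

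It then remains to bound the two error pieces. For the remainders, Proposition \ref{prop:stirling} gives $0<r_{k}<\frac{1}{12}$, so $\abs{r_{n}-r_{a}-r_{b}}\leq\frac{1}{6}$. For the square-root term, $ab\geq n-1\geq 1$ and $ab\leq n^{2}/4$. Hence $\frac{n}{ab}\in\sqbk{\frac{4}{n},\,n}$ up to the constant $2\pi$, and
$$\abs{\tfrac12\log\tfrac{n}{2\pi ab}}\leq\tfrac12\log(2\pi n)\leq \tfrac12\log n+1.$$
Adding the two pieces gives a bound of at most $\frac12\log(n+1)+2$, which sits comfortably inside $\frac32\fun{\log}{n+1}+3$.

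No step is a real obstacle; the only point needing care is the uniformity of the square-root term. One must check both directions of the inequality for $\frac{n}{ab}$ on the whole range $1\leq a\leq n-1$, including $n=2$ and $a$ near the endpoints. The elementary bounds $n-1\leq ab\leq n^{2}/4$ cover this. The generous constants $\frac32$ and $3$ also leave room for a cruder treatment, for instance bounding $\log\frac{1}{2\pi}$ and $\log 4$ separately.
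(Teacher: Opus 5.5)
Your proposal is correct and is essentially the paper's proof: Stirling--Robbins applied to the three factorials, exact cancellation of the entropy term, and separate bounds on the remainder difference and on $\frac{1}{2}\log\frac{n}{2\pi a(n-a)}$. The only differences are that your constants are slightly sharper, namely $\lvert r_n - r_a - r_{n-a}\rvert < \frac{1}{6}$ and $n-1 \le a(n-a) \le n^2/4$, where the paper uses $<1$ and $1 \le a(n-a) \le n^2$.
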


\begin{proof}
For $a
\in
\setone{0, n}$ both terms vanish and the bound is trivial.
For $a
\in
\semigrposint$ with $a
\leq
n - 1$ apply Proposition \ref{prop:stirling} to the three factorials in $\binom{n}{a}
=
\frac{n!}{a! \rbk{n - a}!}$:
$$\log \binom{n}{a}
=
n \fun{s_{n}}{a}
+\frac{1}{2}
\log \frac{n}{2 \pi a \rbk{n - a}} + r_{n} - r_{a} - r_{n - a}.$$
The error terms satisfy $\abs{r_{n} - r_{a} - r_{n - a}}
\leq
\frac{1}{12} + \frac{1}{12} + \frac{1}{12} < 1$,
and from $1
\leq
a \rbk{n - a}
\leq
n^{2}$ one gets $\frac{1}{n}
\leq
\frac{n}{a \rbk{n - a}}
\leq
n$.
It follows that $$\abs{\frac{1}{2} \log \frac{n}{2 \pi a \rbk{n - a}}}
\leq
\frac{1}{2} \fun{\log}{2 \pi} + \frac{1}{2} \log n
\leq
\frac{3}{2} \fun{\log}{n + 1} + 2.$$
\end{proof}

\subsubsection{\texorpdfstring{The Bessel function \(J_{0}\)}{The Bessel function J\_\{0\}}}\label{the-bessel-function-j_0}

The Bessel function enters as the gauge average of a plane wave over the circle. Both the series and the integral representation are needed.

\begin{defn}[Bessel function]\label{def:bessel}
For $z
\in
\fldcmp$, we define $J_0$ as
$$\fun{J_{0}}{z}
=
\sum_{k
=
0}^{\infty} \frac{\rbk{-1}^{k}}{\rbk{k!}^{2}} \rbk{\frac{z}{2}}^{2 k}.$$
\end{defn}

The series in Definition \ref{def:bessel} converges on \(\fldcmp\) and defines an entire function of \(z^{2}\).

\begin{lem}[circle average]\label{lem:bessel-average}
For all $v_{1}, v_{2}
\in
\fldcmp$,
$$\int_{\mansphere^{1}}
\fnexp{\imunit \rbk{v_{1} \fun{\cos}{\phi} + v_{2} \fun{\sin}{\phi}}}
\opdmsr{\msrprb_{\mansphere^{1}}}(\phi)
=
\fun{J_{0}}{\sqrt{v_{1}^{2} + v_{2}^{2}}},$$
where the right side means the entire function $\sum_{k \geq 0}
\frac{\rbk{-1}^{k}}{\rbk{k!}^{2} 4^{k}}
\rbk{v_{1}^{2} + v_{2}^{2}}^{k}$ of $\rbk{v_{1}, v_{2}}$.
\end{lem}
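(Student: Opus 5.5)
I will prove the identity by expanding the exponential in its power series and integrating term by term. The series
$$\fnexp{\imunit \rbk{v_{1} \fun{\cos}{\phi} + v_{2} \fun{\sin}{\phi}}}
=
\sum_{n \geq 0} \frac{\imunit^{n}}{n!} \rbk{v_{1} \fun{\cos}{\phi} + v_{2} \fun{\sin}{\phi}}^{n}$$
converges uniformly in $\phi$, because its $n$-th term is bounded by $\rbk{\abs{v_{1}} + \abs{v_{2}}}^{n} / n!$. Integration against the probability measure $\msrprb_{\mansphere^{1}}$ therefore commutes with the summation. Both sides are then entire functions of $\rbk{v_{1}, v_{2}} \in \fldcmp^{2}$. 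The left side is a uniformly convergent series of polynomials, and the right side is entire by Definition \ref{def:bessel}, since it is a series in $v_{1}^{2} + v_{2}^{2}$ with summable coefficients. So it suffices to match the homogeneous polynomial of degree $n$ on each side.

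The cleanest way to compute these moments is to use the complex variables $w_{\pm} = v_{1} \mp \imunit v_{2}$, so that
$$v_{1} \fun{\cos}{\phi} + v_{2} \fun{\sin}{\phi} = \frac{1}{2} \rbk{w_{+} \fnexp{\imunit \phi} + w_{-} \fnexp{- \imunit \phi}}.$$
This is a polynomial identity in $\rbk{v_{1}, v_{2}}$ and remains valid for complex arguments. Expanding the $n$-th power by the binomial theorem and using $\int_{\mansphere^{1}} \fnexp{\imunit j \phi} \opdmsr{\msrprb_{\mansphere^{1}}}(\phi) = \delta_{j,0}$, only the balanced term survives. Odd $n$ therefore give zero. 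For $n = 2k$ the surviving contribution is
$$2^{-2k} \binom{2k}{k} \rbk{w_{+} w_{-}}^{k} = 2^{-2k} \binom{2k}{k} \rbk{v_{1}^{2} + v_{2}^{2}}^{k}.$$

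Multiplying by the prefactor $\imunit^{2k} / \rbk{2k}! = \rbk{-1}^{k} / \rbk{2k}!$ gives $\rbk{-1}^{k} \rbk{v_{1}^{2} + v_{2}^{2}}^{k} / \rbk{4^{k} \rbk{k!}^{2}}$. This is exactly the $k$-th term of the series on the right side of the statement. With $\rbk{v_{1}^{2} + v_{2}^{2}}^{k} = \rbk{\sqrt{v_{1}^{2} + v_{2}^{2}}}^{2k}$ it also matches Definition \ref{def:bessel}, and no branch choice is needed because only even powers occur.

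The only point needing care is the term-by-term interchange of sum and integral for complex $v_{1}, v_{2}$. The uniform bound above settles it, since $\abs{\fun{\cos}{\phi}}, \abs{\fun{\sin}{\phi}} \leq 1$ for real $\phi$. I do not expect a genuine obstacle in this argument.
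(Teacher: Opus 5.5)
Your proof is correct, and it takes a different route from the paper's. The paper first argues that both sides are entire in $(v_1,v_2)$, using Morera and Fubini for the integral. It then reduces to real arguments via the identity theorem applied one variable at a time. Next it writes $(v_1,v_2)=R(\cos\alpha,\sin\alpha)$ to rotate the integrand into $e^{iR\cos\phi}$. Finally it evaluates the even moments of $\cos\phi$, namely $(2k)!/(4^k(k!)^2)$, through the Wallis integrals of Lemma \ref{lem:wallis}.

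You instead work directly at complex $(v_1,v_2)$. The factorization $v_1\cos\phi+v_2\sin\phi=\tfrac12(w_+e^{i\phi}+w_-e^{-i\phi})$ with $w_+w_-=v_1^2+v_2^2$, combined with orthogonality of the characters $e^{ij\phi}$, computes every moment at once. It also recovers the Wallis moments as the special case $(v_1,v_2)=(1,0)$. This gives you a shorter, self-contained argument with no analytic continuation.

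The paper's polar reduction breaks down at complex points such as $v_2=iv_1$, where $v_1^2+v_2^2=0$ although $(v_1,v_2)\neq 0$. That is why the paper needs the identity-theorem step, whereas your computation covers those points uniformly. The paper's route, for its part, reuses the Wallis lemma it needs anyway for the Stirling bounds.

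One small remark: your sentence that it ``suffices to match the homogeneous polynomial of degree $n$'' claims more than you need. After the justified interchange of sum and integral, the two series already agree term by term, so no appeal to entireness is required.
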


\begin{proof}
Both sides are entire functions of $\rbk{v_{1}, v_{2}}
\in
\fldcmp^{2}$.
The left side because the integrand is entire in $\rbk{v_{1}, v_{2}}$ locally uniformly in $\phi$,
so that the integral is entire by Morera and Fubini.
It follows that it suffices to prove the identity for real $\rbk{v_{1}, v_{2}}$,
where it extends by the identity theorem,
applied one variable at a time.
For real $\rbk{v_{1}, v_{2}}$ write $v_{1}
=
R \fun{\cos}{\alpha}$,
$v_{2}
=
R \fun{\sin}{\alpha}$ with $R
=
\sqrt{v_{1}^{2} + v_{2}^{2}}
\geq
0$.
Then $v_{1} \fun{\cos}{\phi} + v_{2} \fun{\sin}{\phi}
=
R \fun{\cos}{\phi - \alpha}$ and by $2 \pi$-periodicity the integral equals $\int_{\mansphere^{1}} \fnexp{\imunit R \fun{\cos}{\phi}} \opdmsr{\msrprb_{\mansphere^{1}}}(\phi)$.
Expanding the exponential,
which converges uniformly in $\phi$,
$$\int_{\mansphere^{1}} \fnexp{\imunit R \fun{\cos}{\phi}} \opdmsr{\msrprb_{\mansphere^{1}}}(\phi)
=
\sum_{m
=
0}^{\infty} \frac{\rbk{\imunit R}^{m}}{m!} \cdot \int_{\mansphere^{1}} \cos^{m} \phi \opdmsr{\msrprb_{\mansphere^{1}}}(\phi).$$
The odd moments vanish,
and for $m
=
2 k$ the moment equals $\frac{2}{\pi} \cdot 2 W_{2 k} \cdot \frac{1}{2}
=
\frac{\rbk{2 k}!}{4^{k} \rbk{k!}^{2}}$ by Lemma \ref{lem:wallis} and the symmetry of $\cos^{2 k}$ over the four quarter-periods.
It follows that the integral equals $\sum_{k \geq 0} \frac{\rbk{-1}^{k} R^{2 k}}{\rbk{2 k}!} \cdot \frac{\rbk{2 k}!}{4^{k} \rbk{k!}^{2}}
=
\fun{J_{0}}{R}$.
\end{proof}

\subsubsection{The Fejér kernel}\label{the-fejuxe9r-kernel}

Section \ref{sec:decomposition} uses two approximation statements on the circle, both consequences of the Fejér kernel. Uniformly bounded trigonometric approximations of indicator functions in \(\lp^{2}\) establish cyclicity of the direct-integral representation, whereas uniqueness of Fourier coefficients for integrable functions is used to compute the center.

\begin{lem}[Fejér approximation]\label{lem:fejer}
For $N
\geq
0$ and a $2 \pi$-periodic integrable $f$ let
$$\fun{K_{N}}{\phi}
=
\sum_{j
=
- N}^{N} \rbk{1 - \frac{\abs{j}}{N + 1}} \fnexp{\imunit j \phi},
\quad
\fun{\rbk{\operatorname{Fej}_{N} f}}{\theta}
=
\int_{\mansphere^{1}} \fun{f}{\theta - \phi} \fun{K_{N}}{\phi} \opdmsr{\msrprb_{\mansphere^{1}}}(\phi),$$
a trigonometric polynomial of degree at most $N$.
\begin{enumerate}
\item $K_{N}
\geq
0$,
$\int_{\mansphere^{1}} K_{N}
\opdmsr{\msrprb_{\mansphere^{1}}}(\phi)
=
1$,
and $\sup_{\delta
\leq
\abs{\phi}
\leq
\pi} \fun{K_{N}}{\phi}
\leq
\frac{\pi^{2}}{\rbk{N + 1} \delta^{2}}$ for $0 < \delta < \pi$.
\item $\norm{\operatorname{Fej}_{N} f}_{\infty}
\leq
\norm{f}_{\infty}$,
and $\operatorname{Fej}_{N} f
\to
f$ uniformly as $N \to \infty$ when $f$ is continuous.
\item $\norm{\operatorname{Fej}_{N} f - f}_{\lp^{1}}
\to
0$ as $N \to \infty$ for every $f
\in
\lp^{1}$.
An integrable function whose Fourier coefficients all vanish is zero almost everywhere,
and for bounded measurable $f$ also $\norm{\operatorname{Fej}_{N} f - f}_{\lp^{2}}
\to
0$ as $N \to \infty$.
\end{enumerate}
\end{lem}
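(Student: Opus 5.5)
For part (1), I will compute $K_N$ in closed form as the usual Fejér kernel:
\[
K_N(\phi)=\frac{1}{N+1}\Bigl|\sum_{j=0}^{N}\fnexp{\imunit j\phi}\Bigr|^{2}=\frac{1}{N+1}\,\frac{\sin^{2}\frac{(N+1)\phi}{2}}{\sin^{2}\frac{\phi}{2}} .
\]
The first equality follows by expanding the square modulus. The diagonal $j-k$ has exactly $N+1-\abs{j-k}$ terms, which gives the coefficients $1-\frac{\abs{j}}{N+1}$. This representation yields $K_N\ge 0$ at once. The normalization $\int K_N\,\opdmsr{\msrprb_{\mansphere^{1}}}=1$ holds because only the $j=0$ Fourier mode survives integration. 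For the tail bound I will use $\abs{\sin(\phi/2)}\ge \abs{\phi}/\pi$ on $[-\pi,\pi]$ (concavity of $\sin$ on $[0,\pi/2]$) and $\sin^2\le 1$. Together these give $K_N(\phi)\le \pi^2/((N+1)\phi^2)\le \pi^2/((N+1)\delta^2)$ for $\delta\le\abs{\phi}\le\pi$.

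For part (2), the sup bound follows from positivity and unit mass. For the uniform convergence I will run the standard approximate-identity argument. Write
\[
\operatorname{Fej}_N f(\theta)-f(\theta)=\int\bigl(f(\theta-\phi)-f(\theta)\bigr)K_N(\phi)\,\opdmsr{\msrprb_{\mansphere^{1}}}(\phi)
\]
and split the integral at $\abs{\phi}=\delta$. Near $0$, uniform continuity of $f$ on the compact circle makes the integrand small uniformly in $\theta$. On the tail, the integrand is bounded by $2\norm{f}_\infty$ times the kernel bound from (1), which gives $2\norm{f}_\infty\pi^2/((N+1)\delta^2)\to0$.

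For part (3), Minkowski's integral inequality (or Fubini) with $K_N\ge0$ and unit mass shows that $\operatorname{Fej}_N$ is a contraction on $\lp^1$. Continuous functions are dense in $\lp^1(\mansphere^1)$, and by (2) the operators converge uniformly, hence in $\lp^1$, on continuous functions. An $\frac{\epsilon}{3}$-argument then gives $\lp^1$ convergence for every $f\in\lp^1$. If all Fourier coefficients of $f$ vanish, then every $\operatorname{Fej}_N f$ is the zero trigonometric polynomial, since $\operatorname{Fej}_N f(\theta)=\sum_{\abs{j}\le N}(1-\frac{\abs{j}}{N+1})\hat f(j)\fnexp{\imunit j\theta}$. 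The $\lp^1$ limit is therefore $0$, so $f=0$ almost everywhere. Finally, for bounded measurable $f$, I will use the interpolation estimate
\[
\norm{g}_{\lp^2}^2\le\norm{g}_\infty\norm{g}_{\lp^1} \quad\text{with}\quad g=\operatorname{Fej}_N f-f .
\]
Here $\norm{g}_\infty\le 2\norm{f}_\infty$ by (2), so $\lp^2$ convergence follows from the $\lp^1$ convergence.

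The main obstacle is only the density step in (3), namely that continuous functions are dense in $\lp^1$ of the circle. I would simply cite it (Lusin, or regularity of Lebesgue measure), since the paper already uses the analogous density of continuous sections. Everything else is routine.
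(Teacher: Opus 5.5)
Your proposal is correct and matches the paper's proof in (1), (2), and in the Fourier-uniqueness and $L^{2}$-interpolation steps of (3). The only deviation is in the $L^{1}$ convergence. You combine the $L^{1}$-contractivity of $\operatorname{Fej}_{N}$, the density of continuous functions, and part (2) in an $\frac{\epsilon}{3}$-argument, whereas the paper first derives $L^{1}$-continuity of translations from the same density fact and then reruns the $\delta$-splitting of (2) with the $L^{1}$ modulus of continuity; both rest on the same ingredients, and your version has the small advantage of reusing (2) directly.
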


\begin{proof}

(1)
The geometric sum gives
$$\abs{\sum_{j
=
0}^{N} \fnexp{\imunit j \phi}}^{2}
=
\sum_{j, l
=
0}^{N} \fnexp{\imunit \rbk{j - l} \phi}
=
\sum_{r
=
- N}^{N} \rbk{N + 1 - \abs{r}} \fnexp{\imunit r \phi}
=
\rbk{N + 1} \fun{K_{N}}{\phi},$$
It follows that $K_{N}
\geq
0$,
and for $\phi
\notin
2 \pi \ringratint$ the closed form of the geometric sum gives
$$\fun{K_{N}}{\phi}
=
\frac{1}{N + 1} \rbk{\frac{\sin \frac{\rbk{N + 1} \phi}{2}}{\sin \frac{\phi}{2}}}^{2}.$$
The mean is the coefficient of $\fnexp{\imunit 0 \phi}$,
namely $1$.
For $\delta
\leq
\abs{\phi}
\leq
\pi$ the concavity of $\sin$ on $\closedinterval{0}{\frac{\pi}{2}}$ gives $\abs{\fun{\sin}{\frac{\phi}{2}}}
\geq
\frac{\abs{\phi}}{\pi}
\geq
\frac{\delta}{\pi}$,
and the numerator is at most $1$.

(2)
The operator $\operatorname{Fej}_{N}$ averages translates of $f$ against a nonnegative kernel of total mass one.
It follows that $\norm{\operatorname{Fej}_{N} f}_{\infty}
\leq
\norm{f}_{\infty}$.
Let $f$ be continuous,
and fix $\delta
>
0$.
Uniform continuity gives
$$\abs{\fun{\rbk{\operatorname{Fej}_{N} f}}{\theta} - \fun{f}{\theta}}
\leq
\int_{\abs{\phi} < \delta} \fun{K_{N}}{\phi} \abs{\fun{f}{\theta - \phi} - \fun{f}{\theta}} \opdmsr{\msrprb_{\mansphere^{1}}}(\phi) + 2 \norm{f}_{\infty} \cdot \frac{\pi^{2}}{\rbk{N + 1} \delta^{2}},$$
and the first term is bounded by the modulus of continuity of $f$ at scale $\delta$.
Let $N
\to
\infty$,
then $\delta
\to
0$.

(3)
Translation is continuous in $\lp^{1}$.
Given $f
\in
\lp^{1}$ and $\epsilon > 0$,
choose a continuous $g$ with $\norm{f - g}_{\lp^{1}} < \epsilon$,
possible by the density of continuous functions in $\lp^{1}$.
Then $$\norm{\fun{f}{\cdot - \phi} - f}_{\lp^{1}}
\leq
2 \epsilon + \norm{\fun{g}{\cdot - \phi} - g}_{\lp^{1}}
\leq
2 \epsilon + \norm{\fun{g}{\cdot - \phi} - g}_{\infty},$$
small for small $\abs{\phi}$ by uniform continuity of $g$.
The triangle inequality under the integral gives
$$\norm{\operatorname{Fej}_{N} f - f}_{\lp^{1}}
\leq
\int_{\mansphere^{1}} \fun{K_{N}}{\phi} \norm{\fun{f}{\cdot - \phi} - f}_{\lp^{1}} \opdmsr{\msrprb_{\mansphere^{1}}}(\phi),$$
and the same $\delta$-splitting as in (2) with the $\lp^{1}$-modulus of continuity gives the convergence.
The Fourier coefficients of $\operatorname{Fej}_{N} f$ are $\rbk{1 - \frac{\abs{j}}{N + 1}} \fun{\hat{f}}{j}$.
If all $\fun{\hat{f}}{j}$ vanish,
then $\operatorname{Fej}_{N} f
=
0$ for every $N$ and $f
=
0$ in $\lp^{1}$.
For bounded $f$,
the remaining convergence is
$$\norm{\operatorname{Fej}_{N} f - f}_{\lp^{2}}^{2}
\leq
\norm{\operatorname{Fej}_{N} f - f}_{\infty} \norm{\operatorname{Fej}_{N} f - f}_{\lp^{1}}
\leq
2 \norm{f}_{\infty} \norm{\operatorname{Fej}_{N} f - f}_{\lp^{1}}
\to
0
\quad \rbk{N \to \infty}.$$
\end{proof}

\subsubsection{Vitali's theorem}\label{vitalis-theorem}

Vitali's theorem and its several-variable corollary upgrade pointwise convergence of generating functions to convergence of all derivatives.

\begin{prop}[Vitali's theorem]\label{prop:vitali}
Let $D
\subset
\fldcmp$ be a connected open set,
$\seq{f_{\Omega}}{\Omega
\geq
1}$ a sequence of holomorphic functions on $D$ with $\sup_{\Omega \geq 1} \sup_{z
\in
K} \abs{\fun{f_{\Omega}}{z}} < \infty$ for every compact $K
\subset
D$,
and suppose $\fun{f_{\Omega}}{t}$ converges for every $t$ in a subset $A
\subset
D$ having an accumulation point in $D$.
Then $f_{\Omega}$ converges locally uniformly on $D$ to a holomorphic function $f$,
and every derivative $f_{\Omega}^{\rbk{k}}$ converges locally uniformly to $f^{\rbk{k}}$.
\end{prop}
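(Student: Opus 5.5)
The plan is the classical route through Montel compactness together with the identity theorem. The first step is to show that the family $\seq{f_{\Omega}}{\Omega \geq 1}$ is normal on $D$. For a compact $K \subset D$, choose $r > 0$ such that the closed $2r$-neighbourhood $K_{2r}$ of $K$ lies in $D$. The Cauchy estimate on discs of radius $r$ gives $\sup_{\Omega} \sup_{K_{r}} \abs{f_{\Omega}'} \leq \frac{1}{r} \sup_{\Omega} \sup_{K_{2r}} \abs{f_{\Omega}}$. The family is therefore uniformly bounded and equicontinuous on every compact set. Exhausting $D$ by countably many compacts, Arzelà--Ascoli together with a diagonal argument then shows that every subsequence has a further subsequence converging locally uniformly on $D$. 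By Weierstrass' theorem, the Cauchy integral formula passes to such limits, so each limit is holomorphic.

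The second step identifies all subsequential limits. Suppose $g$ and $h$ are locally uniform limits of two subsequences. For each $t \in A$, both $\fun{g}{t}$ and $\fun{h}{t}$ equal $\lim_{\Omega} \fun{f_{\Omega}}{t}$, which exists by hypothesis. Hence $g - h$ is holomorphic on the connected set $D$ and vanishes on a set with an accumulation point in $D$, so the identity theorem gives $g = h$. Call this common limit $f$.

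The third step is the usual subsequence argument for full convergence. If $f_{\Omega}$ did not converge locally uniformly to $f$, there would be a compact $K$, an $\epsilon > 0$ and a subsequence with $\sup_{K} \abs{f_{\Omega_{j}} - f} \geq \epsilon$ for all $j$. By normality this subsequence has a further subsequence converging locally uniformly, and by the second step its limit is $f$, which is a contradiction. For the derivatives, apply the Cauchy formula on circles of radius $r$ around the points of $K$:
\[
\sup_{K} \abs{f_{\Omega}^{\rbk{k}} - f^{\rbk{k}}} \leq \frac{k!}{r^{k}} \sup_{K_{r}} \abs{f_{\Omega} - f},
\]
which tends to zero because $K_{r}$ is compact in $D$.

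The only step needing care is the first, namely making the diagonal Arzelà--Ascoli extraction precise on an exhaustion of $D$. It is standard, and the remaining steps are short.
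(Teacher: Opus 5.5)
Your proposal is correct and follows essentially the same route as the paper. Both arguments use Cauchy estimates for local equicontinuity, Arzel\`a--Ascoli with a diagonal extraction over a compact exhaustion, Weierstrass for holomorphy of the limits, the identity theorem to show all subsequential limits agree, the subsequence principle for full convergence, and Cauchy's formula for the derivatives.
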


\begin{proof}
Local boundedness and the Cauchy integral formula give a local uniform Lipschitz bound.
For a closed disc $\gtclos{\fun{D}{a, 2 r}}
\subset
D$ with $\sup_{\Omega \geq 1} \sup_{\fun{D}{a, 2 r}} \abs{f_{\Omega}}
=
C$ and $z, w
\in
\fun{D}{a, r}$,
$$\abs{\fun{f_{\Omega}}{z} - \fun{f_{\Omega}}{w}}
=
\abs{\frac{1}{2 \pi \imunit} \oint_{\abs{\zeta - a}
=
2 r} \fun{f_{\Omega}}{\zeta} \rbk{\frac{1}{\zeta - z} - \frac{1}{\zeta - w}} \opdmsr{\zeta}}
\leq
\frac{2 C}{r} \abs{z - w}.$$
It follows that $\rbk{f_{\Omega}}$ is locally equicontinuous and locally bounded,
The Arzelà--Ascoli theorem gives uniform convergence on each compact set after passage to a subsequence.
A diagonal argument over a countable compact exhaustion gives a locally uniformly convergent subsubsequence.
Its limit is holomorphic by the uniform limit theorem of Weierstrass.
Any two such limit functions agree on $A$.
Since $A$ has an accumulation point in $D$,
the identity theorem makes the limit functions equal on $D$.
A sequence in a metrizable space,
here the space of continuous functions on compacta with the topology of locally uniform convergence,
converges as soon as every subsequence has a subsubsequence converging to one and the same limit.
This gives $f_{\Omega}
\to
f$ locally uniformly as $\Omega \to \infty$.
The convergence of derivatives follows from the Cauchy formula for $f^{\rbk{k}}$ over circles contained in $D$.
\end{proof}

\begin{cor}[several variables, real convergence set]\label{cor:vitali-several}
Let $r
\geq
1$,
let $B
\subset
\fldreal^{r}$ be an open box,
and let $D
=
\set{z
\in
\fldcmp^{r}}{\opreal z
\in
B, \ \abs{\opimag z_{j}} < \delta}$ for some $\delta > 0$.
Let $\seq{F_{\Omega}}{\Omega
\geq
1}$ be a sequence of holomorphic functions on $D$,
uniformly bounded on compact subsets of $D$,
and convergent at every point of a product set $A
=
A_{1} \times \dotsb \times A_{r}$,
where each $A_{j}
\subset
B_{j}$ has an accumulation point in $B_{j}$.
For example $A
=
B$.
Then $F_{\Omega}$ converges locally uniformly on $D$ to a holomorphic limit $F$,
together with all partial derivatives.
\end{cor}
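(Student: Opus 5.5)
The plan is to reduce the several-variable statement to repeated applications of Proposition \ref{prop:vitali} in one variable, by induction on $r$, and then to upgrade from pointwise convergence to locally uniform convergence through a normal-families argument. Since $D$ is a product of one-variable strips $D_{j}=\set{z_{j}}{\opreal z_{j}\in B_{j},\ \abs{\opimag z_{j}}<\delta}$, each $D_{j}$ is connected and open, and $A_{j}\subset D_{j}$ has an accumulation point in $D_{j}$.

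\emph{Step 1 (pointwise convergence on $D$).} By induction on $k=0,\dots,r$, I show that $F_{\Omega}$ converges at every point of $D_{1}\times\dotsb\times D_{k}\times A_{k+1}\times\dotsb\times A_{r}$. For the inductive step, fix $z_{1},\dots,z_{k-1}\in D_{1}\times\dotsb\times D_{k-1}$ and $a_{k+1},\dots,a_{r}$ in the remaining $A_{j}$, and consider the function $\zeta\mapsto F_{\Omega}(z_{1},\dots,z_{k-1},\zeta,a_{k+1},\dots,a_{r})$. It is holomorphic on $D_{k}$, and it is locally bounded uniformly in $\Omega$, because compact subsets of $D_{k}$ embed as compact subsets of $D$. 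By the induction hypothesis it converges on $A_{k}$, so Proposition \ref{prop:vitali} gives convergence on all of $D_{k}$. The case $k=r$ is pointwise convergence on all of $D$.

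\emph{Step 2 (locally uniform convergence).} On any closed polydisc $\overline{P}$ with a slightly larger polydisc $P'\subset D$, the iterated Cauchy integral formula over the distinguished boundary of $P'$ bounds the gradients of $F_{\Omega}$ on $\overline{P}$ uniformly in $\Omega$, using the uniform bound on $P'$. The family is therefore equicontinuous on compacta. Arzelà--Ascoli together with a diagonal argument over a compact exhaustion then produces locally uniformly convergent subsequences. Their limits are holomorphic, since uniform limits preserve the multivariable Cauchy formula, and each limit equals the pointwise limit $F$ from Step 1. The subsequence principle, applied exactly as in Proposition \ref{prop:vitali}, then gives locally uniform convergence of the whole sequence.

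\emph{Step 3 (derivatives).} The convergence of all partial derivatives follows from the iterated Cauchy formula for $\partial^{\alpha}F_{\Omega}$ on small polydiscs, applied to the differences $F_{\Omega}-F$, which tend to zero uniformly on the distinguished boundary.

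The main obstacle is Step 1. At the $k$-th stage the earlier variables range over all of $D_{j}$ rather than over $A_{j}$, so the one-variable hypothesis has to be verified on those slices. This works only because Step 1 establishes convergence at every point of the slice before the next variable is freed. For this reason the induction must be run in that order rather than in all variables at once.
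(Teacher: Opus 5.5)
Your proposal is correct and follows essentially the paper's argument: you free one variable at a time with the one-variable Proposition \ref{prop:vitali} to get pointwise convergence on all of $D$, then use Cauchy-estimate equicontinuity on compacta to upgrade to locally uniform convergence, and Cauchy's formula to handle the derivatives. The only cosmetic difference is that the paper upgrades pointwise convergence directly via equicontinuity on compact product sets, whereas you go through Arzel\`a--Ascoli and the subsequence principle; this route is equally valid.
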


\begin{proof}
We prove the assertion by induction on $r$.
For $r
=
1$ this is Proposition \ref{prop:vitali} with $A
=
A_{1}$.
For the step let $z'
=
\rbk{z_{1}, \dotsc, z_{r - 1}}$ and fix $z_{r}
=
t_{r}
\in
A_{r}$ real.
The functions $z'
\mapsto
\fun{F_{\Omega}}{z', t_{r}}$ satisfy the inductive hypotheses in $r - 1$ variables with the convergence set $A_{1} \times \dotsb \times A_{r - 1}$.
It follows that $F_{\Omega}$ converges pointwise on $D' \times A_{r}$ where $D'$ is the corresponding complex box.
Now fix $z'
\in
D'$.
The functions $z_{r}
\mapsto
\fun{F_{\Omega}}{z', z_{r}}$ are holomorphic on the strip over $B_{r}$,
uniformly bounded on compacta,
and converge for $z_{r}
\in
A_{r}$,
which has an accumulation point in the strip.
Proposition \ref{prop:vitali} shows that they converge locally uniformly in $z_{r}$,
and moreover the limit is locally uniform in $z'$ as well.
On a compact product set $K' \times K_{r}$,
the family $\rbk{z', z_{r}}
\mapsto
\fun{F_{\Omega}}{z', z_{r}}$ is equicontinuous.
This follows by applying the Cauchy--Lipschitz bound of the previous proof in each variable.
The two preceding steps also give pointwise convergence at every point of $K' \times K_{r}$.
Equicontinuity upgrades pointwise to uniform convergence on compacta.
The limit is holomorphic,
because it is a locally uniform limit of holomorphic functions.
It is represented by the iterated Cauchy integral over a product of circles,
and the integral representation passes to the locally uniform limit.
Derivative convergence again follows from Cauchy's formula in one variable at a time.
\end{proof}

\subsection{Infinite Tensor Products}\label{sec:itp}

The Hilbert spaces on which the BCS model lives at infinite volume are the incomplete infinite tensor product spaces. The construction is developed from scratch for a sequence of Hilbert spaces in the two forms needed in the main text. For pure product-state representations each factor is \(\fldcmp^{2}\). For GNS representations of thermal product states each factor is the Hilbert--Schmidt space of \(\spmat{2}{\fldcmp}\).

\subsubsection{Infinite products of complex numbers}\label{infinite-products-of-complex-numbers}

The convergence and tail estimates for unordered complex products provide the scalar input for the infinite tensor-product inner products. The lemmas below characterize nonzero limits and control finite tails.

Throughout, products \(\prod_{p \in \semigrposint} z_{p}\) over \(p
\in
\semigrposint\) are understood as limits of the net \(\net{\prod_{p \in \Lambda} z_{p}}{\Lambda \nearrow \semigrposint}\) ordered by inclusion. This unordered convergence is the correct notion for tensor products, where no ordering of the factors is distinguished.

\begin{lem}[absolutely convergent products]\label{lem:products}
Let $\seq{z_{p}}{p
\in
\semigrposint}$ be complex numbers with $$\sum_{p \in \semigrposint} \abs{1 - z_{p}}
=
\delta < \infty.$$
Then the net
$\net{\prod_{p \in \Lambda} z_{p}}{\Lambda \nearrow \semigrposint}$
converges to a limit $\prod_{p \in \semigrposint} z_{p}$ with $\abs{\prod_{p \in \semigrposint} z_{p}}
\leq
\fnexp{\delta}$,
the limit vanishes if and only if some factor vanishes,
and for every finite $\Lambda_{0}$,
$$\abs{\prod_{p \in \semigrposint} z_{p} - \prod_{p
\in
\Lambda_{0}} z_{p}}
\leq
\fnexp{2 \delta} \sum_{p
\notin
\Lambda_{0}} \abs{1 - z_{p}}.$$
\end{lem}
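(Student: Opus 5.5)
The plan is to reduce everything to the elementary bound $\abs{\prod_{p\in\Lambda} z_{p}} \leq \prod_{p\in\Lambda}(1+\abs{1-z_{p}}) \leq \fnexp{\sum_{p\in\Lambda}\abs{1-z_{p}}}$ together with the telescoping estimate for finite products. Write $a_{p}=\abs{1-z_{p}}$. For finite sets $\Lambda_{0}\subset\Lambda$, we have
$$\prod_{p\in\Lambda} z_{p} - \prod_{p\in\Lambda_{0}} z_{p} = \prod_{p\in\Lambda_{0}} z_{p}\rbk{\prod_{p\in\Lambda\setminus\Lambda_{0}} z_{p} - 1}.$$
The bracket is bounded by $\prod_{p\in\Lambda\setminus\Lambda_{0}}(1+a_{p})-1 \leq \fnexp{\sum_{p\notin\Lambda_{0}} a_{p}}-1$. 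This bound follows by expanding $\prod(1+(z_{p}-1))-1$ as a sum over nonempty subsets and applying the triangle inequality.

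Using $e^{x}-1\leq x e^{x}$, the difference is at most
$$\fnexp{\delta}\,\fnexp{\delta}\sum_{p\notin\Lambda_{0}} a_{p} = \fnexp{2\delta}\sum_{p\notin\Lambda_{0}}a_{p}.$$

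The steps are as follows.

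First, the uniform bound $\abs{\prod_{\Lambda} z_{p}}\leq\fnexp{\delta}$ holds for every finite $\Lambda$.

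Second, the estimate above shows that the net is Cauchy. Given $\epsilon>0$, choose a finite $\Lambda_{0}$ with tail sum $\sum_{p\notin\Lambda_{0}}a_{p}<\epsilon$. Then any two finite $\Lambda,\Lambda'\supset\Lambda_{0}$ satisfy $\abs{\prod_{\Lambda}-\prod_{\Lambda'}}\leq 2\fnexp{2\delta}\epsilon$, by comparing each with $\prod_{\Lambda_{0}}$. So the net converges in $\fldcmp$. The bound $\fnexp{\delta}$ passes to the limit, and so does the tail estimate: let $\Lambda\nearrow\semigrposint$ in $\abs{\prod_{\Lambda}-\prod_{\Lambda_{0}}}\leq\fnexp{2\delta}\sum_{p\notin\Lambda_{0}}a_{p}$.

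Third, if some $z_{q}=0$, every $\Lambda\ni q$ gives product $0$, so the limit is $0$. The converse is the main point. Assume all $z_{p}\neq 0$ and choose a finite $\Lambda_{0}$ with $\sum_{p\notin\Lambda_{0}}a_{p}\leq\eta$, where $\eta$ is small, say $\eta=\frac{1}{4}$. Write the limit as $\prod_{\Lambda_{0}}z_{p}\cdot\prod_{p\notin\Lambda_{0}}z_{p}$; the tail product converges by the same argument applied to the tail sequence. The tail estimate with empty base set, applied to the tail, gives
$$\abs{\prod_{p\notin\Lambda_{0}}z_{p}-1}\leq e^{\eta}-1<1.$$
Hence the tail product is nonzero. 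The finite factor $\prod_{\Lambda_{0}}z_{p}$ is nonzero by assumption, so the limit is nonzero.

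The factorization of the limit into finite part times tail is justified because multiplication by the fixed finite product is continuous along the cofinal subnet $\Lambda\supset\Lambda_{0}$. The only delicate step is this nonvanishing, which relies on the sharper bound $e^{\eta}-1$ rather than the cruder $\fnexp{2\delta}\eta$; I will use the former.
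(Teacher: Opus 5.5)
Your proof is correct and follows essentially the paper's route: a uniform $e^{\delta}$ bound on partial products, a finite-product difference estimate that makes the net Cauchy and passes to the tail bound, and nonvanishing by choosing a finite set whose complementary tail product stays close to $1$. The differences are cosmetic. You bound $\prod_{p\in G} z_p - 1$ by the subset expansion (giving $e^{\sum_{p\in G}|1-z_p|}-1$) where the paper uses a telescoping sum, and you pass this estimate to the tail limit where the paper instead uses the uniform lower bound $|\prod_{p\in G} z_p|\geq 1/6$ over finite $G$ outside the chosen set.
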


\begin{proof}
Note that
$\abs{z_{p}}
\leq
1 + \abs{1 - z_{p}}
\leq
\fnexp{\abs{1 - z_{p}}}$.
This will be used to bound all partial products by $\fnexp{\delta}$.
For finite $\Lambda
\subset
\Lambda'$,
the telescoping sum over the elements $q_{1}, \dotsc, q_{m}$ of $\Lambda' \setminus \Lambda$ gives
$$\abs{\prod_{p \in \Lambda'} z_{p} - \prod_{p \in \Lambda} z_{p}}
\leq
\sum_{j
=
1}^{m} \abs{z_{q_{1}} \dotsm z_{q_{j - 1}}} \cdot \abs{1 - z_{q_{j}}} \cdot \abs{\prod_{p \in \Lambda} z_{p}}
\leq
\fnexp{\delta} \cdot \fnexp{\delta} \sum_{q
\in
\Lambda' \setminus \Lambda} \abs{1 - z_{q}}.$$
The right side is small once $\Lambda$ contains a large finite set.
It follows that the net is Cauchy and the tail estimate follows with $\Lambda_{0}
=
\Lambda$.
Boundedness is clear.
If no factor vanishes,
choose $\Lambda_{1}$ with $\sum_{p
\notin
\Lambda_{1}} \abs{1 - z_{p}}
\leq
\frac{1}{2}$.
The telescoping bound applied from the empty set gives,
for every finite $G$ with $G \cap \Lambda_{1}
=
\emptyset$,
$$\abs{\prod_{q
\in
G} z_{q} - 1}
\leq
\fnexp{1 / 2} \sum_{q
\in
G} \abs{1 - z_{q}}
\leq
\frac{\fnexp{1 / 2}}{2} < \frac{5}{6},$$
It follows that $\abs{\prod_{q
\in
G} z_{q}}
\geq
\frac{1}{6}$ for all such $G$.
This implies that $$\abs{\prod_{p \in \semigrposint} z_{p}}
=
\lim_{\Lambda \uparrow \semigrposint}
\abs{\prod_{p
\in
\Lambda \setminus \Lambda_{1}} z_{p}} \cdot \abs{\prod_{p
\in
\Lambda_{1}} z_{p}}
\geq
\frac{1}{6} \abs{\prod_{p
\in
\Lambda_{1}} z_{p}} > 0.$$
If some factor vanishes the net is eventually $0$.
\end{proof}

\begin{lem}[unordered convergence forces absolute convergence]\label{lem:unordered}
Let $\seq{z_{p}}{p
\in
\semigrposint}$ satisfy $\abs{z_{p}}
\leq
1$ for all $p$ and suppose the net
$\net{\prod_{p \in \Lambda} z_{p}}{\Lambda \nearrow \semigrposint}$
converges to a limit $L
\neq
0$.
Then $\sum_{p \in \semigrposint} \abs{1 - z_{p}} < \infty$.
\end{lem}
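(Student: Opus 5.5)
The plan is to reduce to the convergence of the partial products of moduli and then to the convergence of a logarithmic series. Write $L_{\Lambda} = \prod_{p \in \Lambda} z_{p}$. Because the net converges to $L \neq 0$, no factor can vanish: if $z_{q} = 0$, every $\Lambda \ni q$ gives $L_{\Lambda} = 0$, and these sets are cofinal, so $L = 0$. The moduli $\abs{L_{\Lambda}}$ then converge to $\abs{L} > 0$. Since $\abs{z_{p}} \leq 1$, the net $\Lambda \mapsto \abs{L_{\Lambda}}$ is nonincreasing under inclusion. Hence $\abs{L_{\Lambda}} \geq \abs{L}$ for every finite $\Lambda$. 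Taking logarithms, every finite $\Lambda$ satisfies
$$\sum_{p \in \Lambda} \rbk{-\log \abs{z_{p}}} \leq -\log \abs{L} < \infty.$$
The nonnegative series $\sum_{p} \rbk{-\log\abs{z_{p}}}$ therefore converges. Since $1 - x \leq -\log x$ for $x \in (0,1]$, the series $\sum_{p} \rbk{1 - \abs{z_{p}}}$ converges as well.

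The phases remain to be controlled, and this is the main obstacle: a product of moduli close to one says nothing by itself about the arguments. Here the unordered convergence of the net itself is used. The Cauchy property of the net gives, for each $\epsilon > 0$, a finite $\Lambda_{\epsilon}$ such that every finite $G$ disjoint from $\Lambda_{\epsilon}$ satisfies
$$\abs{L_{\Lambda_{\epsilon} \cup G} - L_{\Lambda_{\epsilon}}} < \epsilon.$$
Dividing by $\abs{L_{\Lambda_{\epsilon}}} \geq \abs{L}$ yields $\abs{\prod_{q \in G} z_{q} - 1} < \epsilon / \abs{L}$ for all such $G$. Fix $\epsilon$ small enough that this bound is below $1/2$. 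Write $z_{q} = \abs{z_{q}} \fnexp{\imunit \theta_{q}}$ with $\theta_{q} \in (-\pi, \pi]$. Consider the finite sets $G$ of indices outside $\Lambda_{\epsilon}$ whose $\theta_{q}$ all lie in $[0, \pi]$, and likewise those with $\theta_{q} \in [-\pi, 0)$.

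For such a one-signed $G$, the issue is whether $\sum_{q \in G} \theta_{q}$ could wrap around. To rule this out, I would add the indices one at a time. Each single factor already satisfies $\abs{z_{q} - 1} < 1/2$, which forces $\abs{\theta_{q}} < \pi/6$. The partial products all stay in the disc $\abs{w - 1} < 1/2$, where the principal argument is continuous and bounded by $\pi/6$. The argument of the product therefore equals the sum of the one-signed increments, and so $\sum_{q \in G} \abs{\theta_{q}} \leq \pi/6$. Taking the supremum over one-signed $G$ for each sign gives $\sum_{q \notin \Lambda_{\epsilon}} \abs{\theta_{q}} < \infty$.

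Finally, the bound
$$\abs{1 - z_{q}} \leq \rbk{1 - \abs{z_{q}}} + \abs{1 - \fnexp{\imunit\theta_{q}}} \leq \rbk{1 - \abs{z_{q}}} + \abs{\theta_{q}}$$
combines the modulus estimate and the phase estimate. The finitely many terms in $\Lambda_{\epsilon}$ are harmless, so $\sum_{p} \abs{1 - z_{p}} < \infty$.
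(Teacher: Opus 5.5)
Your proof is correct and follows essentially the paper's route: monotonicity of $|\prod_{p\in\Lambda} z_p|$, the Cauchy tail bound $|\prod_{q\in G} z_q - 1| < 1/2$ for finite $G$ outside a fixed finite set, separate control of the moduli and of one-signed phase sums, and the final inequality $|1-z_q| \le (1-|z_q|) + |\theta_q|$. The differences are minor. You bound $\sum_p(1-|z_p|)$ globally by $-\log|L|$. You also bound one-signed phase sums directly by additivity of the principal argument inside the disc $|w-1|<1/2$, which is valid because all increments and partial arguments lie in $(-\pi/6,\pi/6)$. The paper instead argues by contradiction, using a finite phase sum that lands in $[\pi/2, 5\pi/6]$.
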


\begin{proof}
Since $\abs{L} > 0$ and the moduli of the partial products are nonincreasing in $\Lambda$,
all partial products satisfy $\abs{\prod_{p \in \Lambda} z_{p}}
\geq
\abs{L} > 0$.
In particular no factor vanishes.
Choose $\Lambda_{1}$ with $\abs{\prod_{p \in \Lambda} z_{p} - L}
\leq
\frac{\abs{L}}{8}$ for all $\Lambda
\supseteq
\Lambda_{1}$.
For every finite $G$ with $G \cap \Lambda_{1}
=
\emptyset$,
the Cauchy bounds for $\Lambda_{1}$ and $\Lambda_{1} \cup G$ give $\abs{\prod_{p
\in
\Lambda_{1} \cup G} z_{p} - \prod_{p
\in
\Lambda_{1}} z_{p}}
\leq
\frac{\abs{L}}{4}$,
and dividing by $\abs{\prod_{p
\in
\Lambda_{1}} z_{p}}
\geq
\abs{L} - \frac{\abs{L}}{8}
\geq
\frac{\abs{L}}{2}$,
\begin{equation}\label{eq:tail-product-near-one}
\abs{\prod_{q
\in
G} z_{q} - 1}
\leq
\frac{\abs{L} / 4}{\abs{L} / 2}
=
\frac{1}{2}.
\end{equation}
We first control the moduli.
For every finite $G \cap \Lambda_{1}
=
\emptyset$,
\eqref{eq:tail-product-near-one} gives $\prod_{q
\in
G} \abs{z_{q}}
\geq
\frac{1}{2}$.
It follows that $$\sum_{q
\in
G} \rbk{1 - \abs{z_{q}}}
\leq
\sum_{q
\in
G} \rbk{- \fun{\log}{\abs{z_{q}}}}
\leq
\log 2.$$
Since $G$ is arbitrary we obtain
$\sum_{q
\notin
\Lambda_{1}} \rbk{1 - \abs{z_{q}}}
\leq
\log 2$.

We next control the phases.
Write $z_{q}
=
\abs{z_{q}} \fnexp{\imunit \theta_{q}}$ with $\theta_{q}
\in
\rbk{- \pi, \pi}$.
Applying \eqref{eq:tail-product-near-one} to the singleton $G
=
\setone{q}$ gives $\abs{z_{q} - 1}
\leq
\frac{1}{2}$.
It follows that $\abs{\theta_{q}}
\leq
\frac{\pi}{3}$.
Suppose $\sum_{q
\notin
\Lambda_{1}, \theta_{q} > 0} \theta_{q}
=
\infty$.
Since each term is at most $\frac{\pi}{3}$,
the increasing partial sums over an enumeration of these $q$ hit the interval $\sqbk{\frac{\pi}{2}, \frac{\pi}{2} + \frac{\pi}{3}}$.
It follows that there is a finite $G$ with $\Theta
=
\sum_{q
\in
G} \theta_{q}
\in
\sqbk{\frac{\pi}{2}, \frac{5 \pi}{6}}$.
For this $G$,
$$\abs{\prod_{q
\in
G} z_{q} - 1}
\geq
\abs{\fun{\opreal}{\prod_{q
\in
G} z_{q}} - 1}
=
1 - \cos \Theta \prod_{q
\in
G} \abs{z_{q}}
\geq
1,$$
because $\cos \Theta
\leq
0$ on $\sqbk{\frac{\pi}{2}, \frac{5 \pi}{6}}$.
It follows that the real part of the product,
$\cos \Theta \prod_{q
\in
G} \abs{z_{q}}$,
is nonpositive.
This contradicts \eqref{eq:tail-product-near-one}.
The same argument applies to the negative phases.
It follows that $\sum_{q
\notin
\Lambda_{1}} \abs{\theta_{q}} < \infty$.
The modulus and phase estimates give
$$\abs{1 - z_{q}}
\leq
\rbk{1 - \abs{z_{q}}} + \abs{z_{q}} \cdot \abs{1 - \fnexp{\imunit \theta_{q}}}
\leq
\rbk{1 - \abs{z_{q}}} + \abs{\theta_{q}},$$
and both series converge.
\end{proof}

\begin{lem}[divergence to zero]\label{lem:divergence-zero}
Let $\abs{z_{p}}
\leq
1$ for all $p$.
If $\sum_{p \in \semigrposint} \rbk{1 - \abs{z_{p}}}
=
\infty$ then
$\prod_{p \in \Lambda} z_{p}
\to
0$ as $\Lambda \nearrow \semigrposint$.
If $\sum_{p \in \semigrposint} \rbk{1 - \abs{z_{p}}} < \infty$ then $\inf_{\substack{\Lambda \subset \semigrposint \\ \abscard{\Lambda} < \infty}} \abs{\prod_{p \in \Lambda} z_{p}} > 0$ or some factor vanishes.
\end{lem}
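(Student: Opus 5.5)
The plan is to work entirely with moduli, comparing the partial products with exponentials of partial sums. Set $a_{p} = 1 - \abs{z_{p}} \in \closedinterval{0}{1}$. Since $\abs{\prod_{p \in \Lambda} z_{p}} = \prod_{p \in \Lambda} \abs{z_{p}} = \prod_{p \in \Lambda} \rbk{1 - a_{p}}$, both assertions reduce to statements about products of numbers in $\closedinterval{0}{1}$. The moduli of the partial products are also nonincreasing in $\Lambda$, because every factor has modulus at most $1$. This monotonicity turns convergence of the net into control along any cofinal sequence of finite sets.

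For the first assertion, use $1 - a \leq \fnexp{-a}$ for $a \in \closedinterval{0}{1}$. Given $\epsilon > 0$, divergence of $\sum_{p} a_{p}$ yields a finite $\Lambda_{0}$ with $\sum_{p \in \Lambda_{0}} a_{p} \geq \log \frac{1}{\epsilon}$. Then every finite $\Lambda \supseteq \Lambda_{0}$ satisfies
$$\abs{\prod_{p \in \Lambda} z_{p}} \leq \abs{\prod_{p \in \Lambda_{0}} z_{p}} \leq \fnexp{- \sum_{p \in \Lambda_{0}} a_{p}} \leq \epsilon,$$
where the first inequality is the monotonicity noted above. Hence the net converges to $0$.

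For the second assertion, assume no factor vanishes, so that $a_{p} < 1$ for every $p$. Summability of $\sum_{p} a_{p}$ gives a finite $\Lambda_{1}$ such that $a_{p} \leq \frac{1}{2}$ for $p \notin \Lambda_{1}$ and $\sum_{p \notin \Lambda_{1}} a_{p} \leq \frac{1}{2}$. On $\closedinterval{0}{\frac{1}{2}}$ the elementary inequality $\log \rbk{1 - a} \geq - 2 a$ holds, by concavity of $a \mapsto \log \rbk{1 - a}$ and comparison at the endpoints. For any finite $\Lambda$ this gives
$$\abs{\prod_{p \in \Lambda} z_{p}} \geq \prod_{p \in \Lambda \cap \Lambda_{1}} \abs{z_{p}} \cdot \fnexp{-2 \sum_{p \in \Lambda \setminus \Lambda_{1}} a_{p}} \geq \rbk{\prod_{p \in \Lambda_{1}} \abs{z_{p}}} \fnexp{-1}.$$
The finite product over $\Lambda_{1}$ is positive because no factor vanishes, so this gives a uniform positive lower bound and the infimum is positive.

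There is no real obstacle. The only point needing care is the elementary logarithmic bound in the second part, which is why the tail must be cut off at $a_{p} \leq \frac{1}{2}$ so that the finitely many large defects are absorbed into the fixed positive product over $\Lambda_{1}$.
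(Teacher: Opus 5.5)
Your proof is correct. The divergent case matches the paper's: both use $|z|\le e^{-(1-|z|)}$, and you state explicitly the monotonicity in $\Lambda$ that the paper leaves implicit.

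In the summable case the routes differ.

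\begin{itemize}
\item \emph{The paper's route.} It does not estimate anything directly. It applies Lemma~\ref{lem:products} to the moduli $|z_p|$, which is allowed because $\sum_p |1-|z_p||=\sum_p(1-|z_p|)<\infty$. That lemma gives convergence of the unordered product, with a nonzero limit when no factor vanishes. Since every factor has modulus at most $1$, the partial products of moduli are nonincreasing in $\Lambda$, so each is bounded below by $\prod_{p}|z_p|>0$.
\item \emph{Your route.} You replace that lemma by cutting off a finite set $\Lambda_1$ with tail sum at most $\frac12$ and using $\log(1-a)\ge -2a$ on $[0,\frac12]$.
\end{itemize}

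What each buys: your argument is self-contained and gives the explicit lower bound $e^{-1}\prod_{p\in\Lambda_1}|z_p|$. The paper's is shorter because the same tail-cutting work is already done inside the proof of Lemma~\ref{lem:products}. There a finite set with tail defect sum at most $\frac12$ is removed and a telescoping estimate gives tail products of modulus at least $\frac16$.

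A minor point: your condition $a_p\le\frac12$ for $p\notin\Lambda_1$ already follows from $\sum_{p\notin\Lambda_1}a_p\le\frac12$, since all $a_p\ge 0$.
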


\begin{proof}
From $\abs{z}
\leq
\fnexp{- \rbk{1 - \abs{z}}}$,
valid for $0
\leq
\abs{z}
\leq
1$,
one gets
$$\abs{\prod_{p \in \Lambda} z_{p}}
\leq
\fnexp{- \sum_{p \in \Lambda} \rbk{1 - \abs{z_{p}}}}
\to
0
\quad
\rbk{\Lambda \nearrow \semigrposint}$$ in the first case.
In the second case,
if no factor vanishes,
we obtain
$$\abs{\prod_{p \in \Lambda} z_{p}}
=
\prod_{p \in \Lambda} \abs{z_{p}}
\geq
\prod_{p \in \semigrposint} \abs{z_{p}} > 0$$ by Lemma \ref{lem:products} applied to the moduli.
\end{proof}

\subsubsection{Sequences of unit vectors and their equivalence classes}\label{sequences-of-unit-vectors-and-their-equivalence-classes}

Equivalence and weak equivalence label the incomplete tensor-product sectors and encode the effect of slotwise phase changes.

Let \(\seq{h_{p}}{p
\geq
1}\) be a sequence of Hilbert spaces with \(2
\leq
\dim h_{p}
\leq
\infty\), each separable. For the quasi-spin model \(h_{p}
=
\fldcmp^{2}\) throughout Sections \ref{sec:intensive}--\ref{sec:time}, and \(h_{p}
=
\spmat{2}{\fldcmp}\) with the Hilbert--Schmidt inner product in the thermal sections.

Adapted sequences and the equivalence relations are defined in Definition \ref{def:c-sequences}.

\begin{ex}[adapted and non-adapted sequences]
Let $h_{p}
=
\fldcmp^{2}$ for every $p
\in
\semigrposint$,
and fix an orthonormal basis $e_{0},e_{1}$ of $\fldcmp^{2}$.

The sequence $\xi
=
\seq{\xi_{p}}{p \in \semigrposint}$ defined by
$$\xi_{1}
=
2 e_{0},
\quad
\xi_{p}
=
e_{0}
\quad
\rbk{p
\geq
2}$$
is adapted because $\norm{\xi_{p}}
=
1$ for every $p
\geq
2$.
It is not an adapted sequence with unit entries because $\norm{\xi_{1}}
=
2$.

The nonconstant sequence $\eta
=
\seq{\eta_{p}}{p \in \semigrposint}$ defined by
$$\eta_{p}
=
\begin{cases}
e_{0}, & p \text{ is odd},\\
e_{1}, & p \text{ is even}
\end{cases}
\quad
\rbk{p
\in
\semigrposint}$$
is an adapted sequence with unit entries because $\norm{\eta_{p}}
=
1$ for every $p
\in
\semigrposint$.

Finally,
the nonconstant sequence $\zeta
=
\seq{\zeta_{p}}{p \in \semigrposint}$ defined by
$$\zeta_{p}
=
\rbk{1 + \frac{1}{p}} e_{0}
\quad
\rbk{p
\in
\semigrposint}$$
is not adapted.
Indeed,
$\norm{\zeta_{p}}
=
1 + 1 / p
\neq
1$ for every $p
\in
\semigrposint$,
even though $\norm{\zeta_{p}}
\to
1$ as $p
\to
\infty$.
\end{ex}

Both relations ignore finitely many slots. It follows that finite modifications of an adapted sequence stay in its class. That the relations are indeed equivalence relations requires an argument only for transitivity.

\begin{lem}[transitivity]\label{lem:transitivity}
The relations $
\approx
$ and $
\approx
_{w}$ are equivalence relations on adapted sequences,
and $\xi
\approx
\eta$ implies $\xi
\approx
_{w} \eta$.
If $\xi
\approx
\eta$ and $\xi
\approx
\zeta$,
then the stronger conclusion
$\sum_{p \in \semigrposint} \abs{1 - \bkt{\eta_{p}}{\zeta_{p}}} < \infty$
holds.
\end{lem}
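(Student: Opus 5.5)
Reflexivity and symmetry are immediate from the definitions. For $\xi$ adapted, all but finitely many terms of $\sum_p \abs{1-\norm{\xi_p}^2}$ vanish, and $\abs{1-\bkt{\eta_p}{\xi_p}}=\abs{1-\bkt{\xi_p}{\eta_p}}$ because the inner products are complex conjugates. The implication $\xi\approx\eta \Rightarrow \xi\approx_{w}\eta$ follows from the triangle inequality $\abs{1-\abs{w}}\leq\abs{1-w}$. The work is therefore in transitivity, which I would prove together with the stated strong conclusion.

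Since adapted sequences have unit entries outside a finite set, and finitely many slots do not affect summability, I would restrict attention to slots $p$ where $\xi_p,\eta_p,\zeta_p$ are all unit vectors. There I would use the identity
\[
\norm{u-v}^{2}=2-2\,\opreal\bkt{u}{v}\leq 2\abs{1-\bkt{u}{v}}
\]
together with the reverse estimate
\[
\abs{1-\bkt{u}{v}}\leq \abs{1-\opreal\bkt{u}{v}}+\abs{\opimag\bkt{u}{v}}.
\]
The imaginary part causes trouble here. In general $\abs{\opimag\bkt{u}{v}}$ is only of order $\norm{u-v}$, not $\norm{u-v}^{2}$. For example, with $v=\fnexp{\imunit\theta}u$ one gets $\abs{1-\fnexp{\imunit\theta}}\sim\abs{\theta}$, while $\norm{u-v}^{2}\sim\theta^{2}$. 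So I would not pass through norms for the imaginary part. Instead I would use the algebraic identity
\[
1-\bkt{\eta_p}{\zeta_p}=\rbk{1-\bkt{\eta_p}{\xi_p}}+\rbk{1-\bkt{\xi_p}{\zeta_p}}-\bkt{\eta_p-\xi_p}{\zeta_p-\xi_p}.
\]
Expanding the last term gives $\bkt{\eta_p}{\zeta_p}-\bkt{\eta_p}{\xi_p}-\bkt{\xi_p}{\zeta_p}+1$, so the identity holds. Then Cauchy--Schwarz and $2ab\leq a^{2}+b^{2}$ give
\[
\abs{1-\bkt{\eta_p}{\zeta_p}}\leq \abs{1-\bkt{\xi_p}{\eta_p}}+\abs{1-\bkt{\xi_p}{\zeta_p}}+\tfrac12\norm{\eta_p-\xi_p}^{2}+\tfrac12\norm{\zeta_p-\xi_p}^{2}.
\]
By the displayed norm identity, the right side is at most $2\abs{1-\bkt{\xi_p}{\eta_p}}+2\abs{1-\bkt{\xi_p}{\zeta_p}}$, which is summable. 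This proves the strong conclusion, and with symmetry it gives transitivity of $\approx$.

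For $\approx_{w}$ the phases must be absorbed first. I would choose unimodular scalars $c_p=\overline{\bkt{\xi_p}{\eta_p}}/\abs{\bkt{\xi_p}{\eta_p}}$, and $c_p=1$ when the inner product vanishes. These give $\bkt{\xi_p}{c_p\eta_p}=\abs{\bkt{\xi_p}{\eta_p}}$, so $\xi\approx_{w}\eta$ becomes $\xi\approx c\eta$. The vanishing case occurs only finitely often, since otherwise the weak defect sum diverges. Likewise $\xi\approx d\zeta$ for suitable unimodular $d_p$. Applying the strong result gives $c\eta\approx d\zeta$. Finally, $\abs{1-\abs{\bkt{\eta_p}{\zeta_p}}}\leq\abs{1-\bkt{c_p\eta_p}{d_p\zeta_p}}$ because $\abs{\bkt{c_p\eta_p}{d_p\zeta_p}}=\abs{\bkt{\eta_p}{\zeta_p}}$. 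This yields $\eta\approx_{w}\zeta$, which is the transitivity needed.

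The main obstacle is controlling the imaginary part so that it is quadratic in the differences. The three-term identity above is what handles it, and the phase renormalization reduces the weak case to that identity.
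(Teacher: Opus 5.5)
Your proof is correct. Its architecture matches the paper's: a slotwise estimate on the unit-vector slots, followed by a unimodular rotation for $\approx_w$. The difference lies in the identity behind the slotwise estimate.

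Write $u,v,w$ for your $\eta_p,\xi_p,\zeta_p$, with $a=|1-\langle u,v\rangle|$ and $b=|1-\langle v,w\rangle|$. The paper projects onto the middle vector, writing $\langle u,w\rangle=\langle u,v\rangle\langle v,w\rangle+\langle u,(1-P_v)w\rangle$. It bounds the remainder by $\sqrt{1-|\langle u,v\rangle|^2}\,\sqrt{1-|\langle v,w\rangle|^2}\le 2\sqrt{ab}$, and the product term by $|1-z_1z_2|\le|1-z_1|+|1-z_2|$ for $|z_1|\le 1$. Your identity $1-\langle u,w\rangle=(1-\langle u,v\rangle)+(1-\langle v,w\rangle)-\langle u-v,w-v\rangle$ is the paper's with one rank-one term added to both the main term and the remainder, where it cancels. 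That term is $(1-\langle u,v\rangle)(1-\langle v,w\rangle)=\langle P_v(u-v),P_v(w-v)\rangle$. Both routes yield the same bound $2(a+b)$.

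What your route buys: it is shorter and purely additive, with no projection and no multiplicative inequality. It also shows cleanly why the imaginary-part difficulty you flagged never arises. The defects $a$ and $b$ enter linearly, and only the remainder needs quadratic control, which $\|u-v\|^2=2\operatorname{Re}(1-\langle u,v\rangle)\le 2a$ supplies in the favorable direction.

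What the paper's route buys: its remainder bound is phase-free, since it involves only $|\langle u,v\rangle|$ and $|\langle v,w\rangle|$. Combined with $|\langle u,w\rangle|\ge|\langle u,v\rangle|\,|\langle v,w\rangle|-|\langle u,(1-P_v)w\rangle|$, it would give the $\approx_w$ estimate with no rotation at all (the paper rotates anyway). Your remainder $\langle u-v,w-v\rangle$ depends on phases, so in your route the renormalization to $c\eta$, $d\zeta$ is genuinely needed. Your explicit phase families have the merit of saying outright that weak equivalence is equivalence after a phase change, which is the device the paper reuses in Proposition~\ref{prop:disjointness}(1).

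Finally, your separate treatment of vanishing inner products is unnecessary. With $c_p=1$ both defects equal $1$, so the equality $|1-\langle\xi_p,c_p\eta_p\rangle|=|1-|\langle\xi_p,\eta_p\rangle||$ holds there too.
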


\begin{proof}
Reflexivity and symmetry are clear from $\bkt{\eta_{p}}{\xi_{p}}
=
\cmpconj{\bkt{\xi_{p}}{\eta_{p}}}$,
and the implication from $\abs{1 - \abs{w}}
\leq
\abs{1 - w}$.
For transitivity it suffices,
ignoring the finitely many slots where a norm differs from $1$,
to consider unit vectors.
For unit vectors $u, v, w
\in
h_{p}$,
the decomposition defined by the projection $P_{v}$ gives
$$\bkt{u}{w}
=
\bkt{u}{v} \bkt{v}{w} + \bkt{u}{\rbk{1 - P_{v}} w},$$
and by the Cauchy--Schwarz inequality
$$\abs{\bkt{u}{\rbk{1 - P_{v}} w}}
\leq
\norm{\rbk{1 - P_{v}} u} \norm{\rbk{1 - P_{v}} w}
=
\sqrt{1 - \abs{\bkt{u}{v}}^{2}} \sqrt{1 - \abs{\bkt{v}{w}}^{2}}.$$
With $a
=
\abs{1 - \bkt{u}{v}}$ and $b
=
\abs{1 - \bkt{v}{w}}$,
the first square-root factor satisfies
$$1 - \abs{\bkt{u}{v}}^{2}
\leq
2 \abs{1 - \bkt{u}{v}}
=
2 a,$$
and the analogous bound holds with $b$.
The product term obeys
$$\abs{1 - \bkt{u}{v} \bkt{v}{w}}
\leq
a + b.$$
Indeed,
for $\abs{z}
\leq
1$,
$$\abs{1 - z w}
\leq
\abs{1 - z} + \abs{z} \abs{1 - w}
\leq
\abs{1 - z} + \abs{1 - w}.$$
The elementary inner-product estimate is
$$\abs{1 - \bkt{u}{w}}
\leq
a + b + 2 \sqrt{a b}
\leq
2 \rbk{a + b},$$
where the last inequality follows from $2 \sqrt{a b}
\leq
a + b$.
Summing over $p$ proves transitivity of $\approx$ and the final statement.
For the relation $\approx_{w}$,
we argue as follows.
Replacing $u, v, w$ by suitable unimodular multiples makes $\bkt{u}{v}
\geq
0$ and $\bkt{v}{w}
\geq
0$ without changing any modulus,
and then $\abs{1 - \bkt{u}{v}}
=
1 - \abs{\bkt{u}{v}}$,
$\abs{1 - \bkt{v}{w}}
=
1 - \abs{\bkt{v}{w}}$,
while $1 - \abs{\bkt{u}{w}}
\leq
\abs{1 - \bkt{u}{w}}$.
\end{proof}

The distinction between the two equivalence relations is already visible for sequences contained in a fixed one-dimensional subspace.

\begin{ex}[weak equivalence without equivalence]
Let $h_{p}
=
\fldcmp^{2}$ for every $p
\in
\semigrposint$,
fix a unit vector $e
\in
\fldcmp^{2}$,
and consider the adapted sequences with unit entries
$$\xi
=
\seq{\xi_{p}}{p \in \semigrposint},
\quad
\eta
=
\seq{\eta_{p}}{p \in \semigrposint},
\quad
\xi_{p}
=
e,
\quad
\eta_{p}
=
- e
\quad
\rbk{p \in \semigrposint}.$$
For every $p
\in
\semigrposint$,
the phase-free defect is
$$\abs{1 - \abs{\bkt{\xi_{p}}{\eta_{p}}}}
=
\abs{1 - \abs{- 1}}
=
0
\quad
\rbk{p \in \semigrposint}.$$
It follows that $\xi
\approx
_{w} \eta$.
In contrast, the calculation
$$\begin{aligned}
\abs{1 - \bkt{\xi_{p}}{\eta_{p}}}
&=
\abs{1 - \rbk{- 1}}
=
2
\quad
\rbk{p \in \semigrposint},
\\ 
\sum_{p \in \semigrposint} \abs{1 - \bkt{\xi_{p}}{\eta_{p}}}
&=
\infty.
\end{aligned}$$
shows that the absolute defects are constant and their sum diverges.
Thus $\xi
\not\approx
\eta$.
\end{ex}

The equivalences take the following explicit form for spin configurations. A spin configuration is a sequence \(\physvec{\omega}
=
\seq{\physvec{u}_{p}}{p
\geq
1}\) of unit vectors in \(\fldreal^{3}\), with associated unit vectors \(\xi_{p}
=
\xi_{\physvec{u}_{p}}^{+}
\in
\fldcmp^{2}\).

\begin{lem}[geometric form of weak equivalence]\label{lem:geometric-equivalence}
For two spin configurations $\physvec{\omega}, \physvec{\omega}'$,
write $\xi'_{p}
=
\xi_{\physvec{u}'_{p}}^{+}$.
The weak-equivalence criterion is:
$$\seq{\xi_{p}}{p
\in
\semigrposint}
\approx
_{w} \seq{\xi'_{p}}{p \in \semigrposint}
\iff
\sum_{p \in \semigrposint} \abs{\physvec{u}_{p} - \physvec{u}'_{p}}^{2} < \infty.$$
\end{lem}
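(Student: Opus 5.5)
The plan is to reduce the criterion to the one-site identity of Lemma \ref{lem:overlap} plus an elementary comparison between $1-\abs{w}$ and $1-\abs{w}^{2}$. Both sequences have unit entries, so weak equivalence means
$$\sum_{p \in \semigrposint}\rbk{1-\abs{\bkt{\xi_{p}}{\xi'_{p}}}}<\infty .$$
Here every defect is already nonnegative, because $\abs{\bkt{\xi_{p}}{\xi'_{p}}}\leq 1$ by Cauchy--Schwarz.

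The first step is to apply Lemma \ref{lem:overlap} at each slot, which gives
$$\abs{\bkt{\xi_{p}}{\xi'_{p}}}^{2}=\frac{1+\physvec{u}_{p}\cdot\physvec{u}'_{p}}{2}.$$
Since both vectors are unit vectors in $\fldreal^{3}$, we have $\abs{\physvec{u}_{p}-\physvec{u}'_{p}}^{2}=2-2\,\physvec{u}_{p}\cdot\physvec{u}'_{p}$. Combining these yields the exact identity
$$1-\abs{\bkt{\xi_{p}}{\xi'_{p}}}^{2}=\frac{1-\physvec{u}_{p}\cdot\physvec{u}'_{p}}{2}=\frac{\abs{\physvec{u}_{p}-\physvec{u}'_{p}}^{2}}{4}.$$

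The second step compares $1-\abs{w}$ with $1-\abs{w}^{2}$ for $0\leq\abs{w}\leq 1$. Writing $1-\abs{w}^{2}=(1-\abs{w})(1+\abs{w})$ and using $1\leq 1+\abs{w}\leq 2$ gives
$$\tfrac{1}{2}\rbk{1-\abs{w}^{2}}\leq 1-\abs{w}\leq 1-\abs{w}^{2}.$$
These are the same two inequalities already used in the proof of Proposition \ref{prop:gauge-implementable}. Applied with $w=\bkt{\xi_{p}}{\xi'_{p}}$, they give termwise bounds
$$\tfrac{1}{8}\abs{\physvec{u}_{p}-\physvec{u}'_{p}}^{2}\leq 1-\abs{\bkt{\xi_{p}}{\xi'_{p}}}\leq\tfrac{1}{4}\abs{\physvec{u}_{p}-\physvec{u}'_{p}}^{2}.$$
Summing over $p$, the two series converge or diverge together, which is exactly the stated equivalence.

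No step should be a genuine obstacle. The one point to watch is that the result is independent of the arbitrary phase choice of $\xi^{+}_{\physvec{u}}$. This holds because only moduli of overlaps enter, and that is precisely why the criterion is stated for weak equivalence rather than for equivalence.
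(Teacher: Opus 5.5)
Your proposal is correct and follows essentially the same route as the paper. Lemma \ref{lem:overlap} gives the squared overlap $1-\abs{u_p-u'_p}^{2}/4$, and your comparison of $1-\abs{w}$ with $1-\abs{w}^{2}$ is exactly the paper's bound $\frac{1-x}{2}\leq 1-\sqrt{x}\leq 1-x$ with $x=\abs{w}^{2}$, which makes the two series converge or diverge together.
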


\begin{proof}
Lemma \ref{lem:overlap} implies that
$$\abs{\bkt{\xi'_{p}}{\xi_{p}}}
=
\sqrt{x_{p}},
\quad
x_{p}
=
\frac{1 + \physvec{u}_{p} \cdot \physvec{u}'_{p}}{2}
=
1 - \frac{\abs{\physvec{u}_{p} - \physvec{u}'_{p}}^{2}}{4}.$$
For $x
\in
\closedinterval{0}{1}$ one has $\frac{1 - x}{2}
\leq
1 - \sqrt{x}
\leq
1 - x$.
It follows that $1 - \abs{\bkt{\cdot}{\cdot}}$ satisfies two-sided bounds by constant multiples of
$\abs{\physvec{u}_{p} - \physvec{u}'_{p}}^{2}$,
and the two summability conditions coincide.
\end{proof}

\subsubsection{The incomplete tensor product space}\label{the-incomplete-tensor-product-space}

The complete product space of Definition \ref{def:complete} is the orthogonal direct sum of sectors associated with distinct equivalence classes. The incomplete tensor product space constructed below is one such sector, obtained by fixing a single equivalence class.

\begin{lem}[incomplete tensor product form]\label{lem:itps-form}
Fix an adapted sequence $\xi$.
Its equivalence class $\fun{C}{\xi}$ is the one defined in Definition \ref{def:c-sequences}.
For every $\eta, \zeta
\in
\fun{C}{\xi}$,
the product
$$\fun{\opform{q}_{\fun{C}{\xi}}}{\eta,\zeta}
=
\prod_{p \in \semigrposint} \bkt{\eta_{p}}{\zeta_{p}}_{h_{p}}$$
converges absolutely,
where $\bkt{\cdot}{\cdot}_{h_{p}}$ denotes the inner product of $h_{p}$.
The formula extends uniquely to a positive semidefinite sesquilinear form on the free complex vector space over $\fun{C}{\xi}$.
\end{lem}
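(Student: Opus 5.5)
The plan has three parts: absolute convergence of the product for each pair $\eta,\zeta\in\fun{C}{\xi}$, extension to a sesquilinear form on the free vector space, and positivity via finite truncations.

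\textbf{Convergence.} By Lemma \ref{lem:transitivity}, $\eta\approx\xi$ and $\zeta\approx\xi$ give
$$\sum_{p\in\semigrposint}\abs{1-\bkt{\eta_{p}}{\zeta_{p}}}<\infty .$$
This is the stronger conclusion stated in that lemma. Its proof treated unit vectors, ignoring finitely many slots; the finitely many nonunit slots of an adapted sequence change only finitely many terms and so do not affect summability. Lemma \ref{lem:products} with $z_{p}=\bkt{\eta_{p}}{\zeta_{p}}_{h_{p}}$ then shows that the unordered net of finite partial products converges. Its limit is bounded by $\fnexp{\delta}$, where $\delta$ is the absolute-defect sum, and it satisfies the tail estimate of that lemma. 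This is the meaning of absolute convergence here.

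\textbf{Extension and sesquilinearity.} Let $V$ be the free complex vector space with basis $\fun{C}{\xi}$. Define $\opform{q}_{\fun{C}{\xi}}$ on basis pairs by the product above and extend it conjugate-linearly in the first slot and linearly in the second. Uniqueness is automatic, because a sesquilinear form on a free vector space is determined by its values on basis pairs. Hermitian symmetry follows because conjugation commutes with the convergent product limit: the partial products of $\bkt{\zeta_{p}}{\eta_{p}}$ are the conjugates of those of $\bkt{\eta_{p}}{\zeta_{p}}$.

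\textbf{Positivity.} This is the main point. Take a finite combination $v=\sum_{i=1}^{n}c_{i}\eta^{(i)}$ with $\eta^{(i)}\in\fun{C}{\xi}$. For each finite $\Lambda$, consider the Hilbert tensor product $\bigotimes_{p\in\Lambda}h_{p}$ and the vector
$$v_{\Lambda}=\sum_{i}c_{i}\bigotimes_{p\in\Lambda}\eta^{(i)}_{p}.$$
Then
$$\norm{v_{\Lambda}}^{2}=\sum_{i,j}\cmpconj{c_{i}}c_{j}\prod_{p\in\Lambda}\bkt{\eta^{(i)}_{p}}{\eta^{(j)}_{p}}\geq0 .$$
There are only finitely many pairs $(i,j)$, so the net limits exist simultaneously. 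Letting $\Lambda\nearrow\semigrposint$ therefore gives
$$\fun{\opform{q}_{\fun{C}{\xi}}}{v,v}=\lim_{\Lambda}\norm{v_{\Lambda}}^{2}\geq0 .$$

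One subtlety is that distinct elements of $\fun{C}{\xi}$ are distinct basis vectors of $V$, even when they give the same product vector. The form is only semidefinite, and this causes no difficulty. The only genuine input is the transitivity bound from Lemma \ref{lem:transitivity}, which is already available.
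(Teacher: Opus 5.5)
Your proposal is correct and follows essentially the paper's proof: both derive the summability of $\abs{1-\bkt{\eta_{p}}{\zeta_{p}}}$ from Lemma \ref{lem:transitivity}, which feeds Lemma \ref{lem:products}, and both get positivity by passing to the limit in the positive semidefinite finite-volume Gram matrices (your $\norm{v_{\Lambda}}^{2}\geq 0$). Your remarks on Hermitian symmetry and on uniqueness of the sesquilinear extension are harmless additions that the paper leaves implicit.
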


\begin{proof}
For convergence,
Lemma \ref{lem:transitivity} applied with the reference $\xi$ shows that
every $\eta, \zeta
\in
\fun{C}{\xi}$ satisfies
$$\sum_{p \in \semigrposint} \abs{1 - \bkt{\eta_{p}}{\zeta_{p}}_{h_{p}}}
<
\infty.$$
It follows that Lemma \ref{lem:products} applies.

For positivity,
fix finitely many $\eta^{\rbk{1}}, \dotsc, \eta^{\rbk{m}}
\in
\fun{C}{\xi}$ and a finite set $\Lambda$.
For $j, k
\in
\intint{1..m}$ define
$$G^{\Lambda}_{j k}
=
\prod_{p
\in
\Lambda} \bkt{\eta^{\rbk{j}}_{p}}{\eta^{\rbk{k}}_{p}}_{h_{p}}.$$
This is the Gram matrix of the vectors $\bigotimes_{p
\in
\Lambda} \eta^{\rbk{j}}_{p}$ in the finite tensor product $\bigotimes_{p
\in
\Lambda} h_{p}$.
The matrix $G^{\Lambda}$ is therefore positive semidefinite.
As $\Lambda$ increases,
the entries satisfy
$$\net{G^{\Lambda}_{j k}}{\Lambda \nearrow \semigrposint}
\to
G_{j k}
=
\fun{\opform{q}_{\fun{C}{\xi}}}{\eta^{\rbk{j}},\eta^{\rbk{k}}}
\quad \rbk{\Lambda \nearrow \semigrposint}$$
by Lemma \ref{lem:products}.
The cone of positive semidefinite matrices is closed.
It follows that every Gram matrix of $\opform{q}_{\fun{C}{\xi}}$ is positive semidefinite.
This proves positivity of $\opform{q}_{\fun{C}{\xi}}$ on the free vector space.
\end{proof}

\begin{defn}[incomplete tensor product space]\label{def:itps}
Fix an adapted sequence $\xi$,
and let $\fun{C}{\xi}$ be its equivalence class from Definition \ref{def:c-sequences}.
On the free complex vector space over $\fun{C}{\xi}$,
let $\opform{q}_{\fun{C}{\xi}}$ be the positive semidefinite sesquilinear form of Lemma \ref{lem:itps-form}.
The incomplete tensor product space $\sphilb{H}_{\xi}$ is the completion of the quotient of this space by
$\Ker \opform{q}_{\fun{C}{\xi}}$.
The image of $\eta
\in
\fun{C}{\xi}$ in $\sphilb{H}_{\xi}$ is written $\bigotimes_{p \in \semigrposint} \eta_{p}$ and called a product vector.
\end{defn}

\begin{lem}[slotwise linearity]\label{lem:slotwise}
Let $\eta
\in
\fun{C}{\xi}$,
let $q$ be a slot,
and let $\eta_{q}
=
\alpha u + \beta v$ with $u, v
\in
h_{q}$.
Denote by $\eta\sqbk{u}$ and $\eta\sqbk{v}$ the sequences obtained from $\eta$ by replacing the slot $q$ entry by $u$ respectively $v$.
Both are adapted and equivalent to $\xi$.
The product vector decomposes linearly in $\sphilb{H}_{\xi}$:
$$\bigotimes_{p \in \semigrposint} \eta_{p}
=
\alpha \bigotimes_{p \in \semigrposint} \eta\sqbk{u}_{p} + \beta \bigotimes_{p \in \semigrposint} \eta\sqbk{v}_{p}.$$
\end{lem}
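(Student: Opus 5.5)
The plan is to check that the sesquilinear form $\opform{q}_{\fun{C}{\xi}}$ of Lemma \ref{lem:itps-form}, extended to the free vector space, does not separate the element $\eta - \alpha\,\eta\sqbk{u} - \beta\,\eta\sqbk{v}$ from zero. First I will confirm that the two modified sequences are adapted and lie in $\fun{C}{\xi}$. Changing one slot leaves the norms equal to $1$ at all but finitely many slots. It also changes only one term of the defect sum $\sum_{p}\abs{1 - \bkt{\xi_{p}}{\cdot_{p}}}$, so finiteness of that sum is preserved. This gives $\eta\sqbk{u}, \eta\sqbk{v} \in \fun{C}{\xi}$.

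Next, set $X = \eta - \alpha\,\eta\sqbk{u} - \beta\,\eta\sqbk{v}$ in the free space. I will show that $\fun{\opform{q}_{\fun{C}{\xi}}}{\zeta, X} = 0$ for every $\zeta \in \fun{C}{\xi}$. Since $X$ then pairs to zero with every generator, it pairs to zero with every element, and in particular $\fun{\opform{q}_{\fun{C}{\xi}}}{X, X} = 0$. Thus $X \in \Ker \opform{q}_{\fun{C}{\xi}}$, and its image in $\sphilb{H}_{\xi}$ vanishes, which is the claim.

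For the key identity, fix $\zeta$ and a finite $\Lambda$ containing $q$. The finite products satisfy
\begin{equation*}
\prod_{p\in\Lambda}\bkt{\zeta_{p}}{\eta_{p}}
=
\bkt{\zeta_{q}}{\alpha u+\beta v}\prod_{p\in\Lambda\setminus\setone{q}}\bkt{\zeta_{p}}{\eta_{p}}
=
\alpha\prod_{p\in\Lambda}\bkt{\zeta_{p}}{\eta\sqbk{u}_{p}}
+\beta\prod_{p\in\Lambda}\bkt{\zeta_{p}}{\eta\sqbk{v}_{p}},
\end{equation*}
where the second equality uses linearity of the inner product in its second argument. Each of the three nets converges as $\Lambda \nearrow \semigrposint$ by Lemma \ref{lem:products}, because all sequences involved lie in $\fun{C}{\xi}$, so the defect sums are finite by Lemma \ref{lem:transitivity}. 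Passing to the limit gives $\fun{\opform{q}}{\zeta,\eta} = \alpha\fun{\opform{q}}{\zeta,\eta\sqbk{u}} + \beta\fun{\opform{q}}{\zeta,\eta\sqbk{v}}$.

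I do not expect a serious obstacle. The one point needing care is a degenerate case: $u$ or $v$ may be zero, or may have norm different from $1$. Adaptedness only requires unit norm at all but finitely many slots, and the defect sum changes by a single finite term, so the argument is unaffected.
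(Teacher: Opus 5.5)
Your proof is correct and follows essentially the same route as the paper: you show $\fun{\opform{q}_{\fun{C}{\xi}}}{\zeta, X} = 0$ for every $\zeta \in \fun{C}{\xi}$ using linearity of the slot-$q$ inner product in its second argument, and then conclude $\fun{\opform{q}_{\fun{C}{\xi}}}{X, X} = 0$. The only difference is cosmetic. You pass through finite partial products and take the limit via Lemma~\ref{lem:products}, whereas the paper factors the slot-$q$ term directly out of the infinite product.
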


\begin{proof}
Both replacement sequences are finite modifications of $\eta$.
They therefore lie in $\fun{C}{\xi}$.
The difference vector $d
=
\eta - \alpha \eta\sqbk{u} - \beta \eta\sqbk{v}$
is understood in the free vector space.
For any $\zeta
\in
\fun{C}{\xi}$,
the form evaluates to
$$\fun{\opform{q}_{\fun{C}{\xi}}}{\zeta,d}
=
\rbk{\bkt{\zeta_{q}}{\eta_{q}}_{h_{q}} - \alpha \bkt{\zeta_{q}}{u}_{h_{q}} - \beta \bkt{\zeta_{q}}{v}_{h_{q}}} \prod_{p
\neq
q} \bkt{\zeta_{p}}{\eta_{p}}_{h_{p}}
=
0,$$
where the products over $p
\neq
q$ converge by Lemma \ref{lem:products} and the first factor vanishes by linearity of the inner product of $h_{q}$ in its second argument.
It follows that $\fun{\opform{q}_{\fun{C}{\xi}}}{d,d}
=
0$ after expansion in the second argument.
This proves that $d
\in
\Ker \opform{q}_{\fun{C}{\xi}}$.
\end{proof}

\begin{prop}[flip basis and separability]\label{prop:flip-basis}
Fix an adapted sequence $\xi$ with $\norm{\xi_{p}}
=
1$ for all $p$,
and for each $p$ fix an orthonormal basis $\fun{e_{p}}{0}, \fun{e_{p}}{1}, \dotsc$ of $h_{p}$ with $\fun{e_{p}}{0}
=
\xi_{p}$.
Let $\mathcal{F}$ be the set of maps $F
\colon
\semigrposint
\to
\monnat$ with finite support.
For $F
\in
\mathcal{F}$,
let $\xi^{F}$ be the sequence with entries $\fun{e_{p}}{\fun{F}{p}}$ on the support of $F$ and $\xi_{p}$ elsewhere.
Then $\set{\bigotimes_{p \in \semigrposint} \xi^{F}_{p}}{F
\in
\mathcal{F}}$ is an orthonormal basis of $\sphilb{H}_{\xi}$.
In particular $\sphilb{H}_{\xi}$ is separable when each $h_{p}$ is.
\end{prop}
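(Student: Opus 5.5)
The plan is to verify three things in turn: orthonormality, that the flip vectors are well-defined elements of \(\sphilb{H}_{\xi}\), and totality. Countability of \(\mathcal{F}\) then yields separability.

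First, each \(\xi^{F}\) differs from \(\xi\) in finitely many slots and has unit entries everywhere. It is therefore a finite modification of \(\xi\) and lies in \(\fun{C}{\xi}\), so \(\bigotimes_{p} \xi^{F}_{p}\) is a product vector of \(\sphilb{H}_{\xi}\). For \(F, G \in \mathcal{F}\), Lemma \ref{lem:itps-form} gives
\[
\bkt{\bigotimes_{p} \xi^{F}_{p}}{\bigotimes_{p} \xi^{G}_{p}}
=
\prod_{p \in \semigrposint} \bkt{\fun{e_{p}}{\fun{F}{p}}}{\fun{e_{p}}{\fun{G}{p}}}_{h_{p}},
\]
with the convention that \(\fun{F}{p} = 0\) off the support. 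Each factor is \(\delta_{\fun{F}{p},\fun{G}{p}}\), and all but finitely many factors equal \(1\). Hence the product is \(1\) if \(F = G\) and \(0\) otherwise, which proves orthonormality.

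Second, for totality it suffices, by Definition \ref{def:itps}, to show that every product vector \(\bigotimes_{p} \eta_{p}\) with \(\eta \in \fun{C}{\xi}\) lies in the closed span \(\mathcal{K}\) of the flip vectors. The images of such \(\eta\) span a dense subspace. Fix such an \(\eta\) and a finite set \(\Lambda\). Define \(\eta^{\Lambda}\) to agree with \(\eta\) on \(\Lambda\) and with \(\xi\) off \(\Lambda\).

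Expanding each \(\eta_{q}\), \(q \in \Lambda\), in the basis \(\seq{\fun{e_{q}}{k}}{k}\) and applying Lemma \ref{lem:slotwise} slot by slot shows that \(\bigotimes_{p} \eta^{\Lambda}_{p}\) lies in \(\mathcal{K}\). When \(\dim h_{q} = \infty\), the expansion in slot \(q\) is an infinite series. To handle this, I would extend slotwise linearity to norm limits in a single slot, using the bound
\[
\norm{\bigotimes_{p} \zeta[u]_{p} - \bigotimes_{p} \zeta[v]_{p}}
=
\norm{u - v} \prod_{p \neq q} \norm{\zeta_{p}},
\]
which follows from Lemma \ref{lem:itps-form} together with sesquilinearity.

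It remains to show that \(\bigotimes_{p} \eta^{\Lambda}_{p} \to \bigotimes_{p} \eta_{p}\) as \(\Lambda \nearrow \semigrposint\). Expand the squared norm of the difference using Lemma \ref{lem:itps-form}. For \(\Lambda\) large enough to contain the finitely many non-unit slots of \(\eta\), this gives
\[
\prod_{p} \norm{\eta_{p}}^{2}
+
\prod_{p} \norm{\eta^{\Lambda}_{p}}^{2}
-
2 \opreal \prod_{p \in \Lambda} \norm{\eta_{p}}^{2} \prod_{p \notin \Lambda} \bkt{\eta_{p}}{\xi_{p}}.
\]
Because \(\eta \approx \xi\), the tail estimate of Lemma \ref{lem:products} bounds \(\abs{1 - \prod_{p \notin \Lambda} \bkt{\eta_{p}}{\xi_{p}}}\) by \(\fnexp{2\delta} \sum_{p \notin \Lambda} \abs{1 - \bkt{\eta_{p}}{\xi_{p}}}\), where \(\delta\) is the total defect sum. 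This tends to \(0\), so the whole expression tends to \(0\).

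I expect this tail-convergence step to be the main obstacle. The delicate points are the non-unit slots, which are absorbed by taking \(\Lambda\) large, and the infinite-dimensional single-slot expansion, which is handled by the one-slot continuity bound above.

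Finally, \(\mathcal{F}\) is countable because it consists of finitely supported maps \(\semigrposint \to \monnat\). A Hilbert space with a countable orthonormal basis is separable.
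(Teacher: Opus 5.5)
Your proposal is correct and follows essentially the same route as the paper. Orthonormality comes from the product formula of Lemma~\ref{lem:itps-form}, totality from the tail approximation $\eta^{\Lambda}\to\eta$ controlled by Lemma~\ref{lem:products}, and the slotwise expansion from Lemma~\ref{lem:slotwise}; the only cosmetic difference is that you handle infinite-dimensional slots with a one-slot continuity bound, whereas the paper truncates all slots of $\Lambda$ at once and computes the norm of the difference directly from the product inner product.
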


\begin{proof}
Suppose first that $F
\neq
F'$.
The two sequences take distinct basis vectors in at least one slot,
which gives a zero factor in the product defining their inner product.
If $F
=
F'$,
every factor equals one.
This proves that the stated family is orthonormal.

It remains to prove totality.
Let $\eta
\in
\fun{C}{\xi}$,
and use the same symbol for the corresponding product vector.
For a finite $\Lambda$ containing all slots where $\norm{\eta_{p}}
\neq
1$,
let $\eta^{\rbk{\Lambda}}$ be the sequence equal to $\eta$ on $\Lambda$ and to $\xi$ outside.
The approximation error satisfies
$$\begin{aligned}
&\norm{\eta - \eta^{\rbk{\Lambda}}}^{2}
=
\norm{\eta}^{2} + \norm{\eta^{\rbk{\Lambda}}}^{2} - 2 \opreal \bkt{\eta^{\rbk{\Lambda}}}{\eta}
\\ 
&=
\norm{\eta}^{2}
+\prod_{p \in \Lambda} \norm{\eta_{p}}^{2}
-2 \fun{\opreal}{\prod_{p \in \Lambda}
\norm{\eta_{p}}^{2}
\prod_{p \notin \Lambda}
\bkt{\xi_{p}}{\eta_{p}}}.
\end{aligned}$$
As $\Lambda$ increases,
$\prod_{p
\in
\Lambda} \norm{\eta_{p}}^{2}$ stabilizes at $\norm{\eta}^{2}$.
The tail-product net
$\prod_{p \notin \Lambda} \bkt{\xi_{p}}{\eta_{p}}
\to
1$ as $\Lambda \nearrow \semigrposint$ by the tail estimate of Lemma \ref{lem:products},
because $\sum_{p \in \semigrposint} \abs{1 - \bkt{\xi_{p}}{\eta_{p}}}
<
\infty$.
It follows that
$\norm{\eta - \eta^{\rbk{\Lambda}}}
\to
0$ as $\Lambda \nearrow \semigrposint$.

Fix such a finite set $\Lambda$.
For $p
\in
\Lambda$,
write
$\eta_{p}
=
\sum_{j \geq 0} c_{p j} \fun{e_{p}}{j}$
and set
$v_{p,N}
=
\sum_{j = 0}^{N} c_{p j} \fun{e_{p}}{j}$.
Let $\eta^{\rbk{\Lambda,N}}$ be equal to $v_{p,N}$ on $\Lambda$ and to $\xi_{p}$ outside.
Repeated application of Lemma \ref{lem:slotwise} shows that its product vector is a finite linear combination of the $\xi^{F}$.
The finite-dimensional approximants satisfy
$$\begin{aligned}
\norm{\eta^{\rbk{\Lambda}} - \eta^{\rbk{\Lambda,N}}}^{2}
=
\prod_{p \in \Lambda} \norm{\eta_{p}}^{2}
+\prod_{p \in \Lambda} \norm{v_{p,N}}^{2}
-2 \fun{\opreal}{\prod_{p \in \Lambda} \bkt{v_{p,N}}{\eta_{p}}}
\to
0
\quad \rbk{N \to \infty}.
\end{aligned}$$
Indeed,
$v_{p,N}
\to
\eta_{p}$ in $h_{p}$ as $N \to \infty$ for every $p
\in
\Lambda$,
and only finitely many factors occur.
It follows that every $\eta^{\rbk{\Lambda}}$ belongs to the closed linear span of the $\xi^{F}$.
The preceding tail approximation gives the same conclusion for every product vector associated with an element of $\fun{C}{\xi}$.
These product vectors span a dense subspace of $\sphilb{H}_{\xi}$ by Definition \ref{def:itps},
which proves totality.

If each $h_{p}$ is separable,
the flip vectors are indexed by a countable union of countable sets.
This makes the orthonormal basis countable,
and $\sphilb{H}_{\xi}$ is separable.
\end{proof}

\begin{prop}[factorization]\label{prop:factorization}
Let $\xi$ be an adapted sequence with unit entries,
and let $\Lambda$ be a finite set of slots.
Let $\fnrestr{\xi}{\Lambda^{c}}$ denote the restricted sequence,
an adapted sequence for $\seq{h_{p}}{p
\notin
\Lambda}$,
with incomplete tensor product space $\sphilb{H}_{\fnrestr{\xi}{\Lambda^{c}}}$.
There is a unique unitary
$$U_{\Lambda} \colon \rbk{\bigotimes_{p
\in
\Lambda} h_{p}} \otimes \sphilb{H}_{\fnrestr{\xi}{\Lambda^{c}}}
\to
\sphilb{H}_{\xi}$$
mapping $\rbk{\bigotimes_{p
\in
\Lambda} v_{p}} \otimes \rbk{\bigotimes_{p
\notin
\Lambda} \zeta_{p}}$ to the product vector of the combined sequence,
for all $v_{p}
\in
h_{p}$ and all adapted $\zeta
\approx
\fnrestr{\xi}{\Lambda^{c}}$.
\end{prop}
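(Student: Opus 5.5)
The plan is to construct $U_{\Lambda}$ first on the total set of elementary vectors, check that it preserves inner products, and then extend by linearity and continuity. Write $\sphilb{K}=\rbk{\bigotimes_{p\in\Lambda}h_{p}}\otimes\sphilb{H}_{\fnrestr{\xi}{\Lambda^{c}}}$. For $v=\seq{v_{p}}{p\in\Lambda}$ and adapted $\zeta\approx\fnrestr{\xi}{\Lambda^{c}}$, let $v\vee\zeta$ denote the combined sequence. It differs from a member of $\fun{C}{\fnrestr{\xi}{\Lambda^{c}}}$ only in the finitely many slots of $\Lambda$, and is therefore adapted and equivalent to $\xi$. So its product vector is a well-defined element of $\sphilb{H}_{\xi}$.

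The key identity is the inner product computation
$$\bkt{v\vee\zeta}{v'\vee\zeta'}_{\sphilb{H}_{\xi}}=\prod_{p\in\Lambda}\bkt{v_{p}}{v'_{p}}\prod_{p\notin\Lambda}\bkt{\zeta_{p}}{\zeta'_{p}}=\bkt{\rbk{\textstyle\bigotimes v_{p}}\otimes\zeta}{\rbk{\textstyle\bigotimes v'_{p}}\otimes\zeta'}_{\sphilb{K}}.$$
It holds because the unordered product of Lemma \ref{lem:itps-form} splits into the finite product over $\Lambda$ times the convergent tail product. The tail product converges by Lemma \ref{lem:transitivity} and Lemma \ref{lem:products}, and splitting is legitimate for unordered limits since one may restrict to nets $\Lambda'\supseteq\Lambda$.

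Define $U_{\Lambda}$ on finite linear combinations of the elementary vectors $\rbk{\bigotimes v_{p}}\otimes\zeta$ by linearity. The vectors $\zeta$ here range over the dense set of product vectors of $\sphilb{H}_{\fnrestr{\xi}{\Lambda^{c}}}$, and the elementary tensors over $\Lambda$ span the finite tensor product. Well-definedness and isometry both follow from the Gram-matrix equality above, by the standard argument: if two linear combinations agree in $\sphilb{K}$, their images have difference of norm zero. Hence $U_{\Lambda}$ extends by continuity to an isometry from $\sphilb{K}$ into $\sphilb{H}_{\xi}$. Uniqueness is immediate because these vectors are total in $\sphilb{K}$: their span contains the algebraic tensor product of a dense subspace of $\sphilb{H}_{\fnrestr{\xi}{\Lambda^{c}}}$ with $\bigotimes_{p\in\Lambda}h_{p}$.

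The main step is surjectivity. The natural route is Proposition \ref{prop:flip-basis}. Fix orthonormal bases $\fun{e_{p}}{j}$ with $\fun{e_{p}}{0}=\xi_{p}$. Every flip vector $\xi^{F}$ of $\sphilb{H}_{\xi}$ equals $U_{\Lambda}$ applied to $\rbk{\bigotimes_{p\in\Lambda}\fun{e_{p}}{\fun{F}{p}}}\otimes\rbk{\fnrestr{\xi}{\Lambda^{c}}}^{\fnrestr{F}{\Lambda^{c}}}$. So the range contains an orthonormal basis, and since the range of an isometry is closed, $U_{\Lambda}$ is onto. One small point to handle is that Proposition \ref{prop:flip-basis} requires unit entries; this is supplied by the hypothesis on $\xi$ and inherited by $\fnrestr{\xi}{\Lambda^{c}}$. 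Beyond that, I expect no real obstacle: the only delicate part is the bookkeeping for splitting the unordered product, and Lemma \ref{lem:products} handles it.
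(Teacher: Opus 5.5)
Your proposal is correct and follows essentially the same route as the paper. You define $U_{\Lambda}$ on elementary tensors, split the inner product into the finite product over $\Lambda$ times the convergent tail product, extend to an isometry, and get surjectivity because the closed range contains a total set. The only difference is cosmetic: you take that total set to be the flip basis of Proposition~\ref{prop:flip-basis}, whereas the paper splits an arbitrary $\eta\in C(\xi)$ into $\eta|_{\Lambda}$ and $\eta|_{\Lambda^{c}}\approx\xi|_{\Lambda^{c}}$, and you also state uniqueness explicitly where the paper leaves it implicit.
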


\begin{proof}
Combined sequences of the stated form are adapted and equivalent to $\xi$,
since equivalence ignores the finitely many slots in $\Lambda$.
On elementary tensors of the stated form the inner products agree by definition of the product form.
Both sides factor into $\prod_{p
\in
\Lambda} \bkt{v_{p}}{v'_{p}} \cdot \prod_{p
\notin
\Lambda} \bkt{\zeta_{p}}{\zeta'_{p}}$.
The span of the elementary tensors of the stated form is dense on the left,
because product vectors are dense in each factor by construction.
Therefore the map extends to an isometry.
Its range contains all product vectors of sequences equivalent to $\xi$.
Given $\eta
\in
\fun{C}{\xi}$,
the combined-sequence form with $v_{p}
=
\eta_{p}$ on $\Lambda$ and $\zeta
=
\fnrestr{\eta}{\Lambda^{c}}$ reproduces $\eta$,
where $\fnrestr{\eta}{\Lambda^{c}}
\approx
\fnrestr{\xi}{\Lambda^{c}}$ because equivalence is insensitive to the removed finite set.
Product vectors are total in $\sphilb{H}_{\xi}$.
It follows that the isometry is onto.
\end{proof}

The factorization is compatible with the bases of Proposition \ref{prop:flip-basis}. Under \(U_{\Lambda}\), the tensor product of a basis vector in \(\bigotimes_{p \in \Lambda} h_{p}\) and a tail flip vector becomes a flip vector in \(\sphilb{H}_{\xi}\). Every flip vector in \(\sphilb{H}_{\xi}\) arises in this way.

Define the dense subspace with finite modification \begin{equation}\label{expedition0025006}
\begin{aligned}
\sphilb{D}_{\xi}
=
\fun{\splinspan}
{\set{\bigotimes_{p \in \semigrposint} \eta_{p}}
{\eta
\in
\fun{C}{\xi},
\abscard{\set{p
\in
\semigrposint}{\eta_{p}
\neq
\xi_{p}}}
<
\infty}}.
\end{aligned}
\end{equation} Proposition \ref{prop:flip-basis} shows that \(\sphilb{D}_{\xi}\) is dense in \(\sphilb{H}_{\xi}\).

\begin{lem}[centered infinite tensor sums]\label{lem:centered-tensor-sum}
Let $\xi$ be an adapted sequence with unit entries,
and let $a_{p}$ be a bounded self-adjoint operator on $h_{p}$ for every
$p
\in
\semigrposint$.
Suppose that
$$\bkt{\xi_{p}}{a_{p} \xi_{p}}_{h_{p}}
=
0
\quad
\rbk{p
\in
\semigrposint},
\quad
\sum_{p
\in
\semigrposint}
\norm{a_{p} \xi_{p}}_{h_{p}}^{2}
<
\infty.$$
On $\sphilb{D}_{\xi}$ in \eqref{expedition0025006},
let $A^{0}$ be the norm-convergent tensor sum
$A^{0}
=
\sum_{p
\in
\semigrposint} a_{p}$.
Then $A^{0}$ is essentially self-adjoint.
Its closure is the Stone generator of the strongly continuous infinite tensor product
of the map
$\fldreal
\ni
t
\mapsto
\bigotimes_{p
\in
\semigrposint}
\fnexp{\imunit t a_{p}}$.
\end{lem}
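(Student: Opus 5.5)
We first make sense of $A^{0}$ on $\sphilb{D}_{\xi}$. Take a product vector $\eta$ that differs from $\xi$ only on a finite set $F$. For $p \notin F$, the centering hypothesis gives $a_{p}\xi_{p} \perp \xi_{p}$. Proposition \ref{prop:factorization} identifies $a_{p}\eta$ with the product vector obtained by replacing slot $p$ by $a_{p}\xi_{p}$. These vectors are mutually orthogonal for distinct $p \notin F$, since each has a component orthogonal to $\xi_{p}$ in a different slot and the other slots agree. The norm of the $p$-th vector is $\norm{a_{p}\xi_{p}}$ times the norm of the $\Lambda$-part. Hence the tail series $\sum_{p \notin F} a_{p}\eta$ converges in norm exactly by the square-summability hypothesis, and the finitely many terms $p \in F$ are bounded. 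By linearity, $A^{0}$ is a well-defined symmetric operator on $\sphilb{D}_{\xi}$; symmetry follows from self-adjointness of each $a_{p}$ and convergence of matrix elements.

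Next we construct the unitary group. For finite $\Lambda$, let $U^{\Lambda}_{t} = \bigotimes_{p\in\Lambda} \fnexp{\imunit t a_{p}} \otimes 1$ via Proposition \ref{prop:factorization}. We want $U^{\Lambda}_{t}$ to converge strongly as $\Lambda \nearrow \semigrposint$. The sequence $\seq{\fnexp{\imunit t a_{p}}\xi_{p}}{p}$ is equivalent to $\xi$, with defect
\[
\abs{1-\bkt{\xi_{p}}{\fnexp{\imunit t a_{p}}\xi_{p}}} \leq \tfrac{t^{2}}{2}\norm{a_{p}\xi_{p}}^{2}.
\]
This bound holds because the linear term vanishes by centering and the Taylor remainder of $\fnexp{\imunit t a}$ gives $\abs{\bkt{\xi}{(\fnexp{\imunit t a}-1-\imunit t a)\xi}} \leq \tfrac{t^{2}}{2}\norm{a\xi}^{2}$. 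Lemma \ref{lem:products} then shows that the inner products $\bkt{\zeta}{U^{\Lambda}_{t}\eta}$ of product vectors in $\sphilb{D}_{\xi}$ converge. The norm $\norm{U^{\Lambda'}_{t}\eta - U^{\Lambda}_{t}\eta}^{2} = 2 - 2\opreal\prod_{p\in\Lambda'\setminus\Lambda}\bkt{\xi_{p}}{\fnexp{\imunit t a_{p}}\xi_{p}}$ tends to $0$ by the tail estimate. So the net is strongly Cauchy on a total set and uniformly bounded, and it has a unitary strong limit $U_{t}$. The group law holds at every finite stage and passes to the limit.

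For strong continuity, it suffices to check vectors $\eta \in \sphilb{D}_{\xi}$. There $\bkt{\eta}{U_{t}\eta}$ is a finite product times $\prod_{p\notin F}\bkt{\xi_{p}}{\fnexp{\imunit t a_{p}}\xi_{p}}$. The tail estimate of Lemma \ref{lem:products}, together with the $t^{2}$ bound above, gives uniform convergence in $t$ on compacts, so this product is continuous and tends to $1$ as $t \to 0$. Let $A$ be the Stone generator of $U_{t}$.

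The main step is to show that $\sphilb{D}_{\xi}\subset\dom A$ with $A\eta = A^{0}\eta$, and that $\sphilb{D}_{\xi}$ is a core. For the first claim we compute $\norm{t^{-1}(U_{t}-1)\eta - \imunit A^{0}\eta}$ on a product vector. Using factorization, we split into a finite head and the tail. The tail contribution is controlled by orthogonality of the slotwise deviations. The squared norm of the tail error is expressed through $2 - 2\opreal$ of products of one-site quantities, each of which is $1 + \mathcal{O}(t^{2}\norm{a_{p}\xi_{p}}^{2})$, so dominated convergence in $p$ gives $\mathcal{O}(t)\to 0$. This is the expected obstacle. We would try first to write the error as the difference between $U_{t}\eta$ and the first-order expansion, and bound it by the Hilbert-space norm of $\bigotimes_{p\notin F}\fnexp{\imunit t a_{p}}\xi_{p} - \xi - \imunit t\sum_{p\notin F}(a_{p}\xi_{p})^{(p)}$, computing all the needed inner products explicitly from Lemma \ref{lem:products}.

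Once $A^{0}\subset A$ is established, the core property follows from Nelson's invariant-domain criterion \cite[Theorem VIII.10]{ReedSimon001}. The space $\sphilb{D}_{\xi}$ is dense by Proposition \ref{prop:flip-basis}. It is invariant under $U_{t}$ in the closure sense: for each $t$, $U^{\Lambda}_{t}$ maps $\sphilb{D}_{\xi}$ into itself, and $U^{\Lambda}_{t}\to U_{t}$ strongly. If invariance of $\sphilb{D}_{\xi}$ itself fails, we instead use the dense $U_{t}$-invariant subspace spanned by the product vectors with tail $\seq{\fnexp{\imunit s a_{p}}\xi_{p}}{p}$, $s\in\fldreal$, after first showing that $A^{0}$ extends to it with the same estimates. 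In either case $\overline{A^{0}} = A$, and $A^{0}$ is essentially self-adjoint.
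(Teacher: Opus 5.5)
The genuine gap is in the final step, which is where essential self-adjointness is actually proved. The earlier steps are sound: your definition of $A^{0}$ on $\mathcal{D}_{\xi}$, the construction of $U_{t}$ as the strong limit of the truncations $U^{\Lambda}_{t}$, and strong continuity are fine. The step $A^{0}\subset A$ is only sketched, but it can be completed. Without $\sup_{p}\norm{a_{p}}<\infty$, dominated convergence over $p$ gives an error $o(1)$, not $\mathcal{O}(t)$. A shorter route is also available: $U^{\Lambda}_{t}\to U_{t}$ strongly gives strong resolvent convergence of $B_{\Lambda}=\sum_{p\in\Lambda}a_{p}$ to $A$ (Laplace transform), and then $B_{\Lambda}\eta\to A^{0}\eta$ forces $\eta=(A-\imunit)^{-1}(A^{0}-\imunit)\eta$.

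The invariant-domain theorem needs $U_{t}\mathcal{D}_{\xi}\subset\mathcal{D}_{\xi}$, and this fails whenever $a_{p}\xi_{p}\neq0$ for infinitely many $p$. Every element of $\mathcal{D}_{\xi}$ has the form $v\otimes\bigotimes_{p\notin G}\xi_{p}$ for some finite $G$. But for all but countably many $t$ one has $\abs{\bkt{\xi_{p}}{e^{\imunit t a_{p}}\xi_{p}}}<1$ at every such $p$, so $U_{t}\bigotimes_{p}\xi_{p}$ is not of that form.

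Your substitute, invariance ``in the closure sense'' from $U^{\Lambda}_{t}\mathcal{D}_{\xi}\subset\mathcal{D}_{\xi}$ and $U^{\Lambda}_{t}\to U_{t}$ strongly, only gives $U_{t}\mathcal{D}_{\xi}\subset\overline{\mathcal{D}_{\xi}}$ in norm. Every dense subspace satisfies that, and no core criterion can use it.

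Your fallback does not close the gap either. Applying the theorem to the $U_{t}$-invariant span $\mathcal{D}'$ of vectors with tails $(e^{\imunit s a_{p}}\xi_{p})_{p}$ shows that $A$ restricted to $\mathcal{D}'$ is essentially self-adjoint. That says nothing about $A^{0}$ unless you also show $\mathcal{D}'\subset\dom\overline{A^{0}}$, i.e.\ that these vectors are graph-norm limits of vectors in $\mathcal{D}_{\xi}$. That is exactly the missing statement.

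The missing idea is graph-norm convergence of the truncated evolutions. Each slot operator $a_{p}$ commutes with each $e^{\imunit t a_{q}}$, and $U^{\Lambda}_{t}\mathcal{D}_{\xi}\subset\mathcal{D}_{\xi}$. Hence for $\eta\in\mathcal{D}_{\xi}$ you get $A^{0}U^{\Lambda}_{t}\eta=U^{\Lambda}_{t}A^{0}\eta\to U_{t}A^{0}\eta=U_{t}A\eta=AU_{t}\eta$. Together with $U^{\Lambda}_{t}\eta\to U_{t}\eta$, this gives $U_{t}\mathcal{D}_{\xi}\subset\dom\overline{A^{0}}$ with $\overline{A^{0}}U_{t}\eta=AU_{t}\eta$.

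This weaker invariance suffices. If $\psi\perp\Ran(A^{0}\mp\imunit)$ and $\eta\in\mathcal{D}_{\xi}$, then $f(t)=\bkt{\psi}{U_{t}\eta}$ satisfies $f'=\mp f$. Boundedness of $f$ on $\mathbb{R}$ forces $f\equiv0$, so $\psi\perp\mathcal{D}_{\xi}$ and $\psi=0$. Thus $\overline{A^{0}}$ is self-adjoint, and being contained in $A$, it equals $A$.

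The paper avoids invariance altogether. It works on the subspaces $\mathcal{K}_{\Lambda}$ of vectors $v\otimes\bigotimes_{p\notin\Lambda}\xi_{p}$, where $A$ is a bounded $A_{\Lambda}$ plus a tail obeying the exact identity $\norm{(A-A_{\Lambda})\Psi}^{2}=\norm{\Psi}^{2}\sum_{p\notin\Lambda}\norm{a_{p}\xi_{p}}^{2}$. It then approximates any $\Phi\in\dom A$ in graph norm by $(A_{\Lambda_{n}}-\imunit)^{-1}\Theta_{n}$ with $\Theta_{n}\to(A-\imunit)\Phi$. Your repaired route uses the group and its truncations instead, and is equally short once you note that $A^{0}$ commutes with $U^{\Lambda}_{t}$.
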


\begin{proof}
The spectral theorem and the scalar estimate
$\abs{\fnexp{\imunit x} - 1 - \imunit x}
\leq
\frac{x^{2}}{2}$
give
$$\abs{1 - \bkt{\xi_{p}}{\fnexp{\imunit t a_{p}} \xi_{p}}_{h_{p}}}
\leq
\frac{t^{2}}{2}
\norm{a_{p} \xi_{p}}_{h_{p}}^{2}.$$
The infinite product criterion of Lemma \ref{lem:products} therefore defines the stated tensor product unitary.
The same estimate,
first on the reference product vector and then on the generating product vectors of $\sphilb{D}_{\xi}$,
proves strong continuity.
Stone's theorem supplies a self-adjoint generator $A$.

For a generating product vector of $\sphilb{D}_{\xi}$,
the tail vectors obtained by applying distinct $a_{p}$ are orthogonal because
$\bkt{\xi_{p}}{a_{p}\xi_{p}}_{h_{p}}
=
0$.
The square-summability assumption therefore makes the tensor sum norm convergent on
$\sphilb{D}_{\xi}$.
Differentiating the finite tensor products and passing to this norm limit shows that
$A$ extends $A^{0}$.

It remains to prove that $\sphilb{D}_{\xi}$ is a core for $A$.
For every finite $\Lambda
\subset
\semigrposint$,
Proposition \ref{prop:factorization} identifies the closed subspace
$$\sphilb{K}_{\Lambda}
=
\fun{U_{\Lambda}}
{\rbk{\bigotimes_{p \in \Lambda} h_{p}}
\otimes
\fun{\splinspan}{\setone{\bigotimes_{p \notin \Lambda} \xi_{p}}}}
\subset
\sphilb{H}_{\xi}.$$
Let $A_{\Lambda}$ be the bounded self-adjoint operator on $\sphilb{K}_{\Lambda}$ obtained by applying the finite tensor sum
$\sum_{p \in \Lambda} a_{p}$
to the first factor in this factorization.
The product vectors in the defining set for $\sphilb{D}_{\xi}$ whose modified slots lie in $\Lambda$ span a norm-dense subspace of $\sphilb{K}_{\Lambda}$.
The centering assumption makes the summands in the complementary tail orthogonal.
On this dense span,
the tail sum satisfies
$$\norm{\rbk{A - A_{\Lambda}} \Psi}^{2}
=
\norm{\Psi}^{2}
\sum_{p \notin \Lambda} \norm{a_{p} \xi_{p}}_{h_{p}}^{2}.$$
The operator $A_{\Lambda}$ is bounded,
and $A$ is closed.
The tail identity therefore extends to every $\Psi
\in
\sphilb{K}_{\Lambda}$ and shows that $\sphilb{K}_{\Lambda}
\subset
\dom A$.

Fix $\Phi
\in
\dom A$ and set $\Theta
=
\rbk{A - \imunit} \Phi$.
The union of the subspaces $\sphilb{K}_{\Lambda}$ over finite $\Lambda
\subset
\semigrposint$ contains $\sphilb{D}_{\xi}$ and is dense in $\sphilb{H}_{\xi}$.
Choose the following sequences.
\begin{itemize}
\item
Choose sequences
$\seq{\Theta_{n}}{n \in \semigrposint}$
and
$\seq{F_{n}}{n \in \semigrposint}$
such that each $F_{n}
\subset
\semigrposint$ is finite,
$\Theta_{n}
\in
\sphilb{K}_{F_{n}}$,
and
$\lim_{n \to \infty}
\Theta_{n}
= \Theta$.

\item
Choose sequences
$\seq{\Lambda_{n}}{n \in \semigrposint}$
and $\seq{r_{n}}{n \in \semigrposint}$
such that each $\Lambda_{n}
\subset \semigrposint$ is finite,
$\Lambda_{n}
\supseteq
F_{n}$ and
$r_{n}^{2}
=
\sum_{p \notin \Lambda_{n}} \norm{a_{p} \xi_{p}}_{h_{p}}^{2}
\leq
\frac{1}{n^{2}}$.

\item
Define the sequence
$\seq{\widetilde{\Phi}_{n}}{n \in \semigrposint}$ by
$\widetilde{\Phi}_{n}
=
\opfnresolvent{A_{\Lambda_{n}} - \imunit} \Theta_{n}
\in
\sphilb{K}_{\Lambda_{n}}$.
\end{itemize}
The resolvent bound and the tail identity give
$$\begin{aligned}
\norm{\rbk{A - \imunit} \widetilde{\Phi}_{n} - \Theta}
\leq
\norm{\Theta_{n} - \Theta}
+\norm{\rbk{A - A_{\Lambda_{n}}} \widetilde{\Phi}_{n}}
&\leq
\norm{\Theta_{n} - \Theta}
+
r_{n} \norm{\Theta_{n}}
\to
0
\quad \rbk{n \to \infty}.
\end{aligned}$$
For a self-adjoint $A$,
the norm induced by $\Psi
\mapsto
\rbk{A - \imunit} \Psi$ is the graph norm because
$$\norm{\rbk{A - \imunit} \Psi}^{2}
=
\norm{A \Psi}^{2}
+
\norm{\Psi}^{2}
\quad \rbk{\Psi
\in
\dom A}.$$
It follows that $\widetilde{\Phi}_{n}
\to
\Phi$ in the graph norm of $A$ as $n
\to
\infty$.

The same product-vector span is norm dense in $\sphilb{K}_{\Lambda_{n}}$.
For every $\Psi
\in
\sphilb{K}_{\Lambda_{n}}$,
the tail identity gives the graph-norm bound
$$\norm{A \Psi}
\leq
\rbk{\norm{A_{\Lambda_{n}}} + r_{n}} \norm{\Psi}.$$
The density and this bound allow an approximation of $\widetilde{\Phi}_{n}$ in the graph norm by a vector $\Phi_{n}
\in
\sphilb{D}_{\xi}
\cap
\sphilb{K}_{\Lambda_{n}}$.
Choosing the approximation error below $1 / n$ gives a sequence
$\seq{\Phi_{n}}{n \in \semigrposint}$
that converges to $\Phi$ in the graph norm as $n
\to
\infty$.
This approximation proves that $\sphilb{D}_{\xi}$ is a core for $A$,
and $\opclos{A^{0}}
=
A$.
\end{proof}

The next lemma is the engine behind all irreducibility and factoriality statements: no nontrivial operator commutes with enough finite blocks.

\begin{lem}[triviality of the tail]\label{lem:tail}
Let $\xi$ be an adapted sequence with unit entries,
and let $T
\in
\opspbddlin{\sphilb{H}_{\xi}}$ be such that for every finite $\Lambda$ there is $T_{\Lambda}
\in
\opspbddlin{\sphilb{H}_{\fnrestr{\xi}{\Lambda^{c}}}}$ with $T
=
U_{\Lambda} \rbk{1 \otimes T_{\Lambda}} \faadj{U_{\Lambda}}$.
Then $T$ is a scalar.
\end{lem}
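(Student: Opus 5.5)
The plan is to show that the reference product vector $\Omega_{\xi}=\bigotimes_{p}\xi_{p}$ is an eigenvector of $T$, and that the eigenvalue propagates to a dense set. Set $\lambda=\bkt{\Omega_{\xi}}{T\Omega_{\xi}}$. Throughout, write $\Omega_{\Lambda^{c}}=\bigotimes_{p\notin\Lambda}\xi_{p}$ for the reference vector of $\sphilb{H}_{\fnrestr{\xi}{\Lambda^{c}}}$.

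The first step uses the flip basis of Proposition \ref{prop:flip-basis}, with $\fun{e_{p}}{0}=\xi_{p}$. For finitely supported $F,G\in\mathcal{F}$, choose a finite $\Lambda$ containing both supports. Under $U_{\Lambda}$ of Proposition \ref{prop:factorization} we have $\xi^{F}=U_{\Lambda}(e_{F}\otimes\Omega_{\Lambda^{c}})$, where $e_{F}=\bigotimes_{p\in\Lambda}\fun{e_{p}}{\fun{F}{p}}$, and similarly for $G$. The hypothesis $T=U_{\Lambda}(1\otimes T_{\Lambda})\faadj{U_{\Lambda}}$ then gives
\[
\bkt{\xi^{F}}{T\xi^{G}}=\bkt{e_{F}}{e_{G}}\,\bkt{\Omega_{\Lambda^{c}}}{T_{\Lambda}\Omega_{\Lambda^{c}}}.
\]
This vanishes for $F\neq G$. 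For $F=G$ it equals $c_{\Lambda}=\bkt{\Omega_{\Lambda^{c}}}{T_{\Lambda}\Omega_{\Lambda^{c}}}$. Taking $F=G=\emptyset$ with the same $\Lambda$ shows $c_{\Lambda}=\bkt{\Omega_{\xi}}{T\Omega_{\xi}}=\lambda$, independently of $\Lambda$. Hence the matrix of $T$ in the orthonormal basis $\setone{\xi^{F}}$ is $\lambda$ times the identity.

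The second step converts this into an operator identity. Since $T$ is bounded and its matrix elements agree with those of $\lambda1$ on an orthonormal basis, we get $T=\lambda1$ by linearity and continuity.

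The main obstacle is bookkeeping: one must verify that $U_{\Lambda}$ carries $e_{F}\otimes\Omega_{\Lambda^{c}}$ to the flip vector $\xi^{F}$. This is exactly the compatibility remark following Proposition \ref{prop:factorization}, applied with the tail flip vector $F=\emptyset$. One must also check that the reference vector of the restricted sequence is the empty-flip tail vector, which holds by construction.
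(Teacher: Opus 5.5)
Your proposal is correct and follows the paper's proof essentially verbatim: factor the flip vectors $\xi^{F},\xi^{G}$ through $U_{\Lambda}$ for a finite $\Lambda$ containing both supports, read off $\delta_{FG}\,\bkt{\Omega_{\Lambda^{c}}}{T_{\Lambda}\Omega_{\Lambda^{c}}}$, use $F=G=\emptyset$ to see that this constant equals $\bkt{\Omega_{\xi}}{T\Omega_{\xi}}$ for every $\Lambda$, and conclude $T=\lambda 1$ because the flip vectors form an orthonormal basis. Only your opening sentence (eigenvector propagating to a dense set) differs from the argument you actually carry out, which is the full matrix computation.
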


\begin{proof}
Let $\xi^{F}, \xi^{G}$ be flip vectors as in Proposition \ref{prop:flip-basis}
and choose a finite $\Lambda$ containing the supports of $F$ and $G$.
Under $U_{\Lambda}$ these vectors factor as $\xi^{F}
=
u_{F} \otimes \tau$ and $\xi^{G}
=
u_{G} \otimes \tau$,
where $u_{F}, u_{G}$ are the corresponding basis vectors of $\bigotimes_{p \in \Lambda} h_{p}$
and $\tau$ is the unflipped tail product vector.
The matrix coefficient of $T$ is
$$\bkt{\xi^{F}}{T \xi^{G}}
=
\bkt{u_{F} \otimes \tau}{\rbk{1 \otimes T_{\Lambda}} \rbk{u_{G} \otimes \tau}}
=
\bkt{u_{F}}{u_{G}} \bkt{\tau}{T_{\Lambda} \tau}
=
\delta_{F G} \bkt{\tau}{T_{\Lambda} \tau}.$$
For $F
=
G
=
\emptyset$ this shows $\bkt{\tau}{T_{\Lambda} \tau}
=
\bkt{\xi^{\emptyset}}{T \xi^{\emptyset}}
=
c$ for every $\Lambda$,
a constant independent of $\Lambda$.
Therefore $\bkt{\xi^{F}}{T \xi^{G}}
=
c \delta_{F G}$ for all $F, G$,
and since the flip vectors form an orthonormal basis,
$T
=
c$.
\end{proof}

\subsubsection{The complete product space and von Neumann's commutant theorem}\label{the-complete-product-space-and-von-neumanns-commutant-theorem}

The complete product space is defined in Definition \ref{def:complete}. Its inner product and von Neumann's commutant theorem are established below.

\begin{prop}[consistency of the inner product]\label{prop:complete-inner}
Let $\eta, \zeta
\in
\mathcal{W}$ be adapted sequences with unit entries.
The inner product of the associated product vectors in $\widehat{\sphilb{H}}_{\mathcal{W}}$
equals $\prod_{p \in \semigrposint} \bkt{\eta_{p}}{\zeta_{p}}$
whenever this unordered product converges,
and the product vectors of two distinct classes are orthogonal.
When the unordered product fails to converge,
which happens exactly when $\eta
\approx
_{w} \zeta$ but $\eta \not\approx \zeta$ and no factor vanishes,
the inner product is $0$.
\end{prop}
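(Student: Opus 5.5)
The plan is to unpack Definition \ref{def:complete}: $\widehat{\sphilb{H}}_{\mathcal{W}}$ is the orthogonal direct sum over strong classes $C$, and each product vector $\bigotimes_{p}\eta_{p}$ lies in the summand $\sphilb{H}_{\fun{C}{\eta}}$. The proof then splits into two cases, according to whether $\eta$ and $\zeta$ lie in the same strong class.

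\emph{Same class.} Suppose $\fun{C}{\eta} = \fun{C}{\zeta}$. By Definition \ref{def:itps} the inner product in $\sphilb{H}_{C}$ is the form $\opform{q}_{C}$ of Lemma \ref{lem:itps-form}. That form is $\prod_{p}\bkt{\eta_{p}}{\zeta_{p}}$, and the product converges absolutely by Lemma \ref{lem:transitivity} together with Lemma \ref{lem:products}. So the formula holds in this case.

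\emph{Distinct classes.} Suppose $\fun{C}{\eta}\neq\fun{C}{\zeta}$. The two vectors lie in orthogonal summands, so their inner product is $0$ by construction. It remains to check that this is consistent with the scalar product, namely: whenever the unordered product converges, its value is $0$. I argue by contradiction. Suppose it converges to some $L\neq0$. Since the entries are unit vectors, every factor satisfies $\abs{\bkt{\eta_{p}}{\zeta_{p}}}\le 1$. Lemma \ref{lem:unordered} then gives $\sum_{p}\abs{1-\bkt{\eta_{p}}{\zeta_{p}}}<\infty$, that is, $\eta\approx\zeta$, which contradicts the assumption of distinct classes. Hence any convergent product equals $0$, and it agrees with the orthogonality.

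\emph{When the product fails to converge.} Assume $\eta\not\approx\zeta$. If some factor vanishes, the net is eventually $0$, so the product converges. If $\sum_{p}(1-\abs{\bkt{\eta_{p}}{\zeta_{p}}})=\infty$, the net tends to $0$ by Lemma \ref{lem:divergence-zero}. In both sub-cases the product converges to $0$ and is consistent with orthogonality.

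This leaves the situation $\eta\approx_{w}\zeta$, $\eta\not\approx\zeta$, with no vanishing factor. Here the moduli have a convergent product bounded below, by Lemma \ref{lem:products} applied to $\abs{z_{p}}$. If the net $\prod_{p\in\Lambda} z_{p}$ converged, its limit would have modulus $\prod_{p}\abs{z_{p}}>0$, and Lemma \ref{lem:unordered} would then force $\eta\approx\zeta$, a contradiction. So the product does not converge, and the inner product is $0$ because the classes differ.

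Conversely, the product converges in each of the remaining situations treated above. This identifies the failure of convergence exactly with the stated condition.

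The main obstacle I expect is the converse direction of this exact characterization. One must rule out the case where the moduli converge to a positive limit while the net of phases oscillates, yet the full net still converges. Lemma \ref{lem:unordered} is what handles this, applied after bounding $\abs{z_{p}}\le 1$.
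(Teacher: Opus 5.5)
Your proposal is correct and follows essentially the same route as the paper. It handles the same-class case via the defining form, uses Lemma \ref{lem:divergence-zero} for vanishing factors and non-weakly-equivalent pairs, and in the remaining case combines positivity of the product of moduli with Lemma \ref{lem:unordered} to rule out convergence; you only organize the cases differently, disposing of all convergent distinct-class products at once via Lemma \ref{lem:unordered}, where the paper splits into $\approx$, $\not\approx_{w}$, and $\approx_{w}$ without $\approx$.
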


\begin{proof}
If $\eta
\approx
\zeta$ the product converges absolutely and equals the inner product by Definition \ref{def:itps}.

If $\eta \not\approx_{w} \zeta$ then $\sum_{p \in \semigrposint} \rbk{1 - \abs{\bkt{\eta_{p}}{\zeta_{p}}}}
=
\infty$ and the product converges to $0$ by Lemma \ref{lem:divergence-zero}.
This agrees with the orthogonality of distinct summands.

It remains to consider the case $\eta
\approx
_{w} \zeta$ but $\eta \not\approx \zeta$.
If some factor vanishes,
the product converges to $0$,
in agreement with the orthogonality of distinct summands.
Suppose that no factor vanishes.
The unordered product cannot have a nonzero limit,
because Lemma \ref{lem:unordered} would then give $\sum_{p
\in
\semigrposint} \abs{1 - \bkt{\eta_{p}}{\zeta_{p}}}
<
\infty$.
This would contradict $\eta \not\approx \zeta$.
The unordered product cannot have limit $0$ either.
Such a limit would imply
$\lim_{\Lambda \nearrow \semigrposint}
\prod_{p \in \Lambda} \abs{\bkt{\eta_{p}}{\zeta_{p}}}
=
0$,
whereas the second part of Lemma \ref{lem:divergence-zero} gives a nonzero limit for the product of the moduli.
\end{proof}

From here to the end of the section the slot spaces are \(h_{p}
=
\fldcmp^{2}\) and the algebra is the quasi-spin algebra \(\oa{A}\) of Definition \ref{def:quasilocal}.

\begin{defn}[product representation]\label{def:product-rep}
Let $\xi$ be an adapted sequence with unit entries in $\fldcmp^{2}$.
The product representation $\oarepn_{\xi}$ of $\oa{A}$ on $\sphilb{H}_{\xi}$ is defined on $A
\in
\oa{A}_{\Lambda}$ by
$$\fun{\oarepn_{\xi}}{A}
=
U_{\Lambda} \rbk{A \otimes 1} \faadj{U_{\Lambda}},$$
where $A$ acts on $\bigotimes_{p
\in
\Lambda} \fldcmp^{2}$,
and extended to $\oa{A}$ by continuity.
\end{defn}

The definition is consistent: for \(\Lambda
\subset
\Lambda'\) the two prescriptions agree on \(\oa{A}_{\Lambda}\), as one checks on product vectors, where both act by applying \(A\) to the slots in \(\Lambda\) and leaving the remaining entries unchanged. For every local observable \(A\), the representation satisfies the norm identity \[\norm{\fun{\oarepn_{\xi}}{A}}
=
\norm{A \otimes 1}
=
\norm{A}.\] It follows that the extension exists and \(\oarepn_{\xi}\) is an isometric unital \(\ast\)-representation. On product vectors, for \(A
=
\bigotimes_{p
\in
\Lambda} a_{p}\), it holds that \begin{equation}\label{eq:product-action}
\fun{\oarepn_{\xi}}{A} \bigotimes_{p \in \semigrposint} \eta_{p}
=
\bigotimes_{p \in \semigrposint} \eta'_{p},
\quad
\eta'_{p}
=
\begin{dcases}
a_{p} \eta_{p} & \rbk{p \in \Lambda}, \\
\eta_{p} & \rbk{p \notin \Lambda}.
\end{dcases}
\end{equation}

\begin{prop}[irreducibility]\label{prop:irreducible}
Every product representation $\oarepn_{\xi}$ is irreducible:
$\oacommutant{\fun{\oarepn_{\xi}}{\oa{A}}}
=
\fldcmp 1$.
The vector state of every product vector is a pure state of $\oa{A}$.
\end{prop}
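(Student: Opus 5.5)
The plan is to reduce irreducibility to Lemma \ref{lem:tail}. Let $T\in\oacommutant{\fun{\oarepn_{\xi}}{\oa{A}}}$ and fix a finite $\Lambda$. Through the factorization $U_{\Lambda}$ of Proposition \ref{prop:factorization}, the local algebra acts as $\faadj{U_{\Lambda}}\fun{\oarepn_{\xi}}{\oa{A}_{\Lambda}}U_{\Lambda}=\oa{A}_{\Lambda}\otimes 1$ on $\rbk{\bigotimes_{p\in\Lambda}\fldcmp^{2}}\otimes\sphilb{H}_{\fnrestr{\xi}{\Lambda^{c}}}$. Here $\oa{A}_{\Lambda}$ is the full matrix algebra on the finite-dimensional first factor. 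The commutant of $\oa{A}_{\Lambda}\otimes 1$ is $1\otimes\opspbddlin{\sphilb{H}_{\fnrestr{\xi}{\Lambda^{c}}}}$. I would verify this with matrix units $e_{ij}\otimes 1$: writing $\tilde T=\faadj{U_{\Lambda}}TU_{\Lambda}$ in block form $(T_{ij})$ with respect to an orthonormal basis of the finite factor, commutation with all $e_{ij}\otimes1$ forces $T_{ij}=\delta_{ij}T_{\Lambda}$ for a single operator $T_{\Lambda}$. Hence $T=U_{\Lambda}(1\otimes T_{\Lambda})\faadj{U_{\Lambda}}$ for every finite $\Lambda$. Lemma \ref{lem:tail} then gives $T\in\fldcmp1$, which is the first assertion.

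For purity, I would take any unit product vector $\bigotimes_p\eta_p$ with $\eta\in\fun{C}{\xi}$; normalization is harmless because only finitely many entries have non-unit norm. The plan is to show it is cyclic, so that $\oarepn_{\xi}$ itself is the GNS representation of its vector state. Pure states are exactly those whose GNS representation is irreducible, and irreducibility is already established.

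For cyclicity, I would reduce to the case where $\eta$ has unit entries and $\eta\approx\xi$, rescaling finitely many slots. Then, for each flip vector $\xi^{F}$, I would choose $\Lambda$ containing $\fun{\supp}{F}$ such that the tail overlap $c_\Lambda=\prod_{p\notin\Lambda}\bkt{\xi_p}{\eta_p}$ is nonzero; Lemma \ref{lem:products} ensures this for large $\Lambda$. A local operator $A\in\oa{A}_{\Lambda}$, built from rank-one matrices mapping $\eta_p$ to the required basis vectors, together with the projection onto the tail vector, produces $\xi^{F}$ up to an error. The catch is that the tail is infinite, so that projection is not local. I would circumvent this by \emph{not} proving cyclicity directly, and instead argue that the cyclic subspace $\sphilb{K}$ generated by the vector is invariant under $\fun{\oarepn_{\xi}}{\oa{A}}$. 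Its orthogonal projection therefore lies in the commutant, which is $\fldcmp1$. Since $\sphilb{K}\neq0$, this gives $\sphilb{K}=\sphilb{H}_{\xi}$.

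This argument makes every nonzero vector cyclic for an irreducible representation, so the vector state is pure by the GNS uniqueness theorem. The main obstacle is thus only the commutant computation $(\oa{A}_{\Lambda}\otimes1)'=1\otimes\opspbddlin{\cdot}$ in the infinite-dimensional tail. I expect the matrix-unit block argument to handle it cleanly.
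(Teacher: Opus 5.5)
Your proposal is correct. The irreducibility half matches the paper's proof. You factorize through $U_\Lambda$ (Proposition~\ref{prop:factorization}), identify the commutant of the full matrix block $\mathcal{A}_\Lambda\otimes 1$ with $1\otimes\mathbb{B}(\text{tail})$, and conclude with Lemma~\ref{lem:tail}. Your matrix-unit block computation is an equivalent replacement for the paper's computation with the maps $V_v\colon\zeta\mapsto v\otimes\zeta$. Because the first factor is finite dimensional, the step you flagged as the main obstacle is routine.

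The purity half takes a different route. The paper proves cyclicity of $\Psi_\eta$ concretely: local elements applied to $\Psi_\eta$ produce all finite modifications of $\eta$, and these are total by the argument of Proposition~\ref{prop:flip-basis}. It then reproves by hand the relevant direction of the pure-state criterion. A state $\omega_1\le\lambda^{-1}\omega$ induces a bounded positive operator that lies in the commutant, hence is a scalar, hence $\omega_1=\omega$.

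You instead get cyclicity abstractly. The projection onto $\overline{\pi_\xi(\mathcal{A})\Psi}$ commutes with the $*$-algebra $\pi_\xi(\mathcal{A})$, so it is $0$ or $1$. It is not $0$ because $\pi_\xi(1)=1$ and $\Psi\neq 0$. You then cite the standard theorem that a state is pure if and only if its GNS representation is irreducible. Your route is shorter and proves more: every vector state of $\pi_\xi$ is pure, not just product-vector states. The paper's route is self-contained, and it produces the explicit cyclicity statement that is reused later, for the GNS triple $(\mathcal{H}_\xi,\pi_\xi,\Psi_\xi)$ and in Proposition~\ref{prop:gns-faithful}.

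The obstruction you hit in the direct approach goes away if you build the flip basis on $\eta$ rather than on $\xi$. Once $\eta$ has unit entries and $\eta\approx\xi$, it is a valid reference sequence for the same space, so Proposition~\ref{prop:flip-basis} applies to it. Each basis vector $\eta^F$ is then $\pi_\xi(A)\Psi_\eta$ for a local tensor product $A$ of rank-one matrices, and no tail projection is needed.
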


\begin{proof}
We first prove the commutant identity used below:
for Hilbert spaces $K_{1}, K_{2}$,
it holds that
$\oacommutant{\rbk{\opspbddlin{K_{1}} \otimes 1}}
=
1 \otimes \opspbddlin{K_{2}}$ in $\opspbddlin{K_{1} \otimes K_{2}}$.
Indeed,
for a unit vector $u
\in
K_{1}$ let $V_{v} \colon K_{2}
\to
K_{1} \otimes K_{2}$,
$\zeta
\mapsto
v \otimes \zeta$.
If $T$ commutes with $\opspbddlin{K_{1}} \otimes 1$ then for all $v, w
\in
K_{1}$,
the identities $\faadj{V_{v}} \rbk{\ketbra{w}{u} \otimes 1}
=
\bkt{v}{w} \faadj{V_{u}}$ and the commutation with $\ketbra{w}{u} \otimes 1$
give
$$\faadj{V_{v}} T V_{w}
=
\faadj{V_{v}} T \rbk{\ketbra{w}{u} \otimes 1} V_{u}
=
\faadj{V_{v}} \rbk{\ketbra{w}{u} \otimes 1} T V_{u}
=
\bkt{v}{w} \faadj{V_{u}} T V_{u}.$$
It follows that all matrix blocks of $T$ are proportional to $S
=
\faadj{V_{u}} T V_{u}$ with the scalar $\bkt{v}{w}$,
which says exactly $T
=
1 \otimes S$.
The reverse inclusion is clear.

Now let $T
\in
\oacommutant{\fun{\oarepn_{\xi}}{\oa{A}}}$ and let $\Lambda$ be finite.
The image $\fun{\oarepn_{\xi}}{\oa{A}_{\Lambda}}
=
U_{\Lambda} \rbk{\opspbddlin{\bigotimes_{p \in \Lambda} \fldcmp^{2}} \otimes 1} \faadj{U_{\Lambda}}$ is a full matrix block.
The commutant lemma gives $\faadj{U_{\Lambda}} T U_{\Lambda}
=
1 \otimes T_{\Lambda}$ for some bounded $T_{\Lambda}$ on the tail space.
This holds for every $\Lambda$.
It follows that Lemma \ref{lem:tail} gives $T
\in
\fldcmp 1$.

It remains to prove purity of the product-vector state.
Let $\oastate[\psi]
=
\bkt{\oagnsvector[\Psi_{\eta}]}{\fun{\oarepn_{\xi}}{\cdot} \oagnsvector[\Psi_{\eta}]}$
with a unit product GNS vector $\oagnsvector[\Psi_{\eta}]$,
and let $\oastate[\psi]
=
\lambda \oastate[\psi_{1}] + \rbk{1 - \lambda} \oastate[\psi_{2}]$ with states $\oastate[\psi_{i}]$ and $0 < \lambda < 1$.
Then $\oastate[\psi_{1}]
\leq
\lambda^{-1} \oastate[\psi]$.
Define the sesquilinear form
$\opform{q}_{\oastate[\psi_{1}],\eta}$ on
$\fun{\oarepn_{\xi}}{\oa{A}} \oagnsvector[\Psi_{\eta}]$ by
$$\fun{\opform{q}_{\oastate[\psi_{1}],\eta}}{
\fun{\oarepn_{\xi}}{A} \oagnsvector[\Psi_{\eta}],
\fun{\oarepn_{\xi}}{B} \oagnsvector[\Psi_{\eta}]}
=
\fun{\oastate[\psi_{1}]}{\faadj{A} B}.$$
The domination $\oastate[\psi_{1}]
\leq
\lambda^{-1} \oastate[\psi]$ gives
$$0
\leq
\fun{\opform{q}_{\oastate[\psi_{1}],\eta}}{
\fun{\oarepn_{\xi}}{A} \oagnsvector[\Psi_{\eta}],
\fun{\oarepn_{\xi}}{A} \oagnsvector[\Psi_{\eta}]}
\leq
\lambda^{-1}
\norm{\fun{\oarepn_{\xi}}{A} \oagnsvector[\Psi_{\eta}]}^{2}.$$
This estimate and the Cauchy--Schwarz inequality for positive forms show that
$\opform{q}_{\oastate[\psi_{1}],\eta}$ is well defined and bounded.
Its domain is dense because applying local elements to the cyclic vector
$\oagnsvector[\Psi_{\eta}]$ produces all finite modifications of
$\oagnsvector[\Psi_{\eta}]$,
whose span is dense by the argument of Proposition \ref{prop:flip-basis}.
The representation theorem for bounded positive forms gives a unique operator $T$ with
$$0
\leq
T
\leq
\lambda^{-1},
\quad
\fun{\oastate[\psi_{1}]}{\faadj{A} B}
=
\bkt{\fun{\oarepn_{\xi}}{A} \oagnsvector[\Psi_{\eta}]}{
T \fun{\oarepn_{\xi}}{B} \oagnsvector[\Psi_{\eta}]}
\quad
\rbk{A, B
\in
\oa{A}}.$$
For $A, B, C
\in
\oa{A}$,
the defining identity gives
$$\begin{aligned}
&\bkt{\fun{\oarepn_{\xi}}{A} \oagnsvector[\Psi_{\eta}]}{
T \fun{\oarepn_{\xi}}{C B} \oagnsvector[\Psi_{\eta}]}
=
\fun{\oastate[\psi_{1}]}{\faadj{A} C B}
\\ 
&=
\bkt{\fun{\oarepn_{\xi}}{\faadj{C} A} \oagnsvector[\Psi_{\eta}]}{
T \fun{\oarepn_{\xi}}{B} \oagnsvector[\Psi_{\eta}]}
=
\bkt{\fun{\oarepn_{\xi}}{A} \oagnsvector[\Psi_{\eta}]}{
\fun{\oarepn_{\xi}}{C} T
\fun{\oarepn_{\xi}}{B} \oagnsvector[\Psi_{\eta}]}.
\end{aligned}$$
The density of
$\fun{\oarepn_{\xi}}{\oa{A}} \oagnsvector[\Psi_{\eta}]$
implies that $T$ commutes with every
$\fun{\oarepn_{\xi}}{C}$.
Irreducibility therefore gives
$T
=
t$ for some $t
\in
\closedinterval{0}{\lambda^{-1}}$.
Normalization determines this scalar:
$$1
=
\fun{\oastate[\psi_{1}]}{1}
=
\bkt{\oagnsvector[\Psi_{\eta}]}{
T \oagnsvector[\Psi_{\eta}]}
=
t.$$
It follows that
$T
=
1$,
which gives $\oastate[\psi_{1}]
=
\oastate[\psi]$.
The original convex-decomposition identity then gives
$\oastate[\psi_{2}]
=
\oastate[\psi]$.
\end{proof}

\begin{prop}[equivalence and disjointness of product representations]\label{prop:disjointness}
Let $\xi, \eta$ be adapted sequences with unit entries in $\fldcmp^{2}$.
\begin{enumerate}
\item
If $\xi
\approx
_{w} \eta$,
the representations $\oarepn_{\xi}$ and $\oarepn_{\eta}$ are unitarily equivalent.

\item
If $\xi \not\approx_{w} \eta$,
the representations are disjoint,
i.e.,
every bounded operator $V \colon \sphilb{H}_{\xi}
\to
\sphilb{H}_{\eta}$ with $V \fun{\oarepn_{\xi}}{A}
=
\fun{\oarepn_{\eta}}{A} V$ for all $A
\in
\oa{A}$ vanishes.
\end{enumerate}
\end{prop}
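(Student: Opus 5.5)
For part (1) I would reduce to the case of strict equivalence by a slotwise phase change. Given $\xi \approx_{w} \eta$, I choose unimodular scalars $c_{p}$ such that $\bkt{\xi_{p}}{c_{p}\eta_{p}} = \abs{\bkt{\xi_{p}}{\eta_{p}}} \geq 0$. Then
$$\abs{1 - \bkt{\xi_{p}}{c_{p}\eta_{p}}} = 1 - \abs{\bkt{\xi_{p}}{\eta_{p}}},$$
so the sequence $c\eta = \seq{c_{p}\eta_{p}}{p \in \semigrposint}$ satisfies $c\eta \approx \xi$. Hence $c\eta \in \fun{C}{\xi}$, and $\sphilb{H}_{c\eta}$ and $\sphilb{H}_{\xi}$ coincide as completions over the same class.

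Next I need a unitary $W \colon \sphilb{H}_{\eta} \to \sphilb{H}_{c\eta}$ intertwining $\oarepn_{\eta}$ and $\oarepn_{c\eta}$. I would define it on product vectors by $\bigotimes \zeta_{p} \mapsto \bigotimes c_{p}\zeta_{p}$, with $\zeta \in \fun{C}{\eta}$. This map preserves all slotwise inner products, so it preserves the product form of Lemma \ref{lem:itps-form}. It maps $\fun{C}{\eta}$ bijectively onto $\fun{C}{c\eta}$, and therefore extends to a unitary. By \eqref{eq:product-action}, local operators act slotwise and commute with scalar multiplication in each slot. This gives $W\fun{\oarepn_{\eta}}{A} = \fun{\oarepn_{c\eta}}{A} W$ on local $A$, and the identity extends to $\oa{A}$ by continuity. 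This is presumably the content of Lemma \ref{lem:phase-unitary}, which I would cite if it is available.

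For part (2) I would follow the strategy of the intertwiner argument in Proposition \ref{prop:fiber-disjoint}: find an asymptotic observable whose strong limits differ in the two representations. Both representations are irreducible by Proposition \ref{prop:irreducible}, so Schur's lemma shows that $\oarepn_{\xi}$ and $\oarepn_{\eta}$ are either unitarily equivalent or disjoint. It therefore suffices to rule out a unitary $V$ with $V\oarepn_{\xi}(\cdot)V^{*} = \oarepn_{\eta}(\cdot)$.

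Suppose such a $V$ exists. The vector state $\omega$ of $V\xi^{\emptyset}$ in $\oarepn_{\eta}$ equals the product state $\oastate[\psi_{\xi}]$. Let $P_{p}$ denote the one-site projection onto $\xi_{p}$, so that $\oastate[\psi_{\xi}](P_{p}) = 1$ for every $p$. Since $\omega$ is a vector state in $\sphilb{H}_{\eta}$, it is approximated in norm by vector states of vectors in $\sphilb{D}_{\eta}$. Such a vector is a finite combination of flip vectors supported on some finite $\Lambda$.

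For $p \notin \Lambda$, the expectation of $P_{p}$ in such a vector is $\abs{\bkt{\xi_{p}}{\eta_{p}}}^{2}$. For a finite set $G$ disjoint from $\Lambda$, the product $\prod_{p \in G} P_{p}$ has expectation $\prod_{p \in G}\abs{\bkt{\xi_{p}}{\eta_{p}}}^{2}$ in the approximating vector. Non-weak-equivalence together with Lemma \ref{lem:divergence-zero} makes this product tend to $0$ as $G$ grows. On the other hand, $\omega(\prod_{p \in G} P_{p}) = 1$ for every $G$. Choosing the approximation error below $1/2$ and then $G$ large gives a contradiction, which proves $V = 0$.

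The main obstacle is making this approximation uniform: the error bound must not depend on $G$. That holds because $\norm{\prod_{p \in G} P_{p}} \le 1$, so the vector-state approximation error, of order the norm distance, is independent of $G$. The only care needed is with the cross terms between flip vectors on $\Lambda$. They factor as a $\Lambda$-part, which is bounded, times the tail product over $G$. Their sum is therefore bounded by the squared norm of the approximating vector times $\prod_{p \in G}\abs{\bkt{\xi_{p}}{\eta_{p}}}^{2}$, which also tends to $0$.
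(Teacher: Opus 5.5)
Your part (1) is the paper's argument: align the phases slotwise so that the rotated sequence is strongly equivalent to $\xi$, then use the phase-changing unitary on product vectors.

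Your part (2) is correct but takes a different route. The paper does not invoke irreducibility up front. It takes an arbitrary bounded intertwiner $V$ and the local projections $T_\Omega=\bigotimes_{p\le\Omega}P_{\eta_p}$ onto the \emph{target} reference vectors. It shows $\pi_\eta(T_\Omega)\to P_{\eta^{\emptyset}}$ and $\pi_\xi(T_\Omega)\to 0$ strongly, and deduces $P_{\eta^{\emptyset}}V=0$. It then repeats this with every product vector $\zeta$ of the class $C(\eta)$ as reference, which kills all matrix elements $\langle\zeta,V\Psi\rangle$.

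You instead use Proposition~\ref{prop:irreducible} and Schur's lemma to reduce to a unitary $V$. For a nonzero intertwiner between irreducible representations, $V^{*}V$ is a positive scalar, so $V$ is a multiple of a unitary. You then use projections onto the \emph{source} reference vectors. The vector state of $V\xi^{\emptyset}$ is $\psi_\xi$, which assigns the value $1$ to every $\bigotimes_{p\in G}P_{\xi_p}$. For $\Psi\in\mathcal{D}_\eta$ supported in a finite $\Lambda$ and $G\cap\Lambda=\emptyset$, the expectation equals $\|\Psi\|^{2}\prod_{p\in G}|\langle\xi_p,\eta_p\rangle|^{2}$ exactly. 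The cross terms vanish outright, since the operator acts trivially on the slots in $\Lambda$ and the flip vectors there are orthonormal. This product tends to $0$ by Lemma~\ref{lem:divergence-zero}, and, as you note, the norm bound $1$ on the projections makes the approximation error uniform in $G$.

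The trade-off is as follows. Your route replaces the strong-limit computation in $\mathcal{H}_\eta$ and the variation of the reference vector over the whole class with a single state comparison, at the price of using irreducibility essentially. The paper's route works directly with arbitrary bounded intertwiners and uses irreducibility only in its closing remark. Your estimate amounts to $\pi_\eta\bigl(\bigotimes_{p\le N}P_{\xi_p}\bigr)\to 0$ strongly, which is the paper's limit $\pi_\xi(T_\Omega)\to 0$ with the roles of $\xi$ and $\eta$ exchanged.
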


\begin{proof}

(1)
Choose phases $\phi_{p}$ with $\fnexp{\imunit \phi_{p}} \bkt{\xi_{p}}{\eta_{p}}
=
\abs{\bkt{\xi_{p}}{\eta_{p}}}$ and set $\eta'_{p}
=
\fnexp{\imunit \phi_{p}} \eta_{p}$,
so that $$\bkt{\eta'_{p}}{\xi_{p}}
=
\cmpconj{\fnexp{\imunit \phi_{p}} \bkt{\xi_{p}}{\eta_{p}}}
=
\abs{\bkt{\eta_{p}}{\xi_{p}}}
\geq
0.$$
Then it holds that
$\sum_{p \in \semigrposint} \abs{1 - \bkt{\eta'_{p}}{\xi_{p}}}
=
\sum_{p \in \semigrposint} \rbk{1 - \abs{\bkt{\eta_{p}}{\xi_{p}}}} < \infty$,
and this implies $\eta'
\in
\fun{C}{\xi}$.
The map sending the product vector of $\zeta
\in
\fun{C}{\eta}$ to the product vector of $\rbk{\fnexp{\imunit \phi_{p}} \zeta_{p}}
\in
\fun{C}{\eta'}
=
\fun{C}{\xi}$ preserves all inner products,
since the phases cancel between the two arguments,
maps a total set onto a total set.
It therefore extends to a unitary $W \colon \sphilb{H}_{\eta}
\to
\sphilb{H}_{\xi}$.
By \eqref{eq:product-action} it satisfies $W \fun{\oarepn_{\eta}}{A} \faadj{W}
=
\fun{\oarepn_{\xi}}{A}$ first for elementary tensors $A$,
both sides multiplying the slot entries by $a_{p}$ and the phases commuting past the $a_{p}$ as scalars attached to the vectors,
then for all $A$ by linearity and continuity.

(2)
Since $\xi \not\approx_{w} \eta$,
$\sum_{p \in \semigrposint} \rbk{1 - \abs{\bkt{\eta_{p}}{\xi_{p}}}}
=
\infty$.
For $\Omega
\in
\semigrposint$ let
$$T_{\Omega}
=
\bigotimes_{p
=
1}^{\Omega} P_{\eta_{p}}
\in
\oa{A}_{\Omega},$$
the product of the rank-one projections onto the $\eta_{p}$.
We first consider its action on $\sphilb{H}_{\eta}$.
For a flip vector $\eta^{F}$,
the projection $P_{\eta_{p}}$ fixes the unflipped entries and annihilates the flipped ones.
The action $\fun{\oarepn_{\eta}}{T_{\Omega}} \eta^{F}$ equals $\eta^{F}$ if $\fun{\supp}{F} \cap \intint{1..\Omega}
=
\emptyset$ and $0$ otherwise.
For every flip vector,
this action converges in norm to that of
$\ketbra{\eta^{\emptyset}}{\eta^{\emptyset}}$.
Since the family is uniformly bounded by $1$ and the flip vectors form an orthonormal basis,
the convergence extends strongly to the whole space:
$$\slim_{\Omega \to \infty}
\fun{\oarepn_{\eta}}{T_{\Omega}}
=
P_{\eta^{\emptyset}}.$$

We next consider the action on $\sphilb{H}_{\xi}$.
For a flip vector $\xi^{F}$,
$$\begin{aligned}
&\norm{\fun{\oarepn_{\xi}}{T_{\Omega}} \xi^{F}}
=
\prod_{p
\leq
\Omega, p
\notin
\fun{\supp}{F}} \abs{\bkt{\eta_{p}}{\xi_{p}}} \cdot \prod_{p
\in
\fun{\supp}{F}, p
\leq
\Omega} \norm{P_{\eta_{p}} \xi^{F}_{p}}
\\ 
&\leq
\prod_{p
\leq
\Omega, p
\notin
\fun{\supp}{F}} \abs{\bkt{\eta_{p}}{\xi_{p}}}
\to
0
\quad \rbk{\Omega \to \infty},
\end{aligned}$$
by Lemma \ref{lem:divergence-zero},
since removing the finitely many flipped slots does not affect the divergence $\sum_{p
\in
\semigrposint} \rbk{1 - \abs{\bkt{\eta_{p}}{\xi_{p}}}}
=
\infty$.
By uniform boundedness,
$\fun{\oarepn_{\xi}}{T_{\Omega}}
\to
0$ strongly as $\Omega \to \infty$.
Now intertwining gives $V \fun{\oarepn_{\xi}}{T_{\Omega}}
=
\fun{\oarepn_{\eta}}{T_{\Omega}} V$,
and taking strong limits on both sides,
$0
=
P_{\eta^{\emptyset}} V$,
that is,
$\bkt{\eta^{\emptyset}}{V \Psi}
=
0$ for all $\Psi$.
Repeating the argument with the reference sequence of $\sphilb{H}_{\eta}$ replaced by an arbitrary $\zeta
\in
\fun{C}{\eta}$,
which is again not weakly equivalent to $\xi$ by Lemma \ref{lem:transitivity},
gives $\bkt{\zeta}{V \Psi}
=
0$ for every product vector $\zeta$ of the class.
Product vectors are total.
It follows that $V
=
0$.
Disjointness in the standard sense follows since a nonzero intertwiner between irreducible representations would even be a multiple of a unitary.
\end{proof}

The theorem of von Neumann quoted as (I) in \cite[Section 2]{ThirringWehrl001} describes the commutant of the algebra generated by all quasi-spins in the complete product space over a weak equivalence class.

\begin{defn}[phase families]\label{def:phase-families}
A phase family is a sequence $\phi
=
\seq{\phi_{p}}{p
\in
\semigrposint}$ of real numbers.
It acts on an adapted sequence $\eta$ with unit entries by
$$\phi \cdot \eta
=
\seq{\fnexp{\imunit \phi_{p}} \eta_{p}}{p
\in
\semigrposint}.$$
For an adapted sequence $\xi$ with unit entries,
the equivalence classes of $\xi$ and its phase shift are denoted by
$\fun{C}{\xi}$ and $\fun{C}{\phi \cdot \xi}$,
respectively,
in the notation of
Definition \ref{def:c-sequences}.
\end{defn}

\begin{lem}[phase action on product sectors]\label{lem:phase-unitary}
Let $\phi$ be a phase family in the sense of
Definition \ref{def:phase-families},
and let $\mathcal{W}$ be a weak equivalence class of adapted sequences with unit entries.
For every $\xi
\in
\mathcal{W}$,
the phase shift satisfies
$$\phi \cdot \xi
\approx
_{w} \xi,
\quad
\fun{C_{w}}{\phi \cdot \xi}
=
\fun{C_{w}}{\xi}
=
\mathcal{W}.$$
Its equivalence class is
$$\fun{C}{\phi \cdot \xi}
=
\set{\phi \cdot \eta}{\eta
\in
\fun{C}{\xi}}.$$
The two equivalence classes coincide precisely under the condition
$$\fun{C}{\phi \cdot \xi}
=
\fun{C}{\xi}
\iff
\sum_{p
\in
\semigrposint}
\abs{1 - \fnexp{\imunit \phi_{p}}}
<
\infty.$$
The map on product vectors given by
$$\bigotimes_{p
\in
\semigrposint} \eta_{p}
\mapsto
\bigotimes_{p
\in
\semigrposint} \fnexp{\imunit \phi_{p}} \eta_{p}$$
extends uniquely to a unitary
$U_{\phi}$ on $\widehat{\sphilb{H}}_{\mathcal{W}}$.
It maps the summand $\sphilb{H}_{\fun{C}{\xi}}$ onto
$\sphilb{H}_{\fun{C}{\phi \cdot \xi}}$.
Its restriction to this summand satisfies
$$\fnrestr{U_{\phi}}{\sphilb{H}_{\fun{C}{\xi}}}
\fun{\oarepn_{\fun{C}{\xi}}}{A}
=
\fun{\oarepn_{\fun{C}{\phi \cdot \xi}}}{A}
\fnrestr{U_{\phi}}{\sphilb{H}_{\fun{C}{\xi}}}
\quad
\rbk{A
\in
\oa{A}}.$$
\end{lem}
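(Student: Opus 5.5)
The plan is to verify the four assertions of Lemma \ref{lem:phase-unitary} in order. Each reduces to slotwise identities combined with the infinite-product facts of Lemmas \ref{lem:products} and \ref{lem:unordered}.

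\emph{Weak equivalence and the class of the shifted sequence.} Since every factor satisfies
$$\abs{\bkt{\xi_{p}}{\fnexp{\imunit \phi_{p}} \xi_{p}}} = 1,$$
the phase-free defect vanishes slotwise. Hence $\phi \cdot \xi \approx_{w} \xi$, and Lemma \ref{lem:transitivity} gives $\fun{C_{w}}{\phi \cdot \xi} = \mathcal{W}$.

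For the description of $\fun{C}{\phi \cdot \xi}$, I will use that the slotwise phase multiplication preserves the relevant inner products:
$$\bkt{\fnexp{\imunit \phi_{p}} \eta_{p}}{\fnexp{\imunit \phi_{p}} \xi_{p}} = \bkt{\eta_{p}}{\xi_{p}}.$$
Therefore $\eta \approx \xi$ if and only if $\phi \cdot \eta \approx \phi \cdot \xi$. Because $\phi \cdot$ is invertible, with inverse $(-\phi) \cdot$, the class identity follows.

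\emph{Criterion for equality of classes.} The two classes coincide if and only if $\phi \cdot \xi \approx \xi$. This means
$$\sum_{p} \abs{1 - \bkt{\xi_{p}}{\fnexp{\imunit \phi_{p}} \xi_{p}}} = \sum_{p} \abs{1 - \fnexp{\imunit \phi_{p}}} < \infty,$$
using unit entries. No further argument is needed here.

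\emph{Construction of $U_{\phi}$.} On a single summand $\sphilb{H}_{\fun{C}{\eta}}$ with $\eta \in \mathcal{W}$, I define the map on the free vector space over $\fun{C}{\eta}$ by $\zeta \mapsto \phi \cdot \zeta$. By the invariance of inner products noted above, the form $\opform{q}$ of Lemma \ref{lem:itps-form} is preserved. The map therefore descends through the kernel quotient and extends to an isometry $\sphilb{H}_{\fun{C}{\eta}} \to \sphilb{H}_{\fun{C}{\phi \cdot \eta}}$.

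This isometry is onto because its range contains all product vectors of $\fun{C}{\phi \cdot \eta}$. Every element of that class has the form $\phi \cdot \zeta$ with $\zeta \in \fun{C}{\eta}$, and product vectors are total by Proposition \ref{prop:flip-basis}.

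The next step is to assemble the summandwise maps. The class map $C \mapsto \phi \cdot C$ is a bijection of $\mathcal{W}/{\approx}$, by the first part applied to each class and by invertibility. Taking the orthogonal direct sum of the summandwise unitaries then gives a unitary $U_{\phi}$ on $\widehat{\sphilb{H}}_{\mathcal{W}}$. Uniqueness holds because product vectors are total in each summand.

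One consistency point needs care, and I expect it to be the main (mild) obstacle. When $\fun{C}{\phi \cdot \xi} = \fun{C}{\xi}$, a product vector $\zeta$ of the class may also be written via other representatives. Well-definedness nevertheless follows from the form-preservation argument, which shows that kernel elements map to kernel elements. So no choice of representative intervenes.

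\emph{Intertwining relation.} It suffices to check it on elementary local tensors $A = \bigotimes_{p \in \Lambda} a_{p}$ applied to product vectors. By \eqref{eq:product-action}, both sides send $\bigotimes_{p} \zeta_{p}$ to the product vector with entries $\fnexp{\imunit \phi_{p}} a_{p} \zeta_{p}$ for $p \in \Lambda$ and $\fnexp{\imunit \phi_{p}} \zeta_{p}$ otherwise, since the scalars $\fnexp{\imunit \phi_{p}}$ commute with $a_{p}$. The identity then extends by linearity and totality to the summand, and by density and continuity of both representations to all of $\oa{A}$, exactly as in the proof of Proposition \ref{prop:disjointness}(1).
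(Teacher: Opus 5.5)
Your proposal is correct and follows essentially the same route as the paper. The common steps are: slotwise invariance of inner products under the phase action; the identity $\bkt{\xi_{p}}{\fnexp{\imunit \phi_{p}} \xi_{p}} = \fnexp{\imunit \phi_{p}}$ for the equality criterion; summandwise isometries onto $\sphilb{H}_{\fun{C}{\phi \cdot \xi}}$, with inverse induced by $-\phi$, assembled into an orthogonal direct sum; and the intertwining checked on elementary tensors acting on product vectors. Your remarks on well-definedness via the kernel of $\opform{q}$ and on bijectivity of $C \mapsto \phi \cdot C$ only spell out what the paper leaves implicit, and Lemma \ref{lem:unordered} is not actually needed.
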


\begin{proof}
For adapted sequences $\eta$ and $\zeta$ with unit entries,
the slotwise inner products are unchanged:
$$\bkt{\fnexp{\imunit \phi_{p}} \eta_{p}}{
\fnexp{\imunit \phi_{p}} \zeta_{p}}
=
\bkt{\eta_{p}}{\zeta_{p}}
\quad
\rbk{p
\in
\semigrposint}.$$
The two defect sums in Definition \ref{def:c-sequences} are unchanged.
This proves that the phase action preserves equivalence and weak equivalence.
The identity
$$\abs{\bkt{\xi_{p}}{\fnexp{\imunit \phi_{p}} \xi_{p}}}
=
1
\quad
\rbk{p
\in
\semigrposint}$$
gives $\phi \cdot \xi
\approx
_{w} \xi$.
Lemma \ref{lem:transitivity} identifies their weak equivalence classes and gives
$\fun{C_{w}}{\phi \cdot \xi}
=
\fun{C_{w}}{\xi}
=
\mathcal{W}$.
Applying the phase actions $\phi$ and $- \phi$ to equivalent sequences gives
$$\fun{C}{\phi \cdot \xi}
=
\set{\phi \cdot \eta}{\eta
\in
\fun{C}{\xi}}.$$
The corresponding inner product without its modulus is
$$\bkt{\xi_{p}}{\fnexp{\imunit \phi_{p}} \xi_{p}}
=
\fnexp{\imunit \phi_{p}}
\quad
\rbk{p
\in
\semigrposint}.$$
Definition \ref{def:c-sequences} and Lemma \ref{lem:transitivity} now give the stated criterion for
$\fun{C}{\phi \cdot \xi}
=
\fun{C}{\xi}$.

The product formula of Lemma \ref{lem:itps-form} shows that the map in the statement preserves inner products of product vectors.
These vectors are total in every summand by Definition \ref{def:itps}.
The map extends to an isometry from
$\sphilb{H}_{\fun{C}{\xi}}$ onto
$\sphilb{H}_{\fun{C}{\phi \cdot \xi}}$,
and its inverse is induced by the phase family $- \phi$.
Taking the orthogonal direct sum over the equivalence classes in
$\mathcal{W}$ gives the asserted unitary $U_{\phi}$.
For a local elementary tensor $A$ and a product vector associated with $\eta$,
the two compositions
$\fnrestr{U_{\phi}}{\sphilb{H}_{\fun{C}{\xi}}}
\fun{\oarepn_{\fun{C}{\xi}}}{A}$
and
$\fun{\oarepn_{\fun{C}{\phi \cdot \xi}}}{A}
\fnrestr{U_{\phi}}{\sphilb{H}_{\fun{C}{\xi}}}$
produce the sequence with entries
$\fnexp{\imunit \phi_{p}} a_{p} \eta_{p}$ on the support of $A$ and
$\fnexp{\imunit \phi_{p}} \eta_{p}$ elsewhere.
Totality of the product vectors and continuity in $A$ give the intertwining identity.
\end{proof}

\begin{lem}[phase transitivity on product sectors]\label{lem:phase-transitivity}
Let $\mathcal{W}$ be a weak equivalence class of adapted sequences with unit entries,
and let $C, C'$
be equivalence classes contained in $\mathcal{W}$.
Choose reference sequences $\xi_{C'}
\in
C'$ and $\zeta_{C}
\in
C$.
There is a phase family $\phi_{C,C'}$ such that
$$\phi_{C,C'} \cdot \xi_{C'}
\approx
\zeta_{C},
\quad
\fun{C}{\phi_{C,C'} \cdot \xi_{C'}}
=
\fun{C}{\zeta_{C}}
=
C.$$
The restriction
$$V_{C C'}
=
\fnrestr{U_{\phi_{C,C'}}}{\sphilb{H}_{C'}}
\colon
\sphilb{H}_{C'}
\to
\sphilb{H}_{C}$$
is unitary and satisfies
$$V_{C C'} \fun{\oarepn_{C'}}{A}
=
\fun{\oarepn_{C}}{A} V_{C C'}
\quad
\rbk{A
\in
\oa{A}}.$$
\end{lem}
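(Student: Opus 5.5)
The plan is to build the phase family slot by slot from the weak equivalence of the two reference sequences, and then to read off the unitary and intertwining properties from Lemma \ref{lem:phase-unitary}.

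\emph{Construction of the phases.} Since $C$ and $C'$ both lie in the single weak equivalence class $\mathcal{W}$, Lemma \ref{lem:transitivity} gives $\xi_{C'} \approx_{w} \zeta_{C}$, that is,
$$\sum_{p \in \semigrposint} \rbk{1 - \abs{\bkt{\zeta_{C,p}}{\xi_{C',p}}}} < \infty.$$
For each $p$ choose $\phi_{p} \in \fldreal$ with
$$\fnexp{\imunit \phi_{p}} \bkt{\zeta_{C,p}}{\xi_{C',p}} = \abs{\bkt{\zeta_{C,p}}{\xi_{C',p}}},$$
taking $\phi_{p} = 0$ when the overlap vanishes. This is exactly the phase choice used in the proof of Proposition \ref{prop:disjointness}(1). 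With this choice,
$$\bkt{\zeta_{C,p}}{\fnexp{\imunit \phi_{p}} \xi_{C',p}} = \abs{\bkt{\zeta_{C,p}}{\xi_{C',p}}} \geq 0,$$
so the absolute defect sum for $\phi_{C,C'} \cdot \xi_{C'}$ against $\zeta_{C}$ coincides with the phase-free defect sum above. It is therefore finite, which gives $\phi_{C,C'} \cdot \xi_{C'} \approx \zeta_{C}$ and hence $\fun{C}{\phi_{C,C'} \cdot \xi_{C'}} = \fun{C}{\zeta_{C}} = C$. The adapted-sequence hypotheses cause no trouble here, since all entries are unit vectors.

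\emph{Unitarity and intertwining.} By Lemma \ref{lem:phase-unitary}, $U_{\phi_{C,C'}}$ is a unitary on $\widehat{\sphilb{H}}_{\mathcal{W}}$. It maps the summand $\sphilb{H}_{\fun{C}{\xi_{C'}}} = \sphilb{H}_{C'}$ onto $\sphilb{H}_{\fun{C}{\phi_{C,C'} \cdot \xi_{C'}}}$, which is $\sphilb{H}_{C}$ by the preceding step. The restriction $V_{CC'}$ is therefore a surjective isometry, hence unitary. The intertwining identity is the last display of Lemma \ref{lem:phase-unitary} applied with $\xi = \xi_{C'}$, after identifying $\oarepn_{\fun{C}{\phi \cdot \xi_{C'}}}$ with $\oarepn_{C}$.

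\emph{Main point of care.} The only delicate issue is that $\sphilb{H}_{C}$ and $\oarepn_{C}$ must not depend on which reference sequence of $C$ is chosen. This holds because Definition \ref{def:itps} builds $\sphilb{H}_{C}$ from the whole class $\fun{C}{\xi}$ rather than from $\xi$ itself. The product representation is likewise defined through $U_{\Lambda}$ and the class, and by \eqref{eq:product-action} it acts on product vectors in the same way regardless of the reference. I would note these two facts explicitly before invoking Lemma \ref{lem:phase-unitary}.
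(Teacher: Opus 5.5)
Your proposal is correct and follows essentially the same route as the paper: you choose slotwise phases that make each overlap nonnegative, so the finite phase-free defect sum becomes the absolute defect sum; this gives $\phi_{C,C'}\cdot\xi_{C'}\approx\zeta_{C}$, and Lemma \ref{lem:phase-unitary} then supplies unitarity and the intertwining identity. Your extra remark that $\sphilb{H}_{C}$ and $\oarepn_{C}$ depend only on the class, not on the chosen reference sequence, is a harmless addition; the paper leaves it implicit in Definition \ref{def:itps}.
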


\begin{proof}
The representatives $\xi_{C'}$ and $\zeta_{C}$ belong to the same weak equivalence class.
Definition \ref{def:c-sequences} gives
$\sum_{p
\in
\semigrposint}
\rbk{1 - \abs{\bkt{\xi_{C',p}}{\zeta_{C,p}}}}
<
\infty$.
For each $p
\in
\semigrposint$,
choose $\phi_{C,C',p}
\in
\fldreal$ such that
$$\bkt{\fnexp{\imunit \phi_{C,C',p}} \xi_{C',p}}{\zeta_{C,p}}
=
\abs{\bkt{\xi_{C',p}}{\zeta_{C,p}}}.$$
The equivalence criterion of Definition \ref{def:c-sequences} gives
$$\sum_{p
\in
\semigrposint}
\abs{1 -
\bkt{\fnexp{\imunit \phi_{C,C',p}} \xi_{C',p}}{\zeta_{C,p}}}
=
\sum_{p
\in
\semigrposint}
\rbk{1 - \abs{\bkt{\xi_{C',p}}{\zeta_{C,p}}}}
<
\infty.$$
This proves
$\phi_{C,C'} \cdot \xi_{C'}
\approx
\zeta_{C}$ and identifies its equivalence class with $C$.
Lemma \ref{lem:phase-unitary} supplies the unitary restriction
$V_{C C'}$ and its intertwining identity.
\end{proof}

\begin{thm}[von Neumann's commutant theorem]\label{thm:vn-commutant}
Let $\mathcal{W}$ be a single weak equivalence class of adapted sequences with unit entries in $\fldcmp^{2}$,
and let $\mathfrak{C}\rbk{\mathcal{W}}$ denote the set of equivalence classes contained in $\mathcal{W}$.
Let $\widehat{\sphilb{H}}_{\mathcal{W}}
=
\bigoplus_{C
\in
\mathfrak{C}\rbk{\mathcal{W}}} \sphilb{H}_{C}$,
and let $\oarepn
=
\bigoplus_{C
\in
\mathfrak{C}\rbk{\mathcal{W}}} \oarepn_{C}$ be the direct sum of the product representations.
Denote by $P_{C}$ the orthogonal projection onto the summand $\sphilb{H}_{C}$.
Let $\Phi
=
\fldreal^{\semigrposint}$ be the set of all phase families.
For $\phi
\in
\Phi$,
let $U_{\phi}$ be the unitary supplied by
Lemma \ref{lem:phase-unitary}.
Define the family of generators by
$$\mathcal{G}_{\mathcal{W}}
=
\set{P_{C}}{C
\in
\mathfrak{C}\rbk{\mathcal{W}}}
\cup
\set{U_{\phi}}{\phi
\in
\Phi}.$$
The generator family determines both commutants:
$$\oacommutant{\fun{\oarepn}{\oa{A}}}
=
\oadoublecommutant{\mathcal{G}_{\mathcal{W}}},
\quad
\oadoublecommutant{\fun{\oarepn}{\oa{A}}}
=
\oacommutant{\mathcal{G}_{\mathcal{W}}}.$$
A bounded operator belongs to the weak closure $\oadoublecommutant{\fun{\oarepn}{\oa{A}}}$ if and only if it commutes with every member of $\mathcal{G}_{\mathcal{W}}$,
that is,
with every class projection $P_{C}$ and every phase unitary $U_{\phi}$.
\end{thm}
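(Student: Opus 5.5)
The plan is to compute the commutant $\oacommutant{\fun{\oarepn}{\oa{A}}}$ directly and identify it with $\oadoublecommutant{\mathcal{G}_{\mathcal{W}}}$. The second identity then follows by taking commutants: $\oadoublecommutant{\fun{\oarepn}{\oa{A}}} = \oacommutant{\rbk{\oadoublecommutant{\mathcal{G}_{\mathcal{W}}}}} = \oacommutant{\mathcal{G}_{\mathcal{W}}}$. The inclusion $\mathcal{G}_{\mathcal{W}} \subset \oacommutant{\fun{\oarepn}{\oa{A}}}$ is immediate. Each $P_{C}$ commutes with $\oarepn$ because $\oarepn$ is a direct sum. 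Each $U_{\phi}$ commutes with $\oarepn$ by the intertwining identity of Lemma \ref{lem:phase-unitary}, applied summand by summand. Since the commutant is a von Neumann algebra, this already gives $\oadoublecommutant{\mathcal{G}_{\mathcal{W}}} \subset \oacommutant{\fun{\oarepn}{\oa{A}}}$.

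For the reverse inclusion, let $T$ commute with $\fun{\oarepn}{\oa{A}}$ and decompose it into blocks $T_{C C'} = P_{C} T P_{C'} \colon \sphilb{H}_{C'} \to \sphilb{H}_{C}$. Each block intertwines $\oarepn_{C'}$ and $\oarepn_{C}$. By Lemma \ref{lem:phase-transitivity}, $V_{C C'}$ is a unitary intertwiner, so $\faadj{V_{C C'}} T_{C C'}$ lies in the commutant of $\oarepn_{C'}$. Proposition \ref{prop:irreducible} makes this a scalar, giving $T_{C C'} = t_{C C'} V_{C C'}$. The task is then to show that every operator of this block form, bounded on $\widehat{\sphilb{H}}_{\mathcal{W}}$, lies in $\oadoublecommutant{\mathcal{G}_{\mathcal{W}}}$. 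Equivalently, anything commuting with all $P_{C}$ and all $U_{\phi}$ must commute with $T$.

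So take $S \in \oacommutant{\mathcal{G}_{\mathcal{W}}}$. Commuting with all $P_{C}$ makes $S$ block diagonal, $S = \bigoplus_{C} S_{C}$. Commuting with the $U_{\phi}$ gives two things. Phase families with $\sum_{p} \abs{1 - \fnexp{\imunit \phi_{p}}} < \infty$ preserve each class, and on $\sphilb{H}_{C}$ they act by the infinite product $\bigotimes_{p} \fnexp{\imunit \phi_{p}}$, which is a scalar $\prod_{p} \fnexp{\imunit \phi_{p}}$ on product vectors. Phase families with nonsummable defect permute the classes, and commuting with them gives $S_{C} \fnrestr{U_{\phi}}{\sphilb{H}_{C'}} = \fnrestr{U_{\phi}}{\sphilb{H}_{C'}} S_{C'}$ when $C = \fun{C}{\phi \cdot \xi_{C'}}$.

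The key step is to show that each $S_{C}$ is a scalar $s_{C}$ and that these scalars are compatible, so that $S$ commutes with every block $t_{C C'} V_{C C'}$. Compatibility is direct: choosing $\phi = \phi_{C,C'}$ in the previous relation gives $S_{C} V_{C C'} = V_{C C'} S_{C'}$. The scalar property is the main obstacle. The route I would try first uses the summable phase families acting inside a class. The localized phase $\phi_{p} = \theta$ at a single slot, zero elsewhere, is summable. Its $U_{\phi}$ acts on $\sphilb{H}_{C}$ as multiplication of slot $p$ by $\fnexp{\imunit \theta}$, which is a global scalar and gives no information. So within one class the phases alone do not force $S_{C}$ to be scalar, and I expect that this direct attempt fails.

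The fallback is to reverse the order of the argument. First identify $\oacommutant{\mathcal{G}_{\mathcal{W}}}$ as the set of block-diagonal operators $\bigoplus_{C} X_{C}$ with $X_{C} V_{C C'} = V_{C C'} X_{C'}$; this needs a careful check that different phase choices reaching the same class differ by summable phases, which act as scalars. Then show that $\oadoublecommutant{\fun{\oarepn}{\oa{A}}}$ is exactly this set. Using the unitaries $V_{C C'}$ to identify every summand with a fixed $\sphilb{H}_{C_{0}}$, the space becomes $\sphilb{H}_{C_{0}} \otimes \ell^{2}\rbk{\mathfrak{C}\rbk{\mathcal{W}}}$ with $\oarepn \cong \oarepn_{C_{0}} \otimes 1$. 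Irreducibility then gives $\oadoublecommutant{\fun{\oarepn}{\oa{A}}} = \opspbddlin{\sphilb{H}_{C_{0}}} \otimes 1$, and $\oacommutant{\fun{\oarepn}{\oa{A}}} = 1 \otimes \opspbddlin{\ell^{2}}$, using the tensor commutant identity from the proof of Proposition \ref{prop:irreducible}. The remaining work is to match $1 \otimes \opspbddlin{\ell^{2}}$ with $\oadoublecommutant{\mathcal{G}_{\mathcal{W}}}$. The $P_{C}$ give the diagonal matrix units. The class-permuting $U_{\phi}$, corrected by summable phases, give permutation-type unitaries. Matrix units $e_{C C'}$ should then be obtained from products such as $P_{C} U_{\phi} P_{C'}$. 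Checking that these products really are $V_{C C'}$, up to a phase, is the point I expect to need the most care.
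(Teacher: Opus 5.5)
Your reduction ($\mathcal{G}_{\mathcal{W}}\subset\pi(\mathcal{A})'$, hence $\mathcal{G}_{\mathcal{W}}''\subset\pi(\mathcal{A})'$) and the Schur step $T_{CC'}=t_{CC'}V_{CC'}$ are exactly what the paper does. The step you then call key is false, not merely hard.

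Take any non-scalar local $A$. Then $S=\pi(A)$ lies in $\mathcal{G}_{\mathcal{W}}'$; this is just $\mathcal{G}_{\mathcal{W}}\subset\pi(\mathcal{A})'$ read the other way. But its blocks $S_C=\pi_C(A)$ are not scalars, because each $\pi_C$ is isometric. So elements of $\mathcal{G}_{\mathcal{W}}'=\pi(\mathcal{A})''$ have unrestricted blocks on each $\mathcal{H}_C$, linked only through the $V_{CC'}$.

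Scalarity is also not needed. Your $S$ is block diagonal, and you already derived $S_CV_{CC'}=V_{CC'}S_{C'}$. Hence, for all $C,C'$,
\[
P_C\,ST\,P_{C'}=S_C\,t_{CC'}V_{CC'}=t_{CC'}V_{CC'}\,S_{C'}=P_C\,TS\,P_{C'} .
\]
A bounded operator is determined by its blocks, so $ST=TS$. This gives $\mathcal{G}_{\mathcal{W}}'\subset\pi(\mathcal{A})''$, equivalently $\pi(\mathcal{A})'\subset\mathcal{G}_{\mathcal{W}}''$. Your first route therefore closes with no fallback.

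The paper's proof avoids $\mathcal{G}_{\mathcal{W}}'$ altogether. Lemma \ref{lem:phase-transitivity} defines $V_{CC'}$ as the restriction of $U_{\phi_{C,C'}}$ to $\mathcal{H}_{C'}$, and $U_{\phi_{C,C'}}$ maps $\mathcal{H}_{C'}$ onto $\mathcal{H}_C$. So the block $T_{CC'}$, extended by zero, is literally $t_{CC'}P_CU_{\phi_{C,C'}}P_{C'}$. That is a product of generators, hence lies in $\mathcal{G}_{\mathcal{W}}''$. Then $T$ is the weak-operator limit of $\big(\sum_{C\in F}P_C\big)T\big(\sum_{C'\in F}P_{C'}\big)$ over finite $F\subset\mathfrak{C}(\mathcal{W})$. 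The point you expected to need the most care therefore holds by definition.

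Your tensor-product fallback is correct and amounts to the same argument. The ``up to a phase'' discrepancy there between $V_{CC_0}V_{C'C_0}^{*}$ and $V_{CC'}$ is removed immediately by Schur's lemma, since both are unitary intertwiners from $\pi_{C'}$ to $\pi_C$.

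Likewise, your check that different phase families reaching the same class differ by summable phases is only needed to show that compatible block-diagonal operators lie in $\mathcal{G}_{\mathcal{W}}'$. That serves only the direction $\pi(\mathcal{A})''\subset\mathcal{G}_{\mathcal{W}}'$, which you already had for free at the start.
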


\begin{proof}
For every $C
\in
\mathfrak{C}\rbk{\mathcal{W}}$,
$P_{C}$ commutes with $\fun{\oarepn}{\oa{A}}$ because the summands are invariant.
Lemma \ref{lem:phase-unitary} shows that every $U_{\phi}$ commutes with
the direct sum $\fun{\oarepn}{\oa{A}}$.
It follows that $\mathcal{G}_{\mathcal{W}}
\subset
\oacommutant{\fun{\oarepn}{\oa{A}}}$.
Taking commutants now gives
$\oadoublecommutant{\fun{\oarepn}{\oa{A}}}
\subset
\oacommutant{\mathcal{G}_{\mathcal{W}}}$.

For the converse inclusion $\oacommutant{\fun{\oarepn}{\oa{A}}}
\subset
\oadoublecommutant{\mathcal{G}_{\mathcal{W}}}$,
let $T
\in
\oacommutant{\fun{\oarepn}{\oa{A}}}$ and consider its blocks $T_{C C'}
=
P_{C} T P_{C'}$,
bounded operators $\sphilb{H}_{C'}
\to
\sphilb{H}_{C}$ intertwining $\oarepn_{C'}$ and $\oarepn_{C}$.
The representations are irreducible by Proposition \ref{prop:irreducible}.
For each ordered pair $\rbk{C,C'}$,
fix the unitary intertwiner $V_{C C'}$ supplied by
Lemma \ref{lem:phase-transitivity}.
We use the following standard form of Schur's lemma.
For an intertwiner $V$ between irreducible representations,
$\faadj{V} V$ lies in the commutant of the source.
Irreducibility gives $\faadj{V} V
=
c$ for a scalar $c
\geq
0$,
and either $c
=
0$ and $V
=
0$,
or $c^{- 1 / 2} V$ is a unitary intertwiner.
If $V$ and $V'$ are unitary intertwiners,
then $V \faadj{\rbk{V'}}$ belongs to the commutant of the target representation.
Irreducibility gives
$V
=
\lambda V'$.
Applying these observations to $T_{C C'}$ and the fixed unitary
$V_{C C'}$ gives
$$T_{C C'}
=
\lambda_{C C'} V_{C C'}$$
for a scalar $\lambda_{C C'}$.
Every block $P_{C} T P_{C'}
=
\lambda_{C C'} P_{C} U_{\phi_{C,C'}} P_{C'}$ lies in the von Neumann algebra $\oa{N}
=
\oadoublecommutant{\mathcal{G}_{\mathcal{W}}}$.
Finally $T$ is the weak-operator limit of the net
$$\net{
\sum_{C, C' \in \oa{F}} P_{C} T P_{C'}
=
\rbk{\sum_{C \in \oa{F}} P_{C}} T \rbk{\sum_{C' \in \oa{F}} P_{C'}}
}{
\oa{F} \nearrow \mathfrak{C}\rbk{\mathcal{W}}
},$$
since
$\sum_{C \in \oa{F}} P_{C}
\to
1$ strongly as $\oa{F}$ increases through the finite subsets of $\mathfrak{C}\rbk{\mathcal{W}}$.
Each member of the net lies in $\oa{N}$,
and hence we obtain $T
\in
\oa{N}$.
The bicommutant theorem shows that taking commutants makes the two identities equivalent.
\end{proof}

Theorem \ref{thm:vn-commutant} is the precise form of the statements of \cite[Section 2]{ThirringWehrl001}: the weak closure \(\oadoublecommutant{\fun{\oarepn}{\oa{A}}}\) does not lead out of an equivalence class. Its restriction to each class acts irreducibly. The sector representations associated with all equivalence classes in one weak class are unitarily equivalent. Moreover, the operators in the weak closure are insensitive to the phase factors \(\phi_{p}\). These phase factors are precisely the residual data that distinguish equivalence classes within a weak class.

\subsubsection{Product states and their GNS representations}\label{product-states-and-their-gns-representations}

Mixed product states are realized as incomplete tensor-product GNS representations and shown to be factorial. The resulting representations supply the thermal fibers used in the main text.

\begin{defn}[product state]
Let $\physvec{\rho}
=
\seq{\oastate[\psi_{p}]}{p
\in
\semigrposint}$ be a sequence of states of $\spmat{2}{\fldcmp}$.
The product state $\oastate[\psi_{\physvec{\rho}}]
=
\bigotimes_{p \in \semigrposint} \oastate[\psi_{p}]$ is the unique state of $\oa{A}$ with
$$\fun{\oastate[\psi_{\physvec{\rho}}]}{\bigotimes_{p
\in
\Lambda} a_{p}}
=
\prod_{p
\in
\Lambda} \fun{\oastate[\psi_{p}]}{a_{p}}$$
for all finite $\Lambda$ and $a_{p}
\in
\spmat{2}{\fldcmp}$.
\end{defn}

The prescription defines a positive normalized functional on each \(\oa{A}_{\Lambda}\), namely the state with density matrix \(\bigotimes_{p \in \Lambda} D_{p}\) where \(D_{p}\) is the density matrix of \(\oastate[\psi_{p}]\). The prescriptions are compatible with the embeddings, and the resulting functional on \(\oa{A}_{\txtloc}\) is bounded by \(1\) and extends to \(\oa{A}\). Positivity persists in the limit. When every \(\oastate[\psi_{p}]\) is pure with unit eigenvector \(\xi_{p}\), set \(\oagnsvector[\Psi_{\xi}]
=
\bigotimes_{p \in \semigrposint} \xi_{p}\). The product state is the vector state of \(\oagnsvector[\Psi_{\xi}]\) in \(\oarepn_{\xi}\), which is pure by Proposition \ref{prop:irreducible}. Its GNS triple is \(\rbk{\sphilb{H}_{\xi}, \oarepn_{\xi}, \oagnsvector[\Psi_{\xi}]}\), cyclicity having been noted in the proof of Proposition \ref{prop:irreducible}. The opposite regime is a faithful product state, where every \(D_{p}\) is invertible. This is the case for Gibbs states at positive temperature.

\begin{prop}[GNS representation of a faithful product state]\label{prop:gns-faithful}
Let $\oastate[\psi_{\physvec{\rho}}]$ be a product state whose density matrices $D_{p}$ are all invertible.
Equip $h_{p}
=
\spmat{2}{\fldcmp}$ with the Hilbert--Schmidt inner product $\bkt{x}{y}
=
\sqfun{\trace}{\faadj{x} y}$ and let $\Xi_{p}
=
D_{p}^{1 / 2}
\in
h_{p}$,
a unit vector.
On the incomplete tensor product space $\sphilb{H}_{\Xi}$ of the sequence $\Xi
=
\seq{\Xi_{p}}{p
\in
\semigrposint}$ define $\oarepn_{\psi_{\physvec{\rho}}}$ by letting $a
\in
\spmat{2}{\fldcmp}$ act on the slot $p$ by left multiplication $x
\mapsto
a x$,
in the manner of Definition \ref{def:product-rep}.
Let $\oagnsvector[\Psi]
=
\bigotimes_{p \in \semigrposint} \Xi_{p}$.
Then $\rbk{\sphilb{H}_{\Xi}, \oarepn_{\psi_{\physvec{\rho}}}, \oagnsvector[\Psi]}$ is the GNS triple of $\oastate[\psi_{\physvec{\rho}}]$.
\end{prop}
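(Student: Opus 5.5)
The plan is to check the three defining properties of a GNS triple: the representation is well defined, the vector $\oagnsvector[\Psi]$ reproduces the state, and $\oagnsvector[\Psi]$ is cyclic. Uniqueness of the GNS construction then identifies the triple.

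First I verify that the construction makes sense. Since $\norm{\Xi_{p}}^{2}=\sqfun{\trace}{D_{p}}=1$, the sequence $\Xi$ is adapted with unit entries. Hence Lemma \ref{lem:itps-form} and Definition \ref{def:itps} give $\sphilb{H}_{\Xi}$, and Proposition \ref{prop:factorization} supplies the unitaries $U_{\Lambda}$. For $A\in\oa{A}_{\Lambda}$ I set $\fun{\oarepn_{\psi_{\physvec{\rho}}}}{A}=U_{\Lambda}\rbk{L_{A}\otimes 1}\faadj{U_{\Lambda}}$. Here $L_{A}$ denotes left multiplication on $\bigotimes_{p\in\Lambda}h_{p}\cong \spmat{2^{\abscard{\Lambda}}}{\fldcmp}$ with its Hilbert--Schmidt structure. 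Left multiplication is a unital $\ast$-representation of the matrix algebra, and it is isometric because $\norm{L_{A}}=\norm{A}$. Consistency under $\Lambda\subset\Lambda'$ is checked on product vectors exactly as after Definition \ref{def:product-rep}, so the map extends by continuity to an isometric $\ast$-representation of $\oa{A}$.

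Next I compute the expectation. For an elementary tensor $A=\bigotimes_{p\in\Lambda}a_{p}$, the product-vector formula gives
$$\bkt{\oagnsvector[\Psi]}{\fun{\oarepn_{\psi_{\physvec{\rho}}}}{A}\oagnsvector[\Psi]}=\prod_{p\in\Lambda}\sqfun{\trace}{D_{p}^{1/2}a_{p}D_{p}^{1/2}}\prod_{p\notin\Lambda}1=\prod_{p\in\Lambda}\sqfun{\trace}{D_{p}a_{p}}.$$
This equals $\fun{\oastate[\psi_{\physvec{\rho}}]}{A}$. Linearity and density then give equality on all of $\oa{A}$.

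The main point is cyclicity, and this is where invertibility is used. By Proposition \ref{prop:flip-basis} and the density of $\sphilb{D}_{\Xi}$, it suffices to reach every product vector that differs from $\oagnsvector[\Psi]$ in finitely many slots $\Lambda$. Through $U_{\Lambda}$ such a vector is $x\otimes\tau$, where $\tau$ is the tail reference vector and $x\in\bigotimes_{p\in\Lambda}h_{p}$. Since $D_{\Lambda}^{1/2}=\bigotimes_{p\in\Lambda}D_{p}^{1/2}$ is invertible, I take $A=xD_{\Lambda}^{-1/2}\in\oa{A}_{\Lambda}$. Then $\fun{\oarepn_{\psi_{\physvec{\rho}}}}{A}\oagnsvector[\Psi]=U_{\Lambda}\rbk{xD_{\Lambda}^{-1/2}D_{\Lambda}^{1/2}\otimes\tau}=U_{\Lambda}\rbk{x\otimes\tau}$ exactly. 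Therefore $\fun{\oarepn_{\psi_{\physvec{\rho}}}}{\oa{A}}\oagnsvector[\Psi]\supseteq\sphilb{D}_{\Xi}$, and this set is dense.

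Finally, the uniqueness clause of the GNS construction identifies $\rbk{\sphilb{H}_{\Xi},\oarepn_{\psi_{\physvec{\rho}}},\oagnsvector[\Psi]}$ with the GNS triple of $\oastate[\psi_{\physvec{\rho}}]$. The only delicate point is the cyclicity step: without invertibility the range $\oa{A}_{\Lambda}D_{\Lambda}^{1/2}$ would be a proper left ideal, and cyclicity would fail.
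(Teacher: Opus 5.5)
Your proof is correct and follows essentially the paper's route. You show the left-multiplication representation is well defined, compute $\operatorname{Tr}(D_p^{1/2}a_pD_p^{1/2})=\operatorname{Tr}(D_pa_p)$, get cyclicity from invertibility of $D_p^{1/2}$ (every finite modification of $\Xi$ is reached, and these span a dense set), and conclude by GNS uniqueness. The only cosmetic differences are that you reach a whole block at once via $A=xD_{\Lambda}^{-1/2}$ rather than slot by slot via $\{aD_p^{1/2}: a\in M_2(\mathbb{C})\}=h_p$, and you get isometry directly from $\|L_A\|=\|A\|$ instead of from Lemma~\ref{lem:simple}.
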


\begin{proof}
Left multiplication $L_{a} x
=
a x$ on $h_{p}$ is bounded with $\norm{L_{a}}
\leq
\norm{a}$,
and $a
\mapsto
L_{a}$ is a unital $\ast$-homomorphism of $\spmat{2}{\fldcmp}$ into $\opspbddlin{h_{p}}$,
with adjoint relation
$$\bkt{a x}{y}
=
\sqfun{\trace}{\faadj{x} \faadj{a} y}
=
\bkt{x}{\faadj{a} y}.$$
Exactly as in Definition \ref{def:product-rep},
acting by $L_{a_{p}}$ on finitely many slots defines the isometric representation $\oarepn_{\psi_{\physvec{\rho}}}$ of $\oa{A}$ on $\sphilb{H}_{\Xi}$.
Isometry follows from Lemma \ref{lem:simple} and does not require the factorization argument.
The vector $\oagnsvector[\Psi]$ gives the correct expectation values:
$$\bkt{\oagnsvector[\Psi]}{\fun{\oarepn_{\psi_{\physvec{\rho}}}}{\bigotimes_{p \in \Lambda} a_{p}} \oagnsvector[\Psi]}
=
\prod_{p
\in
\Lambda} \sqfun{\trace}{D_{p}^{1 / 2} a_{p} D_{p}^{1 / 2}}
=
\prod_{p
\in
\Lambda} \sqfun{\trace}{D_{p} a_{p}}
=
\fun{\oastate[\psi_{\physvec{\rho}}]}{\bigotimes_{p \in \Lambda} a_{p}}.$$

It remains to prove that $\oagnsvector[\Psi]$ is cyclic.
Applying local elements to $\oagnsvector[\Psi]$ yields the product vectors with entries $a_{p} D_{p}^{1 / 2}$ on a finite set and $\Xi_{p}$ elsewhere.
Since $D_{p}^{1 / 2}$ is invertible,
we must obtain
$\set{a D_{p}^{1 / 2}}{a
\in
\spmat{2}{\fldcmp}}
=
h_{p}$.
It follows that these product vectors comprise all finite modifications of $\Xi$,
whose span is dense by the argument of Proposition \ref{prop:flip-basis}.
Uniqueness of the GNS triple completes the proof.
\end{proof}

\begin{thm}[factoriality of product states]\label{thm:factor}
Every product state $\oastate[\psi_{\physvec{\rho}}]$ of $\oa{A}$ with all $D_{p}$ invertible is a factor state:
the center of $\oa{M}_{\physvec{\rho}}
=
\oadoublecommutant{\fun{\oarepn_{\psi_{\physvec{\rho}}}}{\oa{A}}}$ is trivial.
The same holds for pure product states,
where $\oa{M}
=
\opspbddlin{\sphilb{H}_{\xi}}$.
\end{thm}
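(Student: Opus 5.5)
The plan is to reduce factoriality to the tail-triviality Lemma \ref{lem:tail}, applied to a central element $Z$ of $\oa{M}_{\physvec{\rho}}$. The pure case is immediate: by Proposition \ref{prop:irreducible} the commutant of $\fun{\oarepn_{\xi}}{\oa{A}}$ is $\fldcmp 1$. Hence $\oa{M} = \opspbddlin{\sphilb{H}_{\xi}}$, and its center is trivial.

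For the faithful case, work in the realization of Proposition \ref{prop:gns-faithful} on $\sphilb{H}_{\Xi}$, with slot spaces $h_{p} = \spmat{2}{\fldcmp}$ carrying the Hilbert--Schmidt inner product, and $\spmat{2}{\fldcmp}$ acting by left multiplication $L_{a}$. Alongside it, define the right action: $\fun{\oarepn'}{A}$ acts on finitely many slots by $x \mapsto x a^{T}$, or, to keep linearity, by $x \mapsto x a$ composed with a suitable antiautomorphism. It is constructed exactly like Definition \ref{def:product-rep}, using Lemma \ref{lem:slotwise} and Proposition \ref{prop:factorization}, since right multiplication is also bounded on each slot. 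Left and right actions commute, so $\fun{\oarepn'}{\oa{A}} \subset \oacommutant{\oa{M}_{\physvec{\rho}}}$. Now fix a finite $\Lambda$ and factor $\sphilb{H}_{\Xi} \cong \rbk{\bigotimes_{p \in \Lambda} h_{p}} \otimes \sphilb{H}_{\fnrestr{\Xi}{\Lambda^{c}}}$ via $U_{\Lambda}$. On the finite part, the left and right multiplications together generate all of $\opspbddlin{\bigotimes_{p\in\Lambda} h_{p}}$; concretely, $L_{a}R_{b}$ for $a,b \in \spmat{2}{\fldcmp}^{\otimes \Lambda}$ span $\opspbddlin{\spmat{2^{\abscard{\Lambda}}}{\fldcmp}}$ by a dimension count. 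A central $Z$ commutes with both $\fun{\oarepn}{\oa{A}_{\Lambda}}$ (it lies in the commutant's commutant) and $\fun{\oarepn'}{\oa{A}_{\Lambda}}$ (it lies in $\oa{M}_{\physvec{\rho}}$, which commutes with $\oarepn'$). Therefore $Z$ commutes with $U_{\Lambda}\rbk{\opspbddlin{\bigotimes_{\Lambda} h_{p}} \otimes 1}\faadj{U_{\Lambda}}$. The commutant identity proved inside Proposition \ref{prop:irreducible} then gives $Z = U_{\Lambda}\rbk{1 \otimes Z_{\Lambda}}\faadj{U_{\Lambda}}$. Since this holds for every finite $\Lambda$, Lemma \ref{lem:tail}, applied to the unit-entry sequence $\Xi$ with the flip basis of Proposition \ref{prop:flip-basis}, shows that $Z$ is scalar.

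The step I expect to need the most care is making the right action $\oarepn'$ well defined and commuting with $\oarepn$ on the whole incomplete tensor product. Slotwise right multiplication by a unitary preserves inner products and equivalence classes. Since $\oa{A}_{\Lambda}$ is spanned by unitaries, I will define the action on unitaries first and extend linearly, with consistency checked on product vectors as in \eqref{eq:product-action}; commutation is then clear on product vectors. A second point to verify is the spanning claim $\operatorname{span}\set{L_{a}R_{b}}{} = \opspbddlin{\spmat{n}{\fldcmp}}$. This holds because $\spmat{n}{\fldcmp}\otimes\spmat{n}{\fldcmp}^{\mathrm{op}}$ is simple of dimension $n^{4}$ and acts faithfully on the $n^{2}$-dimensional space $\spmat{n}{\fldcmp}$. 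Invertibility of $D_{p}$ is not needed for this step itself, only for cyclicity, which is already settled by Proposition \ref{prop:gns-faithful}.
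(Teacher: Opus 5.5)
Your proposal is correct and is essentially the paper's proof. The paper also uses slotwise right multiplications $R^{q}_{b}$, defined directly as $U_{\{q\}}(R_{b}\otimes 1)U_{\{q\}}^{*}$, so the well-definedness issue you flag does not arise. It likewise shows that these, together with the left action, generate $\opspbddlin{\bigotimes_{p\in\Lambda}h_{p}}\otimes 1$ (by matrix units rather than your dimension count), and concludes with the commutant identity from Proposition \ref{prop:irreducible} and Lemma \ref{lem:tail}. One small slip: a central $Z$ commutes with $\fun{\oarepn}{\oa{A}_{\Lambda}}$ because $Z\in\oacommutant{\oa{M}_{\physvec{\rho}}}$, not because it lies in a bicommutant.
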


\begin{proof}
The pure case is Proposition \ref{prop:irreducible}.
For the faithful case work in the realization of Proposition \ref{prop:gns-faithful}.
For $b
\in
\spmat{2}{\fldcmp}$ and a slot $q$ let $R^{q}_{b}$ be the operator acting on the slot $q$ by right multiplication $x
\mapsto
x b$ and trivially elsewhere,
bounded via the factorization of Proposition \ref{prop:factorization} with $\Lambda
=
\setone{q}$.
Right multiplications commute with left multiplications at the same slot and act on different slots for different $q$.
It follows that $R^{q}_{b}
\in
\oacommutant{\fun{\oarepn}{\oa{A}}}$,
as one verifies on the total set of product vectors.
Let $Z
\in
\fun{Z}{\oa{M}_{\physvec{\rho}}}
=
\oa{M}_{\physvec{\rho}} \cap \oacommutant{\oa{M}_{\physvec{\rho}}}$ and let $\Lambda$ be finite.
Membership in $\oacommutant{\oa{M}_{\physvec{\rho}}}$ shows that $Z$ commutes with $\fun{\oarepn}{\oa{A}_{\Lambda}}$.
It also commutes with all $R^{q}_{b}$,
$q
\in
\Lambda$,
because $Z$ belongs to $\oa{M}_{\physvec{\rho}}
\subset
\oadoublecommutant{\fun{\oarepn}{\oa{A}}}$.
Under the factorization $U_{\Lambda}$ the operators $\fun{\oarepn}{\oa{A}_{\Lambda}}$ and $R^{q}_{b}$,
$q
\in
\Lambda$,
generate $\opspbddlin{\bigotimes_{p \in \Lambda} h_{p}} \otimes 1$.
At a single slot,
consider the maps $x
\mapsto
e_{i j} x e_{k l}$ defined by matrix units.
Their action on the matrix-unit basis of $h_q$ is
$$e_{i j} e_{a b} e_{k l}
=
\delta_{j a} \delta_{b k} e_{i l}.$$
Thus $x
\mapsto
e_{i j} x e_{k l}$ maps the basis vector $e_{j k}$ to $e_{i l}$ and annihilates the other matrix units.
The linear span of these maps is therefore all of $\opspbddlin{h_q}$.
Tensoring over the slots of $\Lambda$ gives all of $\opspbddlin{\bigotimes_{p \in \Lambda} h_{p}}$,
whose generated von Neumann algebra is itself.
It follows that,
by the commutant lemma in the proof of Proposition \ref{prop:irreducible},
$\faadj{U_{\Lambda}} Z U_{\Lambda}
\in
\oacommutant{\opspbddlin{\bigotimes_{p \in \Lambda} h_{p}} \otimes 1}
=
1 \otimes \opspbddlin{\sphilb{H}_{\fnrestr{\Xi}{\Lambda^{c}}}}$.
This holds for every finite $\Lambda$.
It follows that Lemma \ref{lem:tail} gives $Z
\in
\fldcmp 1$.
\end{proof}

\subsection{\texorpdfstring{Representation Theory of \(\Omega\) Spins}{Representation Theory of \textbackslash Omega Spins}}\label{sec:spin}

The thermal analysis of the degenerate model requires the total-spin decomposition of \(\rbk{\fldcmp^{2}}^{\otimes \Omega}\) and the multiplicity formula quoted from \cite{EugeneWigner002} in \cite[Section 4]{ThirringWalter001}. It also requires a closed formula for diagonal matrix elements of group elements in each irreducible component, replacing the hypergeometric expressions of \cite[Eq. (19)]{ThirringWalter001}. Everything is proved from scratch. Throughout this section \(\Omega\) is fixed, \(\physvec{S}
=
\physvec{S}_{\Omega}\) are the collective operators \eqref{eq:collective} on \(V^{\otimes \Omega}\) with \(V
=
\fldcmp^{2}\), and \(e_{\uparrow}, e_{\downarrow}\) is the standard basis of \(V\) diagonalizing \(\sigma^{z}\).

\subsubsection{Weights, irreducible representations, and multiplicities}\label{weights-irreducible-representations-and-multiplicities}

The ladder relations construct the irreducible spin representations and determine their multiplicities in the tensor power.

The operators \(S^{z}, S^{\pm}\) satisfy \begin{equation}\label{eq:ladder-relations}
\commutator{S^{z}}{S^{\pm}}
=
\pm S^{\pm},
\quad
\commutator{S^{+}}{S^{-}}
=
2 S^{z},
\quad
\physvec{S}^{2}
=
S^{-} S^{+} + S^{z} \rbk{S^{z} + 1},
\end{equation} verified from the one-site relations \(\commutator{\sigma^{z}}{\sigma^{\pm}}
=
\pm 2 \sigma^{\pm}\), \(\commutator{\sigma^{+}}{\sigma^{-}}
=
\sigma^{z}\) and bilinearity, and \(\physvec{S}^{2}\) commutes with \(S^{z}, S^{\pm}\). For \(m
\in
\frac{1}{2} \ringratint\) let \(W_{m}
=
\fun{\Ker}{S^{z} - m}\) be the weight space. It is spanned by the vectors \(e_{I}
=
\bigotimes_{p = 1}^{\Omega} e_{\uparrow / \downarrow}\) with up-spins exactly on \(I
\subset
\intint{1..\Omega}\) and \(\abscard{I}
=
\frac{\Omega}{2} + m\), and the weight-space dimension is \begin{equation}\label{eq:weight-dimension}
N_{m}
=
\dim W_{m}
=
\binom{\Omega}{\frac{\Omega}{2} + m},
\quad \abs{m}
\leq
\frac{\Omega}{2},
\end{equation} with the convention \(N_{m}
=
0\) for \(\abs{m} > \frac{\Omega}{2}\).

\begin{lem}[irreducible spin representations]\label{lem:irreps}
For $2 S
\in
\monnat$,
let $V_{S}$ be the completely symmetric tensor subspace
\eqref{eq:spin-s-space} of $V^{\otimes 2 S}$.
It is invariant under the collective operators of $2 S$ sites.
Then $\dim V_{S}
=
2 S + 1$.
The operator $S^{z}$ on $V_{S}$ has the simple eigenvalues $m
=
- S, - S + 1, \dotsc, S$,
whereas $\physvec{S}^{2}$ acts as the scalar $\fun{S}{S + 1}$.
The space $V_{S}$ is irreducible under the algebra generated by
$S^{\pm}$ and $S^{z}$.

The same ladder relations give a converse characterization.
Let a finite-dimensional Hilbert space carry a unitary realization of
\eqref{eq:ladder-relations},
and let $v$ be a joint eigenvector of $\physvec{S}^{2}$ and $S^{z}$ such that
$S^{+} v
=
0$ and $S^{z} v
=
S v$.
Then the subspace spanned by
$v, S^{-} v, \dotsc, \rbk{S^{-}}^{2 S} v$
is invariant and isomorphic to $V_{S}$.
\end{lem}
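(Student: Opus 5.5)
The plan is to work entirely with the symmetric-tensor basis and the ladder relations \eqref{eq:ladder-relations}. First, invariance: the collective operators $S^{\gamma}=\frac{1}{2}\sum_{p}\sigma^{\gamma}_{p}$ on $V^{\otimes 2S}$ commute with every permutation unitary $U_{\pi}$, since $U_{\pi}\sigma^{\gamma}_{p}\faadj{U_{\pi}}=\sigma^{\gamma}_{\pi(p)}$ and the sum is symmetric; hence $V_{S}$ is invariant. For the dimension, consider the symmetrizer $P=\frac{1}{(2S)!}\sum_{\pi}U_{\pi}$, which is the orthogonal projection onto $V_{S}$. Applied to the basis $e_{I}$, it produces the normalized orbit sums
$$\ket{S,m}=\binom{2S}{S+m}^{-1/2}\sum_{\abscard{I}=S+m}e_{I},\qquad m=-S,\dotsc,S.$$
Permutations act transitively on the subsets of a fixed cardinality, so these vectors span $V_{S}$. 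They are orthonormal because distinct $m$ involve disjoint index sets. This gives $\dim V_{S}=2S+1$. Since $S^{z}e_{I}=(\abscard{I}-S)e_{I}$, each $\ket{S,m}$ is an $S^{z}$-eigenvector with eigenvalue $m$, and the eigenvalues are simple.

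Next I compute the ladder action. $S^{+}$ sends $e_{I}$ to $\sum_{q\notin I}e_{I\cup\{q\}}$, so $S^{+}\sum_{\abscard{I}=k}e_{I}=(k+1)\sum_{\abscard{J}=k+1}e_{J}$. After normalization this gives $S^{+}\ket{S,m}=c_{m}\ket{S,m+1}$ with
$$c_{m}=\sqrt{(S-m)(S+m+1)}\neq 0 \quad\text{for } m<S.$$
A symmetric computation gives $S^{-}\ket{S,m}=\cmpconj{c_{m-1}}\ket{S,m-1}$, and $S^{+}\ket{S,S}=0$. Then $\physvec{S}^{2}\ket{S,S}=(S^{-}S^{+}+S^{z}(S^{z}+1))\ket{S,S}=S(S+1)\ket{S,S}$. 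Since $\physvec{S}^{2}$ commutes with $S^{-}$ and the $\ket{S,m}$ are nonzero multiples of $(S^{-})^{S-m}\ket{S,S}$, $\physvec{S}^{2}$ is the scalar $S(S+1)$ on $V_{S}$.

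Irreducibility follows from two observations. A nonzero invariant subspace $W$ is $S^{z}$-invariant, so it contains some $\ket{S,m}$, because the weights are simple. The nonvanishing coefficients $c_{m}$ then let $S^{\pm}$ reach every basis vector, so $W=V_{S}$.

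For the converse, let $v$ satisfy $S^{+}v=0$ and $S^{z}v=Sv$, and set $v_{k}=(S^{-})^{k}v$. The commutation relations give $S^{z}v_{k}=(S-k)v_{k}$. By induction using $[S^{+},S^{-}]=2S^{z}$,
$$S^{+}v_{k}=k(2S-k+1)\,v_{k-1}.$$
Unitarity ($\faadj{(S^{-})}=S^{+}$) then gives $\norm{v_{k}}^{2}=\langle v_{k-1},S^{+}v_{k}\rangle=k(2S-k+1)\norm{v_{k-1}}^{2}$. So $v_{k}\neq0$ for $k\le 2S$, and $\norm{v_{2S+1}}^{2}=0$, i.e. $v_{2S+1}=0$. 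This step requires $2S\in\monnat$; alternatively, it follows from finite dimensionality, since otherwise the distinct weights $S-k$ would give infinitely many independent vectors. The span of $v_{0},\dotsc,v_{2S}$ is therefore invariant under $S^{z},S^{\pm}$. The vectors have distinct weights, hence are orthogonal, and the normalized $v_{k}/\norm{v_{k}}$ obey the same coefficient formulas $c_{m}$ as the $\ket{S,m}$ with $m=S-k$. The unitary matching these bases therefore intertwines $S^{z},S^{\pm}$, giving the isomorphism with $V_{S}$.

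The main obstacle is keeping the normalization constants straight in the orbit-sum ladder computation, so that $c_{m}$ agrees exactly with the abstract norm recursion. I would check both against $\norm{S^{+}\ket{S,m}}^{2}=\langle \ket{S,m},S^{-}S^{+}\ket{S,m}\rangle=S(S+1)-m(m+1)$ to avoid explicit combinatorics. The other steps are routine.
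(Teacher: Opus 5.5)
Your proof is correct and follows essentially the same route as the paper. It uses the orbit sums $\sum_{|I|=S+m} e_{I}$ as a weight basis of $V_{S}$ with their explicit ladder action, derives irreducibility from the simplicity of the weights, and for the converse uses the vectors $(S^{-})^{k}v$ with $S^{+}(S^{-})^{k}v=k(2S-k+1)(S^{-})^{k-1}v$ and the resulting norm recursion. The differences are minor: you normalize the basis and build the isomorphism as a unitary matching orthonormal bases (exactly the unitary intertwiner later invoked in Proposition~\ref{prop:matrix-element}), and your induction uses only $[S^{+},S^{-}]=2S^{z}$, so the $\physvec{S}^{2}$-eigenvector hypothesis is not needed.
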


\begin{proof}
For $- S
\leq
m
\leq
S$ define
\begin{equation}\label{eq:symmetric-weight-vectors}
u_{m}
=
\sum_{\substack{I
\subset
\intint{1..2 S} \\ \abscard{I}
=
S + m}} e_{I}
\in
V^{\otimes 2 S}.
\end{equation}
These vectors are symmetric,
nonzero,
and form a basis of $V_{S}$.
Indeed,
a symmetric tensor is determined by its coefficients on the symmetric-group orbits,
which are labeled by $\abscard{I}$.
This gives $\dim V_{S}
=
2 S + 1$ and the simplicity of the $S^{z}$-eigenvalues on $V_{S}$.
The ladder operators act by
$$S^{+} u_{m}
=
\rbk{S + m + 1} u_{m + 1},
\quad
S^{-} u_{m}
=
\rbk{S - m + 1} u_{m - 1},$$
because
$$S^{+} e_{I}
=
\sum_{q
\notin
I} e_{I \cup \setone{q}}.$$
Each $e_{J}$ with $\abscard{J}
=
S + m + 1$ arises from those $I
\subset
J$ with $\abscard{J \setminus I}
=
1$.
There are $S + m + 1$ such subsets.
Similarly,
$$S^{-} e_{I}
=
\sum_{q
\in
I} e_{I \setminus \setone{q}},$$
and each $e_{J}$ with $\abscard{J}
=
S + m - 1$ arises from the $I
\supset
J$ with $\abscard{I \setminus J}
=
1$.
The number of such supersets is
$2 S - \rbk{S + m - 1}
=
S - m + 1$.
Only the fact that $S^{\pm} u_{m}$ is a nonzero multiple of $u_{m \pm 1}$ for $m \pm 1$ in range is used in the irreducibility argument.
Any nonzero invariant subspace is $S^{z}$-invariant and therefore contains some
$u_{m}$ by the simple spectrum.
Repeated application of $S^{\pm}$ then places every $u_{m}$ in that subspace,
which proves irreducibility.
The scalar action of $\physvec{S}^{2}$ follows from \eqref{eq:ladder-relations}.
On the highest vector $u_{S}$,
$S^{+} u_{S}
=
0$ and $S^{z} u_{S}
=
S u_{S}$ give $\physvec{S}^{2} u_{S}
=
\fun{S}{S + 1} u_{S}$,
and $\physvec{S}^{2}$ commutes with $S^{-}$.
It follows that the scalar propagates to the basis.
For the last statement,
set $v_{j}
=
\rbk{S^{-}}^{j} v$.
The relations give inductively $S^{z} v_{j}
=
\rbk{S - j} v_{j}$ and $S^{+} v_{j}
=
j \rbk{2 S - j + 1} v_{j - 1}$.
Moreover,
$$\norm{v_{j + 1}}^{2}
=
\bkt{v_{j}}{S^{+} S^{-} v_{j}}
=
\rbk{\fun{S}{S + 1} - \rbk{S - j} \rbk{S - j - 1}} \norm{v_{j}}^{2}.$$
The last equality uses \eqref{eq:ladder-relations} in the form
$S^{+} S^{-}
=
\physvec{S}^{2} - S^{z} \rbk{S^{z} - 1}$.
It follows that $v_{j}
\neq
0$ for $j
\leq
2 S$ and $v_{2 S + 1}
=
0$.
The span of $v_{0}, \dotsc, v_{2 S}$ is invariant,
and the assignment
$v_{j}
\mapsto
\rbk{S^{-}}^{j} u_{S}$ extends to an isomorphism intertwining the three generators.
This extension is possible because the ladder formulas give the same structure constants on both sides.
\end{proof}

\begin{prop}[Clebsch--Gordan decomposition of $\Omega$ spins]\label{prop:multiplicity}
The space $V^{\otimes \Omega}$ decomposes under the commuting self-adjoint operators $\physvec{S}^{2}, S^{z}$ into joint eigenspaces.
The eigenvalues of $\physvec{S}^{2}$ are $\fun{S}{S + 1}$ with $S
\in
\setone{\frac{\Omega}{2}, \frac{\Omega}{2} - 1, \dotsc}$,
$S
\geq
0$.
The tensor power $V^{\otimes \Omega}$ is the orthogonal direct sum of
irreducible subspaces,
and the irreducible representation of spin $S$ occurs with multiplicity
\begin{equation}\label{eq:multiplicity}
m^{\rbk{\Omega}}_{S}
=
N_{S} - N_{S + 1}
=
\frac{\Omega! \rbk{2 S + 1}}{\rbk{\frac{\Omega}{2} - S}! \rbk{\frac{\Omega}{2} + S + 1}!}.
\end{equation}
\end{prop}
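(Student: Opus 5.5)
The plan is to obtain the decomposition by highest-weight vectors, using the ladder relations \eqref{eq:ladder-relations} and the converse characterization in Lemma \ref{lem:irreps}. First, $\physvec{S}^{2}$ and $S^{z}$ are commuting self-adjoint operators on the finite-dimensional space $V^{\otimes \Omega}$. So $V^{\otimes \Omega}$ is the orthogonal sum of their joint eigenspaces, and each weight space $W_{m}$ is invariant under $\physvec{S}^{2}$. For admissible $S$ (that is, $S \in \setone{\frac{\Omega}{2}, \frac{\Omega}{2} - 1, \dotsc}$, $S \geq 0$) I define the highest-weight space
\[
H_{S} = W_{S} \cap \Ker S^{+}.
\]
Since $S^{+}$ maps $W_{S}$ into $W_{S+1}$, this gives $\dim H_{S} \geq N_{S} - N_{S+1}$.

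To get equality I will show that $S^{+} \colon W_{S} \to W_{S+1}$ is surjective for $S \geq 0$. It suffices to show that its adjoint $S^{-} \colon W_{S+1} \to W_{S}$ is injective. If $S^{-} w = 0$ with $w \in W_{S+1}$, then
\[
0 = \norm{S^{-} w}^{2} = \bkt{w}{S^{+} S^{-} w} = \bkt{w}{\rbk{\physvec{S}^{2} - S^{z}\rbk{S^{z} - 1}} w}.
\]
Decompose $w$ into $\physvec{S}^{2}$-eigenvectors with eigenvalues $\lambda \geq 0$. A joint eigenvector with eigenvalue $\lambda$ and weight $m$ satisfies $\lambda \geq m\rbk{m+1}$, because
\[
\lambda - m\rbk{m+1} = \norm{S^{+} v}^{2} \geq 0 .
\]
For $m = S+1 > 0$ this gives $\lambda - m\rbk{m-1} \geq 2m > 0$. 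Hence every component of $w$ vanishes, $S^{-}$ is injective, and $\dim H_{S} = N_{S} - N_{S+1}$. The same identity shows that a vector in $H_{S}$ is an $\physvec{S}^{2}$-eigenvector with eigenvalue $S\rbk{S+1}$, since $\physvec{S}^{2} = S^{-} S^{+} + S^{z}\rbk{S^{z} + 1}$ acts as $S\rbk{S+1}$ on $\Ker S^{+} \cap W_{S}$.

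Next I choose an orthonormal basis $v_{1}, \dotsc, v_{k}$ of each $H_{S}$. By Lemma \ref{lem:irreps}, each $v_{i}$ generates an invariant subspace isomorphic to $V_{S}$. Subspaces generated from orthogonal highest-weight vectors are orthogonal. For different $S$ this holds because the $\physvec{S}^{2}$-eigenvalues differ. For the same $S$, the ladder structure constants of Lemma \ref{lem:irreps} give $\bkt{\rbk{S^{-}}^{j} v_{i}}{\rbk{S^{-}}^{l} v_{i'}} = \delta_{jl}\, c_{j}\, \bkt{v_{i}}{v_{i'}}$, where $c_{j}$ depends only on $S$ and $j$. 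Counting dimensions in each weight space then closes the argument. The weight-$m$ contributions add up to $\sum_{S \geq \abs{m}} \rbk{N_{S} - N_{S+1}}$, which telescopes to $N_{\abs{m}} = N_{m}$ using $N_{-m} = N_{m}$. So the constructed subspaces exhaust $V^{\otimes \Omega}$. This also shows that the eigenvalues of $\physvec{S}^{2}$ are exactly the $S\rbk{S+1}$ with $m^{\rbk{\Omega}}_{S} > 0$; those multiplicities are positive since $N_{S}$ is strictly decreasing in $S \geq 0$.

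The remaining step is the closed formula, which is routine algebra with binomials:
\[
\binom{\Omega}{\frac{\Omega}{2} + S} - \binom{\Omega}{\frac{\Omega}{2} + S + 1} = \frac{\Omega!}{\rbk{\frac{\Omega}{2} - S}!\rbk{\frac{\Omega}{2} + S + 1}!}\sqbk{\rbk{\frac{\Omega}{2} + S + 1} - \rbk{\frac{\Omega}{2} - S}},
\]
and the bracket equals $2S + 1$. The main obstacle is the surjectivity (equivalently, injectivity) step, where the sign bound $\lambda \geq m\rbk{m+1}$ has to be used carefully at weight $S+1$. Everything else is bookkeeping.
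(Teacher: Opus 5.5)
Your argument is correct, but it is organized differently from the paper's. The paper first obtains an orthogonal decomposition into irreducibles abstractly: orthogonal complements of invariant subspaces are invariant because the generators form a $\ast$-closed set, and induction on dimension does the rest. It then identifies each summand with some $V_{S}$ via a highest weight vector and Lemma \ref{lem:irreps}. Only at the end does it read off the multiplicities, by inverting the triangular system $N_{m} = \sum_{S \geq \abs{m}} m^{\rbk{\Omega}}_{S}$.

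You instead compute the multiplicity space first. Injectivity of $S^{-}$ on $W_{S+1}$ plus rank--nullity gives $\dim \rbk{W_{S} \cap \Ker S^{+}} = N_{S} - N_{S+1}$. You then build the decomposition explicitly from orthonormal bases of these highest-weight spaces, checking orthogonality of the copies with the ladder constants, so the telescoping count serves only to show exhaustion. Your injectivity step can be shortened: $\commutator{S^{+}}{S^{-}} = 2 S^{z}$ gives $\norm{S^{-} w}^{2} = \norm{S^{+} w}^{2} + 2 \rbk{S + 1} \norm{w}^{2}$ for $w \in W_{S+1}$, with no need to decompose $w$ spectrally.

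Your route gives an explicit, decomposition-free description of the multiplicity space and shows directly that every admissible $S$ occurs. The paper's route is shorter and gets orthogonality of the summands for free from the inductive splitting. Its triangular inversion also shows at once that the multiplicities are the same for every decomposition into irreducibles.
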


\begin{proof}
We first prove complete reducibility.
The generators are contained in the $\ast$-algebra generated by the self-adjoint $S^{x}, S^{y}, S^{z}$.
It follows that the orthogonal complement of an invariant subspace is invariant,
and induction on the dimension decomposes $V^{\otimes \Omega}$ into irreducibles.
Every irreducible finite-dimensional invariant subspace has a highest weight vector,
namely an eigenvector $v$ of $S^{z}$ with maximal eigenvalue in the subspace.
It necessarily satisfies $S^{+} v
=
0$.
The eigenspaces of $\physvec{S}^{2}$ are invariant under the generators.
We may therefore take $v$ to be an $\physvec{S}^{2}$-eigenvector.
The ladder relation \eqref{eq:ladder-relations} gives
$\physvec{S}^{2} v
=
\fun{S}{S + 1} v$ with $S$ the $S^{z}$-eigenvalue of $v$.
The eigenvalue satisfies $S
\geq
0$ because $\norm{S^{-} v}^{2}
=
\rbk{\fun{S}{S + 1} - S \rbk{S - 1}} \norm{v}^{2}
=
2 S \norm{v}^{2}
\geq
0$,
and $2 S
\in
\monnat$ because all $S^{z}$-eigenvalues on $V^{\otimes \Omega}$ lie in $\frac{\Omega}{2} - \monnat$.
Lemma \ref{lem:irreps} then identifies the subspace with $V_{S}$.
In $V_{S}$ each weight $m$ with $\abs{m}
\leq
S$ has multiplicity one.
The weight multiplicities \eqref{eq:weight-dimension} in the two decompositions of $V^{\otimes \Omega}$ give
$$N_{m}
=
\sum_{S
\geq
\abs{m}} m^{\rbk{\Omega}}_{S}.$$
It follows that $m^{\rbk{\Omega}}_{S}
=
N_{S} - N_{S + 1}$,
and with $a
=
\frac{\Omega}{2} + S$,
$$\begin{aligned}
&N_{S} - N_{S + 1}
=
\frac{\Omega!}{a! \rbk{\Omega - a}!} - \frac{\Omega!}{\rbk{a + 1}! \rbk{\Omega - a - 1}!}
\\ 
&=
\frac{\Omega!}{a! \rbk{\Omega - a - 1}!} \sqbk{\frac{1}{\Omega - a} - \frac{1}{a + 1}}
=
\frac{\Omega! \rbk{2 a + 1 - \Omega}}{\rbk{a + 1}! \rbk{\Omega - a}!},
\end{aligned}$$
and $2 a + 1 - \Omega
=
2 S + 1$ gives \eqref{eq:multiplicity}.
\end{proof}

\subsubsection{Diagonal matrix elements of group elements}\label{diagonal-matrix-elements-of-group-elements}

The weight basis from \eqref{eq:symmetric-weight-vectors} reduces diagonal matrix elements of tensor-power group elements to an explicit finite sum. The resulting formula is later scaled in the Bessel-limit argument.

For \(g
\in
\fun{\liegr{GL}}{2, \fldcmp}\) the operator \(g^{\otimes 2 S}\) preserves \(V_{S}\). Write \(\fun{D^{S}}{g}
=
\fnrestr{g^{\otimes 2 S}}{V_{S}}\). With \(u_{m}\) defined by \eqref{eq:symmetric-weight-vectors}, the normalized weight basis of \(V_{S}\) is \(\ket{S, m}
=
\binom{2 S}{S + m}^{- 1 / 2} u_{m}\), orthonormal because the \(e_{I}\) are orthonormal and \(u_{m}\) contains \(\binom{2 S}{S + m}\) of them.

\begin{prop}[closed formula for diagonal matrix elements]\label{prop:matrix-element}
For $g
=
\begin{pmatrix} g_{11} & g_{12} \\ g_{21} & g_{22} \end{pmatrix}
\in
\fun{\liegr{GL}}{2, \fldcmp}$ and $\abs{m}
\leq
S$,
\begin{equation}\label{eq:matrix-element}
\bkt{\ket{S, m}}{\fun{D^{S}}{g} \ket{S, m}}
=
\sum_{k
=
0}^{\min \rbk{S + m, S - m}} \binom{S + m}{k} \binom{S - m}{k} g_{11}^{S + m - k} g_{22}^{S - m - k} \rbk{g_{12} g_{21}}^{k}.
\end{equation}
The same diagonal matrix element determines the corresponding matrix element
in every copy of the spin-$S$ representation.
More precisely, for every unit joint eigenvector $\Psi$ of
$\rbk{\physvec{S}^{2}, S^{z}}$ in $V^{\otimes \Omega}$ with eigenvalues
$\rbk{\fun{S}{S + 1}, m}$,
and every finite product $g
=
\fnexp{\physvec{a}_{1} \cdot \physvec{\sigma}} \dotsm \fnexp{\physvec{a}_{r} \cdot \physvec{\sigma}}$ of exponentials of traceless matrices,
$\physvec{a}_{j}
\in
\fldcmp^{3}$,
$$\bkt{\Psi}{g^{\otimes \Omega} \Psi}
=
\bkt{\ket{S, m}}{\fun{D^{S}}{g} \ket{S, m}}.$$
\end{prop}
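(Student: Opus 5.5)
The plan has two parts: the closed formula \eqref{eq:matrix-element}, then its transfer to an arbitrary joint eigenvector.

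\emph{Closed formula.} Write $g e_{\uparrow}=g_{11}e_{\uparrow}+g_{21}e_{\downarrow}$ and $g e_{\downarrow}=g_{12}e_{\uparrow}+g_{22}e_{\downarrow}$, and use the unnormalized vectors $u_m$ of \eqref{eq:symmetric-weight-vectors}. Since $u_m$ contains $\binom{2S}{S+m}$ orthonormal terms $e_I$, we have $\bkt{\ket{S,m}}{g^{\otimes 2S}\ket{S,m}}=\binom{2S}{S+m}^{-1}\bkt{u_m}{g^{\otimes 2S}u_m}$. Next expand $g^{\otimes 2S}e_I$ and read off the coefficient of $e_J$ for $\abscard{J}=\abscard{I}=S+m$. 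On each slot, $g$ keeps the spin (weight $g_{11}$ or $g_{22}$) or flips it (weight $g_{21}$ for up$\to$down, $g_{12}$ for down$\to$up). Weight conservation forces the number of up$\to$down flips to equal the number $k$ of down$\to$up flips, so the coefficient is $g_{11}^{S+m-k}g_{22}^{S-m-k}(g_{12}g_{21})^k$ with $k=\abscard{I\setminus J}$. Summing over $J$ for fixed $I$, there are $\binom{S+m}{k}\binom{S-m}{k}$ sets $J$ with $\abscard{I\setminus J}=\abscard{J\setminus I}=k$. Summing over $I$ then produces a factor $\binom{2S}{S+m}$ that cancels the normalization, which gives \eqref{eq:matrix-element}. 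This step is pure counting.

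\emph{Transfer to every copy.} Let $\Psi$ be a unit joint eigenvector with eigenvalues $(S(S+1),m)$, and let $\fun{\mathfrak{S}}{\physvec{a}}=\sum_p \physvec{a}\cdot\physvec{\sigma}_p=2\physvec{a}\cdot\physvec{S}$. First I will show $(\fnexp{\physvec{a}\cdot\physvec{\sigma}})^{\otimes\Omega}=\fnexp{2\physvec{a}\cdot\physvec{S}}$; this holds because the one-site summands commute. Then $g^{\otimes\Omega}=\prod_j\fnexp{2\physvec{a}_j\cdot\physvec{S}}$ lies in the closure of the (finite-dimensional, hence closed) algebra generated by $S^{\pm},S^{z}$. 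Let $K$ be the cyclic subspace generated from $\Psi$ by $S^{\pm}$. Since $\physvec{S}^2$ commutes with these generators, $K$ sits inside the $S(S+1)$-eigenspace. Within that eigenspace, $v=\rbk{S^{+}}^{S-m}\Psi$ is a highest-weight vector. The norm recursion of Lemma \ref{lem:irreps} applied to the lowering direction shows that $(S^{-})^{S-m}(S^{+})^{S-m}\Psi=c\,\Psi$ with $c=\prod_{j}(\,S(S+1)-(m+j)(m+j+1)\,)>0$. Hence $v\neq0$, and $\Psi$ lies in the span of $v,S^{-}v,\dots$. Lemma \ref{lem:irreps} then gives an isomorphism $W$ from this span onto $V_S$ intertwining $S^{\pm},S^{z}$. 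Because both sides carry unitary realizations with the same structure constants and simple weights, $W$ can be normalized to be unitary and to map $\Psi$ to $\ket{S,m}$ up to a phase.

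\emph{Conclusion.} Since $W$ intertwines the generators, it also intertwines $\fnexp{2\physvec{a}\cdot\physvec{S}}$. On $V_S$ the collective operators of $2S$ sites give $\fnexp{2\physvec{a}\cdot\physvec{S}}=D^{S}(\fnexp{\physvec{a}\cdot\physvec{\sigma}})$, and $D^S$ is multiplicative because $g\mapsto g^{\otimes 2S}$ is. Therefore $\bkt{\Psi}{g^{\otimes\Omega}\Psi}=\bkt{\ket{S,m}}{\fun{D^S}{g}\ket{S,m}}$, and the phase cancels in the diagonal entry.

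The main obstacle is the transfer step. One must check that the intertwiner $W$ is unitary; this follows because the norms $\norm{(S^-)^jv}$ are fixed by the ladder relations, so both sides have the same orthonormal weight bases. One must also check that $\Psi$ is genuinely contained in a single irreducible copy, which is the cyclic-subspace argument above. The restriction to exponentials of traceless matrices matters only because the identity $g^{\otimes\Omega}=\fnexp{2\physvec{a}\cdot\physvec{S}}$ needs the generator to be a combination of the $\physvec{S}$.
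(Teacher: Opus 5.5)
Your proposal is correct and follows the paper's proof essentially step for step. The closed formula comes from the same count of pairs $(I,J)$ with $\abscard{I\setminus J}=\abscard{J\setminus I}=k$, and the transfer uses the same ingredients: $v=\rbk{S^{+}}^{S-m}\Psi$, positivity of the constant obtained by iterating $S^{-}S^{+}=\physvec{S}^{2}-S^{z}\rbk{S^{z}+1}$, and the unitary intertwiner onto $V_{S}$ from Lemma~\ref{lem:irreps}. The differences are cosmetic: you justify $\rbk{\fnexp{\physvec{a}\cdot\physvec{\sigma}}}^{\otimes\Omega}=\fnexp{2\physvec{a}\cdot\physvec{S}}$ by commutativity of the one-site summands instead of the paper's differential-equation argument, and you spell out why the intertwiner can be normalized to be unitary.
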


\begin{proof}
For the closed formula,
expand $u_{m}
=
\sum_{\substack{I
\subset
\intint{1..2 S} \\ \abscard{I}
=
S + m}} e_{I}$ and $g^{\otimes 2 S} e_{I}
=
\bigotimes_{p = 1}^{2 S} \fun{g}{e_{\uparrow / \downarrow}}$.
The coefficient of $e_{J}$ in $g^{\otimes 2 S} e_{I}$ is
$$g_{11}^{\abscard{I \cap J}} g_{21}^{\abscard{I \setminus J}} g_{12}^{\abscard{J \setminus I}} g_{22}^{\abscard{\rbk{I \cup J}^{c}}},$$
reading $g e_{\uparrow}
=
g_{11} e_{\uparrow} + g_{21} e_{\downarrow}$ and $g e_{\downarrow}
=
g_{12} e_{\uparrow} + g_{22} e_{\downarrow}$ slotwise.
Summing the slotwise coefficients over the symmetrized basis gives
$$\bkt{u_{m}}{g^{\otimes 2 S} u_{m}}
=
\sum_{\substack{I, J
\subset
\intint{1..2 S} \\ \abscard{I}
=
\abscard{J}
=
S + m}} g_{11}^{\abscard{I \cap J}} g_{21}^{\abscard{I \setminus J}} g_{12}^{\abscard{J \setminus I}} g_{22}^{2 S - \abscard{I \cup J}}.$$
Since $\abscard{I}
=
\abscard{J}$,
the sets $I \setminus J$ and $J \setminus I$ have a common cardinality $k$.
We now count the pairs $\rbk{I, J}$ with fixed $k$.
Choose $I$ in $\binom{2 S}{S + m}$ ways,
then $I \setminus J
\subset
I$ in $\binom{S + m}{k}$ ways,
then $J \setminus I$ in the complement of $I$ in $\binom{S - m}{k}$ ways.
Dividing by the normalization $\binom{2 S}{S + m}$ of $\bkt{u_{m}}{u_{m}}$ yields \eqref{eq:matrix-element}.
We next transport the formula to $V^{\otimes \Omega}$.
For $\physvec{a}
\in
\fldcmp^{3}$ one has $\rbk{\fnexp{\physvec{a} \cdot \physvec{\sigma}}}^{\otimes \Omega}
=
\fnexp{2 \physvec{a} \cdot \physvec{S}}$,
where $\physvec{a} \cdot \physvec{S}
=
\sum_{\gamma \in \setone{x, y, z}} a^{\gamma} S^{\gamma}$ is the complex-linear combination of the collective operators.
Indeed $\frac{d}{d t} \rbk{\fnexp{t \physvec{a} \cdot \physvec{\sigma}}}^{\otimes \Omega}
=
\rbk{\sum_{p
\leq
\Omega} \physvec{a} \cdot \physvec{\sigma}_{p}} \rbk{\fnexp{t \physvec{a} \cdot \physvec{\sigma}}}^{\otimes \Omega}$ and both sides solve the same linear differential equation with value $1$ at $t
=
0$.
The same identity holds in $V_{S}$ with its collective operators.
Set $v
=
\rbk{S^{+}}^{S - m} \Psi$.
Iterating $S^{-} S^{+}
=
\physvec{S}^{2} - S^{z} \rbk{S^{z} + 1}$,
which follows from \eqref{eq:ladder-relations},
on the weight vectors $\rbk{S^{+}}^{j - 1} \Psi$ of weight $m + j - 1$ inside the $\physvec{S}^{2}$-eigenspace of $\Psi$ gives
$$\rbk{S^{-}}^{j} \rbk{S^{+}}^{j} \Psi
=
c_{j} \Psi,
\quad
c_{j}
=
\prod_{i
=
1}^{j} \rbk{\fun{S}{S + 1} - \rbk{m + i - 1} \rbk{m + i}} > 0
\quad \rbk{j
\leq
S - m}.$$
Every factor is positive because $m + i
\leq
S$.
In particular $\norm{v}^{2}
=
\bkt{\Psi}{\rbk{S^{-}}^{S - m} \rbk{S^{+}}^{S - m} \Psi}
=
c_{S - m} \norm{\Psi}^{2} > 0$,
and $S^{+} v
=
0$,
$S^{z} v
=
S v$,
since $v$ has the maximal weight of its $\physvec{S}^{2}$-eigenspace.
Lemma \ref{lem:irreps} therefore provides the invariant subspace spanned by
$v, S^{-} v, \dotsc, \rbk{S^{-}}^{2 S} v$.
This subspace is isomorphic to $V_S$ through a unitary intertwiner $U_0$ for $S^{\pm}$ and $S^z$.
The intertwiner carries unit vectors proportional to $\rbk{S^{-}}^{j}v$ to phase multiples of the corresponding weight-basis vectors of $V_S$.
The ladder relations also give
$$\rbk{S^{-}}^{S - m} v
=
\rbk{S^{-}}^{S - m} \rbk{S^{+}}^{S - m} \Psi
=
c_{S - m} \Psi.$$
It follows that $U_{0}$ carries $\Psi$ to a phase multiple of $\ket{S, m}$,
because the weight spaces of $V_{S}$ are one-dimensional.
Complex linearity extends the intertwining to $\physvec{a} \cdot \physvec{S}$ for every $\physvec{a}
\in
\fldcmp^{3}$.
Since the invariant subspace is finite dimensional and invariant under all three generators,
the power series extends the intertwining to their exponentials and then to finite products of exponentials.
It follows that $U_{0} \rbk{\fnexp{2 \physvec{a}_{1} \cdot \physvec{S}} \dotsm \fnexp{2 \physvec{a}_{r} \cdot \physvec{S}}}
=
\fun{D^{S}}{g} U_{0}$ on the subspace.
The subspace is invariant under every factor.
Thus the diagonal matrix element at $\Psi$ equals the one at $\ket{S, m}$,
since the phase cancels.
\end{proof}

The specialization of \eqref{eq:matrix-element} to a rotation \(g
=
\fnexp{\imunit \frac{\beta}{2} \sigma^{y}}\), with \(g_{11}
=
g_{22}
=
\fun{\cos}{\frac{\beta}{2}}\), \(g_{12} g_{21}
=
- \sin^{2} \frac{\beta}{2}\), recovers Wigner's formula for the diagonal \(d\)-functions, \[\begin{aligned}
&\fun{d^{S}_{m m}}{\beta}
=
\sum_{k \geq 0}
\rbk{-1}^{k} \binom{S + m}{k} \binom{S - m}{k}
\fun{\cos}{\frac{\beta}{2}}^{2 S - 2 k}
\fun{\sin}{\frac{\beta}{2}}^{2 k}
\\ 
&=
\fun{\cos}{\frac{\beta}{2}}^{2 S}
\sum_{k \geq 0}
\rbk{-1}^{k}
\binom{S + m}{k}
\binom{S - m}{k}
\fun{\tan}{\frac{\beta}{2}}^{2 k},
\end{aligned}\] which is the form used in \cite[Eq. (19)]{ThirringWalter001}.

\subsubsection{The Bessel limit of the matrix elements}\label{the-bessel-limit-of-the-matrix-elements}

The thermal computation requires the limit of \eqref{eq:matrix-element} along group elements approaching the identity at rate \(\frac{1}{\Omega}\), with quantum numbers growing proportionally to \(\Omega\). The limit is uniform over \(2 S
\in
\monnat\) with \(0
\leq
S
\leq
\frac{\Omega}{2}\) and over \(m
\in
\setone{- S, - S + 1, \dotsc, S}\), which is what the concentration argument needs.

\begin{prop}[uniform Bessel limit]\label{prop:bessel-limit}
For $\physvec{v}
\in
\fldcmp^{3}$ set $v^{\pm}
=
v^{x} \pm \imunit v^{y}$ and define on the free-energy domain
$\dom h$ of \eqref{eq:free-energy-domain}
$$\fun{G_{\infty}}{\eta, z. \physvec{v}}
=
\fnexp{\imunit z v^{z}} \sum_{k
=
0}^{\infty} \frac{1}{\rbk{k!}^{2}} \rbk{- \frac{\rbk{\eta^{2} - z^{2}} v^{+} v^{-}}{4}}^{k},$$
a jointly continuous function,
entire in $\physvec{v}$,
equal to $\fnexp{\imunit z v^{z}} \fun{J_{0}}{\abs{\physvec{v}_{\perp}} \sqrt{\eta^{2} - z^{2}}}$ for real $\physvec{v}$ with $\physvec{v}_{\perp}
=
\rbk{v^{x}, v^{y}}$.
Let $\seq{g_{\Omega}}{\Omega \in \semigrposint}$ be $2 \times 2$ matrices with
$$g_{\Omega}
=
1 + \frac{\imunit}{\Omega} \begin{pmatrix} v^{z} & v^{-} \\ v^{+} & - v^{z} \end{pmatrix} + \fun{Q_{\Omega}}{\physvec{v}},
\quad
\norm{\fun{Q_{\Omega}}{\physvec{v}}}
\leq
\frac{A}{\Omega^{2}},$$
uniformly for $\physvec{v}$ in a compact set $\oa{V}
\subset
\fldcmp^{3}$.
The diagonal matrix elements converge uniformly:
$$\sup_{\substack{0
\leq
S
\leq
\Omega / 2 \\ 2 S \in \monnat \\ m = - S, - S + 1, \dotsc, S}} \sup_{\physvec{v}
\in
\oa{V}} \abs{\bkt{\ket{S, m}}{\fun{D^{S}}{g_{\Omega}} \ket{S, m}} - \fun{G_{\infty}}{\frac{2 S}{\Omega}, \frac{2 m}{\Omega}. \physvec{v}}}
\to
0
\quad \rbk{\Omega
\to
\infty}.$$
\end{prop}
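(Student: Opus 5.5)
We start from the closed formula \eqref{eq:matrix-element} of Proposition \ref{prop:matrix-element}, applied with $g = g_{\Omega}$. Writing $a = S + m$ and $b = S - m$, the matrix element is
\[
\sum_{k=0}^{\min(a,b)} \binom{a}{k}\binom{b}{k}\, g_{11}^{a-k} g_{22}^{b-k} (g_{12}g_{21})^{k}.
\]
Here $g_{11} = 1 + \frac{\imunit v^{z}}{\Omega} + O(\Omega^{-2})$, $g_{22} = 1 - \frac{\imunit v^{z}}{\Omega} + O(\Omega^{-2})$ and $g_{12}g_{21} = -\frac{v^{+}v^{-}}{\Omega^{2}} + O(\Omega^{-3})$, uniformly on $\oa{V}$. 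The plan is to compare each term with the $k$-th term of $G_{\infty}$ and to use a summable dominating bound for the tail in $k$, all uniformly in $(S,m)$ and $\physvec{v}$.

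\textbf{Step 1: the prefactor.} Since $a, b \leq \Omega$, uniformly in $a,b$ and $\physvec{v}$ we have
\[
g_{11}^{a} g_{22}^{b} = \fnexp{\imunit (a-b) v^{z}/\Omega + O(1/\Omega)} = \fnexp{\imunit z v^{z}} + O(1/\Omega),
\]
where $z = \frac{2m}{\Omega}$, so that $a - b = \Omega z$. The extra factors $g_{11}^{-k} g_{22}^{-k}$ are $\fnexp{O(k/\Omega)}$.

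\textbf{Step 2: the $k$-th term.} The $k$-th term is
\[
\binom{a}{k}\binom{b}{k}\,\frac{(-v^{+}v^{-})^{k}}{\Omega^{2k}}\,\bigl(1 + O(1/\Omega)\bigr)^{k}.
\]
Moreover $\binom{a}{k}\binom{b}{k}\,\Omega^{-2k} = \frac{1}{(k!)^{2}}\bigl(\frac{ab}{\Omega^{2}}\bigr)^{k}\,\rho_{k}$, where $ab/\Omega^{2} = (\eta^{2} - z^{2})/4$ and
\[
\rho_{k} = \prod_{j<k}\Bigl(1 - \frac{j}{a}\Bigr)\Bigl(1 - \frac{j}{b}\Bigr) \in [0,1].
\]
For fixed $k$, the quantity $\rho_{k}$ is not uniformly close to $1$ when $a$ or $b$ is small. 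In that regime, however, $ab/\Omega^{2}$ is small, and $\frac{(ab)^{k}}{\Omega^{2k}}\,|\rho_{k} - 1|$ is still $O(k^{2}/\Omega)$ times a bounded quantity. This follows because $(1 - j/a)\,a \geq a - k$, together with the estimate
\[
\Bigl|\frac{a(a-1)\cdots(a-k+1)}{\Omega^{k}} - \frac{a^{k}}{\Omega^{k}}\Bigr| \leq \frac{k^{2}}{\Omega}
\]
for $a \leq \Omega$. Hence each fixed-$k$ term converges to the corresponding term of $G_{\infty}$ uniformly in $(S,m)$ and $\physvec{v}$.

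\textbf{Step 3: the tail.} We bound the full $k$-th term by $\frac{C^{k}}{(k!)^{2}}$, using $\binom{a}{k} \leq \frac{a^{k}}{k!} \leq \frac{\Omega^{k}}{k!}$, $|v^{+}v^{-}| \leq R$ on $\oa{V}$, and $|g_{11}|, |g_{22}|, |\Omega^{2} g_{12}g_{21}| \leq$ constants, so that the powers $|g_{ii}|^{a-k} \leq \fnexp{c}$. This is summable independently of $\Omega$, $(S,m)$ and $\physvec{v}$. Truncating at $K$ on both sides, with error $\sum_{k>K} C^{k}/(k!)^{2}$, and using the uniform fixed-$k$ convergence for $k \leq K$ then gives the uniform limit.

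The joint continuity of $G_{\infty}$ and the Bessel identification for real $\physvec{v}$ follow directly from the entire series in Definition \ref{def:bessel}. The main obstacle is Step 2: one must handle $\rho_{k}$ uniformly near the boundary $|z| \approx \eta$, where $a$ or $b$ is small. The plan is to avoid ratios and work with the falling-factorial estimate above, which is uniform for all $a, b \leq \Omega$.
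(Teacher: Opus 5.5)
Your argument has the same architecture as the paper's proof. Both use the closed formula \eqref{eq:matrix-element}, the $\Omega$-independent envelope $C^{k}/(k!)^{2}$, truncation at a fixed $K$, and principal-logarithm control of $g_{11}^{a-k}g_{22}^{b-k}$.

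The difference is in the fixed-$k$ comparison of the combinatorial factor. The paper splits into two regimes:
\begin{itemize}
\item $\min(a,b)\le\sqrt{\Omega}$, where both $k$-th terms are small because $ab/\Omega^{2}\le\Omega^{-1/2}$;
\item $a,b>\sqrt{\Omega}$, where it uses the relative error $\binom{a}{k}\binom{b}{k}=\frac{(ab)^{k}}{(k!)^{2}}(1+\theta)$ with $|\theta|\le 2k^{2}/\sqrt{\Omega}$.
\end{itemize}
Your additive falling-factorial bound $|a(a-1)\cdots(a-k+1)-a^{k}|\le k^{2}\Omega^{k-1}$ holds for all integers $0\le a\le\Omega$. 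It covers $a<k$ too, where the falling factorial vanishes. So it removes that case distinction and gives the better rate $O(k^{2}/\Omega)$. This is a genuine simplification.

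One step needs repair. In Step 2 you write the $k$-th term as $\binom{a}{k}\binom{b}{k}\Omega^{-2k}(-v^{+}v^{-})^{k}(1+O(1/\Omega))^{k}$, which treats the off-diagonal factor with a relative error. The hypothesis only gives the additive statement $g_{12}g_{21}=-v^{+}v^{-}/\Omega^{2}+O(\Omega^{-3})$. The relative error is therefore $O\bigl(1/(\Omega|v^{+}v^{-}|)\bigr)$. This is not uniform on a compact $\mathcal{V}\subset\mathbb{C}^{3}$ containing points where $v^{+}v^{-}=(v^{x})^{2}+(v^{y})^{2}$ vanishes or is small. Examples are $v^{x}=v^{y}=0$, which occurs in the application at $\physvec{w}=0$, or $v^{y}=\imunit v^{x}$. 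At such points your expression is $0$, while the true $k$-th term need not vanish. The paper handles this regime with a second case split at $|v^{+}v^{-}|\ge\Omega^{-1/2}$.

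In your framework the natural fix is additive, just as for the binomials:
\begin{itemize}
\item $\Omega^{2}g_{12}g_{21}$ and $-v^{+}v^{-}$ are both bounded by some constant $R$ on $\mathcal{V}$ and differ by $O(1/\Omega)$.
\item Hence their $k$-th powers differ by at most $kR^{k-1}\,O(1/\Omega)$.
\item Then compare the three bounded factors (combinatorial, off-diagonal, diagonal) by telescoping.
\end{itemize}
With that change the proof is complete, and it avoids both of the paper's case distinctions.
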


\begin{proof}
Write $x
=
S + m$,
$y
=
S - m$.
It follows that $0
\leq
x, y$ and $x + y
=
2 S
\leq
\Omega$,
and let $G_{\Omega}
=
\bkt{\ket{S, m}}{\fun{D^{S}}{g_{\Omega}} \ket{S, m}}$,
given by \eqref{eq:matrix-element}.
Fix the compact $\oa{V}$ and let $c$ bound $\abs{v^{z}}, \abs{v^{\pm}}, A$ on $\oa{V}$.
The entries obey $\abs{g_{11}}, \abs{g_{22}}
\leq
1 + \frac{2 c}{\Omega}$ and $\abs{g_{12} g_{21}}
\leq
\frac{W}{\Omega^{2}}$ with $W
=
\rbk{c + \frac{c}{\Omega}}^{2} \cdot 4
\leq
16 c^{2}$ for $\Omega
\geq
1$.
the bounds $\binom{x}{k}
\leq
\frac{x^{k}}{k!}$ and $x y
\leq
\frac{\Omega^{2}}{4}$ give the termwise estimate
\begin{equation}\label{eq:term-domination}
\abs{\binom{x}{k} \binom{y}{k} g_{11}^{x - k} g_{22}^{y - k} \rbk{g_{12} g_{21}}^{k}}
\leq
\fnexp{4 c} \frac{1}{\rbk{k!}^{2}} \rbk{\frac{W}{4}}^{k},
\end{equation}
a summable envelope independent of $\Omega, S, m, \physvec{v}
\in
\oa{V}$.
The same envelope dominates the terms of $G_{\infty}$.
Given $\epsilon > 0$ choose $k_{0}$ so that the envelope \eqref{eq:term-domination} satisfies $\sum_{k > k_{0}} \fnexp{4 c} \frac{\rbk{W / 4}^{k}}{\rbk{k!}^{2}} < \epsilon$.
It remains to show that each term with $k
\leq
k_{0}$ converges to the corresponding term of $G_{\infty}$,
uniformly in $\rbk{S, m}$ and $\physvec{v}
\in
\oa{V}$.

First suppose that $\min \rbk{x, y}
\leq
\sqrt{\Omega}$.
Then $\frac{x y}{\Omega^{2}}
\leq
\frac{1}{\sqrt{\Omega}}$.
It follows that for $1
\leq
k
\leq
k_{0}$ both the $G_{\Omega}$-term and the $G_{\infty}$-term are bounded by $\fnexp{4 c} \frac{\rbk{W / \sqrt{\Omega}}^{k}}{\rbk{k!}^{2}}
\to
0$ as $\Omega \to \infty$.
For $k
=
0$ the terms are $g_{11}^{x} g_{22}^{y}$ and $\fnexp{\imunit z v^{z}}$,
and the principal-logarithm estimate below gives their uniform convergence without a restriction on
$x$ and $y$.

Next suppose that $x, y > \sqrt{\Omega}$.
For $k
\leq
k_{0}$,
$$\binom{x}{k} \binom{y}{k}
=
\frac{x^{k} y^{k}}{\rbk{k!}^{2}} \rbk{1 + \theta},
\quad \abs{\theta}
\leq
\frac{2 k^{2}}{\sqrt{\Omega}} \quad \rbk{\Omega \text{ large}},$$
from $\prod_{j < k} \rbk{1 - \frac{j}{x}}
=
1 + \fun{O}{\frac{k^{2}}{x}}$,
and the off-diagonal factor satisfies
$$\rbk{g_{12} g_{21}}^{k}
=
\rbk{- \frac{v^{+} v^{-}}{\Omega^{2}}}^{k} \rbk{1 + \theta'},
\quad \abs{\theta'}
\leq
\frac{C_{k_{0}}}{\Omega \abs{v^{+} v^{-}}}
\leq
C_{k_{0}} \Omega^{- 1 / 2} \text{ on } \set{\physvec{v}
\in
\oa{V}}{\abs{v^{+} v^{-}}
\geq
\Omega^{- 1 / 2}},$$
because $g_{12} g_{21}
=
- \frac{v^{+} v^{-}}{\Omega^{2}} + \fun{O}{\Omega^{- 3}}$ from the hypothesis on $Q_{\Omega}$.
It follows that the relative error of each factor is at most a constant multiple of $\frac{1}{\Omega \abs{v^{+} v^{-}}}$.
For $\abs{v^{+} v^{-}} < \Omega^{- 1 / 2}$ both the $G_{\Omega}$-term and the $G_{\infty}$-term with $k
\geq
1$ are bounded by $C' \Omega^{- k / 2}$ uniformly,
as in Case 1.
It follows that their difference tends to $0$ uniformly there without any multiplicative comparison.
On the complementary region,
the relative estimates show that for $k
\leq
k_{0}$,
with $\abs{v^{+} v^{-}}
\geq
\Omega^{- 1 / 2}$,
the $k$-th term of $G_{\Omega}$ equals
$$\frac{1}{\rbk{k!}^{2}} \rbk{- \frac{x y v^{+} v^{-}}{\Omega^{2}}}^{k} g_{11}^{x - k} g_{22}^{y - k} \rbk{1 + \fun{O}{\Omega^{- 1 / 2}}},$$
uniformly,
and $\frac{x y}{\Omega^{2}}
=
\frac{\rbk{\eta + z} \rbk{\eta - z}}{4}
=
\frac{\eta^{2} - z^{2}}{4}$ with $\eta
=
\frac{2 S}{\Omega}$,
$z
=
\frac{2 m}{\Omega}$.
The diagonal factors are controlled by the principal logarithm,
which gives the following estimates uniformly on $\oa{V}$:
$$\begin{aligned}
g_{11}^{x - k}
&=
\fnexp{\rbk{x - k} \rbk{\frac{\imunit v^{z}}{\Omega} + \fun{O}{\Omega^{- 2}}}}
=
\fnexp{\imunit v^{z} \frac{x}{\Omega}} \rbk{1 + \fun{O}{\frac{k + 1}{\Omega}}},
\\
g_{22}^{y - k}
&=
\fnexp{- \imunit v^{z} \frac{y}{\Omega}} \rbk{1 + \fun{O}{\frac{k + 1}{\Omega}}},
\end{aligned}$$
and $\fnexp{\imunit v^{z} \rbk{x - y} / \Omega}
=
\fnexp{\imunit z v^{z}}$ exactly.
Collecting,
each term with $k
\leq
k_{0}$ converges uniformly to the corresponding term of $G_{\infty}$,
the number of exceptional small-denominator regimes having been absorbed into vanishing bounds,
and the tail is at most $2 \epsilon$.
This proves the uniform convergence.
Continuity and entireness of $G_{\infty}$ are clear from local uniform convergence of the series,
and for real $\physvec{v}$ one has $v^{+} v^{-}
=
\abs{\physvec{v}_{\perp}}^{2}$.
Definition \ref{def:bessel} identifies the series with $\fun{J_{0}}{\abs{\physvec{v}_{\perp}} \sqrt{\eta^{2} - z^{2}}}$.
\end{proof}

\bibliography{myref.bib}

\end{document}

%% file: mycommands.tex
\makeatletter
\providecommand*{\dashv}{\mathrel{\mathpalette\@Dashv\vDash}}
\newcommand*{\@dashv}[2]{\reflectbox{$\m@th#1#2$}}
\makeatother
\newcommand{\abscard}[1]{\abs{#1}} 
\newcommand{\abs}[1]{\left| #1 \right|} 
\NewDocumentCommand{\weaknorm}{O{\dbk} m}{#1{#2}} 
\newcommand{\norm}[1]{\left\Vert #1 \right\Vert} 

\newcommand{\bkt}[2]{\left\langle #1,\,#2 \right\rangle} 
\newcommand{\rbkt}[2]{\left( #1,\,#2 \right)} 
\newcommand{\slim}{\mathrm{s} \hyphen \lim} 
\newcommand{\wlim}{\mathrm{w} \hyphen \lim} 
\newcommand{\isomto}{\mathrel{\rightarrowtail\kern-1.9ex\twoheadrightarrow}} 
\newcommand{\cbk}[1]{\left\{ #1 \right\}} 
\newcommand{\dbk}[1]{\left\langle #1 \right\rangle} 
\newcommand{\rbk}[1]{\left( #1 \right)} 
\newcommand{\sqbk}[1]{\left[ #1 \right]} 
\newcommand{\vecbk}[1]{\rbk{#1}} 
\newcommand{\funrbk}[2]{\fun{\rbk{#1}}{#2}} 
\newcommand{\fun}[2]{#1 \rbk{#2}} 
\newcommand{\sqfun}[2]{#1 \sqbk{#2}} 
\newcommand{\closedinterval}[2]{\sqbk{#1,\,#2}} 
\newcommand{\intint}[1]{\sqbk{#1}} 
\newcommand{\rightopeninterval}[2]{\left[#1, \, #2 \right)} 
\newcommand{\unitclosedinterval}{\closedinterval{0}{1}} 
\newcommand{\commutator}[2]{\sqbk{#1,\,#2}} 
\newcommand{\bra}[1]{\left\langle #1 \right|} 
\newcommand{\ketbra}[2]{\ket{#1} \!\! \bra{#2}} 
\newcommand{\ket}[1]{\left| #1 \right\rangle} 

\NewDocumentCommand{\imunit}{O{\mathsf{i}}}{#1} 
\NewDocumentCommand{\placeholder}{O{\bullet}}{#1} 
\NewDocumentCommand{\trace}{O{\operatorname{Tr}}}{#1} 
\newcommand{\Ad}{\operatorname{Ad}} 
\newcommand{\Ker}{\operatorname{Ker}} 
\newcommand{\Ran}{\operatorname{Ran}} 
\newcommand{\cmpconj}[1]{\overline{#1}} 
\newcommand{\dom}{\operatorname{dom}} 
\newcommand{\eqcsq}[1]{\sqbk{#1}} 
\newcommand{\hyphen}{\hbox{-}} 
\newcommand{\id}{\mathrm{id}} 
\newcommand{\invrbk}[1]{\rbk{#1}^{-1}} 
\newcommand{\net}[2]{\rbk{#1}_{#2}} 
\newcommand{\opchern}[1]{\operatorname{ch}} 
\newcommand{\opimag}{\operatorname{Im}} 
\newcommand{\opod}[1]{\frac{d}{d #1}} 
\newcommand{\opreal}{\operatorname{Re}} 
\newcommand{\setSymbolDownLeft}[2]{{\vphantom{#2}}_{#1}{#2}} 
\newcommand{\setSymbolUpLeft}[2]{{\vphantom{#2}}^{#1}{#2}} 
\newcommand{\supp}{\operatorname{supp}} 
\NewDocumentCommand{\agvariety}{O{\mathcal}}{#1} 
\NewDocumentCommand{\cmdrel}{O{\omega}}{#1} 
\NewDocumentCommand{\dfsp}{O{A}}{#1} 
\NewDocumentCommand{\eqcpointed}{O{\eqcsq} m}{#1{#2}_{\ast}} 
\NewDocumentCommand{\fnheaviside}{O{H}}{#1} 
\NewDocumentCommand{\fthol}{O{\mathcal{O}}}{#1} 
\NewDocumentCommand{\ftmero}{O{\mathcal{M}}}{#1} 
\NewDocumentCommand{\grcentralizer}{O{Z}}{#1} 
\NewDocumentCommand{\grmetform}{O{2} m m}{\grmet[#1] \! \rbkt{#2}{#3}} 
\NewDocumentCommand{\grmet}{O{2}}{\setSymbolDownLeft{#1}{g}} 
\NewDocumentCommand{\grnormalizer}{O{N}}{#1} 
\NewDocumentCommand{\gropasym}{O{A}}{#1} 
\NewDocumentCommand{\gropsym}{O{S}}{#1} 
\NewDocumentCommand{\grpermorderedpair}{O{\mathcal{P}}}{#1} 
\NewDocumentCommand{\grsym}{O{\mathfrak{S}} m}{#1_{#2}} 
\NewDocumentCommand{\gtbase}{O{\mathcal}}{#1} 
\NewDocumentCommand{\gtfilter}{O{\mathcal}}{#1} 
\NewDocumentCommand{\gtfmlclosed}{O{\mathcal}}{#1} 
\NewDocumentCommand{\gtfmlopen}{O{\mathcal}}{#1} 
\NewDocumentCommand{\gtopenball}{O{U}}{#1} 
\NewDocumentCommand{\gtopencover}{O{\mathcal}}{#1} 
\NewDocumentCommand{\gtopennbh}{O{\mathcal}}{#1} 
\NewDocumentCommand{\gtpreopencover}{O{\mathcal}}{#1} 
\NewDocumentCommand{\gtsubbase}{O{\mathcal}}{#1} 
\NewDocumentCommand{\gtvicinity}{O{\mathcal}}{#1} 
\NewDocumentCommand{\lasp}{O{\mathcal}}{#1} 
\NewDocumentCommand{\latprightrbk}{O{\top} m}{\rbk{#2}^{#1}} 
\NewDocumentCommand{\latpright}{O{\top} m}{#2^{#1}} 
\NewDocumentCommand{\latp}{O{t} m}{\setSymbolUpLeft{#1}{#2}} 
\NewDocumentCommand{\lpdistribution}{O{\mu} m}{#2_{\ast,#1}} 
\NewDocumentCommand{\lpmollifier}{O{\rho}}{#1} 
\NewDocumentCommand{\lpofpositive}{O{\chi}}{#1} 
\NewDocumentCommand{\manliederiv}{O{L}}{#1} 
\NewDocumentCommand{\mansmoothnbh}{O{\mathcal}}{#1} 
\NewDocumentCommand{\mblfmldsysgenerated}{O{d} m}{\fun{#1}{#2}} 
\NewDocumentCommand{\mblfmlgenerated}{O{\sigma} m}{\fun{#1}{#2}} 
\NewDocumentCommand{\oacorrfn}{O{\Gamma}}{#1} 
\NewDocumentCommand{\oagnsvector}{O{\Omega}}{#1} 
\NewDocumentCommand{\oaideal}{O{\mathcal}}{#1} 
\NewDocumentCommand{\oanumberoperator}{O{A}}{#1} 
\NewDocumentCommand{\oaposcone}{O{\mathcal{P}}}{#1} 
\NewDocumentCommand{\oapressure}{O{P}}{#1} 
\NewDocumentCommand{\oarepn}{O{\pi}}{#1} 
\NewDocumentCommand{\oaspnormalstate}{O{N}}{#1} 
\NewDocumentCommand{\oasppurestate}{O{P}}{#1} 
\NewDocumentCommand{\oaspstate}{O{E}}{#1} 
\NewDocumentCommand{\oastatevector}{O{\Omega}}{#1} 
\NewDocumentCommand{\oastate}{O{\omega}}{#1} 
\NewDocumentCommand{\opdilation}{O{\delta}}{#1} 
\NewDocumentCommand{\opdmat}{O{\rho}}{#1} 
\NewDocumentCommand{\opfockan}{O{a}}{#1} 
\NewDocumentCommand{\opfockcran}{O{a}}{#1^{\#}} 
\NewDocumentCommand{\opfockcrdagger}{O{a}}{#1^{\dagger}} 
\NewDocumentCommand{\opfockcr}{O{a}}{#1^{\ast}} 
\NewDocumentCommand{\opfocknumber}{O{N}}{#1} 
\NewDocumentCommand{\opfocksegalconj}{O{\pi}}{#1} 
\NewDocumentCommand{\opfocksegal}{O{\phi}}{#1} 
\NewDocumentCommand{\opspecmeas}{O{E}}{#1} 
\NewDocumentCommand{\opspec}{O{} m}{\fun{\sigma_{#1}}{#2}} 
\NewDocumentCommand{\optransl}{O{\tau}}{#1} 
\NewDocumentCommand{\physaction}{O{\mathcal{A}}}{#1} 
\NewDocumentCommand{\physcharge}{O{e}}{\mathrm{#1}} 
\NewDocumentCommand{\physcplconst}{O{\mathsf{g}}}{#1} 
\NewDocumentCommand{\physelectrostaticcapasity}{O{\mathrm{Cap}}}{#1} 
\NewDocumentCommand{\physenergy}{O{E}}{#1} 
\NewDocumentCommand{\physgse}{O{E}}{#1_{0}} 
\NewDocumentCommand{\physham}{O{H}}{#1} 
\NewDocumentCommand{\physlagdensity}{O{\mathcal{L}}}{#1} 
\NewDocumentCommand{\physlag}{O{L}}{#1} 
\NewDocumentCommand{\physliouvilean}{O{L}}{#1} 
\NewDocumentCommand{\physmass}{O{m}}{#1} 
\NewDocumentCommand{\prbbrownmv}{O{B}}{#1} 
\NewDocumentCommand{\prbcharfun}{O{\chi}}{#1} 
\NewDocumentCommand{\prbdist}{O{\mathcal{P}}}{#1} 
\NewDocumentCommand{\prbgaussianmeasure}{O{\msrcal{N}}}{#1} 
\NewDocumentCommand{\prbnormaldist}{O{N}}{#1} 
\NewDocumentCommand{\prbpoissonprocess}{O{N}}{#1} 
\NewDocumentCommand{\prbprocess}{O{X}}{#1} 
\NewDocumentCommand{\prbqspace}{O{\mathcal{Q}}}{#1} 
\NewDocumentCommand{\prbspsample}{O{\Omega}}{#1} 
\NewDocumentCommand{\psh}{O{\mathfrak}}{#1} 
\NewDocumentCommand{\qtquantumchannel}{O{\mathcal{L}}}{#1} 
\NewDocumentCommand{\repn}{O{\pi}}{#1} 
\NewDocumentCommand{\schattencls}{O{\mathbb{K}}}{#1} 
\NewDocumentCommand{\setfmlcylinder}{O{\mathcal{C}}}{#1} 
\NewDocumentCommand{\setfml}{O{\mathcal}}{#1} 
\NewDocumentCommand{\setindex}{O{\mathcal} m}{#1{#2}} 
\NewDocumentCommand{\setlattice}{O{\Gamma}}{#1} 
\NewDocumentCommand{\setspecial}{O{\mathcal} m}{#1{#2}} 
\NewDocumentCommand{\shdiffform}{O{\sheaf{A}}}{#1} 
\NewDocumentCommand{\sheaf}{O{\mathfrak}}{#1} 
\NewDocumentCommand{\smchemicalpotential}{O{\mu}}{#1} 
\NewDocumentCommand{\smenergydensity}{O{\varrho}}{#1} 
\NewDocumentCommand{\smfluctuationwithdmat}{O{\beta} m}{\smuncertaintywithdmat[#1]{#2}^2} 
\NewDocumentCommand{\sminvtemperature}{O{\beta}}{#1} 
\NewDocumentCommand{\smlocaldensityoperator}{O{\rho}}{#1} 
\NewDocumentCommand{\smmicrocanonicalstate}{O{\beta} m}{\physmean{#2}_{#1}} 
\NewDocumentCommand{\smnumberdensity}{O{\rho}}{#1} 
\NewDocumentCommand{\smooth}{O{\mathcal{E}}}{#1} 
\NewDocumentCommand{\smparticlenumber}{O{N}}{#1} 
\NewDocumentCommand{\smpressure}{O{p}}{#1} 
\NewDocumentCommand{\smspecificfreeenergy}{O{\bar{f}}}{#1} 
\NewDocumentCommand{\smthermalvac}{O{\beta}}{\Omega_{#1}} 
\NewDocumentCommand{\smuncertaintywithdmat}{O{\beta} m}{\rbk{\triangle #2}_{#1}} 
\NewDocumentCommand{\sphilb}{O{\mathcal}}{#1} 
\NewDocumentCommand{\splowerhalf}{O{\mathbb{H}}}{#1_{\txtneg}} 
\NewDocumentCommand{\spupperhalf}{O{\mathbb{H}}}{#1_{\txtnonneg}} 
\NewDocumentCommand{\topmetric}{O{d}}{#1} 
\NewDocumentCommand{\vaoutnormal}{O{\widehat}}{#1} 
\newcommand{\category}[1]{\mathop{\mathsf{#1}}} 

\newcommand{\catpresheaf}[1]{\category{PSh}} 
\newcommand{\conti}{C} 
\newcommand{\dggrgauge}{\mathcal{G}} 
\newcommand{\faadj}[1]{#1^{\ast}} 
\newcommand{\fldcmp}{\fld{C}} 
\newcommand{\fldrat}{\fld{Q}} 
\newcommand{\fldreal}{\fld{R}} 
\newcommand{\fld}[1]{\mathbb{#1}} 
\newcommand{\fnexp}[1]{\fun{\exp}{#1}} 
\newcommand{\fnrestr}[2]{\left. #1 \right|_{#2}} 
\newcommand{\grauto}{\operatorname{Aut}} 
\newcommand{\gtclos}[1]{\overline{#1}} 
\newcommand{\liegr}[1]{\mathrm{#1}} 
\newcommand{\lp}{L} 
\newcommand{\mansphere}{\mathbb{S}} 
\newcommand{\monnat}{\mathbb{N}} 
\newcommand{\msrcal}[1]{\mathcal{#1}} 
\newcommand{\msrprb}{\mathrm{Pr}} 

\newcommand{\oacommutant}[1]{#1^{\prime}} 
\newcommand{\oacstar}{C^{\ast}} 
\newcommand{\oadoublecommutant}[1]{#1^{\prime \prime}} 
\newcommand{\oawstar}{W^{\ast}} 
\newcommand{\oa}[1]{\mathcal{#1}} 
\newcommand{\opclos}[1]{\overline{#1}} 
\newcommand{\opdmsr}[1]{\mathop{d #1}} 
\newcommand{\opfnresolvent}[1]{\invrbk{#1}} 
\newcommand{\opform}[1]{\mathsf{#1}} 
\newcommand{\opspbddlin}[1]{\fun{\mathbb{B}}{#1}} 
\newcommand{\opspecint}[1]{\mathcal{E}} 
\newcommand{\physmean}[1]{\dbk{#1}} 
\newcommand{\physvec}[1]{\boldsymbol{#1}} 
\newcommand{\pushout}[1]{#1_{\ast}} 
\newcommand{\ringratint}{\mathbb{Z}} 
\newcommand{\semigrposint}{\monnat_1} 

\newcommand{\seq}[2]{\if\relax\detokenize{#1}\relax \rbk{#1} \else \rbk{#1}_{#2} \fi} 
\newcommand{\setisomorphism}[1]{\operatorname{Iso}} 
\newcommand{\setone}[1]{\cbk{#1}}
\newcommand{\set}[2]{\left\{#1 \, \middle| \, #2\right\}}
\newcommand{\splinspan}{\operatorname{span}} 
\newcommand{\spmat}[2]{\fun{M_{#1}}{{#2}}} 
\newcommand{\txtbogoliubov}{\mathrm{Bog}} 
\newcommand{\txteff}{\mathrm{eff}} 
\newcommand{\txtexternal}{\mathrm{ext}} 
\newcommand{\txtgs}{\mathrm{gs}} 
\newcommand{\txtloc}{\mathrm{loc}} 
\newcommand{\txtmax}{\mathrm{max}} 
\newcommand{\txtneg}{\mathrm{-}} 
\newcommand{\txtnonneg}{\mathrm{+}} 
\newcommand{\txttot}{\mathrm{tot}} 